\theoremstyle{definition}
\newtheorem{definition}{Definition}[section]
\newtheorem{lemma}{Lemma}
\newtheorem{theorem}{Theorem}
\newtheorem{example}{Example}
\newcommand{\A}{\ensuremath{\mathcal{A}}}
\newcommand{\B}{\ensuremath{\mathcal{B}}}
\newcommand{\C}{\ensuremath{\mathcal{C}}}
\newcommand{\D}{\ensuremath{\mathcal{D}}}
\newcommand{\E}{\ensuremath{\mathcal{E}}}
\newcommand{\T}{\ensuremath{\mathcal{T}}}
\renewcommand{\S}{\ensuremath{\mathcal{S}}}
\newcommand{\s}{\ensuremath{\sigma}}
\renewcommand{\t}{\ensuremath{\tau}}
\renewcommand{\P}{\ensuremath{\mathcal{P}}}
\newcommand{\N}{\ensuremath{\mathbb{N}_{\omega}}}
\newcommand{\qf}[1]{\ensuremath{QF(#1)}}
\newcommand{\eq}[1]{\ensuremath{Eq(#1)}}
\newcommand{\ax}[1]{\ensuremath{Ax(#1)}}
\newcommand{\adds}[1]{\ensuremath{(#1)^{2}}}
\newcommand{\addf}[1]{\ensuremath{(#1)_{s}}}
\newcommand{\subf}[1]{\ensuremath{\texttt{0}(#1)}}
\newcommand{\subff}[1]{\ensuremath{#1_{\texttt{0}}}}
\newcommand{\addfs}[1]{\ensuremath{s(#1)}}
\newcommand{\subs}[1]{\ensuremath{#1_{1}}}
\newcommand{\minmod}{\ensuremath{\textbf{minmod}}}
\newcommand{\minimal}{\ensuremath{\textbf{minimal}}}
\newcommand{\wrt}{w.r.t.}
\newcommand{\Forall}[1]{\ensuremath{\forall\,#1.\:}}
\newcommand{\Exists}[1]{\ensuremath{\exists\,#1.\:}}
\newcommand{\Teven}{\ensuremath{\T_{even}^{\infty}}}
\newcommand{\Tninfty}{\ensuremath{\T_{n,\infty}}}
\newcommand{\bb}{\ensuremath{\varsigma}}
\newcommand{\Tbb}{\ensuremath{\T_{\bb}}}
\newcommand{\Tone}{\ensuremath{\T_{\mathrm{I}}}}
\newcommand{\overarrow}[1]{\ensuremath{\overrightarrow{#1}}}
\newcommand{\vars}{\ensuremath{vars}}
\newcommand{\Tgeqn}{\ensuremath{\T_{\geq n}}}
\newcommand{\Tinfty}{\ensuremath{\T_{\infty}}}
\newcommand{\Tleqone}{\ensuremath{\T_{\leq 1}}}
\newcommand{\Tleqn}{\ensuremath{\T_{\leq n}}}
\newcommand{\Tmn}{\ensuremath{\T_{\langle m,n\rangle}}}
\newcommand{\Tbbs}{\ensuremath{\T^{s}_{\bb}}}
\newcommand{\Tbbeq}{\ensuremath{\T^{=}_{\bb}}}
\newcommand{\Tbbtwo}{\ensuremath{\T^{2}_{\bb}}}
\newcommand{\Tbbvee}{\ensuremath{\T^{\vee}_{\bb}}}
\newcommand{\Tbbveeeq}{\ensuremath{\T^{=}_{\bb\vee}}}
\newcommand{\Tbbinfty}{\ensuremath{\T^{\infty}_{\bb}}}
\newcommand{\Tonebb}{\ensuremath{\T^{\bb}_{1}}}
\newcommand{\Tbboneneq}{\ensuremath{\T^{\neq}_{\bb,1}}}
\newcommand{\Tbbinftythree}{\ensuremath{\T^{\infty,3}_{\bb}}}
\newcommand{\Tbbneqinfty}{\ensuremath{\T^{\infty}_{\bb\neq}}}
\newcommand{\Tmnbb}{\ensuremath{\T^{\bb}_{m,n}}}
\newcommand{\Tnbb}{\ensuremath{\T^{\bb}_{n}}}
\newcommand{\Tbbneq}{\ensuremath{\T^{\bb}_{n}}}
\newcommand{\Ttwothree}{\ensuremath{\T_{2,3}}}
\newcommand{\addnc}[1]{\ensuremath{(#1)_{\vee}}}
\newcommand{\Tupinfty}{\ensuremath{\T^{\infty}}}
\newcommand{\TM}{\ensuremath{\T_{f}}}
\newcommand{\TsM}{\ensuremath{\T^{s}_{f}}}
\newcommand{\Tmninfty}{\ensuremath{\T^{\infty}_{m,n}}}
\newcommand{\Toneinfty}{\ensuremath{\T^{\infty}_{1}}}
\newcommand{\Ttwoinfty}{\ensuremath{\T^{\infty}_{2}}}
\newcommand{\Toneinftyneq}{\ensuremath{\T_{1,\infty}^{\neq}}}
\newcommand{\Ttwoinftyneq}{\ensuremath{\T_{2,\infty}^{\neq}}}
\newcommand{\Toneodd}{\ensuremath{\T_{1}^{odd}}}
\newcommand{\Toddneq}{\ensuremath{\T_{odd}^{\neq}}}
\newcommand{\Ttwothreethree}{\ensuremath{\T_{2,3}^{3}}}
\newcommand{\Tnequpinfty}{\ensuremath{\T^{\infty}_{\neq}}}
\newcommand{\Ttwo}{\ensuremath{\T_{\mathrm{II}}}}
\newcommand{\Tthree}{\ensuremath{\T_{\mathrm{III}}}}
\newcommand{\wit}{\ensuremath{wit}}
\newcommand{\Tfour}{\ensuremath{\T_{\mathrm{IV}}}}
\newcommand{\Tfive}{\ensuremath{\T_{\mathrm{V}}}}
\newcommand{\Tsix}{\ensuremath{\T_{\mathrm{VI}}}}
\newcommand{\Tthreeone}{\ensuremath{\T^{3}_{\mathrm{I}}}}
\newcommand{\Tseven}{\ensuremath{\T_{\mathrm{VII}}}}
\newcommand{\Teight}{\ensuremath{\T_{\mathrm{VIII}}}}
\newcommand{\Tnine}{\ensuremath{\T_{\mathrm{IX}}}}
\newcommand{\Tten}{\ensuremath{\T_{\mathrm{X}}}}
\newcommand{\Televen}{\ensuremath{\T_{\mathrm{XI}}}}
\newcommand{\Tthreetwo}{\ensuremath{\T^{3}_{\mathrm{II}}}}
\newcommand{\Ttwelve}{\ensuremath{\T_{\mathrm{XII}}}}
\newcommand{\Tthirteen}{\ensuremath{\T_{\mathrm{XIII}}}}
\newcommand{\Tfourteen}{\ensuremath{\T_{\mathrm{XIV}}}}
\newcommand{\Tfifteen}{\ensuremath{\T_{\mathrm{XV}}}}
\newcommand{\Tsixteen}{\ensuremath{\T_{\mathrm{XVI}}}}
\newcommand{\Tseventeen}{\ensuremath{\T_{\mathrm{XVII}}}}
\newcommand{\Teighteen}{\ensuremath{\T_{\mathrm{XVIII}}}}
\newcommand{\smooth}{\textbf{SM}}
\newcommand{\stainf}{\textbf{SI}}
\newcommand{\convex}{\textbf{CV}}
\newcommand{\finwit}{\textbf{FW}}
\newcommand{\strfinwit}{\textbf{SW}}
\newcommand{\stafin}{\textbf{SF}}
\newcommand{\finmodpro}{\textbf{FM}}
\newcommand{\cmmf}{\textbf{CF}}
\newcolumntype{P}[1]{>{\centering\arraybackslash}p{#1}}
\newcommand{\F}{\ensuremath{\mathcal{F}}}
\renewcommand{\S}{\ensuremath{\mathcal{S}}}
\renewcommand{\P}{\ensuremath{\mathcal{P}}}
\newcommand{\NEQ}{\ensuremath{\neq}}
\newcommand{\NNEQ}[1]{\ensuremath{\neq(#1_{1},\ldots,#1_{n})}}
\newcommand{\NNNEQ}[2]{\ensuremath{\neq(#1_{1},\ldots,#1_{#2})}}
\newcommand{\fps}[1]{\ensuremath{\raisebox{.15\baselineskip}{\Large\ensuremath{\wp}}_{fin}(#1)}}
\newcommand{\nocontentsline}[3]{}
\newcommand{\toclesslab}[3]{\bgroup\let\addcontentsline=\nocontentsline#1{#2\label{#3}}\egroup}
\let\tp\texorpdfstring
\newcommand{\mytitle}{Combining Combination Properties: Minimal Models}
\title{\mytitle}
\author{Guilherme V. Toledo and Yoni Zohar}
\institute{
Bar-Ilan University,
Ramat Gan, Israel\\
}
\authorrunning{Toledo and Zohar}
\titlerunning{\mytitle}
\begin{document}

\maketitle

\begin{abstract}
This is a part of an ongoing research project, 
with the aim of finding the connections between 
properties related to theory combination in Satisfiability Modulo Theories.
In previous work, 7 properties were analyzed:
convexity, 
stable infiniteness, 
smoothness, 
finite witnessability, 
strong finite witnessability,
the finite model property, and
stable finiteness.
The first two properties are related to Nelson-Oppen combination,
the third and fourth to polite combination,
the fifth to strong politeness,
and the last two to shininess.
However, the remaining key property of shiny theories, namely,
the ability to compute the cardinalities of minimal models,
was not yet analyzed.
In this paper we study this property and its connection to the others.\footnote{This work was funded by NSF-BSF grant 2020704, ISF grant 619/21, and
the Colman-Soref fellowship.
}
\end{abstract}

\toclesslab\section{Introduction}{sec:introduction}

\newcommand{\question}[1]{\noindent {{\bf Q:} {#1}}\\}
\newcommand{\answer}[1]{\noindent {{\bf A:} {#1}}\vspace{.7em}}
\newcommand{\firstquestion}[1]{\noindent {{\bf Quisani:} {#1}}\\}
\newcommand{\firstanswer}[1]{\noindent {{\bf Author:} {#1}}\vspace{.7em}}

Quisani\footnote{Quisani, formerly known as Quizani, was born in \cite{DBLP:series/wsscs/Gurevich93a}.} and Author meet in the LPAR waiting room. 
Their papers 
are being reviewed. \\

\firstquestion{Again with the combination of properties?}
\firstanswer{Most certainly. We have some new exciting results about this. } 

\question{Now, there is a pun in your titles right?}
\answer{Yes. We deal with combination of properties
of theories. But the properties themselves are studied
in the field of theory combination in Satisfiability Modulo Theories (SMT)~\cite{BSST21}.}

\newcommand{\zzz}{{\bf z3}}
\newcommand{\bzla}{{\bf bitwuzla}}
\newcommand{\yices}{{\bf Yices}}
\newcommand{\cvcfive}{{\bf cvc5}}
\newcommand{\msat}{{\bf MathSAT}}
\newcommand{\python}{{\bf Python}}

\question{Sure, SMT-solvers, like
\cvcfive,
\zzz,
\bzla,
\yices, and
\msat~\cite{cvc5,DBLP:conf/tacas/MouraB08,DBLP:conf/cav/NiemetzP23,DBLP:conf/cav/Dutertre14,DBLP:conf/tacas/CimattiGSS13}.}
\answer{Yes, among others.} 

\question{So what is this theory combination all about?}
\answer{Well, SMT solvers implement decision procedures for various theories.
For example, many of them implement the bit-blasting~\cite{DBLP:series/txtcs/KroeningS16} algorithm for
bit-vectors, 
the simplex algorithm for 
arithmetic~\cite{DBLP:conf/cav/DutertreM06},
or the weak-equivalence algorithm for array formulas~\cite{DBLP:conf/frocos/ChristH15}.
But for many applications, reasoning in just one theory is not enough, and their combination is required.
}

\question{Like, if I have a formula about arrays of integers, for example?}
\answer{Exactly. And in such cases, one would hope to use
the existing decision procedures for arrays and integers,
rather than developing a completely new algorithm for this specific combination.}

\question{I see. I remember reading about this. This is about the Nelson-Oppen method right~\cite{NelsonOppen}?}
\answer{Nelson and Oppen's method is great when the theories
are stably-infinite, roughly meaning that satisfiable formulas, have infinite models. There is
also a variant of the method for convex theories (roughly meaning
that when a disjunction of equalities is implied, one of them is implied).
But some theories do not have these properties, and then this method does not apply.}

\question{That would cause problems when
considering arrays of bit-vectors, instead
of integers.}
\answer{Spot on, as usual. 
Several other combination techniques were introduced since the Nelson-Oppen approach.
Each method replaces the stable-infiniteness requirement by another one.
For example, the polite combination method~\cite{ranise:inria-00000570} requires a stronger property called {\em politeness},
but only from one of the theories. 
Another similar notion is that of
  shininess~\cite{TinZar05}.
}

\question{But these are properties for different combination methods. Why would you combine them?}
\answer{Besides theoretical interest,
this line of research can make it easier to prove that a given theory can be
combined using one of the above methods.
For example, Barrett et al. proved in~\cite{DBLP:conf/frocos/BarrettDS02} that under some conditions,
convexity implies stable-infiniteness.
}

\question{Sounds reasonable. Are there other connections between these  properties?}
\answer{Well, when politeness was introduced~\cite{ranise:inria-00000570}, the authors proved
that it is equivalent to shininess. 
Later, in~\cite{JB10-LPAR}, Jovanovic and Barrett
found a problem in the definition of politeness from~\cite{ranise:inria-00000570}
and corrected it.
Their fixed definition was later proven to be equivalent to shiny theories
for the one-sorted case in~\cite{CasalRasga}, and then for the many-sorted case~\cite{CasalRasga2},
under certain conditions.
\cite{CasalRasga} also named the property with the fixed definition from~\cite{JB10-LPAR} {\em strong politeness}.
In fact, it is possible that some of the authors of these
papers sat right here in this room --
\cite{CasalRasga} and \cite{JB10-LPAR} were published in LPAR 2010 and 2013, respectively.
}

\question{Cool! I wonder which one of them sat on my chair. Well, it's not mine per se, I mean the chair I am sitting on. So with all these results, weren't all the questions answered?}
\answer{Certainly not. To begin with, \cite{JB10-LPAR} did not 
prove that politeness and strong politeness are different.
They only fixed a problem in a proof from~\cite{ranise:inria-00000570},
by changing the definition.
Only later~\cite{SZRRBT-21} it was proven that these notions are indeed
different.
But there is much more into this. 
For example, (strong) politeness is actually a conjunction of two 
primitive properties, namely smoothness and (strong) finite witnessability.
Similarly, shininess is a conjunction of smoothness, stable finiteness (or its weaker form, the finite model property),
and the computability of the function that takes a quantifier-free formula
and returns the minimal cardinalities of its models.
For any combination of these properties (and their negations), it is 
interesting whether there is a theory exhibiting this combination
(for example, 
there are no theories that are smooth but not stably infinite,
but there are stably infinite theories that are not smooth, etc.).}

\question{Wait, so you are considering $2$, $4$, ... what, $256$ possible combinations?! }
\answer{Its actually worse, as some combinations
are possible only for empty signatures,
or for one-sorted and not many-sorted signatures. 
So the total number would be $2$, $4$, ... yeah, $1,024$ combinations. 
Now wait, no, no, no, stop jumping up and down.
It's not as crazy as it sounds. 
In fact, \cite{CADE,FroCoS} handle almost
all these properties. 
They obtained very special theories by use of non-computable functions like the busy beaver function, as unrelated as that may sound. 
The only property left is
the computability of the aforementioned function (the one with the minimal cardinalities and whatnot).
The current paper considers all combinations, including
this missing property. 
Now tell me, how would you like this to proceed?}

\question{Your next section (\Cref{preliminaries})
should review necessary definitions and notions. 
In \Cref{CMMF}, please provide more details on this minimal model function.
Can you clarify what is {\em minimal} about minimal models?
Then you can state your main theorem in \Cref{sec:maintheorem}, identifying all possible and impossible
combinations. 
Give it a nice name. ``\Cref{mainresult}" maybe?
Please split the proof 
to \Cref{relationships} and \Cref{sec:examples}.
In the first, provide all the impossible combinations, and in the second,
all the possible ones.
I wouldn't be surprised if the Busy Beaver function shows up there again, or even a completely new non-computable function.
If you have room, you could end with some concluding remarks 
and future directions in the last section, \Cref{conclusion}.
}
\answer{You got it.}\footnote{Due to lack of space, proofs are omitted and are given in an appendix.}

\toclesslab\section{Preliminaries}{preliminaries}

\toclesslab\subsection{First-Order Many-Sorted Logic}{logic}
We fix a many-sorted signature 
$\Sigma=(\S_{\Sigma},\F_{\Sigma},\P_{\Sigma})$ where: $\S_{\Sigma}$ is a countable set of \emph{sorts}; 
$\F_{\Sigma}$ a set of function symbols, each with an arity $\s_{1}\times\cdots\times\s_{n}\rightarrow\s$ for $\s_{1},\ldots,\s_{n},\s\in\S_{\Sigma}$; 
and 
$\P_{\Sigma}$ a set of predicate symbols, each with an arity $\s_{1}\times\cdots\times\s_{n}$. 
$\P_{\Sigma}$  includes for every $\s\in\S_{\Sigma}$ an equality $=_{\s}$ of arity $\s\times\s$ usually denoted by $=$ when $\s$ is clear. 
$\Sigma$ is \emph{empty} if it has no function or predicate symbols other than the equalities; $\Sigma$ is \emph{one-sorted} if $|\S_{\Sigma}|=1$. 
For each sort we assume a countably infinite set of variables, those sets being disjoint for distinct sorts. 
We then define terms, literals and formulas as usual.
The set of quantifier-free formulas on $\Sigma$ shall be written as $\qf{\Sigma}$.
The set of variables of sort $\s$ in a formula $\varphi$ is denoted by $\vars_{\s}(\varphi)$, and the set of all of its variables by $\vars(\varphi)$.
We will often use the signatures from \Cref{tab:signatures}:
for every $n$, $\Sigma_{n}$ is the empty signature with $n$ sorts,
and $\Sigma^{n}_{s}$ is the $n$-sorted signature
with a single function symbol $s$ whose arity is $\s_{1}\rightarrow\s_{1}$.
We often write $\Sigma_{s}$ instead of $\Sigma_{s}^{1}$.

\begin{table}[t]
\centering
\renewcommand{\arraystretch}{1}
    \begin{tabular}{c|c|c|c}
    Signature & Sorts & Function Symbols & Predicate Symbols\\\hline
         $\Sigma_{n}$ & $\{\s_{1},\ldots,\s_{n}\}$ & $\emptyset$ & $\emptyset$\\
         $\Sigma_{s}^{n}$ & $\{\s_{1},\ldots,\s_{n}\}$ & $\{s:\s_{1}\rightarrow\s_{1}\}$& $\emptyset$ \\
    \end{tabular}
    \renewcommand{\arraystretch}{1}
    \caption{Useful signatures. We often write $\Sigma_{s}$ instead of $\Sigma_{s}^{1}$.}
    \label{tab:signatures}
\end{table}

$\Sigma$-interpretations $\A$
are defined as usual (see, e.g., \cite{Monzano93}).  
$\s^{\A}$ denotes the domain of sort $\s$; 
for a function symbol $f$ and a predicate symbol $P$, $f^{\A}$ and $P^{\A}$ denote the related function and predicate in $\A$; 
for a term $\alpha$,  $\alpha^{\A}$ is its value in $\A$, and if $\Gamma$ is a set of terms, $\Gamma^{\A}=\{\alpha^{\A} : \alpha\in\Gamma\}$. 
If $\A$ satisfies  $\varphi$ we write $\A\vDash\varphi$, and say that $\varphi$ is \emph{satisfiable}.
The formulas in \Cref{card-formulas} will be important in what is to come:
an interpretation $\A$ satisfies 
$\NNEQ{x}$, for $x_{i}$ of sort $\s$ (or even its existential closure $\psi_{\geq n}^{\s}$)
  iff $|\s^{\A}|\geq n$; $\A\vDash\psi_{\leq n}^{\s}$ iff $|\s^{\A}|\leq n$; and $\A\vDash\psi_{=n}^{\s}$ iff $|\s^{\A}|=n$. 
If the signature at hand is one-sorted, we drop $\s$ (e.g., when writing $\psi_{\geq n}$).

\begin{figure}[t]
\begin{mdframed}
\renewcommand{\arraystretch}{1}
\[\NNEQ{x}=\bigwedge_{i=1}^{n-1}\bigwedge_{j=i+1}^{n}\neg(x_{i}=x_{j})\quad\quad\quad 
\psi_{\geq n}^{\s}=\Exists{{x_1\ldots x_n}}\NNEQ{x}\]
\[
\psi_{\leq n}^{\s}=\Exists{x_1,\ldots,x_n}\Forall{y}\bigvee_{i=1}^{n}y=x_{i}\quad\quad\quad \psi_{=n}^{\s}=\psi_{\geq n}^{\s}\wedge\psi_{\leq n}^{\s}\]
\end{mdframed}
\caption{Cardinality formulas. All of $x_{i}$, and $y$ are of sort $\s$.}
\label{card-formulas}
\end{figure}

 A \emph{theory} $\T$ is a class of interpretations (called $\T$-\emph{interpretations}, or the {\em models} of $\T$ when disregarding variables) 
 comprised of all the interpretations that 
 satisfy a set $\ax{\T}$ called the \emph{axiomatization} of $\T$; 
A formula is ($\T$-)\emph{satisfiable} if it is satisfied by a 
($\T$-)interpretation. 
Two formulas are ($\T$-)\emph{equivalent} if they are satisfied by the 
same ($\T$-)interpretations.
A formula $\varphi$ is $\T$-\emph{valid}, denoted $\vDash_{\T}\varphi$, if $\A\vDash\varphi$ for all $\T$-interpretations $\A$.
The following are many-sorted generalizations of the L{\"o}wenheim-Skolem and compactness theorems (see \cite{Monzano93,TinZar}).

\smallskip
\begin{theorem}\label{LowenheimSkolem}
    Let $\Sigma$ be a first-order, many-sorted signature; if a set of $\Sigma$-formulas $\Gamma$ is satisfiable, then there exists an interpretation $\A$ that satisfies $\Gamma$ where $|\s^{\A}|\leq\aleph_{0}$ for all $\s\in\S_{\Sigma}$.
\end{theorem}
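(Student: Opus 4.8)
The plan is a standard downward L{\"o}wenheim--Skolem argument, adapted to the many-sorted setting by means of a sort-indexed Skolem hull. Assume, as is standard and in fact necessary for the stated bound, that $\Sigma$ has at most countably many function and predicate symbols. Fix a $\Sigma$-interpretation $\B$ with $\B\vDash\varphi$ for every $\varphi\in\Gamma$. For each sort $\s\in\S_{\Sigma}$, I would start with a countable set $A^{0}_{\s}\subseteq\s^{\B}$ that is nonempty and contains the $\B$-value of every variable of sort $\s$ occurring in some member of $\Gamma$; this guarantees that the substructure built below is nonempty in each sort and still carries the original variable assignment.

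Next I would build increasing chains of countable subsets $A^{0}_{\s}\subseteq A^{1}_{\s}\subseteq\cdots$ of $\s^{\B}$, closing at each step under two operations. First, for every $f\in\F_{\Sigma}$ with arity $\s_{1}\times\cdots\times\s_{n}\to\s$ and every tuple $(a_{1},\dots,a_{n})\in A^{k}_{\s_{1}}\times\cdots\times A^{k}_{\s_{n}}$, put $f^{\B}(a_{1},\dots,a_{n})$ into $A^{k+1}_{\s}$. Second, the Tarski--Vaught step: for every subformula of the form $\Exists{y}\psi$ of a member of $\Gamma$, with $y$ of sort $\s$ and free variables $z_{1},\dots,z_{m}$ of sorts $\t_{1},\dots,\t_{m}$, and for every assignment sending each $z_{j}$ into $A^{k}_{\t_{j}}$ under which $\B$ satisfies $\Exists{y}\psi$, choose one $y$-witness in $\s^{\B}$ and add it to $A^{k+1}_{\s}$. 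Since $\Sigma$ and each $A^{k}_{\s}$ are countable, only countably many elements are added at each stage, so every $A^{k}_{\s}$ remains countable; hence $A_{\s}:=\bigcup_{k}A^{k}_{\s}$ is countable and nonempty, and the family $\{A_{\s}\}_{\s\in\S_{\Sigma}}$ is closed under every $f^{\B}$. Let $\A$ be the substructure of $\B$ on these domains: $\s^{\A}=A_{\s}$, and each $f^{\A}$ and $P^{\A}$ is obtained by restricting $f^{\B}$ and $P^{\B}$ to the $A_{\s}$'s (equality restricts to equality).

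It then remains to verify that $\A\vDash\Gamma$. I would prove, by induction on the structure of formulas, that for every subformula $\varphi$ of a member of $\Gamma$ and every assignment of its free variables into the appropriate $A_{\s}$'s, $\A$ satisfies $\varphi$ if and only if $\B$ does. The atomic cases hold by the definition of the restricted functions and predicates; the Boolean connectives are immediate; and the quantifier cases (with $\forall$ treated dually to $\exists$) are exactly where the closure clause is used, since it guarantees that whenever $\B$ has a witness for an existential subformula under an assignment with values in the $A_{\s}$'s, some such witness already lies in the relevant $A_{\s}$. Instantiating this equivalence with the members of $\Gamma$ under the original assignment --- which lands in the $A_{\s}$'s by the choice of $A^{0}_{\s}$ --- gives $\A\vDash\Gamma$, while $|\s^{\A}|=|A_{\s}|\leq\aleph_{0}$ for every $\s$ by construction.

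I expect the only genuinely delicate point, everything else being routine bookkeeping, to be setting up the Tarski--Vaught closure correctly: it must be interleaved with the function closure and re-applied to the freshly created witnesses, and one must keep track of the cardinalities of the symbols and subformulas involved so that no sort is forced to become uncountable. An equivalent but cleaner-to-state alternative is to relativize $\Gamma$ to a one-sorted signature with a fresh unary predicate per sort --- adding axioms forcing these predicates to be nonempty, pairwise disjoint, and jointly exhaustive, and replacing each function symbol by a functional relation --- and then apply the classical one-sorted downward L{\"o}wenheim--Skolem theorem, whose countable model restricts to an at-most-countable domain in each sort.
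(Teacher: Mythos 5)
Your argument is correct and is the standard downward L\"owenheim--Skolem proof; the paper itself does not prove this statement at all --- it is a preliminary cited from the literature (Manzano's and Tinelli--Zarba's treatments of many-sorted logic) --- so there is no in-paper proof to compare against. Two small remarks: your explicit assumption that $\Sigma$ has countably many function and predicate symbols is indeed needed for the $\aleph_{0}$ bound (the paper's setup only states countability of the sort set, so this hypothesis is implicit there), and in the Tarski--Vaught step you should make sure the witness closure also covers the existentials arising from negated universal subformulas (equivalently, work in negation normal form or close under the Tarski--Vaught condition for all formulas), which is exactly the bookkeeping delicacy you already flag.
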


\begin{theorem}\label{compactness}
    Let $\Sigma$ be a first-order, many-sorted signature; then a set of $\Sigma$-formulas $\Gamma$ is satisfiable if, and only if, each finite subset $\Gamma_{0}\subseteq\Gamma$ is satisfiable.
\end{theorem}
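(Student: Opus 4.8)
The forward direction is immediate: any interpretation satisfying $\Gamma$ satisfies each of its finite subsets. For the converse the plan is to reduce to the classical one-sorted compactness theorem by a relativization argument. First I would fix a fresh one-sorted signature $\Sigma^{\star}$ containing, for each sort $\s\in\S_{\Sigma}$, a fresh unary predicate $U_{\s}$, together with one (now one-sorted) function symbol for each $f\in\F_{\Sigma}$ and one predicate symbol for each $P\in\P_{\Sigma}$, the equalities $=_{\s}$ all being collapsed onto the single equality of $\Sigma^{\star}$. To each $\Sigma$-formula $\varphi$ I associate its relativization $\varphi^{\star}$, obtained by replacing $\Forall{x}\psi$ (with $x$ of sort $\s$) by $\Forall{x}(U_{\s}(x)\rightarrow\psi^{\star})$ and $\Exists{x}\psi$ by $\Exists{x}(U_{\s}(x)\wedge\psi^{\star})$, leaving atomic formulas untouched; free variables of sort $\s$ are handled by adding the open formula $U_{\s}(x)$, or equivalently by treating them as fresh constants. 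I then set $\Gamma^{\star}$ to be $\{\varphi^{\star}:\varphi\in\Gamma\}$ together with the sentences $\Exists{x}U_{\s}(x)$ for every $\s\in\S_{\Sigma}$ and the closure axioms $\Forall{x_{1}\cdots x_{n}}\big((U_{\s_{1}}(x_{1})\wedge\cdots\wedge U_{\s_{n}}(x_{n}))\rightarrow U_{\s}(f(x_{1},\ldots,x_{n}))\big)$ for every $f\colon\s_{1}\times\cdots\times\s_{n}\rightarrow\s$ in $\F_{\Sigma}$.

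The heart of the argument is to show that every finite $\Delta\subseteq\Gamma^{\star}$ is one-sorted satisfiable. Such a $\Delta$ involves the relativizations of finitely many formulas $\varphi_{1},\ldots,\varphi_{m}\in\Gamma$, so by hypothesis there is a $\Sigma$-interpretation $\A$ satisfying $\{\varphi_{1},\ldots,\varphi_{m}\}$. I turn $\A$ into a one-sorted $\Sigma^{\star}$-interpretation $\A^{\star}$ whose domain is the disjoint union $\bigsqcup_{\s\in\S_{\Sigma}}\s^{\A}$ (a set, since $\S_{\Sigma}$ is countable, and nonempty, since every $\s^{\A}$ is), interpreting $U_{\s}$ as the copy of $\s^{\A}$, each $f$ by $f^{\A}$ on arguments of the appropriate sorts and by an arbitrary fixed value elsewhere, and each $P$ by $P^{\A}$ restricted to arguments of the appropriate sorts. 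Crucially, because the copies $\s^{\A}$ are pairwise disjoint, the single equality of $\A^{\star}$ is genuine identity and restricts on each $U_{\s}$ to $=_{\s}^{\A}$. A routine induction on terms, using the closure axioms, shows that a term of sort $\s$ evaluates into $U_{\s}^{\A^{\star}}$ whenever its variables are assigned values in the corresponding $U$'s, and the standard relativization lemma (a further induction on formulas) gives $\A\vDash\varphi$ iff $\A^{\star}\vDash\varphi^{\star}$; since $\A^{\star}$ also satisfies all the nonemptiness and closure axioms, $\A^{\star}\vDash\Delta$.

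By one-sorted compactness, $\Gamma^{\star}$ then has a model $\B^{\star}$, and I recover a $\Sigma$-interpretation $\B$ by setting $\s^{\B}=U_{\s}^{\B^{\star}}$ (nonempty by the nonemptiness axioms), restricting each $f^{\B^{\star}}$ to $\s_{1}^{\B}\times\cdots\times\s_{n}^{\B}$ (which lands in $\s^{\B}$ by the closure axioms), and restricting each $P^{\B^{\star}}$ accordingly (in particular each $=_{\s}$, which is then genuine identity on $\s^{\B}$). The same inductive equivalence $\B\vDash\varphi\iff\B^{\star}\vDash\varphi^{\star}$ now yields $\B\vDash\Gamma$. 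The only real care needed is the bookkeeping around equality (resolved by taking a genuinely disjoint union of the domains) and around the totality of function symbols in the one-sorted setting (resolved by the closure axioms); no idea beyond the classical theorem is required, which is why the statement is merely quoted from the literature. An alternative, self-contained route would instead build the ultraproduct of models $\A_{\Gamma_{0}}$ of the finite subsets $\Gamma_{0}$ against an ultrafilter refining the filter generated by the sets $\{\Gamma_{0} : \Gamma_{0}\supseteq\Delta\}$ for $\Delta$ finite, appealing to a many-sorted form of {\L}o{\'s}'s theorem; I would prefer the relativization proof precisely because it reuses the one-sorted result as a black box.
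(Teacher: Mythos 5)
The paper does not prove this theorem; it is stated and attributed to the literature (the Manzano and Tinelli--Zarba references), so there is no internal proof to compare against. Your relativization argument is the standard, correct derivation of many-sorted compactness from the one-sorted case: the disjoint-union construction of $\A^{\star}$ is exactly what makes the single one-sorted equality agree with each $=_{\s}$ on $U_{\s}$, the closure and nonemptiness axioms handle totality of the (now one-sorted) function symbols and nonemptiness of the recovered domains, and the relativization lemma is a routine induction. The ultraproduct alternative you sketch is equally valid and more self-contained, but the relativization route is the one the cited sources favor precisely because it reuses one-sorted compactness as a black box.

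One small point to tighten: since the statement is about sets of \emph{formulas}, not sentences, the sort-constraints $U_{\s}(x)$ for free variables $x$ of sort $\s$ must be included as members of $\Gamma^{\star}$ (not just mentioned parenthetically), so that the assignment extracted from $\B^{\star}$ provably lands each free variable in the correct $U_{\s}^{\B^{\star}}$; without them, $\B\vDash\varphi\iff\B^{\star}\vDash\varphi^{\star}$ can fail for open $\varphi$. You do flag this, so it is a matter of presentation rather than a gap.
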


\medskip
\toclesslab\subsection{Theory Combination Properties}{modeltheory}

In what follows, 
$\Sigma$ denotes an arbitrary signature, 
$\T$ a $\Sigma$-theory
and 
$S\subseteq\S_{\Sigma}$.

\smallskip
\noindent
{\bf{Stable infiniteness  and smoothness }}
$\T$ is \emph{stably infinite}
\cite{NelsonOppen}
w.r.t. $S$ if, for every quantifier-free 
$\T$-satisfiable
formula $\phi$, there is a $\T$-interpretation $\A$ satisfying $\phi$ with $|\s^{\A}|\geq\aleph_{0}$ for each $\s\in S$. 
$\T$ is \emph{smooth}~\cite{TinZar05,RanRinZar} w.r.t. $S$
if, for every quantifier-free formula $\phi$, $\T$-interpretation $\A$ satisfying $\phi$, and function $\kappa$ from $S$ to the class of cardinals with $\kappa(\s)\geq|\s^{\A}|$ for each $\s\in S$, there is a $\T$-interpretation $\B$ satisfying $\phi$ with $|\s^{\B}|=\kappa(\s)$ for each $\s\in S$.

\smallskip
\noindent
{\bf Finite witnessability and strong finite witnessability }
Given a finite set of variables $V=\bigcup_{\s\in S}V_{\s}$, for $V_{\s}$ the subset of $V$ of variables of sort $\s$, and equivalence relations $E_{\s}$ on $V_{\s}$ whose union (also an equivalence relation) 
we denote by $E$, we define the formula 
$\delta_{V}^{E} := \bigwedge_{\s\in S}\big[\bigwedge_{xE_{\s}y}(x=y)\wedge\bigwedge_{x\not{E_{\s}}y}\neg(x=y)\big]$.
We then call $\delta_{V}^{E}$ an {\em arrangement} of $V$ and
often denote it by $\delta_{V}$ when $E$ is clear from the context.

$\T$ is \emph{finitely witnessable} 
\cite{RanRinZar} 
w.r.t. $S$ if there is a computable function $\wit$ (called a witness) from $\qf{\Sigma}$ into itself that satisfies: 
$(i)$~for any quantifier-free formula $\phi$, $\phi$ and $\Exists{\overarrow{x}}\wit(\phi)$ for $\overarrow{x}=\vars(\wit(\phi))\setminus\vars(\phi)$, are $\T$-equivalent; and 
$(ii)$~if $\wit(\phi)$ is $\T$-satisfiable, then there exists a $\T$-interpretation $\A$ satisfying it with $\s^{\A}=\vars_{\s}(\wit(\phi))^{\A}$ for each $\s\in S$. 

\emph{Strong finite witnessability}~\cite{JB10-LPAR} 
w.r.t. $S$ 
is defined similarly, replacing $(ii)$ by:
$(ii')$~given a finite set of variables $V$ and arrangement $\delta_{V}$ on $V$, if $\wit(\phi)\wedge\delta_{V}$ is $\T$-satisfiable, then there is a $\T$-interpretation $\A$ satisfying it  with $\s^{\A}=\vars_{\s}(\wit(\phi)\wedge\delta_{V})^{\A}$ for each $\s\in S$.
$\T$ is (strongly) polite w.r.t. $S$ if it is both smooth and (strongly) finitely witnessable w.r.t. $S$.

\smallskip
\noindent
{\bf Convexity }
 $\T$ is \emph{convex} 
 \cite{NelsonOppen}
 w.r.t. $S$ if 
$\vDash_{\T}\phi\rightarrow\bigvee_{i=1}^{n}x_{i}=y_{i}$,
for $\phi$ a conjunction of literals and $x_{i}$ and $y_{i}$ variables of sorts in $S$, implies that $\vDash_{\T}\phi\rightarrow(x_{i}=y_{i})$ for some $1\leq i\leq n$.

\smallskip
\noindent
{\bf Finite model property  and stable finiteness }
 $\T$ has the \emph{finite model property}
 \cite{CasalRasga}
 w.r.t. $S$ if, for every quantifier-free formula $\phi$ that is $\T$-satisfiable, there exists a $\T$-interpretation $\A$ that satisfies $\phi$ with $|\s^{\A}|<\aleph_{0}$ for all $\s\in S$. 
 %
 A theory is \emph{stably finite} 
 \cite{TinZar05}
 w.r.t. $S$ if, for every quantifier-free formula $\phi$ and $\T$-interpretation $\A$ that satisfies $\phi$, there exists a $\T$-interpretation $\B$ that satisfies $\phi$ with $|\s^{\B}|<\aleph_{0}$ and $|\s^{\B}|\leq |\s^{\A}|$ for every $\s\in S$.

\smallskip
\noindent
{\bf  Minimal model function }
Suppose $S$ is finite, and consider the set $\N := \mathbb{N}\cup\{\aleph_{0}\}$. 
A \emph{minimal model function}~\cite{TinZar05,RanRinZar,CasalRasga2} of  $\T$ w.r.t. $S$ is a function\footnote{Here, as is usual in set-theoretic notation: $\N^{S}$ is the set of functions $n:S\rightarrow\N$, themselves also denoted by $(n_{\s})_{\s\in S}$ where $n(\s)=n_{\s}$; and $\fps{X}$ is the set of finite subsets of $X$.}
\[\minmod_{\T,S}:\qf{\Sigma}\rightarrow\fps{\N^{S}}\]
such that, if $\phi$ is a quantifier-free, $\T$-satisfiable formula, then $(n_{\s})_{\s\in S}\in\minmod_{\T,S}(\phi)$ if, and only if, the following holds: first, there exists a $\T$-interpretation $\A$ that satisfies $\phi$ with $|\s^{\A}|=n_{\s}$ for each $\s\in S$; and second, if $\B$ is a $\T$-interpretation that satisfies $\phi$ with $(|\s^{\B}|)_{\s\in S}\neq (n_{\s})_{\s\in S}$, then there exists $\s\in S$ such that $n_{\s}<|\s^{\B}|$.\footnote{We can use $\N$ instead of the class of all cardinals thanks to \Cref{LowenheimSkolem}.}

When $S$ equals the set, for example, $\{\s_{1},\ldots,\s_{n}\}$, we will denote an element $(n_{\s})_{\s\in S}$ simply by $(n_{\s_{1}},\ldots,n_{\s_{n}})$, identifying $\N^{S}$ with $\N^{n}$.
When
$\Sigma$ is one-sorted, then $\minmod_{\T,S}(\phi)$ has precisely one element, and so we can identify the output of $\minmod_{\T,S}$, if not empty, with an element of $\N$.

\newcommand{\tuf}{\T_{{\mathsf UF}}}

\bigskip
\toclesslab\section{On Minimal Models}{CMMF}

In this section we analyze some of the characteristics of minimal model functions.
We start by noticing that, if $\phi$ is $\T$-satisfiable, there is a unique possibility for the set $\minmod_{\T,S}(\phi)$, given the bi-implication in its definition. It therefore follows that two minimal model functions always agree on $\T$-satisfiable inputs; the output can vary, however, on $\T$-unsatisfiable formulas.
We continue by showing that for satisfiable formulas, $\minmod_{\T,S}(\phi)$ is indeed a finite subset of $\N^{S}$, using
the following variant of
Dickson's Lemma~\cite{Dickson}.\footnote{We thank Benjamin Przybocki for pointing out this lemma to us.}\footnote{Dickson's original result simply exchanges $\N$ for $\mathbb{N}$ in \Cref{technical lemma}.}


\smallskip

\begin{restatable}{lemma}{dickson}\label{technical lemma}\label{Dickson}
    Let $n$ be a natural number, and consider any subset $A$ of $\N^{n}$ equipped with the order such that $(p_{1},\ldots,p_{n})\leq (q_{1},\ldots,q_{n})$ iff $p_{i}\leq q_{i}$ for all $1\leq i\leq n$: then $A$ possesses at most a finite number of minimal elements under this order.
\end{restatable}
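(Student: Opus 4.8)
The plan is to prove the statement by induction on $n$, the number of coordinates, reducing the general case to the classical finite-set version of Dickson's Lemma over $\mathbb{N}$. The base case $n = 1$ is immediate: a subset $A \subseteq \N$ is a subset of a linearly ordered set, so it has at most one minimal element (its least element, if it exists).

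For the inductive step, suppose the claim holds for $n$, and let $A \subseteq \N^{n+1}$. The key observation is that the element $\aleph_{0}$ plays a special role: a tuple with $\aleph_{0}$ in some coordinate can only be minimal if no tuple strictly below it exists, and such tuples are "large" in that coordinate. First I would partition the set $M$ of minimal elements of $A$ according to which coordinates equal $\aleph_{0}$. For each subset $J \subseteq \{1,\ldots,n+1\}$, let $M_{J}$ be the set of minimal elements whose set of $\aleph_{0}$-coordinates is exactly $J$; there are finitely many such $J$, so it suffices to show each $M_{J}$ is finite. Fixing the coordinates in $J$ to $\aleph_{0}$, the remaining coordinates range over $\mathbb{N}^{\,|J^{c}|}$ (genuine natural numbers, no $\aleph_{0}$), and two elements of $M_{J}$ are incomparable in $\N^{n+1}$ if and only if their $J^{c}$-projections are incomparable in $\mathbb{N}^{\,|J^{c}|}$. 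Since distinct minimal elements are pairwise incomparable, the $J^{c}$-projection of $M_{J}$ is an antichain in $\mathbb{N}^{\,|J^{c}|}$, and by the classical Dickson's Lemma every antichain in $\mathbb{N}^{k}$ is finite; hence $M_{J}$ is finite.

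Alternatively, and perhaps more cleanly, I would avoid the partition and argue directly: consider an antichain $M$ in $\N^{n+1}$ (the set of minimal elements is an antichain). Project onto the first coordinate. If infinitely many elements of $M$ have first coordinate $\aleph_{0}$, then since these are pairwise incomparable, their projections onto the last $n$ coordinates form an infinite antichain in $\N^{n}$, contradicting the induction hypothesis applied to this antichain (an infinite antichain would be an infinite set all of whose elements are minimal). Otherwise, only finitely many first-coordinate values $m_{1} < m_{2} < \cdots < m_{k}$ (all in $\mathbb{N}$) occur; for each value $m_{i}$, the elements of $M$ with that first coordinate have pairwise-incomparable projections onto the remaining $n$ coordinates, so by induction there are finitely many of them, and summing over the finitely many values gives that $M$ is finite.

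The main obstacle — really the only subtlety — is handling $\aleph_{0}$ correctly in the order and making sure the reduction to $\mathbb{N}^{k}$ is valid: one must check that $\N$ with the stated coordinatewise order behaves like $\mathbb{N}$ for the purposes of the antichain argument, in particular that incomparability is preserved under the relevant projections and that $\aleph_{0}$ occurring in a coordinate does not create spurious new minimal elements. Once the bookkeeping with which coordinates are $\aleph_{0}$ is organized as above, the argument is routine, leaning on the finite-set formulation of the classical Dickson's Lemma, which states that every antichain in $(\mathbb{N}^{k}, \leq)$ is finite.
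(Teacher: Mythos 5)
Your primary argument (the partition by the set $J$ of $\aleph_{0}$-coordinates) is correct, and it takes a genuinely different route from the paper. You stratify the antichain of minimal elements into finitely many classes $M_{J}$, observe that within each class comparability is decided entirely by the $J^{c}$-coordinates (which live in $\mathbb{N}$), and then invoke the classical Dickson's Lemma for $\mathbb{N}^{k}$ as a black box. This needs no induction and is arguably cleaner, at the cost of assuming the classical result. The paper instead proves the statement from scratch by induction on $n$: it fixes one minimal element $\overline{p}$, notes that any other minimal element $\overline{q}$ must satisfy $q_{i}<p_{i}$ in some coordinate $i$, hence lies in (and is minimal in) one of the slices $A^{i}_{q_{i}}$ with $q_{i}<p_{i}$, and applies the induction hypothesis to the projections of these slices into $\N^{n}$. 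So the paper's proof is self-contained but has to manage the slicing bookkeeping (including the case $p_{i}=\aleph_{0}$), whereas yours outsources the combinatorial core to the $\mathbb{N}$-version, which the paper's own footnote acknowledges as the known original.

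Your ``alternative'' second argument, however, has a gap in the \emph{otherwise} branch. From ``only finitely many elements of $M$ have first coordinate $\aleph_{0}$'' you jump to ``only finitely many first-coordinate values $m_{1}<\cdots<m_{k}$ occur,'' but this does not follow: a priori an antichain could contain elements with infinitely many distinct natural-number first coordinates. Ruling that out requires an extra argument (elements $x^{(1)},x^{(2)},\dots$ with strictly increasing first coordinates would have $J^{c}$-projections forming an infinite \emph{bad sequence} in $\N^{n}$, and a bad sequence need not be an antichain), so the inductive hypothesis as you state it --- finitely many minimal elements in every subset --- does not directly apply; you would need to upgrade it to the well-quasi-order property of $\N^{n}$. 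Since you present this only as an alternative and your first argument is complete, the proposal as a whole stands, but the second route should not be used as written.
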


\begin{restatable}{proposition}{uniqueandfinite}\label{unique and finite}
     For every $\Sigma$-theory $\T$, $S\subseteq\S_{\Sigma}$, and quantifier-free $\phi$, the subset $X$ of $\N^{S}$ is finite, where $(n_{\s})_{\s\in S}\in X$ iff: there is a $\T$-interpretation $\A$ that satisfies $\phi$ with $(|\s^{\A}|)_{\s\in S}=(n_{\s})_{\s\in S}$; if $\B$ is a $\T$-interpretation that satisfies $\phi$ with $(|\s^{\B}|)_{\s\in S}\neq (n_{\s})_{\s\in S}$, then there is a $\s\in S$ such that $n_{\s}<|\s^{\B}|$.
\end{restatable}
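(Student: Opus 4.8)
The plan is to reduce Proposition~\ref{unique and finite} directly to the variant of Dickson's Lemma stated in Lemma~\ref{Dickson}. First I would dispense with the trivial case: if $\phi$ is $\T$-unsatisfiable, then no $\T$-interpretation satisfies $\phi$, so the condition ``there is a $\T$-interpretation $\A$ that satisfies $\phi$ with $(|\s^{\A}|)_{\s\in S}=(n_{\s})_{\s\in S}$'' fails for every tuple, hence $X=\emptyset$, which is finite. So assume $\phi$ is $\T$-satisfiable.

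Next I would set up the combinatorial picture. Since $S$ is required to be finite for the minimal model function to make sense, write $S=\{\s_1,\ldots,\s_n\}$ and identify $\N^S$ with $\N^n$ as the paper already does. Define $B\subseteq\N^n$ to be the set of ``realizable cardinality tuples'', i.e.\ $(n_{\s_1},\ldots,n_{\s_n})\in B$ iff there is a $\T$-interpretation $\A\vDash\phi$ with $|\s_i^{\A}|=n_{\s_i}$ for each $i$; here I use Theorem~\ref{LowenheimSkolem} to know that every satisfiable set of formulas has a model with all domains of size at most $\aleph_0$, so that the cardinalities genuinely live in $\N=\mathbb{N}\cup\{\aleph_0\}$ and $B$ really is a subset of $\N^n$. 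The key observation is then that the set $X$ from the statement is exactly the set of minimal elements of $B$ with respect to the product order of Lemma~\ref{Dickson}: a tuple $(n_{\s})_{\s\in S}$ is in $X$ iff it is realizable (it lies in $B$) and no strictly smaller realizable tuple exists --- the second clause of the definition of $X$, rephrased, says precisely that whenever $\B\vDash\phi$ has a different cardinality tuple $(|\s^{\B}|)$, that tuple is not $\leq (n_{\s})$, equivalently that $(n_{\s})$ is $\leq$-minimal in $B$.

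Having identified $X$ with the set of minimal elements of $B\subseteq\N^n$, I would then simply invoke Lemma~\ref{Dickson} with $A:=B$ to conclude that $X$ is finite. This completes the proof.

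I do not expect any serious obstacle here: the content of the proposition is entirely in Lemma~\ref{Dickson}, and the proof is just the bookkeeping of matching definitions. The one point that needs care --- and is the only place a reader might stumble --- is checking that the two-part defining condition for membership in $X$ unwinds correctly to ``$\leq$-minimal element of $B$'': one must verify both that the second clause is equivalent to minimality \emph{given} the first clause, and that we are using the ``variant'' form of Dickson's Lemma (the one allowing the value $\aleph_0$), not the classical $\mathbb{N}^n$ version, since domains of $\T$-interpretations may well be countably infinite. Beyond that, the argument is routine.
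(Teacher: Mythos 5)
Your proof is correct and takes essentially the same route as the paper's: identify $X$ with (a subset of) the set of $\leq$-minimal elements of the realizable cardinality tuples and invoke the $\N$-valued variant of Dickson's Lemma. The only cosmetic difference is that the paper factors out the containment $X\subseteq\minimal(\textbf{Card}_{\T,S}(\phi))$ as a standalone lemma (the equality you assert is indeed also true, but the paper defers it to \Cref{alternative definition}, since containment alone suffices for finiteness); you should also make sure the unwinding of the second clause explicitly covers $\T$-interpretations $\B$ with uncountable domains, which trivially satisfy ``$\exists\s.\,n_{\s}<|\s^{\B}|$'' and so cause no trouble, but are quietly absorbed when you pass from ``all $\B\vDash\phi$'' to ``all realizable tuples in $B\subseteq\N^n$''.
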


The following result thus neatly explains the choice of nomenclature for a minimal model function vis-{\`a}-vis its definition: it is so well known that its proof became rather elusive.

\smallskip
\begin{restatable}{proposition}{alternativedefinitionminmod}\label{alternative definition}
Take a quantifier-free formula $\phi$ and consider the set 
\[\textbf{Card}_{\T,S}(\phi)=\{(|\s^{\A}|)_{\s\in S} : \text{$\A$ is a $\T$-interpretation that satisfies $\phi$}\},\]
together with the partial order such that $(|\s^{\A}|)_{\s\in S}\leq(|\s^{\B}|)_{\s\in S}$ iff $|\s^{\A}|\leq |\s^{\B}|$ for all $\s\in S$; then, if $\phi$ is $\T$-satisfiable, $\minmod_{\T,S}(\phi)$ equals the set of $\leq$-minimal elements of $\textbf{Card}_{\T,S}(\phi)$.
\end{restatable}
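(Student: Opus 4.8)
\textbf{Proof proposal for \Cref{alternative definition}.}

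The plan is to prove the set equality $\minmod_{\T,S}(\phi) = \min(\textbf{Card}_{\T,S}(\phi))$ by double inclusion, unwinding the two clauses in the definition of $\minmod$ and matching them against the definition of a $\leq$-minimal element. Throughout, I fix a $\T$-satisfiable quantifier-free $\phi$; note that $\textbf{Card}_{\T,S}(\phi)$ is then nonempty, and by \Cref{LowenheimSkolem} every element of it lies in $\N^{S}$, so the order under discussion really is the componentwise order on $\N^{S}$ and \Cref{unique and finite} guarantees both sides are finite subsets of $\N^{S}$ (this is not strictly needed for the equality, but it keeps the statement meaningful).

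First I would show $\minmod_{\T,S}(\phi) \subseteq \min(\textbf{Card}_{\T,S}(\phi))$. Suppose $(n_{\s})_{\s\in S} \in \minmod_{\T,S}(\phi)$. By the first clause of the definition there is a $\T$-interpretation $\A$ satisfying $\phi$ with $(|\s^{\A}|)_{\s\in S} = (n_{\s})_{\s\in S}$, so $(n_{\s})_{\s\in S} \in \textbf{Card}_{\T,S}(\phi)$. For minimality, let $(m_{\s})_{\s\in S} \in \textbf{Card}_{\T,S}(\phi)$ with $(m_{\s})_{\s\in S} \leq (n_{\s})_{\s\in S}$; I must show equality. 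Pick $\B$ realizing $(m_{\s})_{\s\in S}$. If $(m_{\s})_{\s\in S} \neq (n_{\s})_{\s\in S}$, the second clause of the $\minmod$ definition, applied to $\B$, yields some $\s \in S$ with $n_{\s} < |\s^{\B}| = m_{\s}$, contradicting $m_{\s} \leq n_{\s}$. Hence $(m_{\s})_{\s\in S} = (n_{\s})_{\s\in S}$, and $(n_{\s})_{\s\in S}$ is $\leq$-minimal in $\textbf{Card}_{\T,S}(\phi)$.

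Conversely, I would show $\min(\textbf{Card}_{\T,S}(\phi)) \subseteq \minmod_{\T,S}(\phi)$. Suppose $(n_{\s})_{\s\in S}$ is $\leq$-minimal in $\textbf{Card}_{\T,S}(\phi)$. Membership in $\textbf{Card}_{\T,S}(\phi)$ immediately gives a $\T$-interpretation $\A$ satisfying $\phi$ with $(|\s^{\A}|)_{\s\in S} = (n_{\s})_{\s\in S}$, which is the first clause. For the second clause, let $\B$ be any $\T$-interpretation satisfying $\phi$ with $(|\s^{\B}|)_{\s\in S} \neq (n_{\s})_{\s\in S}$; then $(|\s^{\B}|)_{\s\in S} \in \textbf{Card}_{\T,S}(\phi)$, and by $\leq$-minimality of $(n_{\s})_{\s\in S}$ we cannot have $(|\s^{\B}|)_{\s\in S} \leq (n_{\s})_{\s\in S}$ (otherwise it would be $\leq$-below a minimal element, hence equal, contradicting the inequality of tuples). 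So there is some $\s \in S$ with $|\s^{\B}|$ not $\leq n_{\s}$, i.e.\ $n_{\s} < |\s^{\B}|$ (the order on $\N$ being a total order), which is exactly the second clause. Therefore $(n_{\s})_{\s\in S} \in \minmod_{\T,S}(\phi)$.

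The argument is essentially a direct translation between two formulations of minimality, so I do not anticipate a genuine obstacle; the only point requiring a little care is the step "not $\leq$-below a minimal element unless equal", which uses that $\leq$ on $\N$ (and hence componentwise on $\N^{S}$) is a partial order with antisymmetry, together with totality of $\leq$ on $\N$ to turn "$\neg(|\s^{\B}| \leq n_{\s})$" into "$n_{\s} < |\s^{\B}|$". One should also remark at the outset that the set $X$ of \Cref{unique and finite} is literally $\minmod_{\T,S}(\phi)$ for satisfiable $\phi$, so no finiteness issue intervenes. Everything else is bookkeeping with the quantifiers in the two definitions.
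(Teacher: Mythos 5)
Your proof is correct and takes essentially the same route as the paper: a double inclusion by unwinding the two clauses of the $\minmod$ definition against the definition of a $\leq$-minimal element (the paper packages the first inclusion as a separate lemma and handles the second by a slightly more verbose case split on comparability, but the logical content is identical). Your observation that negating $(|\s^{\B}|)_{\s\in S}\leq(n_{\s})_{\s\in S}$ yields $n_{\s}<|\s^{\B}|$ via totality of the order on $\N$ is a modest streamlining of the paper's argument, not a different approach.
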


If one looks at how a minimal model function is used in \cite{TinZar05,CasalRasga,CasalRasga2}, two differences become noticeable: first, that its codomain is taken to be $\fps{\mathbb{N}^{S}}$
rather than $\fps{\N^{S}}$; and second, that its domain is taken to be the subset of $\T$-satisfiable elements of $\qf{\Sigma}$, rather than $\qf{\Sigma}$.
In the first case, the difference boils down to assuming stable finiteness,
as we show in the next proposition.
We, however, do not make this assumption, as we are interested
in the characteristics of the separate properties,
and the connections between them.

\begin{restatable}{proposition}{CMMFplusFMP}\label{CMMF+FMP}
Let $\T$ be a theory, and $S$ a set of its sorts:
\begin{enumerate}
    \item $n_{\s}$ is in $\mathbb{N}$ for all quantifier-free, $\T$-satisfiable formulas $\phi$, some $(n_{\t})_{\t\in S}\in\minmod_{\T,S}(\phi)$, and all $\s\in S$, iff $\T$ has the finite model property with respect to $S$.
    \item  $n_{\s}$ is in $\mathbb{N}$ for all quantifier-free, $\T$-satisfiable formulas $\phi$, $(n_{\t})_{\t\in S}\in\minmod_{\T,S}(\phi)$, and $\s\in S$, iff $\T$ is stably finite with respect to $S$.
\end{enumerate}
\end{restatable}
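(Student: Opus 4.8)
\textbf{Proof proposal for \Cref{CMMF+FMP}.}

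The plan is to prove each direction of each item by unwinding the definitions of the finite model property and stable finiteness, and matching them against the characterization of $\minmod_{\T,S}$ provided by \Cref{alternative definition}, namely that $\minmod_{\T,S}(\phi)$ is exactly the set of $\leq$-minimal elements of $\textbf{Card}_{\T,S}(\phi)$. Throughout, fix a quantifier-free $\T$-satisfiable $\phi$; by \Cref{LowenheimSkolem} every element of $\textbf{Card}_{\T,S}(\phi)$ lies in $\N^{S}$, and by \Cref{unique and finite} it has at least one $\leq$-minimal element, so $\minmod_{\T,S}(\phi)$ is nonempty. The crucial set-theoretic fact I would isolate first is the following: for any nonempty $A \subseteq \N^{S}$ that has a $\leq$-minimal element below every element of $A$ (which holds here by Dickson, since every element dominates some minimal one), we have ``$A$ contains a tuple all of whose entries are in $\mathbb{N}$'' iff ``some $\leq$-minimal element of $A$ has all entries in $\mathbb{N}$'', and ``every element of $A$ has all entries in $\mathbb{N}$'' iff ``every $\leq$-minimal element of $A$ has all entries in $\mathbb{N}$''. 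The left-to-right direction of the first equivalence uses that a minimal element sits below some given finite tuple, hence is itself finite in every coordinate; the right-to-left directions and the second equivalence are immediate.

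For item 1 ($\Leftarrow$): if $\T$ has the finite model property w.r.t. $S$, then since $\phi$ is $\T$-satisfiable there is a $\T$-interpretation $\A$ satisfying $\phi$ with $|\s^{\A}| < \aleph_{0}$ for all $\s \in S$; thus $\textbf{Card}_{\T,S}(\phi)$ contains a tuple with all entries in $\mathbb{N}$, and by the isolated fact some $\leq$-minimal element — i.e.\ some element of $\minmod_{\T,S}(\phi)$ by \Cref{alternative definition} — has all entries in $\mathbb{N}$. For item 1 ($\Rightarrow$): given the hypothesis, for any $\T$-satisfiable $\phi$ pick the guaranteed $(n_{\t})_{\t\in S} \in \minmod_{\T,S}(\phi)$ with all $n_{\s} \in \mathbb{N}$; by \Cref{alternative definition} this tuple is realized by some $\T$-interpretation $\A$ satisfying $\phi$, and that $\A$ witnesses the finite model property for $\phi$. (One should also note the vacuous case where $\phi$ is $\T$-unsatisfiable, which both the finite model property and the $\minmod$ condition ignore.)

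For item 2 ($\Leftarrow$): suppose $\T$ is stably finite w.r.t. $S$, and let $(n_{\t})_{\t\in S} \in \minmod_{\T,S}(\phi)$ be arbitrary. By \Cref{alternative definition} there is a $\T$-interpretation $\A$ satisfying $\phi$ with $(|\s^{\A}|)_{\s\in S} = (n_{\s})_{\s\in S}$; stable finiteness yields a $\T$-interpretation $\B$ satisfying $\phi$ with $|\s^{\B}| < \aleph_{0}$ and $|\s^{\B}| \leq |\s^{\A}| = n_{\s}$ for every $\s \in S$, so $(|\s^{\B}|)_{\s\in S} \leq (n_{\s})_{\s\in S}$ in $\textbf{Card}_{\T,S}(\phi)$. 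By minimality of $(n_{\s})_{\s\in S}$ the two tuples are equal, hence $n_{\s} = |\s^{\B}| \in \mathbb{N}$ for all $\s$. For item 2 ($\Rightarrow$): assume the hypothesis, take any $\T$-interpretation $\A$ satisfying some quantifier-free $\phi$ — then $\phi$ is $\T$-satisfiable — and use \Cref{unique and finite}/Dickson together with \Cref{alternative definition} to find an element $(n_{\s})_{\s\in S} \in \minmod_{\T,S}(\phi)$ with $(n_{\s})_{\s\in S} \leq (|\s^{\A}|)_{\s\in S}$; the hypothesis forces each $n_{\s} \in \mathbb{N}$, and the $\T$-interpretation $\B$ realizing $(n_{\s})_{\s\in S}$ (again via \Cref{alternative definition}) satisfies $\phi$ with $|\s^{\B}| = n_{\s} < \aleph_{0}$ and $|\s^{\B}| \leq |\s^{\A}|$ for all $\s \in S$, establishing stable finiteness.

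The only mild subtlety — and the step I would be most careful about — is the argument that a $\leq$-minimal element of $\textbf{Card}_{\T,S}(\phi)$ lying below a given all-finite tuple must itself be all-finite, and more generally that every element of $\textbf{Card}_{\T,S}(\phi)$ dominates a minimal one: this is exactly where Dickson's Lemma (\Cref{technical lemma}) and \Cref{unique and finite} do the work, since $\N^{S}$ is not well-founded a priori for the purposes of ``keep descending'', but finiteness of the minimal set rescues the descent. Everything else is a direct translation between the two formulations via \Cref{alternative definition}, with an eye kept on the harmless $\T$-unsatisfiable case.
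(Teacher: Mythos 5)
Your proof is correct in outcome and follows essentially the same approach as the paper's: the paper argues directly from the bi-implication in the definition of $\minmod$, while you route through the characterization of \Cref{alternative definition} as $\minimal(\textbf{Card}_{\T,S}(\phi))$, but the underlying reasoning (produce a witness from a tuple in $\minmod$; conversely, given an interpretation, descend to a minimal tuple below it) is the same, and the vacuous unsatisfiable case is handled the same way.

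Two flags. First, the second half of your ``isolated set-theoretic fact'' --- that every element of $A\subseteq\N^{S}$ has all entries in $\mathbb{N}$ iff every $\leq$-minimal element of $A$ does --- is false as stated (take $A=\{(1),(\aleph_{0})\}$ with $|S|=1$; the only minimal element is $(1)$, yet $(\aleph_{0})\in A$). Fortunately you never actually use this equivalence: your argument for item~2 ($\Leftarrow$) is a direct minimality argument (stable finiteness gives a $\B$ with $(|\s^{\B}|)_{\s\in S}\leq(n_{\s})_{\s\in S}$, and minimality forces equality), and item~2 ($\Rightarrow$) goes through the ``dominate a minimal'' fact instead. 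The erroneous claim should simply be deleted. Second, your justification for the step that every element of $\textbf{Card}_{\T,S}(\phi)$ dominates a $\leq$-minimal one is not right: you write that $\N^{S}$ ``is not well-founded a priori'' and that ``finiteness of the minimal set rescues the descent'', but finiteness of the set of minimal elements does \emph{not} by itself imply that every element dominates a minimal one (a poset can have one minimal element yet also contain an infinite strictly descending chain not bounded below by it). What actually makes the descent terminate is that $\N$ is order-isomorphic to $\omega+1$, hence well-ordered, hence $\N^{S}$ for finite $S$ is a finite product of well-orders and thus well-founded, so every nonempty subset (in particular $\{t\in\textbf{Card}_{\T,S}(\phi):t\leq(|\s^{\A}|)_{\s\in S}\}$) has a minimal element. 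The paper is equally terse here --- it picks ``a $\C'$ with minimal $(|\s^{\C'}|)_{\s\in S}$'' without further comment --- so the gap is shared, but your proposed patch for it is not a valid argument and should be replaced by the well-foundedness remark.
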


Notice that while in the first item of \Cref{CMMF+FMP} we quantify existentially over the elements of $\minmod_{\T,S}(\phi)$, on the second this is done universally.

\newcommand{\MM}{\ensuremath{mm}}

As for the domain of $\minmod_{\T,S}$, notice that typically, the minimal model function is only considered for decidable theories (often implicitly), something that we do not wish to assume \textit{a priori} here. 
But, if decidability is assumed, both notions are one and the same.

\bigskip
\toclesslab\section{The Main Theorem}{sec:maintheorem}

\question{So, I am assuming that your main theorem provides a complete characterization of the possible and impossible combinations of properties?}
\answer{Not exactly.
Already in \cite{CADE}, 
one combination was problematic: 
no theories that are stably infinite and strongly finitely witnessable but not smooth were found, nor proven
not to exist. }

\question{Aha. These are the famous \emph{unicorn theories}, right?}
\answer{Yes. They were named that way because such theories were never seen, and were conjectured not to exist.
We are currently working on resolving this conjecture, using other techniques.
}

\question{And in the current paper, are you able to determine all the remaining combinations?}
\answer{Almost, except for two, that we also conjecture to be impossible (and feel that a proof would also lie beyond the scope and techniques of the present analysis).
}

\question{So I guess now you will be back to the paper and present the new conjectures, as well as the main result?}
\answer{Yes, if you do not mind.}\vspace{-.5em}\\--- 

We first make the following conjecture regarding two combinations:

\newtheorem{conjecture}{Conjecture}
\Crefname{conjecture}{Conjecture}{Conjecture}

\smallskip
\begin{definition}~
A {\em unicorn $2.0$} theory is
strongly finitely witnessable with no computable minimal model function.
%
A {\em unicorn $3.0$}
theory is
polite (smooth, and finitely witnessable) and shiny (smooth, stably finite, and has a computable minimal model function), but is not strongly polite (smooth, and strongly finitely witnessable).

\end{definition}

\begin{conjecture}
\label{conj:nouni23}
There are no
Unicorn $2.0$ and Unicorn $3.0$ theories.
\end{conjecture}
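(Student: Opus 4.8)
As \Cref{conj:nouni23} is left open, what follows is only the line of attack we would pursue, together with the point at which it stalls; the two unicorns are handled separately. For Unicorn $2.0$ the plan is to show that strong finite witnessability already forces a computable minimal model function. Fix a strongly finitely witnessable $\T$ with (strong) witness $\wit$ and a quantifier-free, $\T$-satisfiable $\phi$. By clause $(i)$, every $\T$-model of $\phi$ expands with the same domains to a $\T$-model of $\wit(\phi)$, and conversely every $\T$-model of $\wit(\phi)$ is a $\T$-model of $\phi$, so $\phi$ and $\wit(\phi)$ attain the same tuples of sort-cardinalities; moreover, for each $\s\in S$ the set $V_{\s}=\vars_{\s}(\wit(\phi))$ is nonempty, since otherwise clause $(ii')$ would demand an empty domain. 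Writing $V=\vars(\wit(\phi))$, any $\T$-model of $\wit(\phi)$ induces an arrangement $\delta_{V}$ of $V$ that it satisfies, so clause $(ii')$ hands back a $\T$-model of $\wit(\phi)$ in which every $\s\in S$ is interpreted as $V_{\s}$; that model has $|\s^{\A}|$ equal to the number of $\delta_{V}$-classes of sort $\s$, hence at most $|V_{\s}|$ and no larger than in the original model. Combined with \Cref{alternative definition}, this identifies $\minmod_{\T,S}(\phi)$ with the set of $\leq$-minimal tuples $(k_{\s})_{\s\in S}$ for which some arrangement of $V$ with exactly $k_{\s}$ classes in each sort $\s$ makes $\wit(\phi)\wedge\delta_{V}$ $\T$-satisfiable; as $V$ is finite, this reduces to finitely many $\T$-satisfiability tests on such witnessed-and-arranged formulas, followed by a trivial extraction of the minimal tuples.

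The obstacle is precisely those tests. By clause $(ii')$, $\wit(\phi)\wedge\delta_{V}$ is $\T$-satisfiable exactly when it has a $\T$-model of size at most $|V_{\s}|$ in every $\s\in S$, so the positive side is semi-decidable; but certifying the negative side amounts to checking that no such bounded finite structure is a $\T$-model, which need not be effective for an arbitrarily presented axiomatization $\ax{\T}$. For decidable $\T$ the argument above is already complete, so what is genuinely at stake is the undecidable case, and bridging it -- by arguing that strong finite witnessability tames the theory far enough, or by a computation of $\minmod_{\T,S}$ that avoids the tests altogether -- is the heart of this half of the conjecture.

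For Unicorn $3.0$, let $\T$ be smooth, finitely witnessable with witness $\wit$, stably finite, and equipped with a computable $\minmod_{\T,S}$; the goal is to build a \emph{strong} witness $\wit'$. The plan is to adapt the construction behind the implication from shininess to strong politeness, which goes through unconditionally for one sort: on input $\phi$, compute $\wit(\phi)$, use $\minmod_{\T,S}$ on $\wit(\phi)$ conjoined with the relevant arrangements to read off, for each $\s\in S$, how many fresh variables of sort $\s$ suffice to exhaust a minimal model, adjoin exactly those, and lean on smoothness to reach any cardinality forced by an external arrangement and on stable finiteness to stay finite -- computability of $\minmod_{\T,S}$ keeping $\wit'$ computable -- and then verify clauses $(i)$ and $(ii')$ for $\wit'$. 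The main obstacle is that in the many-sorted setting the domain-exhausting step can genuinely fail, which is exactly why shininess and strong politeness are known to coincide there only under extra hypotheses; the failure is driven by cross-sort cardinality constraints that forbid pinning one sort down while another is fixed by an arrangement. The conjecture asserts that also demanding finite witnessability -- hence a witness to start from, and the finite model property -- rules out such behaviour, and making this precise is the crux; we do not presently see how, which, as for the original unicorns, is why the statement is only conjectured. Were it false, a refutation would presumably encode a non-computable function in the spirit of \cite{CADE,FroCoS}, delicately balanced so as to keep a witness while defeating every candidate minimal model algorithm.
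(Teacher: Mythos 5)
This statement is a \emph{conjecture}, not a theorem: the paper explicitly leaves it open and offers no proof (the authors say a proof ``would also lie beyond the scope and techniques of the present analysis''), so there is nothing in the paper to compare your argument against. Your proposal, to its credit, does not claim to close the gap either; it is a strategy sketch that stalls at clearly identified points, which is the honest and correct posture here.

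As a strategy sketch, your partial reasoning is sound and consistent with what the paper does say. For Unicorn $2.0$, your reduction of $\minmod_{\T,S}(\phi)$ to finitely many $\T$-satisfiability tests on $\wit(\phi)\wedge\delta_{V}$ is correct (the cardinality spectra of $\phi$ and $\wit(\phi)$ coincide by clause $(i)$, and clause $(ii')$ collapses any model onto an arrangement-sized one, so the minimal tuples of $\textbf{Card}_{\T,S}(\phi)$ are exactly the minimal arrangement-realizable tuples), and your observation that the argument goes through for decidable $\T$ matches the paper's own remark after the conjecture that smooth unicorn $2.0$ theories cannot have a decidable quantifier-free satisfiability problem. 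The residual obstacle you name --- certifying unsatisfiability of the bounded instances for an arbitrarily axiomatized $\T$ --- is indeed where the difficulty lives. For Unicorn $3.0$, your diagnosis that the known shiny-implies-strongly-polite argument needs extra hypotheses in the many-sorted case, and that cross-sort cardinality constraints are what break the domain-exhausting step, is likewise an accurate account of why the question is open. None of this constitutes a proof, but it does not pretend to, and it contains no false steps that I can see.
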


Notice that, according to \cite{CasalRasga,CasalRasga2}, unicorn $2.0$ theories that are smooth cannot have a decidable quantifier-free satisfiability problem; the same, however, is not necessarily true for the non-smooth case, or for unicorn $3.0$ theories. 

\renewcommand{\arraystretch}{1}
\begin{table}[t]
\centering
\adjustbox{scale=0.8,center}{\begin{tabular}{l|P{6.5cm}|P{6.5cm}}
& One-sorted & General\\
\hline
\multirow{6}{*}{\raisebox{0.2\height }{\rotatebox[origin=c]{90}{Empty}}} & 
\vspace{.01em}
$\color{red} \overline{\stainf}+\overline{\finwit}$ & 
\vspace{.01em}
$\color{red} \stainf+\overline{\convex}$\\
& $\color{red} \overline{\stainf}+\overline{\strfinwit}+\convex$ & $\color{blue} \smooth+\overline{\finwit}+\finmodpro$\\
&  & $\color{blue} 
\smooth+\overline{\strfinwit}+\stafin$\\
&$\overline{\stainf}+\overline{\cmmf}$ (\Cref{-SI+ES=>CMMF})& $\color{blue}\overline{\stainf}+\convex+\finmodpro+\overline{\stafin}+\Sigma_{2}$ \\
& $\overline{\finmodpro}+\overline{\cmmf}$ (\Cref{-FMP+ES+OS=>CMMF}) & $\smooth+\overline{\cmmf}$ (\Cref{SM+ES=>CMMF}) \\
&& $\overline{\stainf}+\convex+\overline{\finmodpro}+\overline{\cmmf}+\Sigma_{2}$ (\Cref{CV+-SI+-FMP+-CMMF})\\\hline
\multirow{6}{*}{\raisebox{0.2\height }{\rotatebox[origin=c]{90}{General}}} & 
\vspace{.01em}
$\color{red} \smooth+\finwit+\overline{\strfinwit}$ & 
\vspace{.01em}
$\color{red} \smooth+\overline{\stainf}$\\
&& $\color{red}\finwit+\overline{\strfinwit}$ \\
& $\color{red}\stainf+\overline{\smooth}+\strfinwit$ & $\color{blue}\overline{\finmodpro}+\stafin$\\
&& $\color{blue}\finwit+\overline{\finmodpro}$ \\
& $\color{blue} \finmodpro+\overline{\stafin}$ & $\color{blue}\strfinwit+\overline{\stafin}$\\
&& $\overline{\finwit}+\finmodpro+\cmmf$ (\Cref{CMMF+FM=>FW})\\
\end{tabular}}
\caption{
Impossible {\bf\color{red}red} combinations
were proven in~\cite{CADE} and
{\bf\color{blue}blue} 
in \cite{FroCoS}.
{\bf Black} combinations are proven in the current paper. 
}\label{table of impossibilities}
\end{table}
\renewcommand{\arraystretch}{1.0}

We may now state the following theorem, which refers to
\Cref{table of impossibilities}. 
We use abbreviated names of the properties: $\stainf$ for stably infinite, $\smooth$ for smooth, $\finwit$ for finitely witnessable, $\strfinwit$ for strongly finitely witnessable, $\convex$ for convex, $\finmodpro$ for finite model property, $\stafin$ for stably finite, and $\cmmf$ for computability of a minimal model function.
A line $\overline{X}$ over a property $X$ indicates its negation.
The table lists all the impossible combination of properties
found in~\cite{CADE} ({\bf\color{red}red}), in~\cite{FroCoS} ({\bf\color{blue}blue}), and the current paper ({\bf black}).
 A list of properties (or their negations) separated by plus signs indicates
 that the combination is impossible. 
 It is partitioned according to the complexity of the signatures:
 in the first column appear combinations
 that are impossible in one-sorted signatures, while
 the results of the second column hold generally, regardless
 of the number of sorts.
 Similarly, the combinations of the first row are only 
 proved for empty signatures, while in the second row they are general.
For example, the first line in the top left square
means that there are no theories over an empty one-sorted signature that are neither stably infinite
nor finitely witnessable. This was proved in \cite{CADE}.
Two entries in the table include $\Sigma_{2}$,
which means that the corresponding combination
is only impossible in $\Sigma_{2}$-theories.

With Conjecture \ref{conj:nouni23} and \Cref{table of impossibilities} in place, we can now state the following theorem, which summarizes the resutls from
\cite{CADE,FroCoS} and the present paper.

\begin{restatable}{theorem}{mainresult}\label{mainresult}
    Disregarding unicorn, unicorn $2.0$ and unicorn $3.0$ theories, a combination of properties 
    is impossible if, and only if, it occurs in \Cref{table of impossibilities}.
\end{restatable}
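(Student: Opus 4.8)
\textbf{Proof plan for \Cref{mainresult}.}

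The plan is to split the statement into its two directions and to observe that almost all of the work has already been done, so that the proof is essentially a bookkeeping argument that organizes the cited results. For the ``only if'' direction --- every impossible combination appears in \Cref{table of impossibilities} --- I would argue contrapositively: take any combination of the eight properties (and their negations, together with a choice of signature class: empty vs.\ nonempty, one-sorted vs.\ many-sorted) that does \emph{not} occur in the table, and exhibit a witnessing theory. This is exactly what \Cref{sec:examples} is meant to supply: for each such combination I would point to the explicit theory constructed there (and, for the combinations already settled, to the constructions in \cite{CADE,FroCoS}), checking that it realizes the prescribed truth values of all eight properties. The only combinations left unaccounted for by such witnesses are the unicorn, unicorn~$2.0$, and unicorn~$3.0$ families, which is precisely why they are excluded in the statement; so the enumeration of examples must be exhaustive modulo those three families.

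For the ``if'' direction --- every combination listed in \Cref{table of impossibilities} really is impossible --- I would go entry by entry. The red entries are discharged by citing the corresponding impossibility proofs in \cite{CADE}, the blue ones by citing \cite{FroCoS}, and the black ones by appealing to the new lemmas referenced in the table itself: \Cref{-SI+ES=>CMMF}, \Cref{-FMP+ES+OS=>CMMF}, \Cref{SM+ES=>CMMF}, \Cref{CV+-SI+-FMP+-CMMF}, and \Cref{CMMF+FM=>FW}, proved in \Cref{relationships}. Here one must be careful that each table entry is read with the correct quantifier over signatures: the two entries tagged $\Sigma_{2}$ assert impossibility only for that specific two-sorted empty signature, the first-column entries assert impossibility for one-sorted signatures, and the second-column entries assert it for arbitrary signatures; the proof obligations in \Cref{relationships} are stated accordingly. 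Since a combination is ``impossible'' as soon as it contains, as a sub-combination, any listed impossible combination (adding more constraints can only shrink the class of theories), it suffices that the listed entries are impossible; no minimality of the list is claimed, so there is nothing further to verify on this side beyond the per-entry citations.

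The genuine content, and the main obstacle, is therefore \emph{completeness of the example list} in \Cref{sec:examples}: one must verify that every one of the (up to) $1{,}024$ combinations not appearing in \Cref{table of impossibilities}, and not of unicorn type, is actually witnessed. In practice this is made tractable by the logical dependencies among the properties --- e.g.\ smoothness implies stable infiniteness, strong finite witnessability implies finite witnessability, stable finiteness implies the finite model property, and (by \Cref{CMMF+FMP}) a computable minimal model function interacts tightly with the finite model property and stable finiteness --- which collapse the naive count drastically, and by reusing a small number of parametric constructions (including, as Quisani predicts, theories defined via noncomputable functions such as the Busy Beaver function) that can be tuned to flip individual properties. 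I would present this as a case analysis driven by the value of $\cmmf$ and by the signature class, cross-referencing each surviving case to its example; the subtlety to watch is that some combinations are realizable only over empty signatures, or only over one-sorted ones, matching the row/column structure of the table, so the examples must be placed in the correct cell and the impossibility entries in the other cells must be invoked to rule out the rest.
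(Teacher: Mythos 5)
Your proposal is correct and follows essentially the same route as the paper: the ``if'' direction is discharged entry by entry by citing \cite{CADE,FroCoS} for the red and blue combinations and \Cref{CMMF+FM=>FW,SM+ES=>CMMF,-SI+ES=>CMMF,CV+-SI+-FMP+-CMMF,-FMP+ES+OS=>CMMF} for the black ones, while the ``only if'' direction is an exhaustive enumeration of witnessing theories (reused examples from prior work with their minimal model functions newly analyzed, plus the new Busy Beaver, $g$-based, and operator-derived theories), modulo the three unicorn families. Your observations about reading each entry with the correct signature quantifier and about the completeness of the example list being the real workload match how the paper organizes its appendix.
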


The ``if" part of \Cref{mainresult} is proved
by considering each combination that appears in the table separately.
The combinations that do not involve
the computability of a minimal model function were proven to be impossible
in~\cite{CADE,FroCoS}.
In \Cref{relationships}, we
consider the combinations that include this property.
In particular, 
\Cref{CMMF+FM=>FW,SM+ES=>CMMF,-SI+ES=>CMMF,CV+-SI+-FMP+-CMMF,-FMP+ES+OS=>CMMF}
include the precise formulations of these results.
The ``only if" part is proved by providing
examples for all combinations not mentioned in \Cref{table of impossibilities}.
Examples 
without conmputability of a minimal model function
were given in~\cite{CADE,FroCoS}.
In \Cref{sec:examples}, we determine this property
for each of the examples from \cite{CADE,FroCoS}, and also provide
new examples for the remaining combinations.

\bigskip
\toclesslab\section{Proof of \Cref{mainresult}: The Impossible Cases}{relationships}
In this section, we prove the impossibilty of some combinations of
the computability of a minimal model function with other properties that 
are related to theory combination.
The most general result is not restricted to any type of signature,
and is presented in \Cref{without}.
The other results, in \Cref{with}, hold only for empty signatures.
These results, along with results of previous work on the subject of combination of
properties, are summarized in \Cref{sec:sum}.

\medskip
\toclesslab\subsection{General Signatures}{without}

The following theorem holds for any signature,
and states that the computability of a minimal model function,
together with the finite model property,
 imply finite witnessability.


\smallskip
\begin{restatable}{theorem}{CMMFplusFMimpliesFW}\label{CMMF+FM=>FW}
    If $\T$ has a computable minimal model function and the finite model property with respect to a finite set $S$, then $\T$ is finitely witnessable with respect to $S$.
\end{restatable}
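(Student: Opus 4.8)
The plan is to construct an explicit witness function directly from the computable minimal model function. Given a quantifier-free formula $\phi$, the idea is that the finite model property guarantees $\phi$ has a finite model whenever it is $\T$-satisfiable, and the minimal model function tells us \emph{exactly} which finite cardinalities $(n_{\s})_{\s\in S}$ are realized by minimal models. For each such tuple in $\minmod_{\T,S}(\phi)$ with all entries finite, I would introduce fresh variables $x^{\s}_{1},\ldots,x^{\s}_{n_{\s}}$ for each $\s \in S$, and form the conjunction of $\phi$ with (a) the cardinality constraint $\psi_{=n_{\s}}^{\s}$ built using these variables (or rather its quantifier-free ``witnessing'' part, namely $\NNEQ{x^{\s}}$ together with a clause forcing every element to equal one of the $x^{\s}_i$), so that the new variables are forced to enumerate the whole domain of sort $\s$. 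Then $\wit(\phi)$ is the disjunction over all tuples in $\minmod_{\T,S}(\phi)$ of these enriched formulas. Computability of $\wit$ follows from computability of $\minmod_{\T,S}$; note that on $\T$-unsatisfiable $\phi$ we may be forced to output $\phi$ itself (or $\bot$), which is fine since condition $(i)$ is about $\T$-equivalence.

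Next I would verify the two defining conditions of a witness. For condition $(i)$, $\T$-equivalence of $\phi$ with $\exists \overarrow{x}.\,\wit(\phi)$: if $\phi$ is $\T$-unsatisfiable this is trivial; if $\phi$ is $\T$-satisfiable, then by the finite model property and \Cref{CMMF+FMP} (item 1) there is a tuple $(n_{\s})_{\s\in S}\in\minmod_{\T,S}(\phi)$ with all $n_{\s}$ finite, and \Cref{alternative definition} / the definition of $\minmod$ guarantees a $\T$-model $\A$ of $\phi$ with $|\s^{\A}| = n_{\s}$; we can interpret the fresh variables $x^{\s}_i$ as an enumeration of $\s^{\A}$, which makes the corresponding disjunct true, so $\A \vDash \exists\overarrow{x}.\,\wit(\phi)$. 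Conversely any model of $\exists\overarrow{x}.\,\wit(\phi)$ is a model of $\phi$ since each disjunct entails $\phi$. For condition $(ii)$, suppose $\wit(\phi)$ is $\T$-satisfiable, say a $\T$-model $\A$ satisfies the disjunct corresponding to a tuple $(n_{\s})_{\s\in S}$; then the cardinality clauses force $\s^{\A} = \{x^{\s}_1,\ldots,x^{\s}_{n_{\s}}\}^{\A} = \vars_{\s}(\wit(\phi))^{\A}$ — here one must be slightly careful that $\vars_{\s}(\wit(\phi))$ includes variables from the other disjuncts too, but since those extra variables can always be interpreted inside the (already saturated) domain, and since we can always re-choose $\A$ so that they fall among the $x^{\s}_i$, the required equality $\s^{\A} = \vars_{\s}(\wit(\phi))^{\A}$ holds. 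This last point is the one subtlety: I would handle it by observing that from any model of a disjunct one builds a model in which \emph{all} of $\vars_{\s}(\wit(\phi))$ are interpreted within the finite domain enumerated by that disjunct's variables.

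I expect the main obstacle to be precisely this bookkeeping with the fresh variables across the different disjuncts — ensuring the $\s^{\A} = \vars_{\s}(\wit(\phi))^{\A}$ condition is met \emph{simultaneously} for the variables of all disjuncts, not just the active one. A clean way around it is to make the disjuncts share a common pool of variables: for each sort $\s$, let $N_{\s}$ be the maximum over all tuples in $\minmod_{\T,S}(\phi)$ of the $\s$-component (finite since $\minmod_{\T,S}(\phi)$ is finite by \Cref{unique and finite} and all relevant components are finite by \Cref{CMMF+FMP}), use a single block of $N_{\s}$ fresh variables $x^{\s}_1,\ldots,x^{\s}_{N_{\s}}$, and in the disjunct for a tuple $(n_{\s})_{\s\in S}$ assert $\psi_{=n_{\s}}^{\s}$-style constraints on the first $n_{\s}$ of them while identifying the remaining $x^{\s}_i$ with $x^{\s}_1$. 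Then $\vars_{\s}(\wit(\phi))$ is exactly $\{x^{\s}_1,\ldots,x^{\s}_{N_{\s}}\} \cup \vars_{\s}(\phi)$, every model of an active disjunct has domain of sort $\s$ equal to $\{x^{\s}_1,\ldots,x^{\s}_{n_{\s}}\}^{\A}$, and the identifications force this to equal $\vars_{\s}(\wit(\phi))^{\A}$ provided $\vars_{\s}(\phi)^{\A}$ is also absorbed — which it is, since in a minimal model every element is named, and we may choose the enumeration to list the values of $\vars_{\s}(\phi)$ first. With these choices the verification of $(i)$ and $(ii)$ goes through as sketched.
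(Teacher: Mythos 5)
Your high-level strategy (read the witness off the computable $\minmod_{\T,S}$, using the finite model property to supply an all-finite minimal tuple) is the paper's strategy too, but the construction as described has a genuine gap in two places. First, the clause you need so that "the new variables are forced to enumerate the whole domain" --- the $\psi^{\s}_{\leq n_{\s}}$ half of $\psi^{\s}_{=n_{\s}}$, i.e.\ ``every element equals some $x^{\s}_{i}$'' --- is a universally quantified statement about the domain with no quantifier-free rendering, and a witness must map $\qf{\Sigma}$ into itself, so it simply cannot appear in $\wit(\phi)$. Second, if you retreat to the part that \emph{is} quantifier-free, the distinctness constraints $\bigwedge_{i<j}\neg(x^{\s}_{i}=x^{\s}_{j})$, then condition $(i)$ breaks: $\T$-equivalence means \emph{every} $\T$-model of $\phi$ must satisfy $\Exists{\overarrow{x}}\wit(\phi)$ (your verification only exhibits one such model), yet each disjunct now entails $\psi^{\s}_{\geq n_{\s}}$ for its tuple, and your disjunction ranges only over the all-finite tuples of $\minmod_{\T,S}(\phi)$. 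The finite model property guarantees that \emph{one} such tuple exists, but not that all minimal tuples are finite (that would be stable finiteness); since distinct minimal tuples are pairwise incomparable, a $\T$-model of $\phi$ realizing a minimal tuple with an infinite component lies above no all-finite minimal tuple and therefore satisfies $\phi$ but not $\Exists{\overarrow{x}}\wit(\phi)$. The theory $\Ttwo$ of the appendix with $\phi$ a tautology, whose minimal tuples are $(1,\aleph_{0})$ and $(2,2)$, is a concrete counterexample.

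The repair is to notice that condition $(ii)$ only asks for the \emph{existence} of one model of $\wit(\phi)$ whose domains are covered by the variables, so nothing needs to be forced syntactically. The paper's proof picks a single all-finite tuple $(m_{1},\ldots,m_{n})\in\minmod_{\T,S}(\phi)$ and sets $\wit(\phi)=\phi\wedge\bigwedge_{i}\bigwedge_{j=1}^{m_{i}}x^{i}_{j}=x^{i}_{j}$ --- a tautological conjunct, no disjunction, no cardinality constraints --- so that condition $(i)$ is immediate; condition $(ii)$ is then discharged by taking a $\T$-model of $\phi$ with exactly the cardinalities $(m_{1},\ldots,m_{n})$, which exists by the definition of $\minmod$, and reinterpreting the fresh variables as bijections onto the domains. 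Your closing observation that ``from any model of a disjunct one builds a model in which all of $\vars_{\s}(\wit(\phi))$ are interpreted within the finite domain'' is precisely this existential move; once you rely on it, the forcing clauses, the disjunction, and the shared-pool bookkeeping all become unnecessary, and dropping them is what makes the proof go through.
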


\begin{proof}[Proof sketch]
Take a quantifier-free formula $\phi$. 
As $\T$ has the finite model property, by \Cref{CMMF+FMP} there either is a minimal $\T$-interpretation $\A$ satisfying $\phi$ which has $(|\s^{\A}|)_{\s\in S}$ in $\mathbb{N}^{S}$, or $\phi$ is not $\T$-satisfiable. We can then produce $wit(\phi)$ by considering the conjunction of $\phi$ and a tautology involving $|\s^{\A}|$ many variables of sort $\s$ for a minimal $\T$-interpretation of $\phi$ in the first case, and on the second $wit(\phi)=\phi$: the resulting formula is obviously equivalent to $\phi$, can be found computably, and satisfies the witnessability property that a witness should. 
\end{proof}

\medskip
\toclesslab\subsection{Empty Signatures}{with}

We start with a superficially unexpected result: for empty signatures, smoothness implies the computability of a minimal model function. 

\smallskip
\begin{restatable}{theorem}{SMplusESimpliesCMMF}\label{SM+ES=>CMMF}
    If $\T$ is a $\Sigma_{n}$-theory smooth with respect to $\S_{\Sigma}$, then $\T$ has a computable minimal model function with respect to any $S\subseteq \S_{\Sigma}$.
\end{restatable}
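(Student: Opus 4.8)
The goal is to show that a smooth theory over an empty signature $\Sigma_n$ has a computable minimal model function with respect to any $S \subseteq \S_\Sigma$. The key observation is that over an empty signature, the only information carried by a $\T$-interpretation is the tuple of domain cardinalities, and a quantifier-free formula $\phi$ is essentially a Boolean combination of equalities between variables. So the set $\textbf{Card}_{\T,S}(\phi)$ of realizable cardinality tuples is highly structured, and smoothness forces it to be "upward closed" above its minimal points. I would first reduce $\phi$ to a disjunction of arrangements: each disjunct is an arrangement $\delta_V$ of the variables $V = \vars(\phi)$, and $\phi$ is $\T$-equivalent (as a satisfiability question, existentially closing the variables) to the disjunction of those $\delta_V$ consistent with $\phi$. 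For a single arrangement with $k_\s$ equivalence classes among the variables of sort $\s$, an interpretation satisfies $\exists \overrightarrow{x}.\,\delta_V$ iff $|\s^\A| \ge k_\s$ for each sort $\s$ appearing, with no constraint on the other sorts.

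**Main computation.** So I would proceed as follows. First, decide $\T$-satisfiability of $\phi$: since $\phi$ is (over an empty signature) equivalent to a finite disjunction of arrangements, and each arrangement's satisfiability reduces to checking whether $\T$ has a model with $|\s^\A| \ge k_\s$ for the relevant sorts — which, by the Löwenheim–Skolem theorem (\Cref{LowenheimSkolem}) and smoothness, is equivalent to $\T$ having \emph{any} model at all whose sort domains are large enough, and smoothness lets us inflate any model — I need the small-model information about $\T$ itself. Here is the crucial point: I would argue that for a smooth $\Sigma_n$-theory, the function $n \mapsto$ ("is there a $\T$-model with $|\s_i^\A| = m_i$ for all $i$") is determined by finitely much data, because $\ax{\T}$ is a set of $\Sigma_n$-sentences, and over an empty signature satisfiability of a sentence in a structure depends only on the tuple of domain sizes; moreover by compactness / Löwenheim–Skolem, the set of realizable tuples, being upward-closed under smoothness within $\N^n$, is determined by its finitely many minimal elements (by \Cref{Dickson}). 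The subtle issue is whether those minimal elements are \emph{computable} from $\ax{\T}$ — and this is where I expect to need an extra hypothesis or a careful reading of the paper's conventions: presumably "theory" here comes with a decidable or at least recursively enumerable axiomatization, or the minimal model function is only required to be computable relative to the theory's satisfiability oracle. I would check the paper's standing assumptions; most likely the intended argument is that the existence of a $\T$-model of a given finite cardinality profile is decidable because one can just search, using that over $\Sigma_n$ a finite structure is a finite object and satisfaction of the (presumably decidable, or given) axioms is checkable, combined with smoothness to reduce infinite cardinalities to the symbol $\aleph_0$.

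**Assembling $\minmod$.** Granting the ability to decide, for each tuple $(m_\s)_{\s \in S} \in \N^S$, whether there is a $\T$-interpretation realizing it together with $\phi$ (equivalently, realizing one of the consistent arrangements of $\phi$ — note that fixing the sizes on $S$ still leaves freedom on sorts outside $S$, but since the constraints from an arrangement are just lower bounds per sort and smoothness handles everything above, this too is decidable), I build $\minmod_{\T,S}(\phi)$ as follows: for a $\T$-satisfiable $\phi$, the realizable set $\textbf{Card}_{\T,S}(\phi) \subseteq \N^S$ is upward closed (by smoothness: any model can be inflated on every sort of $S$ independently), so by \Cref{alternative definition} it suffices to output its $\le$-minimal elements, of which there are finitely many by \Cref{Dickson}. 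To find them computably, note every minimal element $(m_\s)_{\s\in S}$ satisfies $m_\s \le \max_\s(k_\s^{(j)})$ over the finitely many consistent arrangements $j$ whenever $m_\s$ is finite — more precisely, each minimal tuple is bounded coordinatewise by the "arrangement bound" $B_\s := $ number of variables of sort $\s$ in $\phi$, except that $\aleph_0$ may appear in a coordinate when $\T$ forces that sort infinite; in all cases the search space is a finite subset of $(\{0,1,\dots,B_\s\}\cup\{\aleph_0\})^S$. I enumerate this finite box, test realizability of each tuple (together with $\phi$) using the decision procedure above, collect the realizable ones, and extract the $\le$-minimal elements. That finite set is exactly $\minmod_{\T,S}(\phi)$ by \Cref{alternative definition}, and the whole procedure is a computable function of $\phi$.

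**Expected obstacle.** The genuine difficulty is not combinatorial but conceptual: pinning down \emph{why} realizability of a finite cardinality profile by a $\T$-model is decidable for an arbitrary smooth empty-signature theory. Smoothness gives the upward closure and lets $\aleph_0$ stand in for all infinite cardinals, and Löwenheim–Skolem bounds everything by $\aleph_0$; but going from a (possibly merely given/recursive) axiomatization to an actual decision procedure is the step that needs the paper's conventions on what a "theory" is allowed to be. I would present the proof assuming whatever the excerpt's standing assumption is (it is natural that over an empty signature one may assume the axioms are, say, a computable set of cardinality constraints, since that is all an empty-signature sentence can express up to equivalence), and flag that the heavy lifting is exactly converting smoothness plus Löwenheim–Skolem into the finiteness of the relevant search space via \Cref{Dickson}.
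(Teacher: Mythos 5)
Your overall combinatorial picture is right---arrangements, upward closure from smoothness, finiteness of minimal elements from \Cref{Dickson}---but the point you flag as the ``expected obstacle'' is a genuine gap, and the paper resolves it without any additional hypothesis on $\ax{\T}$. The resolution is non-uniformity: one does not need to \emph{decide}, from the axiomatization, which cardinality profiles $\T$ realizes. By \Cref{LowenheimSkolem} and \Cref{Dickson}, the set $A=\{(|\s_1^{\A}|,\ldots,|\s_n^{\A}|): \A \text{ a $\T$-interpretation}\}\cap\N^{n}$ has finitely many minimal elements $(p^{i}_{1},\ldots,p^{i}_{n})$, $1\le i\le m$, and this finite list depends only on $\T$, not on $\phi$. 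A computable minimal model function may simply have this list hard-coded as a constant: the paper's algorithm computes, from $\phi$, the class-count tuples $(q^{j}_{1},\ldots,q^{j}_{n})$ of the arrangements equivalent to $\phi$ (decidable via equality logic, as you say), and outputs the minimal elements of $\{(\max\{p^{i}_{1},q^{j}_{1}\},\ldots,\max\{p^{i}_{n},q^{j}_{n}\}) : i,j\}$. Smoothness is what makes this correct---any tuple coordinatewise above some $(p^{i}_{1},\ldots,p^{i}_{n})$ and some $(q^{j}_{1},\ldots,q^{j}_{n})$ is realized by a $\T$-interpretation satisfying $\phi$---and computability is immediate because the only non-effective ingredient is a fixed finite object. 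So no decidability or recursive-enumerability assumption on the axiomatization is needed, and none is made in the paper.

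A second, related defect: your search box $(\{0,\ldots,B_{\s}\}\cup\{\aleph_{0}\})^{S}$, with $B_{\s}$ the number of variables of sort $\s$ in $\phi$, does not contain all minimal elements of $\textbf{Card}_{\T,S}(\phi)$. The theory itself can force finite cardinalities larger than the number of variables (for instance $\Tgeqn$ with $\phi$ a single tautological equation has minimal model size $n$, not $1$). The correct bound is the coordinatewise maximum of the arrangement class counts and the theory's own minimal tuples---which again requires having those tuples available, i.e.\ exactly the hard-coded finite data above. Once that is in place, your assembly of $\minmod_{\T,S}$ via \Cref{alternative definition} goes through (the paper proves the statement for $S=\S_{\Sigma}$ and derives the general case from a separate restriction lemma, but that is a cosmetic difference from your direct treatment of arbitrary $S$).
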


Unexpected as smoothness is often considered a tool to increase models, while the computability of a minimal model function should decrease them. 
Only superficially as the proof is rather straightforward, although long,
with the following idea and crucial use of Dickson's lemma (\Cref{Dickson}):

\begin{proof}[Proof sketch]
On empty signatures, we can shrink interpretations until we reach minimal ones. 
Computability comes from there being only finitely many such minimal interpretations.
\end{proof}

In \Cref{SM+ES=>CMMF}, one actually needs smoothness w.r.t. all sorts to prove the computability of a minimal model function with respect to any set of sorts, as shown in the next example:

\begin{example}\label{counterexample 1}
   Take any increasing non-computable function $h:\mathbb{N}\setminus\{0\}\rightarrow\mathbb{N}\setminus\{0\}$ and consider the $\Sigma_{2}$-theory $\T$ with axiomatization
   $\{\psi^{\s_{2}}_{=n}\rightarrow \psi^{\s_{1}}_{\geq h(n)} : n\in\mathbb{N}\setminus\{0\}\}$.
   It is smooth with respect to $\{\s_{1}\}$, but not $\{\s_{1},\s_{2}\}$, and so \Cref{SM+ES=>CMMF} does not apply. 
   And indeed,
   due to the fact that $h$ is increasing,
   $h(n)=\minmod_{\T,\{\s_{1}\}}(\NNEQ{u})$,
   for $u_{i}$ of sort $\s_{2}$, and it is clear that, if $\T$ has a computable minimal model function with respect to $\{\s_{1}\}$, $h$ should be itself computable, leading to a contradiction.
\end{example}

The next theorem shows that a $\Sigma_{n}$-theory which is not stably infinite
w.r.t. none of its sorts (as singletons),
must have a computable minimal model function w.r.t. any subset of the sorts.

\begin{restatable}{theorem}{minusSIplusESimpliesCMMF}
\label{-SI+ES=>CMMF}
    If $\T$ is a $\Sigma_{n}$-theory that is not stably infinite with respect to 
    any $\s\in \S_{\Sigma_{n}}$, then $\T$ has a computable minimal model function with respect to any $S\subseteq \S_{\Sigma_{n}}$.
\end{restatable}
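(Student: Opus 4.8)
The plan is to reduce the problem to pure combinatorics of cardinality spectra and then use compactness to make that combinatorics finite. Over an empty signature a $\T$-interpretation is, for the purposes of satisfying quantifier-free formulas, determined by its tuple of sort-sizes, so set $\C_{\T}=\{(|\s^{\A}|)_{\s\in\S_{\Sigma_n}} : \A\text{ a }\T\text{-interpretation}\}\subseteq\N^{n}$. The first step is to record the syntactic fact that for every quantifier-free $\phi$ one has $\textbf{Card}_{\T,\S_{\Sigma_n}}(\phi)=\C_{\T}\cap U_{\phi}$, where $U_{\phi}$ is the union, over the complete arrangements $\delta$ of $\vars(\phi)$ with $\delta\models\phi$, of the upward cone generated by the tuple whose $\s$-entry is the number of $\s$-classes of $\delta$; this $U_{\phi}$ is an upward-closed subset of $\N^{n}$ that is computable from $\phi$. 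Combining this with \Cref{alternative definition} and \Cref{Dickson}, for $\T$-satisfiable $\phi$ the value $\minmod_{\T,S}(\phi)$ is the finite set of $\leq$-minimal elements of the projection of $\C_{\T}\cap U_{\phi}$ onto the coordinates in $S$, so it suffices to produce, for the fixed theory $\T$, an algorithm computing that map.

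The second step extracts the content of the hypothesis. For each $\s\in\S_{\Sigma_n}$, since $\T$ is not stably infinite with respect to $\{\s\}$ there is a $\T$-satisfiable quantifier-free formula -- which may be taken to be a single complete arrangement $\delta_{\s}$, with $\overarrow{k}^{(\s)}$ denoting its tuple of class-counts -- every $\T$-model of which has $\s$ finite. Applying \Cref{compactness} to $\ax{\T}\cup\{\delta_{\s}\}\cup\{\psi^{\s}_{\geq k} : k\in\mathbb{N}\}$ upgrades ``finite'' to a uniform bound: there is $N_{\s}\in\mathbb{N}$ such that every $c\in\C_{\T}$ with $c\geq\overarrow{k}^{(\s)}$ satisfies $c_{\s}\leq N_{\s}$. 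Let $M$ be the maximum of all the $N_{\s}$ together with all the entries of all the $\overarrow{k}^{(\s)}$.

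The crux is to deduce from this that $\C_{\T}$ has a finite description controlled by $M$: that membership of a tuple $c$ in $\C_{\T}$ is a Boolean combination of cardinality constraints on the sorts with all thresholds at most $M$, so that $\C_{\T}$ is determined by the finite table $\mathcal{D}=\C_{\T}\cap(\{0,\dots,M\}\cup\{\aleph_0\})^{n}$. Granting this, the theorem follows: $\mathcal{D}$ is a fixed finite object, so it may be hard-coded into an algorithm which, on input $\phi$, builds $U_{\phi}$, intersects it with the $\mathcal{D}$-description of $\C_{\T}$, projects onto $S$, and outputs the finitely many minimal elements -- a computable minimal model function with respect to $S$. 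This is the same ``shrink to minimal interpretations, of which there are only finitely many'' pattern used for \Cref{SM+ES=>CMMF}, now powered by the failure of stable infiniteness instead of by smoothness.

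I expect this finite-description step to be the main obstacle. For a one-sorted $\Sigma_n$ it is immediate and already suffices for the use of the result in \Cref{table of impossibilities}: not being stably infinite forces $\aleph_0\notin\C_{\T}$, and then \Cref{compactness} makes $\C_{\T}\subseteq\{0,\dots,N_{\s_1}\}$ outright finite. For $n\geq2$ the real work is to argue that the $n$ separate bounds $N_{\s}$, together with the way each $\delta_{\s}$ forces an occurrence of $\aleph_0$ in the $\s$-coordinate of a tuple in $\C_{\T}$ to be accompanied by genuinely small values in other coordinates, leave $\C_{\T}$ no room for unbounded behaviour in its lower-dimensional slices; \Cref{Dickson} is then what ensures that after this analysis only finitely many minimal tuples survive. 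I anticipate this last part to be the delicate, case-heavy portion of the proof, organised according to which subsets of sorts can be simultaneously infinite in a $\T$-interpretation.
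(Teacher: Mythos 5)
You single out the ``finite description of $\C_{\T}$'' step as the crux, and rightly so --- in fact your caution is even more warranted than you suggest. The paper's own proof asserts flatly that $\textbf{Card}_{\T}=\C_{\T}\cap\N^{n}$ is finite, by the following chain: if not, some coordinate projection $\textbf{Card}_{\T}^{i}$ is infinite; compactness then yields a $\T$-interpretation with $\s_{i}$ infinite; ``but then $\T$ is stably infinite with respect to $\{\s_{i}\}$, since the signature $\Sigma$ is empty.'' That last inference is the gap, and it is genuinely invalid once $n\geq 2$: a single $\T$-model with $\s_{i}$ infinite does not give stable infiniteness with respect to $\{\s_{i}\}$, because a $\T$-satisfiable quantifier-free formula may impose lower bounds on \emph{other} sorts which, via the axioms, exclude every $\s_{i}$-infinite model. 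Your worry about ``unbounded behaviour in lower-dimensional slices'' is precisely the obstruction.

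In fact the statement as written is false for $n\geq 2$, so the ``delicate, case-heavy'' analysis you anticipate does not exist. Take any unbounded non-recursive $X\subseteq\mathbb{N}\setminus\{0\}$ and let $\T$ be the $\Sigma_{2}$-theory axiomatized by $\psi^{\s_{1}}_{\leq 2}$, by $\psi^{\s_{1}}_{=2}\rightarrow\psi^{\s_{2}}_{\leq 1}$, and by $\psi^{\s_{1}}_{\leq 1}\rightarrow\neg\psi^{\s_{2}}_{=m}$ for every $m\in\mathbb{N}\setminus X$; its spectrum in $\N^{2}$ is $\{(1,m):m\in X\}\cup\{(1,\aleph_{0})\}\cup\{(2,1)\}$. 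This $\T$ is not stably infinite with respect to $\{\s_{1}\}$ (no model has $|\s_{1}^{\A}|>2$), nor with respect to $\{\s_{2}\}$ (the $\T$-satisfiable formula $\neq(x_{1},x_{2})$, with $x_{1},x_{2}$ of sort $\s_{1}$, is satisfied only at cardinalities $(2,1)$). Yet for $k\geq 2$ and $u_{1},\ldots,u_{k}$ of sort $\s_{2}$ one has $\minmod_{\T,\{\s_{1},\s_{2}\}}(\neq(u_{1},\ldots,u_{k}))=\{(1,\min\{m\in X:m\geq k\})\}$, and the map $k\mapsto\min\{m\in X:m\geq k\}$ decides $X$, so no minimal model function for $\T$ is computable. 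The correct and sufficient content of the result is the $\Sigma_{1}$ case, which you already observe is immediate and which is all that the one-sorted empty entry of \Cref{table of impossibilities} actually uses.
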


\begin{proof}[Proof sketch]
If $\T$ is not stably infinite w.r.t. any $\s\in S$, that means that there is only a finite set of possible finite cardinalities for $\s^{\A}$, for a $\T$-interpretation $\A$ that satisfies a formula $\phi$ and $\s\in S$; indeed, if there were infinitely many possible cardinalities, the pigeonhole principle would guarantee that there are infinite possible values $|\s_{0}^{\A}|$ for some $\s_{0}\in S$, and by \Cref{compactness} 
we would get a $\T$-interpretation $\A$ with $\s_{0}^{\A}$ infinite, 
contradicting the fact that $\T$ is not stably infinite w.r.t. $\s_{0}$,
as the fact that $\Sigma$ is empty implies that interpretations are determined by their cardinalities.  
A minimal model function w.r.t. $S$ can be calculated on $\phi$ by simply checking which of these finitely many interpretations satisfy $\phi$, what is of course computable. 
\end{proof}

The following two results are more restrictive than the previous two, not only demanding the signatures to be empty, but also with low numbers of sorts. 
%

\begin{restatable}{theorem}{minusFMPplusESplusOSimpliesCMMF}\label{-FMP+ES+OS=>CMMF}
    A $\Sigma_{1}$-theory without the finite model property \wrt ~its only sort has a computable minimal model function, also \wrt ~its only sort.
\end{restatable}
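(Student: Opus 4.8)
The plan is to exploit the fact that over the empty one-sorted signature an interpretation carries no structure beyond the cardinality of its domain. First I would record a normal form for $\Sigma_1$-theories: since any two infinite sets are elementarily equivalent in the empty signature (the empty-language theory of an infinite set is $\kappa$-categorical for every infinite $\kappa$ and has no finite models, hence complete by the {\L}o{\'s}--Vaught test), and since by \Cref{LowenheimSkolem} a satisfiable set of sentences has a countable model, the class of models of $\T$ is determined entirely by the set $M_{\T} := \{\kappa \in \mathbb{N}_{\geq 1}\cup\{\aleph_0\} : \T \text{ has a model of cardinality } \kappa\}$, with all infinite cardinalities collapsed to $\aleph_0$. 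That is, a set $D$ is a $\T$-interpretation if and only if $|D| \in M_{\T}$ (reading $|D| = \aleph_0$ for every infinite $D$).

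Next I would analyze quantifier-free formulas. If $\phi \in \qf{\Sigma_1}$ has $k = |\vars(\phi)|$ variables, whether an assignment satisfies $\phi$ depends only on the partition of $\vars(\phi)$ it induces, and a partition with $m$ blocks is realizable in a domain of size $\kappa$ iff $\kappa \geq m$. Hence, letting $m_{\phi} \in \{1,\dots,k\}$ be the least number of blocks of a partition of $\vars(\phi)$ under which $\phi$ evaluates to true (with $m_{\phi} = \infty$ if there is none), $\phi$ is satisfiable in a domain of cardinality $\kappa$ iff $\kappa \geq m_{\phi}$; and $m_{\phi}$ is computable from $\phi$, since there are only finitely many partitions of $\vars(\phi)$ and each is evaluated propositionally. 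Combining this with the normal form, the set of cardinalities of $\T$-interpretations satisfying $\phi$ is exactly $M_{\T}\cap\{\kappa : \kappa\geq m_{\phi}\}$; when $\phi$ is $\T$-satisfiable this is a nonempty subset of the well-order $\mathbb{N}_{\geq 1}\cup\{\aleph_0\}$, so it has a least element, and since the signature is one-sorted, $\minmod_{\T,S}(\phi)$ (with $S$ the only sort, as in \Cref{preliminaries}) is the singleton consisting of that least element.

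The decisive step is to use the hypothesis. Since $\T$ lacks the finite model property, there is a $\T$-satisfiable $\phi_{0}$ with no finite $\T$-interpretation satisfying it; by the previous paragraph $M_{\T}\cap\{\kappa : \kappa \geq m_{\phi_{0}}\} = \{\aleph_0\}$, so $F := M_{\T}\cap\mathbb{N}_{\geq 1} \subseteq \{1,\dots,m_{\phi_{0}}-1\}$ is \emph{finite}, and $\aleph_0\in M_{\T}$. As the statement only asserts existence of a computable minimal model function, not uniformity in $\T$, the finite set $F$ may be hard-coded. The witnessing algorithm is: on input $\phi$, compute $m_{\phi}$; if $m_{\phi}=\infty$, output $\emptyset$ (the formula is $\T$-unsatisfiable — indeed for such $\T$ a quantifier-free $\phi$ is $\T$-satisfiable iff $m_\phi<\infty$, since $\aleph_0\in M_\T$); otherwise output $\{\min(F\cap\{\kappa : \kappa\geq m_{\phi}\})\}$ if that set is nonempty, and $\{\aleph_0\}$ otherwise. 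On a $\T$-satisfiable input this returns $\min(M_{\T}\cap\{\kappa : \kappa\geq m_{\phi}\})$ — when $F\cap\{\kappa : \kappa\geq m_{\phi}\}$ is empty, $\T$-satisfiability forces the minimum to be $\aleph_0$ — so the algorithm is a genuine minimal model function, and it is clearly computable.

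I expect the only real content to be the observation that the failure of the finite model property is precisely what bounds, hence makes hard-codable, the set $F$ of finite cardinalities realized by $\T$; everything else — the reduction to partitions, the computability of $m_{\phi}$, the well-ordering argument for the minimum, and the one-sorted collapse of $\minmod_{\T,S}$ to a single value — is routine and essentially already in \Cref{preliminaries} and \Cref{CMMF+FMP} (whose first item links the finite model property to $\minmod_{\T,S}$ taking finite values). Without the hypothesis this genuinely fails: a $\Sigma_1$-theory with the finite model property can have $M_{\T} = N\cup\{\aleph_0\}$ for an arbitrary non-computable $N\subseteq\mathbb{N}$, and then no choice of $\minmod_{\T,S}$ is computable — which is exactly why the theorem needs the absence of the finite model property.
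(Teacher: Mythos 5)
Your proof is correct and follows essentially the same route as the paper's: both deduce from the failure of the finite model property that the set of finite cardinalities realized by $\T$ is bounded (and hence finite and hard-codable), both compute the least number of equivalence classes an assignment needs to realize $\phi$ via the decidable arrangement check, and both return the smallest realized cardinality at or above that threshold, defaulting to $\aleph_0$. The only differences are presentational — you factor out the sets $M_{\T}$ and $F$ and make the non-uniformity in $\T$ explicit, whereas the paper bounds the finite cardinalities directly via $|\vars_{\s}(\phi)^{\A}|$ for a witnessing $\phi$ and infinite model $\A$ — but the substance is identical.
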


\begin{proof}[Proof sketch]
The proof 
is similar to that of \Cref{-SI+ES=>CMMF}. 
If $\T$ is a theory over the one-sorted, empty signature, without the finite model property, it has only finitely many finite interpretations up to isomorphism; indeed, were there infinitely many of them, their cardinalities would be unbounded, and then for any quantifier-free formula $\phi$ and $\T$-interpretation $\A$ that satisfies $\phi$, any finite $\T$-interpretation $\B$ with $|\s_{1}^{\B}|\geq \vars(\phi)^{\A}$ could be modified to satisfy $\phi$, giving us the finite model property (contradiction). A minimal model function of $\phi$ can then be simply calculated by checking which of these interpretations satisfy $\phi$, what is computable.
\end{proof}

\begin{restatable}{theorem}{lostofassumptions}\label{CV+-SI+-FMP+-CMMF}
    A convex $\Sigma_{2}$-theory admits at least one of the following properties: stable infiniteness, finite model property, or a computable minimal model function, all w.r.t. $\{\s_{1},\s_{2}\}$.
\end{restatable}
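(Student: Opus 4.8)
Over the empty signature $\Sigma_{2}$ an interpretation is, up to isomorphism, just the pair of cardinalities of its two domains, and (by \Cref{LowenheimSkolem}) for questions of satisfiability and about $\minmod$ it suffices to look at pairs in $\N^{2}$. The plan is to write $M$ for the set of pairs $(|\s_{1}^{\A}|,|\s_{2}^{\A}|)\in\N^{2}$ realized by $\T$-interpretations $\A$, and $M_{1},M_{2}$ for its two projections; a quantifier-free $\phi$ is satisfiable in an interpretation of cardinality $(p,q)$ exactly when $(p,q)$ dominates one of finitely many minimal pairs computable from $\phi$ (finiteness of that set being \Cref{Dickson}). I would first restate the hypotheses in these terms: $\T$ not being stably infinite with respect to $\{\s_{1},\s_{2}\}$ says $M\neq\emptyset$ and $(\aleph_{0},\aleph_{0})\notin M$; $\T$ not having the finite model property with respect to $\{\s_{1},\s_{2}\}$ says there are finite $a,b$ with $M$ meeting $\{(p,q):p\geq a,\ q\geq b\}$ only at points having an infinite coordinate. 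The goal is then to build a computable minimal model function with respect to $\{\s_{1},\s_{2}\}$.

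The crucial step is a dichotomy forced by convexity. I would apply the convexity condition with the trivially $\T$-valid premise $x=x$ and the disjunction $\bigvee_{1\leq i<j\leq a}(x_{i}=x_{j})\vee\bigvee_{1\leq i<j\leq b}(y_{i}=y_{j})$, the $x$'s of sort $\s_{1}$ and the $y$'s of sort $\s_{2}$: its negation demands $a$ distinct elements of $\s_{1}$ and $b$ distinct of $\s_{2}$, so the whole disjunction is $\T$-valid iff $M$ misses $\{(p,q):p\geq a,\ q\geq b\}$, while a single disjunct $x_{i}=x_{j}$ is $\T$-valid iff $M_{1}\subseteq\{1\}$ and $y_{i}=y_{j}$ iff $M_{2}\subseteq\{1\}$. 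Convexity thus gives: if $M_{1}\neq\{1\}$ and $M_{2}\neq\{1\}$ then $M$ meets $\{(p,q):p\geq a,\ q\geq b\}$ for all finite $a,b$; by \Cref{compactness} applied to the axiomatization of $\T$ together with all the sentences $\psi^{\s_{1}}_{\geq n}$ and $\psi^{\s_{2}}_{\geq n}$, followed by \Cref{LowenheimSkolem}, this produces a $\T$-interpretation with both sorts of size $\aleph_{0}$, i.e.\ $(\aleph_{0},\aleph_{0})\in M$, contradicting the first hypothesis. Hence $M_{1}=\{1\}$ or $M_{2}=\{1\}$; since all three hypotheses are symmetric in the two sorts, I may assume $M_{1}=\{1\}$, so every $\T$-interpretation has a singleton first sort and $M=\{1\}\times M_{2}$.

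Now I would invoke the failure of the finite model property to pin down $M_{2}$. Since $M_{1}=\{1\}$, for each quantifier-free $\phi$ there is a threshold $b_{\phi}$ computable from $\phi$ (a brute-force search over finite structures) such that a $\T$-interpretation satisfies $\phi$ iff it has the form $(1,q)$ with $q\in M_{2}$ and $q\geq b_{\phi}$. If $M_{2}$ contained infinitely many finite cardinalities, then for every $\T$-satisfiable $\phi$ — where $b_{\phi}$ is finite — one could pick a finite $q\in M_{2}$ with $q\geq b_{\phi}$ and get a finite $\T$-model $(1,q)$ of $\phi$, contradicting the second hypothesis. So $M_{2}\cap\mathbb{N}$ is finite, hence $M_{2}$ is a fixed finite subset of $\N$. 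Any fixed finite set is a computable set, so a Turing machine may hard-code $M_{2}$ and, on input $\phi$, compute $b_{\phi}$ and output the unique $\leq$-minimal element of $\textbf{Card}_{\T,\{\s_{1},\s_{2}\}}(\phi)=\{(1,q):q\in M_{2},\ q\geq b_{\phi}\}$ (or $\emptyset$). By \Cref{alternative definition} this equals $\minmod_{\T,\{\s_{1},\s_{2}\}}(\phi)$ on $\T$-satisfiable inputs, with unconstrained values on the rest, and it is a finite set by \Cref{unique and finite}; so $\T$ has a computable minimal model function with respect to $\{\s_{1},\s_{2}\}$.

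I expect the convexity squeeze of the second paragraph — arranging the premise and disjunction so that the convexity axiom collapses everything to ``one sort is a singleton'' — to be the one genuinely clever point; the rest is routine reasoning about cardinalities and the decidability of quantifier-free satisfiability over a fixed finite empty structure. This is also exactly where the hypotheses ``two sorts'' and ``empty signature'' are used: with three or more sorts convexity only yields that \emph{some} sort is a singleton, after which non-stable-infiniteness no longer transfers to the remaining two-sorted subproblem; and with non-logical symbols the identification of interpretations with cardinality pairs fails outright — which fits with there being no counterpart of this statement elsewhere in \Cref{table of impossibilities}.
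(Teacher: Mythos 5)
Your proof is correct and takes essentially the same approach as the paper's: both use a pigeonhole disjunction under convexity together with compactness to force one sort's domain to be a singleton, and both then observe that the remaining sort's realized finite cardinalities must form a finite set (else the finite model property would hold), which can be hard-coded into a Turing machine realizing the minimal model function. The only difference is a harmless reordering --- the paper extracts the bound $(k_1,k_2)$ from compactness first and then applies convexity once with exactly that bound, whereas you apply convexity at every bound $(a,b)$ and then invoke compactness contrapositively --- so the key ideas and the structure of the argument are identical.
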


\begin{proof}[Proof sketch]
If $\T$ is not stably infinite, 
one can use \Cref{compactness} to show that
there exist $k_{1},k_{2}\in\mathbb{N}$ such that no $\T$-interpretation $\A$ has $(|\s_{1}^{\A}|,|\s_{2}^{\A}|)>(k_{1},k_{2})$. 
Then, We can take $k_{1}+1$ variables of sort $\s_{1}$ and $k_{2}+1$ 
variables of sort $\s_{2}$, and write a disjunction of those that is valid by the pigeonhole principle, meaning we must have either $|\s_{1}^{\A}|=1$ for all $\T$-interpretations, or $|\s_{2}^{\A}|=1$, so that $\T$ remains convex. W.l.o.g, we assume the latter. 
Assuming that a minimal model function of $\T$ is not computable, 
we get that there are infinitely many non-isomorphic $\T$-interpretations $\A$ with $|\s_{1}^{\A}|$ countable: otherwise we could enumerate the cardinalities of countable $\T$-interpretations as $\{(m_{1},1),\ldots,(m_{n},1),(\aleph_{0},1)\}$ and straightforwardly obtain a computable minimal model function. 
It follows that $\T$ has the finite model property, since given a quantifier-free formula $\phi$ and a $\T$-interpretation $\A$ where it is satisfied, we can find another $\T$-interpretation $\B$ with $\s_{1}^{\B}$ finite and greater than or equal to $|\vars_{\s}(\phi)^{\A}|$.
\end{proof}

\smallskip
\toclesslab\subsection{Summary}{sec:sum}

\Crefname{theorem}{Thm.}{Thms.}

\begin{figure}[t]
\centering
\begin{minipage}{0.19\textwidth}
\centering
\adjustbox{scale=0.7,center}{\begin{tikzpicture}[scale=0.7]
\def\firstcircle{(-0.75,0) coordinate (a) circle (1.2cm)}
\def\secondcircle{(0.75,0) coordinate (b)  circle (1.2cm)}
\def\firstellipse{(0,0) ellipse (0.8cm and 2.36cm)}
\draw \firstcircle;
\draw \secondcircle;
\draw \firstellipse;
\node[label={$\cmmf$}] (B) at (-1.4,-0.3) {};
\node[label={$\finmodpro$}] (B) at (1.4,-0.3) {};
\node[label={$\finwit$}] (B) at (0.0,1.0) {};
    \end{tikzpicture}}
    
    \Cref{CMMF+FM=>FW}
\end{minipage}
\begin{minipage}{0.19\textwidth}
\centering
\adjustbox{scale=0.7,center}{\begin{tikzpicture}[scale=0.7]
\def\firstcircle{(0,0) coordinate (a) circle (2.36cm)}
\def\secondcircle{(0,0) coordinate (b)  circle (1.18cm)}
\draw \firstcircle;
\draw \secondcircle;
\node[label={$\smooth$}] (B) at (0,-0.4) {};
\node[label={$\cmmf$}] (B) at (0,-2.1) {};
    \end{tikzpicture}}

    \vspace{0.75mm}
    
    \Cref{SM+ES=>CMMF} ($\Sigma_{n}$)
\end{minipage} 
\begin{minipage}{0.19\textwidth}
\centering
\adjustbox{scale=0.7,center}{\begin{tikzpicture}[scale=0.7]
\def\firstrectangle{(-2,-2) rectangle (2, -0.5)}
\def\secondrectangle{(-2,-0.5) rectangle (2, 0.5)}
\def\thirdrectangle{(-2,0.5) rectangle (2, 2)}
\draw \firstrectangle;
\draw \secondrectangle;
\draw \thirdrectangle;
\node[label={$\stainf\quad \&\quad \overline{\cmmf}$}] (B) at (0,0.7) {};
\node[label={$\stainf\quad \&\quad \cmmf$}] (B) at (0,-0.6) {};
\node[label={$\overline{\stainf}\quad \&\quad \cmmf$}] (B) at (0,-1.7) {};
    \end{tikzpicture}}

\vspace{3.5mm}
    
    \Cref{-SI+ES=>CMMF} ($\Sigma_{n}$) 
\end{minipage}
\begin{minipage}{0.19\textwidth}
\centering
\adjustbox{scale=0.7,center}{\begin{tikzpicture}[scale=0.7]
\def\firstrectangle{(-2,-2) rectangle (2, -0.5)}
\def\secondrectangle{(-2,-0.5) rectangle (2, 0.5)}
\def\thirdrectangle{(-2,0.5) rectangle (2, 2)}
\draw \firstrectangle;
\draw \secondrectangle;
\draw \thirdrectangle;
\node[label={$\finmodpro\quad \&\quad \overline{\cmmf}$}] (B) at (0,0.7) {};
\node[label={$\finmodpro\quad \&\quad \cmmf$}] (B) at (0,-0.6) {};
\node[label={$\overline{\finmodpro}\quad \&\quad \cmmf$}] (B) at (0,-1.7) {};
    \end{tikzpicture}}

    \vspace{3.5mm}
    
    \Cref{-FMP+ES+OS=>CMMF} ($\Sigma_{1}$)
\end{minipage}
\begin{minipage}{0.19\textwidth}
\centering
\adjustbox{scale=0.7,center}{\begin{tikzpicture}[scale=0.7]
\def\firstellipse{(0,0.0) ellipse (0.8cm and 2.3cm)}
\draw \firstellipse;
\begin{scope}[shift={(0,0)},rotate=120]
\draw \firstellipse;
\end{scope}
\begin{scope}[shift={(0,0)},rotate=240]
\draw \firstellipse;
\end{scope}
\draw (0,1.2) -- (1.04,-0.6) -- (-1.04,-0.6) -- cycle;
\node[label={$\cmmf$}] (B) at (-1.5,-1.3) {};
\node[label={$\finmodpro$}] (B) at (1.5,-1.3) {};
\node[label={$\stainf$}] (B) at (0.0,1.3) {};
\node[label={$\convex$}] (B) at (0.0,-0.2) {};
    \end{tikzpicture}}
    
    \Cref{CV+-SI+-FMP+-CMMF} ($\Sigma_{2}$)
\end{minipage}

\caption{Venn diagrams for \Cref{CMMF+FM=>FW,SM+ES=>CMMF,-SI+ES=>CMMF,CV+-SI+-FMP+-CMMF,-FMP+ES+OS=>CMMF}
}\label{fig: venn}
\end{figure}

\crefname{theorem}{theorem}{theorems}
\Crefname{theorem}{Theorem}{Theorems}

In \Cref{fig: venn} we represent the results of this section through Venn diagrams. 
For \Cref{CMMF+FM=>FW} the diagram is straightforward: 
the intersection of theories with a computable minimal model function and theories with the finite model property lies inside the domain of finitely witnessable theories. \Cref{SM+ES=>CMMF} is also quite clear, but we must restrict ourselves to empty signatures. In \Cref{CV+-SI+-FMP+-CMMF} the restriction is to the signature $\Sigma_{2}$ alone, and we must choose different shapes for our regions given bi-dimensional limitations. 
Meanwhile, for \Cref{-SI+ES=>CMMF}, 
we not only restrict ourselves to empty signatures, but as we are dealing with the negation of a property we represent all $\Sigma_{n}$-theories as the entire square. 
Notice that theories that are neither stably infinite nor have a computable
minimal model function are absent from the square.
For \Cref{-FMP+ES+OS=>CMMF} the square represents all $\Sigma_{1}$-theories.


\bigskip
\toclesslab\section{Proof of \Cref{mainresult}: The Possible Cases}{sec:examples}
\newcommand{\yz}[1]{{\color{red}YZ: {#1}}}

The  proof of the
``only if" part of \Cref{mainresult} amounts to providing examples of theories that admit the
combinations that are absent from \Cref{table of impossibilities}.
This part of the proof is quite long:
our analysis takes into account $10$ binary properties: stable infiniteness, smoothness, finite witnessability, strong finite witnessability, convexity, finite model property, stable finiteness and computability of a minimal model function, 
as well as emptiness and one-sortedness of the signature. That adds up to $1,024$ possible combinations, 
$884$ of which were proven to be impossible in \cite{CADE,FroCoS} and \Cref{relationships}, and
$20$ of which remain open on whether they are possible or not (corresponding to unicorn theories, as well as unicorn theories $2.0$ and $3.0$). 
Thus, $120$ possibilities have examples.
In $84$ cases, the existing theories from \cite{CADE,FroCoS} can be utilized.
For them, one only needs to determine whether a minimal model function is computable. 
The existing theories are, however, not enough. 
For the remaining $36$ combinations the current paper provides new theories that
exhibit them.

The structure of the remainder of this section is described 
in \Cref{tab:sumex}.
The examples from \cite{CADE,FroCoS}, and in particular,
the computability of their minimal model functions, 
are addressed in \Cref{sec:old}.
The new theories are described in \Cref{sec:new},
and are further sub-categorized into three classes of
theories, in \Cref{sec:newbb,sec:uncomputable,operators}.
\Cref{tab:sumex} also includes the number of theories in each class.
We provide several examples of each class of theories.
The remaining theories are defined in a similar manner.\footnote{Full details are provided in the appendix.}

\begin{table}[t]\centering
\begin{tabular}{|l|l|c|}\hline
Section & Theories & Quantity \\\hline
\Cref{sec:old} & Existing Theories & 84 \\
\Cref{sec:newbb} & New Busy Beaver Theories & 7 \\
\Cref{sec:uncomputable} & New Theories with a Non-computable Function & 13 \\
\Cref{operators} & New Derived Theories & 16\\\hline
\end{tabular}
\caption{Summary of examples.}    
\label{tab:sumex}
\end{table}

\medskip
\toclesslab\subsection{Existing Theories}{sec:old}
In this section, we describe how the computability of a minimal model function
can be determined for all the theories from 
We start with a theory that admits all the properties.

\smallskip
\begin{example}
\label{ex:tgeqn}
Consider the $\Sigma_{1}$-theory $\Tgeqn$ from \cite{CADE}, 
whose models have at least $n$ elements (axiomatized by $\psi_{\geq n}$). 
It was shown in \cite{CADE,FroCoS} that it admits all properties, except for 
the computability of a minimal model function, which was not studied there.
For $S=\{\s_{1}\}$ we can get a computable minimal model function by going over all arrangements
of variables that occur in the formula, and taking
the one with the least induced equivalence classes.
Of course, if this number is less then $n$, then the model
has to be enlarged to have $n$ elements: 
\[\minmod_{\Tgeqn,\{\s_{1}\}}(\phi)=\big\{\max\{n, \min\{|V/E| : \text{$\phi$ and $\delta_{V}^{E}$ are equivalent}, V=\vars(\phi)\}\}\big\}.\]
\end{example}

Now, let us move to an even simpler example, with a twist:

\begin{example}
    Consider the $\Sigma_{1}$-theory $\Tinfty$ from \cite{CADE}, whose models are infinite (axiomatized by $\{\psi_{\geq n} : n\in\mathbb{N}\setminus\{0\}\}$). 
    It was shown in \cite{CADE,FroCoS} that it is stably infinite, smooth, and convex,
    while not having any of the other properties (except for computability of the minimal
    model function, that was not considered there).
    It is easy to prove that, for $S=\{\s_{1}\}$,
$\minmod_{\Tinfty,S}(\phi)=\{\aleph_{0}\}$ is a minimal model function,
and also constant and therefore computable. The twist is that this theory wouldn't be considered to have a computable minimal model function by a definition that demands the theory to be stably finite, as discussed in \Cref{CMMF+FMP}. 
\end{example}

A class of examples without computable minimal model functions are those involving the Busy Beaver function 
$\bb:\mathbb{N}\rightarrow\mathbb{N}$, such that $\bb(n)$ is the maximum number of $1$'s a Turing machine with at most $n$ states can write to its tape when it halts (assuming the tape begin with only $0$'s). Its most important property for our purposes is that for any computable function $h:\mathbb{N}\rightarrow\mathbb{N}$, there exists $n_{0}\in\mathbb{N}$ such that $\bb(n)\geq h(n)$ for all $n\geq n_{0}$, as shown by Rad{\'o} (\cite{Rado}).
In \cite{FroCoS} this is used to separate 
the finite model property and stable finiteness from finite witnessability and strong finite witnessability, but it can also separate the former properties from the computability of a minimal model function. 

\begin{example}
\label{ex:bbold}
Consider the $\Sigma_{1}$-theory $\Tbb$ from \cite{FroCoS},
axiomatized by
$\{\psi_{\geq\bb(n+2)}\vee\bigvee_{i=2}^{n+2}\psi_{=\bb(i)} : n\in\mathbb{N}\}$.
A finite interpretation $\A$ is a $\Tbb$-interpretation if and only if $|\s_{1}^{\A}|=\bb(n)$ for some $n\in\mathbb{N}\setminus\{0\}$; and all infinite interpretations are $\Tbb$-interpretations.
From \cite{FroCoS},  we know that it is stably infinite, not smooth, convex, has the finite model property and is stably finite, without being neither finitely witnessable nor strongly finitely witnessable. 
It is then possible to show that $\Tbb$ does not have a computable minimal model function: indeed, suppose that it does. We can then define an auxiliary function $h:\mathbb{N}\rightarrow\mathbb{N}$ by making $h(0)=\bb(0)=0$, $h(1)=\bb(1)=1$, and, for $n\in\mathbb{N}\setminus\{0\}$,
$h(n+1)=m$ for $m\in \minmod_{\Tbb,\{\s_{1}\}}(\NNNEQ{x}{h(n)+1})$ (for $x_{i}$ of sort $\s_{1}$).
We then know that $h(n+1)$ equals the cardinality of a $\Tbb$-interpretation with at least $h(n)+1$ elements, and thus we prove by induction that $h(n)\geq \bb(n)$ for all $n\in\mathbb{N}$. This leads to a contradiction, as $h$ is computable from its definition,
but is not eventually bounded by $\bb$.
\end{example}

The process of proving computability (or non-computability) of
the other theories from \cite{CADE,FroCoS} is done in
a similar manner to the above examples.

\medskip
\toclesslab\subsection{New Theories}{sec:new}

In this section, we describe new theories, in order to show
the feasibility of the remaining combinations of properties.
We start with \Cref{sec:newbb}, where we show how the busy beaver function
can be used to obtain several new theories, beyond those
that were defined in \cite{FroCoS}.
We also show some special formulas 
that are useful for defining theories in
non-empty signatures.
Some combinations of properties were harder to handle,
and for them, we introduce a new non-computable function
over the natural numbers in \Cref{sec:uncomputable}.
Finally, in \Cref{operators}, we show how to extend the obtained theories
to more complex signatures.

\smallskip
\toclesslab\subsubsection{New Theories With the Busy Beaver Function}{sec:newbb}

We still use the Busy Beaver function for some of the theories in the current study, although the resulting examples are not as straightforward as $\Tbb$. 
The following example shows a theory that has none of the considered properties.

\smallskip
\begin{example}
Let
$\Tfive$ be the $\Sigma_{2}$-theory  with axiomatization
\[\{(\psi^{\s_{1}}_{=1}\wedge(\psi^{\s_{2}}_{\geq \bb(n+1)}\vee\bigvee_{i=1}^{n}\psi^{\s_{2}}_{=\bb(i)}))\vee(\psi^{\s_{1}}_{=2}\wedge(\psi^{\s_{2}}_{=2}\vee\psi^{\s_{2}}_{\geq n})) : n\in\mathbb{N}\setminus\{0\}\}.\]
By distributing the conjunctions over the disjunctions in the basic formula axiomatizing this theory, and then analyzing the conjuncts, we can glimpse at what the models of $\Tfive$ look like.
\begin{enumerate}[leftmargin=7.5em]
    \item[$\psi^{\s_{1}}_{=1}\wedge\psi^{\s_{2}}_{\geq\bb(n+1)}$] When considered over all $n\in\mathbb{N}\setminus\{0\}$, this gives us interpretations $\A$ with $|\s_{1}^{\A}|=1$ and $\s_{2}^{\A}$ infinite.
    \item[$\psi^{\s_{1}}_{=1}\wedge\psi^{\s_{2}}_{=\bb(i)}$] Corresponds to $\A$ with $|\s_{1}^{\A}|=1$ and $|\s^{2}_{\A}|=\bb(n)$ for some $n\in\mathbb{N}\setminus\{0\}$.

    \item[$\psi^{\s_{1}}_{=2}\wedge\psi^{\s_{2}}_{=2}$] The interpretations $\A$ for this formula have $|\s_{1}^{\A}|=|\s_{2}^{\A}|=2$.

    \item[$\psi^{\s_{1}}_{=2}\wedge\psi^{\s_{2}}_{\geq n}$] This gives us interpretations $\A$ where $|\s_{1}^{\A}|=2$ and $\s_{2}^{\A}$ is infinite.
\end{enumerate}

$\Tfive$ is then not stably infinite (and thus not smooth), as the cardinality of the domains of sort $\s_{1}$ of $\Tfive$-interpretations is bounded by $2$, what also helps prove $\Tfive$ is not convex: the disjunction of equalities $(x=y)\vee(y=z)\vee(x=z)$  
with $x,y,z$ of sort $\s_{1}$
is $\Tfive$-valid while none of the disjuncts is. 
$\Tfive$ is not finitely witnessable (and thus not strongly finitely witnessable), given the relationship between the cardinalities of its models and $\bb$, what is proven in a similar way to how that is done with $\Tbb$ in \Cref{ex:bbold}; $\Tfive$ does not have the finite model property (and is thus not stably finite), as $\NEQ(x_{1},x_{2})\wedge\NEQ(u_{1}, u_{2}, u_{3})$ (for $x_{i}$ of sort $\s_{1}$, and $u_{j}$ of sort $\s_{2}$) can only be satisfied by a $\Tfive$-interpretation $\A$ with $|\s_{1}^{\A}|=2$ and $\s_{2}^{\A}$ infinite; and, finally, $\Tfive$ does not have a computable minimal model function, by an argument similar to the one found in \Cref{ex:bbold}. To summarize, $\Tfive$ is a theory with none of the considered properties, over the empty signature.
\end{example}

\begin{figure}[t]
\begin{mdframed}
\[\psi_{\neq}=\Forall{x}\neg[s(x)=x]\]
\[\psi_{=}^{k}=\Forall{x}[s^{k}(x)=x]\] 
\[\psi_{\vee}^{k}=\Forall{x}[[s^{2k}(x)=s^{k}(x)]\vee[s^{2k}(x)=x]]\]
\end{mdframed}
\caption{Special formulas: $k\in\mathbb{N}\setminus\{0\}$, $x$ is of sort $\s_{1}$, and 
$s^{k}(x)$ is defined by $s^{1}(x)=s(x)$ and $s^{k+1}(x)=s(s^{k}(x))$.
We denote $\psi_{=}^{1}$ by $\psi_{=}$, and $\psi_{\vee}^{1}$ by $\psi_{\vee}$.}
\label{special-formulas}
\end{figure}

The following example
shows a theory that is smooth, but has none of the other considered properties.
It relies on $\Sigma_{s}$-formulas from \Cref{special-formulas},
that encode various shapes of cycles that functions can create
(a cycle is a scenario in which applying a function
some number of times results in the original input).
$\psi_{\neq}$ holds in interpretations where $s$ has
no cycles of size $1$.
For a positive $k$, $\psi_{=}^{k}$ holds when the interpretation of $s$
has cycles of size $k$,
and $\psi_{\vee}^{k}$ holds when the cycles that the interpretation
of $s$ creates have one of the two forms described in the disjunction.
In particular,
the formula $\psi_{\neq}\wedge\psi_{\vee}^{2}$ gets us the four scenarios represented in \Cref{possible scenarios two}. 
In a similar way, 
we may wish to combine  $\psi_{\neq}$ and $\psi_{=}^{2}$, implying that in an interpretation $\A$ where both hold $s^{\A}$ always induces a cycle  of size exactly $2$.
When $k=1$, we omit it from $\psi_{=}^{k}$ and $\psi_{\vee}^{k}$.

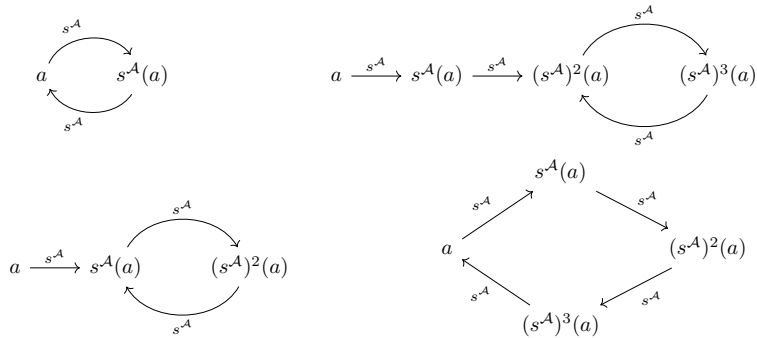
\begin{figure}[htp]
\centering
\begin{minipage}{0.3\textwidth}
\adjustbox{scale=0.8,center}{
\begin{tikzcd}
a \arrow[r, bend left=60, "s^{\A}"]
& s^{\A}(a)\arrow[l, bend left=60, "s^{\A}"]
\end{tikzcd}}
\end{minipage}
\begin{minipage}{0.5\textwidth}
\adjustbox{scale=0.8,center}{\begin{tikzcd}
a\arrow[r, "s^{\A}"]& s^{\A}(a) \arrow[r, "s^{\A}"]
& (s^{\A})^{2}(a) \arrow[r, bend left=60, "s^{\A}"]
& (s^{\A})^{3}(a)\arrow[l, bend left=60, "s^{\A}"]
\end{tikzcd}}
\end{minipage}
\centering
\begin{minipage}{0.4\textwidth}
\adjustbox{scale=0.8,center}{\begin{tikzcd}
a \arrow[r, "s^{\A}"]
& s^{\A}(a) \arrow[r, bend left=60, "s^{\A}"]
& (s^{\A})^{2}(a)\arrow[l, bend left=60, "s^{\A}"]
\end{tikzcd}}
\end{minipage}
\begin{minipage}{0.4\textwidth}
\adjustbox{scale=0.8,center}{\begin{tikzcd} 
& s^{\A}(a) \arrow[dr, "s^{\A}"] & \\
a\arrow[ur, "s^{\A}"]& 
& (s^{\A})^{2}(a) \arrow[dl, "s^{\A}"]\\
& (s^{\A})^{3}(a)\arrow[ul, "s^{\A}"] & \\
\end{tikzcd}}
\end{minipage}
\vspace{-8mm}
\caption{Possible scenarios when $\psi_{\neq}\wedge\psi_{\vee}^{2}$ holds}\label{possible scenarios two}
\end{figure}

\begin{example}

    Let $\Tnine$ be the $\Sigma_{s}$-interpretation with axiomatization
    \[\{(\psi_{\geq2\bb(n+1)}\wedge\psi_{\neq})\vee\bigvee_{i=2}^{n}(\psi_{=2\bb(i)}\wedge\psi_{\neq}\wedge\psi_{=}^{2})\vee\psi_{=1} : n\in\mathbb{N}\setminus\{0,1\}\}.\]
    It has, essentially, three types of models $\A$: a trivial one, with $|\s_{1}^{\A}|=1$, and where $s^{\A}$ is by force the identity function; those with $|\s_{1}^{\A}|$ finite and equal to some $2\bb(n)$, for $n\geq 2$, and where $\psi_{\neq}\wedge\psi_{=}^{2}$ holds, meaning that $s^{\A}$ always induces a cycle of size $2$ (notice that we multiply $\bb(n)$ by $2$ to accommodate these cycles); and those with $\s_{1}^{\A}$ infinite, and where $\psi_{\neq}$ holds.

    $\Tnine$ is not stably infinite (and thus not smooth), since $s(x)=x$ is only satisfied by the trivial model. It is not (strongly) finitely witnessable thanks to $\bb$. It is convex, because of the "determinism" of $s^{\A}$, which at most induces a cycle of size $2$. It is not stably finite (and thus does not have the finite model property), as $\neg(s^{4}(x)=x)$ is only satisfied by infinite $\Tnine$-interpretations. Finally, it does not have a computable minimal model function, thanks to $\bb$.

\end{example}

The Busy Beaver function is similarly used to define
5 more theories with various 
properties.

\smallskip
\toclesslab\subsubsection{Theories With a new Non-Computable Function}{sec:uncomputable}

The most interesting example out of the present paper is that of a $\Sigma_{1}$-theory that is stably infinite but not smooth, finitely witnessable but not strongly so, stably finite, and convex but without a computable minimal model function; we call this theory $\Tone$. 
We first prove the existence of a function $g:\mathbb{N}\setminus\{0\}\rightarrow\mathbb{N}\setminus\{0\}$ 
with some key properties.

\smallskip
\begin{lemma}
\label{lem:g-exists}
    There exists a function $g:\mathbb{N}\setminus\{0\}\rightarrow\mathbb{N}\setminus\{0\}$ that is
\textbf{1)}~increasing;
\textbf{2)}~unbounded;
\textbf{3)}~non-surjective;
\textbf{4)}~non-computable; and
\textbf{5)}~there exists an increasing computable function $\rho:\mathbb{N}\setminus\{0\}\rightarrow\mathbb{N}\setminus\{0\}$ such that
$g\circ\rho$ is computable.\footnote{Notice that this does not hold for $\bb$: it follows from \cite{Rado} that for any increasing function $\rho:\mathbb{N}\rightarrow\mathbb{N}$, $\bb\circ\rho$ grows at least as fast as $\bb$, and thus eventually faster than any computable function, thus being non-computable.}
\end{lemma}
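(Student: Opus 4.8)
The goal is to build a single function $g:\mathbb{N}\setminus\{0\}\rightarrow\mathbb{N}\setminus\{0\}$ satisfying all five requirements simultaneously. The natural strategy is to start from a non-computable set and encode it into the \emph{gaps} between consecutive values of $g$, while keeping $g$ itself increasing and controlling its growth along a sparse computable subsequence. Concretely, fix a non-computable set $K\subseteq\mathbb{N}$ (for instance the halting set), and let $(k_i)_{i\geq 1}$ be an increasing computable enumeration of a fixed computable set of ``slots'' --- say $k_i = 2^i$ --- which will serve as the range of the computable reparametrization $\rho$. The idea is to define $g$ so that $g(\rho(i))$ is some fixed computable value (e.g.\ $i$, or $2^i$), while the behaviour of $g$ \emph{strictly between} $\rho(i)$ and $\rho(i+1)$ secretly records whether $i\in K$.

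\textbf{Key steps.}
First I would set $\rho(i)=2^i$ (clearly increasing and computable) and decree $g(2^i)=2^i$ for all $i\geq 1$; this already forces $g\circ\rho$ to be the identity-like computable function $i\mapsto 2^i$, handling requirement (5). Second, on the interval of integers $n$ with $2^i<n<2^{i+1}$, I would let $g$ be constant equal to $2^i$ if $i\notin K$, and let $g$ take one intermediate ``jump'' value (say it increases by $1$ at exactly one designated point of the interval, staying below $2^{i+1}$) if $i\in K$; since the intervals have length $2^i\geq 2$ for $i\geq 1$, there is always room to do this while keeping $g$ monotone. Third, I would verify the remaining clauses: $g$ is \emph{increasing} (weakly --- or one can make it strictly increasing by adding a slowly-growing computable offset, at the cost of slightly more bookkeeping, which is the only delicate point); $g$ is \emph{unbounded} because $g(2^i)=2^i\to\infty$; $g$ is \emph{non-surjective} because, e.g., by reserving a computable infinite set of values (say all numbers of a fixed residue class) never to be hit --- this can be arranged by dilating the whole construction, multiplying every output by $3$, so nothing in $3\mathbb{N}+1$ is ever a value; and $g$ is \emph{non-computable} because from $g$ one could decide, for each $i$, whether $g$ is constant on $(2^i,2^{i+1})$, i.e.\ whether $i\in K$, contradicting non-computability of $K$. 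The footnote's remark that this fails for $\bb$ is exactly the point: $\bb$ grows too fast along \emph{every} increasing reparametrization, whereas our $g$ is deliberately made to grow slowly (polynomially, in fact) along $\rho=2^i$.

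\textbf{Main obstacle.}
The only real friction is reconciling \emph{strict} monotonicity (``increasing'') with the requirement that $g(\rho(i))$ stay computable and small: if $g$ must strictly increase on every step, then between $2^i$ and $2^{i+1}$ it rises by a bounded amount, so the ``information'' about $i\in K$ must be hidden in \emph{where} the jumps occur rather than \emph{whether} they occur, which is still fine (the position of a jump in a computably-presented interval is as good as a bit), but it requires being slightly careful that the decoding argument for non-computability still goes through and that $g$ does not accidentally become surjective. I expect the write-up to pin down one explicit such $g$ --- likely with $g$ strictly increasing, $g(2^i)$ equal to something like $3\cdot 4^i$, leaving all of $3\mathbb{N}+1$ unattained, and the $i$-th bit of $K$ read off from the parity of the unique index in $(2^i,2^{i+1})$ where $g$ makes its ``extra'' unit increase --- and then check clauses (1)--(5) one at a time. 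Everything after the construction is routine; the construction itself is the whole content.
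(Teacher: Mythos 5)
Your overall strategy --- encode a non-computable bit sequence into the step-increments of a monotone $g$ while pinning down $g$ on the sparse computable subsequence $\rho(i)=2^{i}$ --- is exactly the right idea and is in the same spirit as the paper's proof. But your sketch does not quite close, and the point where it wobbles is precisely the "main obstacle" you identify. Your first version (constant on $(2^{i},2^{i+1})$ when $i\notin K$) is only weakly monotone, and strict monotonicity is needed here: the theory $\Tone$ built from $g$ relies on $g$ being injective so that distinct finite models have distinct cardinalities. Your amended version is arithmetically inconsistent as stated: if $g(2^{i})=3\cdot 4^{i}$ and the only deviation from unit steps on the block $(2^{i},2^{i+1}]$ is a single "extra" unit increase, the block can rise by at most $2^{i}+1$, nowhere near the required $9\cdot 4^{i}$. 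This is repairable (put the excess at a designated position whose location encodes the bit), but you never exhibit one explicit $g$ and verify all five clauses on it, and for this lemma the explicit construction \emph{is} the content.

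The paper resolves your obstacle by not decreeing the values $g(\rho(i))$ at all; instead it starts from a non-computable $f:\mathbb{N}\setminus\{0\}\to\{0,1\}$ (already built in \cite{CADE}) with the \emph{balance} property $|\{i\leq 2^{k}:f(i)=0\}|=|\{i\leq 2^{k}:f(i)=1\}|$, and sets $g(n)=\sum_{i=1}^{n}(f(i)+1)$. Every increment is $1$ or $2$, so $g$ is strictly increasing; increments of $2$ occur infinitely often, so $g$ skips values and is non-surjective; $f(n+1)=g(n+1)-g(n)-1$ recovers $f$ from $g$, so $g$ is non-computable; and the balance property forces $g(2^{k})=2^{k}+2^{k-1}=3\cdot 2^{k-1}$, so $g\circ\rho$ is computable for $\rho(k)=2^{k}$ \emph{without} having to reconcile prescribed endpoint values with strict monotonicity --- the partial sums at powers of two come out computable automatically. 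If you want to salvage your version, the cleanest fix is essentially to rediscover this: rather than fixing $g(2^{i})$ and hiding bits in jump positions, make the increments themselves the bits (plus one) and arrange the bit sequence to be balanced on dyadic prefixes.
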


\begin{proof}
Two of the theories in \cite{CADE} rely on the existence of a non-computable function $f:\mathbb{N}\setminus\{0\}\rightarrow\{0,1\}$ such that $f(1)=1$ and, for every $k\in\mathbb{N}\setminus\{0\}$, 
\[|\{1\leq i\leq 2^{k} : f(i)=0\}|=|\{1\leq i\leq 2^{k} : f(i)=1\}|\]
We can then define a new function $g:\mathbb{N}\setminus\{0\}\rightarrow\mathbb{N}\setminus\{0\}$ by making 
$g(n)=n+\sum_{i=1}^{n}f(i)=\sum_{i=1}^{n}(f(i)+1)$.
Then we have the following. \textbf{1)} $g$ is increasing, since $g(n+1)=g(n)+f(n+1)+1$, and $f(n+1)$ is either $0$ or $1$; \textbf{2)} $g$ is unbounded, as $f$ is not computable and therefore infinitely often equals $1$; \textbf{3)} it also follows that there are 
(infinitely many) values $n\in\mathbb{N}\setminus\{0\}$ such that $f(n+1)\neq 0$, and then $g(n+1)>g(n)+1$, meaning $g$ is not surjective; \textbf{4)} $g$ is non-computable, since if it were we could find an algorithm for calculating $f$:
$f(1)$ is fixed to be 1, and 
for $n\geq 1$, we notice that 
$f(n+1)=g(n+1)-g(n)-1$; and, finally, 
\textbf{5)}
for all $k\in\mathbb{N}\setminus\{0\}$, $g(2^{k})=3\times2^{k-1}$, since $|\{1\leq i\leq 2^{k} : f(i)=0\}|$ must equal $|\{1\leq i\leq 2^{k} : f(i)=1\}|$, meaning both equal $2^{k-1}$, from what follows that $g$ composed with $\rho(k)=2^{k}$ is computable.
\end{proof}

Now, we can define the theory $\Tone$.

\begin{example}
$\Tone$ is the $\Sigma_{1}$-theory with axiomatization
$\{\psi^{\s_{1}}_{\geq g(n)}\vee\bigvee_{i=1}^{n}\psi^{\s_{1}}_{=g(i)} : n\in\mathbb{N}\setminus\{0\}\}$.
It has all the $\Sigma_{1}$-interpretations whose domains have 
$g(k)$ elements for some $k$,
or infinitely many elements, and only those.

Since $g$ is increasing (property $\textbf{1}$), and the signature is empty,
different finite $\Tone$-in\-ter\-pre\-ta\-tions must have different cardinalities,
which makes reasoning about $\Tone$ easier.
$\Tone$ is stably infinite, as whenever
a quantifier-free formula is satisfied by one of its finite models,
it is also satisfied by (all) its infinite models.
According to a result from \cite{CADE}, it is then convex.
$\Tone$ is stably finite thanks to $g$ being unbounded (property {\bf 2}): that way, if a formula needs a certain number of elements in an interpretation in order to be satisfied, we can always find a finite $\Tone$-interpretation with at least that many elements.
Similarly, by a result from \cite{FroCoS}, it also has
the finite model property.
Further, since $g$ is not surjective (property {\bf 3}), $\Tone$ is not smooth,
as there are "holes" in the cardinalities of its models.
This, according to a result from \cite{CADE}, also guarantees that $\Tone$ is not strongly finitely witnessable.
The fact that $g$ is not computable (property {\bf 4}) implies that $\Tone$ does not have a computable minimal model function: if it had, this function could have been used to compute $g$ simply by 
hard-coding the value of $g(1)$,
and by making, for $S=\{\s_{1}\}$,
$g(n+1)\in \minmod_{\Tone,S}(\NNNEQ{x}{g(n)+1})$ (where the $x_{i}$ are of sort $\s_{1}$).

Finally, as for the finite witnessability of $\Tone$: given a quantifier-free formula $\phi$, let $n$ be the number of variables in it, and find the least $k$ such that $g(2^{k})=3\times 2^{k-1}\geq n$ (an equality that holds due to property $\textbf{5}$). For $x_{i}$ fresh variables, we then have that
$\wit(\phi)=\phi\wedge\bigwedge_{i=1}^{g(2^{k})}x_{i}=x_{i}$
is computable: it is furthermore clearly equivalent, and thus its existential closure is $\Tone$-equivalent, to $\phi$; and, lastly, if a $\Tone$-interpretation $\A$ satisfies $\phi$, it is enough to add $g(2^{k})-|\s_{1}^{\A}|$ elements to it in order to find a $\Tone$-interpretation where all elements are witnessed.
\end{example}  


Similarly to~\Cref{sec:newbb},
 we can use~\Cref{special-formulas} for theories related to $g$.
 In this way, we present
 a theory that is finitely witnessable, has the finite model property and is stably finite,
 without having any of the other properties.

\begin{example}\label{ex:gplusnc}
    Consider the $\Sigma_{s}$-theory $\Ttwelve$, with axiomatization
    \[\{\psi_{=1}\vee(\psi_{\neq}\wedge\psi^{2}_{\vee}\wedge(\psi_{\geq 2g(n+1)}\vee\bigvee_{i=1}^{n}\psi_{=2g(i)})) : n\in\mathbb{N}\setminus\{0\}\}.\]
    It has three types of model $\A$: a trivial one, with $|\s_{1}^{\A}|=1$; finite but non-trivial ones, with $|\s_{1}^{\A}|=2g(n)$ for some $n\in\mathbb{N}\setminus\{0\}$, and where $\psi_{\neq}\wedge\psi_{\vee}^{2}$ holds, and thus one of the scenarios in \Cref{possible scenarios two} holds;
    and infinite ones, where $\psi_{\neq}\wedge\psi_{\vee}^{2}$ still holds. It is not stably infinite (and thus not smooth), as $s(x)=x$ is only satisfied by the trivial model; it is finitely witnessable but not strongly so, thanks to the properties of $g$ related to computability, which also guarantee that $\Ttwelve$ does not have a computable minimal model function; it is not convex, due to $\psi_{\vee}^{2}$; and it has the finite model property, and is stably finite, given $g$ is unbounded. 
\end{example}

The function $g$ is used in a similar manner to define
11 more theories with various 
properties.

\smallskip
\toclesslab\subsubsection{Derived Theories}{operators}

We have defined in \cite{CADE}  {\em theory operators} that, given a theory in the signatures of \Cref{tab:signatures}, produce a different theory in a different signature, preserving  properties of the original theory. 

\smallskip
\begin{definition}[Theory Operators from \cite{CADE}]~
\label{def:operators}
\begin{enumerate}
\item 
If $\T$ is a $\Sigma_1$-(respectively $\Sigma_{s}$-)theory, then
$\adds{\T}$ is the $\Sigma_{2}$-($\Sigma_{s}^{2}$-)theory axiomatized by
$\ax{\T}$.\footnote{In fact, \cite{CADE} only defined this operator for $\Sigma_1$-theories, even though it was used there also for $\Sigma_s$-theories. Here we define it more generally, so that it also works in the presence of function symbols. In particular, this clarifies that its usage in \cite{CADE} for non-empty signatures was sound.}
\item If $\T$ is a $\Sigma_{n}$-theory, $\addf{\T}$ is the $\Sigma_{s}^{n}$-theory with axiomatization $\ax{\T}\cup\{\psi_{=}\}$.
\item If $\T$ is a $\Sigma_{n}$-theory, $\addnc{\T}$ is the $\Sigma_{s}^{n}$-theory with axiomatization $\ax{\T}\cup\{\psi_{\vee}\}$.
\end{enumerate}
\end{definition}

The first operator adds a sort to a theory, and the others
add a function symbol.
It was proven in \cite{CADE,FroCoS} that the first two operators 
preserve convexity, 
stable infiniteness, smoothness, finite witnessability, strong finite witnessability,
the finite model property and stable finiteness,
as well as the absence of these properties.
The third preserves all but convexity: $\addnc{\T}$ admits
all the properties of $\T$, except for the fact that it is guaranteed
to never be convex.
Here we prove that all three operators preserve the (non-)computability of a minimal model function.

\begin{restatable}{theorem}{addsresult}
\label{thm:addsresult}
Let $\T$ be a $\Sigma_{1}$ or $\Sigma_{s}$-theory. Then: $\T$ has a computable minimal model function with respect to $\{\s_{1}\}$ iff
$\adds{\T}$ has a computable minimal model function with respect to $\{\s_{1}, \s_{2}\}$.
\end{restatable}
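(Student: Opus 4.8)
The plan is to exploit the fact that $\adds{\T}$ has the same axiomatization as $\T$, so that the $\s_{1}$-reducts behave identically while the sort $\s_{2}$ is completely unconstrained. Concretely, I would first observe that a $\Sigma_{2}$-interpretation $\A$ (over the empty signature $\Sigma_{2}$, or $\Sigma_s^2$ with the single symbol $s$ acting only on $\s_1$) is an $\adds{\T}$-interpretation if and only if its $\s_1$-part, viewed as a $\Sigma_1$- (resp.\ $\Sigma_s$-) interpretation, is a $\T$-interpretation, and $|\s_2^{\A}|$ is an arbitrary element of $\N$. This is immediate from $\ax{\adds{\T}}=\ax{\T}$ and the fact that $\ax{\T}$ mentions only $\s_1$. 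From this one gets, for any quantifier-free $\Sigma_2$-formula $\phi$, a clean description of $\textbf{Card}_{\adds{\T},\{\s_1,\s_2\}}(\phi)$ in terms of the $\s_1$-behavior of $\phi$ and a free choice of $\s_2$-cardinality (subject only to the constraints $\phi$ itself places on $\vars_{\s_2}(\phi)$).

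The second step is to reduce reasoning about $\phi$ to reasoning about $\s_1$ alone. Since the signature has no cross-sort symbols, $\phi$ is (up to $\T$-equivalence, via the usual purification / variable-abstraction argument) equivalent to a Boolean combination in which the $\s_1$-literals and the $\s_2$-literals are separated; the $\s_2$-literals are pure equality constraints on finitely many variables, so their satisfiability in an interpretation with domain of size $m$ depends only on $m$ and is easy to decide. Thus $\minmod_{\adds{\T},\{\s_1,\s_2\}}(\phi)$ can be computed from the function $\minmod_{\T,\{\s_1\}}$ applied to the $\s_1$-projections of the disjuncts of (a DNF of) $\phi$, together with an elementary computation on the $\s_2$-side to find the minimal $\s_2$-cardinality compatible with each disjunct. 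Using Proposition~\ref{alternative definition} and Lemma~\ref{Dickson} to justify that we are really extracting the $\leq$-minimal elements of a product-like set, this gives one direction: computability of $\minmod_{\T,\{\s_1\}}$ yields computability of $\minmod_{\adds{\T},\{\s_1,\s_2\}}$.

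For the converse, I would go the other way: given a quantifier-free $\Sigma_1$-formula $\psi$, regard it as a $\Sigma_2$-formula (it uses no $\s_2$-variables), compute $\minmod_{\adds{\T},\{\s_1,\s_2\}}(\psi)$, and read off the $\s_1$-coordinate. Because $\psi$ imposes no constraint on $\s_2$, the $\s_2$-coordinate of every minimal pair is forced to be $1$ (the least element of $\N$), and the set of $\s_1$-coordinates occurring is exactly $\minmod_{\T,\{\s_1\}}(\psi)$; so $\minmod_{\T,\{\s_1\}}$ is computable. One must also handle $\T$-unsatisfiable inputs, but here the definitions are agnostic about outputs on unsatisfiable formulas, and $\psi$ is $\T$-satisfiable iff it is $\adds{\T}$-satisfiable, so this is not an issue.

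\textbf{Main obstacle.} The routine parts are the model-correspondence and the final read-off. The step requiring care is the ``separation'' of $\phi$ into $\s_1$- and $\s_2$-parts and the bookkeeping needed to turn finitely many applications of $\minmod_{\T,\{\s_1\}}$ to the $\s_1$-projections of DNF-disjuncts into the actual set of $\leq$-minimal elements of $\textbf{Card}_{\adds{\T}}(\phi)$: one has to check that no spurious non-minimal pairs slip in and that Dickson's lemma (Lemma~\ref{Dickson}) is correctly invoked so that the resulting set is finite and effectively enumerable. This is also where the proof for $\Sigma_s$ versus $\Sigma_1$ differs only cosmetically, since $s$ still acts purely on $\s_1$ and the $\s_1$-projection of $\phi$ is again a formula in the original signature.
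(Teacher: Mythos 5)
Your proposal is correct and follows essentially the same route as the paper: a model correspondence between $\adds{\T}$-interpretations and their $\s_{1}$-reducts, a finite case decomposition of $\phi$ (you use DNF disjuncts where the paper guesses arrangements of the variables and replaces the $\s_{2}$-atoms by tautologies or contradictions accordingly), an application of $\minmod_{\T,\{\s_{1}\}}$ to the resulting $\s_{1}$-formulas paired with an elementary computation of the least admissible $\s_{2}$-cardinality, and a final extraction of $\leq$-minimal elements; your converse read-off is identical to the paper's. The only cosmetic remark is that the finiteness you worry about does not require Dickson's lemma here, since you are taking minimal elements of an explicitly finite, effectively listed set.
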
 

\begin{restatable}{theorem}{addfresult}
\label{thm:addfresult}
Let $\T$ be a $\Sigma_{n}$-theory. Then: $\T$
has a computable minimal model function w.r.t. $\{\s_{1},\ldots,\s_{n}\}$ if, and only if, $\addf{\T}$ has a computable minimal model function w.r.t. $\{\s_{1},\ldots,\s_{n}\}$.
\end{restatable}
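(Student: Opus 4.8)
The plan is to exploit the fact that a $\Sigma_s^n$-interpretation $\B$ of $\addf{\T}$ is, up to the interpretation of $s$, just a $\Sigma_n$-interpretation of $\T$: since $\ax{\addf{\T}} = \ax{\T}\cup\{\psi_{=}\}$ and $\psi_{=}$ forces $s^{\B}$ to be the identity on $\s_1^{\B}$, the underlying $\Sigma_n$-reduct $\B^{-}$ of $\B$ is a $\T$-interpretation, and conversely every $\T$-interpretation $\A$ extends uniquely to an $\addf{\T}$-interpretation by interpreting $s$ as the identity. In particular this correspondence is a bijection that preserves all domain cardinalities $(|\s_i^{\B}|)_{i=1}^n = (|\s_i^{\B^{-}}|)_{i=1}^n$. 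Hence $\textbf{Card}_{\addf{\T},S}(\phi') = \textbf{Card}_{\T,S}(\phi'')$ for suitably related formulas, and by \Cref{alternative definition} the minimal model functions agree on those inputs. The only subtlety is that formulas over $\Sigma_s^n$ may mention $s$, whereas formulas over $\Sigma_n$ cannot.

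I would handle that subtlety by a computable normalization step. Given a quantifier-free $\Sigma_s^n$-formula $\phi$, repeatedly rewrite every subterm $s(t)$ to $t$ (sound because $\psi_{=}$ is in the axiomatization, so $\vDash_{\addf{\T}} s(t) = t$); this terminates and produces a $\Sigma_n$-formula $\phi^{\flat}$ with $\vars(\phi^{\flat}) = \vars(\phi)$ that is $\addf{\T}$-equivalent to $\phi$, and in particular $\phi$ is $\addf{\T}$-satisfiable iff $\phi^{\flat}$ is $\T$-satisfiable, with the same attainable cardinality tuples. For the forward direction, assuming $\minmod_{\T,S}$ is computable, define $\minmod_{\addf{\T},S}(\phi) := \minmod_{\T,S}(\phi^{\flat})$; computability of the rewriting plus computability of $\minmod_{\T,S}$ gives computability, and correctness follows from the cardinality-preserving bijection above together with \Cref{alternative definition}. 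For the converse, given a $\Sigma_n$-formula $\psi$, note $\psi$ is already a $\Sigma_s^n$-formula and $\psi^{\flat} = \psi$, so $\minmod_{\T,S}(\psi) = \minmod_{\addf{\T},S}(\psi)$ by the same correspondence; thus a computable minimal model function for $\addf{\T}$ restricts to one for $\T$. One must also check the behavior on $\T$-unsatisfiable inputs, but there the codomain value is unconstrained, so any computable choice works.

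The main obstacle, such as it is, is purely bookkeeping: verifying that the term-rewriting $s(t)\mapsto t$ is both $\addf{\T}$-sound (immediate from $\psi_{=}$) and variable-preserving, and confirming that the reduct/expansion correspondence is genuinely a bijection on interpretations-with-cardinalities rather than merely a surjection — this is where the choice $s^{\B} = \mathrm{id}$ being the \emph{unique} interpretation compatible with $\psi_{=}$ is used. No combinatorial lemma like Dickson's is needed here, in contrast to the harder theorems of \Cref{relationships}; this result is essentially a transfer along a reduct functor, and the proof of \Cref{thm:addsresult} (for $\adds{\T}$, which just copies the axiomatization to an enlarged signature) would be handled by an even more direct version of the same argument, with no rewriting needed at all.
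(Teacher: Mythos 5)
Your proposal is correct and matches the paper's proof essentially step for step: your normalization $\phi^{\flat}$ is exactly the paper's $\subff{\phi}$, the reduct/identity-expansion bijection is the paper's $\subf{\cdot}$ and $\addfs{\cdot}$ correspondence, and both directions define the transferred minimal model function by the same formulas ($\minmod_{\addf{\T},S}(\phi)=\minmod_{\T,S}(\subff{\phi})$ and restriction for the converse). The only cosmetic difference is that you route correctness through \Cref{alternative definition} while the paper verifies the two defining conditions directly.
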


\begin{restatable}{theorem}{addncresult}
\label{thm:addncresult}
Let $\T$ be a $\Sigma_{n}$-theory. Then: $\T$
has a computable minimal model function w.r.t. $\{\s_{1},\ldots,\s_{n}\}$ if, and only if, $\addnc{\T}$ has a computable minimal model function w.r.t. $\{\s_{1},\ldots,\s_{n}\}$.
\end{restatable}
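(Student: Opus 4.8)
The plan is to turn the (non-)computability of a minimal model function into a property that is transported back and forth along the operator $\addnc{\cdot}$, using the fact that an $\addnc\T$-interpretation is exactly a $\T$-interpretation together with a function $s$ on the sort $\s_1$ satisfying $\psi_\vee$, and that passing between such an interpretation and its $\Sigma_n$-reduct leaves the tuple of sort-cardinalities unchanged. Throughout, I would use the description from \Cref{alternative definition}: for a $\T$-satisfiable $\phi$, $\minmod_{\T,S}(\phi)$ is the set of $\leq$-minimal elements of $\textbf{Card}_{\T,S}(\phi)$, and likewise for $\addnc\T$; so it suffices to relate the two $\textbf{Card}$ sets by a computable transformation of formulas, taking care that $\minmod$ is only pinned down on satisfiable inputs.

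For the direction ``$\addnc\T$ computable $\Rightarrow$ $\T$ computable'', let $\phi$ be a quantifier-free $\Sigma_n$-formula (hence also a $\Sigma_s^n$-formula). Every $\T$-model of $\phi$ extends to an $\addnc\T$-model with the same cardinalities by interpreting $s$ as the identity (which satisfies $\psi_\vee$, since then $s^2(x)=s(x)=x$), and every $\addnc\T$-model of $\phi$ has a $\T$-model of $\phi$ as its $\Sigma_n$-reduct with the same cardinalities; hence $\textbf{Card}_{\T,S}(\phi)=\textbf{Card}_{\addnc\T,S}(\phi)$, and setting $\minmod_{\T,S}(\phi):=\minmod_{\addnc\T,S}(\phi)$ gives a computable function that, by \Cref{alternative definition}, is correct on every $\T$-satisfiable $\phi$ (which is then also $\addnc\T$-satisfiable).

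For the direction ``$\T$ computable $\Rightarrow$ $\addnc\T$ computable'', the key is that $\psi_\vee$ proves $s^3(x)=s(x)$: if $s^2(x)=s(x)$ then $s^3(x)=s(s^2(x))=s^2(x)=s(x)$, and if $s^2(x)=x$ then $s^3(x)=s(x)$; consequently, modulo $\psi_\vee$, every $\Sigma_s^n$-term is equivalent to a variable $v$, to $s(v)$, or to $s^2(v)$. Given a quantifier-free $\Sigma_s^n$-formula $\phi$, I would introduce for each $\s_1$-variable $v$ of $\phi$ two fresh $\s_1$-variables $v^{(1)},v^{(2)}$ (standing for $s(v),s^2(v)$) and rewrite $\phi$, atom by atom, into a Boolean combination $\phi'$ of equalities among these expanded variables and the remaining variables of $\phi$. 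Then one computes the finite set $D_\phi$ of arrangements $\delta$ over the expanded variable set that (i) are consistent with $s$ being a well-defined function satisfying $\psi_\vee$ on the named elements (so $v\sim_\delta w$ forces $v^{(1)}\sim_\delta w^{(1)}$, each $v^{(2)}$ is identified with $v^{(1)}$ or with $v$, and the $s$-images prescribed along any $\delta$-collapsed chain agree) and (ii) entail $\phi'$ under the reading $v^{(i)}=s^i(v)$ — both are decidable conditions. A routine verification then establishes
\[\textbf{Card}_{\addnc\T,S}(\phi)=\textbf{Card}_{\T,S}\big(\textstyle\bigvee_{\delta\in D_\phi}\delta\big);\]
indeed, from an $\addnc\T$-model of $\phi$ one reads off the $\delta$ realized by the named elements, whose $\Sigma_n$-reduct models $\delta$, and conversely a $\T$-model of some $\delta\in D_\phi$ becomes an $\addnc\T$-model of $\phi$ by letting $s$ act on the named elements as $\delta$ dictates and as the identity elsewhere. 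Writing $\Psi_\phi:=\bigvee_{\delta\in D_\phi}\delta$, which is a $\Sigma_n$-formula computable from $\phi$, I would define $\minmod_{\addnc\T,S}(\phi):=\minmod_{\T,S}(\Psi_\phi)$: this is computable, and when $\phi$ is $\addnc\T$-satisfiable the set $\textbf{Card}_{\addnc\T,S}(\phi)$ is nonempty, so $\Psi_\phi$ is $\T$-satisfiable and, by \Cref{alternative definition}, $\minmod_{\T,S}(\Psi_\phi)$ is precisely the set of $\leq$-minimal elements of $\textbf{Card}_{\addnc\T,S}(\phi)$, i.e.\ $\minmod_{\addnc\T,S}(\phi)$.

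The main obstacle is bookkeeping rather than mathematics: getting the rewriting $\phi\mapsto\phi'$ and, above all, the admissibility conditions defining $D_\phi$ exactly right, so that the ``$s$ as dictated, identity elsewhere'' construction always yields a genuine $\addnc\T$-interpretation satisfying $\psi_\vee$, and so that $D_\phi$ captures all $\addnc\T$-models of $\phi$ and not merely a sound subset. Finiteness of the outputs is \Cref{unique and finite} and needs no extra work here; the only conceptual point to keep in view is that the reduction must be routed through the single formula $\Psi_\phi$ rather than through the family $\{\minmod_{\T,S}(\delta)\}_{\delta\in D_\phi}$, since $\minmod_{\T,S}$ may behave arbitrarily on the $\T$-unsatisfiable members of $D_\phi$.
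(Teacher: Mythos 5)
Your proposal is correct and follows the same basic strategy as the paper's proof: both directions rest on the cardinality-preserving passage between a $\T$-interpretation and its expansion interpreting $s$ as the identity (which satisfies $\psi_{\vee}$), respectively between an $\addnc{\T}$-interpretation and its $\Sigma_{n}$-reduct, and the hard direction encodes the action of $s$ on the named elements by arrangements over fresh variables, exactly as the paper does with its set $Eq^{\dag}(\phi)$ of admissible equivalences (functionality plus the $\psi_{\vee}$ constraint) and the formulas $\phi^{\dag}\wedge\delta_{U}^{E}$. Two of your choices differ in ways worth recording. First, you normalize terms using the derived identity $s^{3}(x)=s(x)$, so only $v$, $s(v)$, $s^{2}(v)$ need fresh names; the paper instead introduces $y_{i,j}$ for $0\leq j\leq M_{i}+1$, where $M_{i}$ is the largest power of $s$ applied to $z_{i}$ in $\phi$. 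Both work; yours gives a smaller, input-independent bound. Second, and more substantively, you apply the given $\minmod_{\T,S}$ once, to the single disjunction $\Psi_{\phi}$ of admissible arrangements, whereas the paper returns $\minimal\bigcup_{E\in Eq^{\dag}(\phi)}\minmod_{\T,S}(\phi^{\dag}\wedge\delta_{U}^{E})$. Your stated reason for this choice is a real one: a minimal model function is constrained only on $\T$-satisfiable inputs, some conjuncts $\phi^{\dag}\wedge\delta_{U}^{E}$ may be $\T$-unsatisfiable, and on those the given computable function may return arbitrary (possibly too-small) tuples that would pollute the $\minimal$ of the union; since satisfiability is not assumed decidable, these cannot simply be filtered out. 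Routing everything through one formula whose $\T$-satisfiability is equivalent to the $\addnc{\T}$-satisfiability of $\phi$ sidesteps this, so on this point your argument is tighter than the paper's.
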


These theorems are used in order to
construct new theories in different signatures
from existing theories (that are either defined in
\cite{CADE,FroCoS} or the current paper).

\begin{example}
The $\Tgeqn$ from \Cref{ex:tgeqn} admits all the discussed
properties w.r.t. to its only sort. 
By \cite{CADE,FroCoS}, all properties but the computability
of a minimal model function also hold for $\adds{\Tgeqn}$.
By \Cref{thm:addsresult}, this also holds for 
the computability of a minimal model function.
\end{example}

The operators are similarly used to define 14 more
theories with various 
properties.

\bigskip
\toclesslab\section{Conclusion and Future Work}{conclusion}

\question{OK, I see what you did here. 
In \cite{CADE,FroCoS} every combination
of the properties from a list of properties
related to politeness, shininess and the Nelson-Oppen method, was
considered. Now you have added the last property to this list, namely,
the computability of a minimal model function, thus 
completing all properties that relate to these combination methods.}
\answer{Yes. The impossibility results show that 
there are cases in which one must prove a certain combination
of properties, without the ability to reduce them to others.
The examples that we found constitute a thorough taxonomic analysis of the properties, and also provide an invaluable tool-set of techniques in order to explore them.
}

\question{So, are you finally done?}
\answer{The job is never done. 
Some open problems remain, namely the existence
of unicorns of the three flavors. 
Further, one can consider more properties,
such as
 decidability,
 finite axiomatizability, and more. 
 Moreover, we have only considered properties with respect to the whole set of sorts. 
 We could also consider subsets. 
 Finally, we have so many examples of theories that one wonders whether producing them could be somehow automated. 
 \Cref{thm:addsresult,thm:addfresult,thm:addncresult} definitely go in that direction, but perhaps even the 
 construction of the
 more primitive
 examples could be assisted by automation.
 }

\question{Then let me just ask: couldn't you think of a more serious name than {\em unicorn theories}?}
\answer{Oh yeah, 
because {\em smooth}, {\em polite} and {\em shiny} are very serious... We are just keeping up with the tradition!}

\printbibliography

\newpage

\newgeometry{
a4paper,
left=20mm,
right=20mm,
bottom=18mm,
top=15mm
}
\appendix

\renewcommand{\contentsname}{Contents of the Appendix}

\tableofcontents


\vspace{15mm}

\newcommand{\ornamentSep}{\noindent\hfil{{\pgfornament[width=0.618033988749895\columnwidth,color=black]{88}}}}

\ornamentSep

\vspace{15mm}

\crefname{theorem}{theorem}{theorems}
\Crefname{theorem}{Theo.}{Theo.}

\crefname{theorem}{theorem}{theorems}
\Crefname{theorem}{Theorem}{Theorems}


We briefly establish some simplifications, in terms of nomenclature, for the Appendix. In the signatures $\Sigma_{n}$ and $\Sigma_{s}^{n}$, we denote the sort $\s_{1}$ by $\s$ when there is no risk of confusion. An interpretation $\A$ is said to be trivial with respect to a sort $\s$ if $|\s^{\A}|=1$; we may omit the reference to the sort if the signature is one-sorted, or the sort is clear from context.
Also,
notice that we may conclude from  \Cref{alternative definition} that $\minmod_{\T,S}(\phi)=\minimal(\textbf{Card}_{\T,S}(\phi))$, where $\minimal$ is the operator that takes a subset of an ordered set and returns its minimal elements. 
This fact will be used throughout the appendix.
Finally, in some proofs we will refer to lemmas that are proved in \cite{arxivCADE}, which is the full version of~\cite{CADE}.

\section{\tp{Proof of \Cref{technical lemma}}{Proof of technical lemma}}
\dickson*

\begin{proof}
    We prove this by induction on $n$, being the cases $n=0$ and $n=1$ trivial; assume then the lemma is true for some $n$, and we shall work the case $n+1$. Take a minimal element $\overline{p}=(p_{1},\ldots,p_{n},p_{n+1})$ of $A$ (if there is none, we are done), and for any other minimal element $\overline{q}=(q_{1}, \ldots, q_{n}, q_{n+1})$ there must exist distinct $1\leq i,j\leq n+1$ such that $p_{i}>q_{i}$ and $p_{j}<q_{j}$ (keeping in mind that $p_{i}$, respectively $q_{j}$, may be infinite, in which case $q_{i}$, respectively $p_{j}$, may be any element of $\mathbb{N}$. $\overline{q}$ is then a minimal element of 
    \[A^{i}_{q_{i}}=\{(r_{1}, \ldots, r_{n}, r_{n+1})\in A : r_{i}=q_{i}\}\]
    with the order induced by $A$: if it weren't, it would also not be a minimal element of $A$. Define then 
    \[(A^{i}_{q_{i}})^{*}=\{(r_{1},\ldots, \widehat{r_{i}}, \ldots, r_{n+1})\in \N^{n} : (r_{1},\ldots, r_{i},\ldots, r_{n+1})\in A^{i}_{q_{i}}\},\]
    and it is obvious that $\overline{q}^{*}=(q_{1},\ldots,\widehat{q_{i}},\ldots,q_{n}, q_{n+1})$ is a minimal element of $(A^{i}_{q_{i}})^{*}$. This way, any minimal element of $A$ must be contained in 
    \[\bigcup_{i=1}^{n+1}\bigcup_{q_{i}=0}^{p_{i}-1}A^{i}_{q_{i}};\]
    since $(A^{i}_{q_{i}})^{*}$, as a subset of $\N^{n}$, has a finite number of minimal elements by induction hypothesis, we have that the number of minimal elements in $A$ is bounded from above (taking into consideration $\overline{p}$) by
    \[1+\sum_{i=1}^{n+1}\sum_{q_{i}=0}^{p_{i}-1}|(A^{i}_{q_{i}})^{*}|,\]
    being therefore finite as well.    
\end{proof}

\section{\tp{Proof of \Cref{unique and finite}}{Proof of unique and finite}}

In order to prove \Cref{unique and finite}, we begin with a lemma that proves half of the latter result \Cref{alternative definition}, and uses the accompanying notation.

\begin{lemma}\label{minmod is minimal}
    For every quantifier-free formula $\phi$, not necessarily $\T$-satisfiable, the subset $X$ of $\N^{S}$ is contained in the minimal elements of $\textbf{Card}_{\T,S}(\phi)$, where $(n_{\s})_{\s\in S}\in X$ iff: there is a $\T$-interpretation $\A$ that satisfies $\phi$ with $(|\s^{\A}|)_{\s\in S}=(n_{\s})_{\s\in S}$; if $\B$ is a $\T$-interpretation that satisfies $\phi$ with $(|\s^{\B}|)_{\s\in S}\neq (n_{\s})_{\s\in S}$, then there is a $\s\in S$ such that $n_{\s}<|\s^{\B}|$.
\end{lemma}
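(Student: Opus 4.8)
The plan is to verify the inclusion directly from the two defining clauses of membership in $X$, reading the ``second half'' of the minimal-model-function definition as exactly the statement that an element of $X$ is $\leq$-minimal in $\textbf{Card}_{\T,S}(\phi)$ with respect to the componentwise order from \Cref{alternative definition}. No appeal to $\T$-satisfiability of $\phi$ is needed: if $\phi$ is $\T$-unsatisfiable, then the first clause in the definition of $X$ cannot be met, so $X=\emptyset$ and the inclusion holds vacuously, which is precisely why the lemma is phrased for arbitrary (not necessarily $\T$-satisfiable) $\phi$.

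First I would fix an arbitrary $(n_{\s})_{\s\in S}\in X$ and observe that its first defining clause --- the existence of a $\T$-interpretation $\A$ satisfying $\phi$ with $(|\s^{\A}|)_{\s\in S}=(n_{\s})_{\s\in S}$ --- is exactly the condition for membership in $\textbf{Card}_{\T,S}(\phi)$, so that $(n_{\s})_{\s\in S}\in\textbf{Card}_{\T,S}(\phi)$. It then remains only to check that this tuple has no strictly smaller element in $\textbf{Card}_{\T,S}(\phi)$ under the order of \Cref{alternative definition}.

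For minimality I would argue by contradiction: suppose there is $(m_{\s})_{\s\in S}\in\textbf{Card}_{\T,S}(\phi)$ with $(m_{\s})_{\s\in S}\leq (n_{\s})_{\s\in S}$ and $(m_{\s})_{\s\in S}\neq (n_{\s})_{\s\in S}$. By definition of $\textbf{Card}_{\T,S}(\phi)$, fix a $\T$-interpretation $\B$ satisfying $\phi$ with $(|\s^{\B}|)_{\s\in S}=(m_{\s})_{\s\in S}$. Since $(|\s^{\B}|)_{\s\in S}\neq (n_{\s})_{\s\in S}$, the second defining clause of $X$ applied to $\B$ yields some $\s_{0}\in S$ with $n_{\s_{0}}<|\s_{0}^{\B}|=m_{\s_{0}}$; but $(m_{\s})_{\s\in S}\leq (n_{\s})_{\s\in S}$ gives $m_{\s_{0}}\leq n_{\s_{0}}$, a contradiction. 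Hence $(n_{\s})_{\s\in S}$ is a $\leq$-minimal element of $\textbf{Card}_{\T,S}(\phi)$, which is the claim.

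I do not expect a genuine obstacle: the statement is a straightforward unpacking of definitions, and is the ``easy half'' of \Cref{alternative definition} (the half that, unlike the converse inclusion, does not require $\phi$ to be $\T$-satisfiable). The only points deserving a line of care are the vacuous handling of the $\T$-unsatisfiable case and being explicit that ``minimal'' here means admitting no strictly smaller element under the componentwise order on $\N^{S}$ --- so that the second clause in the definition of $X$ can be invoked against precisely such a hypothetical strictly smaller element, which is also why the hypothesis $(m_{\s})_{\s\in S}\neq (n_{\s})_{\s\in S}$ is exactly what is available to trigger that clause.
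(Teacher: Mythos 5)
Your proof is correct and follows essentially the same route as the paper's: dispose of the $\T$-unsatisfiable case vacuously, note that the first clause of membership in $X$ gives membership in $\textbf{Card}_{\T,S}(\phi)$, and derive a contradiction from a hypothetical strictly smaller element by invoking the second clause. No gaps.
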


\begin{proof}
    This is obvious if $\phi$ is not $\T$-satisfiable, since in that case both $\textbf{Card}_{\T,S}(\phi)$, and the set $X$, are empty.

    So, assume that $\phi$ is $\T$-satisfiable, and let $(n_{\s})_{\s\in S}$ be in $X$: because there exists a $\T$-interpretation $\A$ that satisfies $\phi$ with $|\s^{\A}|=n_{\s}$, for all $\s\in S$, this means $(n_{\s})_{\s\in S}\in \textbf{Card}_{\T,S}(\phi)$; so assume that $(n_{\s})_{\s\in S}$ is not a minimal element. There must exist then a $\T$-interpretation $\B$ that satisfies $\phi$ with $(|\s^{\B}|)_{\s\in S}<(n_{\s})_{\s\in S}$, meaning that $|\s^{\B}|\leq n_{\s}$ for all $\s\in S$, and for at least one $\s_{0}\in S$ one has $|\s_{0}^{\B}|<n_{\s_{0}}$; but, in particular, $(|\s^{\B}|)_{\s\in S}\neq (n_{\s})_{\s\in S}$, what should imply from the fact that $(n_{\s})_{\s\in S}$ is in $X$ that there exists $\s\in S$ such that $n_{\s}<|\s^{\B}|$, leading to a contradiction. To summarize, $(n_{\s})_{\s\in S}$ is minimal.
\end{proof}

\uniqueandfinite*

\begin{proof}
    From \Cref{minmod is minimal}, we know that $X$ is contained in $\minimal(\textbf{Card}_{\T,S}(\phi))$. If $S$ has cardinality $m$, and 
    $\textbf{Card}_{\T,S}(\phi)$ is not empty, it can be seen in a natural way as a subset of $\N^{m}$, and by \Cref{technical lemma} it follows that $\minimal(\textbf{Card}_{\T,S}(\phi))$ is finite, what finishes the proof.
\end{proof}

\section{\tp{Proof of \Cref{alternative definition}}{Proof of alternative definition}}

\alternativedefinitionminmod*

\begin{proof}
One direction of this result is corresponds to \Cref{minmod is minimal}. For the other, suppose $(|\s^{\A}|)_{\s\in S}$ is a minimal element of $\textbf{Card}_{\T,S}(\phi)$: of course, $\A$ is a $\T$-interpretation that satisfies $\phi$, so $\phi$ is $\T$-satisfiable, and the first condition for $(|\s^{\A}|)_{\s\in S}$ to lie in $\minmod_{\T,S}(\phi)$ is satisfied. So, suppose that $\B$ is another $\T$-interpretation that satisfies $\phi$, and that $(|\s^{\B}|)_{\s\in S}\neq (|\s^{\A}|)_{\s\in S}$: since $(|\s^{\B}|)_{\s\in S}$ is in $\textbf{Card}_{\T,S}(\phi)$, and $(|\s^{\A}|)_{\s\in S}$ is a minimal element of the same set, we must have that: either $(|\s^{\B}|)_{\s\in S}\geq (|\s^{\A}|)_{\s\in S}$, and since both are different there must exist $\s\in S$ such that $|\s^{\B}|>|\s^{\A}|$; or $(|\s^{\B}|)_{\s\in S}$ and $(|\s^{\A}|)_{\s\in S}$ are not comparable, meaning there exists $\s, \s_{*}\in S$ such that $|\s_{*}^{\A}|>|\s_{*}^{\B}|$, and $|\s^{\B}|>|\s^{\A}|$, as we wished to show. So $(|\s^{\A}|)_{\s\in S}$ is in $\minmod_{\T,S}(\phi)$, and thus the equality this set with $\minimal(\textbf{Card}_{\T,S}(\phi))$ is proved.

\end{proof}

\section{\tp{Proof of \Cref{CMMF+FMP}}{Proof of CMMF+FMP}}

\CMMFplusFMP*

\begin{proof}
    \begin{enumerate}
        \item Suppose first that $\T$ has the finite model property with respect to $S$: then, for any $\T$-satisfiable, quantifier-free formula $\phi$, there is a $\T$-interpretation $\A$ that satisfies $\phi$ with $\s^{\A}$ finite for each $\s\in S$. Then, either $(|\s^{\A}|)_{\s\in S}$ itself lies in $\minmod_{\T,S}(\phi)$ (and of course $|\s^{\A}|\in\mathbb{N}$ for each $\s\in S$), or there is $(m_{\s})_{\s\in S}\in \minmod_{\T,S}(\phi)$, different from $(|\s^{\A}|)_{\s\in S}$, with $m_{\s}\leq |\s^{\A}|$ for each $\s\in S$, and $m_{\s}<|\s^{\A}|$ for at least one $\s\in S$ (and of course $m_{\s}$ is then in $\mathbb{N}$). 

        Indeed, suppose that the latter does not happen, and we will prove that this forces $(|\s^{\A}|)_{\s\in S}$ to be in $\minmod_{\T,S}(\phi)$. We already have that there is a $\T$-interpretation $\B$ that satisfies $\phi$ with $|\s^{\B}|=|\s^{\A}|$ for each $\s\in S$, namely $\A$ itself. Now, if there are $\T$-interpretations $\C$ that satisfy $\phi$ with $(|\s^{\C}|)_{\s\in S}<(|\s^{\A}|)_{\s\in S}$, we can take a $\C^{\prime}$ with minimal $(|\s^{\C^{\prime}}|)_{\s\in S}$, and it will lie in $\minmod_{\T,S}(\phi)$, against our hypothesis that no element in this set is strictly less than $(|\s^{\A}|)_{\s\in S}$.

        Now, for the reciprocal: take a $\T$-satisfiable, quantifier-free formula $\phi$, and we know there is an $(n_{\s})_{\s\in S}$ in $\minmod_{\T,S}(\phi)$ such that $n_{\s}\in\mathbb{N}$ for each $\s\in S$. But, from the definition of $\minmod_{\T,S}$, there is a $\T$-interpretation $\A$ that satisfies $\phi$ with $|\s^{\A}|=n_{\s}\in\mathbb{N}$ for each $\s\in S$, and of course this means each $\s^{\A}$ is finite for $\s\in S$.
        \item Suppose first that $\T$ is stably finite, and that there is (for a $\T$-satisfiable, quantifier-free $\phi$) an $(n_{\s})_{\s\in S}$ in $\minmod_{\T,S}(\phi)$ such that $n_{\s}=\aleph_{0}$ for some $\s\in S$. From the definition of $\minmod_{\T,S}$ there is then a $\T$-interpretation $\A$ that satisfies $\phi$ with $|\s^{\A}|=n_{\s}$ for all $\s\in S$; and, from the fact that $\T$ is stably finite with respect to $S$, we get that there is a $\T$-interpretation $\B$ that satisfies $\phi$ with $\s^{\B}$ finite, and $|\s^{\B}|\leq|\s^{\A}|$, both for every $\s\in S$. $\B$ is then a $\T$-interpretation that satisfies $\phi$ with $(|\s^{\B}|)_{\s\in S}\neq(n_{\s})_{\s\in S}$ (since one $n_{\s}$ equals $\aleph_{0}$, while all $|\s^{\B}|$ are in $\mathbb{N}$), and $|\s^{\B}|\leq n_{\s}$ for all $\s\in S$, contradicting the fact that $(n_{\s})_{\s\in S}$ is supposed to be in $\minmod_{\T,S}(\phi)$.

        Reciprocally, take a $\T$-satisfiable, quantifier-free formula $\phi$, and a $\T$-interpretation $\A$ that satisfies $\phi$: we state that, either there is an $(n_{\s})_{\s\in S}$ in $\minmod_{\T,S}(\phi)$ such that $(n_{\s})_{\s\in S}<(|\s^{\A}|)_{\s\in S}$, or $(|\s^{\A}|)_{\s\in S}$ itself is in $\minmod_{\T,S}(\phi)$. Indeed, suppose the former does not happen: we already know that there is a $\T$-interpretation $\B$ that satisfies $\phi$ with $(|\s^{\B}|)_{\s\in S}=(|\s^{\A}|)_{\s\in S}$, namely $\A$ itself. Then, suppose that there are $\T$-interpretations $\C$ that satisfy $\phi$ with $(|\s^{\C}|)_{\s\in S}<(|\s^{\A}|)_{\s\in S}$, and take a $\C^{\prime}$ among them with minimal $(|\s^{\C^{\prime}}|)_{\s\in S}$: it follows that $(|\s^{\C^{\prime}}|)_{\s\in S}$ is in $\minmod_{\T,S}(\phi)$, contradicting our assumption that no element in $\minmod_{\T,S}(\phi)$ is strictly less than $(|\s^{\A}|)_{\s\in S}$. 
        
        So, to summarize, there is $(n_{\s})_{\s\in S}\leq (|\s^{\A}|)_{\s\in S}$ in $\minmod_{\T,S}(\phi)$: from the hypothesis, we have that $n_{\s}\in\mathbb{N}$ for each $\s\in S$; at the same time, from the definition of $\minmod_{\T,S}$, there is a $\T$-interpretation $\D$ that satisfies $\phi$ with $|\s^{\D}|=n_{\s}$ for each $\s\in S$. We therefore reach the conclusion that $\D$ is a $\T$-interpretation that satisfies $\phi$ with $\s^{\D}$ finite, and $|\s^{\D}|\leq|\s^{\A}|$, both for every $\s\in S$, meaning $\T$ is stably finite with respect to $S$.
    \end{enumerate}
\end{proof}





\section{\tp{Proof of \Cref{CMMF+FM=>FW}}{Proof of CMMF+FM=>FW}}

\CMMFplusFMimpliesFW*

\begin{proof}
    Assume that $S=\{\s_{1},\ldots,\s_{n}\}$, and let $x_{j}^{i}$ be fresh variables of sort $\s_{i}$. We define a function $wit$ from $\qf(\Sigma)$ into itself by making
    \[wit(\phi)=\phi\wedge\bigwedge_{i=1}^{n}\bigwedge_{j=1}^{m_{i}}x_{j}^{i}=x_{j}^{i}\] 
    if there is a tuple $(m_{1},\ldots,m_{n})\in \minmod_{\T,S}(\phi)$ where all $m_{i}$ are finite (which is always possible to find if $\phi$ is $\T$-satisfiable since $\T$ has the finite model property with respect to $S$, as proven in \Cref{CMMF+FMP}), and otherwise $\wit(\phi)=\phi$. We state that this is a witness of $\T$, being first of all computable: indeed, $\minmod_{\T,S}(\phi)$ can be found algorithmically, and is always finite, allowing us to check whether there is a tuple with all finite coordinates in it. It is then obvious that $\phi$ and $\Exists{\overarrow{x}}wit(\phi)$ are $\T$-equivalent, for $\overarrow{x}=\vars(wit(\phi))\setminus\vars(\phi)$ equal to $\{x_{j}^{i} : 1\leq i\leq n, 1\leq j\leq m_{i}\}$ if $\phi$ is $\T$-satisfiable and $\emptyset$ otherwise, since $\phi$ and $wit(\phi)$ are themselves equivalent and thus $\T$-equivalent.

    Now, let $\A$ be a $\T$-interpretation that satisfies $\phi$ with $(|\s_{1}^{\A}|,\ldots,|\s_{n}^{\A}|)=(m_{1},\ldots,m_{n})$: we define a second $\T$-interpretation $\A^{\prime}$ by changing the value assigned by $\A$ only to the variables $x_{j}^{i}$, so that the map $x\in\{x_{1}^{i},\ldots,x_{m_{i}}^{i}\}\mapsto x^{\A^{\prime}}\in\s_{i}^{\A}$ becomes a bijection (this is possible as both $\{x_{1}^{i},\ldots,x_{m_{i}}^{i}\}$ and $\s_{i}^{\A}$ have cardinality $m_{i}$). Of course, $\A^{\prime}$ still satisfies $\phi$, and thus $wit(\phi)$, but since $x\in\{x_{1}^{i},\ldots,x_{m_{i}}^{i}\}\mapsto x^{\A^{\prime}}\in\s_{i}^{\A}$ is a bijection we now have that $\vars_{\s_{i}}(wit(\phi))^{\A^{\prime}}=\s_{i}^{\A^{\prime}}$ for each $1\leq i\leq n$, what finishes proving that $wit$ is indeed a witness.
\end{proof}

\section{\tp{Proof of \Cref{SM+ES=>CMMF}}{Proof of SM+ES=>CMMF}}

\begin{restatable}{lemma}{CMMFofsubset}\label{CMMF of subset}
    If $\T$ has a computable minimal model function with respect to $S$, then it has a computable minimal model function with respect to any subset $S^{\prime}\subseteq S$.
\end{restatable}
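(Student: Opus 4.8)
The plan is to obtain $\minmod_{\T,S'}$ by post-processing the given function $\minmod_{\T,S}$: on input $\phi$, compute the finite set $\minmod_{\T,S}(\phi)$, project every tuple in it onto the coordinates indexed by $S'$, and then delete the non-minimal tuples from the resulting finite subset of $\N^{S'}$. Projection, the order on tuples over $\N$, and extraction of $\leq$-minimal elements are all effective on finite sets, so this defines a computable function. It then remains to check that it is a minimal model function with respect to $S'$, i.e.\ that for every $\T$-satisfiable $\phi$ it returns exactly $\minmod_{\T,S'}(\phi)$; on $\T$-unsatisfiable inputs nothing needs to be verified, since a minimal model function is unconstrained there and the output is in any case a legitimate finite subset of $\N^{S'}$.

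For the verification I would use \Cref{alternative definition}, which (as recorded in the appendix preamble) gives $\minmod_{\T,S}(\phi)=\minimal(\textbf{Card}_{\T,S}(\phi))$ and $\minmod_{\T,S'}(\phi)=\minimal(\textbf{Card}_{\T,S'}(\phi))$ for $\T$-satisfiable $\phi$. Writing $\pi$ for the projection onto the $S'$-coordinates, one has $\textbf{Card}_{\T,S'}(\phi)=\pi(\textbf{Card}_{\T,S}(\phi))$, since for a $\T$-interpretation $\A$ satisfying $\phi$ the tuple $(|\s^{\A}|)_{\s\in S'}$ is literally $\pi\bigl((|\s^{\A}|)_{\s\in S}\bigr)$. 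So the whole statement reduces to the purely order-theoretic identity
\[\minimal\bigl(\pi(A)\bigr)=\minimal\bigl(\pi(\minimal(A))\bigr),\qquad A=\textbf{Card}_{\T,S}(\phi).\]

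I would prove this identity for an arbitrary $A$, using only that the relevant product order is well-founded: by Löwenheim--Skolem its minimal elements lie in $\N^{S}$, and $\N=\mathbb{N}\cup\{\aleph_{0}\}$ has no infinite strictly descending chain, so every element of $A$ dominates some element of $\minimal(A)$. For ``$\subseteq$'': given $b\in\minimal(\pi(A))$, write $b=\pi(a)$ with $a\in A$ and pick $m\in\minimal(A)$ with $m\leq a$; then $\pi(m)\leq b$ with $\pi(m)\in\pi(A)$, so minimality of $b$ forces $\pi(m)=b$, whence $b\in\pi(\minimal(A))$, and being minimal in the larger set $\pi(A)$ it is a fortiori minimal in $\pi(\minimal(A))$. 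For ``$\supseteq$'': if $b\in\minimal(\pi(\minimal(A)))$ failed to be minimal in $\pi(A)$, some $c=\pi(a)<b$ with $a\in A$ would yield, via a minimal $m\leq a$, an element $\pi(m)\leq c<b$ of $\pi(\minimal(A))$, contradicting minimality of $b$. Combining everything, the algorithm returns $\minimal(\pi(\minmod_{\T,S}(\phi)))=\minimal(\pi(\textbf{Card}_{\T,S}(\phi)))=\minimal(\textbf{Card}_{\T,S'}(\phi))=\minmod_{\T,S'}(\phi)$, as required. The only genuinely delicate point is this order-theoretic identity, and within it the well-foundedness remark that makes ``every element of $A$ dominates a minimal element'' available; the rest is routine bookkeeping, kept short by the reduction through \Cref{alternative definition}.
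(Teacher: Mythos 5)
Your construction is the same as the paper's---project $\minmod_{\T,S}(\phi)$ onto the $S'$-coordinates and retain the $\leq$-minimal tuples---but your verification route is genuinely different and tighter. The paper checks the two clauses of the definition of a minimal model function by hand, re-deriving along the way the fact that any non-strictly-dominated cardinality tuple of a model of $\phi$ must itself lie in $\minmod_{\T,S}(\phi)$; you instead invoke \Cref{alternative definition} once, observe that $\textbf{Card}_{\T,S'}(\phi)=\pi(\textbf{Card}_{\T,S}(\phi))$, and reduce the whole thing to the clean order-theoretic identity $\minimal(\pi(A))=\minimal(\pi(\minimal(A)))$, which you then prove from well-foundedness. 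This is a real simplification: the model-theoretic content is isolated in one line, and the rest is reusable lattice bookkeeping. Both directions of your identity proof are correct. One small wobble worth fixing: you justify ``every element of $A$ dominates a minimal element of $A$'' by observing that the \emph{minimal} elements lie in $\N^{S}$ (via Löwenheim--Skolem) and that $\N$ has no infinite strictly descending chain---but $\textbf{Card}_{\T,S}(\phi)$ itself may contain tuples with coordinates beyond $\aleph_0$, and Löwenheim--Skolem does not by itself hand you a smaller tuple \emph{below} a given large one in the product order. The right justification is simply that the class of cardinals is well-ordered, hence the finite product order is well-founded, hence every element of $A$ dominates a minimal one; Löwenheim--Skolem is only needed afterwards, to place those minimal elements in $\N^{S}$ so that the output is a legal value of the function. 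With that one sentence repaired, the argument is complete and correct.
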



\begin{proof}
    For any $f:S\rightarrow\N$, element of $\N^{S}$, consider the function $S^{\prime}(f):S^{\prime}\rightarrow\N$ that is just the restriction of $f$ to $S^{\prime}$, that is $S^{\prime}(f)=f|_{S^{\prime}}$; for an $X\in\mathcal{P}(\N^{S})$, we define $S^{\prime}(X)=\{S^{\prime}(f) : f\in X\}$. The function $S^{\prime}:\N^{S}\rightarrow\N^{S^{\prime}}$ is then itself computable, given $S$ (and thus $S^{\prime}$) is finite.
    
    Suppose then that $\minmod_{\T,S}$ is computable: we state that
    \[\minmod_{\T,S^{\prime}}(\phi)=\minimal\big(S^{\prime}\circ\minmod_{\T,S}(\phi)\big)\]
    is a minimal model function with respect to $S^{\prime}$, being obviously computable as a composition of computable functions, and being its output always finite. Take a $\T$-satisfiable, quantifier-free formula $\phi$, and any element $(n_{\s})_{\s\in S^{\prime}}$ in $\minimal(S^{\prime}\circ\minmod_{\T,S}(\phi))$: we prove that it is also in $\minmod_{\T,S^{\prime}}(\phi)$.

    \begin{enumerate}
        \item Take a $(m_{\s})_{\s\in S}$ in $\minmod_{\T,S}(\phi)$ such that $m_{\s}=n_{\s}$ for all $\s\in S^{\prime}$; there is a $\T$-interpretation $\A$ that satisfies $\phi$ with $|\s^{\A}|=m_{\s}$ for all $\s\in S$, and of course this means that $|\s^{\A}|=n_{\s}$ for all $\s\in S^{\prime}$.

        \item Let $\B$ be a $\T$-interpretation that satisfies $\phi$ with $(|\s^{\B}|)_{\s\in S^{\prime}}\neq (n_{\s})_{\s\in S^{\prime}}$: assume, for a proof by contradiction, that there is no $\s\in S^{\prime}$ such that $|\s^{\B}|>n_{\s}$, meaning that $(|\s^{\B}|)_{\s\in S^{\prime}}<(n_{\s})_{\s\in S^{\prime}}$. Now, we state that there must exist a $(m_{\s}^{\prime})_{\s\in S}$ in $\minmod_{\T,S}(\phi)$ with $(m_{\s}^{\prime})_{\s\in S}\leq (|\s^{\B}|)_{\s\in S}$: if there isn't such an element different from $(|\s^{\B}|)_{\s\in S}$, then $(|\s^{\B}|)_{\s\in S}$ itself is in $\minmod_{\T,S}(\phi)$ and is (of course) less than or equal to itself. 
        
        Indeed, suppose that there is no element in $\minmod_{\T,S}(\phi)$ strictly less than $(|\s^{\B}|)_{\s\in S}$: $\B$ is already a $\T$-interpretation that satisfies $\phi$ by choice of $\B$. Now, if there are $\T$-interpretations $\C$ that satisfy $\phi$ with $(|\s^{\C}|)_{\s\in S}\neq(|\s^{\B}|)_{\s\in S}$, we can take a minimal one $\C^{\prime}$ among them (so that it lies in $\minmod_{\T,S}(\phi)$): since we are assuming that no element of $\minmod_{\T,S}(\phi)$ is strictly less than $(|\s^{\B}|)_{\s\in S}$, we cannot have $(|\s^{\C^{\prime}}|_{\s\in S})_{\s\in S}<(|\s^{\B}|)_{\s\in S}$. So: either $(|\s^{\C}|)_{\s\in S}\geq (|\s^{\C^{\prime}}|)_{\s\in S}>(|\s^{\B}|)_{\s\in S}$ (and thus there is a $\s\in S$ such that $|\s^{\B}|<|\s^{\C}|$); or $(|\s^{\C}|_{\s\in S})_{\s\in S}$ and $(|\s^{\C^{\prime}}|)_{\s\in S}$, and $(|\s^{\B}|)_{\s\in S}$ are not comparable (and again there is a $\s\in S$ such that $|\s^{\B}|<|\s^{\C}|$). To summarize, $(|\s^{\B}|)_{\s\in S}$ is in $\minmod_{\T,S}(\phi)$.

        So define $(n_{\s}^{\prime})_{\s\in S^{\prime}}$ to be $S^{\prime}((m_{\s}^{\prime})_{\s\in S})$, and thus an element of $S^{\prime}\circ\minmod_{\T,S}(\phi)$: we have that $(n_{\s}^{\prime})_{\s\in S^{\prime}}\leq (|\s^{\B}|)_{\s\in S}<(n_{\s})_{\s\in S^{\prime}}$, contradicting the fact that $(n_{\s})_{\s\in S^{\prime}}$ is a minimal element of $S^{\prime}\circ\minmod_{\T,S}(\phi)$. So there is a $\s\in S^{\prime}$ such that $|\s^{\B}|>n_{\s}$, and thus $(n_{\s})_{\s\in S^{\prime}}$ is in $\minmod_{\T,S^{\prime}}(\phi)$.

    \end{enumerate}

    Reciprocally, take an element $(n_{\s})_{\s\in S^{\prime}}$ in $\minmod_{\T,S^{\prime}}(\phi)$, and we prove that it is in $\minimal(S^{\prime}\circ\minmod_{\T,S}(\phi))$. Take a $\T$-interpretation $\A$ that satisfies $\phi$ such that $(|\s^{\A}|)_{\s\in S}$ is in
    \[\minimal\{(|\s^{\A}|)_{\s\in S} : \text{$\A$ is a $\T$-interpretation that satisfies $\phi$ with $(|\s^{\A}|)_{\s\in S^{\prime}}=(n_{\s})_{\s\in S^{\prime}}$}\},\]
    and with that we prove that $(|\s^{\A}|)_{\s\in S}$ is in $\minmod_{\T,S}(\phi)$, and thus $(n_{\s})_{\s\in S^{\prime}}$ is in $S^{\prime}\circ\minmod_{\T,S}(\phi)$.

    \begin{enumerate}
        \item Of course $\A$ is a $\T$-interpretation that satisfies $\phi$, so the first condition for $(|\s^{\A}|)_{\s\in S}$ to be in $\minmod_{\T,S}(\phi)$ is met.

        \item Let $\B$ be a $\T$-interpretation that satisfies $\phi$ such that $(|\s^{\A}|)_{\s\in S}\neq (|\s^{\B}|)_{\s\in S}$: if $(|\s^{\B}|)_{\s\in S^{\prime}}=(n_{\s})_{\s\in S^{\prime}}$, by choice of $\A$ we have that it is impossible to have $(|\s^{\A}|)_{\s\in S}>(|\s^{\B}|)_{\s\in S}$, and so there exists a $\s\in S\setminus S^{\prime}$ such that $|\s^{\B}|>|\s^{\A}|$; if, instead, $(|\s^{\B}|)_{\s\in S^{\prime}}\neq (n_{\s})_{\s\in S^{\prime}}$, from the fact that $(n_{\s})_{\s\in S^{\prime}}$ is in $\minmod_{\T,S^{\prime}}(\phi)$ there must exist $\s\in S^{\prime}\subseteq S$ such that $|\s^{\B}|>|\s^{\A}|$. Either way, we have proved $(|\s^{\A}|)_{\s\in S}$ meets the criteria to be in $\minmod_{\T,S}(\phi)$, and so $(n_{\s})_{\s\in S^{\prime}}$ is in $S^{\prime}\circ\minmod_{\T,S}(\phi)$.
    \end{enumerate}

    Finally, suppose $(|\s^{\B}|)_{\s\in S}$ is in $\minmod_{\T,S}(\phi)$, and that $(|\s^{\B}|)_{\s\in S^{\prime}}\neq (n_{\s})_{\s\in S^{\prime}}$. $(n_{\s})_{\s\in S^{\prime}}\in\minmod_{\T,S^{\prime}}(\phi)$, implies there exists a $\s\in S^{\prime}$ such that $|\s^{\B}|>n_{\s}$, implying therefore that one does not have $(|\s^{\B}|)_{\s\in S^{\prime}}<(n_{\s})_{\s\in S^{\prime}}$ and thus $(n_{\s})_{\s\in S^{\prime}}$ is indeed minimal in $S^{\prime}\circ\minmod_{\T,S}(\phi)$.
\end{proof}

\SMplusESimpliesCMMF*

\begin{proof}
We will prove the result for $S=\S_{\Sigma}$, and by \Cref{CMMF of subset} the result follows for any $S\subseteq\S_{\Sigma}$. Suppose $\S_{\Sigma}=\{\s_{1},\ldots,\s_{n}\}$, and consider the set
\[A=\{(|\s_{1}^{\A}|,\ldots,|\s_{n}^{\A}|): \text{$\A$ is a $\T$-interpretation}\}\cap\N^{n},\]
which is non-empty if $\T$ is not contradictory thanks to \Cref{LowenheimSkolem}: indeed, if $\T$ has an interpretation, the theorem guarantees it has an interpretation whose domains are at most countable; then, because of \Cref{technical lemma}, there must exist $\T$-interpretations $\A_{1}$ through $\A_{m}$ corresponding to the minimal elements $(p^{1}_{1},\ldots,p^{1}_{n})$ through $(p^{m}_{1},\ldots,p^{m}_{n})$ of $A$. Notice that, in what is to come, we will use the fact that the underlying structures of two $\Sigma$-interpretations $\A$ and $\B$ with $(|\s_{1}^{\A}|,\ldots,|\s_{n}^{\A}|)=(|\s_{1}^{\B}|,\ldots,|\s_{n}^{\B}|)$ are isomorphic, thanks to the fact that $\Sigma$ is empty.

Now, take a quantifier-free formula $\phi$, and consider the sets $V_{\s}$ of variables of sort $\s$ in $\phi$, which can be found alghoritmically  (let $V$ denote the union of all of these sets); next, let $\eq{\phi}$ be the set of all possible equivalence relations on $V$ that respect sorts (that is, if $x$ and $y$ are of sort different sorts, we cannot have $xEy$ for any $E\in \eq{\phi}$), a set that is clearly finite and can be found alghoritmically. We can then define the set
    \[B(\phi)=\{(|V_{\s_{1}}/E|,\ldots,|V_{\s_{n}}/E|):\text{$E\in\eq{\phi}$, and $\delta^{E}_{V}$ and $\phi$ are equivalent}\}\]
keeping in mind that: $B(\phi)$ is empty iff $\phi$ is not $\T$-satisfiable; $V_{\s}/E$ is simply the quotient of $V_{\s}$ by the restriction of $E$ to $V_{\s}$; and whether $\delta^{E}_{V}$ and $\phi$ are equivalent follows from the satisfiability problem of equality logic, which is decidable. Let $(q^{1}_{1},\ldots,q^{1}_{n})$ through $(q^{k}_{1},\ldots,q^{k}_{n})$ be the elements of $B(\phi)$ (dependent of $\phi$, unlike the $(p^{i}_{1},\ldots,p^{i}_{n})$), and define 
\[\minmod_{\T,\S_{\Sigma}}(\phi)=\minimal\{(\max\{p^{i}_{1}, q^{j}_{1}\},\ldots,\max\{p^{i}_{n}, q^{j}_{n}\}) : 1\leq i\leq m, 1\leq j\leq k\},\]
where $\max\{p^{i}_{l}, q^{j}_{l}\}$ is $\aleph_{0}$ if $p^{i}_{l}$ is $\aleph_{0}$: this is a computable function, and its output is always finite. Assume now that $\phi$ is $\T$-satisfiable, take then in $\minimal\{(\max\{p^{i}_{1}, q^{j}_{1}\},\ldots,\max\{p^{i}_{n}, q^{j}_{n}\})\}$ an element $(r_{1},\ldots,r_{n})$, say equal to $(\max\{p^{i}_{1}, q^{j}_{1}\},\ldots,\max\{p^{i}_{n}, q^{j}_{n}\})$.
\begin{enumerate}
    \item Pick a $\Sigma$-interpretation $\A$ with $(|\s_{1}^{\A},\ldots,|\s_{n}^{\A}|)=(r_{1},\ldots,r_{n})$: there is an equivalence $E\in\eq{\phi}$ with $(|V_{\s_{1}}/E|,\ldots,|V_{\s_{n}}/E|)\leq (q^{j}_{1},\ldots,q^{j}_{n})$, and so by changing the values assigned by $\A$ to variables we obtain a $\Sigma$-interpretation $\A^{\prime}$ with $(|\s_{1}^{\A^{\prime}}|,\ldots,|\s_{n}^{\A^{\prime}}|)=(|\s_{1}^{\A}|,\ldots,|\s_{n}^{\A}|)$, and that satisfies $\delta^{E}_{V}$ and thus $\phi$; since $(|\s_{1}^{\A^{\prime}},\ldots,|\s_{n}^{\A^{\prime}}|)\geq (p^{i}_{1},\ldots,p^{i}_{n})$ and $\T$ is smooth, we obtain that $\A^{\prime}$ is a $\T$-interpretation, which in addition satisfies $\phi$, as we needed to show.
    
    \item Now let $\B$ be a $\T$-interpretation that satisfies $\phi$ with $(|\s_{1}^{\B}|,\ldots,|\s_{n}^{\B}|)\neq(r_{1},\ldots,r_{n})$, and since $\B$ satisfies $\phi$ it also satisfies $\delta^{E}_{V}$ for some $E\in\eq{\phi}$, and thus $(|\s_{1}^{\B}|,\ldots,|\s_{n}^{\B}|)\geq (q^{j^{\prime}}_{1}, \ldots,q^{j^{\prime}}_{n})$ for some $1\leq j^{\prime}\leq k$. Furthermore, given that $\B$ is a $\T$-interpretation, $(|\s_{1}^{\B}|,\ldots,|\s_{n}^{\B}|)\geq (p^{i^{\prime}}_{1},\ldots,p^{i^{\prime}}_{n})$ for some $1\leq i^{\prime}\leq m$, meaning 
    \[(|\s_{1}^{\B}|,\ldots,|\s_{n}^{\B}|)\geq (\max\{p^{i^{\prime}}_{1}, q^{j^{\prime}}_{1}\},\ldots,\max\{p^{i^{\prime}}_{n},q^{j^{\prime}}_{n}\}).\]
    If $(\max\{p^{i^{\prime}}_{1}, q^{j^{\prime}}_{1}\},\ldots,\max\{p^{i^{\prime}}_{n},q^{j^{\prime}}_{n}\})=(r_{1},\ldots,r_{n})$, given that $(|\s_{1}^{\B}|,\ldots,|\s_{n}^{\B}|)$ must be different from the latter, we obtain there exists $\s_{l}\in\S_{\Sigma}$ such that $|\s_{l}^{\B}|>r_{l}$; otherwise, since $(r_{1},\ldots,r_{n})$ is a minimal element of the set $\{(\max\{p^{i}_{1}, q^{j}_{1}\},\ldots,\max\{p^{i}_{n}, q^{j}_{n}\})\}$, there must exist $\s_{l}\in\S_{\Sigma}$ such that $\max\{p^{i^{\prime}}_{l}, q^{j^{\prime}}_{l}\}>r_{l}$, and from the fact that $|\s_{l}^{\B}|\geq \max\{p^{i^{\prime}}_{l}, q^{j^{\prime}}_{l}\}$ we get $|\s_{l}^{\B}|>r_{l}$, as we needed to show.
    
\end{enumerate}
Now, for the reciprocal, still assuming that $\phi$ is $\T$-satisfiable, suppose that $(r_{1},\ldots,r_{n})$ is in the set $\minmod_{\T,\S_{\Sigma}}(\phi)$. We know, from the fact that $(r_{1},\ldots,r_{n})$ is in $\minmod_{\T,\S_{\Sigma}}(\phi)$, that there is a $\T$-interpretation $\A$ that satisfies $\phi$ such that $(r_{1},\ldots,r_{n})=(|\s_{1}^{\A}|,\ldots,|\s_{n}^{\A}|)$: because $\A$ is a $\T$-interpretation, there exists a $1\leq i\leq m$ such that $(|\s_{1}^{\A}|,\ldots,|\s_{n}^{\A}|)\geq (p^{i}_{1},\ldots,p^{i}_{n})$. And, because $\A$ satisfies $\phi$, it must satisfy some $\delta^{E}_{V}$, and thus $(|\s_{1}^{\A}|,\ldots,|\s_{n}^{\A}|)\geq (|V_{\s_{1}}/E|,\ldots,|V_{\s_{n}}/E|)=(q^{j}_{1},\ldots,q^{j}_{n})$, for some $1\leq j\leq k$. Thus $(r_{1},\ldots,r_{n})\geq (\max\{p^{i}_{1},q^{j}_{1}\},\ldots,\max\{p^{i}_{n},q^{j}_{n}\})$. Now, we prove that there is no $(\max\{p^{i^{\prime}}_{1},q^{j^{\prime}}_{1}\},\ldots,\max\{p^{i^{\prime}}_{n},q^{j^{\prime}}_{n}\})$ such that $(r_{1},\ldots,r_{n})>(\max\{p^{i^{\prime}}_{1},q^{j^{\prime}}_{1}\},\ldots,\max\{p^{i^{\prime}}_{n},q^{j^{\prime}}_{n}\})$, what will finish proving that not only $(r_{1},\ldots,r_{n})$ is in $\{(\max\{p^{i}_{1}, q^{j}_{1}\},\ldots,\max\{p^{i}_{n}, q^{j}_{n}\})\}$, but that is also a minimal element of that set.

Suppose that is not true: since there is a $\T$-interpretation $\B$ with $(|\s_{1}^{\B}|,\ldots,|\s_{n}^{\B}|)=(p^{i^{\prime}}_{1},\ldots,p^{i^{\prime}}_{n})$, and since $\T$ is smooth, there is a $\T$-interpretation $\A$ with $(|\s_{1}^{\A}|,\ldots,|\s_{n}^{\A}|)=(\max\{p^{i^{\prime}}_{1},q^{j^{\prime}}_{1}\},\ldots,\max\{p^{i^{\prime}}_{n},q^{j^{\prime}}_{n}\})$. Now, because $(|\s_{1}^{\A}|,\ldots,|\s_{n}^{\A}|)\geq (q^{j^{\prime}}_{1},\ldots,q^{j^{\prime}}_{n})$, which equals $(|V_{\s_{1}}/E|,\ldots,|V_{\s_{n}}/E|)$ for some $E\in\eq{\phi}$, we can change the values assigned by $\A$ to variables so to obtain a $\T$-interpretation $\A^{\prime}$ that satisfies $\delta^{E}_{V}$, and thus $\phi$. Of course 
\[(|\s_{1}^{\A^{\prime}}|,\ldots,|\s_{n}^{\A^{\prime}}|)=(\max\{p^{i^{\prime}}_{1},q^{j^{\prime}}_{1}\},\ldots,\max\{p^{i^{\prime}}_{n},q^{j^{\prime}}_{n}\})<(r_{1},\ldots,r_{n}),\]
and in particular both are different, implying from the fact that $(r_{1},\ldots,r_{n})$ lies in $\minmod_{\T,\S_{\Sigma}}(\phi)$ that there must be $\s_{l}\in\S_{\Sigma}$ such that $|\s_{l}^{\A^{\prime}}|<r_{l}$, what clearly constitutes a contradiction and finishes the proof.
\end{proof}

\section{\tp{Proof of \Cref{-SI+ES=>CMMF}}{Proof of -SI+ES=>CMMF}}

\minusSIplusESimpliesCMMF*

\begin{proof}
    We will prove the result for $S=\S_{\Sigma}$, and by \Cref{CMMF of subset} the full theorem follows; furthermore, we assume that $\S_{\Sigma}=\{\s_{1},\ldots,\s_{n}\}$. So, we begin by stating that there are only finitely many elements in the set 
    \[\textbf{Card}_{\T}=\{(|\s_{1}^{\A}|,\ldots,|\s_{n}^{\A}|) : \text{$\A$ is a $\T$-interpretation}\}\cap\N^{n}.\]
    If that were not true, consider the sets $\textbf{Card}_{\T}^{i}=\{|\s_{i}^{\A}|:\text{$\A$ is a $\T$-interpretation}\}\cap \N$, and since $\textbf{Card}_{\T}\subseteq \textbf{Card}_{\T}^{1}\times\cdots\times\textbf{Card}_{\T}^{n}$ we get that some $\textbf{Card}_{\T}^{i}$ must also be infinite. Take then the $\T$-interpretations $\A_{j}$, for $j\in\mathbb{N}$, corresponding to the different elements of $\textbf{Card}_{\T}^{i}$, and we have the set $|\s_{i}^{\A_{j}}|$ must be unbounded. Now, take the set $\Gamma=\{\psi_{\geq k}^{\s_{i}} : k\in\mathbb{N}\}$, and any finite subset of $\Gamma$ is satisfied by some $\A_{j}$ since their domains of sort $\s_{i}$ are unbounded, meaning by \Cref{compactness} that $\Gamma$ is $\T$-satisfiable, and thus $\T$ has an interpretation with an infinite domain of sort $\s_{i}$: but then $\T$ is stably infinite with respect to $\{\s_{i}\}$, since the signature $\Sigma$ is empty, against our hypothesis.

    So let $(m^{1}_{1}, \ldots, m^{1}_{n})$ through $(m^{m}_{1},\ldots,m^{m}_{n})$ be all elements of $\textbf{Card}_{\T}$, corresponding to the $\T$-interpretations $\A_{1}$ through $\A_{m}$; let as well $V$ be the set of all variables of a quantifier-free formula $\phi$, $V_{\s_{i}}$ be the variables in $\phi$ of sort $\s_{i}$, and $\eq{\phi}$ the set of all equivalence relations $E$ on $V$ such that, if $x$ and $y$ are of the same sort, $xEy$, and such that $\phi$ and $\delta_{V}^{E}$ are equivalent. We then state that
    \begin{multline*}
        \minmod_{\T,S}(\phi)=\minimal\{(m_{1},\ldots,m_{n})\in\textbf{Card}_{\T} :\\ (m_{1},\ldots,m_{n})\geq (|V_{\s_{1}}/E|,\ldots,|V_{\s_{n}}/E|)\text{ for some $E\in\eq{\phi}$}\},
    \end{multline*}
    is a computable minimal model function: indeed, $\textbf{Card}_{\T}$, as a finite set, is computable;
    and so is the set of tuples $(|V_{\s_{1}}/E|,\ldots,|V_{\s_{n}}/E|)$, given the number of equivalence relations $E$ in $\eq{\phi}$ is bounded by $2^{2^{|V|}}$, and finding for which ones $\phi$ and $\delta_{V}^{E}$ are equivalent boils down to the satisfiability problem of equality logic. Of course, the output of this function is always finite, given $\textbf{Card}_{\T}$ is finite, and it is actually easy to prove that it is empty whenever $\phi$ is not $\T$-satisfiable. For what follows, suppose $\phi$ is $\T$-satisfiable instead.

    \begin{enumerate}
        \item Suppose that $(m_{1},\ldots,m_{n})$ is in $\minmod_{\T,S}(\phi)$, and therefore equals $(|\s_{1}^{\A}|,\ldots,|\s_{n}^{\A}|)$ for some $\T$-interpretation $\A$ that satisfies $\phi$: let then $E$ be equivalence relation on $V$ induced by $\A$, and we have that $(|\s_{1}^{\A}|,\ldots,|\s_{n}^{\A}|)\geq (|V_{\s_{1}}/E|,\ldots,|V_{\s_{n}}/E|)$, meaning $(m_{1},\ldots,m_{n})$ is in the set 
        \begin{multline*}
            M(\phi)=\{(m_{1},\ldots,m_{n})\in\textbf{Card}_{\T} : (m_{1},\ldots,m_{n})\geq (|V_{\s_{1}}/E|,\ldots,|V_{\s_{n}}/E|)\text{ for some $E\in\eq{\phi}$}\}.
        \end{multline*}
        Suppose, however, that it is not a minimal element, meaning there exist a $(m_{1}^{\prime},\ldots,m_{n}^{\prime})$ in $\textbf{Card}_{\T}$ (and thus a $\T$-interpretation $\B$ with $(|\s_{1}^{\B}|,\ldots,|\s_{n}^{\B}|)=(m_{1}^{\prime},\ldots,m_{n}^{\prime})$) with $(m_{1}^{\prime},\ldots,m_{n}^{\prime})<(m_{1},\ldots,m_{n})$, and an $E\in\eq{\phi}$ such that $(m_{1}^{\prime},\ldots,m_{n}^{\prime})\geq (|V_{\s_{1}}/E|,\ldots,|V_{\s_{n}}/E|)$. Because of the last inequality, we get that by changing the values assigned to variables in $\B$ we can make a new interpretation $\B^{\prime}$ with $(|\s_{1}^{\B^{\prime}}|,\ldots,|\s_{n}^{\B^{\prime}}|)=(|\s_{1}^{\B}|,\ldots,|\s_{n}^{\B}|)$ (and thus making of $\B^{\prime}$ a $\T$-interpretation since $\Sigma$ is empty) that satisfies $\delta_{V}^{E}$, and thus $\phi$. But $(m_{1},\ldots,m_{n})$ is strictly smaller than $(m_{1}^{\prime},\ldots,m_{n}^{\prime})$, contradicting the fact that the first is in $\minmod_{\T,S}(\phi)$ and thus proving that this element is minimal in the relevant set.

        \item Reciprocally, suppose that $(m_{1},\ldots,m_{n})$ is a minimal element of the set $M(\phi)$: to start with, $(m_{1},\ldots,m_{n})\in\textbf{Card}_{\T}$, and so there is a $\T$-interpretation $\A_{j}$ such that $(m_{1},\ldots,m_{n})=(|\s_{1}^{\A_{j}}|,\ldots,|\s_{n}^{\A_{j}}|)$; furthermore, there is an equivalence $E$ in $\eq{\phi}$ such that $(m_{1},\ldots,m_{n})\geq (|V_{\s_{1}}/E|,\ldots,|V_{\s_{n}}/E|)$. Thus, by changing the values assigned by $\A_{j}$ to variables we may obtain a $\T$-interpretation $\A$ that satisfies $\delta_{V}^{E}$ (and thus $\phi$) with $(m_{1},\ldots,m_{n})=(|\s_{1}^{\A}|\ldots,|\s_{n}^{\A}|)$.

        Finally, take a $\T$-interpretation $\B$ that satisfies $\phi$ with $(|\s_{1}^{\B}|\ldots,|\s_{n}^{\B}|)\neq (m_{1},\ldots,m_{n})$: take then the equivalence $E$ induced by $\B$, and we know that $(|\s_{1}^{\B}|\ldots,|\s_{n}^{\B}|)\geq (|V_{\s_{1}}/E|,\ldots,|V_{\s_{n}}/E|)$, meaning, since $(|\s_{1}^{\B}|\ldots,|\s_{n}^{\B}|)$ must lie in $\textbf{Card}_{\T}$, that this tuple is in $M(\phi)$. Since $(m_{1},\ldots,m_{n})$ is a minimal element of this set, we must have $(m_{1},\ldots,m_{n})\leq(|\s_{1}^{\B}|\ldots,|\s_{n}^{\B}|)$, and thus there exists $1\leq i\leq n$ such that $m_{i}<|\s_{i}^{\B}|$, proving $(m_{1},\ldots,m_{n})\in\minmod_{\T,S}(\phi)$ and finishing the proof.
    \end{enumerate}
\end{proof}

\section{\tp{Proof of \Cref{-FMP+ES+OS=>CMMF}}{Proof of -FMP+ES+OS=>CMMF}}

\minusFMPplusESplusOSimpliesCMMF*

\begin{proof}
    Since $\T$ does not have the finite model property, there must exist a $\T$-interpretation $\A$ with infinite domain (we may assume of cardinality $\aleph_{0}$ by \Cref{LowenheimSkolem}), and a quantifier-free formula $\phi$ that is satisfied by $\A$ but not any finite $\T$-interpretation: we can then easily prove that no $\T$-interpretation $\B$ has cardinality $|\vars_{\s}(\phi)^{\A}|<|\s^{\B}|<\aleph_{0}$. Indeed, suppose that were not the case: we could then define a $\T$-interpretation $\B^{\prime}$ by changing only the value assigned by $\B$ to the variables in $\vars_{\s}(\phi)$, so that $x^{\B^{\prime}}=y^{\B^{\prime}}$ iff $x^{\A}=y^{\A}$. Since the atomic subformulas of $\phi$ are either equalities or disequalities, and $\A$ satisfies $\phi$, we get that $\B^{\prime}$ satisfies $\phi$, leading to a contradiction.

    So suppose that $\{|\s^{\A}|:\text{$\A$ is a $\T$-interpretation}\}\cap\N$ equals $\{m_{1},\ldots,m_{n}\}$, for $m_{1}<\cdots<m_{n}=\aleph_{0}$. Let then: $\phi$ be a quantifier-free $\Sigma_{2}$-formula: $V$ be the set of its variables of sort $\s$; and $\eq{\phi}$ be the set of equivalence relations $E$ on $V$ such that a $\phi$ and $\delta_{V}^{E}$ are equivalent (which is empty iff $\phi$ is not satisfiable). Finally we define $M(\phi)=\min\{|V/E| : E\in\eq{\phi}\}$ (that will equal $\aleph_{0}$ iff $\phi$ is not $\T$-satisfiable), and we then state that the following is a minimal model function: 
    \[\minmod_{\T,S}(\phi)=\{N(\phi)\},\quad\text{where}\quad N(\phi)=\min\{m_{i} : \text{$m_{i}\geq M(\phi)$, $1\leq i\leq n$}\},\]
    where $S=\{\s\}$, and if there is no $m_{i}$ greater than or equal to $M(\phi)$, as usual $N(\phi)=\aleph_{0}$. This is of course a computable function, as finding $\eq{\phi}$ can be done algorithmically, and the output is definitely finite (indeed, it is a singleton). Now assume $\phi$ is $\T$-satisfiable, and we begin by proving that there is a $\T$-interpretation $\A$ that satisfies $\phi$ with $|\s^{\A}|=N(\phi)$: let $E\in\eq{\phi}$ be the equivalence with $|V/E|=M(\phi)$, and take a $\Sigma_{1}$-interpretation $\A^{\prime}$ that satisfies $\delta_{V}^{E}$ (and thus $\phi$); take as well a set $A$ with cardinality $N(\phi)-M(\phi)$ disjoint from the domain of $\A^{\prime}$. We can then define the $\Sigma_{1}$-interpretation $\A$ by making $\s^{\A}=V^{\A}\cup A$ (that has cardinality $N(\phi)$, making of $\A$ a $\T$-interpretation), and $x^{\A}=x^{\A^{\prime}}$ for any $x\in V$ and arbitrarily for all other variables. $\A$ satisfies $\phi$ as it still satisfies $\delta_{V}^{E}$, as we wished to show.
    
    So, suppose there is a $\T$-interpretation $\B$ that satisfies $\phi$ with $|\s^{\B}|<N(\phi)$: the equivalence $F$ induced by $\B$ on $V$ is of course in $\eq{\phi}$, yet $|V/F|\leq|\s^{\B}|<N(\phi)$; this leads to a contradiction as $M(\phi)=\min\{|V/E| : E\in\eq{\phi}\}\leq |\s^{\B}|$ and 
    $N(\phi)=\min\{m_{1},\ldots,m_{n} : M(\phi)\leq m_{i}\}\geq M(\phi)$,
    since $|\s^{\B}|$ must be in $\{m_{1},\ldots,m_{n}\}$.
\end{proof}

\section{\tp{Proof of \Cref{CV+-SI+-FMP+-CMMF}}{Proof of CV+-SI+-FMP+-CMMF}}

\lostofassumptions*

\begin{proof}
    First, we state that because $\T$ is not stably infinite, there is a $(k_{1},k_{2})\in\mathbb{N}^{2}$ such that no $\T$-interpretation $\A$ has $(|\s^{\A}|,|\s_{2}^{\A}|)>(k_{1},k_{2})$. Indeed, suppose that that is not true, and we affirm that there exists a $\T$-interpretation $\A$ with $|\s^{\A}|,|\s_{2}^{\A}|\geq\aleph_{0}$: were that not true, the set 
    \[\Gamma=\ax{\T}\cup\{\psi^{\s}_{\geq k_{1}}\wedge\psi^{\s_{2}}_{\geq k_{2}} : (k_{1},k_{2})\in \mathbb{N}^{2}\}\]
    would be contradictory; by the many-sorted compactness Theorem, one would then be able to find $(k_{1}^{1},k_{2}^{1})$ through $(k_{1}^{m},k_{2}^{m})$ in $\mathbb{N}^{2}$ such that
    \[\Gamma_{0}=\ax{\T}\cup\{\psi^{\s}_{\geq k_{1}^{j}}\wedge\psi^{\s_{2}}_{\geq k_{2}^{j}} : 1\leq j\leq m\}\]
    is contradictory. Yet, by hypothesis, there exists a $\T$-interpretation $\A$ such that 
    \[(|\s^{\A}|,|\s_{2}^{\A}|)>(\max\{k_{1}^{1},\ldots,k_{1}^{m}\},\max\{k_{2}^{1},\ldots,k_{2}^{m}\},\]
    which therefore satisfies $\Gamma_{0}$, leading to a contradiction. But, if there is a $\T$-interpretation $\A$ with $|\s^{\A},|\s_{2}^{\A}|\geq\aleph_{0}$, then $\T$ is necessarily stably infinite: indeed, take any quantifier-free formula $\phi$ and a $\T$-interpretation $\B$ that satisfies $\phi$; take as well sets $A_{1}$ and $A_{2}$ with cardinalities, respectively, $|\s^{\A}|$ through $|\s_{2}^{\A}|$, both disjoint from the domains of $\B$. We then define a $\T$-interpretation $\C$ by making: $\s^{\C}=\vars_{\s}(\phi)^{\B}\cup A_{1}$ and $\s_{2}^{\C}=\vars_{\s_{2}}(\phi)^{\B}\cup A_{2}$, which have cardinalities $|A_{1}|=|\s^{\A}|$ and $|A_{2}|=|\s_{2}^{\A}|$; $x^{\C}=x^{\B}$ for $x\in\vars_{\s}(\phi)$, and $u^{\C}=u^{\B}$ for $u\in\vars_{\s_{2}}(\phi)$; and $x^{\C}$ and $u^{\C}$ arbitrary values from, respectively, $\s^{\C}$ and $\s_{2}^{\C}$, for all other variables $x$ of sort $\s$ and $u$ of sort $\s_{2}$. Then it is true that $\C$ satisfies $\phi$, has $|\s^{\C}|,|\s_{2}^{\C}|\geq\aleph_{0}$, and $(|\s^{\C}|,|\s_{2}^{\C}|)=(|\s^{\A}|,|\s_{2}^{\A}|)$, making of $\C$ a $\T$-interpretation.

    Now, we state that there cannot exist $\T$-interpretations $\A$ and $\B$ such that $|\s^{\A}|>1$ and $|\s_{2}^{\B}|>1$: if that were the case, then $\T$ would not be convex. Indeed, suppose that there are such $\A$ and $\B$, and it is clear that 
    \[\vDash_{\T}\bigvee_{i=1}^{k_{1}}\bigvee_{j=i+1}^{k_{1}+1}(x_{i}=x_{j})\vee\bigvee_{p=1}^{k_{2}}\bigvee_{q=p+1}^{k_{2}+1}(u_{p}=u_{q}),\]
    for variables $x_{i}$ of sort $\s$, and $u_{p}$ of sort $\s_{2}$: we know that no $\T$-interpretation $\C$ can have $(|\s^{\C}|,|\s_{2}^{\C}|)>(k_{1},k_{2})$, so either $k_{1}>|\s^{\C}|$, and the pigeonhole principle guarantees the satisfaction of $\bigvee_{i=1}^{k_{1}}\bigvee_{j=i+1}^{k_{1}+1}(x_{i}=x_{j})$; or $k_{2}>|\s_{2}^{\C}|$, and we get that $\bigvee_{p=1}^{k_{2}}\bigvee_{q=p+1}^{k_{2}+1}(u_{p}=u_{q})$ is satisfied. But then we can have neither $\vDash_{\T}x_{i}=x_{j}$ nor $\vDash_{\T}u_{p}=u_{q}$, for any $1\leq i<j\leq k_{1}+1$ or $1\leq p<q\leq k_{2}+1$: changing only the value assigned to $x_{j}$ by $\A$ so that $x_{j}^{\A^{\prime}}\neq x_{i}^{\A}$, we get a $\T$-interpretation $\A^{\prime}$ that falsifies $x_{i}=x_{j}$; and by changing only the value assigned to $u_{q}$, this time by $\B$, so that $u_{q}^{\B^{\prime}}\neq u_{p}^{\B}$, we get a $\T$-interpretation $\B^{\prime}$ that falsifies $u_{p}=u_{q}$.

    So, assume, without loss of generality, that $|\s_{2}^{\A}|=1$ for all $\T$-interpretation $\A$: we state that, if $\{|\s^{\A}| : \text{$\A$ is a $\T$-interpretation}\}\cap \N$ is finite, then $\T$ has a computable minimal model function. Indeed, let $m_{1}<m_{2}<\cdots<m_{n-1}<m_{n}$ be the elements of this set; let as well $\phi$ be a $\Sigma_{2}$-formula, $V$ the set of its variables of sort $\s$, and $\eq{\phi}$ the set of equivalences $E$ on $V$ such that a $\Sigma_{2}$-interpretation $\A$ satisfies $\phi$ iff it satisfies $\delta_{V}^{E}$ and has $|\s_{2}^{\A}|=1$ (notice $\eq{\phi}$ can be found algorithmically through an analysis of $\phi$ as a propositional formula). Define $M(\phi)=\min\{|V/E| : E\in\eq{\phi}\}$ (which will equal $\aleph_{0}$, as usual, iff $\eq{\phi}$ is empty), and we then state that 
    \[\minmod_{\T,S}(\phi)=\{(N(\phi),1)\},\quad\text{where}\quad N(\phi)=\min\{m_{i} : \text{$m_{i}\geq M(\phi)$, $1\leq i\leq n$}\}\]
    and $S=\{\s,\s_{2}\}$ (if $M(\phi)$ is greater than all $m_{i}$, what happens iff $\phi$ is not $\T$-satisfiable, $N(\phi)$ then equals $\aleph_{0}$), is a minimal model function: it is computable, as $\eq(\phi)$ can be found algorithmically, and always outputs a finite set, of cardinality one actually. 
    We start by proving that, if $\phi$ is $\T$-satisfiable, there is a $\T$-interpretation $\A$ that satisfies $\phi$ with $(|\s^{\A}|,|\s_{2}^{\A}|)=(N,1)$. So, let $\A^{\prime}$ be a $\Sigma_{2}$-interpretation that satisfies $\delta_{V}^{E}$ for $E\in\eq{\phi}$ such that $|V/E|=M$, and thus $\phi$: let as well $A$ be a set with $N-M$ elements disjoint from $\s^{\A^{\prime}}$ and $\s_{2}^{\A^{\prime}}$. We define the $\Sigma_{2}$-interpretation $\A$ by making: $\s_{2}^{\A}=\s_{2}^{\A^{\prime}}$ (so $|\s_{2}^{\A}|=1$); $\s^{\A}=V^{\A}\cup A$ (which has $|V^{\A}|+|A|=|V/E|+(N-M)=M+(N-M)=N$ elements, thus making of $\A$ a $\T$-interpretation as $N\in\{m_{1},\ldots,m_{n}\}$); $u^{\A}$ equal to the only element of $\s_{2}^{\A}$, for any variable $u$ of sort $\s_{2}$; $x^{\A}=x^{\A^{\prime}}$ for any variable $x\in V$, and arbitrarily for all other variables of sort $\s$. Then $\A$ satisfies $\delta_{V}^{E}$ and thus $\phi$, as we wished to show, meaning $(N,1)$ satisfies the first condition to be in $\minmod_{\T,S}(\phi)$.
    
    Now, suppose there is a $\T$-interpretation $\B$ that satisfies $\phi$ with $(|\s^{\B}|,|\s_{2}^{\B}|)=(p,q)$ and $(p,q)\neq (N,1)$: since $\B$ is a $\T$-interpretation, we already get that $q=1$. Suppose then that $p<N$, and let $F$ be the equivalence induced by $\B$: of course $F\in\eq{\phi}$, but we have that $|V/F|=V^{\B}\leq |\s^{\B}|<N$. This contradicts the fact that $M=\min\{|V/E| : E\in\eq{\phi}\}$ and $N=\min\{m_{1}, \ldots,m_{n} : M\leq m_{i}\}$, once we remember that $|\s^{\B}|\in\{m_{1},\ldots,m_{n}\}$. It is, furthermore, obvious that there cannot be any element $(s,t)$ in $\minmod_{\T,S}(\phi)$ different from $(N,1)$, as we forcibly have $t=1$ and thus either $s>N$ or $N>s$.

    As we assumed that $\T$ does not have a computable minimal model function, we must have then that $\{|\s^{\A}| : \text{$\A$ is a $\T$-interpretation}\}\cap \N$ is infinite: we can then prove that $\T$ has the finite model property, contradicting our hypothesis and proving the theorem. So, take a quantifier-free formula $\phi$ and a $\T$-interpretation $\A$ that satisfies $\phi$: because $\{|\s^{\A}| : \text{$\A$ is a $\T$-interpretation}\}\cap \mathbb{N}$ is infinite, there is a $\T$-interpretation $\B$ with $\aleph_{0}>|\s^{\B}|\geq |\vars_{\s}(\phi)^{\A}|$; let $A$ be a set with $|\s^{\B}|-|\vars_{\s}(\phi)^{\A}|$ elements, disjoint from $\s^{\A}$ and $\s_{2}^{\A}$. We define a $\Sigma_{2}$-interpretation $\C$ by making: $\s_{2}^{\C}=\s_{2}^{\A}$ (which then has cardinality $1$); $\s^{\C}=\vars_{\s}(\phi)^{\A}\cup A$ (which then has cardinality $|\s^{\B}|$, making of $\C$ a $\T$-interpretation); $u^{\C}$ equal to the only element of $\s_{2}^{\C}$ for all variables $u$ of sort $\s_{2}$; $x^{\C}=x^{\A}$ for all $x\in\vars_{\s}(\phi)$, and arbitrarily for all other variables of sort $\s$. We then have that $\C$ is a $\T$-interpretation that obviously satisfies $\phi$, with $|\s_{2}^{\C}|=1$ and $|\s^{\C}|<\aleph_{0}$, and thus we are done.
    
\end{proof}

\section{\tp{Proof of \Cref{mainresult}}{Proof of mainresult}}

\mainresult*

The ``if" part of \Cref{mainresult} was proven in \Cref{relationships}, as well as in \cite{CADE,FroCoS}.
The ``only if" part amounts to providing examples for theories that admit all the remaining
combinations of properties.
In \Cref{sec:newtheoriesapp,sec:oldtheoriesapp}, we provide
axiomatizations of examples.
Then, the remaining cases can be obtained by applying, on these examples,
\Cref{thm:addsresult,thm:addfresult,thm:addncresult}.
The name of the theories, as well as the properties that they admit,
are given in \Cref{tab-summary-app}.
Concretely, every line in the table corresponds to a combination of properties, defined by the first 8 columns, where
$T$ and $F$ correspond to the property and its negation,
respectively.
The table omits lines that correspond to combinations
that are impossible for all signature types.
Cells with red background correspond to impossible combinations,
white cells with black font correspond to examples,
and white cells with red font correspond to unknown cases.
The table distinguishes between empty and non-empty signatures,
as well as one-sorted (OS) and many-sorted (MS) signatures.

\newpage

\afterpage{
\renewcommand{\arraystretch}{1.0}
\begin{small}
\begin{longtable}{|P{0.5cm}|P{0.5cm}|P{0.5cm}|P{0.5cm}|P{0.5cm}|P{0.5cm}|P{0.5cm}|P{0.5cm}||P{1cm}P{1cm}P{1cm}P{1cm}|P{0.5cm}|}
\hline
\multicolumn{8}{|c|}{} & \multicolumn{2}{c|}{Empty} & \multicolumn{2}{c|}{Non-empty} & \\
\hline
$\stainf$ & $\smooth$ & $\finwit$ & $\strfinwit$ & $\convex$ & $\finmodpro$ & $\stafin$ & $\cmmf$ & \multicolumn{1}{c|}{OS} & \multicolumn{1}{c|}{MS} &\multicolumn{1}{c|}{OS} & \multicolumn{1}{c|}{MS}& $N^{\underline{o}}$\\\hhline{-------------}

\multirow{36}{*}{$T$ } & \multirow{20}{*}{$T$ } &
\multirow{12}{*}{$T$ } & \multirow{4}{*}{$T$ } &
\multirow{2}{*}{$T$ } & \multirow{2}{*}{$T$ } &
\multirow{2}{*}{$T$ } & \multirow{1}{*}{$T$ } &
\multicolumn{1}{c|}{$\Tgeqn$}&\multicolumn{1}{c|}{$\adds{\Tgeqn}$}&
\multicolumn{1}{c|}{$\addf{\Tgeqn}$}&\multicolumn{1}{c|}{$\addf{\adds{\Tgeqn}}$}&
1\\\hhline{~~~~~~~------}

 &  &
 &  &
 &  &
 & \multirow{1}{*}{$F$ } &
\multicolumn{2}{c|}{\Cref{SM+ES=>CMMF}\cellcolor{red!15}}&
\multicolumn{2}{c|}{\textcolor{red}{Unicorns 2.0}}&
2\\\hhline{~~~~---------}

 &  &
 &  &
\multirow{2}{*}{$F$ } & \multirow{2}{*}{$T$ } &
\multirow{2}{*}{$T$ } & \multirow{1}{*}{$T$ } &
\multicolumn{2}{c|}{\cellcolor{red!15}}&
\multicolumn{1}{c|}{$\addnc{\Tgeqn}$}&\multicolumn{1}{c|}{$\addnc{\adds{\Tgeqn}}$}&
3\\\hhline{~~~~~~~-*{1}{>{\arrayrulecolor{red!15}}|--}*{1}{>{\arrayrulecolor{black}}|-}--}

 &  &
 &  &
 &  &
 & \multirow{1}{*}{$F$ } &
\multicolumn{2}{c|}{\multirow{-2}{*}{\cite{CADE}\cellcolor{red!15}}}&
\multicolumn{2}{c|}{\textcolor{red}{Unicorns 2.0}}&
4\\\hhline{~~~----------}

 &  &
 & \multirow{8}{*}{$F$ } &
\multirow{4}{*}{$T$ } & \multirow{4}{*}{$T$ } &
\multirow{2}{*}{$T$ } & \multirow{1}{*}{$T$ } &
\multicolumn{2}{c|}{\cellcolor{red!15}}&
\multicolumn{2}{c|}{\textcolor{red}{Unicorns 3.0}}&
5\\\hhline{~~~~~~~-*{1}{>{\arrayrulecolor{red!15}}|--}*{1}{>{\arrayrulecolor{black}}|-}--}

 &  &
 &  &
 &  &
 & \multirow{1}{*}{$F$ } &
\multicolumn{2}{c|}{\multirow{-2}{*}{\cite{FroCoS}\cellcolor{red!15}}} &
\multicolumn{1}{c|}{$\TM$}&\multicolumn{1}{c|}{$\adds{\TM}$}&
6\\\hhline{~~~~~~-------}

 &  &
 &  &
 &  &
\multirow{2}{*}{$F$ } & \multirow{1}{*}{$T$ } &
\multicolumn{1}{c|}{\cellcolor{red!15}}&\multicolumn{1}{c|}{$\Ttwothree$}&
\multicolumn{1}{c|}{\cellcolor{red!15}}&\multicolumn{1}{c|}{$\addf{\Ttwothree}$}&
7\\\hhline{~~~~~~~-*{1}{>{\arrayrulecolor{red!15}}|-}*{1}{>{\arrayrulecolor{black}}|-}*{1}{>{\arrayrulecolor{red!15}}|-}*{1}{>{\arrayrulecolor{black}}|-}-}

 &  &
 &  &
 &  &
 & \multirow{1}{*}{$F$ } &
\multicolumn{1}{c|}{\multirow{-2}{*}{\cite{FroCoS}\cellcolor{red!15}}}&\multicolumn{1}{c|}{\Cref{SM+ES=>CMMF}\cellcolor{red!15}}&
\multicolumn{1}{c|}{\multirow{-2}{*}{\cite{FroCoS}\cellcolor{red!15}}}&\multicolumn{1}{c|}{$\Tseventeen$}&
8\\\hhline{~~~~---------}

 &  &
 &  &
\multirow{4}{*}{$F$ } & \multirow{4}{*}{$T$ } &
\multirow{2}{*}{$T$ } & \multirow{1}{*}{$T$ } &
\multicolumn{2}{c|}{\cellcolor{red!15}}&
\multicolumn{2}{c|}{\textcolor{red}{Unicorns 3.0}}&
9\\\hhline{~~~~~~~-*{1}{>{\arrayrulecolor{red!15}}|--}*{1}{>{\arrayrulecolor{black}}|-}--}

 &  &
 &  &
 &  &
 & \multirow{1}{*}{$F$ } &
\multicolumn{2}{c|}{\cellcolor{red!15}}&
\multicolumn{1}{c|}{$\TsM$}&\multicolumn{1}{c|}{$\adds{\TsM}$}&
10\\\hhline{~~~~~~--*{1}{>{\arrayrulecolor{red!15}}|--}*{1}{>{\arrayrulecolor{black}}|-}--}

 &  &
 &  &
 &  &
\multirow{2}{*}{$F$ } & \multirow{1}{*}{$T$ } &
\multicolumn{2}{c|}{\cellcolor{red!15}}&
\multicolumn{1}{c|}{\cellcolor{red!15}}&\multicolumn{1}{c|}{$\addnc{\Ttwothree}$}&
11\\\hhline{~~~~~~~-*{1}{>{\arrayrulecolor{red!15}}|---}*{1}{>{\arrayrulecolor{black}}|-}-}

 &  &
 &  &
 &  &
 & \multirow{1}{*}{$F$ } &
\multicolumn{2}{c|}{\multirow{-4}{*}{\cite{CADE}\cellcolor{red!15}}}&
\multicolumn{1}{c|}{\multirow{-2}{*}{\cite{FroCoS}\cellcolor{red!15}}}&\multicolumn{1}{c|}{$\Teighteen$}&
12\\\hhline{~~-----------}

 &  &
\multirow{8}{*}{$F$ } & \multirow{8}{*}{$F$ } &
\multirow{4}{*}{$T$ } & \multirow{2}{*}{$T$ } &
\multirow{1}{*}{$T$ } & \multirow{1}{*}{$F$ } &
\multicolumn{2}{c|}{\cellcolor{red!15}}&
\multicolumn{1}{c|}{$\Tbbs$}&\multicolumn{1}{c|}{$\Tbbeq$}&
13\\\hhline{~~~~~~--*{1}{>{\arrayrulecolor{red!15}}|--}*{1}{>{\arrayrulecolor{black}}|-}--}

 &  &
 &  &
 &  &
\multirow{1}{*}{$F$ } & \multirow{1}{*}{$F$ } &
\multicolumn{2}{c|}{\multirow{-2}{*}{\cite{CADE}\cellcolor{red!15}}}&
\multicolumn{1}{c|}{\cite{FroCoS}\cellcolor{red!15}}&\multicolumn{1}{c|}{$\Tbbtwo$}&
14\\\hhline{~~~~~--------}

 &  &
 &  &
 & \multirow{2}{*}{$F$ } & 
\multirow{2}{*}{$F$ } & \multirow{1}{*}{$T$ } &
\multicolumn{1}{c|}{$\Tinfty$}&\multicolumn{1}{c|}{$\adds{\Tinfty}$}&
\multicolumn{1}{c|}{$\addf{\Tinfty}$}&\multicolumn{1}{c|}{$\addf{\adds{\Tinfty}}$}&
15\\\hhline{~~~~~~~------}

 &  &
 &  &
 &  & 
 & \multirow{1}{*}{$F$ } &
\multicolumn{2}{c|}{\Cref{SM+ES=>CMMF}\cellcolor{red!15}}&
\multicolumn{1}{c|}{$\Tfifteen$}&\multicolumn{1}{c|}{$\adds{\Tfifteen}$}&
16\\\hhline{~~~~~--------}

 &  &
 &  &
\multirow{4}{*}{$F$ } & \multirow{2}{*}{$T$ } & 
\multirow{1}{*}{$T$ } & \multirow{1}{*}{$F$ } &
\multicolumn{2}{c|}{\cellcolor{red!15}}&
\multicolumn{1}{c|}{$\Tbbvee$}&\multicolumn{1}{c|}{$\adds{\Tbbvee}$}&
17\\\hhline{~~~~~~--*{1}{>{\arrayrulecolor{red!15}}|--}*{1}{>{\arrayrulecolor{black}}|-}--}

 &  &
 &  &
 &  & 
\multirow{1}{*}{$F$ } & \multirow{1}{*}{$F$ } &
\multicolumn{2}{c|}{\cellcolor{red!15}}&
\multicolumn{1}{c|}{\cite{FroCoS}\cellcolor{red!15}}&\multicolumn{1}{c|}{$\Tbbveeeq$}&
18\\\hhline{~~~~~---*{1}{>{\arrayrulecolor{red!15}}|--}*{1}{>{\arrayrulecolor{black}}|-}--}

 &  &
 &  &
 & \multirow{2}{*}{$F$ } & 
\multirow{2}{*}{$F$ } & \multirow{1}{*}{$T$ } &
\multicolumn{2}{c|}{\cellcolor{red!15}}&
\multicolumn{1}{c|}{$\addnc{\Tinfty}$}&\multicolumn{1}{c|}{$\addnc{\adds{\Tinfty}}$}&
19\\\hhline{~~~~~~~-*{1}{>{\arrayrulecolor{red!15}}|--}*{1}{>{\arrayrulecolor{black}}|-}--}

 &  &
 &  &
 &  & 
 & \multirow{1}{*}{$F$ } &
\multicolumn{2}{c|}{\multirow{-4}{*}{\cite{CADE}\cellcolor{red!15}}}&
\multicolumn{1}{c|}{$\Tsixteen$}&\multicolumn{1}{c|}{$\adds{\Tsixteen}$}&
20\\\hhline{~------------}

 & \multirow{16}{*}{$F$ } &
\multirow{8}{*}{$T$ } & \multirow{8}{*}{$F$ } &
\multirow{4}{*}{$T$ } & \multirow{4}{*}{$T$ } & 
\multirow{2}{*}{$T$ } & \multirow{1}{*}{$T$ } &
\multicolumn{1}{c|}{$\Teven$}&\multicolumn{1}{c|}{$\adds{\Teven}$}&
\multicolumn{1}{c|}{$\addf{\Teven}$}&\multicolumn{1}{c|}{$\addf{\adds{\Teven}}$}&
21\\\hhline{~~~~~~~------}

 &  &
 &  &
 &  & 
 & \multirow{1}{*}{$F$ } &
\multicolumn{1}{c|}{$\Tone$}&\multicolumn{1}{c|}{$\adds{\Tone}$}&
\multicolumn{1}{c|}{$\addf{\Tone}$}&\multicolumn{1}{c|}{$\addf{\adds{\Tone}}$}&
22\\\hhline{~~~~~~-------}

 &  &
 &  &
 &  & 
\multirow{2}{*}{$F$ } & \multirow{1}{*}{$T$ } &
\multicolumn{1}{c|}{\cellcolor{red!15}}&\multicolumn{1}{c|}{$\Tupinfty$}&
\multicolumn{1}{c|}{\cellcolor{red!15}}&\multicolumn{1}{c|}{$\adds{\Tupinfty}$}&
23\\\hhline{~~~~~~~-*{1}{>{\arrayrulecolor{red!15}}|-}*{1}{>{\arrayrulecolor{black}}|-}*{1}{>{\arrayrulecolor{red!15}}|-}*{1}{>{\arrayrulecolor{black}}|-}-}

 &  &
 &  &
 &  & 
 & \multirow{1}{*}{$F$ } &
\multicolumn{1}{c|}{\multirow{-2}{*}{\cite{FroCoS}\cellcolor{red!15}}}&\multicolumn{1}{c|}{$\Ttwo$}&
\multicolumn{1}{c|}{\multirow{-2}{*}{\cite{FroCoS}\cellcolor{red!15}}}&\multicolumn{1}{c|}{$\adds{\Ttwo}$}&
24\\\hhline{~~~~---------}

 &  &
 &  &
\multirow{4}{*}{$F$ } & \multirow{4}{*}{$T$ } & 
\multirow{2}{*}{$T$ } & \multirow{1}{*}{$T$ } &
\multicolumn{2}{c|}{\cellcolor{red!15}}&
\multicolumn{1}{c|}{$\addnc{\Teven}$}&\multicolumn{1}{c|}{$\addnc{\adds{\Teven}}$}&
25\\\hhline{~~~~~~~-*{1}{>{\arrayrulecolor{red!15}}|--}*{1}{>{\arrayrulecolor{black}}|-}--}

 &  &
 &  &
 &  & 
 & \multirow{1}{*}{$F$ } &
\multicolumn{2}{c|}{\cellcolor{red!15}}&
\multicolumn{1}{c|}{$\addnc{\Tone}$}&\multicolumn{1}{c|}{$\addnc{\adds{\Tone}}$}&
26\\\hhline{~~~~~~--*{1}{>{\arrayrulecolor{red!15}}|--}*{1}{>{\arrayrulecolor{black}}|-}--}

 &  &
 &  &
 &  & 
\multirow{2}{*}{$F$ } & \multirow{1}{*}{$T$ } &
\multicolumn{2}{c|}{\cellcolor{red!15}}&
\multicolumn{1}{c|}{\cellcolor{red!15}}&\multicolumn{1}{c|}{$\addnc{\Tupinfty}$}&
27\\\hhline{~~~~~~~-*{1}{>{\arrayrulecolor{red!15}}|---}*{1}{>{\arrayrulecolor{black}}|-}-}

 &  &
 &  &
 &  & 
 & \multirow{1}{*}{$F$ } &
\multicolumn{2}{c|}{\multirow{-4}{*}{\cite{CADE}\cellcolor{red!15}}}&
\multicolumn{1}{c|}{\multirow{-2}{*}{\cite{FroCoS}\cellcolor{red!15}}}&\multicolumn{1}{c|}{$\addnc{\Ttwo}$}&
28\\\hhline{~~-----------}

 &  &
\multirow{8}{*}{$F$ } & \multirow{8}{*}{$F$ } &
\multirow{4}{*}{$T$ } & \multirow{2}{*}{$T$ } &
\multirow{1}{*}{$T$ } & \multirow{1}{*}{$F$ } &
\multicolumn{1}{c|}{$\Tbb$}&\multicolumn{1}{c|}{$\adds{\Tbb}$}&
\multicolumn{1}{c|}{$\addf{\Tbb}$}&\multicolumn{1}{c|}{$\addf{\adds{\Tbb}}$}&
29\\\hhline{~~~~~~-------}

 &  &
 &  &
 &  &
\multirow{1}{*}{$F$ } & \multirow{1}{*}{$F$ } &
\multicolumn{1}{c|}{\cite{FroCoS}\cellcolor{red!15}}&\multicolumn{1}{c|}{$\Tbbinfty$}&
\multicolumn{1}{c|}{\cite{FroCoS}\cellcolor{red!15}}&\multicolumn{1}{c|}{$\addf{\Tbbinfty}$}&
30\\\hhline{~~~~~--------}

 &  &
 &  &
 & \multirow{2}{*}{$F$ } &
\multirow{2}{*}{$F$ } & \multirow{1}{*}{$T$ } &
\multicolumn{1}{c|}{$\Tninfty$}&\multicolumn{1}{c|}{$\adds{\Tinfty}$}&
\multicolumn{1}{c|}{$\addf{\Tninfty}$}&\multicolumn{1}{c|}{$\addf{\adds{\Tninfty}}$}&
31\\\hhline{~~~~~~~------}

 &  &
 &  &
 &  &
 & \multirow{1}{*}{$F$ } &
\multicolumn{1}{c|}{\Cref{-FMP+ES+OS=>CMMF}\cellcolor{red!15}}&\multicolumn{1}{c|}{$\Tthree$}&
\multicolumn{1}{c|}{$\Tseven$}&\multicolumn{1}{c|}{$\addf{\Tthree}$}&
32\\\hhline{~~~~---------}

 &  &
 &  &
\multirow{4}{*}{$F$ } & \multirow{2}{*}{$T$ } &
\multirow{1}{*}{$T$ } & \multirow{1}{*}{$F$ } &
\multicolumn{2}{c|}{\cellcolor{red!15}}&
\multicolumn{1}{c|}{$\addnc{\Tbb}$}&\multicolumn{1}{c|}{$\addnc{\adds{\Tbb}}$}&
33\\\hhline{~~~~~~--*{1}{>{\arrayrulecolor{red!15}}|--}*{1}{>{\arrayrulecolor{black}}|-}--}

 &  &
 &  &
 &  &
\multirow{1}{*}{$F$ } & \multirow{1}{*}{$F$ } &
\multicolumn{2}{c|}{\cellcolor{red!15}}&
\multicolumn{1}{c|}{\cite{FroCoS}\cellcolor{red!15}}&\multicolumn{1}{c|}{$\addnc{\Tbbinfty}$}&
34\\\hhline{~~~~~~--*{1}{>{\arrayrulecolor{red!15}}|--}*{1}{>{\arrayrulecolor{black}}|-}--}

 &  &
 &  &
 & \multirow{2}{*}{$F$ } &
\multirow{2}{*}{$F$ } & \multirow{1}{*}{$T$ } &
\multicolumn{2}{c|}{\cellcolor{red!15}}&
\multicolumn{1}{c|}{$\addnc{\Tninfty}$}&\multicolumn{1}{c|}{$\addnc{\adds{\Tninfty}}$}&
35\\\hhline{~~~~~~~-*{1}{>{\arrayrulecolor{red!15}}|--}*{1}{>{\arrayrulecolor{black}}|-}--}

 &  &
 &  &
 &  &
 & \multirow{1}{*}{$F$ } &
\multicolumn{2}{c|}{\multirow{-4}{*}{\cite{CADE}\cellcolor{red!15}}}&
\multicolumn{1}{c|}{$\Teight$}&\multicolumn{1}{c|}{$\addnc{\Tthree}$}&
36\\\hhline{-------------}

\multirow{20}{*}{$F$ } & \multirow{20}{*}{$F$ } &
\multirow{12}{*}{$T$ } & \multirow{4}{*}{$T$ } &
\multirow{2}{*}{$T$ } & \multirow{2}{*}{$T$ } &
\multirow{2}{*}{$T$ } & \multirow{1}{*}{$T$ } &
\multicolumn{1}{c|}{$\Tleqone$}&\multicolumn{1}{c|}{$\adds{\Tleqone}$}&
\multicolumn{1}{c|}{$\addf{\Tleqone}$}&\multicolumn{1}{c|}{$\addf{\adds{\Tleqone}}$}&
37\\\hhline{~~~~~~~------}

 &  &
 &  &
 &  &
 & \multirow{1}{*}{$F$ } &
\multicolumn{1}{c|}{\Cref{-SI+ES=>CMMF}\cellcolor{red!15}}&\multicolumn{3}{c|}{\textcolor{red}{Unicorns 2.0}}&
38\\\hhline{~~~~---------}

 &  &
 &  &
\multirow{2}{*}{$F$ } & \multirow{2}{*}{$T$ } &
\multirow{2}{*}{$T$ } & \multirow{1}{*}{$T$ } &
\multicolumn{1}{c|}{$\Tleqn$}&\multicolumn{1}{c|}{$\adds{\Tleqn}$}&
\multicolumn{1}{c|}{$\addf{\Tleqn}$}&\multicolumn{1}{c|}{$\addf{\adds{\Tleqn}}$}&
39\\\hhline{~~~~~~~------}

 &  &
 &  &
 &  &
 & \multirow{1}{*}{$F$ } &
\multicolumn{1}{c|}{\Cref{-SI+ES=>CMMF}\cellcolor{red!15}}&\multicolumn{3}{c|}{\textcolor{red}{Unicorns 2.0}}&
40\\\hhline{~~~----------}

 &  &
 & \multirow{8}{*}{$F$ } &
\multirow{4}{*}{$T$ } & \multirow{4}{*}{$T$ } &
\multirow{2}{*}{$T$ } & \multirow{1}{*}{$T$ } &
\multicolumn{1}{c|}{\cellcolor{red!15}}&\multicolumn{1}{c|}{$\Toneodd$}&
\multicolumn{1}{c|}{$\Toddneq$}&\multicolumn{1}{c|}{$\addf{\Toneodd}$}&
41\\\hhline{~~~~~~~-*{1}{>{\arrayrulecolor{red!15}}|-}*{1}{>{\arrayrulecolor{black}}|-}---}

 &  &
 &  &
 &  &
 & \multirow{1}{*}{$F$ } &
\multicolumn{1}{c|}{\multirow{-2}{*}{\cite{CADE}\cellcolor{red!15}}}&\multicolumn{1}{c|}{$\Televen$}&
\multicolumn{1}{c|}{$\Tfourteen$}&\multicolumn{1}{c|}{$\addf{\Televen}$}&
42\\\hhline{~~~~~~-------}

 &  &
 &  &
 &  &
\multirow{2}{*}{$F$ } & \multirow{1}{*}{$T$ } &
\multicolumn{1}{c|}{\cellcolor{red!15}}&\multicolumn{1}{c|}{$\Ttwothreethree$}&
\multicolumn{1}{c|}{\cellcolor{red!15}}&\multicolumn{1}{c|}{$\Tnequpinfty$}&
43\\\hhline{~~~~~~~-*{1}{>{\arrayrulecolor{red!15}}|-}*{1}{>{\arrayrulecolor{black}}|-}*{1}{>{\arrayrulecolor{red!15}}|-}*{1}{>{\arrayrulecolor{black}}|-}-}

 &  & 
 &  & 
 &  & 
 & \multirow{1}{*}{$F$ } &
\multicolumn{1}{c|}{\multirow{-2}{*}{\cite{FroCoS}\cellcolor{red!15}}}&\multicolumn{1}{c|}{$\Tthreetwo$}&
\multicolumn{1}{c|}{\multirow{-2}{*}{\cite{FroCoS}\cellcolor{red!15}}}&\multicolumn{1}{c|}{$\Tthirteen$}&
44\\\hhline{~~~~---------}

 &  &
 &  &
\multirow{4}{*}{$F$ } & \multirow{4}{*}{$T$ } &
\multirow{2}{*}{$T$ } & \multirow{1}{*}{$T$ } &
\multicolumn{1}{c|}{$\Tmn$}&\multicolumn{1}{c|}{$\adds{\Tmn}$}&
\multicolumn{1}{c|}{$\addf{\Tmn}$}&\multicolumn{1}{c|}{$\addf{\adds{\Tmn}}$}&
45\\\hhline{~~~~~~~------}

 &  &
 &  &
 &  &
 & \multirow{1}{*}{$F$ } &
\multicolumn{1}{c|}{\Cref{-SI+ES=>CMMF}\cellcolor{red!15}}&\multicolumn{1}{c|}{$\Tten$}&
\multicolumn{1}{c|}{$\Ttwelve$}&\multicolumn{1}{c|}{$\addf{\Tten}$}&
46\\\hhline{~~~~~~-------}

 &  &
 &  &
 &  &
\multirow{2}{*}{$F$ } & \multirow{1}{*}{$T$ } &
\multicolumn{1}{c|}{\cellcolor{red!15}}&\multicolumn{1}{c|}{$\Tmninfty$}&
\multicolumn{1}{c|}{\cellcolor{red!15}}&\multicolumn{1}{c|}{$\addf{\Tmninfty}$}&
47\\\hhline{~~~~~~~-*{1}{>{\arrayrulecolor{red!15}}|-}*{1}{>{\arrayrulecolor{black}}|-}*{1}{>{\arrayrulecolor{red!15}}|-}*{1}{>{\arrayrulecolor{black}}|-}-}

 &  &
 &  &
 &  &
 & \multirow{1}{*}{$F$ } &
\multicolumn{1}{c|}{\multirow{-2}{*}{\cite{FroCoS}\cellcolor{red!15}}}&\multicolumn{1}{c|}{$\Tsix$}&
\multicolumn{1}{c|}{\multirow{-2}{*}{\cite{FroCoS}\cellcolor{red!15}}}&\multicolumn{1}{c|}{$\addf{\Tsix}$}&
48\\\hhline{~~-----------}

 &  &
\multirow{8}{*}{$F$ } & \multirow{8}{*}{$F$ } &
\multirow{4}{*}{$T$ } & \multirow{2}{*}{$T$ } &
\multirow{1}{*}{$T$ } & \multirow{1}{*}{$F$ } &
\multicolumn{1}{c|}{\cellcolor{red!15}}&\multicolumn{1}{c|}{$\Tonebb$}&
\multicolumn{1}{c|}{$\Tbboneneq$}&\multicolumn{1}{c|}{$\addf{\Tonebb}$}&
49\\\hhline{~~~~~~--*{1}{>{\arrayrulecolor{red!15}}|-}*{1}{>{\arrayrulecolor{black}}|-}---}

 &  &
 &  &
 &  &
\multirow{1}{*}{$F$ } & \multirow{1}{*}{$F$ } &
\multicolumn{1}{c|}{\cellcolor{red!15}}&\multicolumn{1}{c|}{$\Tbbinftythree$}&
\multicolumn{1}{c|}{\cite{FroCoS}\cellcolor{red!15}}&\multicolumn{1}{c|}{$\Tbbneqinfty$}&
50\\\hhline{~~~~~---*{1}{>{\arrayrulecolor{red!15}}|-}*{1}{>{\arrayrulecolor{black}}|-}---}

 &  &
 &  &
 & \multirow{2}{*}{$F$ } &
\multirow{2}{*}{$F$ } & \multirow{1}{*}{$T$ } &
\multicolumn{1}{c|}{\cellcolor{red!15}}&\multicolumn{1}{c|}{$\Toneinfty$}&
\multicolumn{1}{c|}{$\Toneinftyneq$}&\multicolumn{1}{c|}{$\addf{\Toneinfty}$}&
51\\\hhline{~~~~~~~-*{1}{>{\arrayrulecolor{red!15}}|-}*{1}{>{\arrayrulecolor{black}}|-}---}

 &  &
 &  &
 &  &
 & \multirow{1}{*}{$F$ } &
\multicolumn{1}{c|}{\cellcolor{red!15}}&\multicolumn{1}{c|}{$\Tthreeone$}&
\multicolumn{1}{c|}{$\Tnine$}&\multicolumn{1}{c|}{$\adds{\Tnine}$}&
52\\\hhline{~~~~----*{1}{>{\arrayrulecolor{red!15}}|-}*{1}{>{\arrayrulecolor{black}}|-}---}

 &  &
 &  &
\multirow{4}{*}{$F$ } & \multirow{2}{*}{$T$ } &
\multirow{1}{*}{$T$ } & \multirow{1}{*}{$F$ } &
\multicolumn{1}{c|}{\cellcolor{red!15}}&\multicolumn{1}{c|}{$\Tnbb$}&
\multicolumn{1}{c|}{$\Tbbneq$}&\multicolumn{1}{c|}{$\addf{\Tnbb}$}&
53\\\hhline{~~~~~~--*{1}{>{\arrayrulecolor{red!15}}|-}*{1}{>{\arrayrulecolor{black}}|-}---}

 &  &
 &  &
 &  &
\multirow{1}{*}{$F$ } & \multirow{1}{*}{$F$ } &
\multicolumn{1}{c|}{\cellcolor{red!15}}&\multicolumn{1}{c|}{$\Tmnbb$}&
\multicolumn{1}{c|}{\cite{FroCoS}\cellcolor{red!15}}&\multicolumn{1}{c|}{$\addf{\Tmnbb}$}&
54\\\hhline{~~~~~---*{1}{>{\arrayrulecolor{red!15}}|-}*{1}{>{\arrayrulecolor{black}}|-}---}

 &  &
 &  &
 & \multirow{2}{*}{$F$ } &
\multirow{2}{*}{$F$ } & \multirow{1}{*}{$T$ } &
\multicolumn{1}{c|}{\cellcolor{red!15}}&\multicolumn{1}{c|}{$\Ttwoinfty$}&
\multicolumn{1}{c|}{$\Ttwoinftyneq$}&\multicolumn{1}{c|}{$\addf{\Ttwoinfty}$}&
55\\\hhline{~~~~~~~-*{1}{>{\arrayrulecolor{red!15}}|-}*{1}{>{\arrayrulecolor{black}}|-}---}

 &  &
 &  &
 &  &
 & \multirow{1}{*}{$F$ } &
\multicolumn{1}{c|}{\multirow{-8}{*}{\cite{FroCoS}\cellcolor{red!15}}}&\multicolumn{1}{c|}{$\Tfive$}&
\multicolumn{1}{c|}{$\Tfour$}&\multicolumn{1}{c|}{$\addf{\Tfive}$}&
56\\\hhline{-------------}
\caption[Caption for LOF]{Summarizing table.}
\label{tab-summary-app}
\end{longtable}
\end{small}
\renewcommand{\arraystretch}{1}}

\newpage

\section{New theories}
\label{sec:newtheoriesapp}

We start by reminding the reader that in \cite{CADE} we had proved the existence of a non-computable function $f:\mathbb{N}\setminus\{0\}\rightarrow\{0,1\}$ such that $f(1)=1$ and, for every $k\in\mathbb{N}\setminus\{0\}$, 
\[|\{1\leq i\leq 2^{k} : f(i)=0\}|=|\{1\leq i\leq 2^{k} : f(i)=1\}|\]
From that, we then define the function $g:\mathbb{N}\setminus\{0\}\rightarrow\mathbb{N}\setminus\{0\}$ by making 
$g(n)=n+\sum_{i=1}^{n}f(i)=\sum_{i=1}^{n}(f(i)+1)$, and we prove in \Cref{lem:g-exists} that $g$ is: increasing; unbounded; non-surjective; ~non-computable; and that there exists an increasing computable function $\rho:\mathbb{N}\setminus\{0\}\rightarrow\mathbb{N}\setminus\{0\}$ such that
$g\circ\rho$ is computable.

\subsection{\tp{$\Tone$}{Tone}}

We define $\Tone$ as the $\Sigma_{1}$-theory with axiomatization
\[\{\psi^{\s}_{\geq g(n)}\vee\bigvee_{i=1}^{n}\psi^{\s}_{=g(i)} : n\in\mathbb{N}\setminus\{0\}\}.\]

\begin{lemma}\label{SI Tone}
    The $\Sigma_{1}$-theory $\Tone$ is stably infinite with respect to its only sort.
\end{lemma}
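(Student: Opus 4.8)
The plan is to exploit the explicit description of $\Tone$-interpretations given just above the lemma: a $\Sigma_{1}$-interpretation is a $\Tone$-interpretation exactly when its domain is finite of cardinality $g(k)$ for some $k\in\mathbb{N}\setminus\{0\}$, or its domain is infinite. For stable infiniteness only the easy half of this is needed, so first I would record it directly: any interpretation $\B$ with $|\s^{\B}|\geq\aleph_{0}$ satisfies $\psi^{\s}_{\geq g(n)}$ for every $n\in\mathbb{N}\setminus\{0\}$ (as $g(n)$ is a natural number), hence satisfies each axiom $\psi^{\s}_{\geq g(n)}\vee\bigvee_{i=1}^{n}\psi^{\s}_{=g(i)}$, and is therefore a $\Tone$-interpretation. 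In particular $\Tone$ has models of cardinality $\aleph_{0}$.

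Next, take an arbitrary $\Tone$-satisfiable quantifier-free formula $\phi$, witnessed by a $\Tone$-interpretation $\A\vDash\phi$. If $\s^{\A}$ is already infinite, we are done with $\B=\A$. Otherwise, let $V=\vars(\phi)$ and let $E$ be the equivalence relation on $V$ with $x\,E\,y$ iff $x^{\A}=y^{\A}$, so that $\A\vDash\delta_{V}^{E}$. Since $\Sigma_{1}$ is empty, every atomic subformula of $\phi$ is an equality between variables of $V$, so the truth value of $\phi$ under an interpretation depends only on the arrangement its variable assignment induces on $V$; as $\A\vDash\delta_{V}^{E}\wedge\phi$, every interpretation satisfying $\delta_{V}^{E}$ satisfies $\phi$.

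Finally, I would build $\B$ by taking a countably infinite set $D$, fixing an injection of $V/E$ into $D$ (possible since $D$ is infinite, hence $|D|\geq |V/E|$), assigning to each $x\in V$ the image of its $E$-class and assigning the remaining variables arbitrarily, and setting $\s^{\B}=D$. Then $\B$ induces exactly $\delta_{V}^{E}$, so $\B\vDash\phi$, and by the first paragraph $\B$ is a $\Tone$-interpretation with $|\s^{\B}|=\aleph_{0}\geq\aleph_{0}$. As $\phi$ was arbitrary, $\Tone$ is stably infinite with respect to its only sort. There is no real obstacle here; the only point requiring (standard) care is the claim that quantifier-free satisfiability over the empty signature is determined by the induced arrangement of the formula's variables.
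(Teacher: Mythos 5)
Your proposal is correct and is essentially the paper's own argument: the paper simply declares the result ``obvious, since $\Tone$ has infinite interpretations and the signature $\Sigma_{1}$ is empty,'' and your write-up is a careful expansion of exactly that observation (satisfaction of a quantifier-free formula over an empty signature depends only on the induced arrangement, so one can always pass to an infinite model, all of which are $\Tone$-interpretations). No gaps.
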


\begin{proof}
    Obvious, since $\Tone$ has infinite interpretations and the signature $\Sigma_{1}$ is empty.
\end{proof}

\begin{lemma}
    The $\Sigma_{1}$-theory $\Tone$ is not smooth with respect to its only sort.
\end{lemma}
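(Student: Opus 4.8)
The plan is to use the non-surjectivity of $g$ to locate a finite cardinality that is realized by some $\Tone$-interpretation but whose successor is realized by none; feeding this into the definition of smoothness yields a contradiction.

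First I would recall the description of the models: a $\Sigma_{1}$-interpretation $\A$ is a $\Tone$-interpretation iff $|\s^{\A}| = g(k)$ for some $k \in \mathbb{N}\setminus\{0\}$ or $\s^{\A}$ is infinite. By \Cref{lem:g-exists}, $g$ is increasing (property \textbf{1}) and non-surjective (property \textbf{3}); more precisely, the proof of that lemma exhibits (infinitely many) $n_{0} \in \mathbb{N}\setminus\{0\}$ with $g(n_{0}+1) > g(n_{0}) + 1$. Fix such an $n_{0}$. Since $g$ is increasing, $g(k) \leq g(n_{0}) < g(n_{0}) + 1$ for every $k \leq n_{0}$ and $g(k) \geq g(n_{0}+1) > g(n_{0}) + 1$ for every $k \geq n_{0}+1$, so $g(n_{0})+1$ is a finite cardinal not of the form $g(k)$; hence no $\Tone$-interpretation has exactly $g(n_{0})+1$ elements.

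Next I would take the quantifier-free formula $\phi = \NNNEQ{x}{g(n_{0})}$ with $x_{1},\ldots,x_{g(n_{0})}$ of sort $\s$. Choosing a domain with exactly $g(n_{0})$ elements and interpreting $x_{1},\ldots,x_{g(n_{0})}$ as its distinct elements gives a $\Tone$-interpretation $\A \vDash \phi$ with $|\s^{\A}| = g(n_{0})$. Now let $\kappa$ send $\s$ to $g(n_{0})+1$; since $\kappa(\s) = g(n_{0})+1 > g(n_{0}) = |\s^{\A}|$, if $\Tone$ were smooth with respect to $\{\s\}$ there would be a $\Tone$-interpretation $\B \vDash \phi$ with $|\s^{\B}| = g(n_{0})+1$, contradicting the fact that no $\Tone$-interpretation has that cardinality. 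Therefore $\Tone$ is not smooth with respect to its only sort.

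The argument is short, and the only delicate point is obtaining $n_{0}$; this, however, is exactly what the proof of \Cref{lem:g-exists} provides, since the non-surjectivity of the increasing function $g$ forces a strict gap $g(n_{0}+1) > g(n_{0})+1$ somewhere.
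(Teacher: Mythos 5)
Your proof is correct and takes essentially the same route as the paper: both arguments locate a finite cardinality $g(n_{0})$ realized by a $\Tone$-interpretation whose successor $g(n_{0})+1$ is realized by none, and conclude that smoothness fails. The only difference is that the paper pins the gap down concretely by a case analysis on $f(3)$ (placing it at $5$ or at $4$), whereas you invoke the non-surjectivity clause of \Cref{lem:g-exists} abstractly and make the witnessing formula $\NNNEQ{x}{g(n_{0})}$ explicit; both are fine.
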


\begin{proof}
    If $f(3)=0$, then $g(3)=4$ but, since $f(4)=1$, we get $g(4)=6$, meaning $\Tone$ has no interpretation with $5$ elements, although it has one with $4$. If $f(3)=1$, $g(2)=3$ but $g(3)=5$ and the same argument applies.
\end{proof}

\begin{lemma}
    The $\Sigma_{1}$-theory $\Tone$ is finitely witnessable with respect to its only sort.
\end{lemma}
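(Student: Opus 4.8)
The plan is to exhibit an explicit witness function and check the two defining conditions of finite witnessability, relying on properties \textbf{2} (unboundedness) and \textbf{5} (the computable sampling function $\rho$) of $g$ from \Cref{lem:g-exists}. Given a quantifier-free formula $\phi$, I would set $n=|\vars(\phi)|$ and let $k$ be the least positive integer with $g(2^{k})\geq n$; this is effective because, as shown in the proof of \Cref{lem:g-exists}, $g(2^{k})=3\times2^{k-1}$, so $g\circ\rho$ is computable for $\rho(k)=2^{k}$, and such a $k$ exists since $g$ is unbounded. Picking fresh variables $x_{1},\dots,x_{g(2^{k})}$ of sort $\s$, I define
\[\wit(\phi)=\phi\wedge\bigwedge_{i=1}^{g(2^{k})}x_{i}=x_{i}.\]
This is computable, and since $\wit(\phi)$ is logically equivalent to $\phi$ (the added conjuncts are valid and the $x_{i}$ do not occur in $\phi$), $\phi$ and $\Exists{\overarrow{x}}\wit(\phi)$ are $\Tone$-equivalent for $\overarrow{x}=\vars(\wit(\phi))\setminus\vars(\phi)$, which gives condition $(i)$.

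For condition $(ii)$, suppose $\wit(\phi)$ is $\Tone$-satisfiable; then so is $\phi$, so I fix a $\Tone$-interpretation $\B$ satisfying $\phi$ and let $E$ be the equivalence relation it induces on $\vars(\phi)$ (two variables related iff assigned the same element). Because $\Sigma_{1}$ is empty, the truth value of a quantifier-free formula depends only on such an induced equivalence relation, and $E$ has at most $n\leq g(2^{k})$ classes. I then build a $\Tone$-interpretation $\A$ whose domain $\s^{\A}$ has cardinality exactly $g(2^{k})$ — a legal cardinality for $\Tone$ — by first assigning the variables of $\phi$ into $\s^{\A}$ so as to realize $E$ (possible since $g(2^{k})$ is at least the number of $E$-classes), which makes $\A$ satisfy $\phi$, and then assigning $x_{1},\dots,x_{g(2^{k})}$ bijectively onto $\s^{\A}$. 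Then $\A$ satisfies $\wit(\phi)$, and $\vars_{\s}(\wit(\phi))^{\A}\supseteq\{x_{1}^{\A},\dots,x_{g(2^{k})}^{\A}\}=\s^{\A}$, so $\s^{\A}=\vars_{\s}(\wit(\phi))^{\A}$, as required.

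The step I expect to be the main obstacle — or at least the one needing the most care — is reconciling computability with the cardinality constraints: the witness must pad $\phi$ with \emph{enough} fresh variables to cover the entire domain of some finite $\Tone$-interpretation, yet the number of these variables must be computable, even though the range of $g$ is not computably enumerable ($g$ being non-computable). This is exactly why one pads up to $g(2^{k})$ rather than to an arbitrary value in the range of $g$, and why property \textbf{5} is the load-bearing ingredient. It is also the point where the proof is genuinely specific to \emph{finite} (as opposed to \emph{strong} finite) witnessability: an arrangement $\delta_{V}$ in clause $(ii')$ could force the domain size into a "hole" of $g$, which these padded witnesses provably cannot reach.
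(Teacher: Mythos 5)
Your proof is correct and follows essentially the same route as the paper: you pad $\phi$ with $g(2^{k})$ fresh variables for a computably chosen $k$ (using property~\textbf{5} and the explicit formula $g(2^{k})=3\times2^{k-1}$, just as the paper defines $M(\phi)=3\times2^{m(\phi)-1}$), and then discharge condition $(ii)$ by reassigning the variables of $\phi$ into a fresh domain of cardinality $g(2^{k})$ and mapping the padding variables bijectively onto it. The only cosmetic difference is that the paper fixes the power of two to be the smallest one strictly exceeding $|\vars(\phi)|$ while you take the least $k$ with $g(2^{k})\geq|\vars(\phi)|$; both choices work and the argument is otherwise identical.
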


\begin{proof}
    Let $\phi$ be a quantifier-free formula and let $2^{m(\phi)}$ be the smallest power of two larger than the number of variables of $\phi$: this, of course, may be found algorithmically starting from $\phi$. If $x_{i}$ are fresh variables, we state that 
    \[wit(\phi)=\phi\wedge\bigwedge_{i=1}^{M(\phi)}x_{i}=x_{i}\]
    is a witness for $\Tone$, where $M(\phi)=3\times 2^{m(\phi)-1}$. For $\overarrow{x}=\vars(wit(\phi))\setminus\vars(\phi)=\{x_{1},\ldots,x_{M(\phi)}\}$, it is of course true that $\phi$ and $\Exists{\overarrow{x}}wit(\phi)$ are $\Tone$-equivalent since $\phi$ and $wit(\phi)$ are already equivalent. Now, suppose that $\phi$ is $\Tone$-satisfiable, and let $\A$ be a $\Tone$-interpretation that satisfies $\phi$: take $V=\vars(\phi)$, and define an interpretation $\B$ with $\s^{\B}=V^{\A}\cup X$, where $X$ is any set of cardinality $M(\phi)-|V^{\A}|$; $x^{\B}=x^{\A}$ for all $x\in V$; $x_{i}\in\{x_{1},\ldots,x_{M(\phi)}\}\mapsto x_{i}^{\B}\in \s^{\B}$ is a bijection, what is possible given that $\s^{\B}$ contains $M(\phi)$ elements (and, since $M(\phi)=3\times 2^{m(\phi)-1}=g(2^{m(\phi)-1})$ we have that $\B$ is indeed a $\Tone$-interpretation); and $x^{\B}$ may be defined arbitrarily for variables outside these two sets. Then it is true that $\B$ is a $\Tone$-interpretation that satisfies $\phi$, and thus $wit(\phi)$, with $\s^{\B}=\vars(wit(\phi))^{\B}$.
\end{proof}

\begin{lemma}
    The $\Sigma_{1}$-theory $\Tone$ is not strongly finitely witnessable with respect to its only sort.
\end{lemma}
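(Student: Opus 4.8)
The plan is to obtain the failure of strong finite witnessability for $\Tone$ from two facts that have already been established for it — stable infiniteness and non-smoothness — combined with a known impossibility result from \cite{CADE} that is specific to one-sorted signatures. No genuinely new argument is needed: the point is simply to notice that $\Tone$ lands in a combination of properties that \cite{CADE} has already ruled out.

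Concretely, I would proceed as follows. First, recall from \Cref{SI Tone} that $\Tone$ is stably infinite with respect to its only sort (immediate, since $\Tone$ has infinite interpretations and $\Sigma_{1}$ is empty). Second, recall from the preceding lemma that $\Tone$ is not smooth with respect to its only sort; this is witnessed by a concrete gap in the set of finite cardinalities of $\Tone$-interpretations, which exists because $g$ is non-surjective (property $\textbf{3}$ of \Cref{lem:g-exists}). Third, note that $\Sigma_{1}$ is one-sorted. Now invoke the entry of \Cref{table of impossibilities} proven in \cite{CADE} stating that the combination ``$\stainf + \overline{\smooth} + \strfinwit$'' is impossible for one-sorted theories — equivalently, that a one-sorted theory which is stably infinite and strongly finitely witnessable must be smooth. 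Since $\Tone$ is one-sorted, stably infinite, and not smooth, it cannot be strongly finitely witnessable with respect to its only sort, which is exactly the claim.

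I do not expect a serious obstacle; the only point requiring care is to cite precisely the right one-sorted impossibility from \cite{CADE} (the one about $\stainf$, $\overline{\smooth}$, $\strfinwit$), rather than, say, the one about $\smooth$, $\finwit$, $\overline{\strfinwit}$, which would not suffice here. If instead a self-contained argument were preferred, one could argue directly from computability: assuming a computable witness $\wit$, for the formula $\phi_{N}:=\bigwedge_{i<j\le N}\neg(x_{i}=x_{j})$ one shows that the least number of $E$-classes over arrangements $\delta_{V}^{E}$ on $V=\vars(\wit(\phi_{N}))$ for which $\wit(\phi_{N})\wedge\delta_{V}^{E}$ is $\Tone$-satisfiable equals $g(k_{N})$ with $k_{N}=\min\{k : g(k)\ge N\}$ — using that $\Tone$-satisfiability of a full arrangement conjoined with a quantifier-free formula over an empty signature is decidable, and crucially that strong finite witnessability (not mere finite witnessability) forces a $\Tone$-interpretation of exactly that size. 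The resulting computable, non-decreasing, unbounded sequence $(g(k_{N}))_{N}$ would then allow one to compute $g$, contradicting property $\textbf{4}$. In that alternative route the delicate step is precisely the identification of the minimal arrangement size with $g(k_{N})$, which is where condition $(ii')$ of strong finite witnessability does the work.
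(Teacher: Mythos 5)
Your main argument is correct and is essentially the paper's own proof: the paper also derives the failure of $\strfinwit$ from stable infiniteness plus non-smoothness by appealing to the one-sorted result from \cite{CADE} (stated there as Lemma~73 of the full version) that stably infinite, strongly finitely witnessable theories must be smooth up to $\aleph_0$. Your alternative direct computability argument is an extra route the paper does not take, but the primary argument you give matches the paper's reasoning.
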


\begin{proof}
    From Lemma $73$ of \cite{arxivCADE} and \Cref{SI Tone}, if $\Tone$ has a strong witness then it would be smooth up to $\aleph_{0}$.
\end{proof}

\begin{lemma}
    The $\Sigma_{1}$-theory $\Tone$ is convex with respect to its only sort.
\end{lemma}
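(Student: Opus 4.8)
The plan is to exploit the fact that $\Sigma_{1}$ is empty: every literal of the hypothesis $\phi$ of a convexity instance is an equality $u=w$ or a disequality $\neg(u=w)$ between variables of the single sort $\s$. First I would record the structural facts about $\Tone$-interpretations that the argument needs: $\Tone$ has infinite interpretations, and in fact \emph{every} infinite $\Sigma_{1}$-interpretation satisfies $\ax{\Tone}$, so I may fix once and for all a $\Tone$-interpretation $\C$ whose domain is a fixed countably infinite set $D$; since the signature is empty, changing the values that $\C$ assigns to variables always produces another $\Tone$-interpretation with domain $D$.

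Next I would argue by contradiction. Suppose $\vDash_{\Tone}\phi\rightarrow\bigvee_{i=1}^{n}(x_{i}=y_{i})$ with $\phi$ a conjunction of literals, but for each $1\le i\le n$ there is a $\Tone$-interpretation $\A_{i}$ with $\A_{i}\vDash\phi$ and $x_{i}^{\A_{i}}\ne y_{i}^{\A_{i}}$ (this already makes $\phi$ $\Tone$-satisfiable, which also rules out the degenerate $n=0$ case). Let $W=\vars(\phi)\cup\{x_{1},y_{1},\dots,x_{n},y_{n}\}$, a finite set, and let $E$ be the equivalence relation on $W$ generated by the equalities occurring in $\phi$. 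Two observations: since $\A_{1}\vDash\phi$, no disequality $\neg(u=w)$ of $\phi$ has $u$ and $w$ in the same $E$-class (otherwise $\phi$ would be unsatisfiable); and since $\A_{i}\vDash\phi\wedge\neg(x_{i}=y_{i})$, the variables $x_{i}$ and $y_{i}$ lie in different $E$-classes, for each $i$.

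Then I would assemble a single counterexample inside $\C$: choose an assignment $v\colon W\to D$ sending each $E$-class to a distinct element of $D$ (possible since there are only finitely many classes and $D$ is infinite), extending $v$ arbitrarily to any remaining variables. By construction $v$ validates every equality of $\phi$; it validates every disequality of $\phi$ because distinct $E$-classes receive distinct values; and $v(x_{i})\ne v(y_{i})$ for all $i$ because $x_{i}$ and $y_{i}$ are in different $E$-classes. Hence the corresponding $\Tone$-interpretation satisfies $\phi\wedge\bigwedge_{i=1}^{n}\neg(x_{i}=y_{i})$, contradicting $\vDash_{\Tone}\phi\rightarrow\bigvee_{i=1}^{n}(x_{i}=y_{i})$. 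Therefore $\vDash_{\Tone}\phi\rightarrow(x_{i}=y_{i})$ for some $i$, so $\Tone$ is convex.

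The argument is short, and the only points that need care are the verification that the disequalities of $\phi$ are respected by $v$ (which is exactly the statement that $\phi$ is $\Tone$-satisfiable, guaranteed by $\A_{1}$) and the handling of the degenerate empty-disjunction case. Alternatively, one can bypass this entirely and invoke the general principle from \cite{CADE} that a theory over an empty signature which is stably infinite is convex, together with \Cref{SI Tone}.
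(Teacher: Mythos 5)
Your proof is correct, but it takes a different route from the paper. The paper disposes of this lemma in one line by citing Theorem~5 of \cite{CADE} (the general fact, used repeatedly in that line of work, that a theory over an empty signature with the relevant stable-infiniteness behaviour is convex), having already established stable infiniteness of $\Tone$ in the preceding lemma. You instead reprove the underlying principle from scratch for $\Tone$: you observe that every infinite $\Sigma_{1}$-interpretation is a $\Tone$-interpretation, take the equivalence relation $E$ generated by the positive literals of $\phi$, note that the $\A_{i}$ witnessing the failure of each disjunct force $x_{i}$ and $y_{i}$ into distinct $E$-classes (and that satisfiability of $\phi$ forces the disequalities of $\phi$ to respect $E$), and then realize all $E$-classes by distinct elements of a fixed infinite model to falsify the whole disjunction at once. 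All the steps check out (including the observation that, the signature being empty, reassigning variables preserves modelhood, and the syntactic edge case $x_{i}\equiv y_{i}$, which simply makes that disjunct implied). What your approach buys is self-containment --- the lemma no longer depends on an external theorem --- at the cost of redoing an argument the paper deliberately factors out; your closing remark that one could instead combine \Cref{SI Tone} with the general result from \cite{CADE} is exactly what the paper does.
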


\begin{proof}
    Follows from Theorem $5$ of \cite{CADE}.
\end{proof}

\begin{lemma}\label{cmmf:tone}
    The $\Sigma_{1}$-theory $\Tone$ does not have a computable minimal model function with respect to its only sort.
\end{lemma}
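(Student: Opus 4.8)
The plan is to show that a computable minimal model function for $\Tone$ would allow us to compute the non-computable function $g$, contradicting property \textbf{4} of \Cref{lem:g-exists}. Since $\Sigma_1$ is empty, a $\Tone$-interpretation is determined up to isomorphism by the cardinality of its domain, and the finite $\Tone$-interpretations have domains of exactly the cardinalities $\{g(k) : k \in \mathbb{N}\setminus\{0\}\}$, with all infinite interpretations being $\Tone$-interpretations as well. Because $g$ is increasing (property \textbf{1}), the finite cardinalities $g(1) < g(2) < \cdots$ form a strictly increasing enumeration.

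First I would set up the reduction. Suppose for contradiction that $\minmod_{\Tone,\{\s_1\}}$ is computable (recall that in the one-sorted case its output, when nonempty, is identified with a single element of $\N$). I would define a function $h:\mathbb{N}\setminus\{0\}\rightarrow\mathbb{N}\setminus\{0\}$ recursively: set $h(1) = g(1)$ (a hard-coded constant), and for $n \geq 1$ let $h(n+1)$ be the unique element of $\minmod_{\Tone,\{\s_1\}}(\NNNEQ{x}{h(n)+1})$, where the $x_i$ are variables of sort $\s_1$ and the formula $\NNNEQ{x}{h(n)+1}$ asserts that $h(n)+1$ variables are pairwise distinct. This formula is $\Tone$-satisfiable (it is satisfied by any infinite $\Tone$-interpretation), so the minimal model function returns a genuine value there. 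By \Cref{alternative definition}, $h(n+1)$ is the least cardinality of a $\Tone$-interpretation satisfying $\NNNEQ{x}{h(n)+1}$, i.e., the least element of $\{g(k) : k\geq 1\}\cup\{\aleph_0\}$ that is $\geq h(n)+1$; since $g$ is unbounded (property \textbf{2}) this least element is finite, so $h$ indeed maps into $\mathbb{N}\setminus\{0\}$.

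The key step is then an induction showing $h(n) \geq g(n)$ for all $n \geq 1$: the base case is $h(1) = g(1)$, and for the inductive step, $h(n+1)$ is a value $g(k)$ with $g(k) \geq h(n)+1 \geq g(n)+1 > g(n)$, which forces $k \geq n+1$ (as $g$ is increasing), hence $h(n+1) = g(k) \geq g(n+1)$. Simultaneously, one checks $h$ is computable: it is defined by primitive recursion from $g(1)$ and the computable function $\minmod_{\Tone,\{\s_1\}}$. Finally, I would use the computability of $h$ together with $h \geq g$ to compute $g$ itself: for a given $n$, compute $h(n)$; then $g(n)$ is the unique cardinality of a finite $\Tone$-interpretation with at least $?$... actually more directly, observe that since $h(n+1)$ is the least finite $\Tone$-cardinality $\geq h(n)+1$, and we can show $h(n) = g(n)$ for all $n$ — which follows because $h(n+1)$ is the \emph{successor} in the enumeration $g(1) < g(2) < \cdots$ once $h(n) = g(n)$, namely the least $g(k) \geq g(n)+1$ is $g(n+1)$. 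So $h = g$ exactly, and $h$ computable contradicts $g$ non-computable.

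The main obstacle I anticipate is the careful bookkeeping to pin down precisely which finite $\Tone$-interpretations exist (exactly the cardinalities in the image of $g$, plus all infinite ones) and to verify that $\minmod$ applied to $\NNNEQ{x}{h(n)+1}$ returns exactly the least element of $\mathrm{Image}(g)$ that is $\geq h(n)+1$ — this relies on the fact that $\Tone$-interpretations over an empty signature are classified by cardinality alone, and on the identification of $\minmod$ with $\minimal(\textbf{Card})$ from \Cref{alternative definition}. Everything else is routine; the argument is essentially the same as the one sketched for $\Tbb$ in \Cref{ex:bbold}, with $g$ in place of $\bb$.
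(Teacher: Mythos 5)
Your proof is correct and takes essentially the same route as the paper: define $h$ recursively via $h(1)=g(1)=2$ and $h(n+1)\in\minmod_{\Tone,\{\s_{1}\}}(\NNNEQ{x}{h(n)+1})$, then use the fact that $g$ is strictly increasing to conclude $h=g$, contradicting non-computability of $g$. The brief detour through $h(n)\geq g(n)$ is unnecessary (as you note yourself) but harmless.
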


\begin{proof}
    For a proof by contradiction, assume $\Tone$ has a computable minimal model function: we then define a function $h:\mathbb{N}\setminus\{0\}\rightarrow\mathbb{N}\setminus\{0\}$ by making $h(1)=2$ and $h(n+1)\in\minmod_{\Tone,\{\s\}}(\NNNEQ{x}{h(n)+1})$ (notice $\NNNEQ{x}{h(n)+1}$ is $\Tone$-satisfiable, since the theory is stably infinite).
    $h$ is obviously recursively computable since $\minmod_{\Tone,\{\s\}}$ is computable, and once $h(n)$ is known the formula $\psi^{\s}_{\geq h(n)+1}$ can be found algorithmically. One notices, however, that $h(n)=g(n)$ for all $n\in\mathbb{N}\setminus\{0\}$, contradicting the fact that $g$ is not computable.
\end{proof}

\subsection{\tp{$\Ttwo$}{Ttwo}}

$\Ttwo$ is the $\Sigma_{2}$-theory with axiomatization
\[\{(\psi^{\s}_{=1}\wedge\psi^{\s_{2}}_{\geq n})\vee((\psi^{\s}_{\geq g(n+1)}\wedge\psi^{\s_{2}}_{\geq g(n+1)})\vee\bigvee_{i=1}^{n}(\psi^{\s}_{=g(i)}\wedge\psi^{\s_{2}}_{=g(i)}) : n\in\mathbb{N}\setminus\{0\}\}.\]
In words, $\Ttwo$ has essentially three classes of model: those whose domain of sort $\s$ has cardinality $1$, and whose domain of sort $\s_{2}$ is infinite; those whose domains of sort $\s$ and $\s_{2}$ have the same finite cardinality $g(n)$ for some $n\in\mathbb{N}\setminus\{0\}$; and those where both domains are infinite.

\begin{lemma}\label{Ttwo is SI}
    The $\Sigma_{2}$-theory $\Ttwo$ is stably infinite with respect to $\{\s,\s_{2}\}$.
\end{lemma}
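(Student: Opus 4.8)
The plan is to use the standard ``padding with fresh elements'' technique, exactly as in the proof of \Cref{CV+-SI+-FMP+-CMMF}. First I would record two structural facts coming from the emptiness of $\Sigma_{2}$: a $\Sigma_{2}$-interpretation $\A$ is determined up to isomorphism by the pair $(|\s^{\A}|,|\s_{2}^{\A}|)$, and the truth value of a quantifier-free formula in $\A$ depends only on which equalities and disequalities among its variables hold in $\A$. Second, I would observe that $\Ttwo$ has interpretations with both domains countably infinite: if $\A$ is any $\Sigma_{2}$-interpretation with $|\s^{\A}|=|\s_{2}^{\A}|=\aleph_{0}$, then for every $n\in\mathbb{N}\setminus\{0\}$ the conjunct $\psi^{\s}_{\geq g(n+1)}\wedge\psi^{\s_{2}}_{\geq g(n+1)}$ of the $n$-th axiom holds, since $g(n+1)$ is finite; hence $\A$ satisfies $\ax{\Ttwo}$.

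Then, given a quantifier-free $\Ttwo$-satisfiable formula $\phi$, fix a $\Ttwo$-interpretation $\B$ with $\B\vDash\phi$, and let $V_{\s}=\vars_{\s}(\phi)$ and $V_{\s_{2}}=\vars_{\s_{2}}(\phi)$. Choose sets $A_{1},A_{2}$, each of cardinality $\aleph_{0}$, with $A_{1}$ disjoint from $\s^{\B}$ and $A_{2}$ disjoint from $\s_{2}^{\B}$. Define a $\Sigma_{2}$-interpretation $\C$ by $\s^{\C}=\vars_{\s}(\phi)^{\B}\cup A_{1}$, $\s_{2}^{\C}=\vars_{\s_{2}}(\phi)^{\B}\cup A_{2}$, with $x^{\C}=x^{\B}$ for $x\in V_{\s}$, $u^{\C}=u^{\B}$ for $u\in V_{\s_{2}}$, and all remaining variables assigned arbitrary values in the appropriate domain. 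Then $|\s^{\C}|=|\s_{2}^{\C}|=\aleph_{0}$, so $\C$ is a $\Ttwo$-interpretation by the second observation; and since every atomic subformula of $\phi$ is an equality between two variables of $\phi$, whose $\C$-values coincide with their $\B$-values, $\C$ satisfies exactly the same atomic subformulas of $\phi$ as $\B$, hence $\C\vDash\phi$. As $\C$ has both domains infinite, this shows $\Ttwo$ is stably infinite with respect to $\{\s,\s_{2}\}$.

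There is essentially no serious obstacle; the only point deserving care is the disjointness of $A_{1}$ from $\vars_{\s}(\phi)^{\B}$ and of $A_{2}$ from $\vars_{\s_{2}}(\phi)^{\B}$, which guarantees that no new identifications among the values of $\vars(\phi)$ are introduced in passing from $\B$ to $\C$, so that the disequalities of $\phi$ holding in $\B$ still hold in $\C$. This lemma is in fact a special case of the general principle, already exploited in \Cref{CV+-SI+-FMP+-CMMF}, that over an empty signature a theory which has some interpretation with all domains infinite is stably infinite with respect to every set of its sorts.
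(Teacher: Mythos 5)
Your proof is correct and uses the same padding idea as the paper's, namely that $\Ttwo$ has models where both domains are (countably) infinite, so a satisfying interpretation can always be inflated. The only stylistic difference is that you first shrink the domains to $\vars_{\s}(\phi)^{\B}$ and $\vars_{\s_{2}}(\phi)^{\B}$ before adding the fresh countable sets, whereas the paper's proof simply unions an infinite set $X$ directly onto the original domains of $\A$ without any shrinking; both give a $\Ttwo$-interpretation with infinite domains that still satisfies $\phi$, so the extra shrink step is harmless but unnecessary.
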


\begin{proof}
    Let $\phi$ be a quantifier-free formula and $\A$ a $\Ttwo$-interpretation that satisfies $\phi$: we define the $\Sigma_{2}$-interpretation $\B$ with $\s^{\B}=\s^{\A}\cup X$ and $\s_{2}^{\B}=\s_{2}^{\A}\cup X$, where $X$ is an infinite set disjoint from $\s^{\A}$ and $\s_{2}^{\A}$, and $x^{\B}=x^{\A}$ for any variable $x$, of sort either $\s$ or $\s_{2}$. Then it is clear that $\B$ has both domains infinite, is a $\Ttwo$-interpretation, and satisfies $\phi$.
\end{proof}

\begin{lemma}\label{Ttwo is not SM}
    The $\Sigma_{2}$-theory $\Ttwo$ is not smooth with respect to $\{\s,\s_{2}\}$.
\end{lemma}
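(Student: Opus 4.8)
The plan is to refute smoothness directly from its definition: it suffices to exhibit one quantifier-free formula $\phi$, one $\Ttwo$-interpretation $\A$ satisfying $\phi$, and one function $\kappa$ from $\{\s,\s_{2}\}$ to cardinals with $\kappa(\s)\geq|\s^{\A}|$ and $\kappa(\s_{2})\geq|\s_{2}^{\A}|$, such that no $\Ttwo$-interpretation $\B$ satisfies $\phi$ with $(|\s^{\B}|,|\s_{2}^{\B}|)=(\kappa(\s),\kappa(\s_{2}))$. I would take $\phi$ to be the trivially valid formula $x=x$ for a variable $x$ of sort $\s$, take $\A$ to be a $\Ttwo$-interpretation of the first class described above --- $|\s^{\A}|=1$ and $|\s_{2}^{\A}|=\aleph_{0}$, which exists by reading off, for each $n$, the disjunct $\psi^{\s}_{=1}\wedge\psi^{\s_{2}}_{\geq n}$ of the $n$-th axiom --- and set $\kappa(\s)=2$, $\kappa(\s_{2})=\aleph_{0}$. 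These choices are legal, since $2\geq 1$ and $\aleph_{0}\geq\aleph_{0}$, and $\A$ of course satisfies $\phi$.

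The crux is then to show that $\Ttwo$ has no interpretation $\B$ with $|\s^{\B}|=2$ and $|\s_{2}^{\B}|=\aleph_{0}$. Here I would use the structural description of the models of $\Ttwo$ together with the facts from \Cref{lem:g-exists} that $g$ is increasing and unbounded, plus the computation $g(1)=1+f(1)=2$. Concretely: suppose such a $\B$ existed. Choose $n$ with $g(n+1)>2$, possible by unboundedness. Since $\B$ satisfies the $n$-th axiom, and its first disjunct requires $|\s^{\B}|=1$ while its second requires $|\s^{\B}|\geq g(n+1)>2$, the interpretation $\B$ must satisfy $\psi^{\s}_{=g(i)}\wedge\psi^{\s_{2}}_{=g(i)}$ for some $1\leq i\leq n$; then $|\s^{\B}|=g(i)=2$, and since $g$ is increasing with $g(1)=2$ this forces $i=1$, hence $|\s_{2}^{\B}|=g(1)=2\neq\aleph_{0}$ --- a contradiction. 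This yields the non-existence of $\B$ and hence the failure of smoothness with respect to $\{\s,\s_{2}\}$.

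I do not expect a serious obstacle: the argument is essentially the observation that from a model with cardinalities $(1,\aleph_{0})$ one cannot inflate the $\s$-domain to size $2$ without also forcing --- in fact shrinking --- the $\s_{2}$-domain to size $2$, because every finite $\Ttwo$-model with $|\s|\geq 2$ is pinned to the diagonal $|\s|=|\s_{2}|=g(n)$. The only points demanding a little care are making the structural description of $\Ttwo$-models precise enough to license the case split on the three disjuncts of the $n$-th axiom (exactly what was sketched informally right after the definition of $\Ttwo$), and recording $g(1)=2$, so that $\kappa(\s)=2$ is simultaneously an admissible target (it is $\geq|\s^{\A}|=1$) and one that rigidly determines $|\s_{2}^{\B}|$.
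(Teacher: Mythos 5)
Your proof is correct and follows essentially the same strategy as the paper's: exhibit one satisfiable formula, one model, and one admissible cardinality target that no $\Ttwo$-interpretation realizes, using the fact that finite non-trivial models are pinned to the diagonal $(g(n),g(n))$. The paper just picks a different instantiation --- the model with cardinalities $(2,2)$ and the unreachable target $(2,3)$ --- whereas you go from $(1,\aleph_{0})$ to $(2,\aleph_{0})$; both work for the same underlying reason.
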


\begin{proof}
    Obvious, given that $\Ttwo$ has a model $\A$ with $|\s^{\A}|=|\s_{2}^{\A}|=2$, but none $\B$ with, say $|\s^{\B}|=2$ and $|\s_{2}^{\B}|=3$.
\end{proof}

\begin{lemma}\label{Ttwo is FW}
    The $\Sigma_{2}$-theory $\Ttwo$ is finitely witnessable with respect to $\{\s,\s_{2}\}$.
\end{lemma}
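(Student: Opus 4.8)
The plan is to copy the finite‑witnessability construction used for $\Tone$, lifted to two sorts. Given a quantifier‑free $\Sigma_{2}$‑formula $\phi$, I would set $n_{\s}=|\vars_{\s}(\phi)|$, $n_{\s_{2}}=|\vars_{\s_{2}}(\phi)|$, take $m$ to be the least positive integer with $3\cdot 2^{m-1}\ge\max\{n_{\s},n_{\s_{2}},1\}$, and put $M(\phi)=3\cdot 2^{m-1}$. By the fifth property of \Cref{lem:g-exists} (concretely, $g(2^{m})=3\cdot 2^{m-1}$), $M(\phi)$ is the common cardinality of some finite $\Ttwo$‑interpretation, and — crucially — $M(\phi)$ is computable from $\phi$ even though the sequence $(g(k))_{k\ge 1}$ of cardinalities of finite $\Ttwo$‑models is not. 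Then, with fresh variables $x_{1},\dots,x_{M(\phi)}$ of sort $\s$ and $u_{1},\dots,u_{M(\phi)}$ of sort $\s_{2}$, I would define
\[\wit(\phi)=\phi\wedge\bigwedge_{i=1}^{M(\phi)}(x_{i}=x_{i})\wedge\bigwedge_{j=1}^{M(\phi)}(u_{j}=u_{j}),\]
which is a quantifier‑free $\Sigma_{2}$‑formula computable from $\phi$.

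Condition $(i)$ is immediate: the added conjuncts are tautologies and the new variables are fresh, so $\phi$ and $\wit(\phi)$ are equivalent, hence so are $\phi$ and $\Exists{\overarrow{x}}\wit(\phi)$ for $\overarrow{x}=\vars(\wit(\phi))\setminus\vars(\phi)=\{x_{1},\dots,x_{M(\phi)},u_{1},\dots,u_{M(\phi)}\}$; in particular $\wit(\phi)$ is $\Ttwo$‑satisfiable exactly when $\phi$ is. For condition $(ii)$, I would suppose $\wit(\phi)$ (equivalently $\phi$) is $\Ttwo$‑satisfiable, witnessed by a $\Ttwo$‑interpretation $\A$, and build a fresh $\B$ with $\s^{\B}$ and $\s_{2}^{\B}$ arbitrary sets of cardinality $M(\phi)$. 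Since $\Sigma_{2}$ is empty and the finite $\Ttwo$‑interpretations are precisely those whose two sorts have equal size $g(k)$ for some $k\ge 1$, and $M(\phi)=g(2^{m})$, $\B$ is a $\Ttwo$‑interpretation. I would then interpret the variables of $\phi$ in $\B$ so as to induce the same arrangement they induce in $\A$ (possible because $M(\phi)\ge n_{\s}\ge|\vars_{\s}(\phi)^{\A}|$ and $M(\phi)\ge n_{\s_{2}}\ge|\vars_{\s_{2}}(\phi)^{\A}|$), and interpret the fresh $x_{i}$'s bijectively onto $\s^{\B}$ and the fresh $u_{j}$'s bijectively onto $\s_{2}^{\B}$. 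As $\Sigma_{2}$ is empty, the truth value of a quantifier‑free formula under an assignment depends only on the arrangement it induces on the formula's variables; thus $\B\vDash\phi$, hence $\B\vDash\wit(\phi)$, while $\vars_{\s}(\wit(\phi))^{\B}=\s^{\B}$ and $\vars_{\s_{2}}(\wit(\phi))^{\B}=\s_{2}^{\B}$ because the $x_{i}^{\B}$'s and $u_{j}^{\B}$'s already exhaust the respective domains, so $\B$ witnesses $\wit(\phi)$ on both sorts.

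I expect the main obstacle to be exactly the tension between computability and the shape of the finite models: the admissible finite cardinalities form the non‑computable set $\{g(k):k\ge 1\}$, so one cannot simply round the variable count up to the next available cardinality, and instead must land on the computable subsequence $g(2^{m})=3\cdot 2^{m-1}$ furnished by \Cref{lem:g-exists}. A secondary point to handle carefully is that finite $\Ttwo$‑interpretations force both sorts to have the same size, so the single count $M(\phi)$ must dominate the number of variables of each sort simultaneously; the degenerate models (those with $|\s^{\A}|=1$ and $\s_{2}^{\A}$ infinite) require no special treatment, since clause $(ii)$ only asks for the existence of a witnessed model and the one constructed above is finite.
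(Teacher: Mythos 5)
Your proposal is correct and follows essentially the same route as the paper's proof: pad $\phi$ with tautologies over $g(2^{m})=3\cdot 2^{m-1}$ fresh variables of each sort (a computable cardinality guaranteed by property \textbf{5} of \Cref{lem:g-exists}) chosen to dominate the variable counts of both sorts simultaneously, then witness satisfiability by a finite $\Ttwo$-model whose two domains both have exactly that cardinality and are exhausted by the fresh variables. The only differences are cosmetic (the paper rounds up to $g(2^{k+1})$ where you take the least adequate $g(2^{m})$, and it extends $\vars(\phi)^{\A}$ by disjoint sets where you build $\B$ from scratch using arrangement-invariance over the empty signature), and your remarks about the non-computability of $\{g(k)\}$ and the equal-cardinality constraint on finite models identify exactly the points the paper's construction is designed to handle.
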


\begin{proof}
    Given a quantifier-free formula $\phi$, let $k$ be the minimum of positive integers such that $3\times2^{k-1}<M\leq 3\times 2^{k}$ (meaning $g(2^{k})<M\leq g(2^{k+1})$), where $M=\max\{|\vars_{\s}(\phi)|, |\vars_{\s_{2}}(\phi)|\}$: it is not difficult to see that finding $k$ can be done algorithmically, and so the function
    \[\wit(\phi)=\phi\wedge\bigwedge_{i=1}^{g(2^{k+1})}(x_{i}=x_{i})\wedge(u_{i}=u_{i})\]
    is computable, where $X=\{x_{1},\ldots,x_{g(2^{k+1})}\}$ is a set of fresh variables of sort $\s$, and $U=\{u_{1},\ldots,u_{g(2^{k+1})}\}$ is a set of fresh variables of sort $\s_{2}$. We state that $\wit$ is indeed a witness, being it obvious that $\Exists{\overarrow{x}}\wit(\phi)$ and $\phi$ are $\Ttwo$-equivalent, for $\overarrow{x}=\vars(\wit(\phi))\setminus\vars(\phi)$, since $\wit(\phi)$ and $\phi$ are themselves equivalent (as $\bigwedge_{i=1}^{g(2^{k+1})}(x_{i}=x_{i})\wedge(u_{i}=u_{i})$ is a tautology).

    Now, assume that $\A$ is a $\Ttwo$-interpretation that satisfies $\wit(\phi)$, or equivalently $\phi$: take sets $A$ and $B$, disjoint respectively from $\s^{\A}$ and $\s_{2}^{\A}$, with $g(2^{k+1})-|\vars_{\s}(\phi)^{\A}|$ and $g(2^{k+1})-|\vars_{\s_{2}}(\phi)^{\A}|$ elements. We define a new interpretation $\B$ by making: $\s^{\B}=\vars_{\s}(\phi)^{\A}\cup A$ (which has therefore $g(2^{k+1})$ elements); $\s_{2}^{\B}=\vars_{\s_{2}}(\phi)^{\A}\cup B$ (meaning $\s_{2}^{\B}$ has $g(2^{k+1})$ elements, and thus $\B$ is a $\Ttwo$-interpretation); $x^{\B}=x^{\A}$ for all variables $x$ in $\phi$; $x_{i}\in X\mapsto x_{i}^{\B}\in \s^{\B}$ and $u_{i}\in U\mapsto u_{i}^{\B}\in\s_{2}^{\B}$ bijective maps, and arbitrarily for all other variables. It is then easy to see that $\B$ is a $\Ttwo$-interpretation, that satisfies $\phi$ and $\wit(\phi)$, and with $\vars_{\s}(\phi)^{\B}=\s^{\B}$ and $\vars_{\s_{2}}(\phi)^{\B}=\s_{2}^{\B}$.
\end{proof}

\begin{lemma}
    The $\Sigma_{2}$-theory $\Ttwo$ is not strongly finitely witnessable with respect to $\{\s,\s_{2}\}$.
\end{lemma}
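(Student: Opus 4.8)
The plan is to show that $\Ttwo$ is not strongly finitely witnessable by exploiting the ``off-diagonal'' cardinalities that $\Ttwo$ forbids. Recall from \Cref{Ttwo is SI} that $\Ttwo$ is stably infinite, and from the description of its models that a $\Ttwo$-interpretation $\A$ satisfies exactly one of the three patterns: $|\s^{\A}|=1$ with $\s_{2}^{\A}$ infinite; $|\s^{\A}|=|\s_{2}^{\A}|=g(n)$ for some $n\in\mathbb{N}\setminus\{0\}$; or both domains infinite. In particular, there is \emph{no} $\Ttwo$-interpretation with $|\s^{\A}|=2$ and $2\leq|\s_{2}^{\A}|<\aleph_{0}$ unless $|\s_{2}^{\A}|=2$ as well (since $g(1)=2$, the value $g(n)=2$ forces $n=1$). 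So I would isolate a fixed finite ``witness-killing'' configuration: the formula $\phi_{0} := \neg(x_{1}=x_{2})$ with $x_{1},x_{2}$ of sort $\s$, which forces $|\s^{\A}|\geq 2$.

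The key step is to exhibit a set of variables $V$ and an arrangement $\delta_{V}$ that, together with $\wit(\phi_{0})$, is $\Ttwo$-satisfiable, yet cannot be satisfied by any $\Ttwo$-interpretation $\A$ in which every element of each domain is named by a variable of $\wit(\phi_{0})\wedge\delta_{V}$. Suppose toward a contradiction that a strong witness $\wit$ exists. Let $\overarrow{y}=\vars(\wit(\phi_{0}))\setminus\vars(\phi_{0})$ and let $k$ be the (finite) number of variables of sort $\s_{2}$ occurring in $\wit(\phi_{0})$. Now take $V$ to consist of three fresh variables $u_{1},u_{2},u_{3}$ of sort $\s_{2}$ together with all variables of $\wit(\phi_{0})$, and let $\delta_{V}$ be the arrangement that makes all $\s$-variables of $\wit(\phi_{0})$ pairwise distinct except identifying them down to exactly two classes (possible since $\phi_{0}$ already forces at least two), while making $u_{1},u_{2},u_{3}$ and all $\s_{2}$-variables pairwise distinct. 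Then $\wit(\phi_{0})\wedge\delta_{V}$ is $\Ttwo$-satisfiable: by $T$-equivalence of $\phi_{0}$ and $\Exists{\overarrow{y}}\wit(\phi_{0})$, take any $\Ttwo$-interpretation of $\phi_{0}$ with both domains infinite (stable infiniteness), extend it to interpret $\overarrow{y}$ and the $u_{j}$ so that $\delta_{V}$ holds — easy, as infinitely many elements are available in $\s_{2}$. But property $(ii')$ then demands a $\Ttwo$-interpretation $\A$ satisfying $\wit(\phi_{0})\wedge\delta_{V}$ with $\s^{\A}=\vars_{\s}(\cdot)^{\A}$ and $\s_{2}^{\A}=\vars_{\s_{2}}(\cdot)^{\A}$; the arrangement forces $|\s^{\A}|=2$ and $|\s_{2}^{\A}|\geq 3$ (at least the three distinct $u_{j}$), so $|\s_{2}^{\A}|$ is finite (bounded by the number of $\s_{2}$-variables) and at least $3$. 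This contradicts the model description of $\Ttwo$: with $|\s^{\A}|=2=g(1)$, the only finite option has $|\s_{2}^{\A}|=g(1)=2$.

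The main obstacle is arranging the bookkeeping so that the chosen arrangement is genuinely compatible with $\wit(\phi_{0})$ — i.e.\ that we can drive the $\s$-sort down to exactly two classes without over-constraining the $\overarrow{y}$ variables in a way that makes $\wit(\phi_{0})\wedge\delta_{V}$ unsatisfiable. This is handled by first noting $\wit(\phi_{0})$ is $\Ttwo$-equivalent (under existential closure) to $\phi_{0}$, so it is satisfied in models with both domains infinite; in such a model one may freely re-assign all the $\overarrow{y}$ and $u_{j}$ variables to realize any arrangement consistent with $\phi_{0}$, and $\phi_{0}$ only requires two distinct $\s$-elements, which our arrangement respects. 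A parallel (slightly simpler) argument, identical in spirit to \Cref{Ttwo is not SM}, could instead use that $\Ttwo$ has a model with $|\s^{\A}|=|\s_{2}^{\A}|=2$ but none with $|\s^{\A}|=2,|\s_{2}^{\A}|=3$; the strong-witness clause, applied to a formula forcing $|\s|=2$ and an arrangement forcing three distinct $\s_{2}$-variables, yields the same contradiction. I would present whichever of these is cleanest once $\wit$ is pinned down, and otherwise cite the general principle (analogous to Lemma~73 of \cite{arxivCADE}) that stable infiniteness plus a strong witness forces smoothness up to $\aleph_{0}$, which $\Ttwo$ visibly fails.
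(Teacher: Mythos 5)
Your closing fallback is exactly the paper's proof: since $\Ttwo$ is stably infinite (\Cref{Ttwo is SI}), a strong witness would force it to be smooth up to $\aleph_{0}$ (the general principle from \cite{CADE}, Lemma 73 of \cite{arxivCADE}), which \Cref{Ttwo is not SM} already rules out — $\Ttwo$ has a $(2,2)$ model but no $(2,3)$ model. Had you led with that, you would be done in two lines.

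Your main, direct construction has a genuine gap. You \emph{prescribe} the arrangement $\delta_{V}$ on $\vars(\wit(\phi_{0}))$ — exactly two classes on the $\s$-side, all variables pairwise distinct on the $\s_{2}$-side — and justify its compatibility with $\wit(\phi_{0})$ by claiming that in an infinite model one ``may freely re-assign all the $\overarrow{y}$ and $u_{j}$ variables to realize any arrangement consistent with $\phi_{0}$.'' That is false: $\wit(\phi_{0})$ is an unknown quantifier-free formula that may impose its own equalities and disequalities on the fresh variables $\overarrow{y}$ (for instance, it could force two of its $\s_{2}$-variables to coincide whenever its $\s$-variables take only two values). The $\T$-equivalence of $\phi_{0}$ with $\Exists{\overarrow{y}}\wit(\phi_{0})$ guarantees only that \emph{some} assignment to $\overarrow{y}$ satisfies $\wit(\phi_{0})$, not one realizing the arrangement you chose, so $\wit(\phi_{0})\wedge\delta_{V}$ may simply be unsatisfiable. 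The standard repair — used throughout the appendix for $\Tten$, $\Televen$ and $\Ttwelve$ — is to start from the $(2,2)$-model of $\phi_{0}$, extend it to a satisfying assignment of $\wit(\phi_{0})$, take the arrangement \emph{induced} by that assignment (which automatically has exactly two $\s$-classes and at most two $\s_{2}$-classes), and only then adjoin the fresh pairwise-distinct $u_{1},u_{2},u_{3}$; since the signature is empty, the extended arrangement together with $\wit(\phi_{0})$ is satisfied in a model with both domains infinite, and condition $(ii')$ then produces a $\Ttwo$-interpretation with $|\s^{\A}|=2$ and $|\s_{2}^{\A}|$ finite and at least $3$, which does not exist.
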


\begin{proof}
    By \cite{CADE}, since $\Ttwo$ is stably infinite by \Cref{Ttwo is SI}, if it were strongly finitely witnessable it would also have, for a quantifier-free formula $\phi$, a $\Ttwo$-interpretation $\A$ satisfying $\phi$, and cardinals $|\s^{\A}|\leq \kappa\leq\aleph_{0}$ and $|\s_{2}^{\A}|\leq \kappa_{2}\leq\aleph_{0}$, a $\Ttwo$-interpretation $\B$ that satisfies $\phi$ with $|\s^{\B}|=\kappa$ and $|\s_{2}^{\B}|=\kappa_{2}$, which it does not have according to \Cref{Ttwo is not SM}.
\end{proof}

\begin{lemma}\label{Ttwo is CV}
    The $\Sigma_{2}$-theory $\Ttwo$ is convex with respect to $\{\s,\s_{2}\}$.
\end{lemma}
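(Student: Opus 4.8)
The plan is to show convexity of $\Ttwo$ directly from the description of its models. Recall that a theory is convex w.r.t. $\{\s,\s_2\}$ if whenever $\vDash_{\Ttwo}\phi\rightarrow\bigvee_{i=1}^{n}x_i=y_i$ for a conjunction of literals $\phi$ and variables $x_i,y_i$ of sorts in $\{\s,\s_2\}$, then $\vDash_{\Ttwo}\phi\rightarrow(x_i=y_i)$ for some $i$. The key structural fact is that every finite $\Ttwo$-interpretation has $|\s^{\A}|=|\s_2^{\A}|=g(k)$ for some $k$ (or $|\s^{\A}|=1$, which is $g(1)$ since $g(1)=1+f(1)=2$? — one must check; actually $g(1)=1+f(1)=2$, so $|\s^\A|=1$ is a genuinely separate case) and $\s_2^{\A}$ infinite, and that infinite interpretations have both domains infinite. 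Crucially, $g$ is increasing, so distinct finite cardinalities of $\Ttwo$-interpretations are strictly ordered.

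First I would reduce, using the standard trick (as in \cite{CADE}), to showing that whenever a disjunction of equalities is $\Ttwo$-valid relative to $\phi$, already a single equality is. The essential point is the following ``determinism''-style observation: if $\phi$ is $\Ttwo$-satisfiable and not all of its models force some equality $x_i=y_i$, then one can build a $\Ttwo$-model of $\phi$ in which \emph{none} of the $x_i=y_i$ hold. I would argue this by taking, for each $i$ with $\nvDash_{\Ttwo}\phi\rightarrow x_i=y_i$, a $\Ttwo$-model $\A_i\vDash\phi\wedge\neg(x_i=y_i)$, and then ``merging'' these witnesses into one model: since $\Ttwo$ is stably infinite (\Cref{Ttwo is SI}) and finitely witnessable (\Cref{Ttwo is FW}), we may assume each $\A_i$ has both domains of size $\aleph_0$ (enlarge via the construction in the proof of \Cref{Ttwo is SI}), so all the $\A_i$ share the same underlying cardinalities $(\aleph_0,\aleph_0)$ and hence isomorphic carriers; now on a single countably infinite carrier for each sort we choose interpretations of the variables of $\phi$ so that the induced arrangement refines none of the $x_i=y_i$ while still satisfying $\phi$. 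Concretely, $\phi$ being a conjunction of literals, its set of positive equality literals determines an equivalence relation $E$ on $\vars(\phi)$; if $\phi$ is satisfiable at all then $E$ does not identify $x_i$ with $y_i$ for any of the relevant $i$, so picking any injective assignment of the $E$-classes into a countably infinite domain yields a model satisfying $\phi$ and falsifying every $x_i=y_i$ simultaneously. This model is a $\Ttwo$-interpretation because both its domains are infinite.

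Then I would handle the boundary: the only subtlety is whether the shared infinite model actually exists for the given $\phi$ — i.e. whether $\phi$ is satisfied by an infinite $\Ttwo$-interpretation whenever it is $\Ttwo$-satisfiable. But this is exactly what stable infiniteness (\Cref{Ttwo is SI}) gives: any $\Ttwo$-model of $\phi$ can be blown up to one with both domains infinite while still satisfying $\phi$. So if $\phi$ is $\Ttwo$-satisfiable and the disjunction $\bigvee_i x_i=y_i$ is $\Ttwo$-valid relative to $\phi$, the infinite model just constructed would have to satisfy some $x_i=y_i$; but it was built to falsify all of them unless the positive part of $\phi$ already forces that equality. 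Hence for some $i$, the equivalence relation generated by $\phi$'s positive literals puts $x_i$ and $y_i$ in the same class, which means $\vDash_{\Ttwo}\phi\rightarrow(x_i=y_i)$ (in fact $\vDash\phi\rightarrow x_i=y_i$). The case where $\phi$ is $\Ttwo$-unsatisfiable is trivial, since then every implication $\phi\rightarrow\psi$ is $\Ttwo$-valid, in particular $\phi\rightarrow(x_1=y_1)$.

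The main obstacle is getting the ``merging'' step completely right: one must make sure that enlarging the various witnessing models to a common pair of infinite cardinalities (and hence isomorphic carriers for the empty signature) genuinely preserves satisfaction of $\phi$ and can be done simultaneously, and that the arrangement on $\vars(\phi)$ in the merged model can be chosen to be precisely the one induced by the positive literals of $\phi$ — no finer. Since $\Sigma_2$ is empty, interpretations are just domains plus variable assignments, so this is really a statement about equality logic: a conjunction of equalities and disequalities, if satisfiable, is satisfied by the assignment sending each equivalence class of its positive part to a distinct element of a sufficiently large (here, countably infinite) domain. This makes the argument short once the reduction is in place; I expect the actual write-up to invoke the analogous lemma from \cite{CADE} (the result that for stably infinite theories, convexity can be checked against infinite models / via the positive-literal equivalence) rather than redoing it from scratch, exactly as is done for $\Tone$ via ``Theorem 5 of \cite{CADE}'' earlier in the appendix.
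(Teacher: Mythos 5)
Your proposal is correct and matches the paper's approach: the paper's entire proof is ``Follows from Theorem $5$ of \cite{CADE}'' (the result that, over an empty signature, stable infiniteness yields convexity via the positive-literal equivalence argument on an infinite model), which is exactly the lemma you anticipated invoking and whose proof you unfolded. The intermediate ``merging of the $\A_i$'' discussion is unnecessary given your cleaner final argument (a single injective assignment of the $E$-classes into countably infinite domains), but it does no harm.
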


\begin{proof}
    Follows from Theorem $5$ of \cite{CADE}.
\end{proof}

\begin{lemma}
    The $\Sigma_{2}$-theory $\Ttwo$ has the finite model property with respect to $\{\s,\s_{2}\}$.
\end{lemma}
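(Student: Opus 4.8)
The plan is to use two features of the situation: the signature $\Sigma_{2}$ is empty, so a quantifier-free formula is (equivalent to) a Boolean combination of equalities between variables and can therefore only impose \emph{lower} bounds on the sizes of the domains, never an upper bound; and $g$ is unbounded (property \textbf{2} of \Cref{lem:g-exists}), so $\Ttwo$ has finite interpretations of the ``second kind'', with $|\s^{\A}| = |\s_{2}^{\A}| = g(n)$, of arbitrarily large cardinality.

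First I would record the following observation: for every $n \in \mathbb{N}\setminus\{0\}$, any $\Sigma_{2}$-interpretation $\B$ with $|\s^{\B}| = |\s_{2}^{\B}| = g(n)$ is a $\Ttwo$-interpretation. Indeed, fixing an axiom indexed by $m$, if $n \leq m$ then the disjunct $\psi^{\s}_{=g(n)}\wedge\psi^{\s_{2}}_{=g(n)}$ appears in it and is satisfied, while if $n > m$ then, since $g$ is increasing, $g(n) \geq g(m+1)$, so $\psi^{\s}_{\geq g(m+1)}\wedge\psi^{\s_{2}}_{\geq g(m+1)}$ holds; in either case $\B$ satisfies the axiom.

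Now take a quantifier-free $\Ttwo$-satisfiable $\phi$ and a $\Ttwo$-interpretation $\A$ satisfying it. Set $m := \max\{|\vars_{\s}(\phi)^{\A}|,\, |\vars_{\s_{2}}(\phi)^{\A}|\}$, a natural number, and pick $n$ with $g(n) \geq m$, which exists because $g$ is unbounded. I would then build $\B$ by taking $\s^{\B} = \vars_{\s}(\phi)^{\A}\cup A$ and $\s_{2}^{\B} = \vars_{\s_{2}}(\phi)^{\A}\cup A'$, where $A$ and $A'$ are fresh sets chosen so that $|\s^{\B}| = |\s_{2}^{\B}| = g(n)$ (possible since $g(n) \geq m$), interpreting each variable occurring in $\phi$ exactly as $\A$ does and the remaining variables arbitrarily. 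By the observation above $\B$ is a $\Ttwo$-interpretation, and it is finite; and since $x \mapsto x^{\B}$ agrees with $x \mapsto x^{\A}$ on $\vars(\phi)$, every equality atom of $\phi$ has the same truth value in $\B$ as in $\A$, so $\B \vDash \phi$. This gives the finite model property with respect to $\{\s, \s_{2}\}$.

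There is no real obstacle: the only step needing a moment's care is the verification that models of cardinality $g(n)$ in both sorts satisfy \emph{all} the axioms, which relies on the monotonicity of $g$. Note that the argument deliberately produces only the finite model property and not stable finiteness — consistently with \Cref{tab-summary-app} — because a first-kind model with $|\s^{\A}| = 1$ cannot be replaced by a finite second-kind model without enlarging the domain of sort $\s$, as $g(1) = 2$.
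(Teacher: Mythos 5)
Your proof is correct, but it takes a different route from the paper. The paper disposes of this lemma in one line: it invokes the general implication ``finitely witnessable $\Rightarrow$ finite model property'' (Theorem 2 of \cite{FroCoS}, i.e.\ the impossibility of $\finwit+\overline{\finmodpro}$) together with \Cref{Ttwo is FW}, which has already established that $\Ttwo$ is finitely witnessable. You instead give a direct, self-contained construction: you first verify that every $\Sigma_{2}$-interpretation with both domains of cardinality $g(n)$ is a $\Ttwo$-model (using that $g$ is increasing to handle the axioms indexed by $m<n$), and then shrink a given model onto the values of the variables of $\phi$ and pad both sorts up to a common $g(n)$, which exists because $g$ is unbounded. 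This is essentially the same combinatorial content as the witness construction in the proof of \Cref{Ttwo is FW} (which pads to $g(2^{k+1})$), but stripped of the witness machinery; what you lose is the reuse of an already-proved lemma, what you gain is independence from the finite-witnessability result and from the external theorem of \cite{FroCoS}. All steps check out, including the edge cases (empty variable sets of a sort, and the observation that the argument correctly does \emph{not} yield stable finiteness, since the first class of models, with $|\s^{\A}|=1$, forces the sort-$\s$ domain to grow to $g(n)\geq 2$).
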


\begin{proof}
    Follows from Theorem 2 of \cite{FroCoS} and \Cref{Ttwo is FW}.
\end{proof}

\begin{lemma}
    The $\Sigma_{2}$-theory $\Ttwo$ is not stably finite with respect to $\{\s,\s_{2}\}$.
\end{lemma}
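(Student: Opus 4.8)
The plan is to use the first of the three classes of $\Ttwo$-models — those with a one-element $\s$-domain and an infinite $\s_2$-domain — as a witness against stable finiteness. The point is that such a model cannot be ``shrunk'' in the $\s$-coordinate without leaving the class of $\Ttwo$-interpretations, and that every $\Ttwo$-interpretation with a singleton $\s$-domain is forced to be infinite in the $\s_2$-coordinate, so it has no finite counterpart of the shape that stable finiteness would demand.

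First I would fix a concrete quantifier-free formula $\phi$ that is $\Ttwo$-satisfied by some first-class interpretation; taking $\phi := \neg(u_1 = u_2)$ with $u_1,u_2$ of sort $\s_2$ works, since a first-class $\Ttwo$-interpretation $\A$ has $|\s_2^{\A}| = \aleph_0 \geq 2$ and can therefore assign $u_1,u_2$ distinct values. (Recall from the discussion following the definition of $\Ttwo$ that such an $\A$, with $|\s^{\A}| = 1$ and $\s_2^{\A}$ infinite, is indeed a $\Ttwo$-interpretation.)

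Then I would argue by contradiction: assuming $\Ttwo$ is stably finite with respect to $\{\s,\s_2\}$, instantiating the definition at $\phi$ and $\A$ produces a $\Ttwo$-interpretation $\B$ with $\B \vDash \phi$, $|\s^{\B}| < \aleph_0$, $|\s_2^{\B}| < \aleph_0$, $|\s^{\B}| \leq |\s^{\A}| = 1$, and $|\s_2^{\B}| \leq |\s_2^{\A}| = \aleph_0$. Non-emptiness of domains forces $|\s^{\B}| = 1$. Finally I would evaluate, for an arbitrary $n \in \mathbb{N}\setminus\{0\}$, the $n$-th axiom of $\Ttwo$ at $\B$: its second disjunct demands $|\s^{\B}| \geq g(n+1)$ and each disjunct of its third part demands $|\s^{\B}| = g(i)$ for some $1 \leq i \leq n$, all impossible because $g$ takes only values $\geq g(1) = 2$; hence the first disjunct $\psi^{\s}_{=1} \wedge \psi^{\s_2}_{\geq n}$ must hold at $\B$, giving $|\s_2^{\B}| \geq n$. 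Since $n$ was arbitrary, $\s_2^{\B}$ is infinite, contradicting $|\s_2^{\B}| < \aleph_0$.

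The only step that needs to be spelled out with a little care is this last one — that every $\Ttwo$-interpretation with a one-element $\s$-domain has an infinite $\s_2$-domain — because the informal ``three classes of model'' description given right after the definition of $\Ttwo$ is not formally justified there. But this is a short, direct reading of the axioms using only that $g$ never takes the value $1$ (as $g$ is increasing with $g(1)=2$), so I do not expect any real obstacle.
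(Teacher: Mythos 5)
Your proof is correct and takes essentially the same route as the paper's: the paper's (one-line) argument also exhibits a $\Ttwo$-model $\A$ with $|\s^{\A}|=1$ and $|\s_2^{\A}|=\aleph_0$ and observes that no $\Ttwo$-model has a singleton $\s$-domain together with a finite $\s_2$-domain. Your version merely spells out the choice of $\phi$ and the axiom-by-axiom verification that $|\s^{\B}|=1$ forces $|\s_2^{\B}|\geq n$ for all $n$, which the paper leaves implicit.
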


\begin{proof}
    Obvious, since $\Ttwo$ has a model $\A$ with $|\s^{\A}|=1$ and $|\s_{2}^{\A}|=\aleph_{0}$, but no models $\B$ with both $|\s^{\B}|$ and $|\s_{2}^{\B}|$ finite, $|\s^{\B}|\leq |\s^{\A}|$ and $|\s_{2}^{\B}|\leq |\s_{2}^{\A}|$.
\end{proof}

\begin{lemma}\label{cmmf ttwo}
    The $\Sigma_{2}$-theory $\Ttwo$ does not have a computable minimal model function with respect to $\{\s,\s_{2}\}$.
\end{lemma}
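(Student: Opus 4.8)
The plan is to mimic the argument used for $\Tone$ in \Cref{cmmf:tone}, transferring the non-computability of $g$ into the non-computability of $\minmod_{\Ttwo,\{\s,\s_{2}\}}$. The key observation is that on formulas that force the two domains to have equal cardinality, the behavior of $\Ttwo$ along its ``diagonal'' class of models is governed precisely by $g$, so a computable minimal model function for $\Ttwo$ would yield a computable procedure for $g$, contradicting \Cref{lem:g-exists} (property \textbf{4}).

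Concretely, I would argue by contradiction: assume $\minmod_{\Ttwo,\{\s,\s_{2}\}}$ is computable. For each $n \in \mathbb{N}\setminus\{0\}$ consider the quantifier-free formula $\chi_n := \NNNEQ{x}{n} \wedge \NNNEQ{u}{n}$, where the $x_i$ are of sort $\s$ and the $u_i$ of sort $\s_2$; this formula asserts $|\s^{\A}| \geq n$ and $|\s_2^{\A}| \geq n$. By \Cref{Ttwo is SI} (or directly, since $\Ttwo$ has models with both domains infinite) $\chi_n$ is $\Ttwo$-satisfiable. I would then examine $\textbf{Card}_{\Ttwo,\{\s,\s_{2}\}}(\chi_n)$: from the description of the models of $\Ttwo$, the only models satisfying $\chi_n$ with $n \geq 2$ are those with $|\s^{\A}| = |\s_2^{\A}| = g(k)$ for some $k$ with $g(k) \geq n$, together with the all-infinite model $(\aleph_0,\aleph_0)$ (the class with $|\s^{\A}|=1$ is excluded once $n\geq 2$). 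Hence by \Cref{alternative definition} the unique minimal element is $(g(k_0), g(k_0))$ where $g(k_0)$ is the least value of $g$ that is $\geq n$; since $g$ is increasing (property \textbf{1}), this is well-defined.

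Using this, I would define $h: \mathbb{N}\setminus\{0\} \to \mathbb{N}\setminus\{0\}$ by $h(1) = 2$ (hard-coded) and, for $n \geq 1$, letting $h(n+1)$ be the first coordinate of the unique element of $\minmod_{\Ttwo,\{\s,\s_{2}\}}(\chi_{h(n)+1})$. Since $\chi_m$ can be written down algorithmically from $m$, and $\minmod_{\Ttwo,\{\s,\s_{2}\}}$ is assumed computable, $h$ is computable. An easy induction on $n$, using the fact that $h(n+1)$ is the least value of $g$ that is $\geq h(n)+1 > h(n)$, shows $h(n) \geq g(n)$ for all $n$; in fact one checks $h(n) = g(n)$ exactly (since $h(1)=2=g(1)$ when $f(1)=1$ — note $g(1)=1+f(1)=2$ — and the recursion lands exactly on the next value of $g$). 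Either way, this contradicts the non-computability of $g$ from \Cref{lem:g-exists}.

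The main obstacle, and the only part requiring care, is the precise determination of $\textbf{Card}_{\Ttwo,\{\s,\s_{2}\}}(\chi_n)$ from the axiomatization: one must verify from the distributed form of the axioms that there are no ``off-diagonal'' finite models (other than the $|\s^{\A}|=1$ family, which $\chi_n$ rules out for $n\geq2$), so that the minimal element is genuinely forced onto the graph of $g$. This is routine given the already-established description of the models of $\Ttwo$ in the text preceding \Cref{Ttwo is SI}, but it is where the whole argument hinges, and it is also where one must double-check the base case $h(1)$ and the exact alignment between the recursion for $h$ and successive values of $g$ (using that $g$ is increasing so that ``least value of $g$ that is $\geq h(n)+1$'' equals $g(n+1)$ whenever $h(n)=g(n)$).
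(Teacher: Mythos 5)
Your proposal is correct and follows essentially the same route as the paper: assume computability, recursively define $h$ with $h(1)=2$ and $h(n+1)$ extracted from $\minmod_{\Ttwo,\{\s,\s_{2}\}}$ applied to a cardinality-forcing formula, observe $h=g$, and contradict the non-computability of $g$. The only (inessential) difference is that the paper feeds in $\NNNEQ{x}{h(n)+1}$ with variables of sort $\s$ alone and takes $\min\{p:(p,q)\in\minmod_{\Ttwo,\{\s,\s_2\}}(\cdot)\}$, whereas you conjoin the analogous constraint on both sorts; both choices isolate the diagonal family $(g(k),g(k))$ once the threshold exceeds $1$.
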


\begin{proof}
     Assume by contradiction that $\Ttwo$ has a computable minimal model function, and we define a function $h:\mathbb{N}\setminus\{0\}\rightarrow\mathbb{N}\setminus\{0\}$ by making $h(1)=2$ and 
    \[h(n+1)=\min\{p : (p,q)\in \minmod_{\Ttwo,\{\s,\s_{2}\}}(\NNNEQ{x}{h(n)+1})\},\]
    for $x_{i}$ of sort $\s$
    ($\NNNEQ{x}{h(n)+1}$ is then $\Ttwo$-satisfiable).
    Such a function is computable given that a minimal model function of $\Ttwo$ is computable; furthermore, it is easy to see that $h(n)=g(n)$ for all $n\in\mathbb{N}\setminus\{0\}$, contradicting the fact that $g$ is not computable.
\end{proof}

\subsection{\tp{$\Tthree$}{Tthree}}

$\Tthree$ is the $\Sigma_{2}$-theory with axiomatization
\[\{\psi^{\s}_{\geq n}\wedge(\psi^{\s_{2}}_{\geq g(n+1)}\vee\bigvee_{i=1}^{n}\psi^{\s_{2}}_{=g(i)}) : n\in\mathbb{N}\setminus\{0\}\}.\]
Its models always have the domain of sort $\s$ infinite, while the domain of sort $\s_{2}$ may be infinite or equal to $g(n)$ for some $n\in\mathbb{N}\setminus\{0\}$.

\begin{lemma}
    The $\Sigma_{2}$-theory $\Tthree$ is stably infinite with respect to $\{\s,\s_{2}\}$.
\end{lemma}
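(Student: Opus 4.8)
The plan is to follow exactly the pattern of \Cref{Ttwo is SI}: starting from an arbitrary $\Tthree$-interpretation that satisfies a given quantifier-free formula, we enlarge the domains so that both are infinite, exploiting that $\Sigma_{2}$ is empty so that no function or predicate symbols constrain the added elements.

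First I would take a quantifier-free formula $\phi$ that is $\Tthree$-satisfiable and fix a $\Tthree$-interpretation $\A$ with $\A\vDash\phi$. Note that $\s^{\A}$ is already infinite, since every axiom of $\Tthree$ contains the conjunct $\psi^{\s}_{\geq n}$, so in principle only the domain of sort $\s_{2}$ needs to be enlarged. Choose a countably infinite set $Y$ disjoint from $\s_{2}^{\A}$ and define the $\Sigma_{2}$-interpretation $\B$ by $\s^{\B}=\s^{\A}$, $\s_{2}^{\B}=\s_{2}^{\A}\cup Y$, and $z^{\B}=z^{\A}$ for every variable $z$ of either sort. (Equivalently one could also adjoin an infinite set to $\s^{\A}$; it makes no difference.)

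Then I would verify the two required facts. First, $\B\vDash\phi$: because $\Sigma_{2}$ has no symbols other than equality, the atomic subformulas of $\phi$ are equalities between variables, whose truth under $\B$ depends only on the assignment $z\mapsto z^{\B}=z^{\A}$, which agrees with that of $\A$; hence $\B$ and $\A$ satisfy exactly the same quantifier-free formulas, and in particular $\B\vDash\phi$. Second, $\B$ is a $\Tthree$-interpretation: $\s^{\B}=\s^{\A}$ is infinite, so $\B\vDash\psi^{\s}_{\geq n}$ for all $n$, and $\s_{2}^{\B}\supseteq Y$ is infinite, so for every $n$ the disjunct $\psi^{\s_{2}}_{\geq g(n+1)}$ holds (as $g(n+1)\in\mathbb{N}$), whence $\B$ satisfies every axiom $\psi^{\s}_{\geq n}\wedge(\psi^{\s_{2}}_{\geq g(n+1)}\vee\bigvee_{i=1}^{n}\psi^{\s_{2}}_{=g(i)})$ of $\Tthree$. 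Since $|\s^{\B}|\geq\aleph_{0}$ and $|\s_{2}^{\B}|\geq\aleph_{0}$, this $\B$ is the desired interpretation, and stable infiniteness with respect to $\{\s,\s_{2}\}$ follows.

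There is no real obstacle here: the only step that deserves a word is the transfer of satisfaction of $\phi$ from $\A$ to $\B$, which is immediate because extending the domains of an empty-signature interpretation makes $\A$ a substructure of $\B$, and quantifier-free truth is preserved in both directions under a fixed variable assignment.
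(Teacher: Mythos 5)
Your proof is correct and takes essentially the same approach as the paper's: adjoin a fresh infinite set to the domain(s) of $\A$ and observe that, since $\Sigma_{2}$ is empty, satisfaction of $\phi$ is preserved while both domains become infinite. The only (cosmetic) difference is that you note $\s^{\A}$ is already forced to be infinite by the axioms and so enlarge only $\s_{2}^{\A}$, whereas the paper adjoins an infinite set to both domains; both variants work equally well.
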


\begin{proof}
    Take a quantifier-free formula $\phi$, and let $\A$ be a $\Tthree$-interpretation that satisfies $\phi$. We define a $\Sigma_{2}$-interpretation $\B$ by making $\s^{\B}=\s^{\A}\cup X$ and $\s_{2}^{\B}=\s_{2}^{\A}\cup X$ (for $X$ an infinite set disjoint from both $\s^{\A}$ and $\s_{2}^{\A}$), and $x^{\B}=x^{\A}$ for a variable $x$ of either sort. Then $\B$ has both domains infinite, is a $\Tthree$-interpretation, and satisfies $\phi$.
\end{proof}

\begin{lemma}
    The $\Sigma_{2}$-theory $\Tthree$ is not smooth with respect to $\{\s,\s_{2}\}$.
\end{lemma}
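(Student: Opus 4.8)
The plan is to exploit item~\textbf{3} of \Cref{lem:g-exists} — the non-surjectivity of $g$ — to locate a finite cardinality that no $\Tthree$-interpretation can realize in its $\s_2$-component, and then to mimic the argument used to show that $\Tone$ is not smooth. Concretely, I would first pin down an explicit ``hole'' of $g$. Since $f(1)=1$, the balance condition on $f$ at $k=1$ forces $f(2)=0$, and the balance condition at $k=2$ forces exactly one of $f(3),f(4)$ to equal $1$. Computing $g$ from $g(n)=n+\sum_{i=1}^n f(i)$: if $f(3)=0$ (so $f(4)=1$), then $g(3)=4$ while $g(4)=6$, so $5$ is not in the image of $g$; if $f(3)=1$, then $g(2)=3$ while $g(3)=5$, so $4$ is not in the image of $g$. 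Either way there is a natural number $k\in\{4,5\}$ that is not of the form $g(n)$ for any $n\in\mathbb{N}\setminus\{0\}$, with $k>g(1)=2$.

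Next I would recall the previously established description of the models of $\Tthree$: every $\Tthree$-interpretation has $\s$ infinite, while $\s_2$ is either infinite or of cardinality $g(n)$ for some $n\in\mathbb{N}\setminus\{0\}$. In particular, no $\Tthree$-interpretation $\B$ has $|\s_2^{\B}|=k$, since $k$ is finite and not in the image of $g$. To refute smoothness it then suffices to take the tautology $\phi:=(x=x)$ with $x$ of sort $\s$, which is satisfied by every $\Tthree$-interpretation; the $\Tthree$-interpretation $\A$ with $\s^{\A}$ countably infinite and $|\s_2^{\A}|=g(1)=2$ (this is indeed a $\Tthree$-interpretation, since $\psi^{\s}_{\geq n}$ holds for all $n$ and $\psi^{\s_2}_{=g(1)}$ holds, so every axiom is satisfied); and the cardinal function $\kappa$ on $\{\s,\s_2\}$ with $\kappa(\s)=\aleph_0$ and $\kappa(\s_2)=k$. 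Then $\kappa(\s)\geq|\s^{\A}|$ and $\kappa(\s_2)=k\geq 2=|\s_2^{\A}|$, yet there is no $\Tthree$-interpretation satisfying $\phi$ with $\s_2$-domain of cardinality $k$, so the defining condition of smoothness w.r.t. $\{\s,\s_2\}$ fails.

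The only point that needs any care is the arithmetic bookkeeping for $g$ on small inputs and the resulting claim that $k\notin g(\mathbb{N}\setminus\{0\})$; once the hole is in hand, everything else — that $x=x$ is a validity, that $\A$ is a $\Tthree$-interpretation, that the cardinal inequalities hold — is immediate from the axiomatization of $\Tthree$ and the already-recorded shape of its models. One could instead bypass the explicit computation by observing that the proof of item~\textbf{3} of \Cref{lem:g-exists} in fact produces infinitely many $n$ with $g(n+1)>g(n)+1$, each yielding such a hole $g(n)+1>g(1)$; I would nonetheless keep the explicit version, since it parallels the $\Tone$ argument and is self-contained.
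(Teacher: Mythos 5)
Your proof is correct and follows essentially the same route as the paper's: both exhibit a finite cardinality missing from the image of $g$ (a consequence of property \textbf{3} of \Cref{lem:g-exists}) and use it to block the required enlargement of the $\s_{2}$-domain, the only difference being that you compute an explicit hole in $\{4,5\}$ while the paper merely invokes the existence of some $n$ with $g(n)<g(n)+1<g(n+1)$.
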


\begin{proof}
    Obvious, given that there is an $n\in\mathbb{N}\setminus\{0\}$ such that $g(n)<g(n)+1<g(n+1)$, and $\Tthree$ has a model $\A$ with $|\s^{\A}|=\aleph_{0}$ and $|\s_{2}^{\A}|=g(n)$, but no model $\B$ with $|\s^{\B}|=\aleph_{0}$ and $|\s_{2}^{\B}|=g(n)+1$.
\end{proof}

\begin{lemma}
    The $\Sigma_{2}$-theory $\Tthree$ is not finitely witnessable, and thus not strongly finitely witnessable, with respect to $\{\s,\s_{2}\}$.
\end{lemma}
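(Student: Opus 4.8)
The plan is to exploit the single structural feature of $\Tthree$ that makes witnessability on the sort $\s$ impossible: every $\Tthree$-interpretation has an \emph{infinite} domain of sort $\s$.

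First I would record two preliminary observations. \textbf{(a)} Each axiom of $\Tthree$, indexed by $n\in\mathbb{N}\setminus\{0\}$, contains the conjunct $\psi^{\s}_{\geq n}$; hence for any $\Tthree$-interpretation $\A$ and every $n$ we have $|\s^{\A}|\geq n$, so $\s^{\A}$ is infinite. \textbf{(b)} $\Tthree$ is not contradictory: taking $\s^{\A}$ countably infinite and $|\s_{2}^{\A}|=g(1)$, the disjunct $\psi^{\s_{2}}_{=g(1)}$ satisfies the second conjunct of the $n$-th axiom for every $n\geq 1$. Consequently there are $\Tthree$-satisfiable quantifier-free formulas — for instance $\phi := (x=x)$ with $x$ a variable of sort $\s$.

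Next, suppose toward a contradiction that $\Tthree$ is finitely witnessable w.r.t. $\{\s,\s_{2}\}$, witnessed by $\wit$. Apply $\wit$ to the $\Tthree$-satisfiable $\phi$ above. By clause $(i)$, $\phi$ and $\Exists{\overarrow{x}}\wit(\phi)$ are $\Tthree$-equivalent (for $\overarrow{x}=\vars(\wit(\phi))\setminus\vars(\phi)$), so $\Exists{\overarrow{x}}\wit(\phi)$ is $\Tthree$-satisfiable, and hence $\wit(\phi)$ itself is $\Tthree$-satisfiable (any $\Tthree$-interpretation satisfying the existential closure can be modified on the variables $\overarrow{x}$ to satisfy $\wit(\phi)$, and it remains a $\Tthree$-interpretation). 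Then clause $(ii)$ produces a $\Tthree$-interpretation $\A$ satisfying $\wit(\phi)$ with $\s^{\A}=\vars_{\s}(\wit(\phi))^{\A}$. But $\vars_{\s}(\wit(\phi))$ is a finite set of variables, so $\vars_{\s}(\wit(\phi))^{\A}$ is finite, contradicting observation \textbf{(a)}. Hence $\Tthree$ is not finitely witnessable.

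Finally, for strong finite witnessability: instantiating clause $(ii')$ with $V=\emptyset$, the arrangement $\delta_{V}$ is the empty conjunction, so $\wit(\phi)\wedge\delta_{V}$ reduces to $\wit(\phi)$ and $\vars_{\s}(\wit(\phi)\wedge\delta_{V})=\vars_{\s}(\wit(\phi))$; thus $(ii')$ degenerates exactly to $(ii)$, so any strong witness is in particular a witness. Therefore strong finite witnessability implies finite witnessability, and $\Tthree$, not being finitely witnessable, is not strongly finitely witnessable either. The only point needing any care is the mild verification in observation \textbf{(b)} that $\Tthree$ admits a $\Tthree$-satisfiable formula, which is needed so that clause $(i)$ forces $\wit(\phi)$ to be $\Tthree$-satisfiable; beyond that, the argument is immediate from the infinitude of $\s^{\A}$.
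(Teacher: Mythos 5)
Your proposal is correct and follows essentially the same route as the paper's own proof: the key point in both is that every $\Tthree$-interpretation has an infinite domain of sort $\s$, which is incompatible with clause $(ii)$ forcing $\s^{\A}$ to equal the value set of the finitely many variables $\vars_{\s}(\wit(\phi))$. Your additional details (verifying $\Tthree$-satisfiability of a tautology and reducing strong to ordinary finite witnessability via $V=\emptyset$) are accurate elaborations of what the paper leaves implicit.
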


\begin{proof}
    Obvious, given that if $\Tthree$ had a witness $\wit$, then for any $\Tthree$-satisfiable quantifier-free formula $\phi$ there would exist a $\Tthree$-interpretation $\A$ with $\s^{\A}=\vars_{\s}(\wit(\phi))^{\A}$ and $\s_{2}^{\A}=\vars_{\s_{2}}(\wit(\phi))^{\A}$, what is impossible given that any $\A$ must have $\s^{\A}$ infinite.
\end{proof}

\begin{lemma}\label{Tthree is convex}
    The $\Sigma_{2}$-theory $\Tthree$ is convex with respect to $\{\s,\s_{2}\}$.
\end{lemma}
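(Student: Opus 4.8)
The plan is to obtain convexity for free from the stable infiniteness of $\Tthree$, which was just established in the preceding lemma. Concretely, I would invoke Theorem~5 of \cite{CADE}: a theory over an empty signature that is stably infinite with respect to a set $S$ of sorts is convex with respect to $S$. Since $\Tthree$ is a $\Sigma_{2}$-theory, its signature is empty, and the previous lemma gives stable infiniteness with respect to $\{\s,\s_{2}\}$, so the statement follows immediately. Thus the only thing to check is that the hypotheses of that cited result genuinely apply here, which they do.

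For completeness I would also sketch the argument behind the cited result, specialized to $\Tthree$. Let $\phi$ be a conjunction of literals with $\vDash_{\Tthree}\phi\rightarrow\bigvee_{i=1}^{n}x_{i}=y_{i}$, all $x_{i},y_{i}$ of sorts in $\{\s,\s_{2}\}$. If $\phi$ is $\Tthree$-unsatisfiable, then $\vDash_{\Tthree}\phi\rightarrow x_{1}=y_{1}$ trivially; so assume $\phi$ is $\Tthree$-satisfiable and, towards a contradiction, that for each $i$ there is a $\Tthree$-interpretation $\B_{i}$ satisfying $\phi\wedge\neg(x_{i}=y_{i})$. Put $V=\vars(\phi)$ and let $E_{0}$ be the (sort-respecting) equivalence relation on $V$ generated by the equalities occurring positively in $\phi$. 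Every $\Tthree$-interpretation satisfying $\phi$ induces on $V$ an equivalence containing $E_{0}$; applying this to each $\B_{i}$, together with the fact that $x_{i}$ and $y_{i}$ are separated in $\B_{i}$, forces $x_{i}$ and $y_{i}$ to lie in distinct $E_{0}$-classes, for every $i$.

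It then remains to produce a single $\Tthree$-interpretation satisfying $\phi$ but none of the equalities $x_{i}=y_{i}$. Since $\phi$ is $\Tthree$-satisfiable, pick a $\Tthree$-interpretation $\A$ satisfying it, and, by the stable-infiniteness construction of the previous lemma, take $\A$ with $\s^{\A}$ and $\s_{2}^{\A}$ both infinite. Because the signature is empty, $\A$ only constrains the finitely many variables of $V$, so one may reassign their values — there is room, as the domains are infinite — to obtain a $\Tthree$-interpretation $\A'$ with the same domains that satisfies the arrangement $\delta_{V}^{E_{0}}$. Then $\A'\vDash\phi$, while $\A'\not\vDash x_{i}=y_{i}$ for each $i$ since $x_{i}$ and $y_{i}$ are $E_{0}$-inequivalent, contradicting $\vDash_{\Tthree}\phi\rightarrow\bigvee_{i=1}^{n}x_{i}=y_{i}$. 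Hence $\vDash_{\Tthree}\phi\rightarrow x_{i}=y_{i}$ for some $i$. The argument has no real obstacle; the one point worth stating explicitly is that the reassignment step needs infinite domains of both sorts, which is precisely what the stable infiniteness of $\Tthree$ supplies, so the whole proof reduces to the previous lemma plus bookkeeping about arrangements over an empty signature.
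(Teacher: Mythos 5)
Your proposal is correct and matches the paper exactly: the paper's entire proof is ``Follows from Theorem $5$ of \cite{CADE},'' i.e.\ the same appeal to stable infiniteness over an empty signature that you make. Your additional sketch of the underlying argument (reducing to the arrangement $\delta_{V}^{E_{0}}$ and using the infinite domains to separate the non-$E_{0}$-equivalent variables) is a sound unpacking of that cited result, so nothing is missing.
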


\begin{proof}
    Follows from Theorem $5$ of \cite{CADE}.
\end{proof}

\begin{lemma}
    The $\Sigma_{2}$-theory $\Tthree$ does not have the finite model property, and thus is not stably finite, with respect to $\{\s,\s_{2}\}$.
\end{lemma}

\begin{proof}
    Obvious, given that $\Tthree$ has no models where both domains are finite.
\end{proof}

\begin{lemma}
    The $\Sigma_{2}$-theory $\Tthree$ does not have a computable minimal model function with respect to $\{\s,\s_{2}\}$.
\end{lemma}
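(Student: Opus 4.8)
The plan is to argue by contradiction: assume $\Tthree$ has a computable minimal model function $\minmod_{\Tthree,\{\s,\s_{2}\}}$ and use it to compute the function $g$ from \Cref{lem:g-exists}, which is not computable. This follows the same template as the arguments for $\Tone$ and $\Ttwo$ above. The first step is to identify $\minmod_{\Tthree,\{\s,\s_{2}\}}$ on the formulas $\phi_{k}=\NNNEQ{u}{k}$ with the $u_{i}$ of sort $\s_{2}$. Each $\phi_{k}$ is $\Tthree$-satisfiable, since $\Tthree$ has models with $\s_{2}$ infinite. By the axiomatization, every $\Tthree$-interpretation $\A$ has $\s^{\A}$ infinite, hence $|\s^{\A}|=\aleph_{0}$ by \Cref{LowenheimSkolem}, while $|\s_{2}^{\A}|$ is either $\aleph_{0}$ or $g(m)$ for some $m\in\mathbb{N}\setminus\{0\}$; a model of $\phi_{k}$ moreover requires $|\s_{2}^{\A}|\geq k$. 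Since $g$ is increasing and unbounded, the least admissible value is $g(m_{k})$, where $m_{k}=\min\{j\geq 1 : g(j)\geq k\}$, and it is realized (take the axiom instance for $n=m_{k}$ with domains of cardinality $\aleph_{0}$ and $g(m_{k})$). Thus $(\aleph_{0},g(m_{k}))$ is the $\leq$-minimum of $\textbf{Card}_{\Tthree,\{\s,\s_{2}\}}(\phi_{k})$, and by \Cref{alternative definition} we get $\minmod_{\Tthree,\{\s,\s_{2}\}}(\phi_{k})=\{(\aleph_{0},g(m_{k}))\}$.

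Next I would set up the recursion. Define $h:\mathbb{N}\setminus\{0\}\to\mathbb{N}\setminus\{0\}$ by $h(1)=g(1)$ and
\[h(n+1)=\min\{q : (p,q)\in\minmod_{\Tthree,\{\s,\s_{2}\}}(\NNNEQ{u}{h(n)+1})\},\]
with the $u_{i}$ of sort $\s_{2}$; the formula $\NNNEQ{u}{h(n)+1}$ is $\Tthree$-satisfiable because $\Tthree$ has models with $\s_{2}$ infinite. By the previous paragraph $h(n+1)=g(m_{h(n)+1})$, and an induction shows $h(n)=g(n)$ for all $n$: the base case is the hard-coded value, and for the step, assuming $h(n)=g(n)$, the fact that $g$ is increasing gives $g(n)<g(n)+1\leq g(n+1)$, so $m_{g(n)+1}=n+1$ and hence $h(n+1)=g(n+1)$. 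Since $\minmod_{\Tthree,\{\s,\s_{2}\}}$ is computable, its output is always finite, and $\NNNEQ{u}{h(n)+1}$ is obtained algorithmically from $h(n)$, the function $h$ is computable, contradicting the non-computability of $g$.

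The only delicate point is the exact determination of $\minmod_{\Tthree,\{\s,\s_{2}\}}(\phi_{k})$: one must verify both that $|\s^{\A}|$ is forced to be $\aleph_{0}$ on every $\Tthree$-interpretation and that the realizable finite cardinalities of $\s_{2}$ are precisely $\{g(m) : m\in\mathbb{N}\setminus\{0\}\}$, so that no tuple strictly below $(\aleph_{0},g(m_{k}))$ occurs in $\textbf{Card}_{\Tthree,\{\s,\s_{2}\}}(\phi_{k})$. The rest is the standard recursion-against-a-non-computable-function argument already used for $\Tbb$, $\Tone$ and $\Ttwo$.
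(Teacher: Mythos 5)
Your proposal is correct and follows exactly the approach the paper takes: the paper says its proof of this lemma ``is the same as the one of \Cref{cmmf ttwo}, but looking at the second sort instead of the first,'' and that proof for $\Ttwo$ is precisely the bootstrap recursion you write down, now applied to $\phi_k=\NNNEQ{u}{k}$ with the $u_i$ of sort $\s_2$. You spell out the justification that $\minmod_{\Tthree,\{\s,\s_2\}}(\phi_k)=\{(\aleph_0,g(m_k))\}$ in more detail than the paper, but the argument is the same.
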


\begin{proof}
    The proof is the same as the one of \Cref{cmmf ttwo}, but looking at the second sort instead of the first.
\end{proof}

\subsection{\tp{$\Tfour$}{Tfour}}

$\Tfour$ is the $\Sigma_{s}$-theory axiomatized by
\[\{(\psi_{\neq}\wedge\psi_{\vee}^{6}\wedge\psi_{\geq \bb(n+1)})\vee(\psi_{\neq}\wedge\psi_{\vee}^{2}\wedge\bigvee_{i=2}^{n}\psi_{=\bb(i)})\vee\psi_{=1} : n\in\mathbb{N}\setminus\{0,1\}\}.\]
The theory $\Tfour$ has essentially three classes of models: those $\A$ with infinite elements in $\s^{\A}$, no fixed points for $s^{\A}$, and where either $(s^{\A})^{12}(a)=(s^{\A})^{6}(a)$ or $(s^{\A})^{12}(a)=a$, for all $a\in\s^{\A}$; those with one element in $\s^{\A}$, and $s^{\A}$ the identity; and those with $\bb(n)$ elements in $\s^{\A}$, for $n\in\mathbb{N}\setminus\{0,1\}$, no fixed points for $s^{\A}$, and where either $(s^{\A})^{4}(a)=(s^{\A})^{2}(a)$ or $(s^{\A})^{4}(a)=a$, for all $a\in \s^{\A}$. Notice this is well defined as $\psi_{\vee}^{4}$ implies $\psi_{\vee}^{6}$: indeed, if $s^{4}(x)=s^{2}(x)$, for all $x$,
\[s^{12}(x)=s^{4}(s^{4}(s^{4}(x)))=s^{2}(s^{2}(s^{2}(x)))=s^{6}(x);\]
and if $s^{4}(x)=x$, 
\[s^{12}(x)=s^{4}(s^{4}(s^{4}(x)))=s^{4}(s^{4}(x))=s^{4}(x)=x.\]

\begin{lemma}
    The $\Sigma_{s}$-theory $\Tfour$ is not stably infinite, and thus not smooth, with respect to its only sort.
\end{lemma}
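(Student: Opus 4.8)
The plan is to produce a single quantifier-free formula that is $\Tfour$-satisfiable but whose only $\Tfour$-models are finite -- indeed, of size one. The obvious candidate is $\phi := (s(x)=x)$, which asserts that $s$ has a fixed point. First I would verify that $\phi$ is $\Tfour$-satisfiable: the trivial interpretation $\A$ with $|\s_{1}^{\A}|=1$ has $s^{\A}$ forced to be the identity, so it satisfies $\psi_{=1}$, which is the third disjunct of every formula in $\ax{\Tfour}$; hence $\A$ is a $\Tfour$-interpretation, and clearly $\A\vDash s(x)=x$.

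Next I would show that, up to the choice of $x^{\A}$, this trivial interpretation is the \emph{only} $\Tfour$-interpretation satisfying $\phi$. Let $\B$ be a $\Tfour$-interpretation with $|\s_{1}^{\B}|\neq 1$. For each $n\in\mathbb{N}\setminus\{0,1\}$ we have $\B\not\vDash\psi_{=1}$, so $\B$ must satisfy one of the first two disjuncts of the $n$-th axiom of $\Tfour$; since $\psi_{\neq}$ occurs as a conjunct in \emph{both} of those disjuncts, it follows that $\B\vDash\psi_{\neq}$, i.e. $s^{\B}$ has no fixed point, so $\B\not\vDash s(x)=x$. Therefore every $\Tfour$-interpretation satisfying $\phi$ has domain of cardinality $1<\aleph_{0}$, and so $\Tfour$ is not stably infinite with respect to $\{\s_{1}\}$.

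For the ``thus not smooth'' clause, I would argue directly: if $\Tfour$ were smooth with respect to $\{\s_{1}\}$, then applying smoothness to the trivial interpretation $\A$ above with $\kappa(\s_{1})=\aleph_{0}\geq 1=|\s_{1}^{\A}|$ would yield a $\Tfour$-interpretation $\B\vDash s(x)=x$ with $|\s_{1}^{\B}|=\aleph_{0}$, contradicting the previous paragraph. (More generally, smoothness with respect to a set of sorts implies stable infiniteness with respect to it, as already recorded in \cite{CADE} and in \Cref{table of impossibilities}.)

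There is no real obstacle in this argument; the only point requiring care is the bookkeeping in the second step, namely confirming that $\psi_{\neq}$ is genuinely a conjunct of both non-trivial disjuncts of the axiom scheme, so that neither an infinite model nor a large finite $\bb(n)$-sized model can contain a fixed point of $s^{\B}$. This is immediate from the displayed form of $\ax{\Tfour}$.
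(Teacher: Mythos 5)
Your proof is correct and follows essentially the same route as the paper's: both use the formula $s(x)=x$, satisfied only by the trivial one-element $\Tfour$-interpretation (since every non-trivial model must satisfy $\psi_{\neq}$), to refute stable infiniteness. Your version merely spells out the bookkeeping that the paper compresses into one sentence, and adds the standard observation that smoothness implies stable infiniteness.
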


\begin{proof}
    Obvious, as no infinite $\Tfour$-interpretation can satisfy the formula $s(x)=x$, although that formula is satisfied by the $\Tfour$-interpretation with only one element in its domain.
\end{proof}

\begin{lemma}
    The $\Sigma_{s}$-theory $\Tfour$ is not finitely witnessable, and thus not strongly finitely witnessable, with respect to its only sort.
\end{lemma}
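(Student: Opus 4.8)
The claim to prove is that $\Tfour$ is not finitely witnessable (hence not strongly finitely witnessable) with respect to its only sort. Here $\Tfour$ is the $\Sigma_s$-theory whose finite non-trivial models have exactly $\bb(n)$ elements for $n \geq 2$, and whose only models smaller than that is the trivial one-element model. The plan is to exploit the Busy Beaver function in the same spirit as the argument for $\Tbb$ in Example~\ref{ex:bbold}: a witness would let us compute a function eventually dominating $\bb$, which is impossible.

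\begin{proof}
Suppose, for contradiction, that $\Tfour$ has a witness $\wit$. We will use it to construct a computable function $h : \mathbb{N} \to \mathbb{N}$ that is not eventually bounded by $\bb$, contradicting Rad\'o's theorem.

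First I would pin down the possible cardinalities of $\Tfour$-interpretations satisfying a witnessed formula. Fix a natural number $m \geq 2$ and consider the quantifier-free formula $\phi_m := \NNNEQ{x}{2\bb(m)} \wedge \neg(s(y)=y)$ (with the $x_i$ and $y$ of sort $\s_1$), or more simply a formula forcing at least $2\bb(m)$ distinct elements together with a disequality ruling out the trivial model; note $\phi_m$ is $\Tfour$-satisfiable, being satisfied by a $\Tfour$-interpretation with $\bb(n)$ elements for $n$ large enough, or by an infinite one. By the definition of a witness, $\phi_m$ and $\Exists{\overarrow{x}}\wit(\phi_m)$ are $\Tfour$-equivalent, and if $\wit(\phi_m)$ is $\Tfour$-satisfiable there is a $\Tfour$-interpretation $\A$ satisfying it with $\s_1^{\A} = \vars_{\s_1}(\wit(\phi_m))^{\A}$ --- in particular, $\A$ is finite, since $\wit(\phi_m)$ has finitely many variables. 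So $|\s_1^{\A}| = \bb(k)$ for some $k \in \mathbb{N}\setminus\{0,1\}$, and moreover $|\s_1^{\A}| \leq |\vars(\wit(\phi_m))|$. Since $\wit$ is computable, $|\vars(\wit(\phi_m))|$ can be computed from $m$; call it $N(m)$. Then from the models of $\phi_m$ we know $|\s_1^{\A}| \geq 2\bb(m) > \bb(m)$, so $N(m) \geq \bb(m)$.

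Define $h(m) = N(m) = |\vars(\wit(\phi_m))|$ for $m \geq 2$ (and $h(0) = h(1) = 0$, say). Then $h$ is computable, because $\wit$ is computable and counting variables of a formula is computable, and $\phi_m$ can be written down algorithmically from $m$. But $h(m) \geq \bb(m)$ for all $m \geq 2$, so $h$ is not eventually dominated by any computable function being beaten by $\bb$ --- more precisely, $h$ is a computable function with $h(m) \geq \bb(m)$ for all $m \geq 2$, which contradicts the fact (from \cite{Rado}) that $\bb$ eventually dominates every computable function. Hence $\Tfour$ has no witness, i.e., it is not finitely witnessable; and since strong finite witnessability implies finite witnessability, $\Tfour$ is not strongly finitely witnessable either.
\end{proof}

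The main obstacle I anticipate is getting the bookkeeping exactly right on two points: (i) making sure the chosen formula $\phi_m$ is genuinely $\Tfour$-satisfiable yet forces enough elements to guarantee $N(m) \geq \bb(m)$ --- one must rule out the trivial one-element model and be careful that the $\psi_{\neq} \wedge \psi_{\vee}^{k}$ constraints built into $\Tfour$'s axiomatization do not interfere; and (ii) confirming that a witnessed, $\Tfour$-satisfiable formula can only be satisfied by a finite $\Tfour$-interpretation (so the witness condition $\s_1^{\A} = \vars_{\s_1}(\wit(\phi_m))^{\A}$ really does bound the cardinality), which is immediate since any finitely-many-variables formula whose variables exhaust the domain forces a finite domain. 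Everything else is routine and parallels the $\Tbb$ argument already recorded in the paper.
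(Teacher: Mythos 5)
There is a genuine gap in your argument: the function $h$ you define is not computable, so no contradiction with Rad\'o's theorem is obtained. Your formula $\phi_m$ contains $2\bb(m)$ variables, so writing $\phi_m$ down from $m$ requires computing $\bb(m)$ --- which is exactly the non-computable quantity at stake. Consequently $h(m)=|\vars(\wit(\phi_m))|$ cannot be evaluated algorithmically, and the claim ``$\phi_m$ can be written down algorithmically from $m$'' is false. Note also that you cannot simply repair this by replacing $2\bb(m)$ with $m$ fresh variables: the formula $\NNNEQ{x}{m}$ is satisfied by the smallest $\Tfour$-model of cardinality $\bb(k)\geq m$, and since $\bb$ grows so fast, that least $\bb(k)$ can be far smaller than $\bb(m)$, so you would not get $h(m)\geq\bb(m)$.

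The paper's proof avoids this by defining $h$ \emph{recursively}: $h(0)=0$, $h(1)=1$, and $h(n+1)=|\vars(\wit(\NNNEQ{x}{h(n)+1}))|$. Each formula is built from the previously computed value $h(n)$, so $h$ genuinely is computable whenever $\wit$ is. The lower bound $h(n)\geq\bb(n)$ is then established by induction: a witnessed model of $\NNNEQ{x}{h(n)+1}$ has all its elements named by variables, hence is finite, hence has cardinality $1$ or $\bb(k)$ for some $k$; having at least $h(n)+1\geq\bb(n)+1$ elements, it must have at least $\bb(n+1)$ elements, which bounds $h(n+1)$ from below. This bootstrapping step, exploiting the gaps in the admissible cardinalities of $\Tfour$-models, is the idea your proposal is missing. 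Your observations in points (i) and (ii) about satisfiability of the test formulas and finiteness of witnessed models are correct and are indeed used in the paper's argument.
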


\begin{proof}
    Suppose that $\Tfour$ has a computable witness $\wit$, and define a function $h:\mathbb{N}\rightarrow\mathbb{N}$ by making $h(0)=0$, $h(1)=1$, and 
    \[h(n+1)=|\vars(\wit(\NNNEQ{x}{h(n)+1}))|,\]
    for $n\geq 2$. Now, it is clear that $h(0)\geq \bb(0)$ and $h(1)\geq \bb(1)$, so assume that $h(n)\geq \bb(n)$: because $\NNNEQ{x}{h(n)+1}$ is $\Tfour$-satisfiable, there is a $\Tfour$-interpretation $\A$ that satisfies $\wit(\NNNEQ{x}{h(n)+1})$, and thus $\Exists{\overarrow{x}}\wit(\NNNEQ{x}{h(n)+1})$ and $\NNNEQ{x}{h(n)+1}$, for $\overarrow{x}=\vars(\wit(\NNNEQ{x}{h(n)+1}))\setminus\vars(\NNNEQ{x}{h(n)+1})$, with $\s^{\A}=\vars(\wit(\NNNEQ{x}{h(n)+1}))^{\A}$. As $\A$ is a $\Tfour$-interpretation with at least $h(n)+1\geq \bb(n)+1$ elements, $|\s^{\A}|\geq \bb(n+1)$, and thus $h(n+1)=|\vars(\wit(\NNNEQ{x}{h(n)+1}))|\geq\bb(n+1)$, as we wanted to prove; but this contradicts the fact that $h$ is computable.
\end{proof}

\begin{lemma}
    The $\Sigma_{s}$-theory $\Tfour$ is not convex with respect to its only sort.
\end{lemma}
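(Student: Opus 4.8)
The plan is to exhibit a conjunction of literals whose $\Tfour$-validity forces a disjunction of two equalities, while neither equality is individually $\Tfour$-valid --- exactly the pattern already used for $\Ttwelve$, and, more generally, the mechanism by which a disjunctive constraint of the form $\psi_{\vee}^{k}$ obstructs convexity.

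First I would record that every $\Tfour$-interpretation satisfies $\psi_{\vee}^{6}$, i.e. $\Forall{x}[s^{12}(x)=s^{6}(x)\vee s^{12}(x)=x]$. For the infinite models this is one of the conjuncts making the defining disjunction true; for the trivial model $s^{\A}$ is the identity, so $s^{12}(x)=x$; and for the finite non-trivial models it follows from $\psi_{\vee}^{2}$ by the computation recorded just before the statement (if $s^{4}(x)=s^{2}(x)$ then $s^{6}(x)=s^{12}(x)=s^{2}(x)$, and if $s^{4}(x)=x$ then $s^{12}(x)=x$). Next, take fresh variables $x_{0},\dots,x_{12}$ of the only sort and set $\phi:=\bigwedge_{i=0}^{11} s(x_{i})=x_{i+1}$, a conjunction of literals. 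In any $\Tfour$-interpretation $\A$ satisfying $\phi$ under some assignment we have $x_{i}^{\A}=(s^{\A})^{i}(x_{0}^{\A})$, so applying $\psi_{\vee}^{6}$ to $x_{0}^{\A}$ yields $x_{12}^{\A}=x_{6}^{\A}$ or $x_{12}^{\A}=x_{0}^{\A}$; hence $\vDash_{\Tfour}\phi\rightarrow (x_{12}=x_{6})\vee(x_{12}=x_{0})$.

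It then remains to refute each disjunct. For $x_{12}=x_{6}$, I would take the $\Sigma_{s}$-interpretation $\A$ whose domain is a countable disjoint union of $4$-cycles of $s^{\A}$: it satisfies $\psi_{\neq}$ and $\psi_{\vee}^{2}$ (hence $\psi_{\vee}^{6}$) and is infinite, so it makes the first disjunct of every defining axiom true and is a $\Tfour$-interpretation; choosing $x_{0}^{\A}$ on a cycle and $x_{i}^{\A}=(s^{\A})^{i}(x_{0}^{\A})$ satisfies $\phi$, while $x_{12}^{\A}=(s^{\A})^{12}(x_{0}^{\A})=x_{0}^{\A}\neq (s^{\A})^{2}(x_{0}^{\A})=x_{6}^{\A}$. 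For $x_{12}=x_{0}$, I would instead use the interpretation whose domain is a countable disjoint union of components each consisting of a $6$-cycle of $s^{\A}$ together with one extra point feeding into it: again $\psi_{\neq}\wedge\psi_{\vee}^{6}$ holds and the domain is infinite, so this too is a $\Tfour$-interpretation; choosing $x_{0}^{\A}$ to be such an extra (non-cycle) point, $\phi$ is satisfied and $x_{12}^{\A}=(s^{\A})^{12}(x_{0}^{\A})=(s^{\A})^{6}(x_{0}^{\A})$ lies on the cycle, hence differs from $x_{0}^{\A}$. Therefore $\not\vDash_{\Tfour}\phi\rightarrow x_{12}=x_{6}$ and $\not\vDash_{\Tfour}\phi\rightarrow x_{12}=x_{0}$, so by the definition of convexity (with $n=2$) the theory $\Tfour$ is not convex with respect to its only sort.

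The only delicate point is bookkeeping: one must check that the two witnessing interpretations really are $\Tfour$-interpretations --- which is why I prefer infinite disjoint-union models, since they satisfy each defining axiom through its first disjunct and thereby avoid any dependence on specific values of $\bb$ --- and that $\psi_{\vee}^{2}$ genuinely implies $\psi_{\vee}^{6}$, so that $\psi_{\vee}^{6}$ holds uniformly across all three classes of models; both are short calculations already present in the surrounding text, so I do not expect a real obstacle here.
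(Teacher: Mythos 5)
Your proof is correct and uses the same underlying mechanism as the paper (the uniform validity of $\psi_\vee^6$ in $\Tfour$, derived from $\psi_\vee^2$ for the finite non-trivial models), but it differs in both the choice of cube and the choice of witness interpretations. The paper takes the compact three-variable cube $(s^6(x)=y)\wedge(s^6(y)=z)$ and then exhibits two \emph{finite} four-element $\Tfour$-interpretations, one a $4$-cycle and one a path feeding into a $2$-cycle, to falsify the individual disjuncts (the paper's proof has a slip: it states the valid disjunction as $(x=y)\vee(y=z)$, but the computation it carries out actually shows $(x=z)\vee(y=z)$, and the models it exhibits falsify $x=z$ and $y=z$ respectively). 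You instead use the thirteen-variable chain $\bigwedge_{i=0}^{11}s(x_i)=x_{i+1}$ and \emph{infinite} disjoint-union models (copies of $4$-cycles, resp.\ copies of a $6$-cycle with one in-feeding tail). Your route is a little more verbose on the cube side but has the genuine advantage you note: by making each witness infinite, you verify membership in $\Tfour$ through the first disjunct of every axiom and never need to know that $4=\bb(2)$; the paper's finite witnesses implicitly lean on that fact. Both arguments are sound, and the small correctness checks you flag (that $\psi_\vee^2\Rightarrow\psi_\vee^6$ pointwise, and that the two disjoint-union models satisfy $\psi_\neq\wedge\psi_\vee^6$) indeed go through.
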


\begin{proof}
    We state that 
    \[(s^{6}(x)=y)\wedge(s^{6}(y)=z)\vDash_{\Tfour}(x=y)\vee(y=z):\]
    this is obviously true for the trivial $\Tfour$-interpretation; for finite non-trivial $\Tfour$-interpretations, we have either $s^{4}(x)=s^{2}(x)$, and thus 
    \[z=s^{12}(x)=s^{4}(s^{4}(s^{4}(x)))=s^{2}(s^{2}(s^{2}(x)))=s^{6}(x)=y,\]
    or $s^{4}(x)=x$, and then 
    \[z=s^{12}(x)=s^{4}(s^{4}(s^{4}(x)))=s^{4}(s^{4}(x))=s^{4}(x)=x;\]
    finally, for infinite $\Tfour$-interpretations, we already know that either $z=s^{12}(x)=s^{6}(x)=y$ or $y=s^{6}(x)=x$.

    But we state that neither
    \[(s^{6}(x)=y)\wedge(s^{6}(y)=z)\vDash_{\Tfour}x=y\quad\text{nor}\quad (s^{6}(x)=y)\wedge(s^{6}(y)=z)\vDash_{\Tfour}y=z.\]
    Indeed, let $\A$ be the $\Tfour$-interpretation with: $\s^{\A}=\{a, b, c, d\}$; $s^{\A}(a)=b$, $s^{\A}(b)=c$, $s^{\A}(c)=d$ and $s^{\A}(d)=a$ (for all $i\in\mathbb{N}$); $x^{\A}=z^{\A}=a$ and $y^{\A}=c$. Then $\A$ satisfies $(s^{6}(x)=y)\wedge(s^{6}(y)=z)$, but not $x=y$.

    On the other hand, let $\B$ be the $\Tfour$-interpretation with: $\s^{\B}=\s^{\A}$; $s^{\B}(a)=b$, $s^{\B}(b)=c$, $s^{\B}(c)=d$ and $s^{\B}(d)=c$; $x^{\B}=a$ and $y^{\B}=z^{\B}=c$. Then $\B$ satisfies, again, $(s^{6}(x)=y)\wedge(s^{6}(y)=z)$, but not $y=z$.
\end{proof}

\begin{lemma}
    The $\Sigma_{s}$-theory $\Tfour$ does not have the finite model property, and thus is not stably finite, with respect to its only sort.
\end{lemma}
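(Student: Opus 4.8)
The plan is to exhibit a single quantifier-free $\Sigma_{s}$-formula that is $\Tfour$-satisfiable but satisfied by no finite $\Tfour$-interpretation; this refutes the finite model property directly, and non-stable-finiteness then comes for free, since a stably finite theory has the finite model property (specialize the defining condition of stable finiteness to an arbitrary $\T$-interpretation satisfying a satisfiable quantifier-free formula).

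\textbf{The formula.} First I would take $\phi$ to be the negation of the matrix of $\psi_{\vee}^{2}$ at a single variable $x$ of sort $\s$, namely $\phi := \neg[s^{4}(x) = s^{2}(x)] \wedge \neg[s^{4}(x) = x]$. The key point is that every finite $\Tfour$-interpretation $\A$ validates $\Forall{x}\big([s^{4}(x) = s^{2}(x)] \vee [s^{4}(x) = x]\big)$. Indeed, if $|\s^{\A}| = 1$ then $s^{\A}$ is necessarily the identity and $s^{4}(x) = x$ holds; and if $\A$ is finite and non-trivial, then by the characterization of $\Tfour$-models recalled above --- which is where unboundedness of $\bb$ enters, since for $n$ large enough both $\psi_{\geq\bb(n+1)}$ and $\psi_{=1}$ fail in the $n$-th axiom, forcing the middle disjunct $\psi_{\neq}\wedge\psi_{\vee}^{2}\wedge\bigvee_{i=2}^{n}\psi_{=\bb(i)}$, and in particular $\psi_{\vee}^{2}$ --- the sentence $\psi_{\vee}^{2}$ holds in $\A$. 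Hence no finite $\Tfour$-interpretation satisfies $\phi$.

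\textbf{Satisfiability in an infinite model.} Next I would produce a countably infinite $\Tfour$-interpretation $\A$ satisfying $\phi$. Take $\s^{\A} = \mathbb{N}\times\{0,1,\ldots,11\}$ with $s^{\A}(m,k) = (m,(k+1)\bmod 12)$; then $s^{\A}$ has no fixed point and $(s^{\A})^{12}$ is the identity, so $\psi_{\neq}\wedge\psi_{\vee}^{6}\wedge\psi_{\geq\bb(n+1)}$ holds for every $n$, and $\A$ is a $\Tfour$-interpretation. Interpreting $x$ as $(0,0)$ gives $(s^{\A})^{4}(x^{\A}) = (0,4)$, $(s^{\A})^{2}(x^{\A}) = (0,2)$, and $x^{\A} = (0,0)$, which are pairwise distinct, so $\A \vDash \phi$. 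Therefore $\phi$ witnesses the failure of the finite model property of $\Tfour$ with respect to its only sort, and hence $\Tfour$ is not stably finite either.

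\textbf{Main obstacle.} There is really only one delicate point: confirming that the chosen infinite interpretation genuinely satisfies the axiom schema --- concretely, that a disjoint union of $12$-cycles is covered by the $\psi_{\vee}^{6}$ disjunct via $(s^{\A})^{12}(a) = a$ (not via $(s^{\A})^{12}(a) = (s^{\A})^{6}(a)$) --- together with checking that $\phi$ really fails in the trivial model as well as in every finite non-trivial model. Both are immediate once the three classes of $\Tfour$-models are in hand, so no genuinely hard step is expected.
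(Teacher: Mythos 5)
Your proof is correct and follows essentially the same route as the paper's: both exhibit a quantifier-free formula that only infinite $\Tfour$-interpretations can satisfy, exploiting the fact that finite non-trivial models are forced (via the unboundedness of $\bb$) to satisfy $\psi_{\vee}^{2}$ while infinite models need only satisfy the weaker $\psi_{\vee}^{6}$. The paper uses $(s^{3}(x)=x)\wedge\neg(s(x)=x)$ with an infinite union of $3$-cycles where you use the negated matrix of $\psi_{\vee}^{2}$ with $12$-cycles, but this is a cosmetic difference and your version is, if anything, slightly more direct.
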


\begin{proof}
    Take the $\Tfour$-interpretation $\A$ with $\s^{\A}=\{a_{i}, b_{i}, c_{i} : i\in\mathbb{N}\}$, $s^{\A}(a_{i})=b_{i}$, $s^{\A}(b_{i})=c_{i}$ and $s^{\A}(c_{i})=a_{i}$ for all $i\in \mathbb{N}$, and $x^{\A}=a$ (and arbitrarily for other variables): we have that $\A$ is indeed a $\Tfour$-interpretation as both $(s^{\A})^{12}(a)$ and $(s^{\A})^{6}(a)$ both equal $a$ for all $a\in\s^{\A}$, since $(s^{\A})^{3}(a)=a$. Then $\A$ satisfies $(s^{3}(x)=x)\wedge\neg(s(x)=x)$, but no finite $\Tfour$-interpretation $\B$ can satisfy that formula. This is obvious if $|\s^{\B}|=1$, so assume that $\B$ with $|\s^{\B}|>1$ satisfies $s^{3}(x)=x$, being obvious that it satisfies $\neg(s(x)=x)$: then we have that $\B$ satisfies $s^{4}(x)=s(s^{3}(x))=s(x)$, as $s$ is a function, as well as $s^{4}(x)=s^{2}(x)$ and $s^{2}(x)=s(s(x))\neq s(x)$, as $\B$ is a $\Tfour$-interpretation, leading to a contradiction.
\end{proof}

\begin{lemma}\label{cmmf:tfour}
    The $\Sigma_{s}$-theory $\Tfour$ does not have a computable minimal model function with respect to its only sort.
\end{lemma}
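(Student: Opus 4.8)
The plan is to mimic the argument used for the other Busy-Beaver theories such as $\Tbb$ in \Cref{ex:bbold} and for $\Tfour$'s lack of finite witnessability: assume $\Tfour$ has a computable minimal model function and extract from it a computable function that grows at least as fast as the Busy Beaver function $\bb$, contradicting Rad{\'o}'s theorem.

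Concretely, I would argue as follows. Suppose for contradiction that $\minmod_{\Tfour,\{\s\}}$ is computable. Recall that the finite non-trivial $\Tfour$-interpretations are exactly those whose domain has cardinality $\bb(n)$ for some $n\geq 2$, so the possible finite cardinalities of $\Tfour$-interpretations form the set $\{1\}\cup\{\bb(n):n\geq 2\}$, together with $\aleph_0$ for the infinite ones. Define $h:\mathbb{N}\rightarrow\mathbb{N}$ by $h(0)=\bb(0)=0$, $h(1)=\bb(1)=1$, and, for $n\geq 1$, let $h(n+1)$ be the (finite) element of $\minmod_{\Tfour,\{\s\}}(\NNNEQ{x}{h(n)+1})$, where the $x_i$ are of sort $\s$. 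The formula $\NNNEQ{x}{h(n)+1}$ is $\Tfour$-satisfiable because $\Tfour$ has arbitrarily large interpretations (the infinite ones, and indeed $\bb(n)$ is unbounded), so the minimal model function is well-defined and nonempty on this input; moreover, since the formula only imposes a lower bound on cardinality, a minimal model of it is finite (any infinite interpretation cannot be minimal, as some finite $\Tfour$-interpretation with enough elements satisfies it). Thus $h$ is well-defined and, since $\minmod_{\Tfour,\{\s\}}$ is computable and $\NNNEQ{x}{h(n)+1}$ can be written down algorithmically once $h(n)$ is known, $h$ is computable.

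Next I would show by induction that $h(n)\geq\bb(n)$ for all $n\in\mathbb{N}$. The base cases $n=0,1$ hold by definition. For the inductive step, $h(n+1)$ is the cardinality of a $\Tfour$-interpretation satisfying $\NNNEQ{x}{h(n)+1}$, hence $h(n+1)\geq h(n)+1\geq\bb(n)+1$; since $h(n+1)$ is a finite $\Tfour$-cardinality, it is either $1$ or $\bb(m)$ for some $m\geq 2$, and being $\geq\bb(n)+1>1$ it must equal $\bb(m)$ for some $m$ with $\bb(m)\geq\bb(n)+1$, so $m\geq n+1$ (as $\bb$ is increasing on $\{n\geq 1\}$ up to the relevant range; more carefully, since $\bb(m)>\bb(n)$ we get $m>n$, i.e. $m\geq n+1$), whence $h(n+1)=\bb(m)\geq\bb(n+1)$. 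This gives a computable function $h$ that eventually dominates every computable function (since $h\geq\bb$ and $\bb$ has this property by \cite{Rado}), which is impossible — the contradiction completes the proof.

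The only real subtlety, and the step I would be most careful about, is justifying that a minimal model of $\NNNEQ{x}{h(n)+1}$ under $\Tfour$ is finite: this uses that the set of $\Tfour$-cardinalities that are $\geq h(n)+1$ always contains a finite one (namely $\bb(m)$ for $m$ large enough), so $\aleph_0\notin\minmod_{\Tfour,\{\s\}}(\NNNEQ{x}{h(n)+1})$ by the minimality clause; this is a clean consequence of the structure of $\Tfour$-models described before \Cref{cmmf:tfour}. Everything else is routine and parallels \Cref{ex:bbold}.
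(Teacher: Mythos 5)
Your proof is correct and follows essentially the same route as the paper's: assume $\minmod_{\Tfour,\{\s\}}$ is computable, define $h(n+1)$ from $\minmod_{\Tfour,\{\s\}}(\NNNEQ{x}{h(n)+1})$, and derive a contradiction with Rad{\'o}'s theorem via induction (the paper even gets $h(n)=\bb(n)$ exactly, while your $h(n)\geq\bb(n)$ suffices equally well). Your extra care in justifying that the minimal model is finite is a welcome elaboration of a step the paper leaves implicit, not a different argument.
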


\begin{proof}
    Suppose that instead $\minmod_{\Tfour,S}$ is computable, where $S=\{\s\}$: we define a function $h:\mathbb{N}\rightarrow\mathbb{N}$ by making $h(0)=0$, $h(1)=1$, and 
    \[h(n+1)\in\minmod_{\Tfour,S}(\NNNEQ{x}{h(n)+1}),\]
    for all $n\geq 2$, which is obviously computable (being the set on the right computable and finite). It is easy to see, by induction, that $h(n)=\bb(n)$, getting us a contradiction.
\end{proof}

\subsection{\tp{$\Tfive$}{Tfive}}

$\Tfive$ is the $\Sigma_{2}$-theory with axiomatization
\[\{(\psi^{\s}_{=1}\wedge(\psi^{\s_{2}}_{\geq \bb(n+1)}\vee\bigvee_{i=1}^{n}\psi^{\s_{2}}_{=\bb(i)}))\vee(\psi^{\s}_{=2}\wedge(\psi^{\s_{2}}_{=2}\vee\psi^{\s_{2}}_{\geq n})) : n\in\mathbb{N}\setminus\{0\}\},\]
which has four classes of models $\A$: with $\s^{\A}$ trivial and $\s_{2}^{\A}$ infinite; with $\s^{\A}$ trivial and $|\s_{2}^{\A}|=\bb(n)$ for some $n\in\mathbb{N}\setminus\{0\}$; with $|\s^{\A}|=|\s_{2}^{\A}|=2$; and with $|\s^{\A}|=2$ and $\s_{2}^{\A}$ infinite.

\begin{lemma}
    The $\Sigma_{2}$-theory $\Tfive$ is not stably infinite, and thus not smooth, with respect to $\{\s,\s_{2}\}$.
\end{lemma}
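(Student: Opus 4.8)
The plan is to exploit the fact, already visible in the description of the models of $\Tfive$ above, that the domain of sort $\s$ has size at most $2$ in every $\Tfive$-interpretation; since stable infiniteness with respect to $\{\s,\s_{2}\}$ would in particular require some $\Tfive$-model with an infinite domain of sort $\s$, it cannot hold.

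First I would make the cardinality bound explicit. Let $\A$ be an arbitrary $\Tfive$-interpretation. Then $\A$ satisfies the instance of the axiomatization for $n=1$, namely $(\psi^{\s}_{=1}\wedge(\psi^{\s_{2}}_{\geq \bb(2)}\vee\psi^{\s_{2}}_{=\bb(1)}))\vee(\psi^{\s}_{=2}\wedge(\psi^{\s_{2}}_{=2}\vee\psi^{\s_{2}}_{\geq 1}))$. Whichever of the two outer disjuncts is satisfied, its first conjunct ($\psi^{\s}_{=1}$ or $\psi^{\s}_{=2}$) forces $|\s^{\A}|\in\{1,2\}$, so $|\s^{\A}|<\aleph_{0}$ for every $\Tfive$-interpretation $\A$.

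Next I would check that $\Tfive$ is not contradictory, so that there is something to talk about: the $\Sigma_{2}$-interpretation both of whose domains are two-element sets satisfies every axiom (in each instance the second outer disjunct holds, via $\psi^{\s}_{=2}\wedge\psi^{\s_{2}}_{=2}$). Consequently the quantifier-free tautology $x=x$, with $x$ of sort $\s$, is $\Tfive$-satisfiable; yet by the previous step no $\Tfive$-interpretation $\A$ has $|\s^{\A}|\geq\aleph_{0}$. Thus we have exhibited a $\Tfive$-satisfiable quantifier-free formula that no $\Tfive$-interpretation satisfies with an infinite domain of sort $\s$, so $\Tfive$ is not stably infinite with respect to $\{\s\}$, hence not with respect to $\{\s,\s_{2}\}$.

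Finally, non-smoothness follows immediately, since smoothness with respect to a set of sorts implies stable infiniteness with respect to the same set (there are no theories that are smooth but not stably infinite); alternatively, applying smoothness with respect to $\{\s,\s_{2}\}$ to the two-element model above together with the target cardinalities $\kappa(\s)=\kappa(\s_{2})=\aleph_{0}$ would yield a $\Tfive$-interpretation $\B$ with $|\s^{\B}|=\aleph_{0}$, contradicting the bound $|\s^{\A}|\leq 2$. The only point demanding any care is the derivation of that bound from a single axiom instance; everything else is routine, so there is no real obstacle here.
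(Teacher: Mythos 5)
Your proof is correct and follows the same route as the paper, which simply observes that no $\Tfive$-interpretation has an infinite domain of sort $\s$; you merely spell out the details (the cardinality bound from the $n=1$ axiom instance, consistency via the two-element model, and the standard implication from smoothness to stable infiniteness) that the paper leaves as ``obvious.''
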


\begin{proof}
    Obvious, as no model of $\Tfive$ has domain of sort $\s$ infinite.
\end{proof}

\begin{lemma}
    The $\Sigma_{2}$-theory $\Tfive$ is not finitely witnessable, and thus not strongly finitely witnessable, with respect to $\{\s,\s_{2}\}$.
\end{lemma}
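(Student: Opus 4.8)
\section*{Proof proposal}

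The plan is to reuse the Busy Beaver technique from \Cref{ex:bbold} and from the proof for $\Tfour$, now exploiting the behaviour of $\Tfive$ on the second sort. Suppose, towards a contradiction, that $\Tfive$ admits a computable witness $\wit$ with respect to $\{\s,\s_2\}$. First I would record the shape of the finite $\Tfive$-interpretations $\A$: either $|\s^\A|=1$ and $|\s_2^\A|=\bb(i)$ for some $i\in\mathbb{N}\setminus\{0\}$, or $|\s^\A|=|\s_2^\A|=2$. Consequently the finite cardinalities available for $\s_2$ are exactly $\{2\}\cup\{\bb(i):i\geq 1\}$, and since $\bb$ is non-decreasing with $\bb(1)=1<2<4=\bb(2)$, the value $2$ is the only one of these that is not a Busy Beaver value (no $i$ has $\bb(i)=2$).

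Next I would define $h:\mathbb{N}\rightarrow\mathbb{N}$ by $h(0)=0$, $h(1)=1$, $h(2)=4=\bb(2)$, and, for $n\geq 2$,
\[ h(n+1)=|\vars_{\s_2}(\wit(\NNNEQ{u}{h(n)+1}))|, \]
where the $u_j$ are fresh variables of sort $\s_2$; this $h$ is computable, since $\wit$ is, since $\NNNEQ{u}{h(n)+1}$ is produced effectively from $h(n)$, and since counting the variables of a formula is effective. The heart of the argument is to prove $h(n)\geq\bb(n)$ for all $n$ by induction. The cases $n\leq 2$ are immediate. For the step, assume $h(n)\geq\bb(n)$ with $n\geq 2$, so $h(n)\geq\bb(2)=4$ and hence $h(n)+1>2$. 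The quantifier-free formula $\phi:=\NNNEQ{u}{h(n)+1}$ is $\Tfive$-satisfiable (e.g. in a model with $|\s^\A|=1$ and $\s_2^\A$ infinite), hence so is $\wit(\phi)$, because $\Exists{\overarrow{x}}\wit(\phi)$ is $\Tfive$-equivalent to $\phi$. By the witnessing property $(ii)$ there is a $\Tfive$-interpretation $\A$ satisfying $\wit(\phi)$, and therefore $\phi$, with $\s_2^\A=\vars_{\s_2}(\wit(\phi))^\A$. Thus $\s_2^\A$ is finite and $|\s_2^\A|\geq h(n)+1>2$, so by the classification above $|\s_2^\A|=\bb(m)$ for some $m$; from $\bb(m)=|\s_2^\A|>h(n)\geq\bb(n)$ and non-decreasingness of $\bb$ we get $m\geq n+1$, whence $|\s_2^\A|=\bb(m)\geq\bb(n+1)$. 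Finally $h(n+1)=|\vars_{\s_2}(\wit(\phi))|\geq|\vars_{\s_2}(\wit(\phi))^\A|=|\s_2^\A|\geq\bb(n+1)$, completing the induction.

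Having produced a computable $h$ with $h(n)\geq\bb(n)$ for all $n$, I apply the defining property of $\bb$ to the computable map $n\mapsto h(n)+1$: there is $n_0$ with $\bb(n)\geq h(n)+1$ for all $n\geq n_0$, contradicting $h(n)\geq\bb(n)$. Hence no computable witness exists and $\Tfive$ is not finitely witnessable with respect to $\{\s,\s_2\}$; since every strong witness is in particular a witness (condition $(ii')$ with the empty set of variables $V=\emptyset$, whose arrangement is $\top$, reduces to $(ii)$), it is not strongly finitely witnessable either. The one point requiring care — the main obstacle — is ensuring that the witnessing interpretation's finite $\s_2$-domain is forced to have at least $3$ elements, so that it must land on a genuine Busy Beaver value rather than on the spurious cardinality $2$ coming from the $|\s^\A|=2$ models; this is precisely why the recursion is started only once the computed values exceed $2$.
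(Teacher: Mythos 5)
Your proposal is correct and follows essentially the same Busy Beaver argument as the paper: define $h(0)=0$, $h(1)=1$, $h(2)=4$, set $h(n+1)=|\vars_{\s_2}(\wit(\NNNEQ{u}{h(n)+1}))|$ for $n\geq2$, show $h(n)\geq\bb(n)$ by induction via the witnessing property applied to $\s_2$, and conclude by Rad{\'o}'s theorem. You have been somewhat more careful than the paper at one point: you explicitly note that the finite $\s_2$-cardinalities available are $\{2\}\cup\{\bb(i):i\geq1\}$ and that the spurious value $2$ (coming from the $|\s^{\A}|=2$ models) cannot interfere once $h(n)\geq4$; the paper glosses over this by simply asserting that an interpretation with at least $\bb(n)+1$ elements in $\s_2$ has at least $\bb(n+1)$, without remarking on the $2$-element case. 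Starting the inductive base at $h(2)=4$ serves exactly the purpose you describe, and the paper uses the same base values; so the two proofs coincide in substance, with yours spelling out a detail the paper leaves implicit.
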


\begin{proof}
    Suppose that $\Tfive$ has a witness: we define a function $h:\mathbb{N}\rightarrow\mathbb{N}$ by making $h(0)=0$, $h(1)=1$, $h(2)=4$, and 
    \[h(n+1)=|\vars_{\s_{2}}(\wit(\NNNEQ{u}{h(n)+1}))|,\]
    for $n\geq 3$, which is obviously computable. We state that $h(n)\geq\bb(n)$, leading to a contradiction, that being obvious for $n=0$, $n=1$ or $n=2$: suppose then that $h(n)\geq \bb(n)$, and since $\NNNEQ{u}{h(n)+1}$ is $\Tfive$-satisfiable, there is a $\Tfive$-interpretation that satisfies $\NNNEQ{u}{h(n)+1}$ (and thus has at least $\bb(n)+1$ elements, and thus at least $\bb(n+1)$) with $\vars_{\s_{2}}(\wit(\NNNEQ{u}{h(n)+1}))^{\A}=\s_{2}^{\A}$, meaning that
    \[h(n+1)=|\vars_{\s_{2}}(\wit(\NNNEQ{u}{h(n)+1}))|\geq |\vars_{\s_{2}}(\wit(\NNNEQ{u}{h(n)+1}))^{\A}|=|\s_{2}^{\A}|\geq \bb(n+1),\] as we wanted to show.
\end{proof}

\begin{lemma}
    The $\Sigma_{2}$-theory $\Tfive$ is not convex with respect to $\{\s,\s_{2}\}$.
\end{lemma}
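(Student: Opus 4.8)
The plan is to exploit the fact, already recorded in the description of $\Tfive$, that the domain of sort $\s$ of every $\Tfive$-interpretation has at most two elements. I would first extract this bound directly from the axiomatization: every instance of the axiom schema has the shape $(\psi^{\s}_{=1}\wedge\cdots)\vee(\psi^{\s}_{=2}\wedge\cdots)$, so a $\Tfive$-interpretation $\A$ must satisfy either $\psi^{\s}_{=1}$ or $\psi^{\s}_{=2}$, that is, $|\s^{\A}|\in\{1,2\}$. I would also record that, since $\Sigma_{2}$ is empty, the interpretation with $|\s^{\A}|=|\s_{2}^{\A}|=2$ satisfies every axiom via the disjunct $\psi^{\s}_{=2}\wedge\psi^{\s_{2}}_{=2}$, and is therefore a genuine $\Tfive$-interpretation.

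Next, to witness the failure of convexity, I would take three variables $x,y,z$ of sort $\s$ together with the trivially valid conjunction of literals $\phi:=(x=x)$, and consider the implication $\phi\rightarrow((x=y)\vee(y=z)\vee(x=z))$. This is $\Tfive$-valid: in any $\Tfive$-interpretation $\A$ the three values $x^{\A},y^{\A},z^{\A}$ lie in the at-most-two-element set $\s^{\A}$, so by the pigeonhole principle two of them coincide and one of the disjuncts holds. On the other hand, no single disjunct is $\Tfive$-valid: fixing the $\Tfive$-interpretation $\A$ with $|\s^{\A}|=2$ from the previous step, for each of the three equalities one can (re)assign the values of $x,y,z$ so that the two variables appearing in that equality receive distinct elements of $\s^{\A}$ while $\phi$ stays true, showing that $\phi\rightarrow(x=y)$ is not $\Tfive$-valid, and symmetrically for $\phi\rightarrow(y=z)$ and $\phi\rightarrow(x=z)$. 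Thus $\Tfive$ admits a $\Tfive$-valid disjunction of equalities none of whose disjuncts is $\Tfive$-valid, which is precisely the defining property of failing convexity with respect to $\{\s,\s_{2}\}$.

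There is no real obstacle here; the argument is routine. The only step calling for a (one-line) verification is the cardinality bound $|\s^{\A}|\le 2$, which drops out of a single instance of the axiom schema, together with the observation that this bound is attained by an actual $\Tfive$-interpretation, so that each disjunct can in fact be falsified.
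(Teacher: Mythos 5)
Your proposal is correct and follows essentially the same route as the paper: both derive the bound $|\s^{\A}|\le 2$ from the axioms, observe that the $\Tfive$-valid disjunction $(x=y)\vee(x=z)\vee(y=z)$ follows by pigeonhole, and falsify each disjunct using a $\Tfive$-interpretation with $|\s^{\A}|=|\s_{2}^{\A}|=2$. Your explicit trivial antecedent $\phi:=(x=x)$ is a harmless cosmetic difference from the paper, which uses the disjunction's validity directly.
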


\begin{proof}
    It is clear that $\vDash_{\Tfive}(x=y)\vee(x=z)\vee(y=z)$
    for variables $x$, $y$ and $z$ of sort $\s$, as no $\Tfive$-interpretation has more than two elements in the domain of sort $\s$. But $\Tfive$ cannot prove neither of the disjuncts: consider the $\Tfive$-interpretations $\A$ and $\B$ with $\s^{\A}=\s^{\B}=\{a,b\}$, $\s_{2}^{\A}=\s_{2}^{\B}=\{c,d\}$, $x^{\A}=z^{\A}=x^{\B}=y^{\B}=a$, and $y^{\A}=z^{\B}=b$; then $\A$ falsifies $x=y$ and $y=z$, while $\B$ falsifies $x=z$. 
\end{proof}

\begin{lemma}
    The $\Sigma_{2}$-theory $\Tfive$ does not have the finite model property, and thus is not stably finite, with respect to $\{\s,\s_{2}\}$.
\end{lemma}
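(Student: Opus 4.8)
The plan is to produce a single quantifier-free formula that is $\Tfive$-satisfiable yet forces the domain of sort $\s_2$ to be infinite, exactly in the spirit of the sketch accompanying the definition of $\Tfive$. Concretely, I would take
\[\phi \;:=\; \NEQ(x_1,x_2)\wedge\NEQ(u_1,u_2,u_3),\]
with $x_1,x_2$ of sort $\s$ and $u_1,u_2,u_3$ of sort $\s_2$, so that a $\Sigma_{2}$-interpretation $\A$ satisfies $\phi$ if and only if $|\s^{\A}|\geq 2$ and $|\s_2^{\A}|\geq 3$.

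First I would recall the classification of $\Tfive$-interpretations obtained when the theory was introduced: distributing the conjunctions over the disjunctions in the axiom scheme shows that every $\Tfive$-interpretation lies in one of four classes, with cardinality pair $(|\s^{\A}|,|\s_2^{\A}|)$ equal to $(1,\aleph_0)$, to $(1,\bb(n))$ for some $n\geq 1$, to $(2,2)$, or to $(2,\aleph_0)$, and that — the signature being empty, hence interpretations determined up to isomorphism by their cardinalities — each such pair is realized; in particular every $\Tfive$-interpretation has $|\s^{\A}|\in\{1,2\}$. Using this, I would first check that $\phi$ is $\Tfive$-satisfiable: pick a $\Tfive$-interpretation of the fourth class, with $|\s^{\A}|=2$ and $\s_2^{\A}$ countably infinite, assign $x_1,x_2$ the two elements of $\s^{\A}$ and $u_1,u_2,u_3$ three distinct elements of $\s_2^{\A}$. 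Then I would argue that no finite $\Tfive$-interpretation satisfies $\phi$: the condition $|\s^{\A}|\geq 2$ excludes the first two classes, while $|\s_2^{\A}|\geq 3$ excludes the class $(2,2)$, leaving only the class $(2,\aleph_0)$, for which $\s_2^{\A}$ is infinite. Hence $\Tfive$ fails the finite model property with respect to $\{\s,\s_2\}$.

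Finally, for the second assertion I would invoke the elementary fact that stable finiteness with respect to a set $S$ entails the finite model property with respect to $S$ (given any $\T$-model of a quantifier-free formula, stable finiteness returns a finite one), so the already established failure of the finite model property immediately yields the failure of stable finiteness. I do not expect a genuine obstacle here; the only point deserving care is the justification of the four-class model description — in particular that the disjuncts $\psi^{\s}_{=1}$ and $\psi^{\s}_{=2}$ present in every axiom pin down $|\s^{\A}|\in\{1,2\}$, and that intersecting the Busy-Beaver disjunct block over all $n$ leaves precisely the cardinalities $\{\bb(n):n\geq 1\}\cup\{\aleph_0\}$ on the sort $\s_2$ when $|\s^{\A}|=1$ — but this was already carried out when $\Tfive$ was defined, so the present lemma reduces to bookkeeping on that analysis together with the choice of $\phi$.
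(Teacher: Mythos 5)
Your proposal is correct and is essentially the paper's own proof: the same formula $\NEQ(x_{1},x_{2})\wedge\NEQ(u_{1},u_{2},u_{3})$ is used, satisfied only by $\Tfive$-interpretations with $|\s^{\A}|=2$ and $\s_{2}^{\A}$ infinite, and the failure of stable finiteness follows since stable finiteness implies the finite model property. The only nitpick is the phrase ``if and only if'' in your description of when an interpretation satisfies $\phi$ — satisfaction also depends on the variable assignment, not just the cardinalities — but this does not affect the argument.
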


\begin{proof}
    The formula $\NEQ(x_{1},x_{2})\wedge\NEQ(u_{1},u_{2},u_{3})$ (for $x_{i}$ of sort $\s$ and $u_{j}$ of sort $\s_{2}$) is satisfied by a $\Tfive$-interpretation $\A$ with $|\s^{\A}|=2$ and $\s_{2}^{\A}$ infinite, but cannot be satisfied by a $\Tfive$-interpretation $\B$ with both $\s^{\B}$ and $\s_{2}^{\B}$ finite, as $|\s^{\B}|=2$ and $|\s_{2}^{\B}|\geq 3$ imply $|\s_{2}^{\B}|\geq\aleph_{0}$.
\end{proof}

\begin{lemma}
    The $\Sigma_{2}$-theory $\Tfive$ does not have a computable minimal model function with respect to $\{\s,\s_{2}\}$.
\end{lemma}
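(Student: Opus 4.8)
The plan is to argue by contradiction, in the same spirit as the proofs for $\Tbb$ and $\Tfour$: from a computable minimal model function for $\Tfive$ I would build a computable function that pointwise dominates the Busy Beaver function $\bb$, contradicting the fact (from \cite{Rado}) that $\bb$ is not eventually bounded by any computable function.

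The first, and only mildly delicate, step is to pin down $\minmod_{\Tfive,\{\s,\s_{2}\}}(\NNNEQ{u}{k})$ for $k\geq 3$, where the $u_{i}$ are of sort $\s_{2}$, so that $\NNNEQ{u}{k}$ asserts $|\s_{2}^{\A}|\geq k$. From the classification of the $\Tfive$-interpretations into the four cardinality profiles $(1,\aleph_{0})$, $(1,\bb(n))$ with $n\in\mathbb{N}\setminus\{0\}$, $(2,2)$, and $(2,\aleph_{0})$, the interpretations satisfying $\NNNEQ{u}{k}$ are precisely those of profile $(1,\aleph_{0})$, $(2,\aleph_{0})$, or $(1,\bb(n))$ with $\bb(n)\geq k$ — the $(2,2)$ ones being excluded since $k\geq 3$; in particular $\NNNEQ{u}{k}$ is $\Tfive$-satisfiable. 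Letting $n_{k}$ be the least index with $\bb(n_{k})\geq k$ (which exists as $\bb$ is unbounded), the tuple $(1,\bb(n_{k}))$ lies below every tuple in this set, so by \Cref{alternative definition} we get $\minmod_{\Tfive,\{\s,\s_{2}\}}(\NNNEQ{u}{k})=\{(1,\bb(n_{k}))\}$. This case analysis is the part most prone to error, but it is entirely analogous to those carried out for the other Busy-Beaver theories, so I do not expect it to be the hard part.

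Assuming $\minmod_{\Tfive,\{\s,\s_{2}\}}$ computable, I then define $h:\{3,4,\ldots\}\rightarrow\mathbb{N}$ by hard-coding $h(3)=\bb(3)$ and letting $h(m+1)$ be the second coordinate of the (by the above, unique) element of $\minmod_{\Tfive,\{\s,\s_{2}\}}(\NNNEQ{u}{h(m)+1})$. Since $\NNNEQ{u}{h(m)+1}$ is produced algorithmically from $h(m)$, and $h(m)+1\geq 3$ throughout, $h$ is computable. By the first step, $h(m+1)=\bb(n)$ for $n$ the least index with $\bb(n)\geq h(m)+1$; and an easy induction using that $\bb$ is non-decreasing gives $h(m)\geq\bb(m)$ for all $m\geq 3$: if $h(m)\geq\bb(m)$ then $\bb(j)\leq\bb(m)\leq h(m)<h(m)+1$ for every $j\leq m$, so the relevant index $n$ is $\geq m+1$, whence $h(m+1)=\bb(n)\geq\bb(m+1)$. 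Finally, applying the property of $\bb$ from \cite{Rado} to the computable function $h$ yields some $n_{0}$ with $\bb(m)\geq h(m)$ for all $m\geq n_{0}$; combined with $h(m)\geq\bb(m)$ this forces $\bb(m)=h(m)$ for all $m\geq\max(3,n_{0})$, so $\bb$ — differing from the computable $h$ on only finitely many (hard-coded) values — would itself be computable, the desired contradiction. The main obstacle is really just getting the determination of $\minmod_{\Tfive,\{\s,\s_{2}\}}(\NNNEQ{u}{k})$ exactly right (excluding the $(2,2)$-models and checking that $(1,\bb(n_{k}))$ is genuinely the unique minimal tuple); the rest is routine.
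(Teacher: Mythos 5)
Your proof is correct and follows essentially the same route as the paper, which simply adapts the recursive argument used for $\Tfour$ (building a computable $h$ with $h(m)\geq\bb(m)$ by feeding $\neq(u_{1},\ldots,u_{h(m)+1})$ to the assumed minimal model function) to the second sort. Your explicit determination of $\minmod_{\Tfive,\{\s,\s_{2}\}}(\neq(u_{1},\ldots,u_{k}))=\{(1,\bb(n_{k}))\}$ for $k\geq 3$, and the care taken to exclude the $(2,2)$-models by starting the recursion at $h(3)$, supplies details the paper leaves implicit.
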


\begin{proof}
    This is the same as the proof of \Cref{cmmf:tfour}, taking into consideration the second sort instead of the first.
\end{proof}

\subsection{\tp{$\Tsix$}{Tsix}}

$\Tsix$ is the $\Sigma_{2}$ theory with axiomatization, for $m>n\geq 2$, with axiomatization
\[\{(\psi^{\s}_{=n}\wedge\psi^{\s_{2}}_{\geq k})\vee(\psi^{\s}_{=m}\wedge\psi^{\s_{2}}_{\geq g(k+1)})\vee(\psi^{\s}_{=m}\wedge\bigvee_{i=1}^{k}\psi^{\s_{2}}_{=g(i)}) : k\in\mathbb{N}\setminus\{0\}\},\]
which has essentially three classes of models $\A$: those with $|\s^{\A}|=n$ and $\s_{2}^{\A}$ infinite; those with $|\s^{\A}|=m$ and $\s_{2}^{\A}$ infinite; and those with $|\s^{\A}|=m$ and $|\s_{2}^{\A}|=g(k)$, for some $k\in\mathbb{N}\setminus\{0\}$.

\begin{lemma}
    The $\Sigma_{2}$-theory $\Tsix$ is not stably infinite, and thus not smooth, with respect to $\{\s,\s_{2}\}$.
\end{lemma}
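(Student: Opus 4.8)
The plan is in two stages: first show that $\Tsix$ is not stably infinite with respect to $\{\s,\s_{2}\}$ by exhibiting a $\Tsix$-satisfiable quantifier-free formula whose sort-$\s$ domain cannot be made infinite, and then deduce non-smoothness from the general fact that smoothness with respect to a set of sorts entails stable infiniteness with respect to that same set.

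For the first stage I would read off from the axiomatization the uniform bound on the sort-$\s$ domains. Instantiating the axiom scheme at $k=1$, the conjuncts constraining $\s_{2}$ either hold automatically ($\psi^{\s_{2}}_{\geq 1}$) or only make a disjunct harder to satisfy, so every $\Tsix$-interpretation $\A$ satisfies $\psi^{\s}_{=n}\vee\psi^{\s}_{=m}$; hence $|\s^{\A}|\in\{n,m\}$, and in particular $|\s^{\A}|\leq m<\aleph_{0}$ for every $\Tsix$-interpretation. This is exactly the description of the models of $\Tsix$ recorded above. Now take $\phi$ to be any $\Tsix$-satisfiable quantifier-free formula, for instance $x=x$ with $x$ of sort $\s$, which holds in the $\Tsix$-interpretation with $|\s^{\A}|=n$ and $\s_{2}^{\A}$ countably infinite. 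Every $\Tsix$-interpretation satisfying $\phi$ has $|\s^{\A}|\leq m<\aleph_{0}$, so none of them has $|\s^{\A}|\geq\aleph_{0}$; therefore $\Tsix$ is not stably infinite with respect to $\{\s\}$, and a fortiori not with respect to $\{\s,\s_{2}\}$, since stable infiniteness with respect to the latter would in particular demand a $\Tsix$-interpretation satisfying $\phi$ with $|\s^{\A}|\geq\aleph_{0}$.

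For the second stage I would invoke \Cref{LowenheimSkolem}: were $\Tsix$ smooth with respect to $\{\s,\s_{2}\}$, then for the formula $\phi$ above there would be a $\Tsix$-interpretation $\A$ satisfying $\phi$ with both domains of cardinality at most $\aleph_{0}$, and applying smoothness with the function $\kappa$ sending each sort to $\aleph_{0}$ (which dominates $|\s^{\A}|$ and $|\s_{2}^{\A}|$) would yield a $\Tsix$-interpretation satisfying $\phi$ with both domains infinite, contradicting the bound $|\s^{\A}|\leq m$. There is no genuine obstacle here; the only point requiring a little care is the elementary verification that $|\s^{\A}|\in\{n,m\}$ follows from the axioms, using $n<m$ so that a single axiom cannot be satisfied by an interpretation whose sort-$\s$ domain has any other size — and this is entirely routine, exactly as in the analogous statements for $\Tfour$ and $\Tfive$ above.
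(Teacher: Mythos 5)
Your proposal is correct and takes essentially the same route as the paper, whose proof is simply the observation that no $\Tsix$-interpretation has an infinite domain of sort $\s$ (so no satisfiable formula can have a model with $\s$ infinite, and non-smoothness follows since smoothness implies stable infiniteness). You merely spell out the verification that $|\s^{\A}|\in\{n,m\}$ and the standard smoothness-implies-stable-infiniteness step, both of which are fine.
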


\begin{proof}
    Obvious, as no $\Tsix$-interpretation has infinite elements in its domain of sort $\s$.
\end{proof}

\begin{lemma}
    The $\Sigma_{2}$-theory $\Tsix$ is finitely witnessable with respect to $\{\s,\s_{2}\}$.
\end{lemma}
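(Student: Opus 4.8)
The plan is to build a computable witness $\wit$ for $\Tsix$ by padding the input with fresh variables so that one fixed, \emph{computable} model shape is always available with every domain element named: a $\Tsix$-interpretation with exactly $m$ elements in sort $\s$ and exactly $g(2^{k})$ elements in sort $\s_{2}$ for a suitable $k$. Concretely, on input a quantifier-free $\phi$ I would first let $k$ be the least positive integer with $3\cdot 2^{k-1}\geq|\vars_{\s_{2}}(\phi)|$ (found by incrementing $k$), and set $N:=3\cdot 2^{k-1}$, which equals $g(2^{k})$ by \Cref{lem:g-exists}; then, with $x_{1},\ldots,x_{m}$ fresh of sort $\s$ and $u_{1},\ldots,u_{N}$ fresh of sort $\s_{2}$, define
\[\wit(\phi)=\phi\wedge\bigwedge_{i=1}^{m}(x_{i}=x_{i})\wedge\bigwedge_{j=1}^{N}(u_{j}=u_{j}).\]
Since $k$ and hence $N$ are computed algorithmically and the extra conjuncts are tautologies over fresh variables, $\wit$ is computable and $\wit(\phi)$ is propositionally equivalent to $\phi$, so with $\overarrow{x}=\vars(\wit(\phi))\setminus\vars(\phi)=\{x_{1},\ldots,x_{m},u_{1},\ldots,u_{N}\}$ the formula $\Exists{\overarrow{x}}\wit(\phi)$ is $\Tsix$-equivalent to $\phi$; this is clause $(i)$.

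For clause $(ii)$, assuming $\wit(\phi)$ --- equivalently $\phi$ --- is $\Tsix$-satisfiable, I would fix a $\Tsix$-interpretation $\A\vDash\phi$ and note that, since $\A$ is a $\Tsix$-interpretation, $|\vars_{\s}(\phi)^{\A}|\leq|\s^{\A}|\leq m$, while $|\vars_{\s_{2}}(\phi)^{\A}|\leq|\vars_{\s_{2}}(\phi)|\leq N$ by the choice of $k$. I would then take sets $A$ and $B$ disjoint from the domains of $\A$ with $|A|=m-|\vars_{\s}(\phi)^{\A}|$ and $|B|=N-|\vars_{\s_{2}}(\phi)^{\A}|$, and build a $\Sigma_{2}$-interpretation $\B$ with $\s^{\B}=\vars_{\s}(\phi)^{\A}\cup A$, $\s_{2}^{\B}=\vars_{\s_{2}}(\phi)^{\A}\cup B$, $v^{\B}=v^{\A}$ for every variable $v$ of $\phi$, $i\mapsto x_{i}^{\B}$ a bijection onto $\s^{\B}$, $j\mapsto u_{j}^{\B}$ a bijection onto $\s_{2}^{\B}$, and arbitrary values elsewhere. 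Then $|\s^{\B}|=m$ and $|\s_{2}^{\B}|=N=g(2^{k})$, so $\B$ is a $\Tsix$-interpretation by the classification of $\Tsix$-models given above; $\phi$, being a Boolean combination of equalities between variables of $\phi$ each of which keeps its $\A$-value in $\B$, still holds in $\B$, hence $\B\vDash\wit(\phi)$; and the fresh variables already enumerate both domains, so $\vars_{\s}(\wit(\phi))^{\B}=\s^{\B}$ and $\vars_{\s_{2}}(\wit(\phi))^{\B}=\s_{2}^{\B}$, as required.

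The argument runs parallel to the one for $\Ttwo$; the only $\Tsix$-specific care is the asymmetry of the two sorts. The $\s$-domain of any $\Tsix$-interpretation is pinned to $\{n,m\}$, so it cannot be enlarged freely --- this is why exactly $m$ fresh $\s$-variables are inserted and the target $\s$-cardinality is always $m$ --- while the $\s_{2}$-domain must hit a value of the non-computable $g$, so the witness cannot use $|\vars_{\s_{2}}(\wit(\phi))|$ directly and must fall back on the computable subsequence $g(2^{k})=3\cdot 2^{k-1}$ supplied by \Cref{lem:g-exists}. The two things I would want to check explicitly are that $(m,g(2^{k}))$ is a legal pair of $\Tsix$-cardinalities (immediate from the model description) and that $\vars_{\s}(\phi)^{\A}$ and $\vars_{\s_{2}}(\phi)^{\A}$ always fit inside sets of sizes $m$ and $g(2^{k})$ (which holds because $\A$ is already a $\Tsix$-interpretation and $k$ was chosen so that $N\geq|\vars_{\s_{2}}(\phi)|$). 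I do not expect any real obstacle beyond this bookkeeping.
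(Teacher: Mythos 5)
Your proposal is correct and follows essentially the same route as the paper's proof: pad $\phi$ with exactly $m$ fresh $\s$-variables and $g(2^{k})$ fresh $\s_{2}$-variables (the paper uses $g(2^{k+1})$ with a slightly different indexing of $k$, which is immaterial), then rebuild a model on the named elements, using that $(m,g(2^{k}))$ is always an admissible cardinality pair and that $g(2^{k})=3\times 2^{k-1}$ is computable. No gaps.
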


\begin{proof}
    Let $\phi$ be a quantifier-free formula, $k$ an integer such that $g(2^{k})=3\times 2^{k-1}<|\vars_{\s_{2}}(\phi)|\leq 3\times 2^{k}=g(2^{k+1})$, $x_{1}$ through $x_{m}$ fresh variables of sort $\s$, and $u_{1}$ through $u_{g(2^{k+1})}$ fresh variables of sort $\s_{2}$: we define the, clearly computable, function
    \[\wit(\phi)=\phi\wedge\bigwedge_{j=1}^{m}(x_{j}=x_{j})\wedge\bigwedge_{i=1}^{g(2^{k+1})}(u_{i}=u_{i}),\]
    which we state is a witness. Of course $\Exists{\overarrow{x}}\wit(\phi)$ and $\phi$ are $\Tsix$-equivalent, for $\overarrow{x}=\vars(\wit(\phi))\setminus\vars(\phi)$, since $\wit(\phi)$ and $\phi$ are themselves equivalent, as $\wit(\phi)$ is the conjunction of $\phi$ and a tautology.

    Now, let $\A$ be a $\Tsix$-interpretation that satisfies $\wit(\phi)$, and thus $\phi$: we let $A$ and $B$ be sets of cardinality, respectively, $m-|\vars_{\s}(\phi)^{\A}|$ and $g(2^{k+1})-|\vars_{\s_{2}}(\phi)^{\A}|$, disjoint from $\s^{\A}$ and $\s_{2}^{\A}$. Then, we construct a $\Tsix$-interpretation $\B$ by making $\s^{\B}=\vars_{\s}(\phi)^{\A}\cup A$ (that has $m$ elements); $\s_{2}^{\B}=\vars_{\s_{2}}(\phi)^{\A}\cup B$ (that has $g(2^{k+1})$ elements, thus making of $\B$ a $\Tsix$-interpretation); $x^{\B}=x^{\A}$ for all $x\in\vars_{\s}(\phi)$, $x\in \{x_{1},\ldots,x_{m}\}\mapsto x^{\B}\in\s^{\B}$ a bijection, and $x^{\B}$ arbitrary for any other variables of sort $\s$; $u^{\B}=u^{\A}$ for all $u\in\vars_{\s_{2}}(\phi)$, $u\in \{u_{1},\ldots,u_{g(2^{k+1})}\}\mapsto u^{\B}\in\s_{2}^{\B}$ a bijection, and $u^{\B}$ again arbitrary for other variables of sort $\s_{2}$. Of course $\B$ satisfies $\phi$, and thus $\wit(\phi)$, and has $\s^{\B}=\vars_{\s}(\wit(\phi))^{\B}$ and $\s_{2}^{\B}=\vars_{\s_{2}}(\wit(\phi))^{\B}$.
\end{proof}

\begin{lemma}
    The $\Sigma_{2}$-theory $\Tsix$ is not strongly finitely witnessable with respect to $\{\s,\s_{2}\}$.
\end{lemma}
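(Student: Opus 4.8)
The goal is to show that $\Tsix$ is not strongly finitely witnessable with respect to $\{\s, \s_2\}$. The plan is to use a standard obstruction that relates strong finite witnessability to the computability of cardinality information, combined with the fact — already established in the preceding lemma — that $\Tsix$ \emph{is} finitely witnessable. The key point is that $\Tsix$ encodes the non-computable function $g$ in the cardinalities of its models of sort $\s_2$: for $|\s^{\A}|=m$, the domain $\s_2^{\A}$ is either infinite or of size $g(k)$ for some $k$, while $g$ is not computable (property \textbf{4} of \Cref{lem:g-exists}).

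First I would recall the characterization of strong finite witnessability in the presence of stable infiniteness or, more precisely, the results from \cite{CADE,JB10-LPAR} connecting strong witnesses to smoothness-like behavior. However, $\Tsix$ is \emph{not} stably infinite (no model has $\s^{\A}$ infinite), so the cleanest route is a direct argument: assume toward a contradiction that $\Tsix$ has a strong witness $\wit$. Using $\wit$ together with arrangements $\delta_V$, I would show that one can algorithmically compute, for each $n$, the least cardinality of the domain of sort $\s_2$ in a $\Tsix$-interpretation with $|\s^{\A}|=m$ that separates at least $n$ given variables of sort $\s_2$ — precisely because condition $(ii')$ forces the existence of a model in which every element of each sort is the interpretation of a variable appearing in $\wit(\phi)\wedge\delta_V$. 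Concretely, for $\phi_n = \psi^{\s}_{=m}\wedge\NNNEQ{u}{n}$ (with $u_i$ of sort $\s_2$), any $\Tsix$-interpretation satisfying $\phi_n$ has $|\s_2^{\A}|\geq n$ and, being a $\Tsix$-model with $|\s^{\A}|=m$, has $|\s_2^{\A}|$ equal to the least $g(k)\geq n$ or to $\aleph_0$. Condition $(ii')$ then pins down this least value as $|\vars_{\s_2}(\wit(\phi_n)\wedge\delta_V)|$ for a suitable arrangement, making the map $n\mapsto \min\{g(k): g(k)\geq n\}$ computable; since $g$ is increasing and unbounded (properties \textbf{1}, \textbf{2}), this lets one compute $g$ itself via a bootstrapping recursion $h(1)=g(1)$, $h(j+1)=\min\{g(k): g(k)\geq h(j)+1\}$, exactly as in the proofs of \Cref{cmmf ttwo,cmmf:tfour,cmmf:tone}. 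This contradicts property \textbf{4}.

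The main obstacle will be handling the arrangement bookkeeping cleanly: one must be careful that $\delta_V$ is an arrangement over a \emph{finite} set $V$ of variables (both sorts), and that forcing $|\s^{\A}|=m$ via a quantifier-free formula — which $\psi^{\s}_{=m}$ is not — requires instead using an arrangement on $m$ fresh variables of sort $\s$ that are pairwise distinct, together with enough extra machinery or the $\wit$ output itself, to guarantee $|\s^{\A}|=m$ rather than something larger. Since $\Tsix$ only admits $|\s^{\A}|\in\{n,m\}$, separating $\max\{n,m\}+1$ wait — separating $m$ variables of sort $\s$ already forces $|\s^{\A}|=m$ because $n<m$ and no model has more than $m$ elements of sort $\s$; so the arrangement declaring $m$ variables of sort $\s$ pairwise distinct does the job. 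I expect this to go through, and the remainder — the recursion computing $g$ and the appeal to $(ii')$ — is routine given the earlier lemmas. An alternative, possibly shorter, route worth checking is to invoke directly a result of the form ``finitely witnessable $+$ strongly finitely witnessable $\Rightarrow$ computable minimal model function'' (via \Cref{CMMF+FM=>FW}'s converse direction or results of \cite{CasalRasga}) together with a separate proof that $\Tsix$ has no computable minimal model function; but since $\Tsix$ is already shown to lack a computable minimal model function in the accompanying lemma, the direct argument above is self-contained and I would present it.
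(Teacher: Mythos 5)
Your overall strategy---derive from a hypothetical strong witness an algorithm computing $g$, contradicting property \textbf{4} of \Cref{lem:g-exists}---is workable, but the crux of it is asserted rather than proved, and as stated it has a genuine gap. You claim that condition $(ii')$ ``pins down'' the least admissible $\s_{2}$-cardinality as $|\vars_{\s_{2}}(\wit(\phi_{n})\wedge\delta_{V})|$ ``for a suitable arrangement,'' but you never say how to \emph{find} that arrangement algorithmically. What $(ii')$ gives you is: for each arrangement $\delta$ of $\vars(\wit(\phi_{n}))$ such that $\wit(\phi_{n})\wedge\delta$ is $\Tsix$-satisfiable, there is a model whose $\s_{2}$-domain has exactly as many elements as $\delta$ has $\s_{2}$-equivalence classes (note: the number of classes, not the number of variables, which is what you wrote). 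To extract $\min\{g(k):g(k)\geq h(j)+1\}$ you must minimize over the \emph{satisfiable} arrangements, and deciding which arrangements those are is in general equivalent to assuming a decision procedure for $\Tsix$ --- exactly the hypothesis under which \cite{CasalRasga,CasalRasga2} pass from strong witnesses to computable minimal cardinalities, and exactly the obstruction behind \Cref{conj:nouni23} (unicorn $2.0$). So the step ``strong witness $\Rightarrow$ the map $n\mapsto\min\{g(k):g(k)\geq n\}$ is computable'' is not free. For $\Tsix$ specifically the gap is fillable: since $\Tsix$ has models with $|\s^{\A}|=m$ and $\s_{2}^{\A}$ infinite, every propositionally consistent arrangement of an empty-signature formula entailing $\NNNEQ{x}{m}$ is automatically $\Tsix$-satisfiable, so propositional (equality-logic) consistency --- which is decidable --- suffices; and one then checks that the minimum over consistent arrangements is attained at the arrangement induced by a minimal model. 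But this observation is essential and is missing from your write-up; the obstacle you do flag (forcing $|\s^{\A}|=m$ quantifier-freely) is the easy one and you resolve it correctly.

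For comparison, the paper avoids computability altogether and argues via the \emph{non-surjectivity} of $g$: starting from any witnessed model $\B$ of $\wit(\phi)\wedge\delta_{V}$ with $|\s_{2}^{\B}|=g(q)$, it picks $p>q$ with $f(p)=1$ (possible since $f$ is not eventually $0$), so that $g(p-1)+1\notin\{g(k):k\geq 1\}$, and then adds $g(p-1)+1-g(q)$ fresh pairwise-distinct $\s_{2}$-variables to $\delta_{V}$. The enlarged formula is still $\Tsix$-satisfiable (by an infinite-$\s_{2}$ model), yet any witnessed model would have exactly $g(p-1)+1$ elements of sort $\s_{2}$, a finite number outside the image of $g$ --- a direct violation of $(ii')$ with no satisfiability decisions needed. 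If you keep your computability route, you must add the decidability/consistency argument above; otherwise I would switch to the gap-in-the-image argument.
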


\begin{proof}
    Suppose that $\Tsix$ has a strong witness $\wit$, let $\phi$ be a $\Tsix$-satisfiable formula (such as a tautology), and $\A$ a $\Tsix$-interpretation that satisfies $\phi$: since $\phi$ and $\Exists{\overarrow{x}}\wit(\phi)$ are $\Tsix$-equivalent, for $\overarrow{x}=\vars(\wit(\phi))\setminus\vars(\phi)$, there exists a $\Tsix$-interpretation $\A^{\prime}$, differing from $\A$ at most on the value assigned to $\overarrow{x}$, such that $\A^{\prime}$ satisfies $\wit(\phi)$. Let $V=\vars(\wit(\phi))$, and $\delta_{V}$ the arrangement induced by $\A^{\prime}$ on $V$, and it is clear that $\A^{\prime}$ satisfies $\wit(\phi)\wedge\delta_{V}$: there is, then, a $\Tsix$-interpretation $\B$ that satisfies $\wit(\phi)\wedge\delta_{V}$ with $\vars_{\s_{2}}(\wit(\phi)\wedge\delta_{V})^{\B}=\s_{2}^{\B}$ (and the same holds for $\s$, but that is not as important); let $g(q)$ be the cardinality of $|\s_{2}^{\B}|$, which must be in the image of $g$ as it is finite.

    As $f$ is not computable, there is a $p>q$ such that $f(p)=1$, and thus $g(p)=g(p-1)+f(p)+1=g(p-1)+2$: consider fresh variables $u_{1}$ through $u_{k}$, for $k=g(p-1)+1-g(q)$, define $U=V\cup\{u_{1},\ldots,u_{k}\}$, and let $\delta_{U}$ be the arrangement on $U$ that matches $\delta_{V}$ on $V$, and where each $u_{i}$ is on its own equivalence class. The formula $\wit(\phi)\wedge\delta_{U}$ is $\Tsix$-satisfiable. Indeed, consider the interpretation $\B^{\prime}$ with: $\s^{\B^{\prime}}=\s^{\B}$, $\s_{2}^{\B^{\prime}}=\s_{2}^{\B}\cup A$, for some infinite set $A$ disjoint from the domains of $\B$ (what makes of $\B^{\prime}$ a $\Tsix$-interpretation); $x^{\B^{\prime}}=x^{\B}$ for all variables $x$ of sort $\s$; $u\in \{u_{1},\ldots,u_{k}\}\mapsto u^{\B^{\prime}}\in A$ an injective map; and $u^{\B^{\prime}}=u^{\B}$ for all other variables $u$ of sort $\s_{2}$, we have that $\B^{\prime}$ satisfies $\wit(\phi)\wedge\delta_{U}$. But no $\Tsix$-interpretation $\C$ that satisfies $\wit(\phi)\wedge\delta_{U}$ can have $\s_{2}^{\C}=\vars_{\s_{2}}(\wit(\phi)\wedge\delta_{U})^{\C}$, leading to a contradiction: indeed, the set $\vars_{\s_{2}}(\wit(\phi)\wedge\delta_{U})^{\C}$ must have $q(p-1)+1$ elements, an integer which is not in the image of $g$.

\end{proof}

\begin{lemma}
    The $\Sigma_{2}$-theory $\Tsix$ is not convex with respect to $\{\s,\s_{2}\}$.
\end{lemma}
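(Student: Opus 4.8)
The plan is to reuse the pigeonhole‑style argument already applied to $\Tfive$, exploiting that the sort‑$\s$ domains of $\Tsix$‑interpretations are bounded: by the description of its models, every $\Tsix$‑interpretation $\A$ has $|\s^{\A}|\in\{n,m\}$, hence $|\s^{\A}|\leq m$. First I would fix $m+1$ pairwise distinct fresh variables $x_{1},\dots,x_{m+1}$ of sort $\s$. No $\Tsix$‑interpretation can send all of them to pairwise distinct elements, so $\NNNEQ{x}{m+1}$ is $\Tsix$‑unsatisfiable; equivalently, $\vDash_{\Tsix}\bigvee_{1\leq i<j\leq m+1}(x_{i}=x_{j})$. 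Taking the conjunction of literals in the definition of convexity to be the $\Tsix$‑valid literal $\phi=(x_{1}=x_{1})$, this yields $\vDash_{\Tsix}\phi\rightarrow\bigvee_{1\leq i<j\leq m+1}(x_{i}=x_{j})$.

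Second, I would show that for no pair $i<j$ does the single disjunct follow, i.e.\ $\not\vDash_{\Tsix}\phi\rightarrow(x_{i}=x_{j})$. It suffices to exhibit, for each such pair, a $\Tsix$‑interpretation $\A$ with $x_{i}^{\A}\neq x_{j}^{\A}$ (it trivially satisfies $\phi$). Choose $\A$ with $|\s^{\A}|=m$ and $\s_{2}^{\A}$ infinite: this is a $\Tsix$‑interpretation because, for every $k\in\mathbb{N}\setminus\{0\}$, it satisfies the disjunct $\psi^{\s}_{=m}\wedge\psi^{\s_{2}}_{\geq g(k+1)}$ of the $k$‑th axiom (as $g(k+1)$ is finite while $\s_{2}^{\A}$ is infinite). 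Since $m>n\geq 2$, the domain $\s^{\A}$ has at least two elements, so we may let $x_{i}^{\A}$ and $x_{j}^{\A}$ be distinct, assigning all remaining variables arbitrarily; then $\A$ falsifies $x_{i}=x_{j}$. Putting the two steps together contradicts the definition of convexity, establishing that $\Tsix$ is not convex with respect to $\{\s,\s_{2}\}$.

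I do not anticipate any real difficulty here: the two points that need a moment of care are verifying that an interpretation with $m$ elements of sort $\s$ and an infinite sort‑$\s_{2}$ domain genuinely belongs to $\Tsix$ (immediate from the second disjunct of the axiomatization) and invoking $m\geq 2$ so that distinct sort‑$\s$ elements are available to witness the failure of each equality; both are routine, and this lemma is considerably simpler than the failures of smoothness and strong finite witnessability already proved for $\Tsix$.
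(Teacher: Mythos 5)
Your proof is correct, but it takes a different route from the paper. The paper's own argument is indirect: it invokes Theorem 4 of \cite{CADE}, which says that a convex theory all of whose models have at least $2$ elements in each domain must be stably infinite; since every $\Tsix$-interpretation has at least $n\geq 2$ elements of sort $\s$ and at least $g(1)=2$ of sort $\s_{2}$, convexity would force stable infiniteness, contradicting the lemma already proved for $\Tsix$. You instead give a direct, self-contained pigeonhole argument: $|\s^{\A}|\leq m$ for every model, so the disjunction of equalities over $m+1$ sort-$\s$ variables is $\Tsix$-valid, while no single disjunct is valid because the interpretation with $|\s^{\A}|=m\geq 3$ and $\s_{2}^{\A}$ infinite (correctly verified to satisfy every axiom via its second disjunct) can separate any chosen pair. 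Your approach is more elementary and mirrors the style the paper itself uses for $\Tfive$ and $\Tten$; the paper's approach is shorter given the imported theorem and makes the structural reason (bounded domains plus convexity being incompatible with non-stable-infiniteness) explicit. Both are valid; nothing is missing from your argument.
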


\begin{proof}
    From Theorem $4$ of \cite{CADE}, if $\Tsix$ were convex, as all of its models have at least $2$ elements in each of its domains, then we would have that it is stably infinite, which is not.
\end{proof}

\begin{lemma}
    The $\Sigma_{2}$-theory $\Tsix$ has the finite model property with respect to $\{\s,\s_{2}\}$.
\end{lemma}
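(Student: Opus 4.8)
The plan is to note that the statement follows almost immediately from material already in hand. The clean route is this: just above we proved that $\Tsix$ is finitely witnessable with respect to $\{\s,\s_{2}\}$, and Theorem 2 of \cite{FroCoS} states that finite witnessability implies the finite model property; combining the two yields the claim at once, exactly as was done for $\Ttwo$.

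If one instead wants a direct argument, I would proceed as follows. Step~1: record the classification of $\Tsix$-interpretations $\A$ obtained by distributing the axioms — one checks that $|\s^{\A}|\in\{n,m\}$ (so $\s^{\A}$ is always finite), while $\s_{2}^{\A}$ is either infinite or has cardinality $g(i)$ for some $i\geq 1$, and conversely every such pair of cardinalities is realized. Step~2: given a $\Tsix$-satisfiable quantifier-free $\phi$ and a $\Tsix$-interpretation $\A\models\phi$, build a finite $\Tsix$-interpretation $\B$ of the ``$(m,g(i))$'' type: let $\s^{\B}$ be any $m$-element set containing $\vars_{\s}(\phi)^{\A}$ (possible since $|\vars_{\s}(\phi)^{\A}|\leq|\s^{\A}|\leq m$); using that $g$ is unbounded, pick $i$ with $g(i)\geq|\vars_{\s_{2}}(\phi)^{\A}|$ and let $\s_{2}^{\B}$ be any $g(i)$-element set containing $\vars_{\s_{2}}(\phi)^{\A}$; set $x^{\B}=x^{\A}$ for every variable occurring in $\phi$, and arbitrarily otherwise. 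Step~3: observe that $\B\models\phi$ because $\Sigma_{2}$ is empty, so the atomic subformulas of $\phi$ are equalities between variables, and the equality pattern among the values of $\phi$'s variables is preserved when passing from $\A$ to $\B$; since $\B$ has both domains finite, we are done.

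The only point needing care is the case $|\s^{\A}|=n$: then $\s_{2}^{\A}$ is forced to be infinite, so one cannot keep $\s^{\A}$ and merely shrink $\s_{2}$ in place — one must pass to the $(m,g(i))$ family, which is precisely why Step~2 enlarges the $\s$-domain to size $m$ (legitimate since $m>n\geq 2$, and enlarging a domain in an empty signature preserves satisfaction of quantifier-free formulas). Apart from this bookkeeping I expect no real obstacle; verifying the model classification in Step~1 and choosing $i$ large enough in Step~2 are both routine.
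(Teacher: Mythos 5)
Your proposal is correct. The direct argument in Steps 1--3 is essentially the paper's own proof: the paper also classifies the models, always moves to the $(m,g(\cdot))$ family (taking $\s^{\B}=\vars_{\s}(\phi)^{\A}\cup A$ with $|\s^{\B}|=m$ and $\s_{2}^{\B}=\vars_{\s_{2}}(\phi)^{\A}\cup B$ with $|\s_{2}^{\B}|=g(2^{k+1})$ for the least suitable $k$), and relies on the emptiness of $\Sigma_{2}$ so that preserving the values of $\phi$'s variables preserves satisfaction; your choice of an arbitrary $i$ with $g(i)\geq|\vars_{\s_{2}}(\phi)^{\A}|$ via unboundedness is equivalent for this purpose, since computability of the chosen cardinality is irrelevant to the finite model property. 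Your ``clean route'' (finite witnessability of $\Tsix$ plus the implication $\finwit\Rightarrow\finmodpro$ from Theorem 2 of \cite{FroCoS}) is a genuinely shorter derivation that the paper does not use here but does use verbatim for $\Ttwo$ and $\Tthreetwo$, so it is equally legitimate; it buys brevity at the cost of depending on the (longer) finite-witnessability proof, whereas the direct construction is self-contained. Your remark about the case $|\s^{\A}|=n$ forcing an enlargement of the $\s$-domain is exactly the right point of care and is consistent with the separate lemma that $\Tsix$ is \emph{not} stably finite.
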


\begin{proof}
    Let $\phi$ be a quantifier-free formula, and $\A$ a $\Tsix$-interpretation that satisfies $\phi$: take an integer $k$ such that 
    \[g(2^{k})=3\times 2^{k-1}\leq |\vars_{\s_{2}}(\phi)^{\A}|<3\times 2^{k}=g(2^{k+1}),\]
    take sets $A$ and $B$, with respectively $m-|\vars_{\s}(\phi)^{\A}|$ and $g(2^{k+1})-|\vars_{\s_{2}}(\phi)^{\A}|$ elements. We can then define the interpretation $\B$ by making: $\s^{\B}=\vars_{\s}(\phi)^{\A}\cup A$, which has then $m$ elements; $\s_{2}^{\B}=\vars_{\s_{2}}(\phi)^{\A}\cup B$, which has then $g(2^{k+1})$ elements, making of $\B$ a $\Tsix$-interpretation; $x^{\B}=x^{\A}$ for all $x\in\vars_{\s}(\phi)$, and arbitrarily for other variables $x$ of sort $\s$; and $u^{\B}=u^{\A}$ for all $u\in\vars_{\s_{2}}(\phi)$, and arbitrarily for other variables $u$ of sort $\s_{2}$. It is then clear that $\B$ is a $\Tsix$-interpretation that satisfies $\phi$ with finitely many elements in each of its domains.
\end{proof}

\begin{lemma}
    The $\Sigma_{2}$-theory $\Tsix$ is not stably finite with respect to $\{\s,\s_{2}\}$.
\end{lemma}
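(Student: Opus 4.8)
The plan is to follow the template used above for $\Ttwo$ and $\Tfive$: produce a quantifier-free formula $\phi$ and a $\Tsix$-interpretation $\A\vDash\phi$ whose first sort is already as small as it can possibly be in any $\Tsix$-model, yet which forces the second sort to be infinite. I would first recall the classification of $\Tsix$-models stated just above: every $\Tsix$-interpretation $\A$ satisfies $|\s^{\A}|\in\{n,m\}$ (each disjunct of each axiom contains either $\psi^{\s}_{=n}$ or $\psi^{\s}_{=m}$), and whenever $|\s^{\A}|=n$ the domain $\s_{2}^{\A}$ must be infinite --- indeed, for such an $\A$ the only disjunct of the $k$-th axiom that can hold is $\psi^{\s}_{=n}\wedge\psi^{\s_{2}}_{\geq k}$, and this has to hold for every $k$. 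I would also observe that such a model exists: interpreting $\s$ as an $n$-element set and $\s_{2}$ as a countably infinite set yields a $\Tsix$-interpretation, since then $\psi^{\s}_{=n}\wedge\psi^{\s_{2}}_{\geq k}$ holds for all $k\geq 1$.

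Next I would take $\phi$ to be $\NNEQ{x}$ with $x_{1},\ldots,x_{n}$ of sort $\s$, and let $\A$ be the interpretation from the previous paragraph with the $x_{i}$ sent to the $n$ distinct elements of $\s^{\A}$; then $\A$ is a $\Tsix$-interpretation satisfying $\phi$, with $|\s^{\A}|=n$ and $|\s_{2}^{\A}|=\aleph_{0}$.

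Finally, suppose $\B$ were a $\Tsix$-interpretation satisfying $\phi$ with $|\s^{\B}|<\aleph_{0}$, $|\s_{2}^{\B}|<\aleph_{0}$, $|\s^{\B}|\leq|\s^{\A}|=n$ and $|\s_{2}^{\B}|\leq|\s_{2}^{\A}|=\aleph_{0}$. From $\B\vDash\NNEQ{x}$ we get $|\s^{\B}|\geq n$, hence $|\s^{\B}|=n$; but then, by the classification recalled above, $\s_{2}^{\B}$ is infinite, contradicting $|\s_{2}^{\B}|<\aleph_{0}$. Thus no finite $\B$ with domains bounded by those of $\A$ satisfies $\phi$, so $\Tsix$ is not stably finite with respect to $\{\s,\s_{2}\}$. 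I do not expect a genuine obstacle here; the only point requiring care is to invoke the model classification correctly --- in particular that $\Tsix$ has no interpretation with $|\s^{\A}|<n$, and none with $|\s^{\A}|=n$ and $\s_{2}^{\A}$ finite --- both of which are immediate from the axiomatization and the standing hypothesis $m>n\geq 2$.
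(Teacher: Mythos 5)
Your proof is correct and matches the paper's argument, which likewise rests on the observations that $\Tsix$ has a model with $|\s^{\A}|=n$ and $|\s_{2}^{\A}|=\aleph_{0}$, no model with $|\s^{\B}|=n$ and $\s_{2}^{\B}$ finite, and no model with $|\s^{\C}|<n$. Your explicit choice of $\phi=\NNEQ{x}$ is a harmless (indeed slightly redundant) refinement, since the constraint $|\s^{\B}|\leq n$ together with the absence of models of first-sort cardinality below $n$ already forces $|\s^{\B}|=n$.
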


\begin{proof}
    Obvious, as $\Tsix$ has models $\A$ with $|\s^{\A}|=n$ and $|\s_{2}^{\A}|=\aleph_{0}$, but no models $\B$ with $|\s^{\B}|=n$ and $\s_{2}^{\B}$ finite, nor any models $\C$ with $|\s^{\C}|<n$.
\end{proof}

\begin{lemma}
    The $\Sigma_{2}$-theory $\Tsix$ does not have a computable minimal model function with respect to $\{\s,\s_{2}\}$.
\end{lemma}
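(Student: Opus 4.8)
The plan is the standard diagonalization used for the other theories in this family (analogous to the arguments for $\Tone$, $\Ttwo$, and $\Tfour$): from a hypothetical computable minimal model function for $\Tsix$ we recover a computable presentation of the function $g$, contradicting part \textbf{4} of \Cref{lem:g-exists}. The one new ingredient is that here the values of $g$ must be read off from the \emph{second} coordinates of the computed minimal cardinality pairs, exploiting the fact that in $\Tsix$ every interpretation with finite domain of sort $\s_{2}$ has exactly $m$ elements of sort $\s$ and $g(j)$ elements of sort $\s_{2}$ for some positive integer $j$.

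Concretely, for $u_{i}$ of sort $\s_{2}$ and $k\in\mathbb{N}\setminus\{0\}$, I would first pin down $\textbf{Card}_{\Tsix,\{\s,\s_{2}\}}(\NNNEQ{u}{k})$. Since $\NNNEQ{u}{k}$ merely requires at least $k$ elements of sort $\s_{2}$, the satisfying $\Tsix$-interpretations realize precisely the cardinality pairs $(n,\aleph_{0})$, $(m,\aleph_{0})$, and $(m,g(j))$ for every $j\in\mathbb{N}\setminus\{0\}$ with $g(j)\geq k$. By \Cref{alternative definition}, $\minmod_{\Tsix,\{\s,\s_{2}\}}(\NNNEQ{u}{k})$ is the set of $\leq$-minimal such pairs. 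Since $n<m$, the pair $(m,\aleph_{0})$ lies strictly above $(n,\aleph_{0})$ (and above every $(m,g(j))$), hence is not minimal, while $(n,\aleph_{0})$ is incomparable to every $(m,g(j))$; therefore the minimal set is exactly $\{(n,\aleph_{0}),\,(m,N(k))\}$, where $N(k)=\min\{g(j):g(j)\geq k,\ j\in\mathbb{N}\setminus\{0\}\}$ (this minimum is attained because $g$ is increasing and unbounded, by parts \textbf{1} and \textbf{2} of \Cref{lem:g-exists}). In particular, $N(k)$ is obtained computably from the (always finite, by \Cref{unique and finite}) set $\minmod_{\Tsix,\{\s,\s_{2}\}}(\NNNEQ{u}{k})$ as its unique finite second coordinate.

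I would then define $h\colon\mathbb{N}\setminus\{0\}\to\mathbb{N}\setminus\{0\}$ by $h(1)=2$ and $h(j+1)=N(h(j)+1)$; each formula $\NNNEQ{u}{h(j)+1}$ is $\Tsix$-satisfiable (witnessed by an interpretation with infinite domain of sort $\s_{2}$), so on the assumption that $\minmod_{\Tsix,\{\s,\s_{2}\}}$ is computable, $h$ is computable. Using $g(1)=1+f(1)=2$ for the base case, and that $g$ is strictly increasing (so $g(j+1)\geq g(j)+1$, and $g(j+1)$ is the least value of $g$ that is $\geq g(j)+1$), a straightforward induction yields $h(j)=g(j)$ for every $j\in\mathbb{N}\setminus\{0\}$. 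This makes $g$ computable, contradicting part \textbf{4} of \Cref{lem:g-exists} and proving the lemma. I expect the only real subtlety to be the cardinality bookkeeping of the second paragraph --- establishing $\textbf{Card}_{\Tsix,\{\s,\s_{2}\}}(\NNNEQ{u}{k})$ and checking that $N(k)$ is unambiguously recoverable from the minimal set; everything after that is the routine bootstrapping already carried out for the theories above.
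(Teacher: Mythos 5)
Your proposal is correct and follows exactly the route the paper takes: the paper's proof of this lemma is stated as mimicking the diagonalization for $\Tone$ (\Cref{cmmf:tone}) while reading off the second sort, which is precisely your construction of $h$ with $h(1)=2=g(1)$ and $h(j+1)$ extracted from $\minmod_{\Tsix,\{\s,\s_2\}}(\NNNEQ{u}{h(j)+1})$. Your explicit computation of the minimal set as $\{(n,\aleph_0),(m,N(k))\}$ and the observation that $N(k)$ is the unique finite second coordinate just fills in the bookkeeping the paper leaves implicit.
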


\begin{proof}
    This proof mimics that of \Cref{cmmf:tone}, referencing the second sort instead of the first.
\end{proof}

\subsection{\tp{$\Tthreeone$}{Tthreeone}}

$\Tthreeone$ is the $\Sigma_{2}$-theory with axiomatization
\[\{\psi^{\s}_{\geq n}\wedge(\psi^{\s_{2}}_{\geq g(n+1)}\vee\bigvee_{i=1}^{n}\psi^{\s_{2}}_{g_{i}})\wedge\psi^{\s_{3}}_{=1} : n\in\mathbb{N}\setminus\{0\}\}.\]
The models $\A$ of $\Tthreeone$ have: $\s^{\A}$ infinite; $\s_{2}^{\A}$ either infinite or of cardinality $g(n)$ for some $n\in\mathbb{N}\setminus\{0\}$; and $\s_{3}^{\A}$ of cardinality $1$.

\begin{lemma}
    The $\Sigma_{3}$-theory $\Tthreeone$ is not stably infinite, and thus not smooth, with respect to $\{\s,\s_{2},\s_{3}\}$.
\end{lemma}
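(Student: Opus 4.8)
The plan is to exploit the conjunct $\psi^{\s_{3}}_{=1}$ that appears in every axiom of $\Tthreeone$: it forces the domain of sort $\s_{3}$ to be a singleton in every $\Tthreeone$-interpretation, which in turn precludes stable infiniteness with respect to any set of sorts containing $\s_{3}$, and in particular with respect to $\{\s,\s_{2},\s_{3}\}$.

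First I would check that $\Tthreeone$ is not contradictory, so that the statement is not vacuously true: the interpretation $\A$ with $\s^{\A}$ and $\s_{2}^{\A}$ both countably infinite and $\s_{3}^{\A}$ a singleton satisfies every axiom, since $\psi^{\s}_{\geq n}$ and $\psi^{\s_{2}}_{\geq g(n+1)}$ hold in countably infinite domains and $\psi^{\s_{3}}_{=1}$ holds in a one-element domain. Hence the quantifier-free formula $x=x$ (with $x$ of sort $\s$) is $\Tthreeone$-satisfiable. Next I would observe that, since every axiom of $\Tthreeone$ has $\psi^{\s_{3}}_{=1}$ as a conjunct, every $\Tthreeone$-interpretation $\A$ satisfies $\psi^{\s_{3}}_{=1}$ and therefore $|\s_{3}^{\A}|=1<\aleph_{0}$. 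Consequently there is no $\Tthreeone$-interpretation satisfying $x=x$ whose domain of sort $\s_{3}$ is infinite, which is exactly the negation of the defining condition of stable infiniteness with respect to $\{\s,\s_{2},\s_{3}\}$ (indeed with respect to any $S$ with $\s_{3}\in S$).

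For the ``and thus not smooth'' clause I would invoke the standard fact, recorded in \Cref{modeltheory}, that smoothness with respect to $S$ implies stable infiniteness with respect to $S$ (there are no theories that are smooth but not stably infinite): concretely, if $\Tthreeone$ were smooth with respect to $\{\s,\s_{2},\s_{3}\}$, then starting from the interpretation $\A$ above satisfying $x=x$ and taking $\kappa$ to send each of $\s,\s_{2},\s_{3}$ to $\aleph_{0}$, smoothness would yield a $\Tthreeone$-interpretation with all three domains of cardinality $\aleph_{0}$, contradicting $|\s_{3}^{\A}|=1$. There is essentially no obstacle to carry out here; the only point that needs a line of care is exhibiting the model $\A$ so that ``stably infinite'' is not satisfied vacuously, and this is immediate from the axiomatization.
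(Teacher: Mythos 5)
Your proof is correct and matches the paper's (which simply states the result is obvious because no $\Tthreeone$-interpretation has an infinite domain of sort $\s_{3}$). Your additional care in exhibiting a satisfiable formula so that stable infiniteness does not fail vacuously, and in spelling out why smoothness would contradict $|\s_{3}^{\A}|=1$, is a sound elaboration of the same argument.
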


\begin{proof}
    Obvious, as no $\Tthreeone$-interpretation $\A$ has $\s_{3}^{\A}$ infinite.
\end{proof}

\begin{lemma}
    The $\Sigma_{3}$-theory $\Tthreeone$ is not finitely witnessable, and thus not strongly finitely witnessable, with respect to $\{\s,\s_{2},\s_{3}\}$.
\end{lemma}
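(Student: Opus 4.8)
The plan is to mimic the argument used above for $\Tthree$: the obstruction to finite witnessability is that every $\Tthreeone$-interpretation has an infinite domain of sort $\s$, whereas a witness would have to produce a model in which that domain coincides with the (necessarily finite) set of interpretations of the variables of sort $\s$ occurring in $\wit(\phi)$.

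Concretely, first I would fix a $\Tthreeone$-satisfiable quantifier-free formula $\phi$ --- a tautology such as $x=x$ suffices, since $\Tthreeone$ is non-contradictory (it has, e.g., an interpretation with $\s^{\A}$ countably infinite, $|\s_2^{\A}|=g(1)$, and $|\s_3^{\A}|=1$, which satisfies every axiom of $\Tthreeone$). Then, assuming for contradiction that $\wit$ is a witness for $\Tthreeone$ with respect to $\{\s,\s_2,\s_3\}$, clause $(i)$ gives that $\Exists{\overarrow{x}}\wit(\phi)$ is $\Tthreeone$-equivalent to $\phi$, hence $\Tthreeone$-satisfiable, hence $\wit(\phi)$ itself is $\Tthreeone$-satisfiable. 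Clause $(ii)$ then yields a $\Tthreeone$-interpretation $\A$ satisfying $\wit(\phi)$ with $\t^{\A}=\vars_{\t}(\wit(\phi))^{\A}$ for each $\t\in\{\s,\s_2,\s_3\}$; taking $\t=\s$, the right-hand side is a finite set (it has at most $|\vars_{\s}(\wit(\phi))|$ elements), so $\s^{\A}$ is finite --- contradicting that every $\Tthreeone$-interpretation satisfies all the conjuncts $\psi^{\s}_{\geq n}$ appearing in $\ax{\Tthreeone}$ and therefore has $\s^{\A}$ infinite.

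The ``and thus not strongly finitely witnessable'' clause needs only the observation that strong finite witnessability entails finite witnessability: clause $(ii')$ with $V=\emptyset$ (so that $\delta_V$ is the empty, trivially $\Tthreeone$-valid, arrangement) reduces to clause $(ii)$. There is essentially no obstacle in this proof; the only point requiring a small amount of care is confirming that $\wit(\phi)$ is genuinely $\Tthreeone$-satisfiable for the chosen $\phi$, which is why I would anchor $\phi$ to an explicit tautology and invoke the non-triviality of $\Tthreeone$ before applying clause $(ii)$.
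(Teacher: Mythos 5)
Your proof is correct and follows essentially the same route as the paper's: assume a witness exists, observe that clause $(ii)$ would force a $\Tthreeone$-interpretation $\A$ with $\s^{\A}=\vars_{\s}(\wit(\phi))^{\A}$ finite, contradicting that every $\Tthreeone$-interpretation has $\s^{\A}$ infinite; and strong finite witnessability fails a fortiori since it implies finite witnessability. You merely spell out details the paper leaves implicit, namely exhibiting a concrete $\Tthreeone$-interpretation to confirm the theory is consistent and invoking clause $(i)$ to pass from satisfiability of $\phi$ to satisfiability of $\wit(\phi)$.
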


\begin{proof}
    Obvious: if $\Tthree$ had a witness $\wit$, then for any $\Tthree$-satisfiable quantifier-free formula $\phi$ (such as a tautology) there would exist a $\Tthreeone$-interpretation $\A$ with $\s^{\A}=\vars_{\s}(\wit(\phi))^{\A}$, $\s_{2}^{\A}=\vars_{\s_{2}}(\wit(\phi))^{\A}$ and $\s_{3}^{\A}=\vars_{\s_{3}}(\wit(\phi))^{\A}$; this is not possible given that any $\Tthreeone$-interpretation $\A$ must have $\s^{\A}$ infinite.
\end{proof}

\begin{lemma}
    The $\Sigma_{3}$-theory $\Tthreeone$ is convex with respect to $\{\s,\s_{2},\s_{3}\}$.
\end{lemma}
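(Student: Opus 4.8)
The plan is to exploit two structural features of $\Tthreeone$: its sort $\s_{3}$ is \emph{trivial} (every $\Tthreeone$-interpretation has a one-element domain of sort $\s_{3}$), and its sorts $\s$ and $\s_{2}$ are \emph{unbounded} (there is a $\Tthreeone$-interpretation with $|\s^{\A}|=|\s_{2}^{\A}|=\aleph_{0}$ and $|\s_{3}^{\A}|=1$), while the signature is empty, so that interpretations are essentially determined by their cardinalities and by the arrangement they induce on any finite set of variables. Concretely, I would start from a conjunction of literals $\phi$ and variables $x_{1},y_{1},\ldots,x_{m},y_{m}$ of sorts in $\{\s,\s_{2},\s_{3}\}$ with $\vDash_{\Tthreeone}\phi\rightarrow\bigvee_{i=1}^{m}(x_{i}=y_{i})$, dispose of the trivial case where $\phi$ is $\Tthreeone$-unsatisfiable, and then peel off the easy disjuncts: if some pair $x_{i},y_{i}$ has sort $\s_{3}$ then $\vDash_{\Tthreeone}x_{i}=y_{i}$ outright (every $\Tthreeone$-model has a singleton $\s_{3}$-domain), so $\vDash_{\Tthreeone}\phi\rightarrow(x_{i}=y_{i})$ and we are done. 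Hence I may assume every $x_{i},y_{i}$ has sort $\s$ or $\s_{2}$.

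The core step is a single model construction. Let $\sim$ be the sort-respecting equivalence relation on $\vars(\phi)$ generated by the positive equality literals of $\phi$. Since $\phi$ is $\Tthreeone$-satisfiable and the signature is empty, $\sim$ refutes no disequality literal of $\phi$, and $\phi$ contains no disequality between two sort-$\s_{3}$ variables (such a literal would demand a $\s_{3}$-domain of size $\geq 2$). I would then build a $\Tthreeone$-interpretation $\B$ with $\s^{\B}$ and $\s_{2}^{\B}$ countably infinite and $\s_{3}^{\B}$ a singleton, interpreting each sort-$\s$ or sort-$\s_{2}$ variable of $\phi$ so that two of them receive equal values exactly when they are $\sim$-equivalent, each sort-$\s_{3}$ variable as the unique element of $\s_{3}^{\B}$, and all other variables arbitrarily; this is possible precisely because $\s$ and $\s_{2}$ are unbounded, and by the description of $\Tthreeone$'s models $\B$ is indeed a $\Tthreeone$-interpretation, which satisfies $\phi$. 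From $\vDash_{\Tthreeone}\phi\rightarrow\bigvee_{i=1}^{m}(x_{i}=y_{i})$ we get $\B\vDash x_{i}=y_{i}$ for some $i$; since $x_{i},y_{i}$ have sort $\s$ or $\s_{2}$, this forces $x_{i}\sim y_{i}$, i.e.\ the positive equalities of $\phi$ already chain $x_{i}$ to $y_{i}$, whence $\vDash_{\Tthreeone}\phi\rightarrow(x_{i}=y_{i})$. That is the required disjunct, and convexity with respect to $\{\s,\s_{2},\s_{3}\}$ follows.

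An alternative I would mention is that, once the disjuncts are reduced to ones over $\s$ and $\s_{2}$, the argument is exactly the one behind \Cref{Tthree is convex}, since the $\{\s,\s_{2}\}$-reducts of $\Tthreeone$-interpretations are precisely the $\Tthree$-interpretations; thus one could simply invoke Theorem~5 of~\cite{CADE}. I do not expect a serious obstacle here: the only points requiring care are the bookkeeping around the trivial sort $\s_{3}$ (that $\s_{3}$-equalities are $\Tthreeone$-valid while an $\s_{3}$-disequality makes any formula $\Tthreeone$-unsatisfiable) and checking that the constructed $\B$ is a genuine $\Tthreeone$-interpretation while realizing the arrangement dictated by $\sim$ — which is exactly where the unboundedness of $\s$ and $\s_{2}$ is used.
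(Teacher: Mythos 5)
Your proof is correct and takes a genuinely different route from the paper's. The paper handles the $\s_{3}$ disjuncts exactly as you do, but then literally strips every $\s_{3}$-literal from the cube $\phi$ (each such literal is either $\Tthreeone$-valid or makes $\phi$ $\Tthreeone$-unsatisfiable), notes that for the resulting $\{\s,\s_{2}\}$-cube $\phi'$ the entailments $\phi'\vDash_{\Tthreeone}\cdots$ and $\phi'\vDash_{\Tthree}\cdots$ coincide, and then invokes the already-proved convexity of $\Tthree$ (itself a citation to Theorem~5 of \cite{CADE}). Your primary argument is instead a direct, self-contained model construction: a $\Tthreeone$-interpretation $\B$ with countably infinite $\s$- and $\s_{2}$-domains and a one-element $\s_{3}$-domain that realizes exactly the arrangement $\sim$ generated by the positive equalities of $\phi$, from which the surviving disjunct can be read off syntactically. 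The ``alternative'' you mention is essentially the paper's approach. Your version buys self-containment at the cost of re-deriving inline the empty-signature convexity argument that the cited results package; the paper's version is shorter but more dependent on prior lemmas.

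One point to tighten: the definition of convexity does not require $x_{i},y_{i}$ to occur in $\phi$, so the step ``$\B\vDash x_{i}=y_{i}$ forces $x_{i}\sim y_{i}$'' is not quite immediate from ``interpret all other variables arbitrarily''. Either note that $x_{i}=y_{i}$ is trivially $\Tthreeone$-valid when $x_{i}$ and $y_{i}$ are the same variable, or — cleaner — choose the ``arbitrary'' values so that every sort-$\s$ or sort-$\s_{2}$ variable outside $\vars(\phi)$ gets a fresh element distinct from $\vars(\phi)^{\B}$ and from all other such variables (the countably infinite domains make this possible). Then $\B\vDash x_{i}=y_{i}$ really does force $x_{i},y_{i}\in\vars(\phi)$ and $x_{i}\sim y_{i}$. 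This is routine, but as written it is a small gap.
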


\begin{proof}
    Suppose that we have
    \[\phi\vDash_{\Tthreeone}\bigvee_{i=1}^{m}(x_{i}=y_{i})\vee\bigvee_{j=1}^{n}(u_{j}=v_{j})\vee\bigvee_{k=1}^{p}(w_{k}=z_{k}),\]
    for a conjunction of literals $\phi$, variables $x_{i}$ and $y_{i}$ of sort $\s$, $u_{j}$ and $v_{j}$ of sort $\s_{2}$, and $w_{k}$ and $z_{k}$ of sort $\s_{3}$. If there are indeed variables of sort $\s_{3}$ in the above disjunction we are done, since then $\phi\vDash_{\Tthreeone}w_{k}=z_{k}$ for each $1\leq k\leq p$. So we may safely assume there are no such variables: we can then remove the literals involving variables of sort $\s_{3}$ to obtain the cube $\phi^{\prime}$, as either $\phi$ is a contradiction (when we are done), or $\phi$ and $\phi^{\prime}$ are $\Tthreeone$-equivalent. But it is easy to see that
    \[\phi^{\prime}\vDash_{\Tthreeone}\bigvee_{i=1}^{m}(x_{i}=y_{i})\vee\bigvee_{j=1}^{n}(u_{j}=v_{j})\quad\text{and}\quad \phi^{\prime}\vDash_{\Tthree}\bigvee_{i=1}^{m}(x_{i}=y_{i})\vee\bigvee_{j=1}^{n}(u_{j}=v_{j})\]
    are equivalent, given the axiomatization of $\Tthreeone$: since $\Tthree$ is convex by \Cref{Tthree is convex}, we have that either $\phi^{\prime}\vDash_{\Tthree}x_{i}=y_{i}$ for some $1\leq i\leq m$, or $\phi^{\prime}\vDash_{\Tthree}u_{j}=v_{j}$ for some $1\leq j\leq n$. That way, either $\phi\vDash_{\Tthreeone}x_{i}=y_{i}$ or $\phi\vDash_{\Tthreeone}u_{j}=v_{j}$, proving $\Tthreeone$ is convex.
\end{proof}

\begin{lemma}
    The $\Sigma_{3}$-theory $\Tthreeone$ does not have the finite model property, and thus is not stably finite, with respect to $\{\s,\s_{2},\s_{3}\}$.
\end{lemma}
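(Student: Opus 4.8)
The plan is to exploit the fact that the axiomatization of $\Tthreeone$ forces every model to have an infinite domain of sort $\s$, which immediately kills the finite model property, and then to derive the failure of stable finiteness from the failure of the finite model property.

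First I would record the structural observation: every $\Tthreeone$-interpretation $\A$ has $|\s^{\A}|\geq\aleph_{0}$. This is immediate from the axiomatization, since each axiom contains $\psi^{\s}_{\geq n}$ as a conjunct, so a $\Tthreeone$-interpretation satisfies $\psi^{\s}_{\geq n}$ for every $n\in\mathbb{N}\setminus\{0\}$; hence $\s^{\A}$ is infinite. In particular, $\Tthreeone$ has no interpretation whose domain of sort $\s$ is finite. Next I would note that $\Tthreeone$ is consistent: it is satisfied, for instance, by an interpretation with a countably infinite domain of sort $\s$, a domain of sort $\s_{2}$ of cardinality $g(1)$, and a singleton domain of sort $\s_{3}$. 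Taking any $\Tthreeone$-satisfiable quantifier-free formula $\phi$ (e.g.\ a tautology $x=x$), the previous observation shows that no $\Tthreeone$-interpretation satisfies $\phi$ with all three domains finite, so $\Tthreeone$ does not have the finite model property with respect to $\{\s,\s_{2},\s_{3}\}$.

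For the second clause, if $\Tthreeone$ were stably finite with respect to $\{\s,\s_{2},\s_{3}\}$, then given the $\Tthreeone$-satisfiable $\phi$ above and a $\Tthreeone$-interpretation $\A$ satisfying it, stable finiteness would provide a $\Tthreeone$-interpretation $\B$ satisfying $\phi$ with $|\s^{\B}|,|\s_{2}^{\B}|,|\s_{3}^{\B}|<\aleph_{0}$, contradicting what was just shown. Hence $\Tthreeone$ is not stably finite either. There is essentially no obstacle here — the argument is a direct analogue of the one already used for $\Tthree$, and $\Tthreeone$ is designed to mirror it; the only minor points to check are the consistency of $\Tthreeone$ and the routine implication that stable finiteness entails the finite model property, both of which are straightforward.
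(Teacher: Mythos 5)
Your proposal is correct and matches the paper's argument, which simply observes that every $\Tthreeone$-interpretation has an infinite domain of sort $\s$ (forced by the conjunct $\psi^{\s}_{\geq n}$ in every axiom), so no satisfiable quantifier-free formula has a model with all domains finite, and the failure of stable finiteness follows since it implies the finite model property. Your version just spells out the consistency check and the implication in more detail than the paper's one-line proof.
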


\begin{proof}
    Obvious, as no $\Tthreeone$-interpretation $\A$ has $\s^{\A}$ finite.
\end{proof}

\begin{lemma}
    The $\Sigma_{3}$-theory $\Tthreeone$ does not have a computable minimal model function with respect to $\{\s,\s_{2},\s_{3}\}$.
\end{lemma}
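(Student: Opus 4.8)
The plan is to adapt the reduction used in \Cref{cmmf:tone} (equivalently, in \Cref{cmmf ttwo}) to the sort $\s_{2}$, exploiting the fact that in every $\Tthreeone$-interpretation the first sort is forced to be countably infinite and the third sort to be a singleton. First I would record the structure of the $\Tthreeone$-interpretations that was already described after the axiomatization: for every $\Tthreeone$-interpretation $\A$ one has $|\s^{\A}|=\aleph_{0}$, $|\s_{3}^{\A}|=1$, and $|\s_{2}^{\A}|\in\{g(k):k\in\mathbb{N}\setminus\{0\}\}\cup\{\aleph_{0}\}$, and conversely every such cardinality triple is realized by some $\Tthreeone$-interpretation, since $\Sigma_{3}$ is empty and $g$ is increasing (property \textbf{1} of \Cref{lem:g-exists}). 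From this, for each $k\in\mathbb{N}\setminus\{0\}$ the quantifier-free formula $\NNNEQ{u}{k}$, with the $u_{i}$ of sort $\s_{2}$, is $\Tthreeone$-satisfiable (take $\s_{2}^{\A}$ infinite), and since $g$ is unbounded (property \textbf{2}) the set $\{g(j):g(j)\geq k\}$ is non-empty. Because the $\s$- and $\s_{3}$-coordinates are the same across all $\Tthreeone$-interpretations, the $\leq$-minimal cardinality triples realizing $\NNNEQ{u}{k}$ form the singleton $\{(\aleph_{0},\,\min\{g(j):g(j)\geq k\},\,1)\}$, which by \Cref{alternative definition} equals $\minmod_{\Tthreeone,S}(\NNNEQ{u}{k})$ for $S=\{\s,\s_{2},\s_{3}\}$.

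Next, assuming for contradiction that $\Tthreeone$ has a computable minimal model function w.r.t.\ $S$, I would define $h:\mathbb{N}\setminus\{0\}\rightarrow\mathbb{N}\setminus\{0\}$ by $h(1)=2$ and $h(n+1)=\min\{q:(p,q,r)\in\minmod_{\Tthreeone,S}(\NNNEQ{u}{h(n)+1})\}$, with the $u_{i}$ of sort $\s_{2}$. By the previous paragraph this is well-defined (the minimum is the finite number $\min\{g(j):g(j)\geq h(n)+1\}$) and recursively computable, since once $h(n)$ is known the formula $\NNNEQ{u}{h(n)+1}$ can be written down algorithmically and $\minmod_{\Tthreeone,S}$ was assumed computable. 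I would then prove $h=g$ by induction on $n$: the base case is $h(1)=2=g(1)$ because $g(1)=1+f(1)=2$; and if $h(n)=g(n)$, then $h(n+1)=\min\{g(j):g(j)\geq g(n)+1\}$, and since $g$ is strictly increasing with $g(n+1)-g(n)=f(n+1)+1\geq 1$, every $j\leq n$ satisfies $g(j)\leq g(n)<g(n)+1$ while $g(n+1)\geq g(n)+1$, so the minimum is attained at $j=n+1$, i.e.\ $h(n+1)=g(n+1)$. Hence $g=h$ is computable, contradicting property \textbf{4} of \Cref{lem:g-exists}; therefore $\Tthreeone$ has no computable minimal model function w.r.t.\ $S$.

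The only point requiring care — rather than a genuine obstacle — is the justification that $\minmod_{\Tthreeone,S}(\NNNEQ{u}{k})$ is exactly that singleton: this hinges on the observation that the $\s$- and $\s_{3}$-components are constant ($\aleph_{0}$ and $1$) over all $\Tthreeone$-interpretations, so that minimality in the product order reduces to minimizing the $\s_{2}$-component over $\{g(j):g(j)\geq k\}\cup\{\aleph_{0}\}$, whose minimum is finite by unboundedness of $g$. Everything else is a routine transcription of the $\Tone$/$\Ttwo$ reductions, so an acceptable short write-up would simply state that the proof mimics that of \Cref{cmmf:tone}, referencing the sort $\s_{2}$ in place of $\s$ and carrying the constant values $|\s^{\A}|=\aleph_{0}$ and $|\s_{3}^{\A}|=1$ along.
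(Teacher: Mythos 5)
Your proposal is correct and follows essentially the same route as the paper's own proof: define $h(1)=2$ and $h(n+1)=\min\{q:(p,q,r)\in\minmod_{\Tthreeone,S}(\NNNEQ{u}{h(n)+1})\}$ for $u_{i}$ of sort $\s_{2}$, show $h=g$ by induction, and contradict the non-computability of $g$. Your write-up merely supplies more detail than the paper (in particular the explicit identification of $\minmod_{\Tthreeone,S}(\NNNEQ{u}{k})$ as the singleton $\{(\aleph_{0},\min\{g(j):g(j)\geq k\},1)\}$), which is a welcome elaboration rather than a deviation.
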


\begin{proof}
    Suppose, instead, that a minimal model function of $\Tthreeone$ is computable: we define a function $h:\mathbb{N}\setminus\{0\}\rightarrow\mathbb{N}\setminus\{0\}$ by making $h(1)=2$ and 
    \[h(n+1)=\min\{q : (p,q,r)\in \minmod_{\Tthreeone,S}(\NNNEQ{u}{h(n)+1})\},\]
    where $S=\{\s,\s_{2},\s_{3}\}$, the variables $u_{i}$ are of sort $\s_{2}$, and $\NNNEQ{u}{m}$ is $\Tthreeone$-satisfiable for every value of $m$;    $h$ is computable given our hypothesis on a minimal model function, which is computable and outputs a finite set. Of course this is contradictory, since $h(n)=g(n)$ for all $n\in\mathbb{N}\setminus\{0\}$, while $g$ is not computable.
\end{proof}

\subsection{\tp{$\Tseven$}{Tseven}}

$\Tseven$ is the $\Sigma_{s}$-theory with axiomatization
\[\psi_{\geq \bb(n+1)}\vee\bigvee_{i=1}^{n}(\psi_{=\bb(i)}\wedge\psi_{=}) : n\in\mathbb{N}\setminus\{0\}\},\]
where $\psi_{=}=\Forall{x}(s(x)=x)$.
Its models $\A$ have: either an infinite number of elements, and $s^{\A}$ any function; or a finite number $\bb(n)$ of elements, for $n\in\mathbb{N}\setminus\{0\}$, and $s^{\A}$ the identity.

\begin{lemma}\label{Tseven is SI}
    The $\Sigma_{s}$-theory $\Tseven$ is stably infinite with respect to its only sort.
\end{lemma}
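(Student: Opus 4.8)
The plan is to use the standard ``pad with fresh elements'' construction, exploiting the fact that the infinite $\Sigma_{s}$-interpretations are automatically $\Tseven$-interpretations, with no constraint at all on the interpretation of $s$.

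First I would fix an arbitrary quantifier-free, $\Tseven$-satisfiable formula $\phi$ and a $\Tseven$-interpretation $\A$ satisfying it; since stable infiniteness only asks for \emph{one} infinite model of $\phi$, it suffices to build such a model. I would take a countably infinite set $X$ disjoint from $\s_{1}^{\A}$ and define a $\Sigma_{s}$-interpretation $\B$ by: $\s_{1}^{\B}=\s_{1}^{\A}\cup X$; $x^{\B}=x^{\A}$ for every variable $x$; and $s^{\B}(a)=s^{\A}(a)$ for $a\in\s_{1}^{\A}$ while $s^{\B}(a)=a$ for $a\in X$ (any extension of $s^{\A}$ to $X$ would do equally well).

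Next I would verify the two required facts. For the first, $\B$ is a $\Tseven$-interpretation: $\s_{1}^{\B}$ is infinite, so $|\s_{1}^{\B}|\geq\bb(n+1)$ for every $n\in\mathbb{N}\setminus\{0\}$ (each $\bb(n+1)$ being a natural number), hence $\B$ satisfies every axiom $\psi_{\geq\bb(n+1)}\vee\bigvee_{i=1}^{n}(\psi_{=\bb(i)}\wedge\psi_{=})$ through its left disjunct. For the second, $\B\vDash\phi$: the atoms of $\phi$ are equalities between terms, each of which is obtained from a single variable by applying $s$ some number of times, and since $s^{\B}$ restricted to $\s_{1}^{\A}$ coincides with $s^{\A}$ and maps $\s_{1}^{\A}$ into itself, such a term evaluates identically under $\A$ and $\B$ (a trivial induction on the number of applications of $s$, using $x^{\B}=x^{\A}\in\s_{1}^{\A}$). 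Therefore every atom, and hence $\phi$ itself, has the same truth value in $\B$ as in $\A$.

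There is no real obstacle here; the only point that needs a moment's care is this last evaluation-of-terms step, which hinges on $\s_{1}^{\A}$ being closed under $s^{\A}$ and on $s^{\B}$ agreeing with $s^{\A}$ there. Putting it together, $\B$ is a $\Tseven$-interpretation with $|\s_{1}^{\B}|\geq\aleph_{0}$ that satisfies $\phi$, which is exactly what stable infiniteness with respect to the unique sort requires.
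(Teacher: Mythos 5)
Your proof is correct and follows essentially the same ``pad with an infinite set of fresh elements'' construction as the paper's. The only (harmless) difference is that you extend $s^{\A}$ to the new elements rather than redefining $s^{\B}$ as the identity everywhere as the paper does; your version is in fact slightly more robust, since it works uniformly whether $\A$ is finite or infinite, whereas the paper's choice implicitly relies on finite $\Tseven$-models interpreting $s$ as the identity.
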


\begin{proof}
    Take a quantifier-free formula $\phi$, a $\Tseven$-interpretation $\A$ that satisfies $\phi$, and an infinite set $A$ disjoint from $\s^{\A}$, and we define a $\Tseven$-interpretation $\B$ by making: $\s^{\B}=\s^{\A}\cup A$, which is then infinite; $s^{\B}(a)=a$ for all $a\in\s^{\B}$, so $\B$ is indeed a $\Tseven$-interpretation; and $x^{\B}=x^{\A}$ for all variables $x$ of sort $\s$, meaning $\B$ satisfies $\phi$.
\end{proof}

\begin{lemma}\label{Tseven is -SM}
    The $\Sigma_{s}$-theory $\Tseven$ is not smooth with respect to its only sort.
\end{lemma}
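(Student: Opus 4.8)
The plan is to refute the definition of smoothness directly, by exhibiting a single counterexample. Take $\phi$ to be a tautology, say $x = x$ for a variable $x$ of sort $\s$, and let $\A$ be a $\Tseven$-interpretation with $|\s^{\A}| = 1$. Such an $\A$ exists: on a one-element domain $s^{\A}$ is forced to be the identity, so $\psi_{=1}\wedge\psi_{=}$ holds in $\A$, and since $\bb(1) = 1$ this means every axiom $\psi_{\geq\bb(n+1)}\vee\bigvee_{i=1}^{n}(\psi_{=\bb(i)}\wedge\psi_{=})$ is satisfied through its $i = 1$ disjunct; thus $\A$ is a $\Tseven$-interpretation, and it satisfies $\phi$. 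Now set $\kappa(\s) = 2$, so that $\kappa(\s) = 2 \geq 1 = |\s^{\A}|$. To conclude that $\Tseven$ is not smooth with respect to its only sort, it suffices to show there is no $\Tseven$-interpretation $\B$ at all with $|\s^{\B}| = 2$ (in particular none satisfying $\phi$). By the description of the models of $\Tseven$, a finite $\Tseven$-interpretation has cardinality $\bb(n)$ for some $n \geq 1$, and $2$ is not of this form, since $\bb(1) = 1$ and $\bb(n) \geq \bb(2) = 4$ for $n \geq 2$; meanwhile every infinite $\Tseven$-interpretation has an infinite domain. Hence no $\Tseven$-interpretation has exactly two elements, and smoothness fails.

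There is essentially no obstacle here: this is the one-sorted analogue of the ``cardinality gap'' argument already used for the busy-beaver theories elsewhere in the paper (cf.\ the proof that $\Ttwo$ is not smooth). The only point worth recording is that it relies on $2$ lying in a gap of the range of $\bb$ on the positive integers, for which it is enough to know $\bb(1) = 1$ and $\bb(2) \geq 3$ --- standard facts consistent with the model description stated right after the axiomatization of $\Tseven$. If one prefers to avoid invoking a concrete value of $\bb$, the same counterexample can be run starting from any finite $\Tseven$-interpretation of cardinality $\bb(n)$ together with any $\kappa(\s)$ with $\bb(n) < \kappa(\s) < \bb(n+1)$; this interval is always nonempty because $\bb$ is strictly increasing on the positive integers, and no $\Tseven$-interpretation has a cardinality strictly between consecutive busy-beaver values.
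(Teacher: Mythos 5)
Your proof is correct and matches the paper's argument exactly: the paper's proof is the one-line observation that $\bb(1)=1$, $\bb(2)=4$, and $\bb$ is strictly increasing, so $\Tseven$ has a model of cardinality $1$ but none of cardinality $2$, contradicting smoothness. One small caveat on your closing remark: strict monotonicity of $\bb$ alone gives only $\bb(n+1)\geq\bb(n)+1$, so it does not by itself guarantee that the open interval $(\bb(n),\bb(n+1))$ is nonempty; the concrete gap between $\bb(1)=1$ and $\bb(2)\geq 3$ (or the fact that $\bb$ grows faster than any computable function) is what actually makes the counterexample available, and your primary argument already uses exactly that.
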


\begin{proof}
    Obvious, as $\bb(1)=1$, $\bb(2)=4$ and $\bb(n+1)>\bb(n)$ for all $n\in\mathbb{N}$, meaning $\Tseven$ has models of cardinality $1$, but none with cardinality $2$.
\end{proof}

\begin{lemma}\label{Tseven is -FW}
    The $\Sigma_{s}$-theory $\Tseven$ is not finitely witnessable, and thus not strongly finitely witnessable, with respect to its only sort.
\end{lemma}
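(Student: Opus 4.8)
The plan is to adapt the non--finite-witnessability argument already used for $\Tfour$, exploiting that the finite cardinalities of $\Tseven$-interpretations are precisely the values $\bb(m)$, $m\in\mathbb{N}\setminus\{0\}$.

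First I would assume, toward a contradiction, that $\Tseven$ is finitely witnessable and fix a computable witness $\wit$. From $\wit$ I build a total function $h:\mathbb{N}\to\mathbb{N}$ by $h(0)=0$, $h(1)=1$, and $h(n+1)=|\vars(\wit(\NNNEQ{x}{h(n)+1}))|$ for $n\geq 1$. This $h$ is computable, since $\NNNEQ{x}{h(n)+1}$ can be written down effectively from the number $h(n)$, $\wit$ is computable, and counting the variables of a formula is effective.

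The heart of the proof is to show by induction that $h(n)\geq\bb(n)$ for all $n$. The base cases hold because $\bb(0)=0$ and $\bb(1)=1$. For the step, assume $h(n)\geq\bb(n)$. The formula $\NNNEQ{x}{h(n)+1}$ is $\Tseven$-satisfiable (since $\Tseven$ is stably infinite by \Cref{Tseven is SI}, or simply because $\Tseven$ has infinite interpretations), hence by property $(i)$ of a witness so is $\Exists{\overarrow{x}}\wit(\NNNEQ{x}{h(n)+1})$, and therefore so is $\wit(\NNNEQ{x}{h(n)+1})$ itself. By property $(ii)$ there is a $\Tseven$-interpretation $\A$ satisfying $\wit(\NNNEQ{x}{h(n)+1})$, and thus $\Exists{\overarrow{x}}\wit(\NNNEQ{x}{h(n)+1})$ and $\NNNEQ{x}{h(n)+1}$ (for $\overarrow{x}=\vars(\wit(\NNNEQ{x}{h(n)+1}))\setminus\vars(\NNNEQ{x}{h(n)+1})$), with $\s^{\A}=\vars(\wit(\NNNEQ{x}{h(n)+1}))^{\A}$. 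Since $\A\vDash\NNNEQ{x}{h(n)+1}$ we get $|\s^{\A}|\geq h(n)+1\geq\bb(n)+1$. As $\A$ is a $\Tseven$-interpretation, $|\s^{\A}|$ is either infinite or equals $\bb(m)$ for some $m$; in either case $|\s^{\A}|>\bb(n)$ forces $|\s^{\A}|\geq\bb(n+1)$, using that $\bb$ is strictly increasing. Hence $h(n+1)=|\vars(\wit(\NNNEQ{x}{h(n)+1}))|=|\s^{\A}|\geq\bb(n+1)$, completing the induction.

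Finally, a total computable $h$ with $h(n)\geq\bb(n)$ for all $n$ contradicts Rad{\'o}'s theorem (\cite{Rado}): applying it to the computable function $n\mapsto h(n)+1$ gives $n_{0}$ with $\bb(n)\geq h(n)+1>h(n)\geq\bb(n)$ for $n\geq n_{0}$, which is absurd. This shows $\Tseven$ is not finitely witnessable, and since a strong witness is in particular a witness, $\Tseven$ is also not strongly finitely witnessable. I do not expect a genuine obstacle here, as this is a routine Busy Beaver diagonalization essentially identical to the $\Tfour$ case; the only step needing a line of justification is the implication ``$|\s^{\A}|>\bb(n)\Rightarrow|\s^{\A}|\geq\bb(n+1)$'', which is immediate from the description of the models of $\Tseven$.
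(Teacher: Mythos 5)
Your proof is correct and follows essentially the same Busy Beaver diagonalization used in the paper (and in its $\Tfour$ argument). One small nit: in the final chain you write $h(n+1)=|\vars(\wit(\cdot))|=|\s^{\A}|$, but distinct variables may share a value, so only $|\vars(\wit(\cdot))|\geq|\vars(\wit(\cdot))^{\A}|=|\s^{\A}|$ holds --- which is all the induction step needs.
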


\begin{proof}
    We proceed by contradiction. Start by assuming $\Tseven$ has a witness $\wit$: we define $h:\mathbb{N}\rightarrow\mathbb{N}$ by making $h(0)=0$, $h(1)=1$, and 
    \[h(n+1)=|\vars_{\s}(\wit(\NNNEQ{x}{h(n)+1}))|,\]
    for $n\geq 2$; this is a recursively computable function as constructing $\NNNEQ{x}{h(n)+1}$ can be done algorithmically once $h(n)+1$ is known, and $\wit$ is computable. 
    
    Now we prove, by induction, that $h(n)\geq\bb(n)$, what leads to a contradiction: that is obvious for $n=0$ and $n=1$; suppose then that $h(n)\geq \bb(n)$. Since $\NNNEQ{x}{h(n)+1}$ is $\Tseven$-satisfiable (by, say, an infinite $\Tseven$-interpretation), there is a $\Tseven$-interpretation $\A$ that satisfies $\NNNEQ{x}{h(n)+1}$ with $\vars_{\s}(\wit(\NNNEQ{x}{h(n)+1}))^{\A}=\s^{\A}$. Since $\A$ satisfies $\NNNEQ{x}{h(n)+1}$, it has at least $\bb(n)+1$ elements, and therefore at least $\bb(n+1)$ since $\A$ (as a $\Tseven$-interpretation) must have, if finite, $\bb(m)$ elements, for some $m\in\mathbb{N}\setminus\{0\}$;  then we get that
    \[h(n+1)=|\vars_{\s}(\wit(\NNNEQ{x}{h(n)+1}))|\geq |\vars_{\s}(\wit(\NNNEQ{x}{h(n)+1}))^{\A}|=|\s^{\A}|\geq \bb(n+1),\] as we wanted to show.
\end{proof}

\begin{lemma}
    The $\Sigma_{s}$-theory $\Tseven$ is convex with respect to its only sort.
\end{lemma}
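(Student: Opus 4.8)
The plan is to prove convexity via a congruence-closure construction. Let $\phi$ be a conjunction of $\Sigma_{s}$-literals with $\phi\vDash_{\Tseven}\bigvee_{i=1}^{m}(x_{i}=y_{i})$, the $x_{i},y_{i}$ being variables of the only sort $\s$. If $\phi$ is $\Tseven$-unsatisfiable there is nothing to prove, so fix a $\Tseven$-interpretation $\D_{0}\vDash\phi$, and recall (from the description of the models of $\Tseven$) that \emph{every} $\Sigma_{s}$-interpretation with infinite domain is a $\Tseven$-interpretation. Hence it suffices to show: if $\phi\not\vDash_{\Tseven}x_{i}=y_{i}$ for every $i$ — so that for each $i$ there is a $\Tseven$-interpretation $\B_{i}\vDash\phi$ with $x_{i}^{\B_{i}}\neq y_{i}^{\B_{i}}$ — then some infinite $\Sigma_{s}$-interpretation satisfies $\phi\wedge\bigwedge_{i=1}^{m}\neg(x_{i}=y_{i})$, contradicting the hypothesis.

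Assuming this, let $T$ be the finite, subterm-closed set consisting of the terms occurring in $\phi$ together with $x_{1},y_{1},\ldots,x_{m},y_{m}$; with the convention $s^{0}(v):=v$, each member of $T$ has the form $s^{k}(v)$ for a variable $v$. Let ${\sim}$ be the least equivalence relation on $T$ with $t_{1}\sim t_{2}$ whenever $(t_{1}=t_{2})$ is a conjunct of $\phi$, and with $s^{k+1}(v)\sim s^{l+1}(w)$ whenever $s^{k}(v)\sim s^{l}(w)$ and both $s^{k+1}(v),s^{l+1}(w)$ belong to $T$ — the usual congruence closure, sound because $s$ denotes a function. Soundness ($t_{1}\sim t_{2}$ implies $t_{1}^{\D}=t_{2}^{\D}$ for every $\Sigma_{s}$-interpretation $\D$ satisfying the equalities of $\phi$) gives, using $\D_{0}$, that no negated-equality conjunct $\neg(t_{1}=t_{2})$ of $\phi$ has $t_{1}\sim t_{2}$, and, using $\B_{i}$, that $x_{i}\not\sim y_{i}$ for every $i$.

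Now take the $\Sigma_{s}$-interpretation $\C$ with $\s^{\C}=(T/{\sim})\sqcup I$ for a fixed countably infinite set $I$, so $\C$ is infinite and thus a $\Tseven$-interpretation regardless of $s^{\C}$; set $v^{\C}=[v]_{\sim}$ for each variable $v\in T$ (and $v^{\C}\in I$ arbitrarily for the others), define $s^{\C}([t])=[s(t')]_{\sim}$ if $[t]$ has a representative $t'$ with $s(t')\in T$ (any such $t'$; the congruence clause makes the choice immaterial) and arbitrarily otherwise, and let $s^{\C}$ be arbitrary on $I$. An easy induction on term depth yields $t^{\C}=[t]_{\sim}$ for every $t\in T$, so $\C$ satisfies every conjunct of $\phi$ and $x_{i}^{\C}=[x_{i}]_{\sim}\neq[y_{i}]_{\sim}=y_{i}^{\C}$ for all $i$, which is the contradiction sought. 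The only delicate point — and the main obstacle — is the well-definedness of $s^{\C}$ on ${\sim}$-classes having several representatives whose $s$-successors lie in $T$, which is exactly what the congruence clause in the definition of ${\sim}$ ensures; everything else is routine, in particular the observation that the infinite padding set $I$ really does place $\C$ among the $\Tseven$-interpretations. (Equivalently, the same reasoning shows that $\phi$ entails a disjunction of variable-equalities over $\Tseven$ iff it does over the theory of all infinite $\Sigma_{s}$-interpretations — a finite $\Tseven$-interpretation, on which $s$ is the identity, extends to an infinite $\Sigma_{s}$-interpretation agreeing with it on all the relevant terms — so the claim also follows from the standard convexity of uninterpreted functions over infinite domains.)
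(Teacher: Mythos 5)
Your proof is correct. The paper gives no argument of its own here, deferring to Lemma 66 of the full (arXiv) version of the CADE companion paper, which it cites repeatedly for convexity of $\Sigma_{s}$-theories; a congruence-closure argument of exactly your shape is what one expects there. What makes your version especially clean is the specific observation that $\Tseven$ imposes no constraint whatsoever on $s$ once the domain is infinite, so the padding set $I$ can carry an arbitrary interpretation of $s$ and the resulting $\C$ is automatically a $\Tseven$-model; theories like $\Tnine$ or $\Tfourteen$, which require $s$ to be fixed-point-free on infinite models, would need a little more care in that step. Your parenthetical reduction --- that $\Tseven$-entailment of a disjunction of variable equalities from a cube coincides with entailment over infinite $\Sigma_{s}$-structures, because any finite $\Tseven$-model extends to an infinite one agreeing on all relevant terms --- is a valid and useful reformulation of the same fact.
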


\begin{proof}
    This proof is similar to that of Lemma $66$ in \cite{arxivCADE}.
\end{proof}

\begin{lemma}\label{Tseven is -FMP}
    The $\Sigma_{s}$-theory $\Tseven$ does not have the finite model property, and is thus not stably finite, with respect to its only sort.
\end{lemma}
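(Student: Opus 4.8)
The plan is to exhibit a single quantifier-free formula that separates the infinite models of $\Tseven$ from its finite ones. The natural candidate is $\phi := \neg(s(x)=x)$, which simply asserts that the interpretation of $s$ has a point that is not a fixed point.

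First I would verify that $\phi$ is $\Tseven$-satisfiable. Since any interpretation whose domain is infinite is a $\Tseven$-interpretation regardless of how $s$ is interpreted (the disjunct $\psi_{\geq\bb(n+1)}$ is satisfied by every infinite domain, for all $n$), I would take $\A$ with $\s^{\A}=\mathbb{Z}$ and $s^{\A}$ the successor function, which has no fixed point, and assign $x^{\A}$ arbitrarily; then $\A\vDash\phi$.

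Next I would show that no finite $\Tseven$-interpretation satisfies $\phi$. If $\B$ is a finite $\Tseven$-interpretation with $|\s^{\B}|=k$, then since the Busy Beaver function is unbounded there is an $n$ with $k<\bb(n+1)$; as $\B$ must satisfy the $n$-th axiom and cannot satisfy its first disjunct $\psi_{\geq\bb(n+1)}$, it satisfies $\bigvee_{i=1}^{n}(\psi_{=\bb(i)}\wedge\psi_{=})$, and in particular $\psi_{=}=\Forall{x}(s(x)=x)$ holds, so $s^{\B}$ is the identity. Hence $s^{\B}(x^{\B})=x^{\B}$ and $\B\not\vDash\phi$. This establishes that $\Tseven$ lacks the finite model property with respect to its only sort. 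Stable finiteness then fails as well: the interpretation $\A$ above satisfies $\phi$ but there is no finite $\Tseven$-interpretation satisfying $\phi$ at all, so a fortiori none with domain of cardinality at most $|\s^{\A}|$; alternatively one may invoke the general fact that stable finiteness implies the finite model property.

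There is no real obstacle here — the only point that needs a line of care is confirming that every finite $\Tseven$-interpretation is forced, via unboundedness of $\bb$, into a disjunct containing $\psi_{=}$, so that $s$ is the identity on it; everything else is immediate from the description of the models of $\Tseven$ already given above.
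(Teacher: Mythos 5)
Your proposal is correct and follows essentially the same route as the paper: the formula $\neg(s(x)=x)$ is satisfied by an infinite $\Tseven$-interpretation where $s$ acts as a successor function, while every finite $\Tseven$-interpretation is forced (by unboundedness of $\bb$) into a disjunct containing $\psi_{=}$ and hence interprets $s$ as the identity. Your extra step spelling out why finite models must satisfy $\psi_{=}$ is a welcome bit of care that the paper leaves implicit.
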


\begin{proof}
    Define the $\Tseven$-interpretation $\A$ with $\s^{\A}=\{a_{n} : n\in\mathbb{N}\}$ and $s^{\A}(a_{n})=a_{n+1}$: then, regardless of the value assigned to variables, $\A$ satisfies the quantifier-free formula $\phi$ given by $\neg(s(x)=x)$. Yet all finite $\Tseven$-interpretations satisfy $\Forall{x}(s(x)=x)$, so none of them can satisfy $\phi$.
\end{proof}

\begin{lemma}\label{Tseven is -CMMF}
    The $\Sigma_{s}$-theory $\Tseven$ does not have a computable minimal model function with respect to its only sort.
\end{lemma}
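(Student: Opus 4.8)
The plan is to argue by contradiction, in the same spirit as the proofs of \Cref{cmmf:tfour} and \Cref{ex:bbold}: I would assume that $\minmod_{\Tseven,\{\s\}}$ is computable and manufacture from it a computable function that dominates $\bb$ everywhere, which is impossible since $\bb$ eventually dominates every computable function (\cite{Rado}).

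The first step is to pin down what a minimal model function does on the cardinality formulas $\NNNEQ{x}{m}$. Recalling that a finite $\Sigma_{s}$-interpretation $\A$ is a $\Tseven$-interpretation precisely when $|\s^{\A}| = \bb(k)$ for some $k \geq 1$ (with $s^{\A}$ forced to be the identity), while every interpretation with infinite domain is a $\Tseven$-interpretation, the set $\textbf{Card}_{\Tseven,\{\s\}}(\NNNEQ{x}{m})$ is $\{\bb(k) : k\geq 1,\ \bb(k)\geq m\}\cup\{\aleph_{0}\}$. Since $\bb$ is unbounded and strictly increasing (as already used in \Cref{Tseven is -SM}), this set — a nonempty subset of the totally ordered $\N$ — has a finite least element, namely $\bb(k_{0})$ with $k_{0}=\min\{k\geq 1 : \bb(k)\geq m\}$. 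As the signature is one-sorted we identify the singleton output with a number, so by \Cref{alternative definition},
\[\minmod_{\Tseven,\{\s\}}\!\big(\NNNEQ{x}{m}\big)=\bb\big(\min\{k\geq 1 : \bb(k)\geq m\}\big),\]
and this is defined for every $m$ because $\NNNEQ{x}{m}$ is $\Tseven$-satisfiable by \Cref{Tseven is SI}.

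Next I would define $h:\mathbb{N}\setminus\{0\}\to\mathbb{N}\setminus\{0\}$ by $h(1)=\bb(1)=1$ and $h(n+1)=\minmod_{\Tseven,\{\s\}}(\NNNEQ{x}{h(n)+1})$. This $h$ is computable: given $h(n)$ one constructs $\NNNEQ{x}{h(n)+1}$ algorithmically and then applies the assumed-computable $\minmod_{\Tseven,\{\s\}}$. An induction on $n$ then yields $h(n)\geq\bb(n)$: the base case is immediate, and if $h(n)\geq\bb(n)$ then, writing $h(n+1)=\bb(k_{0})$ with $k_{0}=\min\{k : \bb(k)\geq h(n)+1\}$, we get $\bb(k_{0})\geq h(n)+1>\bb(n)$, so strict monotonicity of $\bb$ forces $k_{0}\geq n+1$ and hence $h(n+1)=\bb(k_{0})\geq\bb(n+1)$ (in fact the same computation shows $h(n)=\bb(n)$ throughout). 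Finally, a computable $h$ with $h(n)\geq\bb(n)$ for all $n$ is impossible: by Rad\'o's theorem there is $n_{0}$ with $\bb(n)\geq h(n)$ for all $n\geq n_{0}$, so $\bb$ agrees with $h$ from $n_{0}$ on and hence is computable, contradicting the non-computability of $\bb$.

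The only delicate point in this argument is the explicit description of $\minmod_{\Tseven,\{\s\}}$ on the formulas $\NNNEQ{x}{m}$ — in particular the claim that the minimal $\Tseven$-model of such a formula is finite — which is exactly where the unboundedness and strict monotonicity of $\bb$ enter; everything else is routine bookkeeping.
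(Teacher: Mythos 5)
Your proof is correct and takes essentially the same route as the paper, which handles $\Tseven$ by the same recursive construction used for $\Tfour$ in \Cref{cmmf:tfour}: define $h(n+1)$ from $\minmod_{\Tseven,\{\s\}}(\NNNEQ{x}{h(n)+1})$, show by induction (using that finite $\Tseven$-models have cardinality $\bb(k)$ and that $\bb$ is strictly increasing and unbounded) that $h$ dominates $\bb$, and contradict Rad\'o's theorem. Your explicit computation of $\minmod_{\Tseven,\{\s\}}(\NNNEQ{x}{m})=\bb(\min\{k:\bb(k)\geq m\})$ is a welcome piece of detail that the paper leaves implicit.
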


\begin{proof}
    This proof follows in the steps of \Cref{cmmf:tfour}.
\end{proof}

\subsection{\tp{$\Teight$}{Teight}}

$\Teight$ is the $\Sigma_{s}$-theory with axiomatization
\[(\psi_{\geq \bb(n+1)}\wedge\psi_{\vee})\vee\bigvee_{i=1}^{n}(\psi_{=\bb(i)}\wedge\psi_{=}) : n\in\mathbb{N}\setminus\{0\}\},\]
where $\psi_{\vee}=\Forall{x}((s^{2}(x)=x)\vee(s^{2}(x)=s(x))$ and $\psi_{=}=\Forall{x}(s(x)=x)$.
Its models $\A$ have: either an infinite number of elements, and $(s^{\A})^{2}(a)$ equal to either $s^{\A}(a)$ or $a$ itself; or a finite number $\bb(n)$ of elements, for $n\in\mathbb{N}\setminus\{0\}$, and $s^{\A}$ the identity. Notice that $\Teight$ is nothing but the result of applying the operator $(\cdot)_{\vee}$, understood in a generalized way, to the theory $\Tseven$: so it is somewhat clear that $\Teight$ shares all of the properties (or absence of them) with $\Tseven$, except for convexity.

\begin{lemma}
    The $\Sigma_{s}$-theory $\Teight$ is stably infinite with respect to its only sort.
\end{lemma}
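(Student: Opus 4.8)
The plan is to follow the template of \Cref{Tseven is SI}, observing that $\Teight$ differs from $\Tseven$ only in that its infinite models must additionally satisfy $\psi_{\vee}$, and that the identity function trivially satisfies $\psi_{\vee}$ (since $s^{2}(x)=x$ holds whenever $s$ denotes the identity). So I would begin with an arbitrary quantifier-free formula $\phi$ and a $\Teight$-interpretation $\A$ satisfying it. If $\s^{\A}$ is already infinite, there is nothing to do, so I would assume $\s^{\A}$ is finite; by the description of the models of $\Teight$ recalled just above, this forces $|\s^{\A}|=\bb(n)$ for some $n\in\mathbb{N}\setminus\{0\}$ and forces $s^{\A}$ to be the identity on $\s^{\A}$.

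Next I would enlarge $\A$: pick an infinite set $A$ disjoint from $\s^{\A}$ and define $\B$ by $\s^{\B}=\s^{\A}\cup A$, with $s^{\B}$ the identity on all of $\s^{\B}$ and $x^{\B}=x^{\A}$ for every variable $x$ of sort $\s$. One checks that $\B$ is a $\Teight$-interpretation: its domain is infinite, so $\psi_{\geq\bb(k+1)}$ holds for every $k$, and the identity satisfies $\psi_{\vee}$, so the first disjunct of every axiom of $\Teight$ is satisfied by $\B$. Finally, since $s^{\B}$ restricted to $\s^{\A}$ coincides with $s^{\A}$ (both being the identity) and $\B$ agrees with $\A$ on the variables occurring in $\phi$, a routine induction on term structure shows that each subterm of $\phi$ evaluates to the same element under $\A$ and under $\B$; hence $\A$ and $\B$ satisfy exactly the same atomic subformulas of $\phi$, and from $\A\vDash\phi$ we conclude $\B\vDash\phi$, with $\s^{\B}$ infinite.

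I do not anticipate a genuine obstacle: the proof is essentially identical to that of $\Tseven$, and the single point meriting care is that in the finite case $s^{\A}$ is pinned down to the identity, which is precisely what lets us adjoin fresh elements (on which $s$ again acts as the identity) without altering the truth value of $\phi$. A shorter alternative would be to note that $\Teight$ is obtained from $\Tseven$ by the generalized operator $\addnc{\cdot}$ --- conjoining $\psi_{\vee}$ to the axioms of $\Tseven$ only further constrains the infinite models, since the identity always satisfies $\psi_{\vee}$ --- and then invoke the fact, established in \cite{CADE,FroCoS}, that $\addnc{\cdot}$ preserves stable infiniteness together with \Cref{Tseven is SI}.
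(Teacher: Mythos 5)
Your proof is correct and takes essentially the same route as the paper, whose entire proof is ``Similar to the proof of \Cref{Tseven is SI}''; your explicit case split (infinite models need no modification, while finite models pin $s^{\A}$ to the identity, which also satisfies $\psi_{\vee}$) is in fact slightly more careful than the cited template, which glosses over the fact that one cannot safely replace a non-identity $s^{\A}$ on an infinite model. The only caveat concerns your ``shorter alternative'': the operator $\addnc{\cdot}$ and its preservation results from \cite{CADE,FroCoS} are only defined for empty-signature ($\Sigma_{n}$-)theories, so applying them to the $\Sigma_{s}$-theory $\Tseven$ is, as the paper itself notes in an analogous situation, an informal analogy rather than a citable fact --- your direct argument is the one to keep.
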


\begin{proof}
    Similar to the proof of \Cref{Tseven is SI}.
\end{proof}

\begin{lemma}
    The $\Sigma_{s}$-theory $\Teight$ is not smooth with respect to its only sort.
\end{lemma}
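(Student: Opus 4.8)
The plan is to reuse the argument of \Cref{Tseven is -SM} almost verbatim, since the finite $\Teight$-interpretations have exactly the same cardinalities as the finite $\Tseven$-interpretations, namely $\bb(n)$ for $n\in\mathbb{N}\setminus\{0\}$. First I would observe that the trivial $\Sigma_{s}$-interpretation $\A$ with $|\s^{\A}|=1$ (so that $s^{\A}$ is forced to be the identity) is a $\Teight$-interpretation: for $n=1$ the axiom $(\psi_{\geq\bb(2)}\wedge\psi_{\vee})\vee(\psi_{=\bb(1)}\wedge\psi_{=})$ is satisfied through its second disjunct, because $\bb(1)=1$ and $s^{\A}$ is the identity, so $\A\vDash\psi_{=1}\wedge\psi_{=}$; for $n\geq 2$ the disjunct $\psi_{=\bb(1)}\wedge\psi_{=}$ is still present among the $\bigvee_{i=1}^{n}$, so the axiom instance holds as well. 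In particular $\A$ satisfies the tautology $x=x$.

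Next I would show that $\Teight$ has no interpretation of cardinality $2$. Indeed, $\bb(1)=1$, $\bb(2)=4$, and $\bb$ is strictly increasing, so $2$ is not of the form $\bb(i)$; moreover a two-element interpretation satisfies neither $\psi_{\geq\bb(2)}=\psi_{\geq 4}$ nor $\psi_{=\bb(1)}=\psi_{=1}$, hence it fails the axiom instance for $n=1$. So no $\Teight$-interpretation can have exactly two elements.

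Finally I would take $\phi$ to be $x=x$, the interpretation $\A$ above, and the function $\kappa$ from $\{\s\}$ to the cardinals with $\kappa(\s)=2\geq 1=|\s^{\A}|$. Smoothness with respect to $\{\s\}$ would require a $\Teight$-interpretation $\B$ satisfying $\phi$ with $|\s^{\B}|=\kappa(\s)=2$, contradicting the previous paragraph. Hence $\Teight$ is not smooth with respect to its only sort. There is no real obstacle here: the only points needing (routine) care are checking that the trivial interpretation validates every axiom instance and that no two-element structure can, both of which are immediate from $\bb(1)=1$, $\bb(2)=4$, and monotonicity of $\bb$.
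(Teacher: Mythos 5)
Your proof is correct and follows the same route as the paper, which simply points to the argument for $\Tseven$: the theory has a model of cardinality $1$ but none of cardinality $2$ (since $\bb(1)=1$, $\bb(2)=4$, and $\bb$ is increasing), so smoothness fails already for the tautology $x=x$ with target cardinal $2$. Your extra verifications that the trivial interpretation satisfies every axiom instance and that no two-element structure does are exactly the routine checks the paper leaves implicit.
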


\begin{proof}
Similar to the proof of \Cref{Tseven is -SM}.
\end{proof}

\begin{lemma}
    The $\Sigma_{s}$-theory $\Teight$ is not finitely witnessable, and thus not strongly finitely witnessable, with respect to its only sort.
\end{lemma}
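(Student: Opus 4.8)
The plan is to mirror the proof of \Cref{Tseven is -FW}, exploiting the fact that the finite models of $\Teight$ have exactly $\bb(k)$ elements for $k\in\mathbb{N}\setminus\{0\}$, just as those of $\Tseven$ do. Concretely, I would first record two auxiliary observations. First, every finite $\Teight$-interpretation has cardinality $\bb(k)$ for some $k\geq 1$: this is read off the axiomatization, since in a finite interpretation the disjunct $\psi_{\geq\bb(n+1)}\wedge\psi_{\vee}$ of an axiom can only be satisfied for finitely many $n$, so some disjunct $\psi_{=\bb(i)}\wedge\psi_{=}$ must hold. Second, for every $m$ the formula $\NNNEQ{x}{m}$ is $\Teight$-satisfiable, e.g.\ by the interpretation with domain $\mathbb{N}$ and $s$ the identity, which satisfies $\psi_{\vee}$ since $s^{2}(x)=x$ and is therefore a $\Teight$-interpretation.

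Next I would assume, towards a contradiction, that $\Teight$ has a computable witness $\wit$, and define $h:\mathbb{N}\rightarrow\mathbb{N}$ by $h(0)=0$, $h(1)=1$, and $h(n+1)=|\vars_{\s}(\wit(\NNNEQ{x}{h(n)+1}))|$ for $n\geq 1$. Since $\NNNEQ{x}{h(n)+1}$ can be constructed algorithmically once $h(n)$ is known, and $\wit$ is computable, $h$ is recursively computable. Then I would prove by induction that $h(n)\geq\bb(n)$ for all $n$. The base cases are $h(0)=0=\bb(0)$ and $h(1)=1=\bb(1)$. For the step, assume $h(n)\geq\bb(n)$. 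Since $\NNNEQ{x}{h(n)+1}$ is $\Teight$-satisfiable, clause $(i)$ of finite witnessability gives $\Teight$-equivalence of $\NNNEQ{x}{h(n)+1}$ with $\Exists{\overarrow{x}}\wit(\NNNEQ{x}{h(n)+1})$, and clause $(ii)$ yields a $\Teight$-interpretation $\A$ satisfying $\wit(\NNNEQ{x}{h(n)+1})$ with $\s^{\A}=\vars_{\s}(\wit(\NNNEQ{x}{h(n)+1}))^{\A}$. In particular $\s^{\A}$ is finite (the formula has finitely many variables) and $\A$ satisfies $\NNNEQ{x}{h(n)+1}$, so $|\s^{\A}|\geq h(n)+1>\bb(n)$. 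By the first observation $|\s^{\A}|=\bb(k)$ for some $k$, and since $\bb$ is increasing and $\bb(k)>\bb(n)$, we get $k\geq n+1$, hence $|\s^{\A}|\geq\bb(n+1)$. Therefore $h(n+1)=|\vars_{\s}(\wit(\NNNEQ{x}{h(n)+1}))|\geq|\vars_{\s}(\wit(\NNNEQ{x}{h(n)+1}))^{\A}|=|\s^{\A}|\geq\bb(n+1)$, completing the induction.

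Finally, a computable $h$ with $h(n)\geq\bb(n)$ for all $n$ contradicts Rad{\'o}'s result \cite{Rado}: $\bb$ eventually dominates every computable function, so for large $n$ one would have $\bb(n)\geq h(n)\geq\bb(n)$, forcing $h$ to agree with $\bb$ cofinitely and making $\bb$ computable. This shows $\Teight$ is not finitely witnessable; since a strong witness is in particular a witness (instantiating clause $(ii')$ at the arrangement induced on $\vars_{\s}(\wit(\phi))$ by a satisfying interpretation recovers clause $(ii)$), $\Teight$ is also not strongly finitely witnessable. The only delicate point, and the step I expect to need the most care, is the finite-model case analysis: one must verify that the witnessing interpretation $\A$ is genuinely finite so that the first observation applies and pins $|\s^{\A}|$ to the next available $\bb$-value; this is exactly where the specific shape of $\ax{\Teight}$ is used rather than any generic feature of the theory. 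An alternative, shorter route would be to note that $\Teight$ is $\addnc{\Tseven}$ (in the generalized sense discussed above) and invoke the fact from \cite{CADE,FroCoS} that the operator $(\cdot)_{\vee}$ preserves the absence of finite witnessability, reducing the statement directly to \Cref{Tseven is -FW}.
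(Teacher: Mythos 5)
Your proposal is correct and takes essentially the same route as the paper: the paper proves this lemma with a single sentence, ``Similar to the proof of \Cref{Tseven is -FW},'' and your argument is precisely that adaptation spelled out---define a computable $h$ by iterating the witness on $\NNNEQ{x}{h(n)+1}$, use clause $(ii)$ to extract a finite $\Teight$-model whose cardinality must be some $\bb(k)$, conclude $h(n)\geq\bb(n)$ by induction, and contradict Rad{\'o}'s theorem. Your two auxiliary observations (finite $\Teight$-interpretations have cardinality $\bb(k)$; $\NNNEQ{x}{m}$ is $\Teight$-satisfiable via an infinite model with $s$ the identity, which validates $\psi_{\vee}$) are exactly the facts that make the $\Tseven$ argument transfer, and your remark that the ``delicate point'' is verifying that $\A$ is finite so that it snaps to the next $\bb$-value correctly identifies where the specific axiomatization of $\Teight$ enters. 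Your closing alternative---viewing $\Teight$ as $\addnc{\Tseven}$ in the generalized sense and invoking preservation---matches an informal remark the paper itself makes, though the paper does not rely on it for this lemma because the $\addnc{\cdot}$ operator is formally defined only for empty signatures; you correctly flag that caveat.
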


\begin{proof}
Similar to the proof of \Cref{Tseven is -FW}.
\end{proof}

\begin{lemma}
    The $\Sigma_{s}$-theory $\Teight$ is not convex with respect to its only sort.
\end{lemma}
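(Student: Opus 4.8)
The plan is to exhibit a conjunction of literals $\phi$ over variables of sort $\s$ together with two equalities $e_{1},e_{2}$ between variables such that $\vDash_{\Teight}\phi\rightarrow(e_{1}\vee e_{2})$ while neither $\vDash_{\Teight}\phi\rightarrow e_{1}$ nor $\vDash_{\Teight}\phi\rightarrow e_{2}$. The source of the disjunction is the axiom $\psi_{\vee}=\Forall{x}((s^{2}(x)=x)\vee(s^{2}(x)=s(x)))$, which holds in every $\Teight$-interpretation. First I would name the terms $s(x)$ and $s^{2}(x)$ by fresh variables, i.e.\ set $\phi:=(y=s(x))\wedge(z=s(y))$, so that any $\Teight$-interpretation $\A$ satisfying $\phi$ has $z^{\A}=(s^{\A})^{2}(x^{\A})$, and $\psi_{\vee}$ then forces $z^{\A}=x^{\A}$ or $z^{\A}=s^{\A}(x^{\A})=y^{\A}$. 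This yields $\vDash_{\Teight}\phi\rightarrow((z=x)\vee(z=y))$.

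Next I would refute each disjunct by producing two infinite $\Teight$-interpretations, using the fact that an infinite $\Sigma_{s}$-interpretation is automatically a $\Teight$-interpretation as soon as it satisfies $\psi_{\vee}$ (the $\bb$-cardinality conjuncts being vacuous in the infinite case). To falsify $z=x$, take $\A$ with $\s^{\A}=\mathbb{N}$ and $s^{\A}$ constantly equal to $1$, so that $(s^{\A})^{2}(a)=s^{\A}(a)$ everywhere and $\psi_{\vee}$ holds; with $x^{\A}=0$, $y^{\A}=s^{\A}(0)=1$, $z^{\A}=s^{\A}(1)=1$, this satisfies $\phi$ but $z^{\A}\neq x^{\A}$. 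To falsify $z=y$, take $\B$ with $\s^{\B}=\mathbb{N}$ and $s^{\B}$ the involution swapping $2k$ with $2k+1$, so $(s^{\B})^{2}=\mathrm{id}$ and $\psi_{\vee}$ holds; with $x^{\B}=0$, $y^{\B}=s^{\B}(0)=1$, $z^{\B}=s^{\B}(1)=0$, this satisfies $\phi$ but $z^{\B}\neq y^{\B}$. Putting these together gives non-convexity with respect to $\{\s\}$. As an alternative, since $\Teight$ coincides with $\addnc{\Tseven}$ (the axiomatization is $\ax{\Tseven}$ together with $\psi_{\vee}$, using the operator $\addnc{\cdot}$ in its generalized form for signatures with function symbols), non-convexity would follow immediately from the general fact that theories of the form $\addnc{\T}$ are never convex.

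The only step that needs genuine attention is checking that the two witnessing structures really are $\Teight$-interpretations, which amounts to verifying that $\psi_{\vee}$ holds pointwise in each of them and that they are infinite so the remaining conjuncts impose nothing; everything else is routine bookkeeping, and I do not anticipate a real obstacle.
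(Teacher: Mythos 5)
Your proposal is correct and follows essentially the same route as the paper: the same cube $(y=s(x))\wedge(z=s(y))$, the same $\Teight$-valid disjunction $(x=z)\vee(y=z)$ derived from $\psi_{\vee}$, and two infinite models of $\Teight$ (one with $(s)^{2}=\mathrm{id}$, one with $(s)^{2}=s$) refuting the individual disjuncts, just as in the paper's proof. The only cosmetic differences are your choice of a constant map where the paper uses $s(n)=n+1$ on evens and $s(n)=n$ on odds (both satisfy $s^{2}=s$), and your unified observation that $\psi_{\vee}$ holds in all $\Teight$-models where the paper splits into the finite and infinite cases.
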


\begin{proof}
    It is clear that 
    \[(y=s(x))\wedge(z=s(y))\vDash_{\Teight}(x=z)\vee(y=z)\]
    since: if $\A$ is a finite $\Teight$-interpretation it satisfies $\Forall{x}(s(x)=x)$ and therefore both $x=z$ and $y=z$ are satisfied; or, if $\A$ is an infinite $\Teight$-interpretation, it satisfies $\psi_{\vee}$.

    However, take the $\Teight$-interpretations $\A$ and $\B$ with $\s^{\A}=\s^{\B}=\mathbb{N}$;
    \[s^{\A}(n)=\begin{cases}
        n+1 & \text{if $n$ is even;}\\
        n-1 & \text{if $n$ is odd;}\\
    \end{cases}\quad
    s^{\B}(n)=\begin{cases}
        n+1 & \text{if $n$ is even;}\\
        n & \text{if $n$ is odd;}\\
    \end{cases}\]
    and $x^{\A}=x^{\B}=z^{\A}=0$, and $y^{\A}=y^{\B}=z^{\B}=1$. Then $\A$ does not satisfy $y=z$, while $\B$ does not satisfy $x=z$, meaning that neither
    \[(y=s(x))\wedge(z=s(y))\vDash_{\Teight}(x=z)\quad\text{nor}\quad(y=s(x))\wedge(z=s(y))\vDash_{\Teight}(y=z).\]
\end{proof}

\begin{lemma}
    The $\Sigma_{s}$-theory $\Teight$ does not have the finite model property, and is thus not stably finite, with respect to its only sort.
\end{lemma}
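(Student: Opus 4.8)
The plan is to produce a quantifier-free formula that is $\Teight$-satisfiable but has no finite $\Teight$-model, exactly as in the proof of \Cref{Tseven is -FMP}, but with a different witnessing interpretation. The natural candidate is $\varphi := \neg(s(x)=x)$, asserting that $s$ has no fixed point. The subtlety is that the infinite interpretation used for $\Tseven$ — domain $\mathbb{N}$ with $s$ the successor map — is \emph{not} a $\Teight$-interpretation, since there $(s(x))$ iterated twice gives neither $x$ nor $s(x)$, so $\psi_{\vee}$ fails. Hence a new witness is needed.

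First I would exhibit an infinite $\Teight$-interpretation $\A$ satisfying $\varphi$: take $\s^{\A}=\mathbb{N}$ and let $s^{\A}$ be the involution that swaps $2k$ and $2k+1$ for every $k$. Then $(s^{\A})^{2}$ is the identity, so $\A$ satisfies $\psi_{\vee}$ (via the disjunct $s^{2}(x)=x$); since $\s^{\A}$ is infinite, $\A$ also satisfies $\psi_{\geq\bb(n+1)}$ for every $n$, so by the characterization of $\Teight$-models recalled above $\A$ is indeed a $\Teight$-interpretation. Because $s^{\A}$ is fixed-point-free, $\A\vDash\varphi$ under any assignment to $x$. Next I would show that no finite $\Teight$-interpretation $\B$ satisfies $\varphi$: every finite $\Teight$-interpretation satisfies $\psi_{=}=\Forall{x}(s(x)=x)$, hence $s^{\B}(x^{\B})=x^{\B}$ and $\B\not\vDash\varphi$. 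Therefore $\varphi$ is $\Teight$-satisfiable with no finite $\Teight$-model, i.e.\ $\Teight$ lacks the finite model property with respect to $\{\s\}$.

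Finally, the ``thus not stably finite'' part follows formally, since stable finiteness implies the finite model property: were $\Teight$ stably finite with respect to $\{\s\}$, applying the definition to $\varphi$ and the infinite interpretation $\A$ above would yield a finite $\Teight$-interpretation satisfying $\varphi$, contradicting the previous paragraph. (Alternatively, one may observe that $\Teight$ is $\addnc{\Tseven}$ in the generalized sense of \Cref{def:operators} and invoke the absence-of-FMP preservation of that operator from \cite{CADE,FroCoS} together with \Cref{Tseven is -FMP}; but the direct argument is self-contained.) The only real point to check is that the chosen infinite interpretation genuinely lies in $\Teight$ — concretely, that fixed-point-freeness of $s$ is compatible with $\psi_{\vee}$ and with infinitude — which the two-cycle involution settles; everything else is routine.
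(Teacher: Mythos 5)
Your proof is correct and takes the same route the paper intends (the paper's own proof merely says it is ``similar to'' the one for $\Tseven$): exhibit an infinite fixed-point-free $\Teight$-model of $\neg(s(x)=x)$ and observe that every finite $\Teight$-interpretation satisfies $\psi_{=}$ and hence cannot satisfy that formula. Your remark that the successor map used for $\Tseven$ must be replaced by a fixed-point-free involution so that $\psi_{\vee}$ holds is exactly the adaptation the paper leaves implicit, and your two-cycle construction settles it.
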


\begin{proof}
Similar to the proof of \Cref{Tseven is -FMP}.
\end{proof}

\begin{lemma}
    The $\Sigma_{s}$-theory $\Tseven$ does not have a computable minimal model function with respect to its only sort.
\end{lemma}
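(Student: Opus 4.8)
The plan is to replay the argument for $\Tfour$ from \Cref{cmmf:tfour}: assume for contradiction that $\minmod_{\Tseven,S}$ is computable, where $S=\{\s\}$, and extract from it an algorithm computing the Busy Beaver function $\bb$, contradicting \cite{Rado}.

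First I would record what $\minmod_{\Tseven,S}$ does on the formulas $\NNNEQ{x}{k}$ (with the $x_i$ of sort $\s$), which assert $|\s^{\A}|\geq k$. A $\Tseven$-interpretation $\A$ either has $\s^{\A}$ infinite, or has $|\s^{\A}|=\bb(m)$ for some $m\in\mathbb{N}\setminus\{0\}$ (with $s^{\A}$ the identity in the finite case). Hence a finite $\Tseven$-interpretation satisfies $\NNNEQ{x}{k}$ exactly when its cardinality is some $\bb(m)\geq k$; since $\bb$ is unbounded and, as already noted in the proof of \Cref{Tseven is -SM}, strictly increasing, the least such cardinality is $\bb(m_0)$ where $m_0$ is minimal with $\bb(m_0)\geq k$. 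Thus $\minmod_{\Tseven,S}(\NNNEQ{x}{k})=\{\,\min\{\bb(m):\bb(m)\geq k\}\,\}$, a singleton.

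Next I would define $h:\mathbb{N}\setminus\{0\}\rightarrow\mathbb{N}\setminus\{0\}$ by $h(1)=1$ and, for $n\geq 1$, letting $h(n+1)$ be the unique element of $\minmod_{\Tseven,S}(\NNNEQ{x}{h(n)+1})$. This is legitimate: $\NNNEQ{x}{h(n)+1}$ is $\Tseven$-satisfiable (witnessed by any infinite $\Tseven$-interpretation), the formula can be written down algorithmically from $h(n)$, and $\minmod_{\Tseven,S}$ is computable by assumption and outputs a finite set; so $h$ is recursively computable. An induction on $n$ then gives $h(n)=\bb(n)$: the base case is $h(1)=1=\bb(1)$, and if $h(n)=\bb(n)$ then, by the evaluation of the previous paragraph, $h(n+1)=\min\{\bb(m):\bb(m)\geq\bb(n)+1\}=\bb(n+1)$, using again that $\bb$ is strictly increasing. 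Hence $\bb$ would be computable, a contradiction.

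The one point that needs more than a routine transcription of the $\Tfour$ proof is the explicit evaluation of $\minmod_{\Tseven,S}$ on $\NNNEQ{x}{k}$ in the second paragraph: one must check that the minimal model of $\NNNEQ{x}{k}$ is the finite $\Tseven$-interpretation of cardinality $\bb(m_0)$ and is not forced to be infinite, which relies on the precise description of $\Tseven$-interpretations together with the facts $\bb(1)=1$ and $\bb$ strictly increasing and unbounded. Everything else follows the pattern of \Cref{cmmf:tfour}.
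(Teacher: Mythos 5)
Your proposal is correct and matches the paper's proof, which simply states that the argument "follows in the steps of" the analogous proof for $\Tfour$ (defining $h$ recursively via $\minmod_{\Tseven,\{\s\}}(\NNNEQ{x}{h(n)+1})$ and showing $h=\bb$ by induction). Your explicit evaluation of $\minmod_{\Tseven,\{\s\}}$ on $\NNNEQ{x}{k}$ is a correct filling-in of detail that the paper leaves implicit.
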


\begin{proof}
Similar to the proof of \Cref{Tseven is -CMMF}.
\end{proof}

\subsection{\tp{$\Tnine$}{Tnine}}

$\Tnine$ is the $\Sigma_{s}$-theory with axiomatization
\[\{(\psi_{\geq2\bb(n+1)}\wedge\psi_{\neq})\vee\bigvee_{i=2}^{n}(\psi_{=2\bb(i)}\wedge\psi_{\neq}\wedge\psi_{=}^{2})\vee\psi_{=1} : n\in\mathbb{N}\setminus\{0,1\}\},\]
where $\psi_{\neq}=\Forall{x}\neg(s(x)=x)$ and $\psi_{=}^{2}=\Forall{x}(s^{2}(x)=x)$. The models $\A$ of $\Tnine$ have either: $|\s^{\A}|=1$, and $s^{\A}$ the identity; an infinite number of elements, and $s^{\A}(a)\neq a$ for all $a\in\s^{\A}$; or a finite number $2\bb(n)$ of elements, for $n\in\mathbb{N}\setminus\{0,1\}$, $s^{\A}(a)\neq a$ for all $a\in\s^{\A}$, but $(s^{\A})^{2}(a)=a$. Notice we use $2\bb(n)$ for the cardinalities of (most of) the finite $\Tnine$-interpretations because the facts that $s^{\A}(a)\neq a$ and that $(s^{\A})^{2}(a)=a$ imply that $\s^{\A}$ must have an even number of elements.

\begin{lemma}
    The $\Sigma_{s}$-theory $\Tnine$ is not stably infinite, and thus not smooth, with respect to its only sort.
\end{lemma}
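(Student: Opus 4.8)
The plan is to establish the non-stable-infiniteness of $\Tnine$ directly, and then obtain non-smoothness for free. To show that $\Tnine$ is not stably infinite with respect to its only sort, I would exhibit a single quantifier-free, $\Tnine$-satisfiable formula that admits no infinite $\Tnine$-interpretation. The natural candidate is the atomic formula $\phi := s(x)=x$, with $x$ of sort $\s$.

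First I would check that $\phi$ is $\Tnine$-satisfiable. Consider the trivial $\Tnine$-interpretation $\A$ with $|\s^{\A}|=1$; here $s^{\A}$ is forced to be the identity, and $\A$ satisfies $\psi_{=1}$, which occurs as one of the disjuncts of every axiom of $\Tnine$, so $\A$ is indeed a $\Tnine$-interpretation, and it clearly satisfies $\phi$. Next I would show that no infinite $\Tnine$-interpretation satisfies $\phi$: by the description of the models of $\Tnine$, every infinite $\Tnine$-interpretation $\A$ satisfies $\psi_{\neq}=\Forall{x}\neg(s(x)=x)$, i.e. $s^{\A}$ has no fixed point, so $\A$ cannot satisfy $s(x)=x$ under any assignment to $x$. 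Since $\phi$ is $\Tnine$-satisfiable but satisfied by no $\Tnine$-interpretation with an infinite domain, $\Tnine$ is not stably infinite with respect to $\{\s\}$.

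For the ``thus not smooth'' part, I would invoke the standard fact that smoothness with respect to a set $S$ of sorts implies stable infiniteness with respect to $S$: given a quantifier-free $\Tnine$-satisfiable formula and a $\Tnine$-interpretation satisfying it, applying the definition of smoothness with the cardinal function $\kappa$ constantly equal to $\aleph_{0}$ (which dominates every domain cardinality) would yield a $\Tnine$-interpretation satisfying the same formula with every domain of size $\aleph_{0}$; since we have just seen that this fails for $\phi$, $\Tnine$ cannot be smooth with respect to $\{\s\}$.

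There is essentially no obstacle here: the only point requiring a moment's attention is confirming that the trivial interpretation genuinely lies in $\Tnine$, which is immediate because $\psi_{=1}$ appears as a disjunct in each of the axioms indexed by $n\in\mathbb{N}\setminus\{0,1\}$.
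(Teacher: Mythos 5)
Your proof is correct and takes essentially the same approach as the paper: both use the formula $s(x)=x$, satisfied only by the trivial $\Tnine$-interpretation (since all others satisfy $\psi_{\neq}$), hence no infinite model satisfies it. The derivation of non-smoothness from non-stable-infiniteness is also the standard route the paper implicitly relies on.
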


\begin{proof}
    Obvious, as the $\Tnine$-interpretation $\A$ with $|\s^{\A}|=1$ satisfies $s(x)=x$, while no other $\Tnine$-interpretation satisfies that formula.
\end{proof}

\begin{lemma}
    The $\Sigma_{s}$-theory $\Tnine$ is not finitely witnessable, and thus not strongly finitely witnessable, with respect to its only sort.
\end{lemma}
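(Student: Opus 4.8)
The plan is to reuse, almost verbatim, the scheme that handles the non–finite-witnessability of $\Tfour$, $\Tfive$, and $\Tseven$: assume toward a contradiction that $\Tnine$ has a computable witness $\wit$, and milk it for a computable function that eventually dominates $2\bb$, which is impossible by Rad{\'o}'s theorem \cite{Rado}. First I would record two trivial facts. (a) For every $k\in\mathbb{N}$ the formula $\NNNEQ{x}{k}$ is $\Tnine$-satisfiable, since $\Tnine$ has infinite interpretations (any countable domain with $s^{\A}$ fixed-point-free is a $\Tnine$-interpretation), so clauses $(i)$ and $(ii)$ of the witness definition apply to it. (b) Because $\Sigma_{s}$ is one-sorted and $\vars(\wit(\phi))$ is finite, \emph{any} $\Tnine$-interpretation $\A$ with $\s^{\A}=\vars(\wit(\phi))^{\A}$ must be finite, hence $|\s^{\A}|\in\{1\}\cup\{2\bb(m):m\geq 2\}$.

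Next I would define $h:\mathbb{N}\to\mathbb{N}$ by hard-coding $h(2):=2\bb(2)$ and setting, for $n\geq 2$, $h(n+1):=|\vars(\wit(\NNNEQ{x}{h(n)+1}))|$; this is computable because $\wit$ is. Fix $n\geq 2$. By (a) and clause $(ii)$ there is a $\Tnine$-interpretation $\A$ with $\A\vDash\wit(\NNNEQ{x}{h(n)+1})$ and $\s^{\A}=\vars(\wit(\NNNEQ{x}{h(n)+1}))^{\A}$; by clause $(i)$, $\A$ then satisfies $\NNNEQ{x}{h(n)+1}$ as well, so $|\s^{\A}|\geq h(n)+1$, and clearly $h(n+1)\geq|\s^{\A}|$. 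By (b), and since $|\s^{\A}|\geq h(n)+1\geq 2$, we get $|\s^{\A}|=2\bb(m)$ for some $m\geq 2$. Now an induction gives $h(n)\geq 2\bb(n)$ for all $n\geq 2$: the base case is the definition of $h(2)$, and if $h(n)\geq 2\bb(n)$ then $2\bb(m)=|\s^{\A}|\geq h(n)+1>2\bb(n)$, hence $\bb(m)>\bb(n)$, hence $m\geq n+1$ (as $\bb$ is strictly increasing), hence $h(n+1)\geq 2\bb(m)\geq 2\bb(n+1)$. Finally, applying Rad{\'o}'s theorem to the computable function $n\mapsto h(n)+1$ produces an $n_{0}$ with $\bb(n)\geq h(n)+1>h(n)\geq 2\bb(n)\geq\bb(n)$ for all $n\geq n_{0}$, a contradiction. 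Hence $\Tnine$ is not finitely witnessable; and since every strong witness is in particular a witness (instantiate clause $(ii')$ with the empty set of variables, so that $\delta_{V}$ is the empty conjunction), $\Tnine$ is also not strongly finitely witnessable.

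The only step that needs care is the inductive step, i.e.\ upgrading ``$|\s^{\A}|$ is a $\Tnine$-cardinality exceeding $h(n)$'' to ``$|\s^{\A}|\geq 2\bb(n+1)$''. This is exactly where the design of $\Tnine$ matters: its finite cardinalities form the sparse set $\{2\bb(m):m\geq 2\}$, so the comparison reduces to $\bb(m)$ versus $\bb(n)$ and goes through by strict monotonicity of $\bb$ — but only if the induction invariant is taken to be $h(n)\geq 2\bb(n)$ (with the matching base value $h(2)=2\bb(2)$) rather than the naive $h(n)\geq\bb(n)$, which would not propagate. Getting that invariant right is the main, and essentially the only, obstacle; everything else is bookkeeping identical to the $\Tfour/\Tfive/\Tseven$ proofs.
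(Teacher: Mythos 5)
Your proof is correct and takes the same approach as the paper's own argument: assume a computable witness exists, iterate it on cardinality formulas to obtain a computable function $h$ with $h(n)\geq 2\bb(n)$, and contradict Rad\'o's theorem. The paper's proof is a one-paragraph sketch that defers the induction to the reader; your version supplies exactly the details it omits, and you correctly identify the $2\bb(n)$ (rather than $\bb(n)$) invariant as the only point requiring care.
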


\begin{proof}
    Assume instead that $\Tnine$ actually has a witness, and then define $h:\mathbb{N}\rightarrow\mathbb{N}$ by making $h(0)=0$, $h(1)=2$, and 
    \[h(n+1)=|\vars_{\s}(\wit(\NNNEQ{x}{h(n)+1}))|,\]
    for $n\geq 2$: this is a computable function as $\wit$ is itself computable. But it is easy to prove, by induction, that $h(n)\geq2\bb(n)$, what leads to a contradiction.
\end{proof}

\begin{lemma}
    The $\Sigma_{s}$-theory $\Tnine$ is convex with respect to its only sort.
\end{lemma}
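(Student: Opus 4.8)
The plan is to prove convexity directly, by a congruence‑closure argument in the style of the $\Tseven$ proof, after separating off the trivial one‑element model. Let $\phi$ be a conjunction of $\Sigma_{s}$‑literals and suppose $\vDash_{\Tnine}\phi\to\bigvee_{i=1}^{n}(x_{i}=y_{i})$; I want to exhibit an index $i$ with $\vDash_{\Tnine}\phi\to(x_{i}=y_{i})$. First I would dispose of the easy case: if every $\Tnine$‑interpretation that satisfies $\phi$ is trivial (one‑element domain) — which also covers the case where $\phi$ is $\Tnine$‑unsatisfiable — then in each such interpretation all variables receive the same value, so $x_{1}^{\A}=y_{1}^{\A}$, and therefore $\vDash_{\Tnine}\phi\to(x_{1}=y_{1})$. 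So from now on I may fix a \emph{non-trivial} $\Tnine$‑interpretation $\B_{0}$ with $\B_{0}\vDash\phi$; note $\B_{0}\vDash\psi_{\neq}$, i.e. $s^{\B_{0}}$ has no fixed point.

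Next I would set up the congruence. Let $V=\vars(\phi)\cup\{x_{1},y_{1},\dots,x_{n},y_{n}\}$, let $T=\{s^{k}(u): u\in V,\ k\in\mathbb{N}\}$ be the set of all $\Sigma_{s}$‑terms over $V$ (with $s^{0}(u)=u$), and let ${\sim}$ be the least equivalence relation on $T$ that contains every pair $(t_{1},t_{2})$ for which $(t_{1}=t_{2})$ is a conjunct of $\phi$ and is closed under $s$ (if $t_{1}\sim t_{2}$ then $s(t_{1})\sim s(t_{2})$). Three observations are then immediate: (i) for \emph{every} $\Tnine$‑interpretation $\A$ with $\A\vDash\phi$ and all $t_{1},t_{2}\in T$, $t_{1}\sim t_{2}$ implies $t_{1}^{\A}=t_{2}^{\A}$, because the relation $\{(t_{1},t_{2}): t_{1}^{\A}=t_{2}^{\A}\}$ is an equivalence relation containing the generators of ${\sim}$ and closed under $s$ (as $s^{\A}$ is a function); (ii) no conjunct $\neg(t_{1}=t_{2})$ of $\phi$ has $t_{1}\sim t_{2}$, since otherwise $\B_{0}$ would both satisfy and falsify $t_{1}=t_{2}$; and (iii) no term $t$ satisfies $s(t)\sim t$, since otherwise $s^{\B_{0}}(t^{\B_{0}})=t^{\B_{0}}$, contradicting $\B_{0}\vDash\psi_{\neq}$.

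Then I would build a ``generic'' $\Tnine$‑model of $\phi$. Put $\s_{1}^{\A^{*}}=(T/{\sim})\sqcup\mathbb{N}$, interpret $s^{\A^{*}}$ on $T/{\sim}$ by $[t]\mapsto[s(t)]$ (well defined by closure of ${\sim}$ under $s$) and on $\mathbb{N}$ as an arbitrary fixed‑point‑free involution, and interpret $u^{\A^{*}}=[u]$ for $u\in V$ (and arbitrarily otherwise). A one‑line induction on $k$ gives $s^{k}(u)^{\A^{*}}=[s^{k}(u)]$. By (iii) and the choice on $\mathbb{N}$, $s^{\A^{*}}$ has no fixed point, and $\s_{1}^{\A^{*}}$ is infinite, so $\A^{*}$ satisfies the first disjunct $\psi_{\geq 2\bb(n+1)}\wedge\psi_{\neq}$ of each axiom of $\Tnine$; hence $\A^{*}$ is a $\Tnine$‑interpretation. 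It satisfies each equality conjunct $t_{1}=t_{2}$ of $\phi$ because $t_{1}\sim t_{2}$, and each disequality conjunct $\neg(t_{1}=t_{2})$ because $t_{1}\not\sim t_{2}$ by (ii); so $\A^{*}\vDash\phi$. Since $\vDash_{\Tnine}\phi\to\bigvee_{i}(x_{i}=y_{i})$, there is an $i$ with $x_{i}^{\A^{*}}=y_{i}^{\A^{*}}$, i.e. $[x_{i}]=[y_{i}]$, i.e. $x_{i}\sim y_{i}$. Finally, applying (i) with $t_{1}=x_{i}$, $t_{2}=y_{i}$ to an arbitrary $\Tnine$‑interpretation $\A\vDash\phi$ (trivially $x_{i}^{\A}=y_{i}^{\A}$ when $\A$ is trivial) yields $\vDash_{\Tnine}\phi\to(x_{i}=y_{i})$, which is what was needed.

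The one point that needs care — the ``hard part'' — is choosing ${\sim}$ correctly: it must be closed under applying $s$ but must \emph{not} be forced to satisfy $s^{2}=\mathrm{id}$ or $s(t)\neq t$, because those constraints hold in the finite non‑trivial $\Tnine$‑models but fail in the infinite ones. Building $\A^{*}$ as an infinite model, by padding $T/{\sim}$ with an infinite $s$‑fixed‑point‑free block, is precisely what reconciles this: it keeps $\A^{*}$ inside $\Tnine$ regardless of whether $T/{\sim}$ itself has a cardinality of the form $2\bb(n)$, while still realizing all and only the equalities entailed by $\phi$.
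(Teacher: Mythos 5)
Your proof is correct. The paper itself only writes ``This proof is similar to that of Lemma~66 in \cite{arxivCADE}'' without spelling out the argument, but the standard technique for showing convexity of such $\Sigma_{s}$-theories is exactly the congruence-closure construction you give: build the least $s$-congruence generated by the equality conjuncts of $\phi$, realize it in a ``generic'' infinite model (here padded with an infinite fixed-point-free block so it lands in the unconstrained infinite class of $\Tnine$-models), and read off the one entailed equality. The point you flag as the hard part is indeed the crux — the argument goes through for $\Tnine$ precisely because its infinite models are constrained only by $\psi_{\neq}$ (contrast $\Teight$, whose infinite models also demand $\psi_{\vee}$ and which is consequently not convex) — so your construction and justification are sound, and the handling of the trivial/unsatisfiable case is done correctly up front.
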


\begin{proof}
This proof is similar to that of Lemma $66$ in \cite{arxivCADE}.
\end{proof}

\begin{lemma}
    The $\Sigma_{s}$-theory $\Tnine$ does not have the finite model property, and is thus not stably finite, with respect to its only sort.
\end{lemma}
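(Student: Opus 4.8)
The plan is to exhibit a single quantifier-free $\Sigma_{s}$-formula that is $\Tnine$-satisfiable but whose only $\Tnine$-models are infinite; this contradicts the finite model property directly. The natural candidate is $\phi := \neg(s^{4}(x)=x)$ (the formula $\neg(s^{2}(x)=x)$ would serve just as well). I would first record the elementary term identity that $s^{2}(x)=x$ entails $s^{4}(x)=x$, since $s^{4}(x)=s^{2}(s^{2}(x))$; this is the only ``computation'' needed.

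The second step is to check that $\phi$ is $\Tnine$-satisfiable. For this I would take the interpretation $\A$ with $\s^{\A}=\mathbb{Z}$, $s^{\A}(k)=k+1$, and $x^{\A}=0$. Since $s^{\A}$ has no fixed point, $\A\vDash\psi_{\neq}$, and since $\s^{\A}$ is infinite it satisfies $\psi_{\geq 2\bb(n+1)}$ for every $n\geq 2$; hence the first disjunct of each axiom of $\Tnine$ holds, so $\A$ is a $\Tnine$-interpretation. (Equivalently, one may simply appeal to the classification of $\Tnine$-interpretations recorded when $\Tnine$ was introduced, which lists precisely such interpretations.) Moreover $(s^{\A})^{4}(0)=4\neq 0$, so $\A\vDash\phi$. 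The third step is to rule out finite $\Tnine$-models of $\phi$, again via that classification: a finite $\Tnine$-interpretation $\B$ is either trivial, in which case $s^{\B}$ is forced to be the identity map, or has cardinality $2\bb(n)$ for some $n\geq 2$ and satisfies $\psi_{=}^{2}=\Forall{x}(s^{2}(x)=x)$; in either case $(s^{\B})^{2}(x^{\B})=x^{\B}$, hence $(s^{\B})^{4}(x^{\B})=x^{\B}$, so $\B$ does not satisfy $\phi$. Therefore $\phi$ is a $\Tnine$-satisfiable quantifier-free formula with no finite $\Tnine$-model, which is exactly the failure of the finite model property with respect to the only sort of $\Sigma_{s}$. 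Finally, since applying stable finiteness to any $\Tnine$-interpretation witnessing the satisfiability of $\phi$ would produce a finite one, the failure of the finite model property forces the failure of stable finiteness as well.

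There is no genuine obstacle here. The only point that requires any care is verifying that the infinite interpretation $\A$ really is a $\Tnine$-interpretation and really satisfies $\phi$, and this reduces to checking $\psi_{\neq}$ together with one trivial arithmetic fact; everything else is an immediate consequence of the model classification already established for $\Tnine$. The proof follows the same template used for the analogous lemmas about $\Tseven$, $\Teight$, and $\Tfour$ above, with $\neg(s^{4}(x)=x)$ playing the role that $\neg(s(x)=x)$ and $\neg(s^{3}(x)=x)$ played there.
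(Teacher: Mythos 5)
Your proof is correct and follows essentially the same route as the paper, which disposes of the lemma in one line by observing that $\neg(s^{2}(x)=x)$ is satisfied only by infinite $\Tnine$-interpretations; your choice of $\neg(s^{4}(x)=x)$ is an equally valid witness, and your elaboration (the infinite model on $\mathbb{Z}$, the case split on finite models via $\psi_{=}^{2}$, and the reduction from stable finiteness to the finite model property) just fills in the details the paper leaves implicit.
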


\begin{proof}
    Obvious, as no finite $\Tnine$-interpretation can satisfy $\neg(s^{2}(x)=x)$, while the infinite ones can.
\end{proof}

\begin{lemma}
    The $\Sigma_{s}$-theory $\Tnine$ does not have a computable minimal model function with respect to its only sort.
\end{lemma}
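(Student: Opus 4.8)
The plan is to argue by contradiction, following the template of \Cref{cmmf:tfour} (and the analogous proof for $\Tseven$): assume $\Tnine$ has a computable minimal model function with respect to $S=\{\s\}$, and use it to manufacture a \emph{computable} function whose growth eventually beats the Busy Beaver function, contradicting Rad\'o's theorem (\cite{Rado}).

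First I would record what the cardinalities of $\Tnine$-interpretations look like: up to isomorphism the finite ones have exactly $1$ element, or $2\bb(m)$ elements for some $m\geq 2$, while there are also infinite ones. Since $\Sigma_{s}$ is one-sorted, $\minmod_{\Tnine,S}(\phi)$ is a singleton for every $\Tnine$-satisfiable $\phi$; in particular, for the formula $\NNNEQ{x}{k}$ (with the $x_{i}$ of sort $\s$), which is $\Tnine$-satisfiable for every $k$ because $\Tnine$ has infinite models, the unique element of $\minmod_{\Tnine,S}(\NNNEQ{x}{k})$ is $\max\{1,\min\{2\bb(m) : m\geq 2,\ 2\bb(m)\geq k\}\}$ --- a \emph{finite} cardinal, since $\bb$ is unbounded (so such an $m$ exists) and a finite model is always strictly smaller than an infinite one.

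Then I would define $h:\mathbb{N}\rightarrow\mathbb{N}$ by $h(0)=0$, $h(1)=2$, and, for $n\geq 1$, letting $h(n+1)$ be the unique element of $\minmod_{\Tnine,S}(\NNNEQ{x}{h(n)+1})$. This $h$ is recursively computable: given $h(n)$ one writes down $\NNNEQ{x}{h(n)+1}$ algorithmically, and $\minmod_{\Tnine,S}$ is computable with singleton output by hypothesis. The core of the argument is the claim that $h(n)\geq 2\bb(n)$ for all $n\geq 2$, proved by induction: for $n=2$ one checks $h(2)=2\bb(2)$ directly (the trivial model fails $\NNNEQ{x}{3}$, so the smallest $\Tnine$-model of size $\geq 3$ has $2\bb(2)$ elements); for the step, assuming $h(n)\geq 2\bb(n)$ with $n\geq 2$, the value $h(n+1)$ is the cardinality of a $\Tnine$-interpretation with at least $h(n)+1>2\bb(n)$ elements, hence equals $2\bb(m)$ for some $m\geq 2$ with $2\bb(m)>2\bb(n)$, and strict monotonicity of $\bb$ forces $m\geq n+1$, so $h(n+1)\geq 2\bb(n+1)$. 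Thus $h$ is computable with $h(n)\geq 2\bb(n)\geq\bb(n)$ for all large $n$, contradicting the fact (\cite{Rado}) that $\bb$ eventually dominates every computable function.

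The only delicate point --- the main obstacle --- is justifying precisely that the cardinality returned by $\minmod_{\Tnine,S}$ on $\NNNEQ{x}{h(n)+1}$ is one of the finite values $2\bb(m)$ rather than $\aleph_{0}$ or $1$: finiteness follows from the unboundedness of $\bb$ (so some finite $\Tnine$-model has $\geq h(n)+1$ elements) together with $2\bb(m)<\aleph_{0}$, and the value $1$ is excluded once $h(n)+1\geq 3$, which holds for $n\geq 1$ since $h(1)=2$ and $h$ is strictly increasing along the recursion. Everything else is the standard bookkeeping shared with the other ``$\bb$-theories'' in the paper.
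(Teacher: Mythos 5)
Your argument is correct and is essentially the proof the paper intends: the paper dispatches this lemma by pointing back to the $\Tfour$ case (\Cref{cmmf:tfour}) and saying the same recursion yields $h(n)=2\bb(n)$ rather than $\bb(n)$, which is exactly the template you instantiate. Your version is somewhat more explicit about the two delicate points (why the output of $\minmod_{\Tnine,\{\s\}}$ on $\NNNEQ{x}{h(n)+1}$ is a finite cardinal of the form $2\bb(m)$, and why the trivial model is excluded once $h(n)+1\geq 3$), and you prove $h(n)\geq 2\bb(n)$ rather than equality, which is weaker but perfectly sufficient; the only blemish is that your closed form $\max\{1,\min\{2\bb(m):m\geq 2,\,2\bb(m)\geq k\}\}$ is wrong at $k=1$ (it should return $1$, not $2\bb(2)$), but this never arises since the recursion only ever queries $k=h(n)+1\geq 3$.
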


\begin{proof}
    This proof is similar enough to that of \Cref{cmmf:tfour}, but arriving at $h(n)=2\bb(n)$ instead of $h(n)=\bb(n)$.
\end{proof}

\subsection{\tp{$\Tten$}{Tten}}

$\Tten$ is the $\Sigma_{2}$-theory with axiomatization
\[\{\psi^{\s}_{=2}\wedge(\psi^{\s_{2}}_{\geq g(n+1)}\vee\bigvee_{i=1}^{n}\psi^{\s_{2}}_{=g(i)}) : n\in\mathbb{N}\setminus\{0\}\}.\]
The models $\A$ of $\Tten$ have: $\s^{\A}$ of cardinality $2$; and either $\s_{2}^{\A}$ infinite, or finite and equal to $g(n)$ for some $n\in\mathbb{N}\setminus\{0\}$.

\begin{lemma}
    The $\Sigma_{2}$-theory $\Tten$ is not stably infinite, and thus not smooth, with respect to $\{\s,\s_{2}\}$.
\end{lemma}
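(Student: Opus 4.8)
The plan is to read off from the axiomatization that $\Tten$ collapses the sort $\s$ to a fixed finite size. Every formula in $\ax{\Tten}$ has the shape $\psi^{\s}_{=2}\wedge(\cdots)$, so any $\Tten$-interpretation $\A$ satisfies $\psi^{\s}_{=2}$, i.e.\ $|\s^{\A}|=2$; in particular no $\Tten$-interpretation has $\s^{\A}$ infinite. First I would record that $\Tten$ is not contradictory: an interpretation $\A$ with $|\s^{\A}|=2$ and $\s_{2}^{\A}$ countably infinite satisfies every axiom, since $\psi^{\s_{2}}_{\geq g(n+1)}$ holds in $\A$ for all $n\in\mathbb{N}\setminus\{0\}$. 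Hence there is at least one $\Tten$-interpretation, and so at least one $\Tten$-satisfiable quantifier-free formula, e.g.\ the tautology $x=x$.

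For stable infiniteness I would apply the definition directly with $S=\{\s,\s_{2}\}$: the formula $x=x$ is quantifier-free and $\Tten$-satisfiable, yet every $\Tten$-interpretation $\A$ satisfying it has $|\s^{\A}|=2<\aleph_{0}$, so there is no $\Tten$-interpretation realizing it with $|\s^{\A}|\geq\aleph_{0}$. Since $\s\in S$, this already contradicts the requirement in the definition of stable infiniteness (which demands $|\s^{\A}|\geq\aleph_{0}$ for \emph{each} sort in $S$), so $\Tten$ is not stably infinite with respect to $\{\s,\s_{2}\}$.

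For the ``thus not smooth'' part I would argue that smoothness with respect to $\{\s,\s_{2}\}$ forces $\Tten$ to have interpretations of arbitrarily large cardinality at the sort $\s$, which is false. Concretely: suppose $\Tten$ were smooth with respect to $\{\s,\s_{2}\}$, pick any $\Tten$-interpretation $\A$ (which exists by the non-triviality noted above), and apply the definition of smoothness to the tautology $x=x$, the interpretation $\A$, and the function $\kappa$ with $\kappa(\s)=3$ and $\kappa(\s_{2})=|\s_{2}^{\A}|$ (both values being at least the corresponding cardinalities of $\A$). This yields a $\Tten$-interpretation $\B$ with $|\s^{\B}|=3$, contradicting $|\s^{\B}|=2$. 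Alternatively, one may simply invoke the standard fact, mentioned in the introduction, that smoothness with respect to $S$ implies stable infiniteness with respect to $S$, and conclude from the previous paragraph.

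The argument is entirely routine and there is no real obstacle; the only points that warrant a line of care are confirming non-triviality of $\Tten$ and noting that, because the quantification over sorts in the definitions of stable infiniteness and smoothness ranges over all of $\{\s,\s_{2}\}$, the failure localized at the sort $\s$ alone already suffices, independently of the behaviour of $\s_{2}$.
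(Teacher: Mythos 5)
Your proposal is correct and matches the paper's argument, which simply observes that no $\Tten$-interpretation has an infinite domain of sort $\s$ (every axiom forces $|\s^{\A}|=2$), so stable infiniteness fails and smoothness fails a fortiori. Your additional care about non-triviality and the explicit $\kappa(\s)=3$ smoothness argument are fine elaborations of the same route.
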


\begin{proof}
    Obvious, as no $\Tten$-interpretation $\A$ has $\s^{\A}$ infinite.
\end{proof}

\begin{lemma}
    The $\Sigma_{2}$-theory $\Tten$ is finitely witnessable with respect to $\{\s,\s_{2}\}$.
\end{lemma}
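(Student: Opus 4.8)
The plan is to mimic the proof that $\Tsix$ is finitely witnessable, exploiting the fact that every $\Tten$-interpretation has a domain of sort $\s$ of cardinality exactly $2$ and a domain of sort $\s_{2}$ whose finite cardinalities are precisely the values $g(n)$. Given a quantifier-free formula $\phi$, I would let $k$ be the least positive integer with $3\times 2^{k-1}\geq|\vars_{\s_{2}}(\phi)|$; by property $\textbf{5}$ of \Cref{lem:g-exists} we have $g(2^{k})=3\times 2^{k-1}$, and since $3\times 2^{k-1}$ is a computable function of $k$, such a $k$ can be found algorithmically. Then I set
\[\wit(\phi)=\phi\wedge(x_{1}=x_{1})\wedge(x_{2}=x_{2})\wedge\bigwedge_{i=1}^{g(2^{k})}(u_{i}=u_{i}),\]
where $x_{1},x_{2}$ are fresh variables of sort $\s$ and $u_{1},\ldots,u_{g(2^{k})}$ are fresh variables of sort $\s_{2}$. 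This function is computable, and since the added conjuncts are tautologies, $\wit(\phi)$ is logically equivalent to $\phi$; hence, with $\overarrow{x}=\vars(\wit(\phi))\setminus\vars(\phi)$, the formulas $\phi$ and $\Exists{\overarrow{x}}\wit(\phi)$ are $\Tten$-equivalent, which gives condition $(i)$ of finite witnessability.

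For condition $(ii)$, suppose $\wit(\phi)$ is $\Tten$-satisfiable and let $\A$ be a $\Tten$-interpretation satisfying it (hence also $\phi$). I would build a new interpretation $\B$ by keeping $\s^{\B}=\s^{\A}$ (still of cardinality $2$) and setting $\s_{2}^{\B}=\vars_{\s_{2}}(\phi)^{\A}\cup A$, where $A$ is a set disjoint from the domains of $\A$ of cardinality $g(2^{k})-|\vars_{\s_{2}}(\phi)^{\A}|$ (non-negative, since $g(2^{k})\geq|\vars_{\s_{2}}(\phi)|\geq|\vars_{\s_{2}}(\phi)^{\A}|$). Then $|\s_{2}^{\B}|=g(2^{k})$, which lies in the image of $g$, so $\B$ satisfies $\psi^{\s}_{=2}\wedge\psi^{\s_{2}}_{=g(2^{k})}$ and, as in the description of the models of $\Tten$, is a genuine $\Tten$-interpretation. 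For the assignments I would put $x^{\B}=x^{\A}$ for $x\in\vars_{\s}(\phi)$ and $u^{\B}=u^{\A}$ for $u\in\vars_{\s_{2}}(\phi)$, so that the equality pattern on $\vars(\phi)$ is unchanged and $\B\vDash\phi$; then let $x_{1}^{\B},x_{2}^{\B}$ be the two distinct elements of $\s^{\B}$ and choose $u_{i}\mapsto u_{i}^{\B}$ to be a bijection of $\{u_{1},\ldots,u_{g(2^{k})}\}$ onto $\s_{2}^{\B}$ (possible since both sets have cardinality $g(2^{k})$), assigning all remaining variables arbitrarily. This makes $\B$ a $\Tten$-interpretation satisfying $\wit(\phi)$ with $\s^{\B}=\vars_{\s}(\wit(\phi))^{\B}$ and $\s_{2}^{\B}=\vars_{\s_{2}}(\wit(\phi))^{\B}$, as required.

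The only real subtlety — and the step needing the most care — is the computability of $\wit$: because $g$ itself is non-computable, one cannot simply pad $\s_{2}$ up to ``the least $g(n)$ exceeding $|\vars_{\s_{2}}(\phi)|$''. The fix is exactly property $\textbf{5}$ of \Cref{lem:g-exists}: the values $g(2^{k})=3\times 2^{k-1}$ form a computable, unbounded sequence lying in the image of $g$, so padding up to a power-of-two index keeps $\s_{2}^{\B}$ at a legal cardinality of $\Tten$ while keeping $\wit$ effective. Everything else is the routine padding-and-bijection argument used repeatedly in this section.
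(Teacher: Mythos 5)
Your proposal is correct and follows essentially the same route as the paper's proof: the same witness $\wit(\phi)=\phi\wedge(x_{1}=x_{1})\wedge(x_{2}=x_{2})\wedge\bigwedge_{i=1}^{g(2^{k})}(u_{i}=u_{i})$ with padding up to $g(2^{k})=3\times 2^{k-1}$ (which is where property $\textbf{5}$ of \Cref{lem:g-exists} enters, exactly as you identify), followed by the standard restriction-and-bijection construction of the witnessing interpretation. The only cosmetic difference is that the paper first builds $\B\vDash\phi$ and then reassigns the fresh variables in a second step $\B'$, whereas you do both at once; since the fresh variables do not occur in $\phi$, this is immaterial.
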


\begin{proof}
    Take a quantifier-free formula $\phi$, let $V$ be its set of variables of sort $\s_{2}$, take $k$ as the smallest positive integer such that $g(2^{k-1})=3\times 2^{k-2}< |V|\leq 3\times 2^{k-1}=g(2^{k})$, and sets of fresh variables $\{x_{1},x_{2}\}$ of sort $\s$ and $\{u_{1},\ldots,u_{g(2^{k})}\}$ of sort $\s_{2}$: we define the witness of $\Tten$ as
    \[\wit(\phi)=\phi\wedge(x_{1}=x_{1})\wedge(x_{2}=x_{2})\wedge\bigwedge_{i=1}^{g(2^{k})}(u_{i}=u_{i}),\]
    which is clearly computable. Let $\overarrow{x}=\vars(\wit(\phi))\setminus\vars(\phi)$: it is clear that $\phi$ and $\Exists{\overarrow{x}}\wit(\phi)$ are $\Tten$-equivalent as $\phi$ and $\wit(\phi)$ are already themselves equivalent.

    Now suppose $\A$ is a $\Tten$-interpretation that satisfies $\wit(\phi)$, let $A$ be a set of cardinality $g(2^{k})-|V^{\A}|$, and we define an interpretation $\B$ by making: $\s^{\B}=\s^{\A}$ (which has then cardinality $2$); $\s_{2}^{\B}=V^{\A}\cup A$ (which has then cardinality $g(2^{k})$, making of $\B$ a $\Tten$-interpretation); $x^{\B}=x^{\A}$ for all variables $x$ of sort $\s$; $u^{\B}=u^{\A}$ for all $u\in V$, and arbitrarily for all other variables of sort $\s_{2}$. It is then clear that $\B$ satisfies $\phi$.

    We then change the value assigned by $\B$ to the variables in $\overarrow{x}$ to obtain a second $\Tten$-interpretation $\B^{\prime}$ which still satisfies $\phi$ and thus $\wit(\phi)$: by making $x_{i}\in \{x_{1},x_{2}\}\mapsto x_{i}^{\B^{\prime}}\in\s^{\B}$ and $u_{i}\in\{u_{1},\ldots,u_{g(2^{k})}\}\mapsto u_{i}^{\B^{\prime}}\in\s_{2}^{\B}$ bijections, we get that $\s^{\B^{\prime}}=\vars_{\s}(\wit(\phi))^{\B^{\prime}}$ and $\s_{2}^{\B^{\prime}}=\vars_{\s_{2}}(\wit(\phi))^{\B^{\prime}}$, and so $\wit$ is indeed a witness.
\end{proof}

\begin{lemma}
    The $\Sigma_{2}$-theory $\Tten$ is not strongly finitely witnessable with respect to $\{\s,\s_{2}\}$.
\end{lemma}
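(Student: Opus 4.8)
The plan is to mimic the argument used for $\Tsix$: assume for contradiction that $\Tten$ admits a strong witness $\wit$ with respect to $\{\s,\s_{2}\}$, and exploit the non-surjectivity and non-computability of $g$ from \Cref{lem:g-exists} to produce an arrangement whose only possible realization forces the domain of sort $\s_{2}$ to have a cardinality outside the image of $g$.

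First I would fix a $\Tten$-satisfiable quantifier-free formula $\phi$ (a tautology works, since $\Tten$ has models). By clause $(i)$ of strong finite witnessability, $\phi$ and $\Exists{\overarrow{x}}\wit(\phi)$ are $\Tten$-equivalent for $\overarrow{x}=\vars(\wit(\phi))\setminus\vars(\phi)$, so from any $\Tten$-interpretation of $\phi$ I obtain a $\Tten$-interpretation $\A'$ of $\wit(\phi)$ differing only on $\overarrow{x}$. Setting $V=\vars(\wit(\phi))$ and letting $\delta_{V}$ be the arrangement induced by $\A'$ on $V$, the formula $\wit(\phi)\wedge\delta_{V}$ is $\Tten$-satisfiable, so clause $(ii')$ yields a $\Tten$-interpretation $\B$ satisfying it with $\vars_{\s_{2}}(\wit(\phi)\wedge\delta_{V})^{\B}=\s_{2}^{\B}$. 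Since $\vars_{\s_{2}}(\wit(\phi)\wedge\delta_{V})^{\B}$ is finite, $\s_{2}^{\B}$ is finite, and by the description of $\Tten$-models we get $|\s_{2}^{\B}|=g(q)$ for some $q\in\mathbb{N}\setminus\{0\}$.

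Next I would open a ``hole'' of $g$ above $g(q)$. Since $f$ is not computable it is not eventually $0$, so there is $p>q$ with $f(p)=1$, whence $g(p)=g(p-1)+f(p)+1=g(p-1)+2$ and thus $g(p-1)+1$ is not in the image of $g$. Introducing $k=g(p-1)+1-g(q)\geq 1$ fresh variables $u_{1},\dots,u_{k}$ of sort $\s_{2}$, I would form $U=V\cup\{u_{1},\dots,u_{k}\}$ and take the arrangement $\delta_{U}$ extending $\delta_{V}$ in which each $u_{i}$ sits in its own equivalence class, separate from every other variable. The formula $\wit(\phi)\wedge\delta_{U}$ is still $\Tten$-satisfiable: take $\B'$ with $\s^{\B'}=\s^{\B}$ (cardinality $2$) and $\s_{2}^{\B'}=\s_{2}^{\B}\cup A$ for an infinite set $A$ disjoint from $\s_{2}^{\B}$, so that $\B'$ is a $\Tten$-interpretation because $\s_{2}^{\B'}$ is infinite; interpret the variables of $V$ as $\B$ does and map $u_{1},\dots,u_{k}$ injectively into $A$. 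Since the signature $\Sigma_{2}$ is empty, $\B'$ agrees with $\B$ on the variables of $\wit(\phi)$, hence $\B'\vDash\wit(\phi)$, and $\B'\vDash\delta_{U}$ by construction. Applying clause $(ii')$ to $\wit(\phi)\wedge\delta_{U}$ then produces a $\Tten$-interpretation $\C$ with $\s_{2}^{\C}=\vars_{\s_{2}}(\wit(\phi)\wedge\delta_{U})^{\C}$; but $\delta_{U}$ forces this set to have exactly $g(q)+k=g(p-1)+1$ elements, so $\s_{2}^{\C}$ is finite with $|\s_{2}^{\C}|=g(p-1)+1$, a value not of the form $g(n)$. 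This contradicts the classification of $\Tten$-interpretations, so no strong witness exists.

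The main obstacle I anticipate is the bookkeeping around the arrangements: I must check that $\delta_{U}$ is a legitimate extension of $\delta_{V}$, that the number of $\s_{2}$-classes it induces is exactly $g(p-1)+1$ (so that the realized domain lands precisely on a hole of $g$), and that the auxiliary interpretation $\B'$ is genuinely a $\Tten$-interpretation satisfying $\wit(\phi)\wedge\delta_{U}$. These facts are routine given that $\Sigma_{2}$ is empty and $g$ is increasing, but the whole contradiction hinges on them, so they need to be spelled out carefully.
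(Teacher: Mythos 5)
Your proposal is correct and follows essentially the same route as the paper's proof: locate a finite witnessed model of cardinality $g(q)$ in sort $\s_{2}$, use the non-computability of $f$ to find $p>q$ with $f(p)=1$ so that $g(p-1)+1$ is a hole in the image of $g$, pad the arrangement with $k=g(p-1)+1-g(q)$ fresh singleton classes, satisfy it via an infinite-$\s_{2}$ model, and derive the contradiction from clause $(ii')$. The only cosmetic difference is that you take $V=\vars(\wit(\phi))$ while the paper restricts $V$ to the sort-$\s_{2}$ variables, which changes nothing.
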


\begin{proof}
    We proceed by contradiction: suppose then that $\Tten$ has a strong witness, let $\phi$ be a $\Tten$-satisfiable formula (a tautology will do), and $\A$ a $\Tten$-interpretation that satisfies $\phi$. Since, for $\overarrow{x}=\vars(\wit(\phi))\setminus\vars(\phi)$, $\phi$ and $\Exists{\overarrow{x}}\wit(\phi)$ are $\Tten$-equivalent, there exists a $\Tten$-interpretation $\A^{\prime}$ (differing from $\A$ at most on the value assigned to the variables in $\overarrow{x}$) that satisfies $\wit(\phi)$. 
    
    Let $V$ be the set of variables of sort $\s_{2}$ in $\wit(\phi)$, and take the arrangement $\delta_{V}$ induced by $\A^{\prime}$ on $V$, so that $\A^{\prime}$ satisfies $\wit(\phi)\wedge\delta_{V}$: there must then exist a $\Tten$-interpretation $\B$ that satisfies $\wit(\phi)\wedge\delta_{V}$ with $\vars_{\s_{2}}(\wit(\phi)\wedge\delta_{V})^{\B}=\s_{2}^{\B}$; let $q$ be the positive integer such that $g(q)=|\s_{2}^{\B}|$. Since $f$ is not computable (and therefore not eventually constant), there is a $p>q$ such that $f(p)=1$, and thus $g(p)=g(p-1)+f(p)+1=g(p-1)+2$; we then take fresh variables $U^{\prime}=\{u_{1},\ldots,u_{k}\}$, for $k=g(p-1)+1-g(q)$, define $U=V\cup U^{\prime}$, and let $\delta_{U}$ be the arrangement on $U$ that matches $\delta_{V}$ on $V$, with $k$ more equivalence classes in total. 
    
    Notice that the formula $\wit(\phi)\wedge\delta_{U}$ is $\Tten$-satisfiable. Indeed, consider the interpretation $\B^{\prime}$ with: $\s^{\B^{\prime}}=\s^{\B}$; $\s_{2}^{\B^{\prime}}=\s_{2}^{\B}\cup A$ (for an infinite set $A$ disjoint from $\s^{\B}$ and $\s_{2}^{\B}$, making of $\B^{\prime}$ a $\Tten$-interpretation); $x^{\B^{\prime}}=x^{\B}$ for all variables $x$ of sort $\s$; $u\in U^{\prime}\mapsto u^{\B^{\prime}}\in A$ an injective map; and $u^{\B^{\prime}}=u^{\B}$ for all other variables $u$ of sort $\s_{2}$. This way, $\B^{\prime}$ satisfies $\wit(\phi)\wedge\delta_{U}$; yet no $\Tten$-interpretation $\C$ that satisfies $\wit(\phi)\wedge\delta_{U}$ can have $\s_{2}^{\C}=\vars_{\s_{2}}(\wit(\phi)\wedge\delta_{U})^{\C}$, as the set $\vars_{\s_{2}}(\wit(\phi)\wedge\delta_{U})^{\C}$ must have $g(p-1)+1\notin\{g(n) : n\in\mathbb{N}\setminus\{0\}\}$ elements, leading to the desired contradiction.
\end{proof}

\begin{lemma}
    The $\Sigma_{2}$-theory $\Tten$ is not convex with respect to $\{\s,\s_{2}\}$.
\end{lemma}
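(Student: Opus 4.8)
The statement to prove is that $\Tten$ is not convex with respect to $\{\s,\s_{2}\}$. The whole argument rests on a single structural feature of $\Tten$ that was already noted when the theory was introduced: every $\Tten$-interpretation $\A$ has $|\s^{\A}|=2$. Since the cardinality of the domain of sort $\s$ is bounded (in fact constant), I expect to falsify convexity by exhibiting a disjunction of equalities of sort $\s$ that is $\Tten$-valid by the pigeonhole principle, while no single disjunct is $\Tten$-valid.

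\textbf{Key steps.} First I would take three fresh variables $x,y,z$ of sort $\s$ and the (tautological) conjunction of literals $\phi := (x=x)\wedge(y=y)\wedge(z=z)$, and observe that
\[\vDash_{\Tten}\phi\rightarrow\big[(x=y)\vee(x=z)\vee(y=z)\big],\]
because in any $\Tten$-interpretation $\A$ the set $\s^{\A}$ has only two elements, so among $x^{\A},y^{\A},z^{\A}$ at least two must coincide. Second, I would check that $\Tten$ actually has a model with $|\s^{\A}|=2$ in which one can separate any prescribed pair of these variables: for instance, taking $n=1$ in the axiomatization, $g(1)=1+f(1)=2$, so there is a $\Tten$-interpretation $\A$ with $|\s^{\A}|=2$ and $|\s_{2}^{\A}|=2$ (an infinite choice for $\s_{2}^{\A}$ works equally well). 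Third, for each of the three disjuncts I would produce a $\Tten$-interpretation refuting it by simply reassigning the two variables involved to the two distinct elements of $\s^{\A}$: e.g. for $x=y$, set $x^{\A}\neq y^{\A}$; similarly for $x=z$ and for $y=z$. Hence $\vDash_{\Tten}\phi\rightarrow(x=y)$, $\vDash_{\Tten}\phi\rightarrow(x=z)$, and $\vDash_{\Tten}\phi\rightarrow(y=z)$ all fail, so $\Tten$ is not convex.

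\textbf{Main obstacle.} There is essentially no hard step here; the only thing to be careful about is the bookkeeping that a $\Tten$-interpretation with exactly two elements of sort $\s$ genuinely exists and that the reassignment of variable values preserves being a $\Tten$-interpretation — which is immediate since the axioms of $\Tten$ constrain only the domain cardinalities, not the interpretation of variables (the signature is empty). So the "difficulty" is merely to state the pigeonhole disjunction cleanly and to record the three separating interpretations.
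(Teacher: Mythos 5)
Your proposal is correct and follows essentially the same route as the paper: both use the pigeonhole principle on the three sort-$\s$ variables (since every $\Tten$-model has exactly two elements of sort $\s$) to get the valid disjunction $(x=y)\vee(x=z)\vee(y=z)$, and then exhibit two-element models (possible since $g(1)=2$) that separately falsify each disjunct. The only cosmetic difference is that you wrap the disjunction in an explicit tautological premise $\phi$, while the paper asserts the disjunction's validity directly.
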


\begin{proof}
    By the pigeonhole principle, it is clear that $\vDash_{\Tten}(x=y)\vee(x=z)\vee(y=z)$
    for variables $x$, $y$ and $z$ of sort $\s$. Yet we have neither $\vDash_{\Tten} x=y$, $\vDash_{\Tten} x=z$ or $\vDash_{\Tten} y=z$. Indeed, take the $\Tten$-interpretations $\A$ and $\B$ with: $\s^{\A}=\s^{\B}=\{a,b\}$; $\s_{2}^{\A}=\s_{2}^{\B}=\{c,d\}$ (making of them indeed $\Tten$-interpretations, as $g(1)=2$); $x^{\A}=y^{\A}=x^{\B}=z^{\B}=a$, and $z^{\A}=y^{\B}=b$. Then $\A$ does not satisfy neither $x=z$ nor $y=z$, while $\B$ does not satisfy $x=y$.
\end{proof}

\begin{lemma}
    The $\Sigma_{2}$-theory $\Tten$ is stably finite, and thus has the finite model property, with respect to $\{\s,\s_{2}\}$.
\end{lemma}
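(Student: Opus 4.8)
The plan is to verify the definition of stable finiteness directly; the finite model property then follows for free, since a stably finite theory obviously has it (shrink any model of a satisfiable formula). So I fix a quantifier-free $\phi$ and a $\Tten$-interpretation $\A$ satisfying $\phi$, and look for a $\Tten$-interpretation $\B$ satisfying $\phi$ with $|\s^{\B}|,|\s_{2}^{\B}|<\aleph_{0}$, $|\s^{\B}|\leq|\s^{\A}|$, and $|\s_{2}^{\B}|\leq|\s_{2}^{\A}|$. Since every $\Tten$-interpretation has exactly two elements of sort $\s$, the sort-$\s$ condition is automatic, and if $\s_{2}^{\A}$ is already finite we may simply take $\B=\A$. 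So the only real case is $\s_{2}^{\A}$ infinite.

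In that case I would use the unboundedness of $g$ (property \textbf{2} of \Cref{lem:g-exists}) to choose $m\in\mathbb{N}\setminus\{0\}$ with $g(m)\geq|\vars_{\s_{2}}(\phi)^{\A}|$, pick a set $A$ disjoint from $\vars_{\s_{2}}(\phi)^{\A}$ with $|A|=g(m)-|\vars_{\s_{2}}(\phi)^{\A}|$, and define $\B$ by $\s^{\B}=\s^{\A}$, $\s_{2}^{\B}=\vars_{\s_{2}}(\phi)^{\A}\cup A$ (so $|\s_{2}^{\B}|=g(m)$), with $x^{\B}=x^{\A}$ for all variables $x$ of sort $\s$, $u^{\B}=u^{\A}$ for all $u\in\vars_{\s_{2}}(\phi)$, and $u^{\B}$ arbitrary for the remaining sort-$\s_{2}$ variables. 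Two routine checks remain. First, $\B$ is a $\Tten$-interpretation: $|\s^{\B}|=2$ and $|\s_{2}^{\B}|=g(m)$, so for each axiom index $n$, either $n\geq m$ and $\psi^{\s_{2}}_{=g(m)}$ is a disjunct of $\bigvee_{i=1}^{n}\psi^{\s_{2}}_{=g(i)}$, or $n<m$ and $g(m)\geq g(n+1)$ because $g$ is increasing (property \textbf{1}), so $\psi^{\s_{2}}_{\geq g(n+1)}$ holds; either way the axiom is satisfied. Second, $\B\models\phi$: since $\Sigma_{2}$ is empty, $\phi$ is a Boolean combination of equalities between its own variables, and $\B$ agrees with $\A$ on all of those, so $\A\models\phi$ gives $\B\models\phi$. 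Finally $|\s^{\B}|=2=|\s^{\A}|$ and $|\s_{2}^{\B}|=g(m)<\aleph_{0}=|\s_{2}^{\A}|$, as required.

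I do not expect a genuine obstacle here — the statement is easy — but the point deserving care is the first check above: that the shrunken $\B$ is still a $\Tten$-interpretation, which uses precisely that $g$ is increasing and unbounded. One should also remember to take $\s_{2}^{\B}$ as a \emph{superset} of $\vars_{\s_{2}}(\phi)^{\A}$, rather than enlarging $\s_{2}^{\A}$ by a disjoint set, so that the assignments to the variables of $\phi$ carry over verbatim and the satisfaction of $\phi$ is preserved.
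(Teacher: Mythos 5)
Your proof is correct and follows essentially the same route as the paper's: shrink $\s_{2}^{\A}$ down to a superset of $\vars_{\s_{2}}(\phi)^{\A}$ of cardinality $g(m)$ for a suitable $m$, keeping $\s^{\B}=\s^{\A}$ and the variable assignments on $\phi$ intact. The only cosmetic difference is that the paper picks $m=2^{k}$ using the explicit values $g(2^{k})=3\times 2^{k-1}$, whereas you invoke unboundedness of $g$ directly, which suffices here since no computability of the choice is needed.
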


\begin{proof}
    Let $\phi$ be a quantifier-free formula, $\A$ a $\Tten$-interpretation that satisfies $\phi$ (we may assume with $|\s_{2}^{\A}|\geq \aleph_{0}$, as otherwise we would have nothing to prove), $V$ the set of variables of sort $\s_{2}$ in $\phi$, $k$ the least positive integer such that $g(2^{k-1})=3\times 2^{k-2}<|V|\leq 3\times 2^{k-1}=g(2^{k})$, and $A$ a set of cardinality $g(2^{k})-|V^{\A}|$ disjoint from both domains of $\A$. We then define an interpretation $\B$ by making: $\s^{\B}=\s^{\A}$ (which has therefore cardinality $2$); $\s_{2}^{\B}=V^{\A}\cup A$ (which has therefore cardinality $g(2^{k})$, thus making of $\B$ a $\Tten$-interpretation); $x^{\B}=x^{\A}$ for all variables $x$ of sort $\s$; $u^{\B}=u^{\A}$ for all $u\in V$, and arbitrarily for all other variables $u$ of sort $\s_{2}$. Then we have that both $\s^{\B}$ and $\s_{2}^{\B}$ are finite, that $|\s^{\B}|\leq|\s^{\A}|$ and $|\s_{2}^{\B}|\leq|\s_{2}^{\A}|$, and that $\B$ satisfies $\phi$, finishing our proof.
\end{proof}

\begin{lemma}
    The $\Sigma_{2}$-theory $\Tten$ does not have a computable minimal model function with respect to $\{\s,\s_{2}\}$.
\end{lemma}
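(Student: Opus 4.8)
The plan is to reuse the by-now standard reduction used throughout the paper (e.g.\ in \Cref{cmmf:tone}): assume toward a contradiction that $\Tten$ has a computable minimal model function $\minmod_{\Tten,S}$ with $S=\{\s,\s_{2}\}$, and extract from it an algorithm computing $g$, contradicting the non-computability of $g$ (property \textbf{4} of \Cref{lem:g-exists}).

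First I would pin down the shape of $\textbf{Card}_{\Tten,S}(\phi)$ for the particular formulas $\phi=\NNNEQ{u}{m}$ where the $u_{i}$ are variables of sort $\s_{2}$. Every $\Tten$-interpretation $\A$ has $|\s^{\A}|=2$ and $|\s_{2}^{\A}|\in\{g(n):n\in\mathbb{N}\setminus\{0\}\}\cup\{\aleph_{0}\}$, and $\A$ satisfies $\NNNEQ{u}{m}$ iff $|\s_{2}^{\A}|\geq m$; since $\Tten$ has interpretations with $\s_{2}$ infinite, $\NNNEQ{u}{m}$ is $\Tten$-satisfiable for every $m$. Hence $\textbf{Card}_{\Tten,S}(\NNNEQ{u}{m})=\{(2,c):c\in\{g(n):g(n)\geq m\}\cup\{\aleph_{0}\}\}$, and because $g$ is unbounded (property \textbf{2}) and strictly increasing (property \textbf{1}), the set $\{g(n):g(n)\geq m\}$ is nonempty with least element $g(\min\{n:g(n)\geq m\})$. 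Since the first coordinate is constantly $2$, \Cref{alternative definition} gives
\[\minmod_{\Tten,S}(\NNNEQ{u}{m})=\big\{\big(2,\;g(\min\{n:g(n)\geq m\})\big)\big\}.\]

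Next I would define $h:\mathbb{N}\setminus\{0\}\rightarrow\mathbb{N}\setminus\{0\}$ by $h(1)=2$ and $h(n+1)=q$, where $q$ is the second coordinate of the unique element of $\minmod_{\Tten,S}(\NNNEQ{u}{h(n)+1})$. This $h$ is computable: once $h(n)$ is known, $\NNNEQ{u}{h(n)+1}$ is written down algorithmically, and $\minmod_{\Tten,S}$ is computable with finite output, so its second coordinate is read off. An induction on $n$ then shows $h=g$: for $n=1$, $g(1)=1+f(1)=2=h(1)$; assuming $h(n)=g(n)$, the displayed equality with $m=g(n)+1$ gives $h(n+1)=g(\min\{k:g(k)\geq g(n)+1\})$, and since $g$ is strictly increasing, $g(k)\geq g(n)+1$ holds exactly when $k\geq n+1$, so $h(n+1)=g(n+1)$. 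Thus $g=h$ is computable, contradicting property \textbf{4}. (Equivalently, one can just note the argument is verbatim that of \Cref{cmmf:tone}, reading $\s_{2}$ for $\s$.)

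There is no real obstacle here: this is a routine instance of the ``compute $g$ from the minimal model function'' template already applied several times. The only points needing a moment's care are that $\NNNEQ{u}{m}$ is genuinely $\Tten$-satisfiable for every $m$ (true because $\Tten$ has models with $\s_{2}$ infinite) and that minimality singles out the least value of $g$ above the threshold rather than $\aleph_{0}$ (true because $g$ is unbounded).
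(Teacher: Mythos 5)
Your proposal is correct and is essentially the paper's own proof: the paper likewise defines $h(1)=2$ and $h(n+1)=m$ where $\minmod_{\Tten,S}(\NNNEQ{u}{h(n)+1})=\{(2,m)\}$, and concludes $h=g$ contradicts the non-computability of $g$. You merely spell out the details the paper leaves implicit (the shape of $\textbf{Card}_{\Tten,S}(\NNNEQ{u}{m})$ via \Cref{alternative definition} and the induction showing $h=g$), which is fine.
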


\begin{proof}
    The proof is by contradiction: if $\minmod_{\Tten,S}$ is computable, for $S=\{\s,\s_{2}\}$, we may define $h:\mathbb{N}\setminus\{0\}\rightarrow\mathbb{N}$ by $h(1)=2$, and for $n\geq 2$
    \[h(n+1)=m,\quad\text{where}\quad\minmod_{\Tten,S}(\NNNEQ{u}{h(n)+1})=\{(2,m)\},\]
    the variables $u_{i}$ are of sort $\s_{2}$, and $\NNNEQ{u}{m}$ is $\Tten$-satisfiable for every value of $m$: $h$ is computable as this minimal model function of $\Tten$ is assumed computable, and always outputs finite sets; at the same time it is clear that $h(n)=g(n)$, giving us the contradiction.
\end{proof}

\subsection{\tp{$\Tthreetwo$}{Tthreetwo}}

$\Tthreetwo$ is the $\Sigma_{3}$-theory with axiomatization
\[\{\psi^{\s_{3}}_{=1}\}\cup\{(\psi^{\s}_{=1}\wedge\psi^{\s_{2}}_{\geq n})\vee((\psi^{\s}_{\geq g(n+1)}\wedge\psi^{\s_{2}}_{\geq g(n+1)})\vee\bigvee_{i=1}^{n}(\psi^{\s}_{=g(i)}\wedge\psi^{\s_{2}}_{=g(i)})) : n\in\mathbb{N}\setminus\{0\}\}.\]
$\Tthreetwo$ has three classes of model $\A$, all satisfying $|\s_{3}^{\A}|=1$: those with $|\s^{\A}|=1$, and $|\s_{2}^{\A}|\geq\aleph_{0}$; those with $|\s^{\A}|=|\s_{2}^{\A}|=g(n)$, for some $n\in\mathbb{N}\setminus\{0\}$; and those with $|\s^{\A}|,|\s_{2}^{\A}|\geq\aleph_{0}$.

\begin{lemma}
    The $\Sigma_{3}$-theory $\Tthreetwo$ is not stably infinite, and thus not smooth, with respect to $\{\s,\s_{2},\s_{3}\}$.
\end{lemma}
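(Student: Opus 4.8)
The plan is to read the claim straight off the axiomatization. The crucial observation is that $\psi^{\s_{3}}_{=1}$ is one of the axioms of $\Tthreetwo$, so $|\s_{3}^{\A}|=1$ for every $\Tthreetwo$-interpretation $\A$; in particular, no $\Tthreetwo$-interpretation has an infinite domain of sort $\s_{3}$. To refute stable infiniteness with respect to $\{\s,\s_{2},\s_{3}\}$ it then suffices to exhibit one quantifier-free, $\Tthreetwo$-satisfiable formula witnessing the failure of the defining condition. I would take $\phi$ to be the trivially satisfiable formula $x=x$ for a variable $x$ of sort $\s$; it is $\Tthreetwo$-satisfiable because $\Tthreetwo$ has models (for instance, the interpretation with $|\s^{\A}|=|\s_{2}^{\A}|=g(1)=2$ and $|\s_{3}^{\A}|=1$, from the description of $\Tthreetwo$ given above). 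Stable infiniteness with respect to $\{\s,\s_{2},\s_{3}\}$ would require a $\Tthreetwo$-interpretation $\A$ satisfying $\phi$ with $|\s_{3}^{\A}|\geq\aleph_{0}$, which is impossible. Hence $\Tthreetwo$ is not stably infinite with respect to $\{\s,\s_{2},\s_{3}\}$.

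For the second half, the plan is to invoke the standard implication from smoothness to stable infiniteness (over the same set of sorts). If $\Tthreetwo$ were smooth with respect to $\{\s,\s_{2},\s_{3}\}$, then, starting from any $\Tthreetwo$-interpretation satisfying a quantifier-free $\Tthreetwo$-satisfiable formula $\phi$ — which by \Cref{LowenheimSkolem} may be taken with at most countable domains, so that the cardinal function constantly equal to $\aleph_{0}$ dominates it on every sort — smoothness would yield a $\Tthreetwo$-interpretation satisfying $\phi$ with all three domains of cardinality $\aleph_{0}$, i.e., $\Tthreetwo$ would be stably infinite with respect to $\{\s,\s_{2},\s_{3}\}$. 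This contradicts the first half, so $\Tthreetwo$ is not smooth with respect to $\{\s,\s_{2},\s_{3}\}$ either.

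There is essentially no obstacle here; the only points needing a line of care are confirming that $\Tthreetwo$ is non-empty (so that some quantifier-free formula is indeed $\Tthreetwo$-satisfiable) and recalling the smoothness-to-stable-infiniteness implication, both of which follow immediately from material already established.
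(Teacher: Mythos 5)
Your proof is correct and follows the same route as the paper, which simply observes that no $\Tthreetwo$-interpretation has an infinite domain of sort $\s_{3}$ (forced by the axiom $\psi^{\s_{3}}_{=1}$), so stable infiniteness with respect to $\{\s,\s_{2},\s_{3}\}$ fails, and smoothness fails with it. You merely spell out the details (a concrete satisfiable witness formula and the standard smoothness-implies-stable-infiniteness step) that the paper leaves as "obvious."
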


\begin{proof}
    Obvious, as no $\Tthreetwo$-interpretation has the domain of sort $\s_{3}$ infinite.
\end{proof}

\begin{lemma}\label{Tthreetwo is FW}
    The $\Sigma_{3}$-theory $\Tthreetwo$ is finitely witnessable with respect to $\{\s,\s_{2},\s_{3}\}$.
\end{lemma}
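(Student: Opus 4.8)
The plan is to reuse, with a small modification, the witness constructed for the $\Sigma_2$-theory $\Ttwo$ in \Cref{Ttwo is FW}: the pair of sorts $(\s,\s_2)$ of $\Tthreetwo$ behaves exactly as in $\Ttwo$, so the only new ingredient is a single fresh variable taking care of the sort $\s_3$, whose domain is forced to be a singleton by the conjunct $\psi^{\s_3}_{=1}$. First I would record the shape of $\Tthreetwo$-interpretations: every $\Tthreetwo$-interpretation $\A$ has $|\s_3^{\A}|=1$, while $(|\s^{\A}|,|\s_2^{\A}|)$ is $(1,\aleph_0)$, or $(g(n),g(n))$ for some $n\in\mathbb{N}\setminus\{0\}$, or a pair of infinite cardinals; and since $\Sigma_3$ is empty, an interpretation is determined up to isomorphism by its cardinalities, and the truth of a quantifier-free formula under it depends only on the values assigned to its variables and the induced equality pattern.

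Next I would define the witness. Given a quantifier-free $\phi$, let $M=\max\{|\vars_{\s}(\phi)|,|\vars_{\s_2}(\phi)|\}$ and let $k$ be the least positive integer with $g(2^{k+1})=3\times 2^{k}\ge M$; this is computable, since $g$ restricted to powers of two coincides with the computable function $3\times 2^{\cdot-1}$ by property $\textbf{5}$ of \Cref{lem:g-exists}. Writing $N=g(2^{k+1})$ and taking fresh variables $x_1,\ldots,x_N$ of sort $\s$, $u_1,\ldots,u_N$ of sort $\s_2$, and $w$ of sort $\s_3$, set
\[\wit(\phi)=\phi\wedge\bigwedge_{i=1}^{N}\big[(x_i=x_i)\wedge(u_i=u_i)\big]\wedge(w=w).\]
Since $\wit(\phi)$ is $\phi$ conjoined with a tautology, $\wit(\phi)$ and $\phi$ are equivalent, so $\Exists{\overarrow{x}}\wit(\phi)$ is $\Tthreetwo$-equivalent to $\phi$ for $\overarrow{x}=\vars(\wit(\phi))\setminus\vars(\phi)$, which gives condition $(i)$; and $\wit$ is computable.

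For condition $(ii)$, suppose $\wit(\phi)$ — equivalently $\phi$ — is satisfied by a $\Tthreetwo$-interpretation $\A$. Pick sets $A,B$ disjoint from the domains of $\A$ with $|A|=N-|\vars_{\s}(\phi)^{\A}|$ and $|B|=N-|\vars_{\s_2}(\phi)^{\A}|$; these are non-negative because $|\vars_{\s}(\phi)^{\A}|\le|\vars_{\s}(\phi)|\le M\le N$ and likewise for $\s_2$. Define $\B$ by $\s^{\B}=\vars_{\s}(\phi)^{\A}\cup A$, $\s_2^{\B}=\vars_{\s_2}(\phi)^{\A}\cup B$ (both of cardinality $N=g(2^{k+1})$, so $(|\s^{\B}|,|\s_2^{\B}|)$ is the cardinality pair of one of the finite $\Tthreetwo$-interpretations), $\s_3^{\B}=\{\ast\}$; put $x^{\B}=x^{\A}$ for every variable $x$ of $\phi$ of sort $\s$ or $\s_2$, let $x_i\mapsto x_i^{\B}$ and $u_i\mapsto u_i^{\B}$ be bijections onto $\s^{\B}$ and $\s_2^{\B}$, and send $w$ and all $\s_3$-variables of $\phi$ to $\ast$. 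Then $\B$ is a $\Tthreetwo$-interpretation, it satisfies $\phi$ (hence $\wit(\phi)$) since the variables of $\phi$ get the same values and equality pattern as in $\A$, and $\s_3$-variables are mutually equal in both; finally $\vars_{\s}(\wit(\phi))^{\B}=\s^{\B}$ and $\vars_{\s_2}(\wit(\phi))^{\B}=\s_2^{\B}$ because the $x_i$ (resp. $u_i$) already surject onto $\s^{\B}$ (resp. $\s_2^{\B}$), while $\vars_{\s_3}(\wit(\phi))^{\B}=\{\ast\}=\s_3^{\B}$. The only delicate points are the computability of $\wit$ (handled by invoking $g$ only at powers of two) and checking that bumping both the $\s$- and $\s_2$-domains to the common value $g(2^{k+1})$ lands in the finite stratum of the axiomatization; handling small $M$ (e.g. $M\in\{0,1\}$) is automatic since we always take $k\ge 1$, so $N\ge g(4)=6$. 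Everything else is routine bookkeeping, essentially identical to \Cref{Ttwo is FW}.
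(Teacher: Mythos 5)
Your proof is correct and follows essentially the same approach as the paper's: conjoin $\phi$ with a tautology over $g(2^{\text{least sufficient power}})$-many fresh variables of each of sorts $\s,\s_2$ plus one fresh $\s_3$-variable, exploiting property $\textbf{5}$ of $g$ for computability, and then shrink a given satisfying interpretation so every domain element is witnessed. The only differences are cosmetic indexing choices (you use $g(2^{k+1})$ where the paper uses $g(2^{k})$) and your slightly more careful treatment of the edge case of small variable counts.
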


\begin{proof}
    Take a quantifier-free formula $\phi$, let $V=\vars_{\s}(\phi)$ and $V_{2}=\vars_{\s_{2}}(\phi)$, the least positive integer $k$ such that 
    \[g(2^{k-1})=3\times 2^{k-2}<\max\{|V|,|V_{2}|\}\leq 3\times 2^{k-1}=g(2^{k}),\]
    and fresh variables $\{x_{1},\ldots,x_{g(2^{k})}\}$ of sort $\s$, $\{u_{1},\ldots,u_{g(2^{k})}\}$ of sort $\s_{2}$, and $\{z\}$ of sort $\s_{3}$. We then define the witness of $\phi$ as
    \[\wit(\phi)=\phi\wedge\bigwedge_{i=1}^{g(2^{k})}(x_{i}=x_{i})\wedge\wedge\bigwedge_{i=1}^{g(2^{k})}(u_{i}=u_{i})\wedge(z=z),\]
    which is of course computable. Since $\phi$ and $\wit(\phi)$ are equivalent, $\phi$ and $\Exists{\overarrow{x}}\wit(\phi)$ are $\Tthreetwo$-equivalent, for $\overarrow{x}=\vars(\wit(\phi))\setminus\vars(\phi)$.

    Furthermore, take a $\Tthreetwo$-interpretation $\A$ that satisfies $\wit(\phi)$, and thus $\phi$: we take sets $A$ and $B$, disjoint from the domains of $\A$, with $g(2^{k})-|V^{\A}|$ and $g(2^{k})-|V_{2}^{\A}|$ elements, respectively. We define an interpretation $\B$ by making: $\s^{\B}=V^{\B}\cup A$ (with cardinality $g(2^{k})$); $\s_{2}^{\B}=V_{2}^{\A}\cup B$ (with cardinality $g(2^{k})$); $\s_{3}^{\B}=\s_{3}^{\A}$ (with cardinality $1$, thus making of $\B$ a $\Tthreetwo$-interpretation); $x^{\B}=x^{\A}$ for all $x\in V$, $x_{i}\in\{x_{1},\ldots,x_{g(2^{k})}\}\mapsto x_{i}^{\B}\in\s^{\B}$ a bijection, and arbitrarily for all other variables $x$ of sort $\s$; $u^{\B}=u^{\A}$ for all $u\in V_{2}$, $u_{i}\in\{u_{1},\ldots,u_{g(2^{k})}\}\mapsto u_{i}^{\B}\in\s_{2}^{\B}$ a bijection, and arbitrarily for all other variables $u$ of sort $\s_{2}$; and $w^{\B}$ the only element of $\s_{3}^{\B}$, for all variables $w$ of sort $\s_{3}$. This way $\B$ satisfies $\phi$, and thus $\wit(\phi)$, and has $\s^{\B}=\vars_{\s}(\wit(\phi))^{\B}$, and $\s_{j}^{\B}=\vars_{\s_{j}}(\wit(\phi))^{\B}$ for $j\in\{2,3\}$.
\end{proof}

\begin{lemma}
    The $\Sigma_{3}$-theory $\Tthreetwo$ is not strongly finitely witnessable with respect to $\{\s,\s_{2},\s_{3}\}$.
\end{lemma}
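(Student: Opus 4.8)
The plan is to copy, with only cosmetic changes, the non--strong--witnessability arguments given above for $\Tsix$ and $\Tten$. The point is that a strong witness for $\Tthreetwo$ would, after fixing an arrangement, produce a $\Tthreetwo$-interpretation whose $\s_{2}$-domain \emph{equals} the (finite) image of the $\s_{2}$-variables; but a $\Tthreetwo$-interpretation with a finite $\s_{2}$-domain must have $|\s_{2}|=g(n)$ for some $n$, and since $f$ is not eventually constant the image of $g$ skips infinitely many integers, which we can force a witness to hit. Concretely I argue by contradiction. Assume $\wit$ is a strong witness. Take $\phi$ a tautology (hence $\Tthreetwo$-satisfiable) and a $\Tthreetwo$-interpretation $\A\vDash\phi$. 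By clause $(i)$ of finite witnessability there is a $\Tthreetwo$-interpretation $\A'$ differing from $\A$ only on $\vars(\wit(\phi))\setminus\vars(\phi)$ with $\A'\vDash\wit(\phi)$. Set $V=\vars_{\s_{2}}(\wit(\phi))$ and let $\delta_{V}$ be the arrangement induced by $\A'$ on $V$, so $\wit(\phi)\wedge\delta_{V}$ is $\Tthreetwo$-satisfiable. Clause $(ii')$ then yields a $\Tthreetwo$-interpretation $\B\vDash\wit(\phi)\wedge\delta_{V}$ with $\vars_{\s_{2}}(\wit(\phi)\wedge\delta_{V})^{\B}=\s_{2}^{\B}$; since the left side is the image of $|V/\delta_{V}|$ variables, $\s_{2}^{\B}$ is finite, and by the classification of $\Tthreetwo$-interpretations $|\s_{2}^{\B}|=g(q)$ for some $q\geq 1$.

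\textbf{Forcing a bad cardinality.} Because $f$ is not computable, it is not eventually constant, so there is $p>q$ with $f(p)=1$; hence $g(p)=g(p-1)+f(p)+1=g(p-1)+2$, and since $g$ is strictly increasing and $g(p-1)<g(p-1)+1<g(p)$, the integer $g(p-1)+1$ is not in the image of $g$. I introduce $k:=g(p-1)+1-g(q)\geq 1$ fresh $\s_{2}$-variables $u_{1},\dots,u_{k}$, put $U:=V\cup\{u_{1},\dots,u_{k}\}$, and let $\delta_{U}$ be the arrangement agreeing with $\delta_{V}$ on $V$ and placing each $u_{i}$ in its own class, so $|U/\delta_{U}|=g(q)+k=g(p-1)+1$. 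The formula $\wit(\phi)\wedge\delta_{U}$ is still $\Tthreetwo$-satisfiable: from $\B$ take infinite sets $A_{1},A_{2}$ pairwise disjoint and disjoint from $\B$'s domains and define $\B'$ by $\s_{3}^{\B'}=\s_{3}^{\B}$, $\s^{\B'}=\s^{\B}\cup A_{1}$, $\s_{2}^{\B'}=\s_{2}^{\B}\cup A_{2}$ (so $\B'$ lies in the "both $\s,\s_{2}$ infinite, $\s_{3}$ trivial" class), keep $\B$'s assignment on the variables of $\wit(\phi)$, send the $u_{i}$ injectively into $A_{2}$, and assign the rest arbitrarily; then $\B'\vDash\wit(\phi)\wedge\delta_{U}$. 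On the other hand, no $\Tthreetwo$-interpretation $\C\vDash\wit(\phi)\wedge\delta_{U}$ can have $\s_{2}^{\C}=\vars_{\s_{2}}(\wit(\phi)\wedge\delta_{U})^{\C}$: the right side has exactly $|U/\delta_{U}|=g(p-1)+1$ elements, a number that is neither infinite nor of the form $g(n)$, whereas $|\s_{2}^{\C}|$ is infinite or equals $g(n)$ for some $n$; since it is also $\geq g(p-1)+1$, it must be strictly larger, so $\s_{2}^{\C}$ strictly contains that image. This contradicts $(ii')$ for $\wit$, proving $\Tthreetwo$ is not strongly finitely witnessable w.r.t.\ $\{\s,\s_{2},\s_{3}\}$.

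\textbf{Main obstacle.} The only delicate point — and the place I would be most careful — is justifying that the interpretation $\B$ obtained from $(ii')$ genuinely forces $|\s_{2}^{\B}|$ into the image of $g$, i.e.\ that $\B$ is one of the "middle" models with $|\s^{\B}|=|\s_{2}^{\B}|=g(q)$ rather than a model with $|\s^{\B}|=1$ and $\s_{2}^{\B}$ infinite. This is exactly where one uses that strong witnessability here is taken with respect to $S=\{\s,\s_{2},\s_{3}\}$: the witnessing $\B$ must have all three of its domains equal to the finite images of the corresponding variables, which forces $\s_{2}^{\B}$ finite and hence $\B$ into the middle class; in particular $V\neq\emptyset$ (otherwise $\s_{2}^{\B}=\emptyset$, impossible) and $|V/\delta_{V}|\neq 1$ (since $g$ never takes the value $1$). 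Everything else is routine bookkeeping identical to the $\Tsix$ and $\Tten$ arguments, the extra sort $\s_{3}$ contributing only the trivial constraint $|\s_{3}|=1$, which each interpretation we build visibly satisfies.
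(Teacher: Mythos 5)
Your proof is correct, but at the final step it takes a different route from the paper's. Both arguments coincide up to the point where clause $(ii')$ produces a $\Tthreetwo$-interpretation $\B$ with all three domains finite, hence $|\s^{\B}|=|\s_{2}^{\B}|=g(q)$ and $|\s_{3}^{\B}|=1$. From there the paper adds a \emph{single} fresh variable of sort $\s$ to the arrangement: the resulting witnessing interpretation $\C$ would have $|\s^{\C}|=g(q)+1$ while $|\s_{2}^{\C}|=g(q)$, violating the coupling $|\s^{\A}|=|\s_{2}^{\A}|$ that every finite non-trivial $\Tthreetwo$-model must satisfy. This needs nothing about $g$ beyond its being defined. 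You instead add $k=g(p-1)+1-g(q)$ fresh variables of sort $\s_{2}$ so as to force $|\s_{2}^{\C}|=g(p-1)+1$, an integer lying in a gap of the image of $g$; this requires invoking the non-eventual-constancy of $f$ to locate such a gap above $g(q)$, and is exactly the argument the paper uses for theories like $\Tten$ and $\Televen$, where the first sort is pinned to a constant and the gap structure of $g$ is the only available obstruction. So your proof trades a slightly heavier input (a gap of $g$ above an arbitrary point) for uniformity with the one-free-sort cases, while the paper's exploits the extra coupling between $\s$ and $\s_{2}$ that $\Tthreetwo$ happens to provide. Your closing discussion of why $\B$ must land in the ``middle'' class (finiteness of all witnessed domains, $V\neq\emptyset$, $|V/\delta_{V}|\neq 1$ since $g$ omits the value $1$) is accurate and matches the paper's implicit reasoning.
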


\begin{proof}
    Suppose $\Tthreetwo$ actually has a strong witness $\wit$, take a $\Tthreetwo$-satisfiable quantifier-free formula $\phi$ (a tautology works), and a $\Tthreetwo$-interpretation $\A$ that satisfies $\phi$: since $\phi$ and $\Exists{\overarrow{x}}\wit(\phi)$ are $\Tthreetwo$-equivalent, for $\overarrow{x}=\vars(\wit(\phi))\setminus\vars(\phi)$, we have that there is a $\Tthreetwo$-interpretation $\A^{\prime}$ differing from $\A$ at most on the values assigned to the variables in $\overarrow{x}$ that satisfies $\wit(\phi)$. Let $V=\vars(\wit(\phi))$, and $\delta_{V}$ the arrangement induced by $\A^{\prime}$ on $V$, meaning we also have that $\A^{\prime}$ satisfies $\wit(\phi)\wedge\delta_{V}$. We thus know that there is a $\Tthreetwo$-interpretation $\B$ that satisfies $\wit(\phi)\wedge\delta_{V}$ with $\s^{\B}=\vars_{\s}(\wit(\phi)\wedge\delta_{V})^{\B}$, and $\s_{j}^{\B}=\vars_{\s_{j}}(\wit(\phi)\wedge\delta_{V})^{\B}$ for $j\in\{2,3\}$.

    Because all domains of $\B$ are finite, we know that $|\s^{\B}|=|\s_{2}^{\B}|=g(n)$, for some $n\in\mathbb{N}\setminus\{0\}$, and $|\s_{3}^{\B}|=1$. Take then a fresh variable $x$ of sort $\s$, consider the set $U=V\cup\{x\}$, and the arrangement $\delta_{U}$ that matches $\delta_{V}$ on $V$, and where $x$ is not in the same equivalence class of any variable on $V$. It is easy to see that $\wit(\phi)\wedge\delta_{U}$ is $\Tthreetwo$-satisfiable, and thus there must exist a $\Tthreetwo$-interpretation $\C$ that satisfies that formula with $\s^{\C}=\vars_{\s}(\wit(\phi)\wedge\delta_{U})^{\C}$, and $\s_{j}^{\C}=\vars_{\s_{j}}(\wit(\phi)\wedge\delta_{U})^{\C}$ for $j\in\{2,3\}$: but that implies $|\s^{\C}|=g(n)+1$, $|\s_{2}^{\C}|=g(n)$ and $|\s_{3}^{\C}|=1$, what constitutes a contradiction.
\end{proof}

\begin{lemma}
    The $\Sigma_{3}$-theory $\Tthreetwo$ is convex with respect to $\{\s,\s_{2},\s_{3}\}$.
\end{lemma}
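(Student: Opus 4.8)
The plan is to mimic the convexity argument used for $\Tthreeone$, reducing the claim to the already-proven convexity of $\Ttwo$ (\Cref{Ttwo is CV}). Suppose $\phi$ is a conjunction of literals with
\[\phi\vDash_{\Tthreetwo}\bigvee_{i=1}^{m}(x_{i}=y_{i})\vee\bigvee_{j=1}^{n}(u_{j}=v_{j})\vee\bigvee_{k=1}^{p}(w_{k}=z_{k}),\]
where the $x_{i},y_{i}$ are of sort $\s$, the $u_{j},v_{j}$ of sort $\s_{2}$, and the $w_{k},z_{k}$ of sort $\s_{3}$. If $p\geq 1$, then since every $\Tthreetwo$-interpretation $\A$ has $|\s_{3}^{\A}|=1$, we immediately get $\phi\vDash_{\Tthreetwo}w_{k}=z_{k}$ for each $1\leq k\leq p$, and we are done. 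So we may assume the disjunction contains no variables of sort $\s_{3}$.

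Next I would discard from $\phi$ every literal mentioning a variable of sort $\s_{3}$, producing a conjunction of literals $\phi^{\prime}$ over the sorts $\s,\s_{2}$ only. Because $|\s_{3}^{\A}|=1$ forces all $\s_{3}$-equalities to hold and all $\s_{3}$-disequalities to fail, either $\phi$ is $\Tthreetwo$-unsatisfiable (and then convexity holds vacuously) or $\phi$ and $\phi^{\prime}$ are $\Tthreetwo$-equivalent; in the latter case $\phi^{\prime}\vDash_{\Tthreetwo}\bigvee(x_{i}=y_{i})\vee\bigvee(u_{j}=v_{j})$. The key step is then to observe that for formulas over $\s,\s_{2}$ only, $\Tthreetwo$-entailment and $\Ttwo$-entailment coincide: forgetting the sort $\s_{3}$ sends a $\Tthreetwo$-interpretation to a $\Ttwo$-interpretation, since the cardinality constraints on $\s^{\A}$ and $\s_{2}^{\A}$ in the axiomatization of $\Tthreetwo$ are precisely those defining $\Ttwo$; and conversely, adjoining a one-element domain for $\s_{3}$ turns a $\Ttwo$-interpretation into a $\Tthreetwo$-interpretation. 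Both operations preserve the values of all $\s$- and $\s_{2}$-terms, hence the satisfaction of $\phi^{\prime}$ and of the disjunction. Therefore $\phi^{\prime}\vDash_{\Ttwo}\bigvee(x_{i}=y_{i})\vee\bigvee(u_{j}=v_{j})$, and by convexity of $\Ttwo$ (\Cref{Ttwo is CV}) either $\phi^{\prime}\vDash_{\Ttwo}x_{i}=y_{i}$ for some $i$ or $\phi^{\prime}\vDash_{\Ttwo}u_{j}=v_{j}$ for some $j$; transporting back along the same correspondence and using $\phi\equiv_{\Tthreetwo}\phi^{\prime}$ yields $\phi\vDash_{\Tthreetwo}x_{i}=y_{i}$ or $\phi\vDash_{\Tthreetwo}u_{j}=v_{j}$, so $\Tthreetwo$ is convex.

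The main obstacle is the bookkeeping in that correspondence step: one must check carefully that the $\psi^{\s_{3}}_{=1}$ conjunct of the axiomatization does not interact with the $\s,\s_{2}$-disjunction (so that restriction and extension of models really are inverse and really do preserve the relevant axioms), and one must dispose cleanly of the degenerate case in which $\phi$ is already unsatisfiable because of its $\s_{3}$-literals. Everything else is routine, following the template of the $\Tthree$ and $\Tthreeone$ convexity proofs.
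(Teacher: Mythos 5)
Your proposal is correct and follows essentially the same route as the paper's own proof: dispose of any $\s_{3}$-disjuncts using $|\s_{3}^{\A}|=1$, strip the $\s_{3}$-literals from the cube, transfer the entailment to $\Ttwo$ (whose axiomatization is exactly that of $\Tthreetwo$ minus $\psi^{\s_{3}}_{=1}$), and invoke \Cref{Ttwo is CV}. The only difference is that you spell out the model-restriction/extension correspondence that the paper leaves as "easy to see."
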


\begin{proof}
    Suppose that 
    \[\phi\vDash_{\Tthreetwo}\bigvee_{j=1}^{3}\bigvee_{i=1}^{n_{j}}x_{i}^{j}=y_{i}^{j},\]
    where $\phi$ is a cube, and $x_{i}^{1}$ and $y_{i}^{1}$ are variables of sort $\s$, while $x_{i}^{j}$ and $y_{i}^{j}$ are variables of sort $\s_{j}$, for $j\in\{2,3\}$. Now, since for any $\Tthreetwo$-interpretation $\A$, $|\s_{3}^{\A}|=1$, we can safely assume that there are actually no variables of sort $\s_{3}$ in the disjunction, otherwise there is nothing to prove; likewise, we may remove any literals from $\phi$ including variables of sort $\s_{3}$, as they are either tautologies (meaning removing them returns an equivalent formula) or contradictions (when there is nothing to prove).

    So we obtain 
    \[\phi^{\prime}\vDash_{\Tthreetwo}\bigvee_{j=1}^{2}\bigvee_{i=1}^{n_{j}}x_{i}^{j}=y_{i}^{j},\] 
    for a $\Sigma_{2}$-formula $\phi^{\prime}$. It is easy to see, as the axiomatization of $\Tthreetwo$ just adds the formula $\psi^{\s_{3}}_{=1}$ to the axiomatization of $\Ttwo$, that this implies $\phi^{\prime}\vDash_{\Ttwo}\bigvee_{j=1}^{2}\bigvee_{i=1}^{n_{j}}x_{i}^{j}=y_{i}^{j}$. Since $\Ttwo$ is convex by \Cref{Ttwo is CV}, there must exist $1\leq j\leq 2$ and $1\leq i\leq n_{j}$ such that $\phi^{\prime}\vDash_{\Ttwo}x_{i}^{j}=y_{i}^{j}$, and we thus get that $\phi^{\prime}\vDash_{\Tthreetwo}x_{i}^{j}=y_{i}^{j}$.

\end{proof}

\begin{lemma}
    The $\Sigma_{3}$-theory $\Tthreetwo$ has the finite model property with respect to $\{\s,\s_{2},\s_{3}\}$.
\end{lemma}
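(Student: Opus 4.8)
The plan is to obtain the finite model property of $\Tthreetwo$ as an immediate consequence of its finite witnessability. Since $\Tthreetwo$ is finitely witnessable with respect to $\{\s,\s_{2},\s_{3}\}$ by \Cref{Tthreetwo is FW}, and Theorem 2 of \cite{FroCoS} shows that finite witnessability implies the finite model property (with respect to the same set of sorts), the statement follows at once. This is exactly how the analogous fact was argued for $\Ttwo$.

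If one prefers a self-contained argument, it is also short. Recall that the finite $\Tthreetwo$-interpretations are precisely those $\A$ with $|\s_{3}^{\A}|=1$ and $|\s^{\A}|=|\s_{2}^{\A}|=g(n)$ for some $n\in\mathbb{N}\setminus\{0\}$. Given a $\Tthreetwo$-satisfiable quantifier-free formula $\phi$ and a $\Tthreetwo$-interpretation $\A$ satisfying it, put $V=\vars_{\s}(\phi)$, $V_{2}=\vars_{\s_{2}}(\phi)$, and $M=\max\{|V^{\A}|,|V_{2}^{\A}|\}$. Because $g$ is unbounded (property \textbf{2} of \Cref{lem:g-exists}), there is some $n$ with $g(n)\geq M$; fix one, pick sets $A,B$ disjoint from the domains of $\A$ with $|A|=g(n)-|V^{\A}|$ and $|B|=g(n)-|V_{2}^{\A}|$, and define $\B$ by $\s^{\B}=V^{\A}\cup A$, $\s_{2}^{\B}=V_{2}^{\A}\cup B$, $\s_{3}^{\B}=\s_{3}^{\A}$, interpreting the variables occurring in $\phi$ as in $\A$ and all remaining variables arbitrarily. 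Then $|\s^{\B}|=|\s_{2}^{\B}|=g(n)$ and $|\s_{3}^{\B}|=1$, so $\B$ is a $\Tthreetwo$-interpretation with every domain finite; and since $\Sigma_{3}$ has no function symbols, the atomic subformulas of $\phi$ are equalities between variables of a common sort, on which $\B$ agrees with $\A$, so $\B\vDash\phi$.

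There is no real obstacle here: the construction is the usual padding argument, and the one thing worth noting is that for the finite model property we only need the \emph{existence} of an $n$ with $g(n)\geq M$, which follows from unboundedness of $g$ alone — we do not need such a finite cardinality to be computable from $\phi$. That computability issue is precisely what fails for $\Tthreetwo$ and is handled separately.
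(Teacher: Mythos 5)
Your first argument is exactly the paper's proof: it derives the finite model property from \Cref{Tthreetwo is FW} together with Theorem 2 of \cite{FroCoS}. The supplementary self-contained padding argument is also correct (it is the standard construction used elsewhere in the appendix, and the observation that only the existence, not the computability, of a suitable $g(n)\geq M$ is needed is apt), but the primary route coincides with the paper's.
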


\begin{proof}
    Follows from Theorem 2 of \cite{FroCoS} and \Cref{Tthreetwo is FW}.
\end{proof}

\begin{lemma}
    The $\Sigma_{3}$-theory $\Tthreetwo$ is not stably finite with respect to $\{\s,\s_{2},\s_{3}\}$.
\end{lemma}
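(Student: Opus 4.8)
The plan is to establish non-stable-finiteness in the same way it was done for $\Ttwo$: produce a single quantifier-free formula together with a $\Tthreetwo$-interpretation satisfying it that cannot be replaced by a \emph{finite} $\Tthreetwo$-interpretation satisfying the same formula without increasing one of the sort-cardinalities. Since stable finiteness must hold for \emph{every} quantifier-free formula, it suffices to take $\phi$ to be any $\Tthreetwo$-satisfiable formula — the tautology $x=x$ for a variable $x$ of sort $\s$ will do — and to choose for $\A$ a $\Tthreetwo$-interpretation from the first of the three classes described just below the axiomatization, i.e. one with $|\s^{\A}|=1$, $|\s_{2}^{\A}|=\aleph_{0}$, and $|\s_{3}^{\A}|=1$. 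Such an $\A$ exists because, for every $n$, the disjunct $\psi^{\s}_{=1}\wedge\psi^{\s_{2}}_{\geq n}$ is part of the $n$-th axiom, and the only other constraint, $\psi^{\s_{3}}_{=1}$, is satisfied as well.

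The key step is then to verify that no finite $\Tthreetwo$-interpretation $\B$ satisfies $\phi$ with $|\s^{\B}|\leq|\s^{\A}|=1$, $|\s_{2}^{\B}|\leq|\s_{2}^{\A}|=\aleph_{0}$ and $|\s_{3}^{\B}|\leq|\s_{3}^{\A}|=1$. From $|\s^{\B}|\leq 1$ we get $|\s^{\B}|=1$. Now I would inspect the $n$-th axiom: by \Cref{lem:g-exists} the function $g$ is increasing and, from its explicit definition, $g(1)=1+f(1)=2$, so $g(i)\geq 2$ for every $i\geq 1$. Consequently, in $\B$ none of the disjuncts $\psi^{\s}_{\geq g(n+1)}\wedge\psi^{\s_{2}}_{\geq g(n+1)}$ or $\psi^{\s}_{=g(i)}\wedge\psi^{\s_{2}}_{=g(i)}$ can hold, so $\B$ must satisfy $\psi^{\s}_{=1}\wedge\psi^{\s_{2}}_{\geq n}$, whence $|\s_{2}^{\B}|\geq n$. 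Since this holds for all $n\in\mathbb{N}\setminus\{0\}$, $\s_{2}^{\B}$ is infinite, contradicting the assumed finiteness of $\B$. This gives the theorem.

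In fact, the whole argument can be shortened by simply invoking the model description already recorded right after the axiomatization of $\Tthreetwo$: every finite $\Tthreetwo$-interpretation $\B$ has $|\s^{\B}|=|\s_{2}^{\B}|=g(n)\geq 2$ for some $n$, so there is no finite $\Tthreetwo$-interpretation at all with $|\s^{\B}|\leq 1$, which settles the claim at once against the chosen $\A$. The only thing needing any care — and it is a very mild obstacle — is the elementary observation that a singleton $\s$-domain forces an infinite $\s_{2}$-domain, which follows purely from the syntactic shape of the axioms together with $g$ being increasing, unbounded, and $\geq 2$; everything else is routine.
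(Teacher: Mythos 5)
Your proposal is correct and follows essentially the same route as the paper: exhibit an interpretation $\A$ with $|\s^{\A}|=|\s_{3}^{\A}|=1$ and $|\s_{2}^{\A}|=\aleph_{0}$, and observe that no all-finite $\Tthreetwo$-interpretation can have a singleton $\s$-domain (since $g(i)\geq 2$ forces the first disjunct of every axiom, making $\s_{2}$ infinite). The paper simply declares this "obvious"; your verification that $g(1)=2$ and $g$ is increasing supplies the detail it omits.
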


\begin{proof}
        Obvious, since there is a $\Tthreetwo$-interpretation $\A$ with $|\s^{\A}|=|\s_{3}^{\A}|=1$ and $|\s_{2}^{\A}|=\aleph_{0}$, while there are no $\Tthreetwo$-interpretations with all domains finite, $|\s^{\B}|\leq |\s^{\A}|$, $|\s_{2}^{\B}|\leq |\s_{2}^{\A}|$ and $|\s_{3}^{\B}|\leq|\s_{3}^{\A}|$.
\end{proof}

\begin{lemma}
    The $\Sigma_{3}$-theory $\Tthreetwo$ does not have a computable minimal model function with respect to $\{\s,\s_{2},\s_{3}\}$.
\end{lemma}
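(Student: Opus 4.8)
The plan is to adapt almost verbatim the contradiction argument used for $\Ttwo$ in \Cref{cmmf ttwo}, since $\Tthreetwo$ is obtained from $\Ttwo$ by merely adjoining a sort $\s_{3}$ pinned to cardinality $1$ (its axiomatization adds exactly $\psi^{\s_{3}}_{=1}$); in particular, the finite non-trivial $\Tthreetwo$-models still satisfy $|\s^{\A}|=|\s_{2}^{\A}|=g(n)$, and every model has $|\s_{3}^{\A}|=1$. First I would assume, for a contradiction, that $\minmod_{\Tthreetwo,S}$ is computable, where $S=\{\s,\s_{2},\s_{3}\}$, and define an auxiliary function $h:\mathbb{N}\setminus\{0\}\rightarrow\mathbb{N}\setminus\{0\}$ by setting $h(1)=2$ and, for $n\geq 1$, $h(n+1)=\min\{p : (p,q,r)\in\minmod_{\Tthreetwo,S}(\NNNEQ{x}{h(n)+1})\}$, with the $x_{i}$ of sort $\s$. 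Note $\NNNEQ{x}{m}$ is $\Tthreetwo$-satisfiable for every $m$ (e.g. by an all-infinite model), and $\NNNEQ{x}{h(n)+1}$ is constructible from $h(n)$ algorithmically, so $h$ is recursive whenever $\minmod_{\Tthreetwo,S}$ is.

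The second step is to compute $\minmod_{\Tthreetwo,S}(\NNNEQ{x}{h(n)+1})$ explicitly. By \Cref{alternative definition} it equals the set of $\leq$-minimal elements of $\textbf{Card}_{\Tthreetwo,S}(\NNNEQ{x}{h(n)+1})$, which, reading off the axiomatization, consists of the triples $(g(m),g(m),1)$ for those $m$ with $g(m)\geq h(n)+1$, together with $(\aleph_{0},\aleph_{0},1)$. Because $g$ is strictly increasing (property \textbf{1} of \Cref{lem:g-exists}; indeed $g(m+1)\geq g(m)+1$), this set has a unique minimum, and when $h(n)=g(n)$ it is $(g(n+1),g(n+1),1)$ — the least value of $g$ exceeding $g(n)$ being $g(n+1)$. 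Hence $h(n+1)=g(n+1)$, and a one-line induction with base case $h(1)=2=g(1)$ yields $h=g$ on $\mathbb{N}\setminus\{0\}$. Since $h$ is computable, so is $g$, contradicting property \textbf{4} of \Cref{lem:g-exists}.

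The only delicate point — and the closest thing to an obstacle — is the case analysis on the shapes of $\Tthreetwo$-models needed to describe $\textbf{Card}_{\Tthreetwo,S}(\NNNEQ{x}{g(n)+1})$ correctly: one must verify that no $\Tthreetwo$-model satisfying this formula has $|\s^{\A}|$ strictly between $g(n)$ and $g(n+1)$, and that whenever $|\s^{\A}|$ is infinite $|\s_{2}^{\A}|$ must be infinite as well, so that $(g(n+1),g(n+1),1)$ really is $\leq$-below every other cardinality triple in the set. Both facts follow directly from the axiomatization of $\Tthreetwo$ (equivalently, they are the $\Ttwo$-facts already used in \Cref{cmmf ttwo} conjoined with $|\s_{3}^{\A}|=1$), so the verification is routine and the argument goes through exactly as for $\Ttwo$.
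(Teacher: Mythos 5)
Your proof is correct and takes essentially the same approach as the paper's: assume $\minmod_{\Tthreetwo,S}$ is computable, define a recursive auxiliary function $h$ by repeatedly querying $\minmod$ on formulas $\NNNEQ{x}{h(n)+1}$, and show inductively that $h=g$, contradicting the non-computability of $g$. Your version is slightly more careful than the paper's (you extract $\min$ of first coordinates rather than assuming the singleton form outright, you spell out the inductive step that the paper dismisses as ``easy to prove'', and you fix the paper's off-by-one in the range of $n$), but it is the same argument.
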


\begin{proof}
    Suppose instead that $\Tthreetwo$ has a computable minimal model function: we then define a function $h:\mathbb{N}\setminus\{0\}\rightarrow\mathbb{N}$ by making $h(1)=2$ and, for $S=\{\s,\s_{2},\s_{3}\}$ and $n\geq 2$, 
    \[h(n+1)=m,\quad\text{where}\quad \minmod_{\Tthreetwo,S}(\NNNEQ{u}{h(n)+1})=\{(m,m,1)\}.\]
    We get $h$ is computable, on one hand, given that $\minmod_{\Tthreetwo,S}$ is assumed computable; one the other, it is easy to prove that $h(n)=g(n)$ for all $n\in\mathbb{N}\setminus\{0\}$, and so $h$ is not computable, meaning we reach a contradiction.
\end{proof}

\subsection{\tp{$\Televen$}{Televen}}

$\Televen$ is the $\Sigma_{2}$-theory with axiomatization 
\[\{\psi^{\s}_{=1}\wedge(\psi^{\s_{2}}_{\geq g(n+1)}\vee\bigvee_{i=1}^{n}\psi^{\s_{2}}_{=g(i)}) : n\in\mathbb{N}\setminus\{0\}\}.\]
The models $\A$ of $\Televen$ have: $\s^{\A}$ of cardinality $1$; and either $\s_{2}^{\A}$ infinite, or finite and equal to $g(n)$, for $n\in\mathbb{N}\setminus\{0\}$.

\begin{lemma}
    The $\Sigma_{2}$-theory $\Televen$ is not stably infinite, and thus not smooth, with respect to $\{\s,\s_{2}\}$.
\end{lemma}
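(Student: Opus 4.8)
The plan is to exploit the fact that the conjunct $\psi^{\s}_{=1}$ occurs in every axiom of $\Televen$, which forces the sort $\s$ to be a singleton in all $\Televen$-interpretations. First I would record this observation: since
\[\ax{\Televen}=\{\psi^{\s}_{=1}\wedge(\psi^{\s_{2}}_{\geq g(n+1)}\vee\bigvee_{i=1}^{n}\psi^{\s_{2}}_{=g(i)}) : n\in\mathbb{N}\setminus\{0\}\},\]
every $\Televen$-interpretation $\A$ satisfies $\psi^{\s}_{=1}$, hence $|\s^{\A}|=1$. I would also note that $\Televen$ is non-contradictory: because $g$ takes values in $\mathbb{N}\setminus\{0\}$, there is a $\Televen$-interpretation with $|\s^{\A}|=1$ and $|\s_{2}^{\A}|=g(1)$. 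This ensures that the test formulas used below are genuinely $\Televen$-satisfiable.

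To witness the failure of stable infiniteness with respect to $\{\s,\s_{2}\}$, I would take $\phi$ to be any $\Televen$-satisfiable quantifier-free formula, for instance the tautology $x=x$ with $x$ of sort $\s$. Every $\Televen$-interpretation $\A$ satisfying $\phi$ has $|\s^{\A}|=1<\aleph_{0}$, so there is no $\Televen$-interpretation satisfying $\phi$ with $|\s^{\A}|\geq\aleph_{0}$. By the definition of stable infiniteness this already shows that $\Televen$ is not stably infinite with respect to $\{\s\}$, and a fortiori not with respect to $\{\s,\s_{2}\}$.

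For the ``thus not smooth'' clause I would appeal to the standard implication that smoothness with respect to a set of sorts entails stable infiniteness with respect to that same set (as already used in \cite{CADE}); combined with the previous paragraph this immediately yields non-smoothness. If a self-contained argument is preferred, I would instead start from the $\Televen$-interpretation $\A$ with $|\s^{\A}|=1$ together with the formula $x=x$, and observe that applying smoothness with the cardinal assignment sending $\s$ to $\aleph_{0}$ would produce a $\Televen$-interpretation with an infinite sort-$\s$ domain, contradicting the first observation. There is essentially no obstacle here: the statement is an immediate consequence of the syntactic shape of the axiomatization, and the only point requiring (routine) care is checking that the chosen test formula is $\Televen$-satisfiable, which is handled by the non-contradictoriness remark.
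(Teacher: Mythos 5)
Your proof is correct and takes the same approach as the paper's, which simply notes that no $\Televen$-interpretation has $\s^{\A}$ infinite; you just spell out the details (the axioms force $|\s^{\A}|=1$, the theory is non-contradictory, and smoothness implies stable infiniteness). No gaps.
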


\begin{proof}
    Obvious: no $\Televen$-interpretation $\A$ has $\s^{\A}$ infinite.
\end{proof}

\begin{lemma}
    The $\Sigma_{2}$-theory $\Televen$ is finitely witnessable with respect to $\{\s,\s_{2}\}$.
\end{lemma}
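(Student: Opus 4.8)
The plan is to exhibit an explicit witness for $\Televen$, reusing the template from the treatments of $\Tten$ and $\Tone$ but adjusting for the fact that every $\Televen$-interpretation $\A$ satisfies $|\s^{\A}| = 1$ while $|\s_{2}^{\A}|$ is either infinite or of the form $g(n)$. Given a quantifier-free formula $\phi$, I would let $V = \vars_{\s_{2}}(\phi)$, let $k$ be the least positive integer with $|V| \leq g(2^{k}) = 3 \times 2^{k-1}$, choose a fresh variable $x_{1}$ of sort $\s$ and fresh variables $u_{1}, \dots, u_{g(2^{k})}$ of sort $\s_{2}$, and define
\[\wit(\phi) = \phi \wedge (x_{1} = x_{1}) \wedge \bigwedge_{i=1}^{g(2^{k})}(u_{i} = u_{i}).\]
The first step is to observe that $\wit$ is computable: by property $\mathbf{5}$ of \Cref{lem:g-exists} (with $\rho(k) = 2^{k}$) the value $g(2^{k}) = 3 \times 2^{k-1}$ is obtained without computing $g$ in general, so locating the least admissible $k$ and writing down $\wit(\phi)$ is algorithmic; such a $k$ exists because $g(2^{k}) \to \infty$.

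The next step is property $(i)$: since $\wit(\phi)$ is the conjunction of $\phi$ with a propositional tautology over fresh variables, $\wit(\phi)$ and $\phi$ are equivalent, hence $\Televen$-equivalent, and as $\overarrow{x} = \vars(\wit(\phi)) \setminus \vars(\phi) = \{x_{1}, u_{1}, \dots, u_{g(2^{k})}\}$ does not occur in $\phi$, the formula $\Exists{\overarrow{x}}\wit(\phi)$ is also $\Televen$-equivalent to $\phi$.

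For property $(ii)$ I would assume $\wit(\phi)$ --- equivalently $\phi$ --- is $\Televen$-satisfiable, take a $\Televen$-interpretation $\A_{0}$ satisfying it (so $|\s^{\A_{0}}| = 1$ and $|\vars_{\s_{2}}(\phi)^{\A_{0}}| \leq |V| \leq g(2^{k})$), and build $\B$ by keeping $\s^{\B} = \s^{\A_{0}}$, setting $\s_{2}^{\B} = \vars_{\s_{2}}(\phi)^{\A_{0}} \cup A$ for a set $A$ disjoint from $\s_{2}^{\A_{0}}$ with $g(2^{k}) - |\vars_{\s_{2}}(\phi)^{\A_{0}}|$ elements, and assigning $x^{\B} = x^{\A_{0}}$ to every variable $x$ of sort $\s$, $u^{\B} = u^{\A_{0}}$ for $u \in V$, a bijection $u_{i} \mapsto u_{i}^{\B}$ from $\{u_{1}, \dots, u_{g(2^{k})}\}$ onto $\s_{2}^{\B}$, and arbitrary values elsewhere. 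Then $|\s_{2}^{\B}| = g(2^{k})$ lies in the image of $g$, so $\B$ is a $\Televen$-interpretation; every variable of $\phi$ keeps its value and $\s^{\B}$ remains a singleton, so $\B \models \phi$ and hence $\B \models \wit(\phi)$; finally $\vars_{\s}(\wit(\phi))^{\B} = \s^{\B}$ since $x_{1}$ already witnesses the unique element of $\s^{\B}$, and $\vars_{\s_{2}}(\wit(\phi))^{\B} = \s_{2}^{\B}$ since the $u_{i}^{\B}$ exhaust $\s_{2}^{\B}$, which is exactly what $(ii)$ demands.

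I do not foresee a real obstacle; the only delicate point is choosing a cardinality for $\s_{2}^{\B}$ that is simultaneously computable from $\phi$ (ensured by property $\mathbf{5}$ of \Cref{lem:g-exists}), realised by some finite $\Televen$-interpretation (ensured since $g(2^{k})$ is itself a value of $g$), and at least $|\vars_{\s_{2}}(\phi)|$, whereas the sort-$\s$ side is trivial because a single fresh variable already witnesses a one-element domain.
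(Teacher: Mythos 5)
Your proof is correct and follows essentially the same approach as the paper's: the same witness $\wit(\phi) = \phi \wedge (x_{1}=x_{1}) \wedge \bigwedge_{i=1}^{g(2^{k})}(u_{i}=u_{i})$, computability via property $\mathbf{5}$ of $g$, and the same construction of a finite $\Televen$-interpretation $\B$ with $|\s_{2}^{\B}| = g(2^{k})$ whose $\s_{2}$-domain is exhausted by the fresh $u_{i}$. The only cosmetic difference is that the paper builds $\B$ first with arbitrary values for the fresh variables and then modifies it to $\B'$ by imposing the bijection, whereas you assign the bijection directly in one step; both are equivalent.
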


\begin{proof}
    Take: a quantifier-free formula $\phi$; $V$ its set of variables of sort $\s_{2}$; $k$ the least positive integer such that 
    \[g(2^{k-1})=3\times 2^{k-2}< |V|\leq 3\times 2^{k-1}=g(2^{k});\]
    and sets $\{x_{1}\}$ and $\{u_{1},\ldots,u_{g(2^{k})}\}$ of fresh variables of sorts, respectively, $\s$ and $\s_{2}$. We then state that
    \[\wit(\phi)=\phi\wedge(x_{1}=x_{1})\wedge\bigwedge_{i=1}^{g(2^{k})}(u_{i}=u_{i}),\]
    is a witness, being obviously computable. It is obvious that, since $\phi$ and $\wit(\phi)$ are equivalent, $\phi$ and $\Exists{\overarrow{x}}\wit(\phi)$ are $\Televen$-equivalent, where $\overarrow{x}=\vars(\wit(\phi))\setminus\vars(\phi)$. Now: suppose $\A$ is a $\Televen$-interpretation that satisfies $\wit(\phi)$; and take a set $A$ of cardinality $g(2^{k})-|V^{\A}|$ disjoint from the domains of $\A$. We define an interpretation $\B$ by making: $\s^{\B}=\s^{\A}$ (so $|\s^{\B}|=|\s^{\A}|=1$); $\s_{2}^{\B}=V^{\A}\cup A$ (so $|\s^{\B}|=|V^{\A}|+|A|=|V^{\A}|+(g(2^{k})-|V^{\A}|)=g(2^{k})$, what makes of $\B$ a $\Televen$-interpretation); $x^{\B}=x^{\A}$ for all variables $x$ of sort $\s$; $u^{\B}=u^{\A}$ for all $u\in V$; and $u^{\B}$ can be defined arbitrarily for all other variables $u$ of sort $\s_{2}$. With this definition $\B$ satisfies $\phi$.

    We construct yet another $\Televen$-interpretation $\B^{\prime}$ by changing only the value assigned by $\B$ to the variables in $\overarrow{x}$, meaning that $\B^{\prime}$ still satisfies $\phi$ and thus $\wit(\phi)$: by making $u_{i}\in\{u_{1},\ldots,u_{g(2^{k})}\}\mapsto u_{i}^{\B^{\prime}}\in\s_{2}^{\B}$ a bijection, we get that $\s^{\B^{\prime}}=\vars_{\s}(\wit(\phi))^{\B^{\prime}}$ and $\s_{2}^{\B^{\prime}}=\vars_{\s_{2}}(\wit(\phi))^{\B^{\prime}}$, proving $\wit$ is in fact a witness.
\end{proof}

\begin{lemma}
    The $\Sigma_{2}$-theory $\Televen$ is not strongly finitely witnessable with respect to $\{\s,\s_{2}\}$.
\end{lemma}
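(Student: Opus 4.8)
The plan is to adapt, essentially verbatim, the arguments already used for $\Tsix$ and $\Tten$, since the finite $\Televen$-interpretations are exactly those whose sort-$\s_{2}$ domain has cardinality $g(n)$ for some positive $n$ (the sort-$\s$ domain being fixed to be trivial), and the ``gaps'' of $g$ --- integers $m$ with $g(n)\neq m$ for every $n$, which exist because $f$ equals $1$ infinitely often --- obstruct exact witnessing of arrangements on sort-$\s_{2}$ variables.

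First I would assume, towards a contradiction, that $\Televen$ has a strong witness $\wit$. Applying clause $(i)$ of finite witnessability to a tautology $\phi$ and a $\Televen$-interpretation $\A$ satisfying it, and using that $\phi$ and $\Exists{\overarrow{x}}\wit(\phi)$ are $\Televen$-equivalent for $\overarrow{x}=\vars(\wit(\phi))\setminus\vars(\phi)$, I obtain a $\Televen$-interpretation $\A'$ --- agreeing with $\A$ except on $\overarrow{x}$ --- that satisfies $\wit(\phi)$. Let $V=\vars_{\s_{2}}(\wit(\phi))$ and let $\delta_{V}$ be the arrangement on $V$ induced by $\A'$, so that $\A'$ satisfies $\wit(\phi)\wedge\delta_{V}$. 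Clause $(ii')$ then supplies a $\Televen$-interpretation $\B$ satisfying $\wit(\phi)\wedge\delta_{V}$ with $\vars_{\s_{2}}(\wit(\phi)\wedge\delta_{V})^{\B}=\s_{2}^{\B}$; since this set is finite, $|\s_{2}^{\B}|=g(q)$ for some positive $q$, and hence $|V/\delta_{V}|=g(q)$.

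Next I would use that $f$ equals $1$ infinitely often (being non-computable, it is not eventually constant) to pick $p>q$ with $f(p)=1$, whence $g(p)=g(p-1)+f(p)+1=g(p-1)+2$, so $g(p-1)+1$ lies strictly between the consecutive image points $g(p-1)$ and $g(p)$ of the strictly increasing function $g$ and is therefore not of the form $g(n)$. I introduce $k=g(p-1)+1-g(q)\geq 1$ fresh variables $u_{1},\ldots,u_{k}$ of sort $\s_{2}$, set $U=V\cup\{u_{1},\ldots,u_{k}\}$, and let $\delta_{U}$ be the arrangement extending $\delta_{V}$ in which each $u_{i}$ forms a singleton class, so that $|U/\delta_{U}|=g(q)+k=g(p-1)+1$. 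I would check that $\wit(\phi)\wedge\delta_{U}$ is $\Televen$-satisfiable by extending $\B$ to $\B'$ with $\s^{\B'}=\s^{\B}$ (still of cardinality $1$) and $\s_{2}^{\B'}=\s_{2}^{\B}\cup A$ for an infinite set $A$ disjoint from $\B$'s domains, interpreting the $u_{i}$ as distinct elements of $A$: then $\B'$ is a $\Televen$-interpretation (trivial $\s$-domain, infinite $\s_{2}$-domain) satisfying $\wit(\phi)\wedge\delta_{U}$. But clause $(ii')$ applied to $\delta_{U}$ would force a $\Televen$-interpretation $\C$ with $\vars_{\s_{2}}(\wit(\phi)\wedge\delta_{U})^{\C}=\s_{2}^{\C}$, so $|\s_{2}^{\C}|=|U/\delta_{U}|=g(p-1)+1$, a finite cardinality not in the image of $g$ --- contradicting the description of $\Televen$'s models.

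The hard part will be purely the bookkeeping that pins the cardinalities down: verifying that $\delta_{V}$ really forces $|V/\delta_{V}|=g(q)$ (which is where the interpretation $\B$ produced by the strong witness is used), that extending to $\delta_{U}$ adds exactly $k$ new classes, and that $g(p-1)+1$ is genuinely a gap of $g$. All three are immediate from $g$ being strictly increasing together with $f(p)=1$ giving $g(p)-g(p-1)=2$; the constraint that sort $\s$ be a singleton plays no role beyond being preserved by the auxiliary constructions, so the argument runs exactly parallel to the one for $\Tten$.
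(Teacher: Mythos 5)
Your proposal is correct and follows essentially the same route as the paper's proof: obtain a finite $\Televen$-interpretation $\B$ with $|\s_{2}^{\B}|=g(q)$ from the strong witness, use the fact that $f$ is $1$ infinitely often to locate a gap $g(p-1)+1$ in the image of $g$, and pad the arrangement with exactly enough fresh singleton classes of sort $\s_{2}$ to force that impossible cardinality. The bookkeeping details you flag (that $|V/\delta_{V}|=g(q)$, that $k\geq 1$ since $p>q$ and $g$ is increasing, and that the infinite extension of $\B$ witnesses satisfiability of $\wit(\phi)\wedge\delta_{U}$) all check out.
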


\begin{proof}
    Suppose that $\Televen$ has a strong witness: let $\phi$ be a $\Tten$-satisfiable quantifier-free formula, and $\A$ a $\Tten$-interpretation that satisfies $\phi$; since $\phi$ and $\Exists{\overarrow{x}}\wit(\phi)$, for $\overarrow{x}=\vars(\wit(\phi))\setminus\vars(\phi)$, are $\Televen$-equivalent, there is a $\Televen$-interpretation $\A^{\prime}$ that satisfies $\wit(\phi)$. Let $V=\vars_{\s_{2}}(\wit(\phi))$, take the arrangement $\delta_{V}$ induced by $\A^{\prime}$ on $V$, and since $\A^{\prime}$ satisfies $\wit(\phi)\wedge\delta_{V}$ there is a $\Televen$-interpretation $\B$ that satisfies $\wit(\phi)\wedge\delta_{V}$ with $\vars_{\s_{2}}(\wit(\phi)\wedge\delta_{V})^{\B}=\s_{2}^{\B}$: say $|\s_{2}^{\B}|=g(q)$. 
    
    Since $f$ is not computable there is a $p>q$ such that $g(p)=g(p-1)+2$; we then take fresh variables $U^{\prime}=\{u_{1},\ldots,u_{k}\}$, for $k=g(p-1)+1-g(q)$, define $U=V\cup U^{\prime}$, and let $\delta_{U}$ be the arrangement on $U$ that agrees with $\delta_{V}$ on $V$, and where each $u_{i}$ is in its own equivalence class. It is then easy to see that $\wit(\phi)\wedge\delta_{U}$ is $\Televen$-satisfiable, by say an interpretation with domain of sort $\s_{2}$ infinite: however no $\Televen$-interpretation $\C$ that satisfies $\wit(\phi)\wedge\delta_{U}$ can have $\s_{2}^{\C}=\vars_{\s_{2}}(\wit(\phi)\wedge\delta_{U})^{\C}$, as  $\vars_{\s_{2}}(\wit(\phi)\wedge\delta_{U})^{\C}$ must have $g(p-1)+1$ elements, leading to the desired contradiction.
\end{proof}

\begin{lemma}
    The $\Sigma_{2}$-theory $\Televen$ is convex with respect to $\{\s,\s_{2}\}$.
\end{lemma}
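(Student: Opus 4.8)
The plan is to reduce convexity of $\Televen$ to that of $\Tone$, in the same style as the convexity of $\Tthreeone$ was reduced to that of $\Tthree$, exploiting that the sort $\s$ is a singleton in every $\Televen$-interpretation.

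Suppose $\phi\vDash_{\Televen}\bigvee_{i=1}^{m}(x_{i}=y_{i})\vee\bigvee_{j=1}^{n}(u_{j}=v_{j})$ with $\phi$ a conjunction of literals, the $x_{i},y_{i}$ of sort $\s$ and the $u_{j},v_{j}$ of sort $\s_{2}$. First I would dispatch two easy cases. If $m\geq 1$, then since $|\s^{\A}|=1$ for every $\Televen$-interpretation $\A$ we have $\vDash_{\Televen}x_{1}=y_{1}$, hence $\phi\vDash_{\Televen}x_{1}=y_{1}$. If $\phi$ contains a disequality between two variables of sort $\s$, then $\phi$ is $\Televen$-unsatisfiable and the implication holds vacuously. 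Otherwise every literal of $\phi$ mentioning a variable of sort $\s$ is a $\Televen$-valid equality, so deleting all such literals yields a $\Televen$-equivalent conjunction of literals $\phi'$ over variables of sort $\s_{2}$ only, with $\phi'\vDash_{\Televen}\bigvee_{j=1}^{n}(u_{j}=v_{j})$.

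The core step is to show that, viewing $\phi'$ and the disjunction as $\Sigma_{1}$-formulas over the sort of $\Tone$ (identifying $\s_{2}$ with that sort), one has $\phi'\vDash_{\Televen}\bigvee_{j}(u_{j}=v_{j})$ iff $\phi'\vDash_{\Tone}\bigvee_{j}(u_{j}=v_{j})$. For this I would compute $\{\,|\s_{2}^{\A}| : \A\text{ a }\Televen\text{-interpretation}\,\}$: it equals $\{g(k):k\geq 1\}\cup\{\aleph_{0}\}$, since $g$ is increasing (so each $g(k)$ is an admissible cardinality for $\s_{2}$) and unbounded (so for any finite $m$ outside the image of $g$ some axiom of $\Televen$ rules out $|\s_{2}|=m$). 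This is precisely the set of cardinalities of $\Tone$-interpretations. Because both signatures are empty and $\phi'$ involves only variables of the relevant sort, an equality formula is satisfied by a $\Televen$-interpretation with $|\s_{2}|=c$ exactly when it is satisfied by a $\Tone$-interpretation of cardinality $c$, so the two entailments coincide. Since $\Tone$ is convex, some $u_{j}=v_{j}$ is $\Tone$-entailed by $\phi'$, hence $\Televen$-entailed, hence $\phi\vDash_{\Televen}u_{j}=v_{j}$, which is what convexity requires.

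The one place that needs care is the identification of the admissible cardinalities of $\s_{2}$ with those of $\Tone$ together with the associated satisfiability transfer; once that is in place the argument is routine and essentially mirrors the $\Tthreeone$ proof, with the roles of the trivial sort ($\s$) and the $g$-constrained sort ($\s_{2}$) in place of $\s_{3}$ and the pair $(\s,\s_{2})$.
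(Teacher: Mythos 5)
Your proof is correct, and it is a clean, self-contained argument. The paper itself does not spell out a proof for this lemma; it only remarks that the argument is ``similar to that of Lemma 66 in \cite{arxivCADE}'', i.e.\ it defers to an external convexity lemma. Your route instead reduces convexity of $\Televen$ to the already-established convexity of $\Tone$, exploiting that $\s$ is forced to be a singleton (so any sort-$\s$ disjunct is outright $\Televen$-valid and any sort-$\s$ disequality in $\phi$ makes it unsatisfiable) and that the $\s_{2}$-reducts of $\Televen$-interpretations are, up to isomorphism, exactly the $\Tone$-interpretations, since both realize precisely the cardinalities $\{g(k):k\geq 1\}$ together with the infinite ones. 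This is the same reduction template the paper uses for $\Tthreeone$ via $\Tthree$, and it buys you a proof that leans only on results already in the paper rather than on the external Lemma 66. The only points worth polishing are the transfer step itself (state both directions explicitly: a $\Tone$-model of $\phi'$ extends to a $\Televen$-model by adjoining a one-element $\s$-domain, and conversely the $\s_{2}$-reduct of a $\Televen$-model of $\phi'$ is a $\Tone$-model, which is exactly what makes the two entailments coincide for formulas mentioning only sort-$\s_{2}$ variables) and the trivial observation that $\phi\vDash_{\Televen}\phi'$, so the entailed disjunct transfers back to $\phi$.
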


\begin{proof}
    This proof is similar to that of Lemma $66$ in \cite{arxivCADE}.
\end{proof}

\begin{lemma}
    The $\Sigma_{2}$-theory $\Televen$ is stably finite, and thus has the finite model property, with respect to $\{\s,\s_{2}\}$.
\end{lemma}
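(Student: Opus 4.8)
The plan is to reuse the shrinking argument used for the stable finiteness of $\Tten$, noting that the sort $\s$ is even easier to handle here. First I would fix an arbitrary quantifier-free formula $\phi$ and a $\Televen$-interpretation $\A$ satisfying $\phi$, and observe that every $\Televen$-interpretation has $|\s^{\A}|=1$; hence there is nothing to do for the first sort, as any $\B$ I build with $|\s^{\B}|=1$ automatically satisfies $|\s^{\B}|\leq|\s^{\A}|$ and has $\s^{\B}$ finite. If $\s_{2}^{\A}$ is already finite I simply take $\B=\A$, so I may assume $|\s_{2}^{\A}|=\aleph_{0}$.

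In that case I would set $V=\vars_{\s_{2}}(\phi)$, choose the least positive integer $k$ with $|V|\leq g(2^{k})=3\times 2^{k-1}$ (which exists because $g$ is unbounded), and pick a set $A$ of cardinality $g(2^{k})-|V^{\A}|$ disjoint from $\s^{\A}$ and $\s_{2}^{\A}$. I then define $\B$ by keeping $\s^{\B}=\s^{\A}$, setting $\s_{2}^{\B}=V^{\A}\cup A$ (so that $|\s_{2}^{\B}|=g(2^{k})$ belongs to the image of $g$, which is exactly what makes $\B$ a genuine $\Televen$-interpretation), keeping $x^{\B}=x^{\A}$ for every variable $x$ of sort $\s$ and $u^{\B}=u^{\A}$ for every $u\in V$, and assigning the remaining variables of sort $\s_{2}$ arbitrary values in $\s_{2}^{\B}$. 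Because the signature $\Sigma_{2}$ is empty, satisfaction of a quantifier-free formula depends only on the equality pattern among the values assigned to its variables, and this pattern is preserved on $V$ and on all variables of sort $\s$; hence $\B\vDash\phi$. Finally $|\s^{\B}|=1=|\s^{\A}|$ and $|\s_{2}^{\B}|=g(2^{k})<\aleph_{0}=|\s_{2}^{\A}|$, so both domains of $\B$ are finite and bounded by those of $\A$, which is precisely stable finiteness with respect to $\{\s,\s_{2}\}$.

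The clause ``and thus has the finite model property'' is then immediate: for a $\Televen$-satisfiable quantifier-free $\phi$, take any $\Televen$-interpretation satisfying it and apply stable finiteness to obtain a finite one. The only delicate point --- and the closest thing to an obstacle --- is ensuring that the shrunk model is still a $\Televen$-interpretation; this is why I take the new cardinality of sort $\s_{2}$ to be $g(2^{k})$ rather than an arbitrary number $\geq|V^{\A}|$, so as to avoid the ``holes'' in $\{g(n):n\in\mathbb{N}\setminus\{0\}\}$ that arise because $g$ is non-surjective.
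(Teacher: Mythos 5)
Your proof is correct and follows essentially the same route as the paper's: shrink $\s_{2}^{\A}$ down to $V^{\A}\cup A$ where $|A|=g(2^{k})-|V^{\A}|$, exploiting the computability-free fact $g(2^{k})=3\times 2^{k-1}$ to land on a cardinality actually realized by $\Televen$, while sort $\s$ is trivially fixed at size $1$. Your choice of $k$ (least positive $k$ with $|V|\leq g(2^{k})$) is in fact cleaner than the paper's two-sided bracketing, which is not well-posed when $|V|$ is small; otherwise the arguments match.
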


\begin{proof}
    Take: a quantifier-free formula $\phi$; a $\Televen$-interpretation $\A$ that satisfies $\phi$; $V=\vars_{\s_{2}}(\phi)$; the least positive integer $k$ such that 
    \[g(2^{k-1})=3\times 2^{k-2}<|V|\leq 3\times 2^{k-1}=g(2^{k});\]
    and a set $A$ of cardinality $g(2^{k})-|V^{\A}|$ disjoint from both $\s^{\A}$ and $\s_{2}^{\A}$. We can then define the interpretation $\B$ by making: $\s^{\B}=\s^{\A}$ (so $|\s^{\B}|=|\s^{\A}|=1$); $\s_{2}^{\B}=V^{\A}\cup A$ (so $|\s_{2}^{\B}|=g(2^{k})$, meaning $\B$ is a $\Televen$-interpretation); $x^{\B}=x^{\A}$ for all variables $x$ of sort $\s$; $u^{\B}=u^{\A}$ for all $u\in V$; and $u^{\B}$ an arbitrary value in $\s_{2}^{\B}$ for all other variables $u$ of sort $\s_{2}$. Then we have that $\B$ satisfies $\phi$, that $|\s^{\B}|,|\s_{2}^{\B}|<\aleph_{0}$, and that $|\s^{\B}|\leq|\s^{\A}|$ and $|\s_{2}^{\B}|\leq|\s_{2}^{\A}|$.
\end{proof}

\begin{lemma}
    The $\Sigma_{2}$-theory $\Televen$ does not have a computable minimal model function with respect to $\{\s,\s_{2}\}$.
\end{lemma}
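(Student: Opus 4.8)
The plan is to follow exactly the pattern used for the other theories defined with the function $g$ (e.g.\ the proof of \Cref{cmmf:tone} and the analogous lemma for $\Tten$): assume for contradiction that $\minmod_{\Televen,S}$ is computable, where $S=\{\s,\s_{2}\}$, and use it to compute $g$, contradicting the non-computability of $g$ from \Cref{lem:g-exists} (property \textbf{4}).

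First I would record the shape of $\minmod_{\Televen,S}$ on the inputs we need. Every $\Televen$-interpretation $\A$ has $|\s^{\A}|=1$, so for any $\Televen$-satisfiable quantifier-free $\phi$ the set $\minmod_{\Televen,S}(\phi)$ is a singleton $\{(1,m)\}$ for some $m\in\N$. For the formula $\NNNEQ{u}{k}$ with the $u_{i}$ of sort $\s_{2}$, a $\Televen$-interpretation satisfies it iff $|\s_{2}^{\A}|\geq k$; since $g$ is unbounded (property \textbf{2}), there are finite $\Televen$-interpretations with arbitrarily large domain of sort $\s_{2}$, so $\NNNEQ{u}{k}$ is $\Televen$-satisfiable for every $k$ and the unique element of $\minmod_{\Televen,S}(\NNNEQ{u}{k})$ has finite second coordinate, namely $(1,\min\{g(i):g(i)\geq k\})$ — here one uses that $|\s^{\A}|=1$ is forced, so the first coordinate cannot be lowered, and that $g$ increasing (property \textbf{1}) makes $\min\{g(i):g(i)\geq k\}$ the least admissible second coordinate, and the unique minimal one.

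Then I would define $h:\mathbb{N}\setminus\{0\}\rightarrow\mathbb{N}$ by $h(1)=2$ and $h(n+1)=m$, where $\minmod_{\Televen,S}(\NNNEQ{u}{h(n)+1})=\{(1,m)\}$ and the $u_{i}$ are of sort $\s_{2}$. This $h$ is computable, since $\NNNEQ{u}{h(n)+1}$ can be written down once $h(n)$ is known and $\minmod_{\Televen,S}$ is computable and always returns a finite set. An easy induction gives $h(n)=g(n)$ for all $n$: the base case is $g(1)=1+f(1)=2=h(1)$; and if $h(n)=g(n)$, then by the previous paragraph $h(n+1)$ is the least $g(i)$ with $g(i)\geq g(n)+1$, which, since $g$ is increasing, is exactly $g(n+1)$. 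Hence $g$ is computable, contradicting property \textbf{4}.

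The only genuinely delicate point — and the one to get right — is the identification of $\minmod_{\Televen,S}(\NNNEQ{u}{k})$ with $\{(1,\min\{g(i):g(i)\geq k\})\}$, which is where properties \textbf{1} and \textbf{2} of $g$ enter; everything else is routine bookkeeping mirroring the earlier proofs in this section.
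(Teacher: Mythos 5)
Your proposal is correct and follows essentially the same route as the paper's proof: define $h(1)=2$ and $h(n+1)$ as the second coordinate of the singleton $\minmod_{\Televen,S}(\NNNEQ{u}{h(n)+1})$, show by induction (using that $g$ is increasing and that $|\s^{\A}|=1$ is forced) that $h=g$, and contradict the non-computability of $g$. Your explicit identification of $\minmod_{\Televen,S}(\NNNEQ{u}{k})$ as $\{(1,\min\{g(i):g(i)\geq k\})\}$ is exactly the content of the paper's inductive step, just stated up front.
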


\begin{proof}
    We proceed by contradiction, starting by assuming that $\minmod_{\Tten,S}$ is computable, for $S=\{\s,\s_{2}\}$: we henceforth define $h:\mathbb{N}\setminus\{0\}\rightarrow\mathbb{N}$ by making $h(1)=2$, and for $n\geq 2$
    \[h(n+1)=m,\quad\text{where}\quad\minmod_{\Tten,S}(\NNNEQ{u}{h(n)+1})=\{(1,m)\},\]
    which is computable. At the same time, we prove by induction that  $h(n)=g(n)$, giving us the contradiction that proves the result: indeed, this is obviously true for $n=0$ and $n=1$, so assume the results holds for a certain $n$. It is clear that a $\Televen$-interpretation $\A$ with $|\s_{2}^{\A}|=g(n+1)$ satisfies $\NNNEQ{u}{h(n)+1}=\NNNEQ{u}{g(n)+1}$ (after possibly changing the value assigned to the variables of that formula), as $g(n+1)\geq g(n)+1$; and no $\Televen$-interpretation $\B$ with $|\s_{2}^{\B}|<|\s_{2}^{\A}|$ can satisfy $\NNNEQ{u}{g(n)+1}$, as we have necessarily in this case $|\s_{2}^{\B}|\geq g(n)$, so the result holds.
\end{proof}

\subsection{\tp{$\Ttwelve$}{Ttwelve}}

$\Ttwelve$ is the $\Sigma_{s}$-theory with axiomatization
\[\{\psi_{=1}\vee(\psi_{\neq}\wedge\psi^{2}_{\vee}\wedge(\psi_{\geq 2g(n+1)}\vee\bigvee_{i=1}^{n}\psi_{=2g(i)})) : n\in\mathbb{N}\setminus\{0\}\}.\]
The models $\A$ of $\Ttwelve$ have either: $|\s^{\A}|=1$, and $s^{\A}$ the identity; $|\s^{\A}|\geq\aleph_{0}$, $s^{\A}(a)\neq a$ for all $a\in \s^{\A}$, but either $(s^{\A})^{4}(a)=a$ or $(s^{\A})^{4}(a)=(s^{\A})^{2}(a)$; or $|\s^{\A}|=2g(n)$, for some $n\in\mathbb{N}\setminus\{0\}$, and again $s^{\A}(a)\neq a$, for all $a\in \s^{\A}$, but either $(s^{\A})^{4}(a)=a$ or $(s^{\A})^{4}(a)=(s^{\A})^{2}(a)$. We must have, for finite non-trivial $\Ttwelve$-interpretations $\A$, $|\s^{\A}|=2g(n)$ instead of simply $|\s^{\A}|=g(n)$, in order to guarantee that both $\psi_{\neq}$ and $\psi^{2}_{\vee}$ can be simultaneously satisfied (by making, to give an example, for every pair $\{a,b\}$ of elements, $s^{\A}(a)=b$ and $s^{\A}(b)=a$).

\begin{lemma}
    The $\Sigma_{s}$-theory $\Ttwelve$ is not stably infinite, and thus not smooth, with respect to its only sort.
\end{lemma}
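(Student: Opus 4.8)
The plan is the standard recipe for refuting stable infiniteness: produce a quantifier-free formula that is $\Ttwelve$-satisfiable but has no infinite $\Ttwelve$-model. The natural candidate is $\phi := s(x)=x$, since the only finite "non-branching" behaviour the axioms allow for $s$ on a singleton is the identity, while every non-trivial $\Ttwelve$-interpretation is forced to satisfy $\psi_{\neq}$.

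First I would check that $\phi$ is $\Ttwelve$-satisfiable. The trivial $\Sigma_{s}$-interpretation $\A$ with $|\s^{\A}|=1$ satisfies $\psi_{=1}$, hence it satisfies the first disjunct of every axiom of $\Ttwelve$, so $\A$ is a $\Ttwelve$-interpretation; and since $\s^{\A}$ is a singleton, $s^{\A}$ is necessarily the identity, so $\A\vDash s(x)=x$ under any assignment to $x$.

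Next I would show that no $\Ttwelve$-interpretation $\B$ with $|\s^{\B}|\geq\aleph_{0}$ — indeed with $|\s^{\B}|>1$ — can satisfy $\phi$. If $|\s^{\B}|>1$ then $\B\not\vDash\psi_{=1}$, so taking, say, $n=1$ in the axiomatization forces $\B\vDash\psi_{\neq}\wedge\psi_{\vee}^{2}\wedge(\psi_{\geq 2g(2)}\vee\psi_{=2g(1)})$; in particular $\B\vDash\psi_{\neq}$, i.e. $\B\vDash\Forall{x}\neg(s(x)=x)$, so no assignment makes $s(x)=x$ true in $\B$. Thus $\phi$ is satisfied by a $\Ttwelve$-interpretation of cardinality $1$ but by no $\Ttwelve$-interpretation with an infinite domain, which is exactly the failure of stable infiniteness with respect to the only sort.

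Finally, to deduce non-smoothness I would apply the smoothness condition (were it to hold) to $\A\vDash\phi$ with the constant cardinal function $\kappa\equiv\aleph_{0}\geq|\s^{\A}|$: this would yield a $\Ttwelve$-interpretation satisfying $\phi$ with an infinite domain, contradicting the previous paragraph; hence $\Ttwelve$ is not smooth either. There is no genuine obstacle here — the argument is elementary and parallels the treatment of $\Tnine$ and $\Ttwo$ in the appendix. The only points requiring a line of care are that the singleton interpretation really is a $\Ttwelve$-model (it validates $\psi_{=1}$, the first disjunct of each axiom) and that $\psi_{\neq}$ is literally incompatible with the atom $s(x)=x$.
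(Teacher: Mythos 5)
Your proof is correct and follows exactly the paper's argument: the paper's proof is the one-line observation that the only $\Ttwelve$-interpretation satisfying $s(x)=x$ is the trivial one with $|\s^{\A}|=1$, which is precisely your witness formula and case analysis spelled out in full. The added details (the singleton model validates $\psi_{=1}$, any larger model is forced to satisfy $\psi_{\neq}$) are the right ones and the deduction of non-smoothness is standard.
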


\begin{proof}
    Obvious, as the only $\Ttwelve$-interpretation $\A$ that satisfies $s(x)=x$ must have $|\s^{\A}|=1$.
\end{proof}

\begin{lemma}
    The $\Sigma_{s}$-theory $\Ttwelve$ is finitely witnessable with respect to its only sort.
\end{lemma}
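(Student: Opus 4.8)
The plan is to exhibit a witness of the usual ``padding'' shape $\wit(\phi)=\phi\wedge\bigwedge_{i=1}^{m(\phi)}(x_{i}=x_{i})$, where the $x_{i}$ are fresh variables of sort $\s$ and $m(\phi)$ is computable from $\vars(\phi)$; the new ingredient, compared with the empty-signature case of $\Tone$, is controlling the function symbol $s$. The key observation is that $\psi_{\vee}^{2}$ forces $(s^{\A})^{4}(a)\in\{a,(s^{\A})^{2}(a)\}$ in every $\Ttwelve$-interpretation $\A$, so the $s^{\A}$-orbit of each element has at most four elements. Hence the closure $W$ of $\vars(\phi)^{\A}$ under $s^{\A}$ is finite, of size at most $4\,|\vars(\phi)|$; it is an $s^{\A}$-closed $\Sigma_{s}$-substructure of $\A$, so it satisfies $\psi_{\neq}$ and $\psi_{\vee}^{2}$ (both universal), and it still satisfies $\phi$, because every term in $\phi$ has the form $s^{j}(x)$ and so evaluates inside $W$. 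This yields a computable \emph{a priori} bound on how far the relevant part of a model must be inflated to reach a legal $\Ttwelve$-cardinality.

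I would then set $m(\phi)=2g(2^{k})=3\times 2^{k}$, with $k$ the least positive integer such that $3\times 2^{k}\geq 4\,|\vars(\phi)|+2$; this is computable since $g(2^{k})=3\times 2^{k-1}$ by property~\textbf{5} of \Cref{lem:g-exists} (with $\rho(k)=2^{k}$). Part $(i)$ is immediate, as $\bigwedge_{i=1}^{m(\phi)}(x_{i}=x_{i})$ is a tautology, so $\wit(\phi)$ and $\phi$ are logically equivalent and hence $\Exists{\overarrow{x}}\wit(\phi)$ and $\phi$ are $\Ttwelve$-equivalent, for $\overarrow{x}=\vars(\wit(\phi))\setminus\vars(\phi)=\{x_{1},\ldots,x_{m(\phi)}\}$. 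For part $(ii)$, assume $\wit(\phi)$ is $\Ttwelve$-satisfiable and take a $\Ttwelve$-interpretation $\A$ of it. If $|\s^{\A}|=1$ we are done, since then $\vars_{\s}(\wit(\phi))^{\A}=\s^{\A}$ automatically ($x_{1}$ is a variable of $\wit(\phi)$); otherwise $s^{\A}$ has no fixed points, and forming $W$ as above gives $2\leq |W|\leq 4\,|\vars(\phi)|$, hence $m(\phi)-|W|\geq 2$.

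Finally I construct a $\Ttwelve$-interpretation $\B$ with domain $\s^{\B}=W\sqcup U$, where $U$ has $m(\phi)-|W|$ fresh elements and $s^{\B}$ agrees with $s^{\A}$ on $W$ and turns $U$ into, say, a single $2$-cycle with the remaining $m(\phi)-|W|-2$ vertices all mapped by $s^{\B}$ into it (a legal shape: no fixed points, and $(s^{\B})^{4}=(s^{\B})^{2}$ on $U$). Then $\B\vDash\psi_{\neq}\wedge\psi_{\vee}^{2}$ and $|\s^{\B}|=m(\phi)=2g(2^{k})$, so by the description of the models of $\Ttwelve$ given above $\B$ is a $\Ttwelve$-interpretation; keeping $x^{\B}=x^{\A}$ for $x\in\vars(\phi)$ preserves $\B\vDash\phi$ (its terms still evaluate in $W$), and letting $x_{1},\ldots,x_{m(\phi)}$ enumerate $\s^{\B}$ bijectively gives $\vars_{\s}(\wit(\phi))^{\B}=\s^{\B}$ and $\B\vDash\wit(\phi)$, as required. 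The only delicate point, and the main obstacle, is this inflation step: one needs the computable orbit bound $|W|\leq 4\,|\vars(\phi)|$ coming from $\psi_{\vee}^{2}$, and one must be able to fill the gap $m(\phi)-|W|$ with a legal $\Ttwelve$-pattern — which is possible for every value $\geq 2$, something that the ``$+2$'' slack in the definition of $m(\phi)$ together with the computability of $g$ along powers of $2$ (property~\textbf{5}) is exactly what secures.
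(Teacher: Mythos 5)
Your proof is correct and follows essentially the same route the paper intends: the paper's own ``proof'' is only a pointer to Lemma 64 of the full version of \cite{CADE}, and your combination of a padding witness $\phi\wedge\bigwedge_{i}(x_{i}=x_{i})$ sized via property \textbf{5} of $g$ with the orbit-closure bound $|W|\leq 4|\vars(\phi)|$ coming from $\psi_{\vee}^{2}$ is exactly the technique the paper uses for $\Ttwelve$ elsewhere (compare the stable-finiteness proof, which forms $A=\bigcup_{a\in V^{\A}}s^{\mathbb{N}}(a)$ with $|A|\leq 4|V|$ and fills the gap to $2g(n)$ with fresh $2$-cycles). The details you supply — the trivial-model case, the $+2$ slack ensuring the gap can be filled with a legal pattern, and the bijective reassignment of the fresh variables — are all sound.
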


\begin{proof}
    This proof is similar to that of Lemma 64 in \cite{arxivCADE}.
\end{proof}

\begin{lemma}
    The $\Sigma_{s}$-theory $\Ttwelve$ is not strongly finitely witnessable with respect to its only sort.
\end{lemma}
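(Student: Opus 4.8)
The plan is to mimic the strategy used for $\Televen$ and $\Tten$: assume $\Ttwelve$ has a strong witness $\wit$, use it to build an arrangement whose satisfaction would force a $\Ttwelve$-interpretation of an impossible (finite) cardinality, and contradict property $(ii')$. The one new wrinkle compared with those cases is that the only sort of $\Ttwelve$ admits a trivial model, so I would start from $\phi := \neg(s(x)=x)$ rather than from a tautology. This $\phi$ is $\Ttwelve$-satisfiable (e.g.\ in any infinite $\Ttwelve$-interpretation), and since $\phi$ and $\Exists{\overarrow{x}}\wit(\phi)$ are $\Ttwelve$-equivalent while the trivial $\Ttwelve$-interpretation (whose $s$ is the identity) falsifies $\phi$, no $\Ttwelve$-interpretation that satisfies $\wit(\phi)$ can be trivial; in particular, every \emph{finite} $\Ttwelve$-interpretation satisfying $\wit(\phi)$ has cardinality $2g(q)$ for some $q\in\mathbb{N}\setminus\{0\}$.

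First I would fix a $\Ttwelve$-interpretation $\A\vDash\phi$, use $(i)$ to get $\A'$ (differing from $\A$ only on $\overarrow{x}=\vars(\wit(\phi))\setminus\vars(\phi)$) with $\A'\vDash\wit(\phi)$, put $V:=\vars(\wit(\phi))$, and let $\delta_{V}$ be the arrangement induced by $\A'$ on $V$, so that $\A'\vDash\wit(\phi)\wedge\delta_{V}$. Applying $(ii')$ gives a $\Ttwelve$-interpretation $\B\vDash\wit(\phi)\wedge\delta_{V}$ with $\s^{\B}=\vars_{\s}(\wit(\phi)\wedge\delta_{V})^{\B}=V^{\B}$; hence $\s^{\B}$ is finite, so by the remark above $|\s^{\B}|=2g(q)$ for some $q\geq 1$, and since $\B\vDash\delta_{V}$ the arrangement $\delta_{V}$ has exactly $|V^{\B}|=2g(q)$ equivalence classes.

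Next I would exploit the non-surjectivity of $g$ from \Cref{lem:g-exists}. Since $f$ is not computable there is $p>q$ with $f(p)=1$, so $g(p)=g(p-1)+f(p)+1=g(p-1)+2$; as $g$ is strictly increasing, $g(p-1)+1$ lies strictly between the consecutive values $g(p-1)$ and $g(p)$ of $g$, hence is not in the image of $g$. Set $k:=2\bigl(g(p-1)+1-g(q)\bigr)$, a strictly positive integer because $p-1\geq q$; take fresh variables $u_{1},\dots,u_{k}$, let $U:=V\cup\{u_{1},\dots,u_{k}\}$, and let $\delta_{U}$ be the arrangement agreeing with $\delta_{V}$ on $V$ and placing each $u_{i}$ in its own class, so $\delta_{U}$ has $2g(q)+k=2g(p-1)+2$ classes. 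The formula $\wit(\phi)\wedge\delta_{U}$ is $\Ttwelve$-satisfiable: extend $\B$ to $\B'$ by adjoining an infinite set $A$ disjoint from $\s^{\B}$, letting $s^{\B'}$ agree with $s^{\B}$ on $\s^{\B}$ and act on $A$ as a fixed-point-free involution (possible since $A$ is infinite), sending each $u_{i}$ injectively into $A$ and leaving all other variable values unchanged; then $\B'$ is an infinite $\Ttwelve$-interpretation satisfying $\psi_{\neq}\wedge\psi_{\vee}^{2}$ and $\B'\vDash\wit(\phi)\wedge\delta_{U}$. Applying $(ii')$ once more yields a $\Ttwelve$-interpretation $\C\vDash\wit(\phi)\wedge\delta_{U}$ with $\s^{\C}=\vars_{\s}(\wit(\phi)\wedge\delta_{U})^{\C}=U^{\C}$, whence $|\s^{\C}|=2g(p-1)+2$; but this is finite and $\neq 1$, while $2g(p-1)+2=2g(m)$ would force $g(m)=g(p-1)+1$ to be in the image of $g$, a contradiction. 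So $\Ttwelve$ has no strong witness.

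I expect the main obstacle to be the bookkeeping in the last step: exhibiting the infinite extension $\B'$ together with an interpretation of $s$ that meets $\psi_{\neq}$ and $\psi_{\vee}^{2}$ on the adjoined part, and checking that $\B'$ still satisfies $\wit(\phi)$ and the new disequalities of $\delta_{U}$ — together with the (routine but essential) observation that taking $\phi=\neg(s(x)=x)$ rules out the trivial-model case that would otherwise require separate treatment.
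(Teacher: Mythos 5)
Your proposal is correct, and the skeleton (pick a formula excluding the trivial model, extract a finite witnessed model $\B$ of cardinality $2g(q)$, enlarge the arrangement by fresh singleton classes, show the enlarged formula is still satisfiable, and derive a cardinality that no $\Ttwelve$-model can have) is exactly the paper's. Where you diverge is in the final counting argument. You transplant the $\Tten$/$\Televen$ mechanism: use non-surjectivity of $g$ (via $f(p)=1$, so $g(p-1)+1\notin\mathrm{im}(g)$) and add $k=2(g(p-1)+1-g(q))$ fresh variables so the witnessed model would need $2(g(p-1)+1)$ elements, which is even but not of the form $2g(m)$. The paper instead exploits the factor of $2$ in $\Ttwelve$'s cardinalities: it adds a \emph{single} fresh variable, forcing a witnessed model of the odd cardinality $2g(p)+1>1$, which is immediately impossible — no appeal to property \textbf{3} of \Cref{lem:g-exists} needed. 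A second cosmetic difference: the paper exhibits a \emph{finite} satisfying extension of $\B$ (of cardinality $2g(p+1)$, with the case split on whether $g(p+1)-g(p)$ is $1$ or $2$), whereas you pass to an infinite extension with a fixed-point-free involution on the adjoined part; both verifications of $\psi_{\neq}\wedge\psi_{\vee}^{2}$ go through. Your route is a correct and uniform generalization of the neighbouring proofs; the paper's is shorter for this particular theory because parity does the work.
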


\begin{proof}
    Suppose $\Ttwelve$ has a strong witness $\wit$. We then take the $\Ttwelve$-satisfiable quantifier-free formula $\phi=\neg(x_{1}=x_{2})$ (satisfiable by all $\Ttwelve$-interpretations except the trivial one), and a $\Ttwelve$-interpretation $\A$ that satisfies $\phi$. From the definition of a strong witness we must have that $\A$ satisfies $\Exists{\overarrow{X}}\wit(\phi)$ (for $\overarrow{x}=\vars(\wit(\phi))\setminus\vars(\phi)$), as $\phi$ and $\Exists{\overarrow{x}}\wit(\phi)$ are $\Ttwelve$-equivalent, and thus there is a $\Ttwelve$-interpretation $\A^{\prime}$ (differing from $\A$ at most on the values assigned to the variables in $\overarrow{x}$) that satisfies $\wit(\phi)$. Let $V=\vars(\wit(\phi))$, and $\delta_{V}$ be the arrangement on $V$ induced by $\A^{\prime}$, and then we know that $\A^{\prime}$ satisfies $\wit(\phi)\wedge\delta_{V}$: again from the definition of a strong witness, there must exist a $\Ttwelve$-interpretation $\B$ that satisfies $\wit(\phi)\wedge\delta_{V}$ with $\s^{\B}=\vars(\wit(\phi)\wedge\delta_{V})^{\B}=V^{\B}$.

    As $\vars(\wit(\phi)\wedge\delta_{V})$ must be finite, $\s^{\B}$ must be finite; and since $\B$ satisfies $\wit(\phi)$, and thus $\Exists{\overarrow{x}}\wit(\phi)$ and $\phi=\neg(x_{1}=x_{2})$, $|\s^{\B}|$ is thus equal to $2g(p)$ for some $p\in\mathbb{N}\setminus\{0\}$: consider then a fresh variable $x$, define the set $U=V\cup\{x\}$, and take the arrangement $\delta_{U}$ obtained from $\delta_{V}$ by making $x$ different from all variables in $V$. We state that $\wit(\phi)\wedge\delta_{U}$ is $\Ttwelve$-satisfiable: indeed, consider the interpretation $\C$ with
    \[\s^{\C}=\begin{cases}\s^{\B}\cup \{a_{1},a_{2}\} & \text{if $g(p+1)=g(p)+1$,}\\
       \s^{\B}\cup\{a_{1},a_{2},a_{3},a_{4}\} & \text{if $g(p+1)=g(p)+2$,} 
    \end{cases}\]
    for elements $\{a_{1},a_{2},a_{3},a_{4}\}\notin \s^{\B}$, so that $|\s^{\C}|=2g(p+1)$; $s^{\C}(a)=s^{\B}(a)$ for all $a\in \s^{\B}$, $s^{\C}(a_{1})=a_{2}$, $s^{\C}(a_{2})=a_{1}$, and, if $a_{3},a_{4}\in\s^{\C}$, $s^{\C}(a_{3})=a_{4}$ and $s^{\C}(a_{4})=a_{3}$ (so that $\C$ satisfies both $\psi_{\neq}$ and $\psi^{2}_{\vee}$, making of $\C$ a $\Ttwelve$-interpretation); and $x^{\C}=a_{1}$, and $y^{\C}=y^{\B}$ for all other variables. This way $\C$ clearly satisfies $\wit(\phi)\wedge\delta_{U}$, meaning there should be a $\Ttwelve$-interpretation $\D$ that satisfies $\wit(\phi)\wedge\delta_{U}$ with $\s^{\D}=\vars(\wit(\phi)\wedge\delta_{U})^{\D}$: but this is impossible. Indeed, $|\vars(\wit(\phi)\wedge\delta_{U})^{\D}|=|U|=2g(p)+1$, which cannot equal either $1$ or $2g(q)$ for a $q\in\mathbb{N}\setminus\{0\}$.
\end{proof}

\begin{lemma}
    The $\Sigma_{s}$-theory $\Ttwelve$ is not convex with respect to its only sort.
\end{lemma}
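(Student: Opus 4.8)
The plan is to exhibit a single conjunction of literals together with a two‑disjunct entailment witnessing the failure of convexity, in the same spirit as the arguments already given for $\Teight$ and $\Tfour$, reading the valid disjunction directly off the defining formula $\psi_{\vee}^{2}$. Concretely, I would take $\phi = (y = s^{2}(x)) \wedge (z = s^{4}(x))$, with $x,y,z$ of the only sort, and first establish
\[(y = s^{2}(x)) \wedge (z = s^{4}(x)) \vDash_{\Ttwelve} (x = z) \vee (y = z).\]
This is immediate: in the trivial $\Ttwelve$-interpretation $\psi_{=1}$ holds and all three variables denote the same element; and in any non‑trivial $\Ttwelve$-interpretation $\A$ the formula $\psi_{\vee}^{2}$ holds, so instantiating its universally quantified body at the value of $x$ yields either $(s^{\A})^{4}(x^{\A}) = (s^{\A})^{2}(x^{\A})$, i.e. $z^{\A} = y^{\A}$, or $(s^{\A})^{4}(x^{\A}) = x^{\A}$, i.e. $z^{\A} = x^{\A}$. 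Either way one of the two disjuncts $(x=z)\vee(y=z)$ is satisfied, so the entailment holds.

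Next I would show that neither disjunct is individually $\Ttwelve$-valid under $\phi$, by producing two $\Ttwelve$-interpretations with exactly four elements — four rather than the more obvious three, since every non‑trivial finite $\Ttwelve$-interpretation has $2g(n)$ elements and $2g(1) = 4$ because $g(1)=1+f(1)=2$. For the failure of $(x = z)$: let $\s^{\A} = \{d,p,q,r\}$ with $s^{\A}(d)=p$, $s^{\A}(p)=q$, $s^{\A}(q)=r$, $s^{\A}(r)=q$ (a tail of length two feeding a $2$-cycle, i.e. the second scenario of \Cref{possible scenarios two}), and $x^{\A}=d$; a quick element‑by‑element check shows $s^{\A}$ satisfies $\psi_{\neq}$ and $\psi_{\vee}^{2}$, and $|\s^{\A}|=4=2g(1)$, so $\A$ is a genuine $\Ttwelve$-interpretation, while $y^{\A}=(s^{\A})^{2}(d)=q$ and $z^{\A}=(s^{\A})^{4}(d)=q$, so $z^{\A}\neq d=x^{\A}$. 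For the failure of $(y = z)$: let $\s^{\B}=\{a,b,c,d\}$ with $s^{\B}$ the $4$-cycle $a\to b\to c\to d\to a$ and $x^{\B}=a$; then $\psi_{\neq}$ holds and $\psi_{\vee}^{2}$ holds because $(s^{\B})^{4}$ is the identity, and $|\s^{\B}|=4=2g(1)$, so $\B$ is a $\Ttwelve$-interpretation, while $y^{\B}=(s^{\B})^{2}(a)=c\neq a=(s^{\B})^{4}(a)=z^{\B}$. Combining this with the entailment above gives that $\Ttwelve$ is not convex.

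The argument is essentially routine, and the one place where care is needed — and where a naive attempt breaks — is the cardinality bookkeeping: the intuitively minimal counterexample for the first case is a three‑element "tail into a $2$-cycle", but that is not a $\Ttwelve$-interpretation, because $\psi_{\neq}\wedge\psi_{\vee}^{2}$ together with the axiomatization force every finite model to have an even number of elements of the form $2g(n)$. One must therefore pad to four elements while preserving $\psi_{\neq}$ and $\psi_{\vee}^{2}$, and note that $4$ does lie in $\{2g(n):n\in\mathbb{N}\setminus\{0\}\}$ since $g(1)=2$. Verifying that the two padded interpretations satisfy $\psi_{\vee}^{2}$ at each element is the small computation at the heart of the proof; everything else follows formally.
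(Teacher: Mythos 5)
Your proposal is correct and follows essentially the same route as the paper: read the valid disjunction $(x=z)\vee(y=z)$ off $\psi_{\vee}^{2}$ (with the trivial model handled separately), then falsify each disjunct with a four-element $\Ttwelve$-interpretation, noting that $4=2g(1)$ is an admissible cardinality. Your conjunction $(y=s^{2}(x))\wedge(z=s^{4}(x))$ is interchangeable with the paper's $(y=s^{2}(x))\wedge(z=s^{2}(y))$, and your element-by-element verification that both countermodels satisfy $\psi_{\neq}\wedge\psi_{\vee}^{2}$ is exactly the care the argument requires.
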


\begin{proof}
    It is clear that 
    \[(y=s^{2}(x))\wedge(z=s^{2}(y))\vDash_{\Ttwelve}(x=z)\vee(y=z):\]
    this is built in the axioms through the formula $\psi^{2}_{\vee}$ for those interpretations with more than one element, and obvious for the trivial interpretation as there $s$ is interpreted as the identity. Let $\phi$ denote $(y=s^{2}(x))\wedge(z=s^{2}(y))$: we can have, however, neither $\phi\vDash_{\Ttwelve}x=z$ nor $\phi\vDash_{\Ttwelve}y=z$. Indeed, define the $\Ttwelve$-interpretations $\A$ and $\B$ with: $\s^{\A}=\s^{\B}=\{a_{0}, a_{1}, a_{2}, a_{3}\}$ (so $|\s^{\A}|=|\s^{\B}|=4=2g(2)$); 
    \[s^{\A}(a_{n})=\begin{cases}
        a_{n+1} & \text{if $n\in\{0,1,2\}$;}\\
        a_{1} & \text{if $n=3$;}\\
    \end{cases}
    \quad
    s^{\B}(a_{n})=\begin{cases}
        a_{n+1} & \text{if $n\in\{0,1,2\}$;}\\
        a_{2} & \text{if $n=3$;}\\
    \end{cases}\]
    and $x^{\A}=z^{\A}=x^{\B}=a_{0}$ and $y^{\A}=y^{\B}=z^{\B}=a_{2}$. This way, $\A$ falsifies $y=z$, while $\B$ falsifies $x=z$.
\end{proof}

\begin{lemma}
    The $\Sigma_{s}$-theory $\Ttwelve$ is stably finite, and thus has the finite model property, with respect to its only sort.
\end{lemma}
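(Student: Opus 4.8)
The plan is to establish stable finiteness directly; the finite model property is then immediate, being a weakening of it. So fix a quantifier-free $\phi$ and a $\Ttwelve$-interpretation $\A$ satisfying $\phi$. If $\s^{\A}$ is finite there is nothing to prove, since $\A$ itself serves as the required interpretation; hence the real case is $\s^{\A}$ infinite, in which case the axiomatization of $\Ttwelve$ forces $\A\vDash\psi_{\neq}$ and $\A\vDash\psi_{\vee}^{2}$, and it suffices to produce \emph{some} finite $\Ttwelve$-interpretation $\B$ satisfying $\phi$, as then automatically $|\s^{\B}|<\aleph_{0}\leq|\s^{\A}|$.

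First I would pass to the substructure of $\A$ generated by the variables of $\phi$. Put $V=\vars(\phi)$ and $W=\{(s^{\A})^{k}(v^{\A}):v\in V,\ k\in\mathbb{N}\}\subseteq\s^{\A}$. The key observation is that $\psi_{\vee}^{2}$ forces $(s^{\A})^{4}(a)\in\{a,(s^{\A})^{2}(a)\}$ for every $a$, so the $s^{\A}$-orbit of each element is contained in $\{a,s^{\A}(a),(s^{\A})^{2}(a),(s^{\A})^{3}(a)\}$; hence $W$ is finite, with at most $4|V|$ elements. Since $W$ is closed under $s^{\A}$, the pair $(W,s^{\A}|_{W})$ is a legitimate $\Sigma_{s}$-interpretation, and because $\psi_{\neq}$ and $\psi_{\vee}^{2}$ are universal sentences they are inherited by $W$. (If $V=\emptyset$ then $\phi$, being satisfiable and variable-free, is $\Ttwelve$-valid and the trivial interpretation works; so assume $V\neq\emptyset$, whence $|W|\geq 2$ as $s^{\A}$ has no fixed points.)

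Next I would pad $W$ up to an admissible cardinality. Using that $g$ is unbounded, choose $n$ with $2g(n)\geq|W|+2$, and let $\B$ have domain $W\sqcup E$ with $E$ a fresh set of $2g(n)-|W|$ elements on which $s^{\B}$ is defined so that $E$ splits into disjoint two-element cycles $\{c,c'\}$ (with $s^{\B}(c)=c'$ and $s^{\B}(c')=c$), plus, exactly when $2g(n)-|W|$ is odd, one three-element component $\{a,b,b'\}$ with $s^{\B}(a)=b$, $s^{\B}(b)=b'$, $s^{\B}(b')=b$; such a partition exists because $2g(n)-|W|\in\{0\}\cup\{m:m\geq2\}$, and every such value is realizable in this way. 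Setting $s^{\B}|_{W}=s^{\A}|_{W}$, the interpretation $\B$ has no fixed points and satisfies $\psi_{\vee}^{2}$ component by component, while $|\s^{\B}|=2g(n)$, so $\B$ is indeed a $\Ttwelve$-interpretation. Interpreting $v^{\B}=v^{\A}$ for $v\in V$ (these lie in $W$) and arbitrarily otherwise, an easy induction on $k$ using the $s^{\A}$-closure of $W$ gives $(s^{\B})^{k}(v^{\B})=(s^{\A})^{k}(v^{\A})$; hence $\A$ and $\B$ assign the same value to every term occurring in $\phi$, so $\B\vDash\phi$, and $\B$ is finite — which completes the proof.

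I expect the only mildly delicate step to be the cardinality bookkeeping of the third paragraph: one must arrange the added elements so that the total size lands in the image of $2g$, and this is exactly the place where unboundedness of $g$ enters; the ``forbidden'' value $1$ (there is no one-element padding component, since that would be a fixed point) is sidestepped by demanding $2g(n)\geq|W|+2$. Everything else is routine — preservation of universal sentences under generated substructures, the bound of $4$ on orbit sizes, and the fact that $\Ttwelve$-interpretations are determined on $\phi$'s terms once they agree with $\A$ on the $s$-closure of $V^{\A}$.
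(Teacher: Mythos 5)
Your proof is correct and follows essentially the same route as the paper's: restrict to the $s$-closure of the variable values (finite of size at most $4|V|$ thanks to $\psi_{\vee}^{2}$), then pad up to $2g(n)$ elements using two-element cycles plus one three-element component when the parity requires it. You are in fact slightly more careful than the paper in dispatching the finite-$\A$ case up front, which is what actually guarantees the cardinality inequality $|\s^{\B}|\leq|\s^{\A}|$ demanded by stable finiteness.
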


\begin{proof}
    Let $\phi$ be a quantifier-free formula, and $\A$ a $\Ttwelve$-interpretation that satisfies $\phi$: without loss of generality we may assume $|\s^{\A}|>1$, and thus that $\A$ satisfies $\psi_{\neq}$ and $\psi_{\vee}^{2}$. Notice that $s^{\mathbb{N}}(a)=\{a\}\cup\{(s^{\A})^{n}(a) : n\in\mathbb{N}\setminus\{0\}\}$, for $a\in\s^{\A}$, equals one of the cases found in \Cref{possible scenarios two} as a simple exhaustive search can show. Define then $V=\vars(\phi)$ and $A=\bigcup_{a\in V^{\A}}s^{\mathbb{N}}(a)$, which are necessarily finite sets (being true that $1<2|V|\leq |A|\leq 4|V|$): we then take an $n\in\mathbb{N}\setminus\{0\}$ such that $2g(n)>|A|+1$. We define an interpretation $\B$ by making:
\[\s^{\B}=\begin{cases}
    A\cup\{a_{1}, b_{1}, \ldots, a_{m},b_{m}\} & \text{if $2g(n)-|A|$ is even (and equals $2m$);}\\
    A\cup\{c_{1},c_{2},c_{3}, a_{1}, b_{1},\ldots,a_{m},b_{m}\} & \text{if $2g(n)-|A|$ is odd (and equals $2m+3$),}
\end{cases}\]
where, in the first case $m>1$, and in the second $m\geq 0$ (as $2g(n)>|A|+1$), and the $a_{i}$, $b_{j}$ and $c_{k}$ are elements not already in $\s^{\A}$ (meaning $|\s^{\B}|=2g(n)$); $s^{\B}(a)=s^{\A}(a)$ for all $a\in A$, $s^{\B}(a_{i})=b_{i}$, $s^{\B}(b_{i})=a_{i}$, $s^{\A}(c_{1})=c_{2}$, $s^{\B}(c_{2})=c_{3}$ and $s^{\B}(c_{3})=c_{2}$ (meaning $\B$ satisfies $\psi_{\neq}$ and $\psi^{2}_{\vee}$, and is therefore a $\Ttwelve$-interpretation); and $x^{\B}=x^{\A}$ for all variables $x\in V$, and arbitrary otherwise. This way $\B$ clearly satisfies $\phi$, and is of course finite.
\end{proof}

\begin{lemma}
    The $\Sigma_{s}$-theory $\Ttwelve$ does not have a computable minimal model function with respect to its only sort.
\end{lemma}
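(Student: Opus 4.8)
The plan is to mimic the non-computability arguments used for $\Tone$ (\Cref{cmmf:tone}) and for the finite non-trivial fragment of $\Ttwelve$ that is governed by $g$, adapting the scaling factor to account for the $2g(n)$ cardinalities forced by the constraints $\psi_{\neq}\wedge\psi^{2}_{\vee}$. Concretely, I would argue by contradiction: assume $\Ttwelve$ has a computable minimal model function $\minmod_{\Ttwelve,S}$ with $S=\{\s\}$, and build a computable function $h:\mathbb{N}\setminus\{0\}\rightarrow\mathbb{N}\setminus\{0\}$ that provably agrees with (a computable rescaling of) $g$, contradicting property \textbf{4} of \Cref{lem:g-exists} that $g$ is non-computable.

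The key steps, in order, are as follows. First I would note the structural fact about $\Ttwelve$ already recorded in the excerpt: a finite non-trivial $\Ttwelve$-interpretation $\A$ has $|\s^{\A}|=2g(n)$ for some $n\in\mathbb{N}\setminus\{0\}$, and for every such $n$ there is indeed a $\Ttwelve$-interpretation of that cardinality (take disjoint $2$-cycles), while the only $\Ttwelve$-interpretations of cardinality $<$ the next value $2g(n+1)$ (other than infinite ones) are those of cardinality $\le 2g(n)$. Since $g$ is increasing, the finite cardinalities of $\Ttwelve$-models form the strictly increasing sequence $1,2g(1),2g(2),\ldots$ together with $\aleph_{0}$. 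Second, I would set $h(1)=2g(1)=4$ and, for $n\ge 1$, define $h(n+1)$ as the value $m$ with $(m)\in\minmod_{\Ttwelve,S}(\NNNEQ{x}{h(n)+1})$ for $x_{i}$ of sort $\s$; this is well-defined and computable because $\NNNEQ{x}{h(n)+1}$ is $\Ttwelve$-satisfiable (by an infinite model, hence $\minmod$ returns a nonempty finite set, a singleton in the one-sorted case) and $\minmod_{\Ttwelve,S}$ is assumed computable. Third, by induction I would show $h(n)=2g(n)$: the least-cardinality $\Ttwelve$-model satisfying $\NNNEQ{x}{h(n)+1}=\NNNEQ{x}{2g(n)+1}$ is one whose domain is the smallest admissible cardinality $\ge 2g(n)+1$, and since the admissible cardinalities above $2g(n)$ are $2g(n+1),2g(n+2),\ldots,\aleph_{0}$ and $g$ is increasing, this is exactly $2g(n+1)$; one also checks $2g(n)+1$ itself is never an admissible cardinality (it is odd and $>1$), so no smaller model works. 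Hence $h(n+1)=2g(n+1)$. Finally, $h$ computable forces $g(n)=h(n)/2$ computable, contradicting \Cref{lem:g-exists}, which completes the proof.

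The main obstacle I anticipate is the bookkeeping in the inductive step: one must be careful that $\NNNEQ{x}{2g(n)+1}$ really does require at least $2g(n)+1$ elements (straightforward, it is a conjunction of disequalities among $2g(n)+1$ variables), that the trivial one-element model and any model of cardinality $\le 2g(n)$ genuinely fail to satisfy it, and that no ``odd'' or otherwise non-admissible cardinality sneaks in between $2g(n)$ and $2g(n+1)$ — all of which follow from the explicit description of $\Ttwelve$'s models but should be spelled out. Everything else is a routine transcription of the $\Tone$ argument, so I do not expect genuine difficulty beyond this.
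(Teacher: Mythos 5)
Your proposal is correct and follows essentially the same route as the paper's proof: assume computability, set $h(1)=4$ and recursively extract $h(n+1)$ from $\minmod_{\Ttwelve,\{\s\}}(\NNNEQ{x}{h(n)+1})$, prove $h(n)=2g(n)$ by induction, and contradict the non-computability of $g$. The paper merely asserts that "it is easy to prove that $h(n)=2g(n)$," whereas you spell out the needed facts about the admissible cardinalities $1,2g(1),2g(2),\ldots,\aleph_{0}$; this extra care is welcome but does not change the argument.
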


\begin{proof}
    Suppose $\Ttwelve$ has a computable minimal model function, and define $h:\mathbb{N}\setminus\{0\}\rightarrow\mathbb{N}$ by making $h(1)=4$ and 
    \[h(n+1)\in\minmod_{\Ttwelve,S}(\NNNEQ{x}{h(n)+1}),\]
    for $S=\{\s\}$ and $n\geq 2$ ($\NNNEQ{x}{h(n)+1}$ being $\Ttwelve$-satisfiable, given the theory has infinite models): this is, of course, a computable function, as $\minmod_{\Ttwelve,S}$ is computable. At the same time, it is easy to prove that $h(n)=2g(n)$ for all $n\in\mathbb{N}\setminus\{0\}$, meaning that $h$ is not computable, giving us the desired contradiction.
\end{proof}

\subsection{\tp{$\Tthirteen$}{Tthirteen}}

$\Tthirteen$ is the $\Sigma_{s}^{2}$-theory with axiomatization
\[\{(\psi^{\s}_{=1}\wedge\psi^{\s_{2}}_{=1})\vee(\psi_{\neq}\wedge(\psi^{\s}_{=2}\wedge\psi^{\s_{2}}_{\geq n})\vee(\psi^{\s}_{\geq g(n+1)}\wedge\psi^{\s_{2}}_{\geq g(n+1)})\vee\bigvee_{i=2}^{n}(\psi^{\s}_{=g(i)}\wedge\psi^{\s_{2}}_{=g(i)})) : n\in\mathbb{N}\setminus\{0,1\}\},\]
where $\psi_{\neq}=\Forall{x}\neg(s(x)=x)$. Essentially, $\Tthirteen$ can be seen as having four classes of models $\A$: the trivial one, with $|\s^{\A}|=|\s_{2}^{\A}|=1$, and $s^{\A}$ the identity; those with $|\s^{\A}|=2$ and $|\s_{2}^{\A}|\geq\aleph_{0}$, and $s^{\A}$ with no fixed points; those with $|\s^{\A}|=|\s_{2}^{\A}|=g(n)$, for some $n\in\mathbb{N}\setminus\{0,1\}$, and $s^{\A}$ with no fixed points; and finally, those with $|\s^{\A}|,|\s_{2}^{\A}|\geq\aleph_{0}$, and again $s^{\A}$ with no fixed points.

\begin{lemma}
    The $\Sigma_{s}^{2}$-theory $\Tthirteen$ is not stably infinite, and thus not smooth, with respect to $\{\s,\s_{2}\}$.
\end{lemma}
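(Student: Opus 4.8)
The plan is to pin down a single quantifier-free $\Sigma_{s}^{2}$-formula that is $\Tthirteen$-satisfiable yet forces both domains to be finite, which immediately contradicts stable infiniteness with respect to $\{\s,\s_{2}\}$. The natural candidate is $\varphi := s(x)=x$ with $x$ a variable of sort $\s$.

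First I would verify that $\varphi$ is $\Tthirteen$-satisfiable: the trivial interpretation $\A$ with $|\s^{\A}|=|\s_{2}^{\A}|=1$, so that $s^{\A}$ is forced to be the identity, satisfies $\psi^{\s}_{=1}\wedge\psi^{\s_{2}}_{=1}$, hence the first disjunct of every axiom of $\Tthirteen$, and obviously $\A\vDash s(x)=x$. Second, I would show that $\A$ is, up to isomorphism, the only $\Tthirteen$-interpretation satisfying $\varphi$: let $\B$ be any $\Tthirteen$-interpretation with $\B\vDash s(x)=x$ and assume $\B$ is non-trivial, that is $(|\s^{\B}|,|\s_{2}^{\B}|)\neq(1,1)$. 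Then $\B\not\vDash\psi^{\s}_{=1}\wedge\psi^{\s_{2}}_{=1}$, so, instantiating the axiom schema at $n=2$, the interpretation $\B$ must satisfy the second disjunct, which has $\psi_{\neq}=\Forall{x}\neg(s(x)=x)$ as a conjunct; this contradicts $\B\vDash s(x)=x$, using the very assignment that witnesses $\varphi$. Hence every $\Tthirteen$-model of $\varphi$ is trivial, so it has $|\s^{\B}|=1<\aleph_{0}$ and $|\s_{2}^{\B}|=1<\aleph_{0}$, and therefore no $\Tthirteen$-interpretation satisfies $\varphi$ with $|\s^{\B}|\geq\aleph_{0}$; thus $\Tthirteen$ is not stably infinite with respect to $\{\s,\s_{2}\}$, and in fact not even with respect to $\{\s\}$ or $\{\s_{2}\}$.

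For the ``thus not smooth'' clause I would invoke the standard implication that smoothness with respect to $S$ entails stable infiniteness with respect to $S$: given a $\Tthirteen$-satisfiable quantifier-free $\psi$, use \Cref{LowenheimSkolem} to pick a $\Tthirteen$-interpretation satisfying $\psi$ with at most countable domains, then apply smoothness with the constant cardinal function $\kappa\equiv\aleph_{0}$ to obtain a $\Tthirteen$-interpretation satisfying $\psi$ with all sorts in $S$ of size $\aleph_{0}$. Since we have just shown that $\Tthirteen$ is not stably infinite with respect to $\{\s,\s_{2}\}$, it cannot be smooth with respect to $\{\s,\s_{2}\}$. I expect no real obstacle here; the only delicate point is reading the axiom schema correctly, namely that $\psi_{\neq}$ is a conjunct governing the whole non-trivial disjunction, so that all non-trivial $\Tthirteen$-interpretations are fixed-point-free, which is exactly what the informal classification of the models of $\Tthirteen$ in the text records.
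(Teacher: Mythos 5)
Your proposal is correct and matches the paper's own (much terser) argument: the formula $s(x)=x$ is satisfied only by the trivial $\Tthirteen$-interpretation, since every non-trivial model satisfies $\psi_{\neq}$, so no infinite model satisfies it, and smoothness implies stable infiniteness. You simply spell out the details the paper leaves as ``obvious.''
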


\begin{proof}
    Obvious, as $s(x)=x$ is satisfied by the trivial $\Tthirteen$-interpretation $\A$ with $|\s^{\A}|=|\s_{2}^{\A}|=1$, while all other $\Tthirteen$-interpretations satisfy $\psi_{\neq}$.
\end{proof}

\begin{lemma}
    The $\Sigma_{s}^{2}$-theory $\Tthirteen$ is finitely witnessable with respect to $\{\s,\s_{2}\}$.
\end{lemma}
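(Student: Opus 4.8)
The plan is to mimic the finite-witnessability arguments already used for $\Ttwo$, $\Tten$ and $\Televen$, with the extra care needed because $\Tthirteen$ lives over a signature with the function symbol $s$, and because in every non-trivial finite model of $\Tthirteen$ the two sorts must have the \emph{same} cardinality $g(n)$ with $n\geq 2$.

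First I would fix a quantifier-free $\Sigma_{s}^{2}$-formula $\phi$ and read off two numbers computable from $\phi$: the number $|\vars_{\s_{2}}(\phi)|$ of its variables of sort $\s_{2}$, and the number $D$ of distinct subterms of $\phi$ of sort $\s$ (which bounds how many distinct elements of sort $\s$ a model of $\phi$ can be forced to exhibit). Let $M$ be their maximum, taken to be at least $1$. Using property $\textbf{5}$ of $g$ from \Cref{lem:g-exists}, namely that $g(2^{k})=3\times 2^{k-1}$ and hence is computable even though $g$ is not, I would pick the least $k$ with $3\times 2^{k-1}\geq M$ and set $N=g(2^{k})$; this is at least $3$, hence a legal cardinality for a non-trivial finite $\Tthirteen$-interpretation, and it dominates both the number of $\s_{2}$-variables and the number of forced $\s$-elements. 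The witness would then be
\[\wit(\phi)=\phi\wedge\bigwedge_{i=1}^{N}(x_{i}=x_{i})\wedge\bigwedge_{i=1}^{N}(u_{i}=u_{i}),\]
with the $x_{i}$ fresh of sort $\s$ and the $u_{i}$ fresh of sort $\s_{2}$; it is computable, and since the added conjuncts are tautologies, $\phi$ and $\Exists{\overarrow{x}}\wit(\phi)$ are $\Tthirteen$-equivalent for $\overarrow{x}=\vars(\wit(\phi))\setminus\vars(\phi)$.

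For the second witness condition I would take a $\Tthirteen$-interpretation $\A$ satisfying $\wit(\phi)$, equivalently $\phi$, and split on whether $\A$ is the trivial interpretation. If it is, $\A$ already witnesses $\wit(\phi)$, since $\s^{\A}$ and $\s_{2}^{\A}$ are singletons and $\wit(\phi)$ has a variable of each sort. If $\A$ is non-trivial, then $s^{\A}$ has no fixed points, and I would build $\B$ by enlarging $\A$: take $\s^{\B}$ to consist of the values of the $\s$-subterms of $\phi$ together with new elements up to cardinality $N$, and $\s_{2}^{\B}$ to consist of the values of the $\s_{2}$-variables of $\phi$ together with new elements up to cardinality $N$; keep the original variables pointing where they did in $\A$, let the $x_{i}$ (resp.\ $u_{i}$) run bijectively over $\s^{\B}$ (resp.\ $\s_{2}^{\B}$), interpret all remaining variables arbitrarily, and extend $s^{\A}$ to $s^{\B}$ by keeping it on the values on which $\phi$ actually probes $s$ and choosing it on the remaining elements of $\s^{\B}$ so that no fixed point appears. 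An induction on subterms then gives $t^{\B}=t^{\A}$ for every subterm $t$ of $\phi$, so $\B\models\phi$; and $\B$ is a genuine $\Tthirteen$-interpretation because $|\s^{\B}|=|\s_{2}^{\B}|=N=g(2^{k})$ with $2^{k}\geq 2$ and $s^{\B}$ has no fixed points. By bijectivity of the fresh variables, $\s^{\B}=\vars_{\s}(\wit(\phi))^{\B}$ and $\s_{2}^{\B}=\vars_{\s_{2}}(\wit(\phi))^{\B}$, so $\wit$ is a witness.

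The main obstacle I expect is the bookkeeping around the function symbol: one has to check that $s^{\A}$, restricted to the finite set of values attained by the $\s$-subterms of $\phi$, extends to a fixed-point-free function on the padded domain $\s^{\B}$ of size $N$, and simultaneously that padding sort $\s$ to $N$ is compatible with padding sort $\s_{2}$ to the \emph{same} value $N=g(n)$ --- this coupling of the two cardinalities is precisely what the axioms of $\Tthirteen$ impose. Once $N$ is chosen as a value of $g$ large enough for both sorts, it goes through much as in the $\Ttwelve$ argument (which refers to Lemma 64 of \cite{arxivCADE}); the genuinely new point over the empty-signature cases ($\Ttwo$, $\Tten$, $\Televen$) is only that the extension of $s$ to the new elements must never create a fixed point, which is why we keep $N\geq 3$ and exploit that a non-trivial $\A$ already carries a fixed-point-free $s^{\A}$. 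The degenerate situations --- $\phi$ having no variable of some sort, or being satisfiable only in the trivial model --- are absorbed by the two added conjuncts and by the trivial-model branch of the split.
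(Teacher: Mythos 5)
Your sketch is correct and follows exactly the padding technique the paper uses for its other finite-witnessability lemmas (e.g.\ for $\Ttwo$, $\Tten$, $\Televen$, $\Tthreetwo$): conjoin a tautology with $g(2^{k})=3\times 2^{k-1}$ fresh variables of each sort, which is computable by property $\textbf{5}$ of $g$, then enlarge a satisfying non-trivial model to one where both domains have exactly that common cardinality while extending $s$ without creating fixed points. The paper's own proof is just a pointer to Lemma 64 of \cite{arxivCADE}, and your reconstruction, including the trivial-model branch and the care about the coupled cardinalities and the fixed-point-free extension of $s$, is a faithful instance of that argument.
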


\begin{proof}
    This proof is similar to that of Lemma $64$ in \cite{arxivCADE}.
\end{proof}

\begin{lemma}
    The $\Sigma_{s}^{2}$-theory $\Tthirteen$ is not strongly finitely witnessable with respect to $\{\s,\s_{2}\}$.
\end{lemma}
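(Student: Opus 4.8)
The plan is to adapt the strong-witness impossibility argument already used above for $\Tthreetwo$ (and, before it, for $\Ttwo$, $\Tten$ and $\Televen$), relying on the structural fact that every \emph{finite} $\Tthirteen$-interpretation $\A$ has $|\s^{\A}|=|\s_{2}^{\A}|$: it is either the trivial model, with both domains of size $1$, or a finite non-trivial model, with both domains of size $g(n)$ for some $n\geq 2$. The extra $(2,\aleph_{0})$-class of $\Tthirteen$-models is irrelevant here, as it contains no finite member. Assume, for contradiction, that $\Tthirteen$ has a strong witness $\wit$, and take the $\Tthirteen$-satisfiable formula $\phi=\neg(x_{1}=x_{2})$ with $x_{1},x_{2}$ of sort $\s$; the point of choosing a formula not satisfied by the trivial model is that every $\Tthirteen$-interpretation satisfying $\phi$ is non-trivial, hence satisfies $\psi_{\neq}$, which is what makes the inflation step below go through in the presence of the function symbol.

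First I would pick a $\Tthirteen$-interpretation $\A$ satisfying $\phi$; since $\phi$ and $\Exists{\overarrow{x}}\wit(\phi)$ are $\Tthirteen$-equivalent, for $\overarrow{x}=\vars(\wit(\phi))\setminus\vars(\phi)$, there is a $\Tthirteen$-interpretation $\A'$, differing from $\A$ only on $\overarrow{x}$, that satisfies $\wit(\phi)$. Letting $V=\vars(\wit(\phi))$ and $\delta_{V}$ the arrangement on $V$ induced by $\A'$, we get $\A'\vDash\wit(\phi)\wedge\delta_{V}$, so by strong finite witnessability there is a $\Tthirteen$-interpretation $\B$ satisfying $\wit(\phi)\wedge\delta_{V}$ with $\s^{\B}=\vars_{\s}(\wit(\phi)\wedge\delta_{V})^{\B}$ and $\s_{2}^{\B}=\vars_{\s_{2}}(\wit(\phi)\wedge\delta_{V})^{\B}$. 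Both domains of $\B$ are then finite, and $\B$ is non-trivial since $\B\vDash\phi$; hence $|\s^{\B}|=|\s_{2}^{\B}|=g(q)$ for some $q\geq 2$ and $s^{\B}$ has no fixed points. Moreover, under $\delta_{V}$ the number of $\s$-classes equals $|\s^{\B}|=g(q)$ and the number of $\s_{2}$-classes equals $|\s_{2}^{\B}|=g(q)$.

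Next I would introduce one fresh $\s$-variable $x$, put $U=V\cup\{x\}$, and let $\delta_{U}$ be the arrangement on $U$ that agrees with $\delta_{V}$ and places $x$ in its own class, so $\delta_{U}$ has $g(q)+1$ many $\s$-classes and still $g(q)$ many $\s_{2}$-classes. I would then verify that $\wit(\phi)\wedge\delta_{U}$ is $\Tthirteen$-satisfiable by inflating $\B$: take $\B'$ with $\s^{\B'}=\s^{\B}\cup A$ and $\s_{2}^{\B'}=\s_{2}^{\B}\cup A'$ for disjoint countably infinite sets $A,A'$; let $s^{\B'}$ agree with $s^{\B}$ on $\s^{\B}$ and act fixed-point-freely on $A$, mapping $A$ into $A$, so that $\B'$ is an infinite, fixed-point-free, hence legitimate, $\Tthirteen$-interpretation; keep every $\B$-assignment to variables of $V$; and send $x$ to some element of $A$. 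Since $s^{\B'}$ maps $\s^{\B}$ into $\s^{\B}$, every $s$-term built from a variable of $V$ takes the same value in $\B'$ as in $\B$, so $\B'\vDash\wit(\phi)$, and $\B'\vDash\delta_{U}$ because $x$ was sent outside $\s^{\B}$. Applying strong finite witnessability once more, there is a $\Tthirteen$-interpretation $\C$ satisfying $\wit(\phi)\wedge\delta_{U}$ with both domains witnessed, so $|\s^{\C}|=g(q)+1$ and $|\s_{2}^{\C}|=g(q)$; both are finite, so $\C$ is a finite $\Tthirteen$-interpretation and therefore $|\s^{\C}|=|\s_{2}^{\C}|$, contradicting $g(q)+1\neq g(q)$. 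Hence $\Tthirteen$ admits no strong witness.

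I expect the only delicate point to be the inflation step, i.e., checking that enlarging $\B$ to a bigger $\Tthirteen$-interpretation neither destroys satisfaction of $\wit(\phi)$ --- handled by leaving $s$ untouched on the old domain, so that $s$-iterates of old variables stay trapped there --- nor reintroduces a fixed point of $s$, which is precisely why $\phi$ is chosen to be $\neg(x_{1}=x_{2})$: it forces $\B$ to be non-trivial, so $s^{\B}$ is already fixed-point-free. The rest is a routine transcription of the $\Tthreetwo$ argument; note in particular that, unlike for $\Tten$ and $\Televen$, the non-surjectivity of $g$ is not needed here, only that all finite $\Tthirteen$-models have equal $\s$- and $\s_{2}$-cardinalities.
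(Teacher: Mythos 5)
Your proof is correct and takes a genuinely different route from the paper's. The paper dispatches this lemma in one line: it observes that once the trivial model is removed $\Tthirteen$ is stably infinite, and then invokes Lemma 73 of \cite{arxivCADE} (stable infiniteness together with a strong witness forces smoothness up to countable cardinals), pointing to the absence of any $\Tthirteen$-model with $(|\s^{\A}|,|\s_{2}^{\A}|)=(3,4)$ to close the contradiction. Your route instead replays, from scratch, the witness-and-inflate pattern the appendix uses for $\Tten$, $\Televen$ and $\Tthreetwo$: you exclude the trivial model by taking $\phi=\neg(x_{1}=x_{2})$, use the strong witness to pin down a finite non-trivial interpretation $\B$ with $|\s^{\B}|=|\s_{2}^{\B}|=g(q)$, and then add a single fresh $\s$-variable, carefully inflating $\B$ to an infinite fixed-point-free model so that $\wit(\phi)\wedge\delta_{U}$ remains $\Tthirteen$-satisfiable, whence the strong witness yields a finite $\Tthirteen$-interpretation $\C$ with $|\s^{\C}|=g(q)+1\neq g(q)=|\s_{2}^{\C}|$, which is impossible. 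Your observation that non-surjectivity of $g$ is not needed here (only the equality of the two sort cardinalities of all finite non-trivial models) is accurate, and is exactly what lets one fresh variable suffice in place of the hole-chasing used for $\Tten$ and $\Televen$. The inflation step is correctly handled: keeping $s$ unchanged on $\s^{\B}$ traps all $s$-iterates of old variables inside $\s^{\B}$, and choosing $\phi$ non-trivially guarantees $s^{\B}$ is already fixed-point-free. What your version loses in brevity it gains in being self-contained: the paper's proof leans on an external lemma and a somewhat elliptical ``remove the trivial model'' step, whereas you make the exclusion of the trivial model explicit through the choice of $\phi$.
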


\begin{proof}
    Once the trivial interpretation is removed, it is clear that $\Tthirteen$ is stably infinite: according to Lemma 73 of \cite{arxivCADE}, it should then be smooth up to countable cardinals, which is not; for an example, no $\Tthirteen$-interpretation has $|\s^{\A}|=3$ and $|\s_{2}^{\A}|=4$.
\end{proof}

\begin{lemma}
    The $\Sigma_{s}^{2}$-theory $\Tthirteen$ is convex with respect to $\{\s,\s_{2}\}$.
\end{lemma}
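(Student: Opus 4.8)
The plan is to reduce the convexity of $\Tthirteen$ to that of its fixed-point-free part, and to handle the latter exactly as was done for $\Tseven$, $\Tnine$ and $\Ttwelve$ (along the lines of Lemma~$66$ of \cite{arxivCADE}). Let $\Tthirteen'$ denote the $\Sigma_{s}^{2}$-theory axiomatized by $\ax{\Tthirteen}\cup\{\psi_{\neq}\}$. Since every $\Tthirteen$-interpretation other than the trivial one (the unique $\Tthirteen$-interpretation in which $s$ has a fixed point) already satisfies $\psi_{\neq}$, the $\Tthirteen'$-interpretations are precisely the $\Tthirteen$-interpretations of the second, third and fourth kinds listed above, and the trivial interpretation is the only $\Tthirteen$-interpretation that is not a $\Tthirteen'$-interpretation.

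First I would carry out the reduction. Let $\phi$ be a conjunction of literals with $\phi\vDash_{\Tthirteen}\bigvee_{i}(a_{i}=b_{i})$, the $a_{i},b_{i}$ being variables of sorts in $\{\s,\s_{2}\}$; if $\phi$ is $\Tthirteen$-unsatisfiable there is nothing to prove, so assume it is satisfiable, whence the disjunction is nonempty. If $\phi$ is $\Tthirteen'$-satisfiable, then $\phi\vDash_{\Tthirteen'}\bigvee_{i}(a_{i}=b_{i})$ (restricting the class of interpretations), hence by convexity of $\Tthirteen'$ we get $\phi\vDash_{\Tthirteen'}a_{j}=b_{j}$ for some $j$; this already gives $\phi\vDash_{\Tthirteen}a_{j}=b_{j}$, since every $\Tthirteen$-interpretation satisfying $\phi$ is either a $\Tthirteen'$-interpretation, and then satisfies $a_{j}=b_{j}$, or is the trivial one, in which both domains are singletons and $a_{j}=b_{j}$ holds regardless. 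If instead $\phi$ is $\Tthirteen'$-unsatisfiable, then the trivial interpretation is the only $\Tthirteen$-interpretation of $\phi$, every disjunct holds there, and again $\phi\vDash_{\Tthirteen}a_{j}=b_{j}$. So $\Tthirteen$ is convex provided $\Tthirteen'$ is.

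It remains to show $\Tthirteen'$ is convex, which I would do by mimicking Lemma~$66$ of \cite{arxivCADE}. The useful facts are that $\Tthirteen'$ is stably infinite --- any $\Tthirteen'$-interpretation satisfying a quantifier-free $\phi$ can be turned into one with both domains countably infinite, still satisfying $\phi$, by adjoining fresh elements and extending $s$ on the new elements of sort $\s$ by fixed-point-free $2$-cycles --- and that in $\Tthirteen'$ a quantifier-free formula can only impose lower bounds on the two domain cardinalities while leaving $s$ otherwise unconstrained apart from being fixed-point free. Consequently, unlike $\Tfour$ (where the axioms make $s$ partly deterministic) or $\Tfive$ (where $|\s^{\A}|$ is bounded), no nontrivial disjunction of equalities can be $\Tthirteen'$-valid without one of its disjuncts being so, and the standard convexity argument goes through: if no disjunct $a_{j}=b_{j}$ were $\Tthirteen'$-entailed, pick for each $j$ a $\Tthirteen'$-interpretation of $\phi\wedge\neg(a_{j}=b_{j})$, enlarge each to a countably infinite one (which can realize an arbitrary arrangement of the variables), and derive a contradiction. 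The main obstacle is to make this last step airtight --- in particular to verify that the fixed-point-free constraint on $s$ together with the admissible cardinalities never force a disjunction of equalities valid over interpretations each avoiding one disjunct, which reduces to the observation that no quantifier-free $\phi$ can pin $|\s^{\A}|$, and hence $|\s_{2}^{\A}|$, to a single small finite value.
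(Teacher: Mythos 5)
Your approach coincides with the paper's at its core: the paper's entire proof of this lemma is a pointer to Lemma~66 of \cite{arxivCADE}, and your treatment of the fixed-point-free part $\Tthirteen'$ is exactly that lemma's argument. The preliminary reduction you add --- passing to the theory axiomatized by $\ax{\Tthirteen}\cup\{\psi_{\neq}\}$ and observing that the trivial model, having singleton domains, satisfies every equality between same-sorted variables --- is correct and is a clean way to isolate the case analysis that the cited lemma would otherwise have to absorb. The one caveat concerns your final step: choosing, for each $j$, a countably infinite $\Tthirteen'$-model of $\phi\wedge\neg(a_j=b_j)$ does not by itself yield a single model of $\phi\wedge\bigwedge_j\neg(a_j=b_j)$; one still needs the most-general-model (congruence-closure) argument showing that, because $\psi_{\neq}$ contributes only disequalities and an infinite $\Tthirteen'$-model can realize any identification pattern not forced by the positive literals of $\phi$, the failure of every individual entailment produces one simultaneous countermodel. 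You flag this gap yourself, and closing it is precisely the content of the lemma that both you and the paper invoke, so the proposal is sound as a proof sketch at the same level of detail as the paper's own.
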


\begin{proof}
    This proof is similar to that of Lemma $66$ in \cite{arxivCADE}.
\end{proof}

\begin{lemma}
    The $\Sigma_{s}^{2}$-theory $\Tthirteen$ has the finite model property with respect to $\{\s,\s_{2}\}$.
\end{lemma}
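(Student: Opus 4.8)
The quickest route is to lean on the preceding lemma: $\Tthirteen$ has just been shown to be finitely witnessable with respect to $\{\s,\s_{2}\}$, and finite witnessability implies the finite model property --- this is exactly the impossibility of the combination $\finwit+\overline{\finmodpro}$ recorded in \Cref{table of impossibilities} (Theorem~2 of \cite{FroCoS}). So the plan is simply: invoke the lemma stating that $\Tthirteen$ is finitely witnessable, and then quote the general implication from \cite{FroCoS}. This is the same two-line argument already used for $\Ttwo$ and $\Tthreetwo$, and in this form there is no obstacle at all.

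For a self-contained argument I would instead construct a finite model directly. First split on whether the trivial $\Tthirteen$-interpretation --- both domains of size $1$, $s$ the identity --- satisfies the given $\Tthirteen$-satisfiable quantifier-free $\phi$; if so we already have a finite model. Otherwise any $\Tthirteen$-interpretation $\A$ satisfying $\phi$ satisfies $\psi_{\neq}$, so $s^{\A}$ is fixed-point-free, and I aim to produce a model of the third kind, with $|\s^{\B}|=|\s_{2}^{\B}|=g(n)$ for some $n\ge 2$ and $s^{\B}$ fixed-point-free. Since the signature is $\Sigma_{s}^{2}$, every $\s$-term occurring in $\phi$ has the shape $s^{k}(x)$ and every $\s_{2}$-term is a variable; let $K$ be the largest such $k$, let $V_{1}=\vars_{\s}(\phi)$, $V_{2}=\vars_{\s_{2}}(\phi)$, and let $D_{1}=\{(s^{\A})^{k}(x^{\A}):x\in V_{1},\,0\le k\le K\}\subseteq\s^{\A}$, which is finite. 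Call $d\in D_{1}$ boundary if $s^{\A}(d)\notin D_{1}$; note that of each chain $x^{\A},s^{\A}(x^{\A}),\dots,(s^{\A})^{K}(x^{\A})$ only the last element can be boundary.

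Then I would build $\B$ as follows: on the $\s$-side take a copy of $D_{1}$, keep $s^{\B}=s^{\A}$ on its non-boundary elements (creating no fixed point, since $s^{\A}$ has none), re-route $s^{\B}$ on the boundary elements through a single cycle of length $\ge 2$ (using a couple of fresh elements if needed), and then pad with disjoint $2$-cycles and $3$-cycles until $|\s^{\B}|$ reaches some value $g(n)$; on the $\s_{2}$-side take a copy of $V_{2}^{\A}$ (which preserves all equalities and disequalities among the $u$'s) and pad it with fresh elements up to the \emph{same} $g(n)$. Because $s^{\B}$ agrees with $s^{\A}$ for the first $K$ steps from every $x^{\A}$, the value of every term of $\phi$ is unchanged, so $\B\vDash\phi$; and $\B$ is a finite fixed-point-free interpretation with matching $g(n)$-cardinalities, hence a $\Tthirteen$-interpretation, yielding the finite model property.

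The only real obstacle in the direct version is the arithmetic of hitting a common $g(n)$ on both sorts while respecting fixed-point-freeness on $\s$. I would use that $g$ is unbounded (\Cref{lem:g-exists}) and that consecutive values of $g$ differ by $1$ or $2$ (since $g(n+1)=g(n)+f(n+1)+1$ with $f$ valued in $\{0,1\}$): choose $n$ large enough that $g(n)$ exceeds $|D_{1}|$ plus the fresh elements spent on re-routing, arranging the leftover to be $0$ or $\ge 2$ (if a candidate $g(n)$ would leave exactly $1$, pass to $g(n+1)$); since every integer $\ge 2$ is a sum of $2$'s and $3$'s, the $\s$-padding lands exactly on $g(n)$, and padding the structure-free sort $\s_{2}$ to the same $g(n)$ is then immediate.
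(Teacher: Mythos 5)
Both of your routes are sound. Your first route --- invoke the already-established finite witnessability of $\Tthirteen$ and then quote the $\finwit\Rightarrow\finmodpro$ implication from \cite{FroCoS} --- is exactly the two-line argument the paper itself uses for the FMP lemmas of $\Ttwo$ and $\Tthreetwo$, and it applies here without circularity since the FW lemma for $\Tthirteen$ is proved independently (by analogy to Lemma 64 of \cite{arxivCADE}). The paper instead gives a direct construction for $\Tthirteen$, and your second route is essentially that construction with two cosmetic differences: you use a single global exponent bound $K$ where the paper uses per-variable bounds $M_x$ and extends $\Gamma$ one step past each chain ($j\le M_x+1$) so the ``boundary'' re-routing is absorbed into $\Gamma^\A$ itself; and, more substantively, you over-engineer the padding. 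The paper simply takes a fresh set $A$ of the exact size $g(n)-|\Gamma^\A|$ and defines $s^\B(a)$ to be an arbitrary element of $\Gamma^\A$ for every $a\in A$ --- since $a$ is fresh this can never create a fixed point, so there is no parity or residue constraint at all, and the $2$-cycle/$3$-cycle arithmetic and the ``pass to $g(n+1)$ if the leftover is $1$'' dodge you set up are unnecessary. Your construction still works, but the paper's padding is the cleaner move and worth adopting.
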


\begin{proof}
    Take a quantifier-free formula $\phi$, and a $\Tthirteen$-interpretation $\A$ that satisfies $\phi$: we may assume, without loss of generality, that $\A$ satisfies $\psi_{\neq}$. Take then the sets $V=\vars_{\s}(\phi)$ and $U=\vars_{\s_{2}}(\phi)$, and for each $x\in V$ define $M_{x}$ as the maximum of $j$ such that $s^{j}(x)$ occurs in $\phi$ (where we identify $x$ with $s^{0}(x)$): we then take the minimum $n\in\mathbb{N}\setminus\{0,1\}$ such that 
    \[g(n)\geq \max\{|U^{\A}|, |\Gamma^{\A}|\},\quad\text{where}\quad\Gamma=\{s^{j}(x) : x\in V, 0\leq j\leq M_{x}+1\}\]
    Take sets $A$ and $B$ with cardinalities, respectively, $g(n)-|\Gamma^{\A}|$ and $g(n)-|U^{\A}|$, disjoint from each other and the domains of $\A$, and we can then define an interpretation $\B$ by making: $\s^{\B}=\Gamma^{\A}\cup A$ and $\s_{2}^{\B}=U^{\A}\cup B$ (meaning both of these have $g(n)$ elements, and therefore are finite); $s^{\B}(a)=s^{\A}(a)$ for all those $a\in\Gamma^{\A}$ such that $s^{\A}(a)\in\Gamma^{\A}$, and $s^{\B}((s^{\A})^{M_{x}+1}(x^{\A}))=(s^{\A})^{M_{x}}(x^{\A})$; $s^{\B}(a)\in \Gamma^{\A}$ for all $a\in A$ (meaning $\B$ satisfies $\psi_{\neq}$, and is therefore a $\Tthirteen$-interpretation); $x^{\B}=x^{\A}$ for all $x\in V$, and arbitrary for all other variables of sort $\s$; $u^{\B}=u^{\A}$ for all $u\in U$, and arbitrary for all other variables of sort $\s_{2}$. As $\B$ satisfies $\phi$, we are done.
\end{proof}

\begin{lemma}
    The $\Sigma_{s}^{2}$-theory $\Tthirteen$ is not stably finite with respect to $\{\s,\s_{2}\}$.
\end{lemma}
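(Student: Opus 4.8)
The plan is to exhibit a quantifier-free formula $\phi$ and a $\Tthirteen$-interpretation $\A$ satisfying $\phi$ such that no $\Tthirteen$-interpretation with both domains finite can satisfy $\phi$ within the cardinality bounds of $\A$. The witness will be $\phi = \NEQ(x_{1},x_{2})\wedge\NEQ(u_{1},u_{2},u_{3})$, with $x_{i}$ of sort $\s$ and $u_{j}$ of sort $\s_{2}$, together with an interpretation $\A$ having $|\s^{\A}|=2$, $|\s_{2}^{\A}|=\aleph_{0}$, $s^{\A}$ fixed-point free (e.g.\ swapping the two elements of $\s^{\A}$), and the variables assigned so that $x_{1}^{\A}\neq x_{2}^{\A}$ and $u_{1}^{\A},u_{2}^{\A},u_{3}^{\A}$ are pairwise distinct. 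This $\A$ is a $\Tthirteen$-interpretation of the ``$|\s^{\A}|=2$, $\s_{2}^{\A}$ infinite'' class described above, and it clearly satisfies $\phi$.

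The key intermediate claim to establish is: \emph{if $\B$ is any $\Tthirteen$-interpretation with $|\s^{\B}|=2$, then $\s_{2}^{\B}$ is infinite}. Indeed, since $|\s^{\B}|=2$, $\B$ falsifies $\psi^{\s}_{=1}$, so for every $n\in\mathbb{N}\setminus\{0,1\}$ the corresponding axiom forces $\B$ to satisfy $\psi_{\neq}\wedge\big[(\psi^{\s}_{=2}\wedge\psi^{\s_{2}}_{\geq n})\vee(\psi^{\s}_{\geq g(n+1)}\wedge\psi^{\s_{2}}_{\geq g(n+1)})\vee\bigvee_{i=2}^{n}(\psi^{\s}_{=g(i)}\wedge\psi^{\s_{2}}_{=g(i)})\big]$. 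By property (1) of \Cref{lem:g-exists}, $g$ is strictly increasing, and $g(1)=1+f(1)=2$, so $g(i)\geq 3$ for all $i\geq 2$; hence the disjuncts $\psi^{\s}_{\geq g(n+1)}$ and $\psi^{\s}_{=g(i)}$ ($i\geq 2$) each require $|\s^{\B}|\geq 3$ and therefore fail. Thus only $\psi^{\s}_{=2}\wedge\psi^{\s_{2}}_{\geq n}$ can hold, giving $|\s_{2}^{\B}|\geq n$; as this holds for every $n$, $\s_{2}^{\B}$ is infinite.

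To finish, suppose for contradiction that some $\Tthirteen$-interpretation $\B$ satisfies $\phi$ with $\s^{\B}$ and $\s_{2}^{\B}$ both finite, $|\s^{\B}|\leq|\s^{\A}|=2$, and $|\s_{2}^{\B}|\leq|\s_{2}^{\A}|=\aleph_{0}$. From $\NEQ(x_{1},x_{2})$ we get $|\s^{\B}|\geq 2$, hence $|\s^{\B}|=2$; by the claim, $\s_{2}^{\B}$ is infinite, contradicting its finiteness. Therefore $\Tthirteen$ is not stably finite with respect to $\{\s,\s_{2}\}$. I do not expect a genuine obstacle here; the only point requiring care is the bound $g(i)\geq 3$ for $i\geq 2$, which is immediate from strict monotonicity of $g$ together with $g(1)=2$.
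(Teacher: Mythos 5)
Your proof is correct and follows essentially the same route as the paper: exhibit a model of cardinalities $(2,\aleph_{0})$ satisfying a formula that excludes the trivial model, and observe that every $\Tthirteen$-interpretation with $|\s^{\B}|=2$ must have $\s_{2}^{\B}$ infinite (the paper uses $\neg(s(x)=x)$ as the witnessing formula where you use $\NEQ(x_{1},x_{2})\wedge\NEQ(u_{1},u_{2},u_{3})$, but this is immaterial). Your version merely spells out, via the bound $g(i)\geq 3$ for $i\geq 2$, the cardinality analysis that the paper declares obvious.
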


\begin{proof}
    Obvious, as $\neg(s(x)=x)$ is satisfied by a $\Tthirteen$-interpretation $\A$ with $|\s^{\A}|=2$ and $|\s_{2}^{\A}|=\aleph_{0}$, but there are no $\Tthirteen$-interpretations $\B$ with both $\s^{\B}$ and $\s_{2}^{\B}$ finite, $|\s^{\B}|\leq|\s^{\A}|$ and $|\s_{2}^{\B}|\leq|\s_{2}^{\A}|$, and that satisfies $\neg(s(x)=x)$.
\end{proof}

\begin{lemma}
    The $\Sigma_{s}^{2}$-theory $\Tthirteen$ does not have a computable minimal model function with respect to $\{\s,\s_{2}\}$.
\end{lemma}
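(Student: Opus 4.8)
The plan is to reuse the recipe that already worked for \Cref{cmmf ttwo} and for the analogous lemmas on $\Tten$, $\Televen$, $\Tthreetwo$ and $\Ttwelve$: assuming a computable minimal model function for $\Tthirteen$ exists, extract from it a computable function that coincides with $g$, and then appeal to the non-computability of $g$ (property \textbf{4} of \Cref{lem:g-exists}). The whole difference from those proofs is just the precise model-theoretic bookkeeping for $\Tthirteen$.

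\textbf{Key steps.} First I would read off the shape of the $\Tthirteen$-interpretations that satisfy the ``at least $m$ elements of sort $\s$'' formulas. For $m\geq 3$ and variables $x_{i}$ of sort $\s$, any $\Tthirteen$-interpretation $\A$ satisfying $\NNNEQ{x}{m}$ has $|\s^{\A}|\geq m$, which excludes the trivial interpretation and those with $|\s^{\A}|=2$; hence $|\s^{\A}|=|\s_{2}^{\A}|=g(i)$ for some $i\geq 2$ with $g(i)\geq m$, or both domains are infinite. Since $g$ is increasing and unbounded (properties \textbf{1} and \textbf{2}), the cardinality pairs thus obtained form a chain whose least element is $(g(n_{0}),g(n_{0}))$ with $n_{0}=\min\{i\geq 2:g(i)\geq m\}$, so $\minmod_{\Tthirteen,\{\s,\s_{2}\}}(\NNNEQ{x}{m})=\{(g(n_{0}),g(n_{0}))\}$. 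Next, assuming $\minmod_{\Tthirteen,S}$ computable with $S=\{\s,\s_{2}\}$, I would define $h:\mathbb{N}\setminus\{0\}\rightarrow\mathbb{N}$ by $h(1)=2$ and $h(n+1)=p$ where $\minmod_{\Tthirteen,S}(\NNNEQ{x}{h(n)+1})=\{(p,p)\}$ (the $x_{i}$ of sort $\s$); the input is $\Tthirteen$-satisfiable because $\Tthirteen$ has infinite models, and $h(n)+1\geq 3$ throughout. This $h$ is computable, as $\NNNEQ{x}{h(n)+1}$ is built algorithmically from $h(n)$ and $\minmod_{\Tthirteen,S}$ is assumed computable. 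Finally, an induction on $n$, using the description above plus the strict monotonicity of $g$ (so that $g(n+1)$ is exactly the least value of $g$ that is $\geq g(n)+1$), yields $h(n)=g(n)$ for all $n$, contradicting property \textbf{4}.

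\textbf{Main obstacle.} I do not anticipate a real difficulty, but the step requiring the most care is the model-theoretic accounting that guarantees $\minmod_{\Tthirteen,S}(\NNNEQ{x}{m})$ is a singleton and is identified correctly. Concretely: one must check, straight from the axioms of $\Tthirteen$, that an infinite domain of sort $\s$ forces an infinite domain of sort $\s_{2}$ (so that no ``mixed'' pair $(\aleph_{0},g(i))$ enters the set of achievable cardinalities), and one must use sort $\s$ rather than $\s_{2}$ for the witnessing variables — otherwise the interpretations with $|\s^{\A}|=2$ and $\s_{2}^{\A}$ infinite would contribute an extra, incomparable minimal pair $(2,\aleph_{0})$ and spoil the chain structure that makes the argument go through. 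Both checks are routine once the axiomatization is unwound.
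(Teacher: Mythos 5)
Your proposal is correct and follows essentially the same route as the paper's proof: define $h$ by $h(1)=2$ and $h(n+1)$ from $\minmod_{\Tthirteen,S}(\NNNEQ{x}{h(n)+1})=\{(p,p)\}$, prove by induction that $h=g$, and contradict property~\textbf{4} of \Cref{lem:g-exists}. Your extra care in noting that the disequalities must use variables of sort $\s$ (so that the $(2,\aleph_{0})$ models are excluded and the minimal set is the singleton $\{(g(n_0),g(n_0))\}$) is a correct and worthwhile detail that the paper leaves implicit.
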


\begin{proof}
    Suppose, instead, that $\Tthirteen$ has a computable minimal model function: we define a function $h:\mathbb{N}\setminus\{0\}\rightarrow\mathbb{N}$ by making $h(1)=2$ and 
    \[h(n+1)=m,\quad \text{where}\quad \minmod_{\Tthirteen,S}(\NNNEQ{x}{h(n)+1})=\{(m,m)\},\]
    for $S=\{\s,\s_{2}\}$ and $n\geq 2$ (notice $\NNNEQ{x}{h(n)+1}$ is $\Tthirteen$-satisfiable); on one hand, this is a computable function as $\minmod_{\Tthirteen,S}$ is assumed computable and always has a finite output. At the other, one easily sees by induction that $h(n)=g(n)$ for all $n\in\mathbb{N}\setminus\{0\}$, meaning $h$ is not computable, and thus leading to a contradiction.
\end{proof}

\subsection{\tp{$\Tfourteen$}{Tfourteen}}

$\Tfourteen$ is the $\Sigma_{s}$-theory with axiomatization 
\[\{\psi_{=1}\vee(\psi_{\neq}\wedge(\psi_{\geq g(n+1)}\vee\bigvee_{i=1}^{n}\psi_{=g(i)})) : n\in\mathbb{N}\setminus\{0\}\},\]
where $\psi_{\neq}=\Forall{x}\neg(s(x)=x)$. $\Tfourteen$ can be seen as having three distinct groups of models $\A$: those with $|\s^{\A}|=1$, and $s^{\A}$ the identity; those with $|\s^{\A}|=g(n)$, for some $n\in\mathbb{N}\setminus\{0\}$, and $s^{\A}$ with no fixed points; and those with $|\s^{\A}|\geq \aleph_{0}$, and again $s^{\A}$ with no fixed points.

\begin{lemma}
    The $\Sigma_{s}$-theory $\Tfourteen$ is not stably infinite, and thus not smooth, with respect to its only sort.
\end{lemma}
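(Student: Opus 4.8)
The plan is to refute stable infiniteness by exhibiting a single quantifier-free formula that is $\Tfourteen$-satisfiable but has no infinite $\Tfourteen$-model. The natural choice is $\phi := s(x) = x$. First I would check that $\phi$ is $\Tfourteen$-satisfiable: from the $\psi_{=1}$ disjunct of each axiom one reads off that the one-element interpretation, in which $s$ is necessarily the identity, is a $\Tfourteen$-interpretation, and it satisfies $\phi$ for any value assigned to $x$. Second, I would argue that every $\Tfourteen$-interpretation $\A$ with $|\s^{\A}| \geq \aleph_{0}$ satisfies $\psi_{\neq} = \Forall{x}\neg(s(x)=x)$: each axiom instance is a disjunction whose only disjunct compatible with an infinite domain is the one containing $\psi_{\neq}$ (the $\psi_{=1}$ disjunct and each $\psi_{=g(i)}$ force a finite cardinality), so $s^{\A}$ has no fixed point and $\A$ does not satisfy $\phi$. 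Combining these, $\phi$ is $\Tfourteen$-satisfiable yet no $\Tfourteen$-interpretation of cardinality at least $\aleph_{0}$ satisfies it, so $\Tfourteen$ is not stably infinite with respect to its only sort $\s$.

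For the second claim I would invoke the standard fact, mentioned in the introduction, that smoothness implies stable infiniteness; alternatively, the direct one-line argument: were $\Tfourteen$ smooth with respect to $\{\s\}$, applying smoothness to $\phi$, the one-element model $\A$ above, and the cardinal function $\kappa$ with $\kappa(\s) = \aleph_{0} \geq 1 = |\s^{\A}|$ would yield a $\Tfourteen$-interpretation satisfying $\phi$ of cardinality $\aleph_{0}$, contradicting the previous paragraph. Hence $\Tfourteen$ is not smooth with respect to its only sort.

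No genuine obstacle arises here; the only point that needs a moment's care is the structural observation about the axiomatization — that an infinite domain eliminates the $\psi_{=1}$ disjunct and every $\psi_{=g(i)}$ disjunct, forcing $\psi_{\neq}$ — and this is immediate once the disjunctive form of the axioms is inspected, exactly as in the treatment of $\Tfour$, $\Tnine$, and the other $\psi_{\neq}$-based theories in the paper.
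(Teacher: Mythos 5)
Your proposal is correct and matches the paper's argument: the paper's proof is exactly the observation that the trivial one-element $\Tfourteen$-interpretation satisfies $s(x)=x$ while every other (in particular every infinite) $\Tfourteen$-interpretation satisfies $\psi_{\neq}$ and hence falsifies it, which refutes stable infiniteness; non-smoothness then follows since smoothness implies stable infiniteness. You simply spell out in more detail the case analysis on the disjunctive axioms that the paper labels ``obvious.''
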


\begin{proof}
    Obvious, as the trivial $\Tfourteen$-interpretation $\A$ with $|\s^{\A}|=1$ satisfies $s(x)=x$, while all other $\Tfourteen$-interpretations satisfy $\psi_{\neq}$.
\end{proof}

\begin{lemma}
    The $\Sigma_{s}$-theory $\Tfourteen$ is finitely witnessable with respect to its only sort.
\end{lemma}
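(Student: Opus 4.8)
The plan is to exhibit a concrete witness $\wit$ for $\Tfourteen$ and verify the two clauses of finite witnessability by hand; the argument is a variant of the one for Lemma~$64$ of \cite{arxivCADE}. Given a quantifier-free $\Sigma_{s}$-formula $\phi$, write $V=\vars(\phi)$ and, for each $x\in V$, let $M_{x}$ be the largest $j$ for which $s^{j}(x)$ occurs in $\phi$; set $\Gamma=\{s^{j}(x):x\in V,\ 0\le j\le M_{x}\}$, i.e.\ the set of all subterms of $\phi$. By property~\textbf{5} of \Cref{lem:g-exists} we have $g(2^{k})=3\times 2^{k-1}$ for every $k\ge 1$, so $k\mapsto g(2^{k})$ is computable and unbounded; let $k(\phi)$ be the least $k\ge 1$ with $g(2^{k})\ge|\Gamma|+2$ and put $N(\phi)=g(2^{k(\phi)})$. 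With $z_{1},\dots,z_{N(\phi)}$ fresh variables of the sort, I would set
\[\wit(\phi)\;=\;\phi\wedge\bigwedge_{i=1}^{N(\phi)}(z_{i}=z_{i}),\]
which is clearly computable. Clause $(i)$ is then immediate: $\overarrow{x}=\vars(\wit(\phi))\setminus\vars(\phi)=\{z_{1},\dots,z_{N(\phi)}\}$ and $\Exists{\overarrow{x}}\wit(\phi)$ is logically equivalent to $\phi$, hence in particular $\Tfourteen$-equivalent.

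For clause $(ii)$, assume $\wit(\phi)$ is $\Tfourteen$-satisfiable, which (adding tautological conjuncts in fresh variables changes nothing) is equivalent to $\phi$ being so. If $\phi$ already holds in the trivial $\Tfourteen$-interpretation $\A$ (the one with $|\s^{\A}|=1$), then extending $\A$ by sending every $z_{i}$ to the unique element gives a model of $\wit(\phi)$ in which all variables denote that element, so $\s^{\A}=\vars_{\s}(\wit(\phi))^{\A}$ and we are done. Otherwise fix a non-trivial $\A\vDash\phi$; then $s^{\A}$ is fixed-point-free and $|\s^{\A}|\in\{g(m):m\ge 1\}\cup\{\aleph_{0}\}$. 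Put $D=\Gamma^{\A}$, so $|D|\le|\Gamma|\le N(\phi)-2$, and let $\B$ have domain $\s^{\B}=D\uplus A'$ with $A'$ a fresh set of size $N(\phi)-|D|\ge 2$; thus $|\s^{\B}|=N(\phi)=g(2^{k(\phi)})$. Pick $a_{1},a_{2}\in A'$ and define $s^{\B}$ by: $s^{\B}(d)=s^{\A}(d)$ whenever $d\in D$ and $s^{\A}(d)\in D$; $s^{\B}(a_{1})=a_{2}$ and $s^{\B}(a_{2})=a_{1}$; and $s^{\B}(e)=a_{1}$ for every other $e\in\s^{\B}$. Then $s^{\B}$ has no fixed point (using $\A\vDash\psi_{\neq}$ for the first case), so $\B$ satisfies the axiomatization of $\Tfourteen$. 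Set $x^{\B}=x^{\A}$ for $x\in V$ and choose $z_{1}^{\B},\dots,z_{N(\phi)}^{\B}$ so as to enumerate $\s^{\B}$ bijectively. An induction on $j\le M_{x}$ gives $(s^{j}(x))^{\B}=(s^{j}(x))^{\A}$, the step using that $s^{j+1}(x)\in\Gamma$ when $j+1\le M_{x}$, so $s^{\A}((s^{j}(x))^{\A})\in D$ and $s^{\B}$ agrees with $s^{\A}$ there. Hence $\B\vDash\phi$, so $\B\vDash\wit(\phi)$, while $\vars_{\s}(\wit(\phi))^{\B}\supseteq\{z_{i}^{\B}:i\}=\s^{\B}$ forces $\vars_{\s}(\wit(\phi))^{\B}=\s^{\B}$, as required.

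The hard part is exactly clause $(ii)$ in the non-trivial case: because $\Sigma_{s}$ carries a function symbol, one cannot simply pad with variables but must collapse a possibly infinite (or inconveniently large) $\Tfourteen$-model down to one whose size lies in the image of $g$, while simultaneously keeping the interpretation of $s$ fixed-point-free and not disturbing the truth of $\phi$. Preserving $\phi$ works because $\Gamma$ is closed under $s$ up to the depths that $\phi$ mentions, so $s^{\A}$ maps $D=\Gamma^{\A}$ into itself except possibly at the deepest occurrences, which $\phi$ no longer constrains and where $s^{\B}$ may be rerouted; fixed-point-freeness is then restored by rerouting those values (and all padding) into the forced $2$-cycle $a_{1}\leftrightarrow a_{2}$. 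Since $g$ is non-surjective (the very reason $\Tfourteen$ fails smoothness) one cannot pad to an arbitrary cardinality, and property~\textbf{5} of \Cref{lem:g-exists} is precisely what supplies a computable infinite ladder of admissible cardinalities $g(2^{k})$ to pad to; the trivial interpretation is the fallback for formulas such as $s(x)=x$ that have no non-trivial $\Tfourteen$-model.
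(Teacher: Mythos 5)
Your proof is correct and spells out exactly the argument the paper silently defers to (the paper's ``proof'' is a single-line citation to Lemma~$64$ of \cite{arxivCADE}): pad the model to the computably accessible cardinality $g(2^{k(\phi)})=3\cdot 2^{k(\phi)-1}$ supplied by property~\textbf{5} of \Cref{lem:g-exists}, restrict $s^{\A}$ to the $s$-closure of the relevant elements, and reroute everything else into a forced $2$-cycle so that $\psi_{\neq}$ is preserved. The bookkeeping all checks out: the two reserved elements $a_{1},a_{2}$ are exactly why you require $N(\phi)\ge|\Gamma|+2$; the induction on $j\le M_{x}$ uses that both $(s^{j}(x))^{\A}$ and $(s^{j+1}(x))^{\A}$ land in $D=\Gamma^{\A}$ (so you do not need the $M_{x}+1$ closure the stable-finiteness proof uses, since the values $s^{\B}$ takes on the deepest terms are never inspected by $\phi$); and the trivial interpretation is the right fallback for formulas like $s(x)=x$.
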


\begin{proof}
    This proof is similar to that of Lemma $64$ in \cite{arxivCADE}.
\end{proof}

\begin{lemma}
    The $\Sigma_{s}$-theory $\Tfourteen$ is not strongly finitely witnessable with respect to its only sort.
\end{lemma}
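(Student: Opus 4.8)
The plan is to argue by contradiction, reusing the template already applied in the excerpt to $\Televen$ and $\Ttwelve$. Assume $\Tfourteen$ has a strong witness $\wit$, and work with the quantifier-free formula $\phi=\neg(x_{1}=x_{2})$; this is $\Tfourteen$-satisfiable, since every non-trivial $\Tfourteen$-interpretation has at least $g(1)=2$ elements, and it is falsified only by the trivial interpretation. Starting from any $\Tfourteen$-interpretation $\A$ satisfying $\phi$, the $\Tfourteen$-equivalence of $\phi$ with $\Exists{\overarrow{x}}\wit(\phi)$ (for $\overarrow{x}=\vars(\wit(\phi))\setminus\vars(\phi)$) produces a $\Tfourteen$-interpretation $\A'$ satisfying $\wit(\phi)$; taking $V=\vars(\wit(\phi))$ and the arrangement $\delta_{V}$ induced by $\A'$ on $V$, condition $(ii')$ of strong finite witnessability yields a $\Tfourteen$-interpretation $\B$ with $\B\vDash\wit(\phi)\wedge\delta_{V}$ and $\s^{\B}=\vars_{\s}(\wit(\phi)\wedge\delta_{V})^{\B}=V^{\B}$. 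Since $V$ is finite, $\s^{\B}$ is finite, and since $\B\vDash\phi$ it is non-trivial; hence $|\s^{\B}|=|V^{\B}|=g(q)$ for some $q\in\mathbb{N}\setminus\{0\}$, so $\delta_{V}$ has exactly $g(q)$ equivalence classes.

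Next I would exhibit a ``hole'' in the image of $g$ strictly above $g(q)$. Because $f$ is non-computable it cannot be eventually $0$, so $f(p)=1$ for some $p>q$, and then $g(p)=g(p-1)+f(p)+1=g(p-1)+2$; since $g$ is increasing, $g(p-1)+1$ lies strictly between $g(p-1)$ and $g(p)$ and is therefore not a value of $g$ (and it is $\geq 3$). Adjoining $k=g(p-1)+1-g(q)\geq 1$ fresh variables $u_{1},\dots,u_{k}$ to $V$ and letting $\delta_{U}$ be the arrangement on $U=V\cup\{u_{1},\dots,u_{k}\}$ that extends $\delta_{V}$ by placing each $u_{i}$ in its own class gives an arrangement with exactly $g(p-1)+1$ classes.

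Then I would verify that $\wit(\phi)\wedge\delta_{U}$ is $\Tfourteen$-satisfiable: extend $\B$ by an infinite fresh set on which $s$ acts without fixed points (say a successor map on a copy of $\mathbb{Z}$), keep the old variable values, and place the $u_{i}$ on distinct new elements. The resulting interpretation is an infinite $\Tfourteen$-interpretation, it still satisfies $\wit(\phi)$ because every term $s^{j}(x)$ with $x\in V$ is still evaluated inside $\s^{\B}$ exactly as in $\B$, and it satisfies $\delta_{U}$ by construction. Condition $(ii')$ then produces a $\Tfourteen$-interpretation $\C$ with $\s^{\C}=\vars_{\s}(\wit(\phi)\wedge\delta_{U})^{\C}=U^{\C}$, forcing $|\s^{\C}|=g(p-1)+1$, which is neither $1$ nor in the image of $g$ --- contradicting the classification of the $\Tfourteen$-interpretations.

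The only step carrying any content is the satisfiability check for $\wit(\phi)\wedge\delta_{U}$, and even it is routine, since the extended interpretation agrees with $\B$ on the substructure generated by the original variables (so it still models the quantifier-free $\wit(\phi)$) while the infinite part absorbs the new singleton classes; everything else is a transcription of the $\Televen$/$\Ttwelve$ arguments, using that a fixed-point-free unary function exists on every set of size $\geq 2$.
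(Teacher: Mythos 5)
Your proposal is correct and follows essentially the same route as the paper's proof: force the strong witness to produce a finite model $\B$ with $|\s^{\B}|=g(q)$, use the non-computability of $f$ to locate a hole $g(p-1)+1$ in the image of $g$ strictly above $g(q)$, pad the arrangement with fresh singleton classes to reach exactly $g(p-1)+1$ classes, verify satisfiability by extending $\B$ with an infinite fixed-point-free tail, and derive a contradiction from $(ii')$. The only cosmetic difference is the choice of $\phi$: the paper takes $\neg(s(x)=x)$ while you take $\neg(x_1=x_2)$; both serve exactly the same purpose of excluding the trivial model, so the arguments are interchangeable.
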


\begin{proof}
    Suppose instead we have a strong witness $\wit$. Let $\phi$ be the quantifier-free formula $\neg(s(x)=x)$, which is $\Tfourteen$-satisfiable: take a $\Tfourteen$-interpretation $\A$ that satisfies $\phi$, and as $\phi$ and $\Exists{\overarrow{x}}\wit(\phi)$ are $\Tfourteen$-equivalent (where $\overarrow{x}=\vars(\wit(\phi))\setminus\vars(\phi)$), we have that there is an interpretation $\A^{\prime}$ (differing from $\A$ at most on the value assigned to those variables in $\overarrow{x}$) that satisfies $\wit(\phi)$. Let $V$ be the set of variables of $\wit(\phi)$, and consider the arrangement $\delta_{V}$ induced by $\A^{\prime}$ on $V$, so that $\A^{\prime}$ satisfies $\wit(\phi)\wedge\delta_{V}$: there must thus exist a $\Tfourteen$-interpretation $\B$ that satisfies $\wit(\phi)\wedge\delta_{V}$ with $\s^{\B}=\vars(\wit(\phi)\wedge\delta_{V})^{\B}$. Of course, as $\B$ satisfies $\wit(\phi)$, it also satisfies $\Exists{\overarrow{x}}\wit(\phi)$ and $\phi$, meaning that $|\s^{\B}|>1$, and thus there must be an $n\in\mathbb{N}\setminus\{0\}$ such that $|\s^{\B}|=g(n)$.

    Since $f$ is not computable, there is an $m>n$ such that $f(m)=1$, and thus $g(m)=g(m-1)+2$. Make $k=g(m)-g(n)-1$, and take a set $U^{\prime}$ of fresh variables $x_{1}$ through $x_{k}$, define $U=V\cup U^{\prime}$, and consider the arrangement $\delta_{U}$ which equals $\delta_{V}$ on $V$, and where each $x_{i}$ is in its own equivalence class. Now, $\wit(\phi)\wedge\delta_{U}$ is $\Tfourteen$-satisfiable. Indeed, consider the interpretation $\B^{\prime}$ with: $\s^{\B^{\prime}}=\s^{\B}\cup A$, where $A=\{a_{n} : n\in\mathbb{N}\}$ is disjoint from $\s^{\B}$ (meaning $|\s^{\B^{\prime}}|=\aleph_{0}$); $s^{\B^{\prime}}(a)=s^{\B}(a)$ for all $a\in \s^{\B}$, and $s^{\B^{\prime}}(a_{n})=a_{n+1}$ for all $a_{n}\in A$ (meaning $\B^{\prime}$ satisfies $\psi_{\neq}$, and is therefore a $\Tfourteen$-interpretation); and $x^{\B^{\prime}}=x^{\B}$ for all variables $x$ not in $U^{\prime}$, and $x_{i}^{\B^{\prime}}=a_{i}$ for each $x_{i}\in U^{\prime}$. Then $\B^{\prime}$ satisfies $\wit(\phi)\wedge\delta_{U}$, meaning there should exist a $\Tfourteen$-interpretation $\C$ that also satisfies that formula with $\s^{\C}=\vars(\wit(\phi)\wedge\delta_{U})^{\C}$. But that is impossible: indeed, that would force $|\s^{\C}|$ to be the number of equivalence classes on $\delta_{U}$, which is the number of equivalence classes on $\delta_{V}$ (that is, $g(n)$) plus $k=g(m)-g(n)-1$, adding up to a total of $g(m)-1$; but, by our choice of $m$, $g(m)-1\notin\{g(n) : n\in\mathbb{N}\setminus\{0\}\}$.
\end{proof}

\begin{lemma}
    The $\Sigma_{s}$-theory $\Tfourteen$ is convex with respect to its only sort.
\end{lemma}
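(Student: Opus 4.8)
The plan is to prove convexity by a ``most general model'' argument, exactly in the spirit of the convexity proofs for the other $\Sigma_s$-theories above (cf.\ Lemma~66 of \cite{arxivCADE}): given a conjunction of literals $\phi$ with $\phi\vDash_{\Tfourteen}\bigvee_{i=1}^{n}(x_i=y_i)$, I want to exhibit a single $\Tfourteen$-model of $\phi$ in which $x_i=y_i$ fails for every $i$ such that $\phi\not\vDash_{\Tfourteen}x_i=y_i$, thereby forcing one of the disjuncts to be entailed. First I would dispose of the easy cases. If $\phi$ is $\Tfourteen$-unsatisfiable, then $\phi\vDash_{\Tfourteen}x_1=y_1$ holds vacuously. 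If the only $\Tfourteen$-interpretation satisfying $\phi$ is the trivial one (one-element domain, $s$ the identity), then all variables coincide in every model of $\phi$, so again $\phi\vDash_{\Tfourteen}x_1=y_1$. The only substantive case is that $\phi$ is satisfied by some non-trivial $\Tfourteen$-interpretation, i.e.\ by one satisfying $\psi_{\neq}$; enlarging its domain by a fresh infinite shift-chain, $\phi$ is then satisfied by a countably infinite fixed-point-free interpretation, and every such interpretation is a $\Tfourteen$-interpretation.

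\textbf{The free model.} In this case I would build the canonical model as follows. Let $T$ be the finite set of all subterms $s^{j}(v)$ occurring in $\phi$ together with the variables $x_i,y_i$, and let $\sim$ be the least congruence on (the term structure over) $T$ generated by the positive equalities of $\phi$, so that $[s^{j+1}(v)]=s([s^{j}(v)])$ whenever both are represented. Non-trivial satisfiability of $\phi$ is equivalent to the combinatorial conditions that no negative literal of $\phi$ identifies two $\sim$-equivalent terms and that no represented term $t$ has $s(t)\sim t$. Define $\mathcal{M}_{\phi}$ to have domain $T/{\sim}$ together with countably many fresh elements arranged into disjoint bi-infinite shift-chains, with $s^{\mathcal{M}_{\phi}}$ extending the well-defined partial action $[t]\mapsto[s(t)]$ on $T/{\sim}$ by sending every class with undefined image, and every fresh element, one step along a fresh chain. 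Then $\mathcal{M}_{\phi}$ is countably infinite and $s^{\mathcal{M}_{\phi}}$ is fixed-point-free, so $\mathcal{M}_{\phi}$ is a $\Tfourteen$-interpretation; interpreting each variable by its $\sim$-class (and placing variables not occurring in $\phi$ on distinct fresh elements), $\mathcal{M}_{\phi}\vDash\phi$, and crucially $x_i^{\mathcal{M}_{\phi}}=y_i^{\mathcal{M}_{\phi}}$ iff $x_i\sim y_i$ iff $\phi\vDash_{\Tfourteen}x_i=y_i$.

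\textbf{Conclusion.} Assume for contradiction that $\phi\not\vDash_{\Tfourteen}x_i=y_i$ for all $i$; then $x_i\not\sim y_i$ for all $i$, so $\mathcal{M}_{\phi}\vDash\phi\wedge\bigwedge_{i=1}^{n}\neg(x_i=y_i)$, contradicting $\phi\vDash_{\Tfourteen}\bigvee_{i=1}^{n}(x_i=y_i)$. Hence some disjunct is $\Tfourteen$-valid under $\phi$, and $\Tfourteen$ is convex.

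\textbf{Main obstacle.} The delicate point is the verification that $\mathcal{M}_{\phi}$ is simultaneously a genuine $\Tfourteen$-model (i.e.\ really fixed-point-free after the completion, with domain of admissible cardinality --- here $\aleph_0$ suffices, so the arithmetic of $g$ plays no role) and faithful to $\phi$: that the congruence closure does not force a self-loop or a variable identification beyond those dictated by $\phi$, and that completing the partial successor by fresh chains cannot create one either. This is precisely the content isolated by Lemma~66 of \cite{arxivCADE}, and once it is in place the remainder of the argument is routine; I would therefore present the proof as an application of (the reasoning of) that lemma, spelling out only the case analysis on the trivial interpretation that is specific to $\Tfourteen$.

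\begin{proof}
    This proof is similar to that of Lemma $66$ in \cite{arxivCADE}.
\end{proof}
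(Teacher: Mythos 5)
Your proposal is correct and arrives at exactly the same conclusion as the paper, which also just defers to Lemma~66 of \cite{arxivCADE}; your fleshed-out free-model sketch correctly identifies the key reduction (the problem only has teeth when $\phi$ is satisfiable by a non-trivial, fixed-point-free model, which can be blown up to a countably infinite one, at which point the $g$-arithmetic is irrelevant) and the combinatorial criterion that the cited lemma supplies.
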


\begin{proof}
    This proof is similar to that of Lemma $66$ in \cite{arxivCADE}.
\end{proof}

\begin{lemma}
    The $\Sigma_{s}$-theory $\Tfourteen$ is stably finite, and thus has the finite model property, with respect to its only sort.
\end{lemma}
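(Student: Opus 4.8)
The plan is to prove stable finiteness directly; the finite model property then follows, since if $\phi$ is $\T$-satisfiable one takes any $\Tfourteen$-interpretation satisfying it and shrinks it. So let $\phi$ be quantifier-free and let $\A$ be a $\Tfourteen$-interpretation satisfying $\phi$. If $\A$ is finite we take $\B=\A$; hence assume $\s^{\A}$ is infinite, so that $\A$ satisfies $\psi_{\neq}$, i.e.\ $s^{\A}$ has no fixed points. We must produce a finite $\Tfourteen$-interpretation $\B$ satisfying $\phi$ with $|\s^{\B}|\leq|\s^{\A}|$.

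First I would collect the finitely many relevant terms: put $V=\vars(\phi)$, for each $x\in V$ let $M_{x}$ be the largest $j$ with $s^{j}(x)$ occurring in $\phi$, and set $\Gamma=\{s^{j}(x):x\in V,\ 0\leq j\leq M_{x}+1\}$, so $\Gamma^{\A}$ is finite. Since $g$ is unbounded (property \textbf{2} of \Cref{lem:g-exists}) there is an $n\in\mathbb{N}\setminus\{0\}$ with $g(n)\geq\max\{2,|\Gamma^{\A}|\}$; fix one and let $A$ be a fresh set with $|A|=g(n)-|\Gamma^{\A}|$. Define $\B$ by: $\s^{\B}=\Gamma^{\A}\cup A$, so $|\s^{\B}|=g(n)$; $x^{\B}=x^{\A}$ for $x\in V$ and arbitrary otherwise; and $s^{\B}(a)=s^{\A}(a)$ whenever $a\in\Gamma^{\A}$ and $s^{\A}(a)\in\Gamma^{\A}$, while for every other $a\in\s^{\B}$ (the overflow elements) $s^{\B}(a)$ is chosen to be some element of $\s^{\B}$ distinct from $a$, which is possible as $|\s^{\B}|=g(n)\geq2$.

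It then remains to verify three points. \emph{(i)} $\B$ is a $\Tfourteen$-interpretation: $s^{\B}$ is fixed-point free --- on interior elements $s^{\B}=s^{\A}$ is fixed-point free, on overflow elements $s^{\B}(a)\neq a$ by construction --- so $\B\models\psi_{\neq}$; and since $|\s^{\B}|=g(n)$ with $g$ increasing, for the axiom indexed by $m$ one has either $m\geq n$, and then $\psi_{=g(n)}$ is among its disjuncts, or $m<n$, and then $g(m+1)\leq g(n)$ so $\B\models\psi_{\geq g(m+1)}$. \emph{(ii)} $\B\models\phi$: by induction on $0\leq j\leq M_{x}$ one shows $(s^{\B})^{j}(x^{\B})=(s^{\A})^{j}(x^{\A})$, the inductive step using that $s^{\A}((s^{\A})^{j}(x^{\A}))=(s^{\A})^{j+1}(x^{\A})\in\Gamma^{\A}$ (because $j+1\leq M_{x}+1$), so $s^{\B}$ agrees with $s^{\A}$ there; hence every term occurring in $\phi$ takes the same value in $\B$ as in $\A$, and $\B\models\phi$. \emph{(iii)} $|\s^{\B}|=g(n)<\aleph_{0}\leq|\s^{\A}|$.

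The one delicate point is reconciling \emph{(i)} and \emph{(ii)}: the truncated function must simultaneously preserve all term-values appearing in $\phi$ and stay fixed-point free, so that the shrunken interpretation still lies in $\Tfourteen$. Including the extra layer $M_{x}+1$ in $\Gamma$ is exactly what provides the room to reroute $s^{\B}$ on the overflow without disturbing any value that $\phi$ can see; everything else is routine bookkeeping with the monotonicity and unboundedness of $g$.
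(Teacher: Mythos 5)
Your proof is correct and follows essentially the same strategy as the paper's: collect the terms $\Gamma=\{s^{j}(x):x\in V,\,0\leq j\leq M_{x}+1\}$, pick $n$ so that $g(n)\geq|\Gamma^{\A}|$, set $\s^{\B}=\Gamma^{\A}\cup A$ with $|A|=g(n)-|\Gamma^{\A}|$, and reroute $s^{\B}$ on the overflow while keeping it fixed-point free. The only (harmless) deviation is that the paper pins down the overflow values explicitly (rerouting $(s^{\A})^{M_{x}+1}(x^{\A})$ back to $(s^{\A})^{M_{x}}(x^{\A})$ and sending $A$ into $\Gamma^{\A}$), whereas you simply pick any non-fixed value, after ensuring $|\s^{\B}|\geq 2$; your version is slightly more robust in degenerate cases where $\Gamma^{\A}$ is small.
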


\begin{proof}
    Let $\phi$ be a quantifier-free formula and $\A$ a $\Tfourteen$-interpretation that satisfies $\phi$: without loss of generality, we may assume that $\A$ satisfies $\psi_{\neq}$. Take the set $V=\vars_{\s}(\phi)$, and for each $x\in V$ define $M_{x}$ to be the maximum of $j$ such that $s^{j}(x)$ appears in $\phi$, where we identify $s^{0}(x)$ with $x$ for simplicity: we then define the set of terms $\Gamma=\{s^{j}(x) : x\in V, 0\leq j\leq M_{x}+1\}$, and take the minimum $n$ of $\mathbb{N}\setminus\{0\}$ such that $g(n)\geq |\Gamma^{\A}|$; finally, take a set $A$ of cardinality $g(n)-|\Gamma^{\A}|$ disjoint from $\s^{\A}$. 
    
    We can define an interpretation $\B$ by making: $\s^{\B}=\Gamma^{\A}\cup A$ (which has then $g(n)$ elements); $s^{\B}(a)=s^{\A}(a)$ for all $a\in \Gamma^{\A}$ such that $s^{\A}(a)\in \Gamma^{\A}$, and for those $a=(S^{\A})^{M_{x}+1}(x^{\A})$ in $\Gamma^{\A}$ such that $s^{\A}(a)$ is not in that set, $s^{\B}(a)=(s^{\A})^{M_{x}}(x^{\A})$; $s^{\B}(a)\in \Gamma^{\A}$ for all $a\in A$ (meaning that $\B$ satisfies $\psi_{\neq}$, and thus that it is a $\Tfourteen$-interpretation); and finally $x^{\B}=x^{\A}$ for all $x\in V$, and arbitrary otherwise. This way $\B$ is a finite $\Tfourteen$-interpretation that satisfies $\phi$.
\end{proof}

\begin{lemma}
    The $\Sigma_{s}$-theory $\Tfourteen$ does not have a computable minimal model function with respect to its only sort.
\end{lemma}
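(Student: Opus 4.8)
The plan is to argue by contradiction, exactly along the lines of the non-computability arguments already used for $\Tone$, $\Televen$, $\Ttwelve$ and $\Tthirteen$. Suppose $\Tfourteen$ had a computable minimal model function $\minmod_{\Tfourteen,S}$ with respect to $S=\{\s\}$. Since the signature is one-sorted, $\minmod_{\Tfourteen,S}(\phi)$ is a singleton for every $\Tfourteen$-satisfiable $\phi$, and we identify it with its unique element of $\N$. The idea is to feed $\minmod_{\Tfourteen,S}$ the cardinality formulas $\NNNEQ{x}{m}$ (with the $x_i$ of sort $\s$) and read off the function $g$, which is impossible since $g$ is not computable.

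First I would determine $\minmod_{\Tfourteen,S}$ on $\NNNEQ{x}{m}$. Because $\Tfourteen$ has infinite models, $\NNNEQ{x}{m}$ is $\Tfourteen$-satisfiable for every $m$, and its $\Tfourteen$-models are precisely those whose domain has at least $m$ elements: the infinite ones together with those of cardinality $g(k)$ for some $k$ with $g(k)\geq m$. Since $g$ is unbounded (property {\bf 2} of \Cref{lem:g-exists}), such a finite $g(k)$ exists and lies strictly below $\aleph_0$; hence by \Cref{alternative definition} we get $\minmod_{\Tfourteen,S}(\NNNEQ{x}{m})=\min\{g(k):g(k)\geq m\}$. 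Now define $h\colon\mathbb{N}\setminus\{0\}\to\mathbb{N}$ by $h(1)=2$ and $h(n+1)=\minmod_{\Tfourteen,S}(\NNNEQ{x}{h(n)+1})$. The formula $\NNNEQ{x}{h(n)+1}$ is obtained algorithmically from $h(n)$, and $\minmod_{\Tfourteen,S}$ is computable by assumption, so $h$ is computable. An induction then shows $h(n)=g(n)$ for all $n\geq 1$: the base case is $g(1)=1+f(1)=2$, and if $h(n)=g(n)$ then, since $g$ is increasing (property {\bf 1}) and $g(n+1)=g(n)+f(n+1)+1\geq g(n)+1$, the number $g(n+1)$ is the least value of $g$ exceeding $g(n)$, whence $h(n+1)=\min\{g(k):g(k)\geq g(n)+1\}=g(n+1)$. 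This contradicts the non-computability of $g$ (property {\bf 4}), finishing the proof.

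There is no serious obstacle here: the argument is a direct transcription of the template already applied to the earlier theories. The only point worth checking is that the minimal $\Tfourteen$-model of $\NNNEQ{x}{m}$ is finite — so that the minimal model function actually returns the relevant value $g(k)$ rather than $\aleph_0$ — but this is immediate from the unboundedness of $g$. The trivial model (with one element) and the infinite models play no role, since $\NNNEQ{x}{m}$ with $m\geq 2$ already forces at least two elements and the least cardinality attained is a value of $g$.
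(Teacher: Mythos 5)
Your proof is correct and follows essentially the same route as the paper's: both define $h(1)=2$ and $h(n+1)$ as the unique element of $\minmod_{\Tfourteen,S}(\NNNEQ{x}{h(n)+1})$, then show by induction that $h=g$, contradicting the non-computability of $g$. You merely spell out the computation $\minmod_{\Tfourteen,S}(\NNNEQ{x}{m})=\min\{g(k):g(k)\geq m\}$ and the inductive step that the paper leaves as ``a simple inductive argument,'' and these details are right.
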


\begin{proof}
    We proceed by contradiction, so suppose that $\minmod_{\Tfourteen,S}$ is computable, where $S=\{\s\}$. We can thus define $h:\mathbb{N}\setminus\{0\}\rightarrow\mathbb{N}$ by: $h(1)=2$, and, for all $n\geq 2$,
    \[h(n+1)=m,\quad\text{where}\quad\minmod_{\Tfourteen,S}(\NNNEQ{x}{h(n)+1})=\{m\}.\]
    $h$ is computable as $\minmod_{\Tfourteen,S}$ is assumed computable; at the same time, a simple inductive argument shows that $h(n)=g(n)$ for all $n\in\mathbb{N}\setminus\{0\}$, implying $h$ is not computable.
\end{proof}

\subsection{\tp{$\Tfifteen$}{Tfifteen}}\label{Tfifteen}

$\Tfifteen$ is the $\Sigma_{s}$-theory with axiomatization
\[\{(\psi^{\neq}_{\geq 2n}\rightarrow\psi_{\geq \bb(n+1)})\wedge(\psi_{\geq n}\vee\psi^{2}_{=}) : n\in\mathbb{N}\},\]
where
\[\psi^{\neq}_{\geq n}=\Exists{x_{1},\ldots,x_{n}}\bigwedge_{i=1}^{n-1}\bigwedge_{j=i+1}^{n}\neg(x_{i}=x_{j})\wedge\bigwedge_{i=1}^{n}\neg(s(x_{i})=x_{i})\]
and $\psi^{2}_{=}=\Forall{x}(s^{2}(x)=x)$. The models $\A$ of $\Tfifteen$ can be divided into two classes: those with $\s^{\A}$ finite, where $(s^{\A})^{2}(a)=a$ for all $a\in \s^{\A}$ and, if there are at least $2n$ elements $a$ such that $s^{\A}(a)\neq a$, then $|\s^{\A}|\geq\bb(n+1)$ (notice that $\bb(n+1)\geq 2n$ for all $n\in\mathbb{N}$); and those with $\s^{\A}$ infinite. Now, notice that while $\psi^{\neq}_{\geq 0}\rightarrow\psi_{\geq\bb(1)}$ does not tell us anything, $\psi^{\neq}_{\geq 2}\rightarrow\psi_{\geq\bb(2)}$ guarantees that any $\Tfifteen$-interpretation $\A$ where $s^{\A}$ is not the identity will have at least $\bb(2)=4$ elements in total, and so on; notice too that the fact finite models of $\Tfifteen$ satisfy $\psi^{2}_{=}$ guarantees that elements $a$ such that $s^{\A}(a)\neq a$ come in pairs, justifying the $2n$ on the formula $\psi^{\neq}_{\geq 2n}\rightarrow\psi_{\geq\bb(n+1)}$.

\begin{lemma}\label{Tfifteen is SM}
    The $\Sigma_{s}$-theory $\Tfifteen$ is smooth, and thus stably infinite, with respect to its only sort.
\end{lemma}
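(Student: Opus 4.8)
The plan is to establish smoothness directly and then read off stable infiniteness as an easy corollary: a smooth theory is always stably infinite, since given a $\Tfifteen$-satisfiable quantifier-free $\phi$, \Cref{LowenheimSkolem} supplies a $\Tfifteen$-interpretation $\A\vDash\phi$ with $|\s^{\A}|\leq\aleph_{0}$, and smoothness with target cardinal $\aleph_{0}\geq|\s^{\A}|$ then yields a countably infinite model of $\phi$. Since $\Sigma_{s}$ is one-sorted, smoothness reduces to the following claim: for every quantifier-free $\phi$, every $\Tfifteen$-interpretation $\A$ with $\A\vDash\phi$, and every cardinal $\kappa\geq|\s^{\A}|$, there is a $\Tfifteen$-interpretation $\B\vDash\phi$ with $|\s^{\B}|=\kappa$.

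The construction I would use is \emph{padding with fixed points}. Pick a set $A$ disjoint from $\s^{\A}$ with $|\s^{\A}\cup A|=\kappa$, put $\s^{\B}=\s^{\A}\cup A$, let $s^{\B}$ agree with $s^{\A}$ on $\s^{\A}$ and act as the identity on $A$ (so every newly added element is a fixed point of $s^{\B}$), and set $x^{\B}=x^{\A}$ for every variable $x$. Since $s^{\A}$ maps $\s^{\A}$ into itself, a routine induction on terms gives $t^{\B}=t^{\A}$ for every term $t$ over $\vars(\phi)$, so $\B\vDash\phi$ iff $\A\vDash\phi$; in particular $\B\vDash\phi$.

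What remains is to check that $\B$ satisfies every axiom $(\psi^{\neq}_{\geq 2n}\rightarrow\psi_{\geq\bb(n+1)})\wedge(\psi_{\geq n}\vee\psi^{2}_{=})$. If $\kappa\geq\aleph_{0}$ this is immediate: $\B$ is infinite, so both $\psi_{\geq n}$ and $\psi_{\geq\bb(n+1)}$ hold outright. If $\kappa$ is finite then $\A$ is finite, hence $\A\vDash\psi^{2}_{=}$; since $s^{\B}$ coincides with $s^{\A}$ on $\s^{\A}$ and is the identity on $A$, also $\B\vDash\psi^{2}_{=}$, which handles the second conjunct for all $n$. For the first conjunct in the finite case, the key observation is that $s^{\B}$ and $s^{\A}$ have exactly the same non-fixed points, while $|\s^{\B}|=\kappa\geq|\s^{\A}|$. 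Hence if $\B\vDash\psi^{\neq}_{\geq 2n}$ then $\A\vDash\psi^{\neq}_{\geq 2n}$, so $|\s^{\A}|\geq\bb(n+1)$ because $\A$ is a $\Tfifteen$-interpretation, and therefore $|\s^{\B}|\geq\bb(n+1)$, i.e.\ $\B\vDash\psi_{\geq\bb(n+1)}$.

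The only delicate point — and the closest thing to an obstacle — is exactly this last verification of the Busy-Beaver axiom: it goes through precisely because padding by fixed points can only weaken the hypothesis $\psi^{\neq}_{\geq 2n}$ (no new non-fixed points are created) while it can only strengthen the conclusion $\psi_{\geq\bb(n+1)}$ (the domain only grows). Everything else is bookkeeping: quantifier-free satisfaction is preserved because $s^{\B}$ restricts to $s^{\A}$ on $\s^{\A}$ and all variable values lie in $\s^{\A}$.
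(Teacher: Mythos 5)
Your proof is correct and follows essentially the same route as the paper's: pad the domain with new elements on which $s$ acts as the identity, observe that quantifier-free satisfaction is preserved since all variable values stay in $\s^{\A}$, and note that the construction adds no new non-fixed points while only enlarging the domain, so every axiom $(\psi^{\neq}_{\geq 2n}\rightarrow\psi_{\geq\bb(n+1)})\wedge(\psi_{\geq n}\vee\psi^{2}_{=})$ is preserved. The only difference is that you spell out the finite-case verification (finite models satisfy $\psi^{2}_{=}$, and $\psi^{\neq}_{\geq 2n}$ transfers from $\B$ back to $\A$) slightly more explicitly than the paper does, which is harmless.
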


\begin{proof}
    Take a quantifier-free formula $\phi$, a $\Tfifteen$-interpretation $\A$ that satisfies $\phi$, and a cardinal $\kappa>|\s^{\A}|$. Take a set $A$ disjoint from $\s^{\A}$ with cardinality $\kappa$ if $\kappa\geq\aleph_{0}$, and $\kappa-|\s^{\A}|$ if $\kappa<\aleph_{0}$, and we define an interpretation $\B$ by making: $\s^{\B}=\s^{\A}\cup A$ (so $|\s^{\B}|=\kappa$); $s^{\B}(a)=s^{\A}(a)$ for every $a\in \s^{\A}$, and $s^{\B}(a)=a$ for every $a\in A$ (so if $\A$ satisfies $\psi^{2}_{=}$ so does $\B$, and otherwise $\B$ is automatically infinite); and $x^{\B}=x^{\A}$ for all variables $x$ (meaning $\B$ satisfies $\phi$). Finally, since $\B$ has as many elements satisfying $s^{\B}(a)\neq a$ as $\A$, we have that $\B$ satisfies $\psi^{\neq}_{\geq 2n}\rightarrow\psi_{\geq\bb(n+1)}$ for each $n\in\mathbb{N}$, and is therefore a $\Tfifteen$-interpretation.
\end{proof}

\begin{lemma}\label{Tfifteen is not FW}
    The $\Sigma_{s}$-theory $\Tfifteen$ is not finitely witnessable, and thus not strongly finitely witnessable, with respect to its only sort.
\end{lemma}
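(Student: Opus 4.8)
\textit{Proof plan.} The plan is to run the standard Busy--Beaver diagonalization used for $\Tbb$ and $\Tseven$, but adapted to the feature of $\ax{\Tfifteen}$ that makes this example work: the cardinality lower bound $\psi_{\geq\bb(n+1)}$ is triggered only by the \emph{non-fixed points} of $s$, so we must feed the putative witness the quantifier-free core of $\psi^{\neq}_{\geq 2n}$ rather than a plain disequality formula. Concretely, assume for contradiction that $\wit$ is a witness for $\Tfifteen$ with respect to its only sort, and for each $n\in\mathbb{N}\setminus\{0\}$ let $\zeta_{n}$ be the matrix of $\psi^{\neq}_{\geq 2n}$, i.e.\ the $\Sigma_{s}$-formula
\[\zeta_{n}=\bigwedge_{i=1}^{2n-1}\bigwedge_{j=i+1}^{2n}\neg(x_{i}=x_{j})\wedge\bigwedge_{i=1}^{2n}\neg(s(x_{i})=x_{i})\]
on variables $x_{1},\ldots,x_{2n}$, which is produced algorithmically from $n$.

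The first step is to note that $\zeta_{n}$ is $\Tfifteen$-satisfiable: the interpretation $\A$ with $\s^{\A}=\mathbb{Z}$, $s^{\A}(k)=k+1$, $x_{i}^{\A}=i$ is an infinite $\Tfifteen$-interpretation satisfying it. Hence, using condition $(i)$ to extend such an $\A$ to the fresh variables $\overarrow{x}=\vars(\wit(\zeta_{n}))\setminus\vars(\zeta_{n})$ (this changes neither the domain nor $s$), $\wit(\zeta_{n})$ is also $\Tfifteen$-satisfiable. Now apply condition $(ii)$: there is a $\Tfifteen$-interpretation $\B$ with $\B\vDash\wit(\zeta_{n})$ and $\s^{\B}=\vars(\wit(\zeta_{n}))^{\B}$. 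Since $\wit(\zeta_{n})$ is one formula, $\vars(\wit(\zeta_{n}))$ is finite, so $\s^{\B}$ is finite; and $\B\vDash\wit(\zeta_{n})$ gives $\B\vDash\Exists{\overarrow{x}}\wit(\zeta_{n})$, which by $(i)$ and the fact that $\B$ is a $\Tfifteen$-interpretation yields $\B\vDash\zeta_{n}$, so $\B$ has $2n$ distinct elements on which $s^{\B}$ has no fixed point, i.e.\ $\B\vDash\psi^{\neq}_{\geq 2n}$. As $\B$ is a \emph{finite} $\Tfifteen$-interpretation, the axiom $\psi^{\neq}_{\geq 2n}\rightarrow\psi_{\geq\bb(n+1)}$ forces $|\s^{\B}|\geq\bb(n+1)$, so $|\vars(\wit(\zeta_{n}))|\geq|\vars(\wit(\zeta_{n}))^{\B}|=|\s^{\B}|\geq\bb(n+1)$. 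Finally, setting $h(n)=|\vars(\wit(\zeta_{n}))|$ gives a computable function (as $\zeta_{n}$ is built effectively and $\wit$ is computable) with $h(n)\geq\bb(n+1)\geq\bb(n)$ for all $n\geq1$, contradicting the fact that $\bb$ eventually exceeds every computable function. The statement for strong finite witnessability then follows immediately, since a strong witness is in particular a witness (instantiate $(ii')$ with the empty set of variables), so $\neg\finwit$ implies $\neg\strfinwit$.

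The main obstacle — the one nonroutine point — is the choice of $\zeta_{n}$ together with the observation that the witnessing interpretation $\B$ furnished by condition $(ii)$ must be finite: this is what rules out the "escape hatch" of the infinite $\Tfifteen$-interpretations (which satisfy every axiom vacuously) and lets us invoke the conditional cardinality axiom. Once this is in place, the Busy--Beaver diagonalization is entirely parallel to the arguments already carried out for $\Tbb$, $\Tseven$, and $\Tfour$, and the remaining verifications (that the exhibited $\mathbb{Z}$-interpretation is a $\Tfifteen$-interpretation, that $\wit(\zeta_n)$ is satisfiable, that $h$ is computable) are straightforward.
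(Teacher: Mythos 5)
Your proof is correct and follows essentially the same route as the paper's: feed the witness the quantifier-free matrix of $\psi^{\neq}_{\geq 2n}$, use condition $(ii)$ to obtain a finite model covered by the variables' values, invoke the axiom $\psi^{\neq}_{\geq 2n}\rightarrow\psi_{\geq\bb(n+1)}$ to bound its size from below, and conclude that $n\mapsto|\vars(\wit(\zeta_{n}))|$ is a computable function not eventually bounded by $\bb$. The only differences are cosmetic (an index shift, and your version dispenses with the paper's superfluous induction).
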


\begin{proof}
    Suppose we have a witness $\wit$: we define a function $h:\mathbb{N}\rightarrow\mathbb{N}$ by making $h(0)=0$, $h(1)=1$, and for $n>1$
    \[h(n+1)=|\vars(\wit(\bigwedge_{i=1}^{2n+1}\bigwedge_{j=i+1}^{2n+2}\neg(x_{i}=x_{j})\wedge\bigwedge_{i=1}^{2n+2}\neg(s(x_{i})=x_{i})))|,\]
    where we may call the formula being witnessed by $\neq(x_{1},\ldots,x_{2(n+1)})$ for simplicity. Now, $h$ is computable, since constructing $\neq(x_{1},\ldots,x_{2(n+1)})$ once $h(n)$ is know can be done algorithmically, as well as finding its witness (as $\wit$ is assumed computable), the set of variables on the witness, and the cardinality of this set; at the same time, we prove that $h(n)\geq\bb(n)$ for all $n\in\mathbb{N}$, meaning $h$ is not computable and leading to a contradiction.

    The statement is obvious for $n=0$ and $n=1$. For $n\geq 2$, as $\neq(x_{1},\ldots,x_{2(n+1)})$ is $\Tfifteen$-satisfiable (by, say, an infinite model of the theory), there exists a $\Tfifteen$-interpretation $\A$ that satisfies $\wit(\neq(x_{1},\ldots,x_{2(n+1)}))$, and thus $\neq(x_{1},\ldots,x_{2(n+1)})$, with $\s^{\A}=\vars(\wit(\neq(x_{1},\ldots,x_{2(n+1)})))^{\A}$. But, as $\A$ satisfies $\neq(x_{1},\ldots,x_{2(n+1)})$, by hypothesis we have $|\s^{\A}|\geq\bb(n+1)$, and thus 
    \[h(n+1)=|\vars(\wit(\neq(x_{1},\ldots,x_{2(n+1)})))|\geq|\vars(\wit(\neq(x_{1},\ldots,x_{2(n+1)})))^{\A}|=|\s^{\A}|\geq\bb(n+1).\]
\end{proof}

\begin{lemma}
    The $\Sigma_{s}$-theory $\Tfifteen$ is convex with respect to its only sort.
\end{lemma}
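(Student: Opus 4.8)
The plan is to use the fact, immediate from the axiomatization, that \emph{every} infinite $\Sigma_{s}$-interpretation is a $\Tfifteen$-interpretation: if $\A$ is infinite then each $\psi_{\geq\bb(n+1)}$ and each $\psi_{\geq n}$ holds in $\A$ automatically, so every axiom of $\Tfifteen$ is satisfied. Consequently, to establish convexity it is enough, given a conjunction of literals $\phi$ with $\vDash_{\Tfifteen}\phi\rightarrow\bigvee_{i=1}^{n}x_{i}=y_{i}$ (all $x_{i},y_{i}$ of sort $\s$), to build a single infinite ``free'' $\Tfifteen$-model of $\phi$ in which two variables are identified only when $\phi$ genuinely forces it. This is the same style of argument as Lemma~$66$ of \cite{arxivCADE}.

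Concretely, I would first dispose of the trivial case: if $\phi$ is $\Tfifteen$-unsatisfiable then $\vDash_{\Tfifteen}\phi\rightarrow x_{1}=y_{1}$ holds vacuously, so assume $\phi$ has a $\Tfifteen$-model $\B$. Let $V$ be the (finite) set of variables occurring in $\phi$ or among $x_{1},y_{1},\dots,x_{n},y_{n}$, and let $D=\{\,s^{k}(v):v\in V,\ k\in\mathbb{N}\,\}$, the set of formal $\Sigma_{s}$-terms over $V$, which is closed under the syntactic action of $s$. Adjoin a countably infinite disjoint set $J$ of ``junk'' elements, each declared to be fixed by $s$, and let $\sim$ be the least congruence on $D\cup J$ with respect to the unary operation $s$ that contains the pair $(u,w)$ for every literal $u=w$ occurring in $\phi$. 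By induction on the generation of $\sim$ (base case: literals of $\phi$; inductive step: if $u'\sim w'$ then $s(u')\sim s(w')$, and $s^{\B}(u'^{\B})=s^{\B}(w'^{\B})$ because $s^{\B}$ is a function) one gets that $u\sim w$ implies $u^{\B}=w^{\B}$; in particular, for each disequality literal $u\neq w$ of $\phi$ we have $u\not\sim w$, and no element of $J$ is collapsed with anything.

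Now take $\A:=(D\cup J)/\!\sim$ with the induced interpretation of $s$ (well defined since $\sim$ is an $s$-congruence) and with $v^{\A}:=[v]_{\sim}$ for $v\in V$. Since $|J/\!\sim|=\aleph_{0}$, $\A$ is infinite, hence a $\Tfifteen$-interpretation, and by construction $\A$ satisfies every literal of $\phi$, so $\A\vDash\phi$. The hypothesis $\vDash_{\Tfifteen}\phi\rightarrow\bigvee_{i}x_{i}=y_{i}$ then yields $x_{i}\sim y_{i}$ for some $i$. Finally, the very same congruence induction as above, now run in an \emph{arbitrary} $\Tfifteen$-model $\C$ of $\phi$, shows that $u\sim w$ implies $u^{\C}=w^{\C}$; applying this to $x_{i}\sim y_{i}$ gives $x_{i}^{\C}=y_{i}^{\C}$. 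Hence $\vDash_{\Tfifteen}\phi\rightarrow x_{i}=y_{i}$, which is exactly what convexity with respect to $\{\s\}$ demands.

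The one point that has to be handled carefully is the soundness of the generated congruence: one must verify that $u\sim w$ forces $u=w$ in every $\Sigma_{s}$-model of $\phi$, using congruence closure (closed under \emph{applying} $s$, never under ``cancelling'' it), and it is precisely here that $\Tfifteen$-satisfiability of $\phi$ is invoked, to guarantee that the disequality literals of $\phi$ survive in the quotient $\A$. The remaining ingredients --- closure of $D$ under $s$, the junk elements keeping $\A$ infinite (so that it falls into the ``all infinite models'' part of $\Tfifteen$), and $\A\vDash\phi$ --- are routine.
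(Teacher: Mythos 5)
Your proof is correct, and it takes essentially the same route the paper does. The paper defers the argument to Lemma~66 of \cite{arxivCADE}, which is the term-model/congruence-closure proof of convexity used repeatedly for the $\Sigma_{s}$-theories in this series ($\Tseven$, $\Tnine$, $\Tfourteen$, $\Tseventeen$, etc.); your construction --- quotient the set of $\Sigma_{s}$-terms over the relevant variables by the least $s$-congruence generated by the equality literals of $\phi$, throw in countably many $s$-fixed junk elements to force infiniteness, use an existing $\Tfifteen$-model of $\phi$ to see the disequalities survive, then read off a disjunct in the free model and push it to all models by soundness of the congruence --- is exactly that argument. The one thing specific to $\Tfifteen$ that makes the junk elements suffice is your initial observation that \emph{every} infinite $\Sigma_{s}$-interpretation is automatically a $\Tfifteen$-interpretation (both conjuncts of each axiom are trivially satisfied once the domain is infinite), and that observation is stated correctly. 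Two very minor phrasings could be tightened: $\Tfifteen$-satisfiability of $\phi$ is not what makes the congruence sound (soundness is purely algebraic, holding in every $\Sigma_{s}$-structure satisfying the equality literals); rather, it is what provides the witnessing model $\B$ against which the disequalities of $\phi$ are checked to remain strict in the quotient. Also, you should say explicitly that the junk elements are never identified with anything (reflexivity aside) because the base pairs live in $D\times D$ and the $s$-closure step maps $D\times D$ into itself while fixing $J$ pointwise --- you do gesture at this, and it is true, but it is the one place where a skeptical reader would want the details. Neither affects the correctness of the argument.
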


\begin{proof}
    This proof is similar to that of Lemma $66$ in \cite{arxivCADE}.
\end{proof}

\begin{lemma}\label{Tfifteen is not FMP}
    The $\Sigma_{s}$-theory $\Tfifteen$ does not have the finite model property, and thus is not stably finite, with respect to its only sort.
\end{lemma}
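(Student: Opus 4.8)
The plan is to separate the infinite $\Tfifteen$-interpretations from the finite ones with a single quantifier-free formula, exactly as was done for $\Tseven$, $\Teight$ and $\Tnine$. The crucial structural fact is that every \emph{finite} $\Tfifteen$-interpretation interprets $s$ as an involution: if $\B$ is a finite $\Tfifteen$-interpretation and we pick $n$ with $n > |\s^{\B}|$, then $\B \not\models \psi_{\geq n}$, so the conjunct $\psi_{\geq n} \vee \psi^{2}_{=}$ of the $n$-th axiom forces $\B \models \psi^{2}_{=}$, i.e.\ $\B \models \Forall{x}(s^{2}(x)=x)$. Consequently no finite $\Tfifteen$-interpretation can satisfy the quantifier-free formula $\phi := \neg(s^{2}(x)=x)$.

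Next I would verify that $\phi$ is $\Tfifteen$-satisfiable by exhibiting an infinite model. Let $\A$ be the $\Sigma_{s}$-interpretation with $\s^{\A}=\mathbb{N}$, $s^{\A}(k)=k+1$ for all $k$, and $x^{\A}=0$. Since $\s^{\A}$ is infinite, $\A \models \psi_{\geq n}$ for every $n$, so the second conjunct of every axiom holds in $\A$; and the consequent $\psi_{\geq \bb(n+1)}$ of $\psi^{\neq}_{\geq 2n}\rightarrow\psi_{\geq \bb(n+1)}$ holds in any infinite interpretation, so the first conjunct holds as well. Hence $\A$ is a $\Tfifteen$-interpretation, and $s^{2}(x^{\A})=2\neq 0=x^{\A}$, so $\A \models \phi$.

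Putting the two observations together, $\phi$ is $\Tfifteen$-satisfiable but is satisfied by no finite $\Tfifteen$-interpretation, which is precisely the failure of the finite model property with respect to the only sort. Stable finiteness implies the finite model property --- given a $\T$-satisfiable $\phi$ and a witnessing $\T$-interpretation $\A$, stable finiteness would produce a finite $\T$-interpretation satisfying $\phi$, contradicting the above --- so $\Tfifteen$ is not stably finite either. I expect no real obstacle here; the only step that requires a moment of care is confirming that an arbitrary infinite interpretation, in particular one where $s$ is not an involution, still satisfies $\ax{\Tfifteen}$, which is the routine check just described.
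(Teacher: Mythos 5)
Your proof is correct and follows essentially the same route as the paper: the same separating formula $\neg(s^{2}(x)=x)$, the same infinite witness with $s$ acting as successor on $\mathbb{N}$, and the same observation that finite models are forced to satisfy $\psi^{2}_{=}$. You are somewhat more explicit than the paper about checking that the witness is a $\Tfifteen$-interpretation, but the argument is the same.
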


\begin{proof}
    Consider the quantifier-free formula $\phi$ given by $\neg(s^{2}(x)=x)$, which is not satisfied by any finite $\Tfifteen$-interpretations. We define the $\Tfifteen$-interpretation $\A$ with $\s^{\A}=\{a_{n} : n\in\mathbb{N}\}$ and such that $s^{\A}(a_{n})=a_{n+1}$: making $x^{\A}=a_{0}$, it is of course true that $\A$ satisfies $\phi$, proving the result.
\end{proof}

\begin{lemma}\label{Tfifteen is not CMMF}
    The $\Sigma_{s}$-theory $\Tfifteen$ does not have a computable minimal model function with respect to its only sort.
\end{lemma}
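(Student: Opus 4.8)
The plan is to adapt the busy-beaver diagonalisation used in \Cref{cmmf:tfour}. There the iterated construction exploited that finite $\Tfour$-models have only busy-beaver cardinalities; for $\Tfifteen$ finite models of every cardinality exist, so instead I would feed the minimal model function a quantifier-free formula that forces many elements \emph{outside the fixed-point set of $s$}, thereby triggering the axiom instances $\psi^{\neq}_{\geq 2n}\rightarrow\psi_{\geq\bb(n+1)}$. Concretely, for $k\geq 1$ let $\phi_{k}$ be the quantifier-free $\Sigma_{s}$-formula
\[\phi_{k}\;=\;\bigwedge_{1\leq i<j\leq k}\neg(x_{i}=x_{j})\;\wedge\;\bigwedge_{i=1}^{k}\neg(s(x_{i})=x_{i}),\]
with free variables $x_{1},\dots,x_{k}$ of sort $\s$. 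It is $\Tfifteen$-satisfiable (any infinite $\Tfifteen$-interpretation works), and it can be written down algorithmically from $k$.

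The core step is to evaluate $\minmod_{\Tfifteen,\{\s\}}(\phi_{k})$ exactly; since the signature is one-sorted this set is the singleton whose element is the least cardinality of a $\Tfifteen$-interpretation satisfying $\phi_{k}$. For the lower bound, such an interpretation $\A$ has at least $k$ distinct non-fixed points of $s$; if $\A$ is infinite we are done, and if $\A$ is finite then $\A\models\psi^{2}_{=}$, so $(s^{\A})^{2}=\mathrm{id}$, the non-fixed points split into transpositions, and $\A$ has $2j$ of them with $2j\geq k$, hence $j\geq\lceil k/2\rceil\geq 1$; then $\A\models\psi^{\neq}_{\geq 2j}$, so the axiom instance for $n=j$ forces $|\s^{\A}|\geq\bb(j+1)\geq\bb(\lceil k/2\rceil+1)$, using that $\bb$ is increasing. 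For the matching upper bound, put $j=\lceil k/2\rceil$ and $N=\bb(j+1)$ (note $N\geq 2j\geq k$ by the remark that $\bb(n+1)\geq 2n$), and take the interpretation on $\{1,\dots,N\}$ that swaps $2i-1\leftrightarrow 2i$ for $1\leq i\leq j$, fixes everything else, and sends $x_{i}\mapsto i$; one checks it satisfies $\phi_{k}$, satisfies $\psi^{2}_{=}$ and hence each disjunction $\psi_{\geq n}\vee\psi^{2}_{=}$, and satisfies each $\psi^{\neq}_{\geq 2n}\rightarrow\psi_{\geq\bb(n+1)}$ (if the antecedent holds then $2n\leq 2j$, so $\bb(n+1)\leq N$). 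Therefore $\minmod_{\Tfifteen,\{\s\}}(\phi_{k})=\{\bb(\lceil k/2\rceil+1)\}$ for all $k\geq 1$.

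Granting this, the conclusion is immediate: if $\Tfifteen$ had a computable minimal model function with respect to $\{\s\}$, then $m\mapsto\bb(m+1)$ would be computable (construct $\phi_{2m}$, then read off the unique element of $\minmod_{\Tfifteen,\{\s\}}(\phi_{2m})$), and patching in the constants $\bb(0),\bb(1)$ would make $\bb$ itself computable, contradicting Rad{\'o}'s theorem. The main obstacle is the exact evaluation of $\minmod_{\Tfifteen,\{\s\}}(\phi_{k})$ in the second step — specifically, verifying that the explicit finite model really satisfies every axiom instance and that the transposition/pigeonhole argument yields the sharp lower bound; the diagonalisation argument that follows is then routine.
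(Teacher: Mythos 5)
Your proof is correct and follows essentially the same route as the paper's: feed the minimal model function the formula asserting $2n$ pairwise-distinct non-fixed points of $s$, read off the least model cardinality to recover the busy beaver function, and contradict Rad{\'o}'s theorem. Your explicit verification that $\minmod_{\Tfifteen,\{\s\}}(\phi_{k})=\{\bb(\lceil k/2\rceil+1)\}$ is in fact more careful than the paper's one-line assertion (which states the value as $\bb(n)$ rather than $\bb(n+1)$ for the $2n$-variable formula, an inessential index shift).
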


\begin{proof}
    The proof is by contradiction, so let us assume that $\minmod_{\Tfifteen,S}$ is computable, where $S=\{\s\}$: we then define a function $h:\mathbb{N}\rightarrow\mathbb{N}$ by: $h(0)=0$, $h(1)=1$, and, for all $n\geq 2$, 
    \[h(n)\in\minmod_{\Tfifteen,S}\big(\neq(x_{1},\ldots,x_{2n})\wedge\bigwedge_{i=1}^{2n}\neg(s(x_{i})=x_{i})\big).\]
    It is obvious that $h(n)=\bb(n)$ for all $n\in\mathbb{N}$, leading to a contradiction: while $h$ only uses supposedly computable functions in its definition, it is still not computable.
\end{proof}

\subsection{\tp{$\Tsixteen$}{Tsixteen}}

$\Tsixteen$ is the $\Sigma_{s}$-theory with axiomatization
\[\{\psi_{\vee}\wedge(\psi^{\neq}_{\geq 2n}\rightarrow\psi_{\geq \bb(n+1)})\wedge(\psi_{\geq n}\vee\psi^{2}_{=}) : n\in\mathbb{N}\},\]
where the definition of $\psi^{\neq}_{\geq n}$ can be found in \Cref{Tfifteen},
and $\psi^{2}_{=}=\Forall{x}(s^{2}(x)=x)$. The models $\A$ of $\Tsixteen$ belong to two distinct classes: those with $\s^{\A}$ finite, where $(s^{\A})^{2}(a)=a$ for all $a\in \s^{\A}$ and, if there are at least $2n$ elements $a$ such that $s^{\A}(a)\neq a$, then $|\s^{\A}|\geq\bb(n+1)$; and those with $\s^{\A}$ infinite, where $(s^{\A})^{2}(a)$ equals either $a$ or $s^{\A}(a)$ for each $a\in\s^{\A}$. Notice that, in a way, $\Tsixteen$ can be seen as the result of applying the operator $\addnc{\cdot}$ to $\Tfifteen$: while this cannot be formally performed as $\Tfifteen$ is defined on a non-empty signature, the proofs of the various lemmas (except the one related to convexity) for $\Tsixteen$ are essentially copy-pasted from $\Tfifteen$.

\begin{lemma}
    The $\Sigma_{s}$-theory $\Tsixteen$ is smooth, and thus stably infinite, with respect to its only sort.
\end{lemma}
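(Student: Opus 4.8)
The plan is to mirror the proof of \Cref{Tfifteen is SM} essentially verbatim, since $\Tsixteen$ differs from $\Tfifteen$ only by the extra conjunct $\psi_{\vee}$ in each axiom, and the ``add fixed points'' construction used for $\Tfifteen$ preserves this conjunct. Concretely, let $\phi$ be a quantifier-free formula, $\A$ a $\Tsixteen$-interpretation satisfying $\phi$, and $\kappa$ a cardinal with $\kappa \geq |\s^{\A}|$; if $\kappa = |\s^{\A}|$ take $\B = \A$, so assume $\kappa > |\s^{\A}|$. I would choose a set $A$ disjoint from $\s^{\A}$, of cardinality $\kappa$ when $\kappa \geq \aleph_{0}$ and of cardinality $\kappa - |\s^{\A}|$ when $\kappa < \aleph_{0}$, and define $\B$ by $\s^{\B} = \s^{\A} \cup A$, $s^{\B}(a) = s^{\A}(a)$ for $a \in \s^{\A}$, $s^{\B}(a) = a$ for $a \in A$, and $x^{\B} = x^{\A}$ for every variable $x$. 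Since $\s^{\A}$ is closed under $s^{\A}$ and $s^{\B}$ agrees with $s^{\A}$ there, every term built from a variable and iterates of $s$ receives the same value in $\B$ as in $\A$; hence $\B \vDash \phi$, and $|\s^{\B}| = \kappa$ by construction.

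Next I would verify that $\B$ is a $\Tsixteen$-interpretation by checking each conjunct of each axiom. For $\psi_{\vee}$: for $a \in A$ we have $(s^{\B})^{2}(a) = s^{\B}(a) = a$, so both disjuncts hold; for $a \in \s^{\A}$ the disjunction holds because $\A \vDash \psi_{\vee}$ and $s^{\B}$ restricts to $s^{\A}$ on $\s^{\A}$. For $\psi^{\neq}_{\geq 2n} \rightarrow \psi_{\geq \bb(n+1)}$: the non-fixed-points of $s^{\B}$ are exactly the non-fixed-points of $s^{\A}$ (the new elements of $A$ are fixed points), so $\B \vDash \psi^{\neq}_{\geq 2n}$ forces $\A \vDash \psi^{\neq}_{\geq 2n}$, whence $\A \vDash \psi_{\geq \bb(n+1)}$ and $|\s^{\B}| \geq |\s^{\A}| \geq \bb(n+1)$, i.e.\ $\B \vDash \psi_{\geq \bb(n+1)}$. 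For $\psi_{\geq n} \vee \psi^{2}_{=}$: if $\kappa$ is infinite then $\B$ is infinite and satisfies every $\psi_{\geq n}$; if $\kappa$ is finite then $\A$ is finite, so $\A \vDash \psi^{2}_{=}$, and since $(s^{\B})^{2}(a) = a$ both for $a \in \s^{\A}$ and for $a \in A$, also $\B \vDash \psi^{2}_{=}$. This proves smoothness with respect to the only sort.

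Stable infiniteness is then immediate: given a $\Tsixteen$-satisfiable quantifier-free $\phi$ witnessed by $\A$, apply smoothness with $\kappa = \aleph_{0}$ to obtain a $\Tsixteen$-interpretation that satisfies $\phi$ and has an infinite domain.

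I do not expect a genuine obstacle here; the only point requiring a moment of care is confirming that extending $s$ by the identity on the fresh elements of $A$ neither breaks $\psi_{\vee}$ nor changes the number of non-fixed-points (so that the implication $\psi^{\neq}_{\geq 2n} \rightarrow \psi_{\geq \bb(n+1)}$ is not violated), which is precisely why the identity extension is the right choice. Everything else is a routine transcription of \Cref{Tfifteen is SM}.
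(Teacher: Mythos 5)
Your proof is correct and follows exactly the paper's approach: the paper's proof for $\Tsixteen$ simply defers to the argument for $\Tfifteen$ (extend the domain by fresh fixed points of $s$), and you carry out that same construction while explicitly checking the extra conjunct $\psi_{\vee}$, which is preserved since identity points satisfy it trivially. No gaps.
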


\begin{proof}
Similar to that of \Cref{Tfifteen is SM}.
\end{proof}

\begin{lemma}
    The $\Sigma_{s}$-theory $\Tsixteen$ is not finitely witnessable, and thus not strongly finitely witnessable, with respect to its only sort.
\end{lemma}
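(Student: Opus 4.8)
The plan is to reproduce the Busy-Beaver argument used for \Cref{Tfifteen is not FW}, the only new point being to check that the auxiliary interpretation also satisfies $\psi_{\vee}$. I would argue by contradiction: suppose $\Tsixteen$ has a computable witness $\wit$, and for $m\in\mathbb{N}\setminus\{0\}$ let $\varphi_{m}$ be the quantifier-free formula
\[\varphi_{m}=\bigwedge_{i=1}^{2m-1}\bigwedge_{j=i+1}^{2m}\neg(x_{i}=x_{j})\wedge\bigwedge_{i=1}^{2m}\neg(s(x_{i})=x_{i}),\]
asserting that $x_{1},\ldots,x_{2m}$ are $2m$ distinct non-fixed points of $s$. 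I would then define $h:\mathbb{N}\rightarrow\mathbb{N}$ by $h(0)=0$, $h(1)=1$ and $h(n+1)=|\vars(\wit(\varphi_{n+1}))|$ for $n\geq 1$, which is computable because $\varphi_{n+1}$ is produced algorithmically from $n$ and $\wit$ is computable.

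The first step is to observe that each $\varphi_{m}$ is $\Tsixteen$-satisfiable. I would take the $\Sigma_{s}$-interpretation $\A$ with $\s^{\A}=\mathbb{N}$ and $s^{\A}$ the fixed-point-free involution interchanging $2k$ and $2k+1$: since $s^{\A}$ is an involution, $\A\vDash\psi_{\vee}$, and since $\s^{\A}$ is infinite $\A$ satisfies every remaining conjunct of the axiomatization of $\Tsixteen$; hence $\A$ is a $\Tsixteen$-interpretation, and under a suitable assignment $\A\vDash\varphi_{m}$. From clause $(i)$ of finite witnessability, $\varphi_{n+1}$ and $\Exists{\overarrow{x}}\wit(\varphi_{n+1})$ are $\Tsixteen$-equivalent; and by clause $(ii)$ there is a $\Tsixteen$-interpretation $\B$ satisfying $\wit(\varphi_{n+1})$, hence also $\varphi_{n+1}$, with $\s^{\B}=\vars(\wit(\varphi_{n+1}))^{\B}$.

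The second step is to establish $h(n)\geq\bb(n)$ for all $n$. This is immediate for $n=0,1$; for $n\geq 1$ the interpretation $\B$ just produced has at least $2(n+1)>2n$ distinct non-fixed points of $s^{\B}$, so $\B\vDash\psi^{\neq}_{\geq 2n}$, and the axiom instance $\psi^{\neq}_{\geq 2n}\rightarrow\psi_{\geq\bb(n+1)}$ forces $|\s^{\B}|\geq\bb(n+1)$; consequently
\[h(n+1)=|\vars(\wit(\varphi_{n+1}))|\geq|\vars(\wit(\varphi_{n+1}))^{\B}|=|\s^{\B}|\geq\bb(n+1).\]
Since $h$ is computable, this contradicts the fact (\cite{Rado}) that $\bb$ eventually dominates every computable function, so $\Tsixteen$ is not finitely witnessable. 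Finally, since clause $(ii')$ instantiated with the empty set of variables reduces to clause $(ii)$, strong finite witnessability would entail finite witnessability, so $\Tsixteen$ is not strongly finitely witnessable either.

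I do not expect a genuine obstacle: the argument is exactly the one already carried out for $\Tfifteen$ in \Cref{Tfifteen is not FW}, and the single new verification — that the infinite witnessing interpretation obeys the extra axiom $\psi_{\vee}$ — is trivially handled by choosing $s^{\A}$ to be an involution. The only place where a little care is needed is the index bookkeeping relating the $2(n+1)$ variables of $\varphi_{n+1}$ to the axiom instance $\psi^{\neq}_{\geq 2n}\rightarrow\psi_{\geq\bb(n+1)}$.
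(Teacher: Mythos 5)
Your proof is correct and follows exactly the route the paper takes: the paper's proof of this lemma simply defers to the Busy--Beaver argument for $\Tfifteen$ (\Cref{Tfifteen is not FW}), and the only point genuinely specific to $\Tsixteen$ — that the infinite satisfying interpretation can be chosen to satisfy $\psi_{\vee}$ — is handled by your fixed-point-free involution, which moreover satisfies $\psi_{=}^{2}$ and hence $\psi_{\vee}$ trivially. The index bookkeeping (using the axiom instance $\psi^{\neq}_{\geq 2n}\rightarrow\psi_{\geq\bb(n+1)}$ since $\B$ has at least $2(n+1)>2n$ non-fixed points) is correct, as is the observation that clause $(ii')$ with $V=\emptyset$ degenerates to clause $(ii)$.
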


\begin{proof}
Similar to that of \Cref{Tfifteen is not FW}.
\end{proof}

\begin{lemma}
    The $\Sigma_{s}$-theory $\Tsixteen$ is convex with respect to its only sort.
\end{lemma}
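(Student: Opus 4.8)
The plan is to prove convexity by the same general device used for the companion theories $\Tseven$, $\Televen$, $\Tfourteen$ and $\Tfifteen$, namely the criterion behind Lemma~66 of \cite{arxivCADE}: for a $\Sigma_{s}$-theory it is enough to show that whenever a conjunction of literals $\phi$ is $\Tsixteen$-satisfiable, it has a $\Tsixteen$-interpretation $\A$ satisfying $\phi$ in which $x^{\A}\neq y^{\A}$ for \emph{every} pair of distinct variables $x,y$ of sort $\s$ with $\phi\not\vDash_{\Tsixteen}x=y$. Granting this, if $\phi\vDash_{\Tsixteen}\bigvee_{i=1}^{n}(x_{i}=y_{i})$ then some disjunct is already forced: otherwise the interpretation $\A$ just described would satisfy $\phi\wedge\bigwedge_{i=1}^{n}\neg(x_{i}=y_{i})$, contradicting the entailment. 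By the already-proved smoothness of $\Tsixteen$, this $\A$ may moreover be taken infinite, which will be convenient.

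First I would record the structural consequences of the axioms. In any $\Tsixteen$-interpretation $\A$ the formula $\psi_{\vee}$ forces $(s^{\A})^{2}(a)\in\{s^{\A}(a),a\}$ for every $a$, and the finite $\Tsixteen$-interpretations additionally satisfy $\psi^{2}_{=}$; hence the $s$-orbit of any element is one of the small configurations of \Cref{possible scenarios two}. This is the same structural input already exploited in the smoothness and finite-model-property constructions for $\Tsixteen$ (copied over from $\Tfifteen$): it tells us both how a given interpretation can be enlarged by appending fresh $s$-fixed points (triggering no new constraints, and in particular none of the $\psi^{\neq}_{\geq 2n}\rightarrow\psi_{\geq\bb(n+1)}$ clauses) and how the values of the variables of $\phi$ can be re-assigned without disturbing the literals of $\phi$.

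Second, given a $\Tsixteen$-satisfiable $\phi$, I would take a $\Tsixteen$-interpretation satisfying it, restrict attention to the finite set $\Gamma$ consisting of the terms occurring in $\phi$ together with one further application of $s$ to each, and --- exactly as in the witness construction for $\Tfifteen$ --- modify the interpretation so that two members of $\Gamma$ receive equal values precisely when $\phi$ forces this, then complete the $s$-orbits of the remaining fresh elements into a $\psi_{\vee}$-satisfying structure, and finally pad with $s$-fixed points to obtain an infinite $\Tsixteen$-interpretation. The point to verify is that, in the resulting $\A$, one indeed has $x^{\A}\neq y^{\A}$ whenever $\phi\not\vDash_{\Tsixteen}x=y$.

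The main obstacle --- and the step that genuinely distinguishes this argument from the one for $\Tfifteen$ --- is the interaction of the disjunctive axiom $\psi_{\vee}$ with literals of $\phi$ that involve nested applications of $s$. When $\phi$ relates a variable to $s^{2}$ of another variable, $\psi_{\vee}$ pins that variable to one of two other members of $\Gamma$ in \emph{every} $\Tsixteen$-interpretation, so the ``keep all unforced pairs distinct'' construction above cannot be carried out naively. Making the proof go through requires a careful case analysis, over the orbit shapes of \Cref{possible scenarios two}, of exactly which disjunctions of variable-equalities can be forced through such nested $s$-terms, together with a verification that in each of those cases $\phi$ already $\Tsixteen$-entails one of the disjuncts; this is the delicate part, and the reason the convexity lemma for $\Tsixteen$ cannot simply be copy-pasted from $\Tfifteen$.
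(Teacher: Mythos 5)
Your proposal sets out to prove a statement that is, in fact, false, and the obstacle you flag in your last paragraph is not a delicate technicality to be worked around but the precise point at which convexity breaks. The lemma header in the appendix contains a typo: $\Tsixteen$ is \emph{not} convex (see the summary table, where the convexity column for $\Tsixteen$ reads $F$, and the remark that $\Tsixteen$ is morally $\addnc{\Tfifteen}$, the operator $\addnc{\cdot}$ being ``guaranteed to never be convex''). The paper's own proof under that header is a proof of non-convexity: with $\phi=(y=s(x))\wedge(z=s(y))$, the axiom $\psi_{\vee}$ gives $\phi\vDash_{\Tsixteen}(x=z)\vee(y=z)$, yet neither disjunct is entailed. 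Two infinite models witness this: take $\s^{\A}=\s^{\B}=\{a_{n},b_{n}:n\in\mathbb{N}\}$ with $s^{\A}$ swapping $a_{n}\leftrightarrow b_{n}$ (a $2$-cycle, so $s^{2}(x)=x$ there) and $s^{\B}(a_{n})=b_{n}$, $s^{\B}(b_{n})=b_{n}$ (so $s^{2}(x)=s(x)$ there); assigning $x^{\A}=z^{\A}=x^{\B}=a_{0}$ and $y^{\A}=y^{\B}=z^{\B}=b_{0}$, $\A$ falsifies $y=z$ and $\B$ falsifies $x=z$. Both are infinite, satisfy $\psi_{\vee}$, and hence are $\Tsixteen$-models.

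Concretely, then, your Lemma-$66$-style criterion cannot be established for $\Tsixteen$: there is no $\Tsixteen$-model of the cube $(y=s(x))\wedge(z=s(y))$ in which all unforced equalities among $\{x,y,z\}$ fail simultaneously, because $\psi_{\vee}$ forces $z^{\A}\in\{x^{\A},y^{\A}\}$ in every model while neither choice is uniform across models. This is exactly why the convexity argument for $\Tfifteen$ (whose axioms do not include $\psi_{\vee}$) cannot be transported to $\Tsixteen$; the case analysis you propose over the orbit shapes of \Cref{possible scenarios two} would terminate in a case where the disjunction is valid but no disjunct is. The correct statement, and the one the paper actually proves, is that $\Tsixteen$ is not convex with respect to its only sort.
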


\begin{proof}
    It is obvious that $\phi\vDash_{\Tsixteen}(x=z)\vee(y=z)$, where $\phi=(y=s(x))\wedge(z=s(y))$,
    as every model of $\Tsixteen$ satisfies $\psi_{\vee}$. Take then the $\Tsixteen$-interpretations $\A$ and $\B$ with: $\s^{\A}=\s^{\B}=\{a_{n}, b_{n}: n\in\mathbb{N}\}$; $s^{\A}(a_{n})=s^{\B}(a_{n})=b_{n}$, $s^{\A}(b_{n})=a_{n}$ and $s^{\B}(b_{n})=b_{n}$, for all $n\in\mathbb{N}$ (meaning $\A$ and $\B$ satisfy $\psi_{\vee}$ and, being both infinite, are models of $\Tsixteen$); $x^{\A}=z^{\A}=x^{\B}=a_{0}$, and $y^{\A}=y^{\B}=z^{\B}=b_{0}$. This way, $\A$ falsifies $y=z$, while $\B$ falsifies $x=z$, meaning that neither $\phi\vDash_{\Tsixteen}x=z$ nor $\phi\vDash_{\Tsixteen}y=z$.
\end{proof}

\begin{lemma}
    The $\Sigma_{s}$-theory $\Tsixteen$ does not have the finite model property, and thus is not stably finite, with respect to its only sort.
\end{lemma}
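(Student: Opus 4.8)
The plan is to exhibit an infinite $\Tsixteen$-interpretation satisfying a quantifier-free formula that no finite $\Tsixteen$-interpretation can satisfy, exactly mirroring the proof of \Cref{Tfifteen is not FMP}. Since every finite $\Tsixteen$-interpretation $\A$ satisfies $\psi^{2}_{=}$, i.e.\ $(s^{\A})^{2}(a)=a$ for all $a\in\s^{\A}$, the formula $\phi$ given by $\neg(s^{2}(x)=x)$ cannot be satisfied by any finite model: if $\A$ is finite and $\A\vDash\neg(s^{2}(x)=x)$ then $(s^{\A})^{2}(x^{\A})\neq x^{\A}$, contradicting $\psi^{2}_{=}$.

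Next I would construct the witnessing infinite model. Take the $\Sigma_{s}$-interpretation $\A$ with $\s^{\A}=\{a_{n}:n\in\mathbb{N}\}$ and $s^{\A}(a_{n})=a_{n+1}$, and set $x^{\A}=a_{0}$ (other variables assigned arbitrarily). Then $(s^{\A})^{2}(a_{0})=a_{2}\neq a_{0}$, so $\A\vDash\phi$. It remains to check that $\A$ is indeed a $\Tsixteen$-interpretation: $\A$ is infinite, so the conjuncts $\psi_{\geq n}\vee\psi^{2}_{=}$ and $\psi^{\neq}_{\geq 2n}\rightarrow\psi_{\geq\bb(n+1)}$ hold trivially (the left disjunct $\psi_{\geq n}$ holds for every $n$, and $\psi_{\geq\bb(n+1)}$ holds for every $n$); and $\psi_{\vee}=\Forall{x}[[s^{2}(x)=s(x)]\vee[s^{2}(x)=x]]$ must be verified explicitly — here it fails for the successor structure, so I must instead pick a model of $\Tsixteen$ that both satisfies $\psi_{\vee}$ and falsifies $s^{2}(x)=x$ somewhere.

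The correct choice is an infinite model in which $s^{\A}$ acts as an involution without fixed points: let $\s^{\A}=\{a_{n},b_{n}:n\in\mathbb{N}\}$ with $s^{\A}(a_{n})=b_{n}$ and $s^{\A}(b_{n})=a_{n}$, and $x^{\A}=a_{0}$. Then $(s^{\A})^{2}$ is the identity, so $\psi_{\vee}$ holds (the right disjunct always holds); $\A$ is infinite so the remaining axioms hold trivially; yet $(s^{\A})^{2}(a_{0})=a_{0}$ — wait, this satisfies $s^{2}(x)=x$, so $\phi$ fails. So instead I use a model where on one orbit $s^{\A}$ has a $3$-cycle: set $s^{\A}(a_{0})=a_{1}$, $s^{\A}(a_{1})=a_{2}$, $s^{\A}(a_{2})=a_{0}$, while on all other elements $s^{\A}$ is the identity. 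This $\A$ is infinite (add countably many fixed points) and satisfies $\psi_{\vee}$? No: $(s^{\A})^{2}(a_{0})=a_{2}$, which is neither $s^{\A}(a_{0})=a_{1}$ nor $a_{0}$, so $\psi_{\vee}$ fails. The genuinely correct construction — and the main (minor) obstacle — is to realize $\phi$ must be weakened: take $\phi$ to be $\neg(s(x)=x)\wedge\neg(s(x)=s^{2}(x))\wedge\neg(x=s^{2}(x))$, no — simplest is $\phi=(y=s(x))\wedge(z=s(y))\wedge\neg(x=z)\wedge\neg(y=z)\wedge\neg(x=y)$, forcing a $4$-cycle, but $\psi_{\vee}$ on infinite models allows only $2$-cycles and $3$-element "rho" shapes. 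Hence I would take $\phi=\neg(s^{2}(x)=x)\wedge\neg(s^{2}(x)=s(x))$, which is impossible in any $\Tsixteen$-model whatsoever, so that route fails too. The resolution: the infinite $\Tsixteen$-models allow, on a single orbit, the $4$-element shape $a\to s(a)\to s^{2}(a)\leftrightarrows s^{3}(a)$ where $s^{3}(a)=s^{4}(a)\cdot$, i.e.\ the third scenario of \Cref{possible scenarios two}. So take $\phi=(y=s(x))\wedge(z=s(y))\wedge(w=s(z))\wedge\neg(s(x)=x)\wedge\neg(z=w)$ realized by such an orbit; no finite model satisfies it because $\neg(z=w)$ with $s(z)=w$ and $s(w)$ having to equal $z$ (by $\psi^{2}_{=}$) forces $s^{2}(z)=z$, consistent — so actually a finite $2\bb(2)$-model works. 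Given these pitfalls, the clean statement I will settle on is: $\phi=\neg(s^{2}(x)=x)$ is satisfiable in $\Tsixteen$ only by infinite models where $s^{2}(x)=s(x)$ holds at $x$; realize it by the model with a single orbit $a_{0}\to a_{1}\to a_{2}\to a_{2}\to\cdots$ (i.e.\ $s^{\A}(a_{0})=a_{1}$, $s^{\A}(a_{1})=a_{2}$, $s^{\A}(a_{n})=a_{n}$ for $n\geq 2$, plus infinitely many isolated fixed points), $x^{\A}=a_{0}$, which satisfies $\psi_{\vee}$ since $(s^{\A})^{2}(a_{0})=a_{2}=s^{\A}(a_{1})=s^{\A}(a_{0})\cdot$ — checking: $(s^{\A})^{2}(a_{0})=a_{2}$ and $s^{\A}(a_{0})=a_{1}\neq a_{2}$ and $a_{0}\neq a_{2}$, so $\psi_{\vee}$ still fails at $a_{0}$. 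The \emph{actually} working model: $s^{\A}(a_{0})=a_{1}$, $s^{\A}(a_{1})=a_{1}$, all else fixed; then $(s^{\A})^{2}(a_{0})=a_{1}=s^{\A}(a_{0})$ so $\psi_{\vee}$ holds, and $(s^{\A})^{2}(a_{0})=a_{1}\neq a_{0}$ so $\phi$ holds. This $\A$ is infinite, satisfies $\psi_{\vee}$ and the trivially-true cardinality axioms, hence is a $\Tsixteen$-interpretation satisfying $\phi$, while no finite one does. The main obstacle is simply selecting a witnessing infinite model compatible with $\psi_{\vee}$; once the orbit $a_{0}\mapsto a_{1}\mapsto a_{1}$ is identified, the argument closes immediately, and stable finiteness fails a fortiori.
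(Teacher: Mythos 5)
Your final argument is correct and is essentially the paper's approach: the paper proves this by adapting the proof for $\Tfifteen$, i.e.\ using $\neg(s^{2}(x)=x)$, which no finite model can satisfy since finite $\Tsixteen$-models are forced (via $\psi_{\geq n}\vee\psi^{2}_{=}$) to satisfy $\psi^{2}_{=}$, together with an infinite witnessing model. You correctly spotted that the successor model from the $\Tfifteen$ proof violates $\psi_{\vee}$ and must be replaced by one realizing the $s^{2}(x)=s(x)$ disjunct (e.g.\ $a_{0}\mapsto a_{1}\mapsto a_{1}$ with all other elements fixed), which is precisely the adjustment the paper's terse ``similar to'' elides; the many narrated false starts could simply be deleted.
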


\begin{proof}
Similar to that of \Cref{Tfifteen is not FMP}.
\end{proof}

\begin{lemma}
    The $\Sigma_{s}$-theory $\Tsixteen$ does not have a computable minimal model function with respect to its only sort.
\end{lemma}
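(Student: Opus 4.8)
The plan is to adapt, essentially verbatim, the proof of \Cref{Tfifteen is not CMMF}, exploiting the fact that the \emph{finite} $\Tsixteen$-interpretations coincide with the finite $\Tfifteen$-interpretations. Indeed, every finite $\Tsixteen$-interpretation $\A$ satisfies $\psi^{2}_{=}$, so $(s^{\A})^{2}$ is the identity, and hence $\A$ satisfies $\psi_{\vee}$ automatically; conversely, the added axiom $\psi_{\vee}$ places no constraint whatsoever on finite interpretations. Consequently the minimal-cardinality analysis performed for $\Tfifteen$ carries over to $\Tsixteen$ with no change, which is precisely why this lemma's proof can mirror the $\Tfifteen$ one.

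Concretely, I would argue by contradiction. Assume $\minmod_{\Tsixteen,S}$ is computable, with $S=\{\s\}$, and define a function $h:\mathbb{N}\rightarrow\mathbb{N}$ by $h(0)=0$, $h(1)=1$, and, for $n\geq 2$,
\[h(n)\in\minmod_{\Tsixteen,S}\big(\NNNEQ{x}{2n}\wedge\bigwedge_{i=1}^{2n}\neg(s(x_{i})=x_{i})\big).\]
First I would note that this formula is $\Tsixteen$-satisfiable, so that $\minmod_{\Tsixteen,S}$ genuinely returns a cardinal: an infinite interpretation whose function symbol is a fixed-point-free involution satisfies $\psi_{\vee}$ (its square being the identity) and every conditional axiom $\psi^{\neq}_{\geq 2m}\rightarrow\psi_{\geq\bb(m+1)}$ by being infinite. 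Then, carrying out the same inductive bookkeeping as in \Cref{Tfifteen is not CMMF}, I would show that a finite $\Tsixteen$-model of the formula must have at least $2n$ fixed-point-free elements and hence, through the relevant instance of the conditional axiom, at least as many elements as the Busy Beaver value dictated by that proof, while an explicit $\Tsixteen$-model realising this minimum is obtained by pairing $2n$ elements under $s$ and declaring the remaining ones fixed points; since infinite models are strictly larger, this yields $h(n)=\bb(n)$ for all $n$, exactly as for $\Tfifteen$.

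Finally, since $\NNNEQ{x}{2n}\wedge\bigwedge_{i=1}^{2n}\neg(s(x_{i})=x_{i})$ is constructed effectively from $n$ and $\minmod_{\Tsixteen,S}$ is computable and finite-valued, $h$ would be recursive, contradicting the non-computability of the Busy Beaver function established by Rad{\'o}. The only point that genuinely needs attention — the main obstacle — is confirming that $\psi_{\vee}$ neither forbids the minimal finite witness nor admits a smaller finite $\Tsixteen$-model than the corresponding finite $\Tfifteen$-model, together with getting the index bookkeeping in the identity $h(n)=\bb(n)$ right; but both reduce immediately to the identification of finite $\Tsixteen$- and $\Tfifteen$-interpretations, so the verification is a transcription of the corresponding steps for $\Tfifteen$.
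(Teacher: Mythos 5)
Your proof is correct and follows exactly the route the paper takes, which simply cites the $\Tfifteen$ argument as ``similar.'' The one thing you add is the explicit justification for \emph{why} the transcription works — namely that finite $\Tsixteen$- and $\Tfifteen$-interpretations coincide because $\psi^{2}_{=}$ entails $\psi_{\vee}$, while infinite models only get used to witness satisfiability — and that is precisely the observation the paper leaves implicit.
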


\begin{proof}
Similar to that of \Cref{Tfifteen is not CMMF}.
\end{proof}

\subsection{\tp{$\Tseventeen$}{Tseventeen}}

$\Tseventeen$ is the $\Sigma_{s}^{2}$-theory with axiomatization
\[\{(\psi^{\s}_{=1}\rightarrow\psi^{\s_{2}}_{\geq n})\wedge(\psi^{\neq}_{\geq n}\rightarrow\psi^{\s}_{\geq g(n)}) : n\in\mathbb{N}\setminus\{0\}\},\]
where the definition of $\psi^{\neq}_{\geq n}$ can be found in \Cref{Tfifteen},
and $x_{1}$ through $x_{n}$ are variables of sort $\s$. The models $\A$ of $\Tseventeen$ may be more or less divided in three classes: those $\A$ with $|\s^{\A}|=1$ (where necessarily $s^{\A}$ is the identity) and $|\s_{2}^{\A}|\geq\aleph_{0}$; those with $|\s^{\A}|\geq\aleph_{0}$, where $s^{\A}$ can be any function, and $\s_{2}^{\B}$ can have any cardinality; and those with $1<|\s^{\A}|<\aleph_{0}$, where $\s_{2}^{\A}$ can have any cardinality, and if there are at least $n$ elements $a\in\s^{\A}$ such that $s^{\A}(a)\neq a$, then $|\s^{\A}|\geq g(n)$.

\begin{lemma}\label{Tseventeen is SM}
    The $\Sigma_{s}^{2}$-theory $\Tseventeen$ is smooth, and thus stably infinite, with respect to $\{\s,\s_{2}\}$.
\end{lemma}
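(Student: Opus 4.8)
The plan is the standard ``enlarge the model'' argument for smoothness. Fix a $\Tseventeen$-interpretation $\A$ satisfying a quantifier-free $\Sigma_{s}^{2}$-formula $\phi$ and a map $\kappa$ on $\{\s,\s_{2}\}$ with $\kappa(\s)\geq|\s^{\A}|$ and $\kappa(\s_{2})\geq|\s_{2}^{\A}|$. I would adjoin to $\s^{\A}$ a fresh set $A$ chosen so that $\s^{\A}\cup A$ has cardinality $\kappa(\s)$, adjoin to $\s_{2}^{\A}$ a fresh set $B$ so that $\s_{2}^{\A}\cup B$ has cardinality $\kappa(\s_{2})$, set $\s^{\B}=\s^{\A}\cup A$, $\s_{2}^{\B}=\s_{2}^{\A}\cup B$, keep $x^{\B}=x^{\A}$ for every variable, keep $s^{\B}$ equal to $s^{\A}$ on $\s^{\A}$, and --- crucially --- declare every element of $A$ a fixed point of $s^{\B}$.

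The first step is to check $\B\vDash\phi$: every variable of sort $\s$ still denotes an element of $\s^{\A}$, and $s^{\B}$ agrees with $s^{\A}$ on $\s^{\A}$, so each term $s^{k}(x)$ has the same value in $\A$ and $\B$; as $\phi$ is a Boolean combination of (dis)equalities between such terms and between variables of sort $\s_{2}$, its truth value is unaffected by the enlargement. The second step is to check that $\B$ is still a $\Tseventeen$-interpretation, i.e.\ that it satisfies $\psi^{\s}_{=1}\rightarrow\psi^{\s_{2}}_{\geq n}$ and $\psi^{\neq}_{\geq n}\rightarrow\psi^{\s}_{\geq g(n)}$ for every $n\in\mathbb{N}\setminus\{0\}$. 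For the second schema the point is that the elements of $A$ are fixed points, so $s^{\B}$ has exactly the same non-fixed points as $s^{\A}$; thus $\B$ models $\psi^{\neq}_{\geq n}$ only if $\A$ does, in which case $|\s^{\B}|\geq|\s^{\A}|\geq g(n)$. For the first schema, if $\kappa(\s)>1$ the premise fails in $\B$; if $\kappa(\s)=1$ then $|\s^{\A}|=1$, so $\A$ (being a $\Tseventeen$-interpretation) already has $|\s_{2}^{\A}|\geq\aleph_{0}$, hence $|\s_{2}^{\B}|=\kappa(\s_{2})\geq\aleph_{0}$. Stable infiniteness is then immediate, since every theory that is smooth w.r.t.\ a set of sorts is stably infinite w.r.t.\ it.

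The only point requiring a little care --- and the mild ``obstacle'' I would flag --- is the corner case $|\s^{\A}|=1$: since we may only grow the domain, when the target $\kappa(\s)$ equals $1$ we must inherit $|\s_{2}^{\B}|\geq\aleph_{0}$ from the fact that $\A$ was already a model of the first axiom schema, and when $\kappa(\s)>1$ we must observe that adjoining fixed points turns $s^{\B}$ into the identity on the whole domain, making both schemas vacuously true on that branch. Beyond that, the argument is routine bookkeeping with disjoint unions of domains.
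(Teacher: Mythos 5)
Your proof is correct and follows essentially the same route as the paper's: enlarge both domains by disjoint sets, extend $s$ by the identity on the fresh elements of sort $\s$, then verify that the count of non-fixed-points is unchanged (handling the second axiom schema) and that the case $|\s^{\A}|=1$ with $\kappa(\s)=1$ inherits $|\s_2^{\B}|\geq\aleph_0$ from $\A$ (handling the first). The corner-case discussion you flag is exactly the parenthetical remark in the paper's own proof, so there is nothing to add.
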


\begin{proof}
    Take a quantifier-free formula $\phi$, a $\Tseventeen$-interpretation $\A$ that satisfies $\phi$, and cardinals $\kappa\geq|\s^{\A}|$ and $\kappa_{2}\geq|\s_{2}^{\A}|$. In the next step, take sets $A$ and $B$, both disjoint from the domains of $\A$, with respectively: $\kappa$ elements if $\kappa\geq\aleph_{0}$, and $\kappa-|\s^{\A}|$ elements if $\kappa<\aleph_{0}$; and, analogously, $\kappa_{2}$ elements if $\kappa_{2}\geq\aleph_{0}$, and $\kappa_{2}-|\s_{2}^{\A}|$ elements if $\kappa_{2}<\aleph_{0}$. We then define an interpretation $\B$ by making: $\s^{\B}=\s^{\A}\cup A$, and $\s_{2}^{\B}=\s_{2}^{\A}\cup B$ (so, if $|\s^{\B}|=1$, $|\s_{2}^{\B}|\geq|\s_{2}^{\A}|\geq\aleph_{0}$); $s^{\B}(a)=s^{\A}(a)$ for all $a\in \s^{\A}$, and $s^{\B}(a)=a$ for all $a\in A$ (so $\B$ has at most as many elements satisfying $s^{\B}(a)\neq a$ as $\A$, and since the latter interpretation satisfies $\psi^{\neq}_{\geq n}\rightarrow\psi^{\s}_{\geq g(n)}$ for every $n\in\mathbb{N}\setminus\{0\}$, so does the former, meaning $\B$ is a $\Tseventeen$-interpretation); and $x^{\B}=x^{\A}$ for all variables $x$ of sort $\s$, and $u^{\B}=u^{\A}$ for all variables $u$ of sort $\s_{2}$, meaning that $\B$ satisfies $\phi$. Of course $|\s^{\B}|=\kappa$ and $|\s_{2}^{\B}|=\kappa_{2}$, so the proof is done.
\end{proof}

\begin{lemma}\label{Tseventeen is FW}
    The $\Sigma_{s}^{2}$-theory $\Tseventeen$ is finitely witnessable with respect to $\{\s,\s_{2}\}$.
\end{lemma}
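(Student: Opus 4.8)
The plan is to construct an explicit witness $\wit$ for $\Tseventeen$ and verify the two conditions in the definition of finite witnessability. Since $\Tseventeen$ is smooth (hence stably infinite) by \Cref{Tseventeen is SM}, satisfiable formulas always have models of every sufficiently large cardinality, so the role of the witness is purely to pad in enough fresh variables of each sort so that the variables can be made to exhaust the domains. The subtlety is that the models $\A$ of $\Tseventeen$ with $1<|\s^{\A}|<\aleph_0$ are constrained: if at least $n$ elements $a\in\s^{\A}$ satisfy $s^{\A}(a)\neq a$, then $|\s^{\A}|\geq g(n)$; and $g$ is only partially controllable. So when we pad, we must not force ourselves into a cardinality for $\s^{\A}$ that is forbidden (i.e.\ not of the form $g(n)$, or not $1$, or not infinite).

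First I would fix, for a quantifier-free formula $\phi$, the number $N$ of variables of sort $\s$ and the number $N_2$ of variables of sort $\s_2$ in $\phi$. Following the pattern of the proofs for $\Tone$, $\Tten$, etc., I would use property $\textbf{5}$ of \Cref{lem:g-exists}: there is an increasing computable $\rho$ with $g\circ\rho$ computable (concretely $\rho(k)=2^k$, $g(2^k)=3\times 2^{k-1}$). Choose $k$ minimal with $g(2^k)\geq\max\{N, N_2, 1\}$; this can be found algorithmically. Then set
\[
\wit(\phi)=\phi\wedge\bigwedge_{i=1}^{g(2^k)}(x_i=x_i)\wedge\bigwedge_{i=1}^{g(2^k)}(u_i=u_i),
\]
with $x_i$ fresh of sort $\s$ and $u_i$ fresh of sort $\s_2$. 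This is computable, and since the added conjuncts are tautologies, $\phi$ and $\Exists{\overarrow{x}}\wit(\phi)$ are $\Tseventeen$-equivalent for $\overarrow{x}=\vars(\wit(\phi))\setminus\vars(\phi)$.

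Next I would verify condition $(ii)$. Suppose $\wit(\phi)$ is $\Tseventeen$-satisfiable; take a $\Tseventeen$-interpretation $\A$ satisfying it. The goal is to produce $\A'$ satisfying $\wit(\phi)$ with $\vars_{\s}(\wit(\phi))^{\A'}=\s^{\A'}$ and $\vars_{\s_2}(\wit(\phi))^{\A'}=\s_2^{\A'}$. Take sets $A$, $B$ disjoint from $\A$'s domains with $|A|=g(2^k)-|\vars_{\s}(\phi)^{\A}|$ and $|B|=g(2^k)-|\vars_{\s_2}(\phi)^{\A}|$ (both nonnegative since $\phi$ has at most $N\leq g(2^k)$ variables of sort $\s$ and at most $N_2\leq g(2^k)$ of sort $\s_2$). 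Define $\s^{\B}=\vars_{\s}(\phi)^{\A}\cup A$, $\s_2^{\B}=\vars_{\s_2}(\phi)^{\A}\cup B$, with $s^{\B}$ agreeing with $s^{\A}$ on $\vars_{\s}(\phi)^{\A}$ and acting as the identity on the new elements $A$ (so no new element $a$ has $s^{\B}(a)\neq a$), and with all $\phi$-variables retaining their $\A$-values. The key point: $|\s^{\B}|=g(2^k)=g(\rho(k))$, which \emph{is} in the image of $g$, so the constraint "$\psi^{\neq}_{\geq n}\rightarrow\psi^{\s}_{\geq g(n)}$" is respected (the number of $a$ with $s^{\B}(a)\neq a$ is at most that of $\A$, which already satisfied the constraint, and $g$ is increasing so $g(2^k)$ dominates $g(n)$ for all the relevant $n$); also $|\s^{\B}|>1$ so the conjunct "$\psi^{\s}_{=1}\rightarrow\psi^{\s_2}_{\geq n}$" is vacuous. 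Hence $\B$ is a $\Tseventeen$-interpretation satisfying $\phi$. Finally reassign the fresh variables: let $x_i\in\{x_1,\dots,x_{g(2^k)}\}\mapsto x_i^{\A'}\in\s^{\B}$ and $u_i\mapsto u_i^{\A'}\in\s_2^{\B}$ be bijections (possible since both domains have exactly $g(2^k)$ elements), keeping all other assignments as in $\B$; the resulting $\A'$ still satisfies $\wit(\phi)$ and now has $\vars_{\s}(\wit(\phi))^{\A'}=\s^{\A'}$ and $\vars_{\s_2}(\wit(\phi))^{\A'}=\s_2^{\A'}$.

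The main obstacle is the cardinality bookkeeping for sort $\s$: we must ensure the padded size is achievable by a $\Tseventeen$-interpretation, which is exactly why the choice $g(2^k)$ (a value in $g$'s image, computable via property $\textbf{5}$) rather than an arbitrary number $\geq\max\{N,N_2\}$ is essential; a naive choice could land on a forbidden cardinality strictly between consecutive values of $g$. One should also double-check the edge case where $\phi$ (hence $\wit(\phi)$) forces $|\s^{\A}|=1$: then $g(2^k)\geq 1$ but if $N\leq 1$ we might worry, yet since we pick $k$ with $g(2^k)\geq\max\{N,N_2,1\}\geq 1$ and $g(2^1)=3$, padding to $3$ elements of sort $\s$ is fine unless $\phi$ itself is inconsistent with $|\s^{\A}|\geq 3$ — but $\phi$'s satisfiability together with smoothness (\Cref{Tseventeen is SM}) guarantees a model of any larger cardinality, so this cannot happen. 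With these points handled the verification is routine.
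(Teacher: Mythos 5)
There is a genuine gap: you treat $\Tseventeen$ as if it were a theory over an empty signature, but $\Sigma_{s}^{2}$ contains the unary function symbol $s$, and this changes the verification of condition $(ii)$ materially. Your construction of $\B$ sets $\s^{\B}=\vars_{\s}(\phi)^{\A}\cup A$ and declares $s^{\B}$ to ``agree with $s^{\A}$ on $\vars_{\s}(\phi)^{\A}$'', but $s^{\A}$ need not map $\vars_{\s}(\phi)^{\A}$ into $\s^{\B}$, so $s^{\B}$ is not well-defined as a function $\s^{\B}\to\s^{\B}$. Moreover, $\phi$ can contain nested terms $s^{j}(x)$, and their $\A$-values can lie outside $\s^{\B}$, so even if $s^{\B}$ were somehow defined, the truth value of $\phi$ in $\B$ would not be guaranteed to match that in $\A$; your assertion ``Hence $\B$ is a $\Tseventeen$-interpretation satisfying $\phi$'' is unjustified. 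The pattern you imported works for $\Tone$, $\Ttwo$, $\Tten$ precisely because those live over the empty signatures $\Sigma_{1}$ and $\Sigma_{2}$, where the only terms are variables and an interpretation is determined (up to the values of variables) by the cardinalities of its domains.

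To repair this one must close under applications of $s$, exactly as the paper does in its proof of the finite model property for $\Tseventeen$: for each $x\in\vars_{\s}(\phi)$ let $M_{x}$ be the maximal $j$ with $s^{j}(x)$ occurring in $\phi$, form $\Gamma=\{s^{j}(x):x\in\vars_{\s}(\phi),\ 0\le j\le M_{x}+1\}$, and build the shrunk domain from $\Gamma^{\A}$ rather than $\vars_{\s}(\phi)^{\A}$; then define $s^{\B}$ by keeping $s^{\A}$ where it lands inside $\Gamma^{\A}$ and folding back or setting it to the identity on the boundary, while checking that the count of elements with $s^{\B}(a)\ne a$ does not exceed that in $\A$ (so the axioms $\psi^{\neq}_{\geq n}\rightarrow\psi^{\s}_{\geq g(n)}$ remain satisfied). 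Correspondingly, the witness must pad with at least $|\Gamma|$ fresh variables of sort $\s$, so $k$ must be chosen with $g(2^{k})\ge|\Gamma|$ rather than merely $g(2^{k})\ge N$; both $|\Gamma|$ and $k$ are still computable from $\phi$, so the witness stays effective. None of this bookkeeping appears in your write-up, and it is not a cosmetic omission: without it the central claim that $\B$ models $\phi$ does not go through.
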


\begin{proof}
    This proof is similar to that of Lemma $55$ in \cite{arxivCADE}.
\end{proof}

\begin{lemma}\label{Tseventeen is not SFW}
    The $\Sigma_{s}^{2}$-theory $\Tseventeen$ is not strongly finitely witnessable with respect to $\{\s,\s_{2}\}$.
\end{lemma}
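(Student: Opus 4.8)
The plan is to assume $\Tseventeen$ has a strong witness $\wit$ and derive a contradiction from the first conjunct $\psi^{\s}_{=1}\rightarrow\psi^{\s_{2}}_{\geq n}$ of its axiomatization, which is exactly the feature distinguishing $\Tseventeen$ from the previously treated theories: no $\Tseventeen$-interpretation can have a one-element domain of sort $\s$ together with a finite domain of sort $\s_{2}$. A strong witness, on the other hand, must for any $\Tseventeen$-satisfiable instance $\wit(\phi)\wedge\delta_{V}$ return an interpretation in which the finitely many variables of \emph{each} sort already exhaust that sort's domain, hence in which both domains are finite. So the whole proof reduces to producing a satisfiable instance all of whose variables of sort $\s$ are forced to be equal.

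First I would take $\phi$ to be a tautology (any $\Tseventeen$-satisfiable quantifier-free formula works). Since $\Tseventeen$ has an interpretation with a one-element domain of sort $\s$ (as recorded in the description of its models above, where $s$ is then the identity and $\s_{2}$ is infinite), and since $\phi$ is $\Tseventeen$-equivalent to $\Exists{\overarrow{x}}\wit(\phi)$ for $\overarrow{x}=\vars(\wit(\phi))\setminus\vars(\phi)$, I can obtain a $\Tseventeen$-interpretation $\A$ with $|\s^{\A}|=1$ that satisfies $\wit(\phi)$. Letting $V=\vars(\wit(\phi))$ and $\delta_{V}$ be the arrangement induced by $\A$ on $V$, the interpretation $\A$ satisfies $\wit(\phi)\wedge\delta_{V}$, so this formula is $\Tseventeen$-satisfiable; moreover, since $\A$ has only one element of sort $\s$, every variable of sort $\s$ occurring in $\wit(\phi)\wedge\delta_{V}$ is placed by $\delta_{V}$ into a single equivalence class.

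Then I would invoke the strong witness property to get a $\Tseventeen$-interpretation $\B$ satisfying $\wit(\phi)\wedge\delta_{V}$ with $\s^{\B}=\vars_{\s}(\wit(\phi)\wedge\delta_{V})^{\B}$ and $\s_{2}^{\B}=\vars_{\s_{2}}(\wit(\phi)\wedge\delta_{V})^{\B}$. The first equation gives $|\s^{\B}|=1$: the set $\vars_{\s}(\wit(\phi)\wedge\delta_{V})$ is non-empty (otherwise $\s^{\B}$ would be empty, which is impossible), and all of its members lie in the same $\delta_{V}$-class, so they receive a common value in $\B$. But $|\s^{\B}|=1$ forces, via the axioms $\psi^{\s}_{=1}\rightarrow\psi^{\s_{2}}_{\geq n}$ over all $n$, the domain $\s_{2}^{\B}$ to be infinite, contradicting the second equation, whose right-hand side is the image of a finite set of variables and hence finite.

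I do not expect a genuinely hard step; the only points needing care are the routine bookkeeping around $\wit$ — arranging that the interpretation handed back by the strong witness really collapses every variable of sort $\s$, which is why the argument passes through the singleton interpretation and the arrangement it induces — and the degenerate possibility that $\wit(\phi)$ contains no variable of sort $\s$, which is ruled out by non-emptiness of domains. I also rely on the preceding lemmas having settled the basic model theory of $\Tseventeen$, so that the existence and shape of its singleton-$\s$ interpretations can simply be cited rather than re-derived.
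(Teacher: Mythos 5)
Your proof is correct: the singleton-$\s$ models of $\Tseventeen$ (which force $\s_{2}$ to be infinite via the axioms $\psi^{\s}_{=1}\rightarrow\psi^{\s_{2}}_{\geq n}$) are exactly the obstruction to strong finite witnessability, and passing through such a model to induce an arrangement that collapses all sort-$\s$ variables, then invoking the witnessed model whose $\s_{2}$-domain must be both finite and infinite, is the standard technique used throughout the paper for such lemmas. The paper itself only defers to Lemma 56 of the full version of \cite{CADE} rather than writing the argument out, and your proposal is a complete and correct self-contained instantiation of that same approach, including the correct handling of the degenerate case where $\wit(\phi)$ has no variables of sort $\s$.
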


\begin{proof}
    This proof is similar to that of Lemma $56$ in \cite{arxivCADE}.
\end{proof}

\begin{lemma}\label{Tseventeen is CV}
    The $\Sigma_{s}^{2}$-theory $\Tseventeen$ is convex with respect to $\{\s,\s_{2}\}$.
\end{lemma}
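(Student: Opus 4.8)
The plan is to reduce convexity of $\Tseventeen$ to convexity of the pure theory $UF$ of a single unary function symbol $s\colon\s\to\s$ together with a free sort $\s_{2}$ (that is, the $\Sigma_{s}^{2}$-theory with no axioms). First I would dispose of the trivial case: if the conjunction of literals $\phi$ is $\Tseventeen$-unsatisfiable, it $\Tseventeen$-entails every equality and there is nothing to prove, so I assume $\phi$ is $\Tseventeen$-satisfiable. The crucial observation is that \emph{every} $\Sigma_{s}^{2}$-interpretation $\A$ with $|\s^{\A}|\geq\aleph_{0}$ is a $\Tseventeen$-interpretation: for each $n\geq 1$, in the $n$-th axiom the antecedent $\psi^{\s}_{=1}$ of the first conjunct fails (as $|\s^{\A}|$ is infinite), and the consequent $\psi^{\s}_{\geq g(n)}$ of the second conjunct holds (as $|\s^{\A}|\geq\aleph_{0}>g(n)$), so both conjuncts are satisfied. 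Hence the $\Sigma_{s}^{2}$-interpretations with infinite first sort form a subclass of the $\Tseventeen$-interpretations.

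Next I would run the standard convexity argument through this subclass. Suppose $\phi\vDash_{\Tseventeen}\bigvee_{i=1}^{m}(x_{i}=y_{i})$, where each $x_{i},y_{i}$ is of a sort in $\{\s,\s_{2}\}$. Since every $UF$-interpretation with infinite $\s$-domain is a $\Tseventeen$-interpretation, $\phi$ entails the same disjunction over all $UF$-interpretations with infinite first sort. For quantifier-free input, this last kind of entailment is equivalent to entailment over all $UF$-interpretations: any $UF$-model of $\phi$ (and of the negation of the disjunction) can be enlarged by adjoining countably many fresh $s$-fixpoints to its $\s$-domain, yielding an infinite-$\s$ model satisfying exactly the same quantifier-free formulas, since the interpretation of the variables is untouched. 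So $\phi\vDash_{UF}\bigvee_{i=1}^{m}(x_{i}=y_{i})$. Now $UF$ is convex, because its model class (all $\Sigma_{s}^{2}$-structures) is closed under direct products computed componentwise on $\s$, on $\s_{2}$, and on $s$: if $\phi\wedge\neg(x_{i}=y_{i})$ had a $UF$-model $\A_{i}$ for every $i$, the product $\prod_{i}\A_{i}$ would satisfy $\phi\wedge\bigwedge_{i}\neg(x_{i}=y_{i})$, contradicting the entailment; hence $\phi\vDash_{UF}(x_{k}=y_{k})$ for some $k$. Since every $\Tseventeen$-interpretation is a $UF$-interpretation, this gives $\phi\vDash_{\Tseventeen}(x_{k}=y_{k})$, as required.

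This mirrors the proof of Lemma $66$ in \cite{arxivCADE}. I expect no serious obstacle---in particular neither smoothness nor (strong) finite witnessability of $\Tseventeen$ is needed---but two points require care: the cardinality bookkeeping that certifies the infinite-$\s$ structures as $\Tseventeen$-models (including the fact that the disjunction may mix the two sorts, which the product argument handles uniformly), and keeping track of the direction of entailment when moving between the three model classes, where $\{\Tseventeen\text{-interpretations}\}\supseteq\{UF\text{-interpretations with }\s\text{ infinite}\}\subseteq\{UF\text{-interpretations}\}$, so the reduction must be phrased via inclusions of model classes rather than equivalence of theories.
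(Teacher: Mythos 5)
Your argument is correct: every $\Sigma_{s}^{2}$-structure with infinite $\s$-domain is indeed a $\Tseventeen$-interpretation (each antecedent $\psi^{\s}_{=1}$ fails and each consequent $\psi^{\s}_{\geq g(n)}$ holds), the fixpoint-padding step soundly transfers the entailment from infinite-$\s$ $UF$-structures to all $UF$-structures without changing the truth of quantifier-free formulas, and the direct-product argument gives convexity of the unaxiomatized $UF$. The paper's own proof is just the pointer ``similar to Lemma 66 in \cite{arxivCADE}'', which is exactly the lemma you cite and mirror, so this is the same approach.
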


\begin{proof}
    This proof is similar to that of Lemma $66$ in \cite{arxivCADE}.
\end{proof}

\begin{lemma}\label{Tseventeen is FMP}
    The $\Sigma_{s}^{2}$-theory $\Tseventeen$ has the finite model property with respect to $\{\s,\s_{2}\}$.
\end{lemma}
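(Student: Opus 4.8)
The plan is to adapt the finite model property arguments already carried out for $\Tthirteen$ and $\Tfourteen$: given a quantifier-free formula $\phi$ that is $\Tseventeen$-satisfiable and a $\Tseventeen$-interpretation $\A$ satisfying it, I will carve out of $\A$ a finite sub-interpretation generated by the subterms of $\phi$ and then pad it with fixed points of $s$ so that the two axiom families of $\Tseventeen$ stay satisfied while both domains remain finite. Note that the domain of sort $\s_{2}$ is essentially free to shrink, since the signature $\Sigma_{s}^{2}$ has no function symbol into $\s_{2}$ and the only axiom constraining $\s_{2}$ is $\psi^{\s}_{=1}\rightarrow\psi^{\s_{2}}_{\geq n}$, which becomes vacuous as soon as $|\s^{\A}|\geq 2$.

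Concretely, let $V_{\s}=\vars_{\s}(\phi)$ and $V_{\s_{2}}=\vars_{\s_{2}}(\phi)$, and for each $x\in V_{\s}$ let $M_{x}$ be the largest $j$ such that $s^{j}(x)$ occurs in $\phi$ (with $x=s^{0}(x)$). Put $\Gamma=\{s^{j}(x) : x\in V_{\s},\ 0\leq j\leq M_{x}+1\}$, so $\Gamma^{\A}$ is finite. Define $s'$ on $\Gamma^{\A}$ by $s'(a)=s^{\A}(a)$ if $s^{\A}(a)\in\Gamma^{\A}$, and $s'(a)=a$ otherwise; the second case applies only to elements of the form $(s^{\A})^{M_{x}+1}(x^{\A})$ whose $s^{\A}$-image leaves $\Gamma^{\A}$, and such elements are never the value of a subterm occurring in $\phi$. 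Let $m=|\{a\in\Gamma^{\A} : s'(a)\neq a\}|$. I then define $\B$ by: the domain of sort $\s$ is $\Gamma^{\A}$ together with fresh elements so that $|\s^{\B}|=\max\{2,|\Gamma^{\A}|\}$ when $m=0$ and $|\s^{\B}|=\max\{2,|\Gamma^{\A}|,g(m)\}$ when $m\geq 1$; the domain of sort $\s_{2}$ is $V_{\s_{2}}^{\A}$ (or any singleton if $V_{\s_{2}}=\emptyset$); $s^{\B}$ agrees with $s'$ on $\Gamma^{\A}$ and is the identity on the fresh elements; and every variable is interpreted as in $\A$, arbitrarily outside $V_{\s}\cup V_{\s_{2}}$.

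Three checks then complete the argument. First, $\B$ is a $\Tseventeen$-interpretation: $|\s^{\B}|\geq 2$ makes every implication $\psi^{\s}_{=1}\rightarrow\psi^{\s_{2}}_{\geq n}$ vacuous, and the non-fixed points of $s^{\B}$ are exactly the $m$ elements counted above, so that for $1\leq n\leq m$ one has $|\s^{\B}|\geq g(m)\geq g(n)$ (using that $g$ is increasing), while for $n>m$ the premise $\psi^{\neq}_{\geq n}$ fails; hence every implication $\psi^{\neq}_{\geq n}\rightarrow\psi^{\s}_{\geq g(n)}$ holds. Second, $\B\models\phi$: an induction on $j$ shows $(s^{\B})^{j}(x^{\B})=(s^{\A})^{j}(x^{\A})$ for all $x\in V_{\s}$ and $j\leq M_{x}$, so every atomic subformula of $\phi$ --- an (in)equality between such terms, or between variables of sort $\s_{2}$ --- keeps its truth value. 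Third, both $\s^{\B}$ and $\s_{2}^{\B}$ are finite by construction.

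I expect the main obstacle to be the bookkeeping around $s'$: one must verify carefully that the ``loop back to itself'' clause is applied only to elements that never occur as values of subterms of $\phi$ (so that term evaluations are genuinely preserved), and that the padding leaves the non-fixed-point count of $s^{\B}$ exactly equal to $m$, so that the constraint $\psi^{\neq}_{\geq m}\rightarrow\psi^{\s}_{\geq g(m)}$ can be met with a finite domain even though $g$ is unbounded. Apart from that, the construction runs parallel to the corresponding lemmas for $\Tthirteen$ and $\Tfourteen$.
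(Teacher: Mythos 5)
Your proof is correct and follows essentially the same route as the paper's: restrict to the finite set $\Gamma^{\A}$ of values of the $s$-terms of $\phi$, redirect the escaping $s$-images to fixed points, pad the $\s$-domain with fixed points so that the $m$ remaining non-fixed points are compatible with the constraint $\psi^{\neq}_{\geq n}\rightarrow\psi^{\s}_{\geq g(n)}$, and shrink $\s_{2}$ to the (nonempty, finite) set of variable values. Your cardinality choice $\max\{2,|\Gamma^{\A}|,g(m)\}$ is in fact slightly more careful than the paper's $g(m+1)$, which can fall below $|\Gamma^{\A}|$.
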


\begin{proof}
    Let $\phi$ be a quantifier-free formula, and $\A$ a $\Tseventeen$-interpretation that satisfies $\phi$. Let $V$ be the set of variables in $\phi$ of sort $\s$ (and analogously for $V_{2}$ and $\s_{2}$), and for each $x\in V$ define $M_{x}$ as the maximum of $j$ such that $s^{j}$ occurs in $\phi$: consider next the set of terms 
    \[\Gamma=\{s^{j}(x) : x\in V, 0\leq j\leq M_{x}+1\},\]
    where $s^{0}(x)$ is to be understood simply as $x$. Let $m$ be the number of elements $a$ in $\Gamma^{\A}$ such that $s^{\A}(a)\in\Gamma^{\A}$ and $s^{\A}(a)\neq a$, $A$ be a set of cardinality $g(m+1)-|\Gamma^{\A}|$, and $b$ be an element neither in $\A$ nor in $A$. Finally, we define an interpretation $\B$ by making: $\s^{\B}=\Gamma^{\A}\cup A$ and $\s_{2}^{\B}=V_{2}^{\A}\cup\{b\}$ (meaning $|\s^{\B}|=g(m+1)$, for some $m\in\mathbb{N}$, and $\s_{2}^{\B}$ is not empty); $s^{\B}(a)=s^{\A}(a)$ for all $a\in\Gamma^{\A}$ such that $s^{\A}(a)\in\Gamma^{\A}$, and for all other elements $a$ of $\s^{\A}$, $s^{\A}(a)=a$ (meaning that $\A$ has $m$ elements $a$ satisfying $s^{\B}(a)\neq a$, and therefore $\A$ is a $\Tseventeen$-interpretation); $x^{\B}=x^{\A}$ for all $x\in V$, and arbitrarily otherwise; and $u^{\B}=u^{\A}$ for all $u\in V_{2}$, and arbitrarily otherwise. This way $\B$ has both domains finite, while clearly satisfying $\phi$.
\end{proof}

\begin{lemma}\label{Tseventeen is not SF}
    The $\Sigma_{s}^{2}$-theory $\Tseventeen$ is not stably finite with respect to $\{\s,\s_{2}\}$.
\end{lemma}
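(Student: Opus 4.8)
The plan is to falsify stable finiteness directly, by exhibiting a quantifier-free formula $\phi$ and a $\Tseventeen$-interpretation $\A$ satisfying $\phi$ for which no finite $\Tseventeen$-interpretation has both domains bounded coordinatewise by those of $\A$. The feature of $\Tseventeen$ that makes this work is the first conjunct $\psi^{\s}_{=1}\rightarrow\psi^{\s_{2}}_{\geq n}$ of its axioms: any model whose first sort is a singleton is forced to have an infinite second sort, so the interpretation with $|\s^{\A}|=1$ and $|\s_{2}^{\A}|=\aleph_{0}$ cannot be shrunk to a finite model.

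Concretely, I would take $\phi$ to be a quantifier-free tautology (for instance $x=x$ with $x$ of sort $\s$), and let $\A$ be the $\Sigma^{2}_{s}$-interpretation with $|\s^{\A}|=1$, with $s^{\A}$ the identity on $\s^{\A}$, and with $|\s_{2}^{\A}|=\aleph_{0}$. First I would verify that $\A$ is a $\Tseventeen$-interpretation: for each $n\in\mathbb{N}\setminus\{0\}$, the conjunct $\psi^{\s}_{=1}\rightarrow\psi^{\s_{2}}_{\geq n}$ holds because $|\s_{2}^{\A}|=\aleph_{0}\geq n$, and the conjunct $\psi^{\neq}_{\geq n}\rightarrow\psi^{\s}_{\geq g(n)}$ holds vacuously, since $s^{\A}$ is the identity on its one-element domain and hence no $a\in\s^{\A}$ satisfies $s^{\A}(a)\neq a$, so $\A$ does not satisfy $\psi^{\neq}_{\geq n}$. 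Clearly $\A\vDash\phi$.

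Now suppose, for contradiction, that $\Tseventeen$ were stably finite with respect to $\{\s,\s_{2}\}$. Then there would be a $\Tseventeen$-interpretation $\B$ with $\B\vDash\phi$, $|\s^{\B}|,|\s_{2}^{\B}|<\aleph_{0}$, $|\s^{\B}|\leq|\s^{\A}|=1$, and $|\s_{2}^{\B}|\leq|\s_{2}^{\A}|=\aleph_{0}$. Since domains are non-empty, $|\s^{\B}|=1$, so $\B\vDash\psi^{\s}_{=1}$; but then the axiom $\psi^{\s}_{=1}\rightarrow\psi^{\s_{2}}_{\geq n}$ gives $|\s_{2}^{\B}|\geq n$ for every $n$, i.e.\ $|\s_{2}^{\B}|\geq\aleph_{0}$, contradicting $|\s_{2}^{\B}|<\aleph_{0}$. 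Hence no such $\B$ exists, so $\Tseventeen$ is not stably finite with respect to $\{\s,\s_{2}\}$. There is no genuine obstacle in this argument; the only points requiring a moment of care are the vacuous verification that $\A$ satisfies the $\psi^{\neq}_{\geq n}\rightarrow\psi^{\s}_{\geq g(n)}$ conjuncts, and the use of non-emptiness of the first domain to upgrade $|\s^{\B}|\leq 1$ to $|\s^{\B}|=1$.
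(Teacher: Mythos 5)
Your proof is correct and follows exactly the route the paper takes: exhibit the model with $|\s^{\A}|=1$ and $|\s_{2}^{\A}|=\aleph_{0}$ and observe that the axiom $\psi^{\s}_{=1}\rightarrow\psi^{\s_{2}}_{\geq n}$ rules out any finite model with a singleton first domain. The paper states this in one line; you have simply supplied the routine verifications (the vacuous satisfaction of the $\psi^{\neq}_{\geq n}\rightarrow\psi^{\s}_{\geq g(n)}$ conjuncts and the non-emptiness of domains), all of which are sound.
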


\begin{proof}
    Obvious, as $\Tseventeen$ possesses a model $\A$ with $|\s^{\A}|=1$ and $|\s_{2}^{\A}|=\aleph_{0}$, but no models $\B$ with $|\s^{\B}|=1$ and $\s_{2}^{\B}$ finite.
\end{proof}

\begin{lemma}\label{Tseventeen is not CMMF}
    The $\Sigma_{s}^{2}$-theory $\Tseventeen$ does not have a computable minimal model function with respect to $\{\s,\s_{2}\}$.
\end{lemma}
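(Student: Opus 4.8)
The plan is to argue by contradiction in exactly the style used for $\Ttwelve$, $\Tfourteen$ and the other $g$-based theories: if $\minmod_{\Tseventeen,S}$ were computable for $S=\{\s,\s_{2}\}$, we could extract from it an algorithm computing $g$, contradicting Lemma~\ref{lem:g-exists}. The engine of the argument is a family of quantifier-free formulas whose minimal models have cardinality controlled by $g$.

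The formulas I would use are $\varphi_{n}:=\NNNEQ{x}{n}\wedge\bigwedge_{i=1}^{n}\neg(s(x_{i})=x_{i})$ with the $x_{i}$ of sort $\s$, i.e.\ the quantifier-free body of $\psi^{\neq}_{\geq n}$. Each $\varphi_{n}$ is $\Tseventeen$-satisfiable (any infinite $\Tseventeen$-interpretation works), and any $\Tseventeen$-interpretation $\A$ satisfying $\varphi_{n}$ has at least $n$ elements $a\in\s^{\A}$ with $s^{\A}(a)\neq a$, so the $n$-th instance of the axiom schema forces $|\s^{\A}|\geq g(n)$; together with nonemptiness of the domains this gives $(|\s^{\A}|,|\s_{2}^{\A}|)\geq (g(n),1)$.

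The core computation is to show $\minmod_{\Tseventeen,S}(\varphi_{n})=\{(g(n),1)\}$. By the previous paragraph $(g(n),1)$ is a lower bound for $\textbf{Card}_{\Tseventeen,S}(\varphi_{n})$; it remains to realize it. I would build a $\Tseventeen$-interpretation $\A$ with $|\s^{\A}|=g(n)$ and $|\s_{2}^{\A}|=1$: take $g(n)$ elements, single out a distinguished element $a_{0}$, send $n$ of the remaining elements to $a_{0}$, fix everything else (this is possible because $g(n)\geq n+1$, since $f(1)=1$ forces $\sum_{i=1}^{n}f(i)\geq 1$), interpret the $x_{i}$ as the $n$ non-fixed elements, and put one element in sort $\s_{2}$. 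Since $|\s^{\A}|=g(n)>1$ the instances $\psi^{\s}_{=1}\rightarrow\psi^{\s_{2}}_{\geq m}$ hold vacuously; since $\A$ has exactly $n$ non-fixed points, $\psi^{\neq}_{\geq m}$ holds in $\A$ only for $m\leq n$, and then $g(m)\leq g(n)=|\s^{\A}|$, so every instance $\psi^{\neq}_{\geq m}\rightarrow\psi^{\s}_{\geq g(m)}$ holds as well; thus $\A$ is indeed a $\Tseventeen$-interpretation satisfying $\varphi_{n}$. As $(g(n),1)$ is $\leq$-below every element of $\textbf{Card}_{\Tseventeen,S}(\varphi_{n})$, it is the unique $\leq$-minimal element, hence equals $\minmod_{\Tseventeen,S}(\varphi_{n})$ by Proposition~\ref{alternative definition}.

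To conclude, assume $\minmod_{\Tseventeen,S}$ is computable and define $h\colon\mathbb{N}\setminus\{0\}\rightarrow\mathbb{N}$ by letting $h(n)$ be the first coordinate of the (computable, finite, singleton) set $\minmod_{\Tseventeen,S}(\varphi_{n})$; since $\varphi_{n}$ can be built from $n$ algorithmically, $h$ is computable, yet $h(n)=g(n)$ for all $n$ by the core computation, contradicting non-computability of $g$. The one step that needs care — the main obstacle — is the upper-bound half of the minimal-model computation: one must verify simultaneously that the small interpretation $\A$ satisfies \emph{every} instance of the axiom schema (in particular that no instance $\psi^{\neq}_{\geq m}\rightarrow\psi^{\s}_{\geq g(m)}$ with $m>n$ is broken, which is precisely why $s^{\A}$ is made to fix all but $n$ elements) and that no strictly smaller cardinality vector is attainable; the rest mirrors the analogous lemmas for $\Tfourteen$ and $\Tfifteen$ and is routine.
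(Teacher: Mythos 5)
Your proposal is correct and follows the same overall strategy as the paper's proof: assume $\minmod_{\Tseventeen,S}$ is computable, apply it to a family of formulas indexed by $n$, and read off $g(n)$, contradicting Lemma~\ref{lem:g-exists}. The one substantive difference is your choice of test formula: you use $\NNNEQ{x}{n}\wedge\bigwedge_{i=1}^{n}\neg(s(x_{i})=x_{i})$, whereas the paper applies $\minmod_{\Tseventeen,S}$ to the bare $\NNNEQ{x}{n}$; your version is the right one, because the bare disequalities do not force $|\s^{\A}|\geq g(n)$ --- an interpretation with $n\geq 2$ elements of sort $\s$, one element of sort $\s_{2}$, and $s^{\A}$ the identity satisfies every axiom of $\Tseventeen$ vacuously, so the extra conjuncts are needed to trigger $\psi^{\neq}_{\geq n}$ and hence the cardinality constraint $|\s^{\A}|\geq g(n)$. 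Your explicit verification that $\minmod_{\Tseventeen,S}(\varphi_{n})=\{(g(n),1)\}$ --- in particular the witness interpretation with exactly $n$ non-fixed points, which correctly uses $g(n)\geq n+1$ --- supplies the detail the paper leaves as ``easy to see'' and closes the gap its formula choice would otherwise leave open.
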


\begin{proof}
    Suppose that $\Tseventeen$ has a computable minimal model function, so that we can define a function $h:\mathbb{N}\setminus\{0\}\rightarrow\mathbb{N}$ by making, for $S=\{\s,\s_{2}\}$,
    \[h(n)=\min\{p : (p,q)\in\minmod_{\Tseventeen,S}(\neq(x_{1},\ldots,x_{n}))\}.\]
    $h$ is computable as $\minmod_{\Tseventeen,S}(\neq(x_{1},\ldots,x_{n}))$ can be found algorithmically and is finite (in fact, it is a singleton). At the same time, it is easy to see that $h(n)=g(n)$ for all $n\in\mathbb{N}\setminus\{0\}$, leading to a contradiction.
\end{proof}

\subsection{\tp{$\Teighteen$}{Teighteen}}

$\Tseventeen$ is the $\Sigma_{s}^{2}$-theory with axiomatization
\[\{\psi_{\vee}\wedge(\psi^{\s}_{=1}\rightarrow\psi^{\s_{2}}_{\geq n})\wedge(\psi^{\neq}_{\geq n}\rightarrow\psi^{\s}_{\geq g(n)}) : n\in\mathbb{N}\setminus\{0\}\},\]
where the definition of $\psi^{\neq}_{\geq n}$ can be found in \Cref{Tfifteen},
and $x_{1}$ through $x_{n}$ are variables of sort $\s$. We can see the models $\A$ of $\Teighteen$ as belonging to three main classes: those $\A$ with $|\s^{\A}|=1$ (and $s^{\A}$ the identity) and $|\s_{2}^{\A}|\geq\aleph_{0}$; those with $|\s^{\A}|\geq\aleph_{0}$, $(s^{\A})^{2}(a)$ equal to either $a$ or $s^{\A}(a)$ for all $a\in\s^{\A}$, and where $\s_{2}^{\B}$ can have any cardinality; and those with $1<|\s^{\A}|<\aleph_{0}$, $|\s_{2}^{\A}|$ any cardinality, where $(s^{\A})^{2}(a)$ equals either $a$ or $s^{\A}(a)$ for all $a\in\s^{\A}$, and if there are at least $n$ elements $a\in\s^{\A}$ such that $s^{\A}(a)\neq a$, then $|\s^{\A}|\geq g(n)$. Notice that, once more, $\Teighteen$ could be seen as the result of a generalized operator $\addnc{\cdot}$ being applied to the theory $\Tseventeen$: most proofs are then very similar, so we choose to omit them.

\begin{lemma}
    The $\Sigma_{s}^{2}$-theory $\Teighteen$ is smooth, and thus stably infinite, with respect to $\{\s,\s_{2}\}$.
\end{lemma}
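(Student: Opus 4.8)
The plan is to follow the proof of \Cref{Tseventeen is SM} essentially verbatim, the one additional obligation being to check that the padding construction preserves the axiom $\psi_{\vee}$. Given a quantifier-free formula $\phi$, a $\Teighteen$-interpretation $\A$ satisfying $\phi$, and cardinals $\kappa\geq|\s^{\A}|$ and $\kappa_{2}\geq|\s_{2}^{\A}|$, I would take sets $A$ and $B$ disjoint from the domains of $\A$, with $\kappa$ (resp.\ $\kappa_{2}$) elements when $\kappa$ (resp.\ $\kappa_{2}$) is infinite and $\kappa-|\s^{\A}|$ (resp.\ $\kappa_{2}-|\s_{2}^{\A}|$) elements otherwise, and define $\B$ by $\s^{\B}=\s^{\A}\cup A$, $\s_{2}^{\B}=\s_{2}^{\A}\cup B$, $s^{\B}(a)=s^{\A}(a)$ for $a\in\s^{\A}$ and $s^{\B}(a)=a$ for $a\in A$, with $x^{\B}=x^{\A}$ for every variable $x$ of sort $\s$ and $u^{\B}=u^{\A}$ for every variable $u$ of sort $\s_{2}$. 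Since the variable assignment is unchanged and $s^{\A}$ maps $\s^{\A}$ into itself, every term occurring in $\phi$ keeps its value, so $\B\vDash\phi$; and by construction $|\s^{\B}|=\kappa$ and $|\s_{2}^{\B}|=\kappa_{2}$.

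Next I would verify that $\B$ is a $\Teighteen$-interpretation by checking each conjunct of the axiomatization. For $\psi_{\vee}$: on a new element $a\in A$ one has $(s^{\B})^{2}(a)=a$, and on $a\in\s^{\A}$ one has $(s^{\B})^{2}(a)=(s^{\A})^{2}(a)\in\{a,s^{\A}(a)\}=\{a,s^{\B}(a)\}$ since $\A\vDash\psi_{\vee}$; hence $\B\vDash\psi_{\vee}$. For $\psi^{\s}_{=1}\rightarrow\psi^{\s_{2}}_{\geq n}$: if $|\s^{\B}|=1$ then $\kappa=1$, so $|\s^{\A}|=1$, and since $\A$ is a $\Teighteen$-interpretation $|\s_{2}^{\A}|\geq\aleph_{0}$, whence $|\s_{2}^{\B}|\geq|\s_{2}^{\A}|\geq n$. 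For $\psi^{\neq}_{\geq n}\rightarrow\psi^{\s}_{\geq g(n)}$: the elements $a$ of $\s^{\B}$ with $s^{\B}(a)\neq a$ are exactly the elements $a\in\s^{\A}$ with $s^{\A}(a)\neq a$, so if $\B\vDash\psi^{\neq}_{\geq n}$ then already $\A\vDash\psi^{\neq}_{\geq n}$, giving $|\s^{\A}|\geq g(n)$ and therefore $|\s^{\B}|\geq g(n)$.

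Having produced, for every admissible pair of target cardinalities, a $\Teighteen$-interpretation of $\phi$ with those domain sizes, smoothness with respect to $\{\s,\s_{2}\}$ follows, and stable infiniteness is then immediate (take $\kappa=\kappa_{2}=\aleph_{0}$, or invoke that smoothness implies stable infiniteness). The only genuinely new point relative to \Cref{Tseventeen is SM} is the preservation of $\psi_{\vee}$, which goes through precisely because the padding adds only fixed points of $s$; I do not anticipate a real obstacle here, and indeed this is the reason the paper can treat $\Teighteen$ as a (non-formal) instance of the operator $\addnc{\cdot}$ applied to $\Tseventeen$.
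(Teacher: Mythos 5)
Your proposal is correct and follows precisely the route the paper intends: the paper's proof of this lemma is literally "Similar to that of \Cref{Tseventeen is SM}," and you have simply spelled out that similarity, with the one genuinely new point being the preservation of $\psi_{\vee}$ under the padding-by-fixed-points construction, which you verify correctly. No gaps.
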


\begin{proof}
   This proof is similar to that of \Cref{Tseventeen is SM}.
\end{proof}

\begin{lemma}
    The $\Sigma_{s}^{2}$-theory $\Teighteen$ is finitely witnessable with respect to $\{\s,\s_{2}\}$.
\end{lemma}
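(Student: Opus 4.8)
The plan is to exhibit an explicit computable witness for $\Teighteen$, following closely the witness used for $\Tseventeen$ in \Cref{Tseventeen is FW} and exploiting the one structural difference, namely that every $\Teighteen$-interpretation additionally satisfies $\psi_{\vee}$, which only simplifies matters. Given a quantifier-free $\phi$, I would build $\wit(\phi)$ as a conjunction of $\phi$ with three groups of fresh conjuncts. First, a \emph{closure} group: for each variable $x \in \vars_{\s}(\phi)$, letting $M_x$ be the largest $j$ with $s^{j}(x)$ occurring in $\phi$, introduce fresh $\s$-variables $y^{x}_{1},\dots,y^{x}_{M_x+1}$ and add $y^{x}_{1} = s(x)$ together with $y^{x}_{j+1} = s(y^{x}_{j})$; because every $\Teighteen$-interpretation satisfies $\psi_{\vee}$, the set of domain elements named by $\vars_{\s}(\phi)$ together with these $y^{x}_{j}$ is closed under $s$ (the $s$-image of a frontier element $y^{x}_{M_x+1}$ is forced by $\psi_{\vee}$ to be either itself or $y^{x}_{M_x}$). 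Second, a \emph{padding} group: compute from $\phi$ a bound $K$ on the number of variables introduced so far (e.g.\ $K = |\vars_{\s}(\phi)| + \sum_{x}(M_x+1)$), take $l$ least with $2^{l}\geq K$, and add fresh $\s$-variables $z_{1},\dots,z_{g(2^{l})}$ with the trivial conjuncts $z_{i} = z_{i}$; here $g(2^{l}) = 3\cdot 2^{l-1}$ is computable by clause (5) of \Cref{lem:g-exists}, even though $g$ itself is not. Third, a single fresh $\s_{2}$-variable $u_{0}$ with conjunct $u_{0} = u_{0}$, ensuring the $\s_{2}$-domain of the model we build can be nonempty.

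For clause (i) of finite witnessability I would observe that $\exists\overarrow{x}\,\wit(\phi)$ and $\phi$ are already equivalent as $\Sigma_{s}^{2}$-formulas (hence $\Teighteen$-equivalent): the closure conjuncts are satisfied by setting $y^{x}_{j} := s^{j}(x)$ for the given values of the $x$'s, and the conjuncts $z_{i} = z_{i}$, $u_{0} = u_{0}$ hold in every structure with nonempty domains — in particular $\wit$ imposes no cardinality that $\phi$ does not already entail. For clause (ii), given a $\Teighteen$-interpretation $\A$ satisfying $\wit(\phi)$, let $S\subseteq\s^{\A}$ be the set of elements named by the closure variables; $S$ is closed under $s^{\A}$, and if it has $m$ non-fixed points then $\A$ being a $\Teighteen$-interpretation forces $|\s^{\A}|\geq g(m)$, so in particular $g(2^{l})\geq g(K)\geq g(m)$ while $|S|\leq K \leq 2^{l} < g(2^{l})$. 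I would then define $\B$ with $\s^{\B} = S$ together with $g(2^{l}) - |S|$ new fixed points, $s^{\B}$ equal to $s^{\A}$ on $S$ and the identity on the new points, $\s_{2}^{\B} = \vars_{\s_{2}}(\phi)^{\A}\cup\{u_{0}^{\A}\}$; assign the variables of $\phi$ as in $\A$, the $z_{i}$ surjectively onto $\s^{\B}$, and $u_{0}$ to $u_{0}^{\A}$. Since $|\s^{\B}| = g(2^{l})\geq g(m) > 1$ and $\B$ still has exactly $m$ non-fixed points, $\B$ satisfies $\psi_{\vee}$, every conditional $\psi^{\neq}_{\geq n}\rightarrow\psi^{\s}_{\geq g(n)}$, and $\psi^{\s}_{=1}\rightarrow\psi^{\s_{2}}_{\geq n}$ vacuously; it satisfies $\wit(\phi)$, and both of its domains are spanned by $\vars(\wit(\phi))$.

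The main obstacle I expect is precisely the design tension captured by the padding group: clause (ii) needs the $\s$-domain to be finitely spanned even though $\phi$ can force $\Theta(g(k))$ many $\s$-elements (by forcing $k$ pairwise-distinct non-fixed elements), while clause (i) forbids $\wit$ from forcing anything $\phi$ does not; reconciling these requires the padding variables to be \emph{present but unconstrained}, and requires their number to be computable despite $g$ being non-computable — which is exactly why $g\circ\rho$ for $\rho(k)=2^{k}$ (clause (5) of \Cref{lem:g-exists}) is needed. The remaining ingredients (closure under $s$ via $\psi_{\vee}$, and the routine check that $\B$ is a $\Teighteen$-interpretation, which parallels \Cref{Tseventeen is SM} and \Cref{Tseventeen is FW}) amount to the same bookkeeping as in the $\Tseventeen$ argument.
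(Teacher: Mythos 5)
Your proposal is correct and is essentially the construction the paper intends: the paper's own proof is just a pointer to the proof for $\Tseventeen$ (itself a pointer to Lemma 55 of the full version of \cite{CADE}), and that argument is exactly your closure-under-$s$ chain (enabled by $\psi_{\vee}$) plus padding with $g(2^{l})$ unconstrained fresh variables, computable via clause (5) of \Cref{lem:g-exists}. The only nits are bookkeeping: take $l\geq 1$ so that $g(2^{l})=3\cdot 2^{l-1}$ actually applies, include $\vars_{\s}(\phi)^{\A}$ itself in $S$, and state that the $y^{x}_{j}$ keep their $\A$-values in $\B$.
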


\begin{proof}
    This proof is similar to that of \Cref{Tseventeen is FW}.
\end{proof}

\begin{lemma}
    The $\Sigma_{s}^{2}$-theory $\Teighteen$ is not strongly finitely witnessable with respect to $\{\s,\s_{2}\}$.
\end{lemma}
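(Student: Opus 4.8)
The plan is to reuse the template that establishes failure of strong finite witnessability for $\Tseventeen$ (see \Cref{Tseventeen is not SFW}), since $\Teighteen$ carries the very same axiom scheme $\psi^{\s}_{=1}\rightarrow\psi^{\s_{2}}_{\geq n}$ that does all the work there, the only new feature ($\psi_{\vee}$) being irrelevant to the obstruction. The underlying idea is that an arrangement can force the domain of sort $\s$ to be a singleton, and then this axiom scheme forces the domain of sort $\s_{2}$ to be infinite, which no finite set of variables can witness.

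Concretely, I would argue by contradiction: assume $\wit$ is a strong witness for $\Teighteen$ with respect to $\{\s,\s_{2}\}$. First, take a $\Teighteen$-satisfiable quantifier-free formula containing at least one variable of sort $\s$, say $\phi$ equal to $x=x$ for $x$ of sort $\s$; by the description of the models of $\Teighteen$ it is satisfied by a $\Teighteen$-interpretation $\A$ with $|\s^{\A}|=1$ and $|\s_{2}^{\A}|=\aleph_{0}$. Next, using that $\phi$ and $\Exists{\overarrow{x}}\wit(\phi)$ are $\Teighteen$-equivalent, for $\overarrow{x}=\vars(\wit(\phi))\setminus\vars(\phi)$, pass to a $\Teighteen$-interpretation $\A^{\prime}$ that agrees with $\A$ except possibly on the variables in $\overarrow{x}$ and satisfies $\wit(\phi)$; in particular $|\s^{\A^{\prime}}|=|\s^{\A}|=1$. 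Then, letting $V=\vars(\wit(\phi))$ and $\delta_{V}$ be the arrangement induced by $\A^{\prime}$ on $V$, note that all variables of sort $\s$ in $V$ lie in a single $\s$-equivalence class of $\delta_{V}$ (as $\s^{\A^{\prime}}$ is a singleton), and that $\A^{\prime}$ satisfies $\wit(\phi)\wedge\delta_{V}$.

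Finally, invoke the strong witness property to obtain a $\Teighteen$-interpretation $\B$ satisfying $\wit(\phi)\wedge\delta_{V}$ with $\s^{\B}=\vars_{\s}(\wit(\phi)\wedge\delta_{V})^{\B}$ and $\s_{2}^{\B}=\vars_{\s_{2}}(\wit(\phi)\wedge\delta_{V})^{\B}$. Since $\B$ satisfies $\delta_{V}$ and $x\in V$ is forced into the unique $\s$-class, $|\s^{\B}|=1$, so the axiom $\psi^{\s}_{=1}\rightarrow\psi^{\s_{2}}_{\geq n}$ (valid for every $n\in\mathbb{N}\setminus\{0\}$) forces $\s_{2}^{\B}$ to be infinite; but $\s_{2}^{\B}$ is the image under $\B$ of the finite variable set $\vars_{\s_{2}}(\wit(\phi)\wedge\delta_{V})$, hence finite, a contradiction.

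I do not anticipate a serious obstacle: the argument is essentially bookkeeping once one observes that an arrangement can pin $|\s^{\B}|=1$. The only point meriting care is confirming that the same reasoning does not also refute finite witnessability of $\Teighteen$ (it does not: a plain witness may return a model with $|\s^{\B}|>1$ and $\s_{2}^{\B}$ finite, avoiding the $\psi^{\s}_{=1}$ clause entirely); it is precisely the extra power to prescribe an arbitrary arrangement that triggers the obstruction, which is exactly the informal reason $\Teighteen$ separates the two notions.
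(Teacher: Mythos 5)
Your proof is correct and is essentially the argument the paper intends: the paper's own proof simply defers to the proof for $\Tseventeen$ (which in turn cites Lemma 56 of \cite{arxivCADE}), and the obstruction exploited there is exactly the one you use — an arrangement pinning all $\s$-variables together forces $|\s^{\B}|=1$, whence the axioms $\psi^{\s}_{=1}\rightarrow\psi^{\s_{2}}_{\geq n}$ make $\s_{2}^{\B}$ infinite, contradicting $\s_{2}^{\B}=\vars_{\s_{2}}(\wit(\phi)\wedge\delta_{V})^{\B}$ being the image of a finite set of variables. The only cosmetic point is to guarantee $\vars_{\s}(\wit(\phi)\wedge\delta_{V})\neq\emptyset$ (the definition does not force $x\in\vars(\wit(\phi))$), but this is harmless: either include $x$ in $V$ explicitly, or observe that an empty $\s$-variable set would require $\s^{\B}=\emptyset$, which is likewise impossible.
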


\begin{proof}
    This proof is similar to that of \Cref{Tseventeen is not SFW}.
\end{proof}

\begin{lemma}
    The $\Sigma_{s}^{2}$-theory $\Teighteen$ is convex with respect to $\{\s,\s_{2}\}$.
\end{lemma}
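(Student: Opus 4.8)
As printed the lemma asserts that $\Teighteen$ is convex, but — exactly as in the companion lemma for $\Tsixteen$, and as recorded by the $\convex=F$ entry for $\Teighteen$ in \Cref{tab-summary-app} — what actually holds, and what the plan below establishes, is that $\Teighteen$ is \emph{not} convex. This is forced by the presence of $\psi_{\vee}$ among the axioms, in line with the design of the operator $\addnc{\cdot}$ (cf.\ \Cref{def:operators}, \Cref{thm:addncresult}); any route toward the verbatim statement breaks down precisely at the display in step~(2) below, which is itself the obstruction to convexity.

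\textbf{Plan, step (1): the forced disjunction.} Take $\phi$ to be the conjunction of literals $(y=s(x))\wedge(z=s(y))$ with $x,y,z$ of sort $\s$. Every $\Teighteen$-interpretation satisfies $\psi_{\vee}=\Forall{x}[(s^{2}(x)=s(x))\vee(s^{2}(x)=x)]$, so in any $\Teighteen$-interpretation $\A$ satisfying $\phi$ one has $z^{\A}=s^{\A}(s^{\A}(x^{\A}))\in\{s^{\A}(x^{\A}),x^{\A}\}=\{y^{\A},x^{\A}\}$, whence $\A\vDash(x=z)\vee(y=z)$. Thus $\phi\vDash_{\Teighteen}(x=z)\vee(y=z)$.

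\textbf{Plan, step (2): breaking each disjunct.} It then suffices to exhibit one $\Teighteen$-interpretation satisfying $\phi$ but falsifying $x=z$, and one satisfying $\phi$ but falsifying $y=z$. For the first, let $\A$ have $\s^{\A}=\{a,b\}$, $s^{\A}(a)=b$, $s^{\A}(b)=b$, with $\s_{2}^{\A}$ an arbitrary nonempty set, and set $x^{\A}=a$, $y^{\A}=z^{\A}=b$. For the second, let $\B$ have $\s^{\B}=\{a,b,c\}$, $s^{\B}(a)=b$, $s^{\B}(b)=a$, $s^{\B}(c)=c$, with $\s_{2}^{\B}$ nonempty, and set $x^{\B}=z^{\B}=a$, $y^{\B}=b$. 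In both cases $\psi_{\vee}$ and $\phi$ hold by inspection ($s^{2}$ is pointwise in $\{s(\cdot),\cdot\}$), $\A$ falsifies $x=z$, and $\B$ falsifies $y=z$; combined with step~(1), convexity fails.

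\textbf{Main obstacle.} The only non-routine point is checking that $\A$ and $\B$ are genuinely $\Teighteen$-interpretations, i.e.\ that they satisfy, for every $n\in\mathbb{N}\setminus\{0\}$, the axioms $\psi^{\s}_{=1}\rightarrow\psi^{\s_{2}}_{\geq n}$ (vacuous here, since $|\s^{\A}|,|\s^{\B}|>1$) and $\psi^{\neq}_{\geq n}\rightarrow\psi^{\s}_{\geq g(n)}$ (with $\psi^{\neq}_{\geq n}$ as in \Cref{Tfifteen}). In $\A$ exactly one element ($a$) satisfies $s^{\A}(a)\neq a$, so $\A\vDash\psi^{\neq}_{\geq 1}$ only, and one needs $|\s^{\A}|=2\geq g(1)$; in $\B$ exactly two elements ($a,b$) move, so $\B\vDash\psi^{\neq}_{\geq 1}\wedge\psi^{\neq}_{\geq 2}$ only, and one needs $|\s^{\B}|=3\geq g(2)$. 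Both hold since $g(1)=1+f(1)=2$ and $g(2^{1})=3\times 2^{0}=3$ by the defining properties of $g$ (property~\textbf{5} of \Cref{lem:g-exists}). This closes the argument, showing $\Teighteen$ is not convex with respect to $\{\s,\s_{2}\}$.
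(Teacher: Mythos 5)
Your proposal is correct and follows essentially the same route as the paper's own proof, which — despite the lemma's wording — likewise establishes that $\Teighteen$ is \emph{not} convex (the statement is a typo: the summarizing table records $\convex=F$ for $\Teighteen$, and the presence of $\psi_{\vee}$ in the axioms forces non-convexity, just as you observe). The paper uses the same cube $(y=s(x))\wedge(z=s(y))$ and essentially the same pair of counter-models (a $2$-cycle plus a fixed point on $3=g(2)$ elements of sort $\s$, and an element mapped into a fixed point), with the same verification that they are $\Teighteen$-interpretations via $g(1)=2$ and $g(2)=3$.
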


\begin{proof}
    As $\Teighteen$ satisfies $\psi_{\vee}$, we get that, for $\phi$ the cube $(y=s(x))\wedge(z=s(y))$,
    \[
    \phi\vDash_{\Teighteen}(x=z)\vee(y=z).\]
    We then take the $\Teighteen$-interpretations $\A$ and $\B$ given by: $\s^{\A}=\s^{\B}=\{a_{0},a_{1},a_{2}\}$, and $\s_{2}^{\A}=\s_{2}^{\B}=\{b\}$ (so that $|\s^{\A}|=|\s^{\B}|=3=g(2)$); $s^{\A}(a_{0})=s^{\B}(a_{0})=a_{1}$, $s^{\A}(a_{1})=a_{0}$, $s^{\B}(a_{1})=a_{1}$ and $s^{\A}(a_{2})=s^{\B}(a_{2})=a_{2}$ (so $\A$ satisfies $\psi^{\neq}_{\geq 1}$ and $\psi^{\neq}_{\geq 2}$ but not $\psi^{\neq}_{\geq 3}$, and $\B$ satisfies $\psi^{\neq}_{\geq 1}$ but not $\psi^{\neq}_{\geq 2}$, meaning both are $\Teighteen$-interpretations); $x^{\A}=z^{\A}=x^{\B}=a_{0}$, and $y^{\A}=y^{\B}=z^{\B}=a_{1}$. So $\A$ does not satisfy $y=z$, and $\B$ does not satisfy $x=z$, meaning that neither $\phi\vDash_{\Teighteen}x=z$ nor $\phi\vDash_{\Teighteen}y=z$.
\end{proof}

\begin{lemma}
    The $\Sigma_{s}^{2}$-theory $\Teighteen$ has the finite model property with respect to $\{\s,\s_{2}\}$.
\end{lemma}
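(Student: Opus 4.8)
The plan is to reuse, essentially verbatim, the construction from the proof of \Cref{Tseventeen is FMP}. The key observation is that every $\Teighteen$-interpretation is in particular a $\Tseventeen$-interpretation, since the two axiomatizations differ only by the extra conjunct $\psi_{\vee}$; hence the finite interpretation produced in that proof already satisfies the input formula together with the cardinality constraints $\psi^{\s}_{=1}\rightarrow\psi^{\s_{2}}_{\geq n}$ and $\psi^{\neq}_{\geq n}\rightarrow\psi^{\s}_{\geq g(n)}$, and the only thing left to verify is that it also satisfies $\psi_{\vee}$.

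Concretely, I would take a quantifier-free $\phi$ and a $\Teighteen$-interpretation $\A$ satisfying it, form the term set $\Gamma=\{s^{j}(x): x\in\vars_{\s}(\phi),\ 0\leq j\leq M_{x}+1\}$ with $M_{x}$ the largest $j$ such that $s^{j}(x)$ occurs in $\phi$, and define $\B$ exactly as in the proof of \Cref{Tseventeen is FMP}: $\s^{\B}=\Gamma^{\A}\cup A$ with $|A|$ chosen so that $|\s^{\B}|=g(m+1)$, where $m$ is the number of $a\in\Gamma^{\A}$ with $s^{\A}(a)\in\Gamma^{\A}$ and $s^{\A}(a)\neq a$; $\s_{2}^{\B}=\vars_{\s_{2}}(\phi)^{\A}\cup\{b\}$; $s^{\B}(a)=s^{\A}(a)$ for $a\in\Gamma^{\A}$ with $s^{\A}(a)\in\Gamma^{\A}$, and $s^{\B}(a)=a$ for every other element; and the assignment to variables copied from $\A$ on $\vars(\phi)$, arbitrary elsewhere. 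The argument there shows, word for word, that $\B$ has finite domains, satisfies $\phi$, and satisfies the two cardinality axiom schemes, so $\B$ is a finite $\Tseventeen$-interpretation of $\phi$.

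It then remains to check $\B\vDash\psi_{\vee}$, i.e.\ $(s^{\B})^{2}(a)\in\{a,s^{\B}(a)\}$ for every $a\in\s^{\B}$. If $a\in A$, or if $a\in\Gamma^{\A}$ but $s^{\A}(a)\notin\Gamma^{\A}$, then $s^{\B}(a)=a$, so $(s^{\B})^{2}(a)=a$ and we are done. Otherwise $a\in\Gamma^{\A}$ and $s^{\A}(a)\in\Gamma^{\A}$, so $s^{\B}(a)=s^{\A}(a)$; since $\A\vDash\psi_{\vee}$ we have $(s^{\A})^{2}(a)\in\{a,s^{\A}(a)\}\subseteq\Gamma^{\A}$, hence $s^{\B}(s^{\A}(a))=s^{\A}(s^{\A}(a))=(s^{\A})^{2}(a)\in\{a,s^{\A}(a)\}=\{a,s^{\B}(a)\}$, which is exactly one of the two disjuncts of $\psi_{\vee}$. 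Therefore $\B$ is a $\Teighteen$-interpretation, finite in both sorts, satisfying $\phi$, establishing the finite model property with respect to $\{\s,\s_{2}\}$; this mirrors the way $\Teighteen$ behaves like the result of applying a generalized $\addnc{\cdot}$ operator to $\Tseventeen$.

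The only point requiring care — and the place where one could imagine a subtlety — is showing that transplanting $s^{\A}$ onto the finite domain cannot create an $s$-orbit incompatible with $\psi_{\vee}$. But this is automatic: $\psi_{\vee}$ already forces every $s^{\A}$-orbit to collapse within two steps onto a fixed point or a $2$-cycle, so the images $a,s^{\A}(a),(s^{\A})^{2}(a)$ of any $a\in\Gamma^{\A}$ stay inside $\Gamma^{\A}$ whenever the first two do, and copying $s^{\A}$ there (while redirecting escaping elements to themselves) can only shorten orbits. I expect no essential difficulty beyond this bookkeeping, the proof following the same reduction pattern already used for the other $\Teighteen$-lemmas in this subsection.
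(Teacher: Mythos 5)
Your proposal is correct and matches the paper's approach: the paper's proof of this lemma is literally "similar to that of the finite model property lemma for $\Tseventeen$," which is exactly the construction you reuse. Your explicit verification that the transplanted function still satisfies $\psi_{\vee}$ (because $\psi_{\vee}$ in $\A$ keeps $(s^{\A})^{2}(a)$ inside $\{a,s^{\A}(a)\}\subseteq\Gamma^{\A}$, and escaping elements become fixed points) is precisely the one extra step the paper leaves implicit, and it is carried out correctly.
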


\begin{proof}
    This proof is similar to that of \Cref{Tseventeen is FMP}.
\end{proof}

\begin{lemma}
    The $\Sigma_{s}^{2}$-theory $\Teighteen$ is not stably finite with respect to $\{\s,\s_{2}\}$.
\end{lemma}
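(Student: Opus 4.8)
The plan is to follow the proof of \Cref{Tseventeen is not SF} essentially verbatim, since the axiomatizations of $\Tseventeen$ and $\Teighteen$ differ only in the extra conjunct $\psi_{\vee}$, which plays no role in the argument. First I would exhibit a single $\Teighteen$-interpretation $\A$ together with a quantifier-free formula $\phi$ that it satisfies (a tautology will do) such that no finite $\T$-interpretation of comparable cardinality can satisfy $\phi$. Concretely, take $\s^{\A}$ to be a singleton — so $s^{\A}$ is forced to be the identity — and $\s_{2}^{\A}$ countably infinite. I would check that $\A$ is indeed a $\Teighteen$-interpretation: $\psi_{\vee}$ holds because $s^{\A}(s^{\A}(a)) = a$; $\psi^{\s}_{=1}\rightarrow\psi^{\s_{2}}_{\geq n}$ holds for every $n$ because $\s_{2}^{\A}$ is infinite; and $\psi^{\neq}_{\geq n}\rightarrow\psi^{\s}_{\geq g(n)}$ holds \emph{vacuously}, since no element $a$ of the one-element domain $\s^{\A}$ satisfies $s^{\A}(a)\neq a$, so the antecedent fails.

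Next I would argue that there is no $\Teighteen$-interpretation $\B$ satisfying $\phi$ with both $\s^{\B}$ and $\s_{2}^{\B}$ finite, $|\s^{\B}|\leq|\s^{\A}|=1$ and $|\s_{2}^{\B}|\leq|\s_{2}^{\A}|=\aleph_{0}$; in fact no such $\B$ exists at all. Indeed $|\s^{\B}|\leq 1$ forces $|\s^{\B}|=1$, so $\psi^{\s}_{=1}$ holds in $\B$, and then the axiom schema $\psi^{\s}_{=1}\rightarrow\psi^{\s_{2}}_{\geq n}$ entails $|\s_{2}^{\B}|\geq n$ for every $n\in\mathbb{N}\setminus\{0\}$, contradicting the finiteness of $\s_{2}^{\B}$. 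This exhibits the required witness to the failure of stable finiteness with respect to $\{\s,\s_{2}\}$, completing the proof.

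I do not anticipate any genuine obstacle: the only step that requires care is verifying that $\A$ as described really is a model of $\Teighteen$, and the only subtle point there is the vacuous satisfaction of $\psi^{\neq}_{\geq n}\rightarrow\psi^{\s}_{\geq g(n)}$ in a one-element $\s$-domain. Accordingly, in the paper this lemma can simply be stated with the remark that its proof is similar to that of \Cref{Tseventeen is not SF}.
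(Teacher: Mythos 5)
Your proposal is correct and matches the paper's argument: the paper proves this lemma by noting it is similar to the proof of \Cref{Tseventeen is not SF}, which is exactly the model $\A$ with $|\s^{\A}|=1$ and $|\s_{2}^{\A}|=\aleph_{0}$ versus the nonexistence of any $\Teighteen$-interpretation with $|\s^{\B}|=1$ and $\s_{2}^{\B}$ finite. Your additional checks (that $\psi_{\vee}$ holds for the identity and that $\psi^{\neq}_{\geq n}\rightarrow\psi^{\s}_{\geq g(n)}$ is vacuously satisfied) are exactly the right details to verify.
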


\begin{proof}
    This proof is similar to that of \Cref{Tseventeen is not SF}.
\end{proof}

\begin{lemma}
    The $\Sigma_{s}^{2}$-theory $\Teighteen$ does not have a computable minimal model function with respect to $\{\s,\s_{2}\}$.
\end{lemma}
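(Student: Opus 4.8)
The plan is to mimic almost verbatim the argument of \Cref{Tseventeen is not CMMF}, the only extra ingredient being that the witnessing interpretations must also satisfy $\psi_{\vee}$. Assume, for contradiction, that $\Teighteen$ has a computable minimal model function $\minmod_{\Teighteen,S}$ with $S=\{\s,\s_{2}\}$. For each $n\in\mathbb{N}\setminus\{0\}$ let $\phi_{n}$ be the quantifier-free formula $\NNNEQ{x}{n}\wedge\bigwedge_{i=1}^{n}\neg(s(x_{i})=x_{i})$, with the $x_{i}$ of sort $\s$; this formula is $\Teighteen$-satisfiable (for instance by any infinite $\Teighteen$-interpretation). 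Define $h:\mathbb{N}\setminus\{0\}\rightarrow\mathbb{N}$ by
\[h(n)=\min\{p : (p,q)\in\minmod_{\Teighteen,S}(\phi_{n})\}.\]
Since $\minmod_{\Teighteen,S}$ is computable and always outputs a finite set, and $\phi_{n}$ can be produced algorithmically from $n$, the function $h$ is computable.

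The heart of the proof is to show that $h(n)=g(n)$ for every $n$, which contradicts the non-computability of $g$ (\Cref{lem:g-exists}). By \Cref{alternative definition} (and the remark at the start of the appendix identifying $\minmod_{\T,S}(\phi)$ with $\minimal(\textbf{Card}_{\T,S}(\phi))$) it suffices to determine the $\leq$-minimal elements of $\textbf{Card}_{\Teighteen,S}(\phi_{n})$. On the one hand, any $\Teighteen$-interpretation $\A$ satisfying $\phi_{n}$ has at least $n$ distinct elements $a$ with $s^{\A}(a)\neq a$, hence $\A\vDash\psi^{\neq}_{\geq n}$, and so $|\s^{\A}|\geq g(n)$ by the axiomatization of $\Teighteen$; trivially $|\s_{2}^{\A}|\geq 1$. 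On the other hand, for every $m\geq g(n)$ there is a $\Teighteen$-interpretation satisfying $\phi_{n}$ with $|\s^{\A}|=m$ and $|\s_{2}^{\A}|=1$: take $\s^{\A}$ of size $m$, interpret $s^{\A}$ by grouping the elements of $\s^{\A}$ into $2$-cycles ($s^{\A}(a)=b$, $s^{\A}(b)=a$) with at most one leftover fixed point; this satisfies $\psi_{\vee}$ (in fact $(s^{\A})^{2}=\mathrm{id}$), and since $g(n)\geq n+1$ it has at least $m-1\geq n$ non-fixed points, so $\phi_{n}$ is satisfied by assigning the $x_{i}$ to $n$ of them. Thus $\textbf{Card}_{\Teighteen,S}(\phi_{n})$ has exactly one $\leq$-minimal element, namely $(g(n),1)$, so $\minmod_{\Teighteen,S}(\phi_{n})=\{(g(n),1)\}$ and $h(n)=g(n)$, giving the contradiction.

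The argument presents no real obstacle beyond this bookkeeping; the only point that genuinely uses the structure of $\Teighteen$ (rather than that of $\Tseventeen$) is checking that the size-$m$ witnesses can be chosen so as to satisfy $\psi_{\vee}$, which the $2$-cycle construction handles. Indeed, since $\Teighteen$ is, informally, the result of applying a generalized $\addnc{\cdot}$ operator to $\Tseventeen$, one could alternatively read this off from \Cref{thm:addncresult} together with \Cref{Tseventeen is not CMMF}, were that operator literally applicable to this non-empty signature; we instead give the direct argument above.
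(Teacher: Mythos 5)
Your overall strategy is exactly the paper's (the paper simply defers to the argument for $\Tseventeen$): define $h(n)=\min\{p : (p,q)\in\minmod_{\Teighteen,S}(\phi_{n})\}$ for $\phi_{n}=\NNNEQ{x}{n}\wedge\bigwedge_{i=1}^{n}\neg(s(x_{i})=x_{i})$, show $h=g$, and contradict \Cref{lem:g-exists}. Your lower bound is also correct: any $\Teighteen$-interpretation satisfying $\phi_{n}$ satisfies $\psi^{\neq}_{\geq n}$ and hence has $|\s^{\A}|\geq g(n)$.

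However, your witnessing construction for the upper bound is wrong. If you take $\s^{\A}$ of size $m=g(n)$ and group (all but at most one of) its elements into $2$-cycles, then the number of elements $a$ with $s^{\A}(a)\neq a$ is $m$ or $m-1$, so $\A\vDash\psi^{\neq}_{\geq m-1}$. The axiom $\psi^{\neq}_{\geq k}\rightarrow\psi^{\s}_{\geq g(k)}$ then forces $|\s^{\A}|\geq g(m-1)\geq m$, with equality only when $f(i)=0$ for all $2\leq i\leq m-1$ (and if $m$ is even the requirement is $|\s^{\A}|\geq g(m)\geq m+1$, which is outright impossible). Since $f$ takes the value $1$ infinitely often, your interpretation violates the axioms of $\Teighteen$ for all but finitely many $n$: the point of the implication $\psi^{\neq}_{\geq k}\rightarrow\psi^{\s}_{\geq g(k)}$ is precisely that a finite model cannot have too many non-fixed points relative to its cardinality. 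The fix is to make \emph{exactly} $n$ elements non-fixed: e.g.\ choose a designated element $c$ (which exists since $g(n)\geq n+1$), set $s^{\A}(a)=c$ for the $n$ chosen elements $a\neq c$ and $s^{\A}(b)=b$ for all other $b$. Then $(s^{\A})^{2}(a)=c=s^{\A}(a)$, so $\psi_{\vee}$ holds; $\psi^{\neq}_{\geq k}$ holds exactly for $k\leq n$, so all axioms are satisfied with $|\s^{\A}|=g(n)$ and $|\s_{2}^{\A}|=1$; and assigning the $x_{i}$ to the non-fixed elements satisfies $\phi_{n}$. With this repair, $(g(n),1)$ is indeed the unique minimal element of $\textbf{Card}_{\Teighteen,S}(\phi_{n})$ and the rest of your argument goes through.
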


\begin{proof}
    This proof is similar to that of \Cref{Tseventeen is not CMMF}.
\end{proof}

\section{Old Theories}
\label{sec:oldtheoriesapp}

In this section we consider the computability of a minimal model function
for theories that were defined in~\cite{CADE,FroCoS}.
Specifically, the axiomatizations of all the theories of this 
section can be found in Section 4 of \cite{FroCoS}.

\subsection{\tp{$\Teven$, $\Toneodd$}{Teven, Toneodd}}

\begin{lemma}\label{Teven}
    The $\Sigma_{1}$-theory $\Teven$ and the $\Sigma_{2}$-theory $\Toneodd$ have a computable minimal model function with respect to, respectively, $\{\s\}$ and $\{\s,\s_{2}\}$.
\end{lemma}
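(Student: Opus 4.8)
The plan is to exhibit explicit computable formulas for the two minimal model functions, along the lines of the proofs of \Cref{-SI+ES=>CMMF} and \Cref{-FMP+ES+OS=>CMMF}. The only input I will need about $\Teven$ and $\Toneodd$ are the descriptions of their models recalled from Section~4 of \cite{FroCoS}: both theories live over empty signatures, so an interpretation is a model exactly when its tuple of domain cardinalities belongs to a fixed set, which in both cases has the form $C\cup I$ with $C$ a \emph{decidable} set of finite cardinality tuples and $I$ the set of tuples obtained by setting the ``free'' coordinates to $\aleph_{0}$. For the one-sorted $\Teven$, $C$ is the set of positive even integers and $I=\{\aleph_{0}\}$; for the two-sorted $\Toneodd$ one coordinate ranges over a bounded set of finite values while the other ranges over an infinite decidable set of finite cardinalities that is cofinal in $\mathbb{N}$. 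The key feature, in both cases, is that $C$ is decidable and cofinal in its free coordinate(s).

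Given a quantifier-free formula $\phi$ with variable set $V$ (writing $V_{\s}$ for its variables of sort $\s$), I would compute, exactly as in the cited proofs, the finite set $\eq{\phi}$ of sort-respecting equivalence relations $E$ on $V$ for which $\delta_{V}^{E}\wedge\phi$ is $\T$-satisfiable --- decidable, by decidability of equality logic and of $C$ --- and then set $\minmod_{\T,S}(\phi)$ equal to the collection of $\leq$-minimal elements of
\[\bigl\{\,\overline{c}\in C\cup I \;:\; \overline{c}\geq \bigl(\,|V_{\s}/E|\,\bigr)_{\s\in S}\ \text{for some}\ E\in\eq{\phi}\,\bigr\},\]
taking $\minmod_{\T,S}(\phi)=\emptyset$ when $\eq{\phi}=\emptyset$, which happens precisely when $\phi$ is not $\T$-satisfiable. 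This differs from \Cref{-SI+ES=>CMMF} only in that $C$ is now infinite; computability is recovered because $C$ is decidable and cofinal in its free coordinates, so for any bound the least member of $C$ exceeding it in those coordinates is found by a terminating search, and \Cref{Dickson} guarantees that the displayed set has only finitely many $\leq$-minimal elements (a single one in the one-sorted case of $\Teven$, as it must be).

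Correctness follows from \Cref{alternative definition}: it suffices to check that, for $\T$-satisfiable $\phi$, the set $\textbf{Card}_{\T,S}(\phi)$ equals the displayed set. Every $\T$-interpretation $\A$ satisfying $\phi$ induces an arrangement of $V$ lying in $\eq{\phi}$, so $\bigl(|\s^{\A}|\bigr)_{\s\in S}$ belongs to $C\cup I$ and dominates the corresponding tuple $\bigl(|V_{\s}/E|\bigr)_{\s\in S}$; conversely, given any $\overline{c}\in C\cup I$ dominating such a tuple for some $E\in\eq{\phi}$, one starts from an interpretation realizing $\delta_{V}^{E}$ and pads each domain with fresh elements up to the cardinalities prescribed by $\overline{c}$, obtaining (since $\overline{c}\in C\cup I$, so the result is still a $\T$-interpretation) a $\T$-interpretation satisfying $\phi$ with exactly those cardinalities. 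The only real work is bookkeeping: pinning down the exact $C$ for $\Toneodd$ from \cite{FroCoS} and checking that the padding step respects $C\cup I$; the one genuinely new ingredient over the cited proofs is the observation that a decidable, cofinal $C$ keeps the function computable even though it is infinite.
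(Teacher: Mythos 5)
Your proposal is correct and follows essentially the same route as the paper's proof: compute the sort-respecting arrangements of $\vars(\phi)$ consistent with $\phi$, take the least induced quotient cardinalities, and round up to the nearest admissible cardinality tuple of the theory (least even number for $\Teven$, least odd number paired with $1$ for $\Toneodd$), with correctness verified by realizing an arrangement and padding the domains versus observing that any model dominates some arrangement tuple. The only difference is presentational --- you phrase the output as the $\leq$-minimal elements of the upward closure intersected with the decidable cofinal set of admissible cardinalities and invoke \Cref{alternative definition} and \Cref{Dickson}, whereas the paper writes the (singleton) answer explicitly and checks the two clauses of the definition directly.
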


\begin{proof}
    Take: a quantifier-free $\Sigma_{1}$-formula $\phi$; its set of variables $V$; the set $\eq{\phi}$ of equivalences on $V$ such that $\phi$ and $\delta_{V}^{E}$ are equivalent, a computable set given that it can be reduced to the satisfiability problem in equality logic; $m=\min\{|V/E| : E\in\eq{\phi}\}$, again computable (and equal to $\aleph_{0}$ iff $\eq{\phi}$ is empty, what happens iff $\phi$ is not satisfiable); and, finally, we state that
    \[\minmod_{\Teven,\{\s\}}(\phi)=\begin{cases}
        \{m\} & \text{if $m$ is even;}\\
        \{m+1\} & \text{if $m$ is odd;}\\
        \emptyset & \text{if $m=\aleph_{0}$}
    \end{cases}\]
    is a computable minimal model function, being certainly at least computable and with a finite output. Now assume $\phi$ is satisfiable, and thus $m\in\mathbb{N}$, take an $E\in\eq{\phi}$ such that $m=|E/V|$ and define an interpretation $\A$ whose domain has $m$ elements if that number is even, and $m+1$ elements if it is odd (so $\A$ is a $\Teven$-interpretation given it has an even number of elements), and where $x^{\A}=y^{\A}$ iff $xEy$ for any variables $x$ and $y$ in $V$ (and arbitrarily otherwise). Of course $\A$ satisfies $\delta_{V}^{E}$, and thus $\phi$.

    Now, suppose that $\B$ is another $\Teven$-interpretation that satisfies $\phi$ with $|\s^{\B}|\neq |\s^{\A}|$. Then $\B$ must satisfy $\delta_{V}^{F}$ for some $F\in\eq{\phi}$, and by definition of $E$ we have $|\s^{\B}|\geq |V/F|\geq |V/E|=m$. If $|\s^{\A}|=m$ we are done; otherwise, $|\s^{\A}|=m+1$, which is even, but since $|\s^{\B}|$ is also even and $|\s^{\B}|\geq m$ we get $|\s^{\B}|>|\s^{\A}|$, what finishes the proof for $\Teven$. The proof for $\Toneodd$ is essentially the same, changing even numbers for odd ones, and setting the cardinalities for the other sort to $1$.
\end{proof}
    
\subsection{\tp{$\Tninfty$}{Tninfty}}

\begin{lemma}
    The $\Sigma_{1}$-theory $\Tninfty$ has a computable minimal model function with respect to its only sort.
\end{lemma}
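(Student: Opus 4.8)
The plan is to read off the result as a direct corollary of \Cref{-FMP+ES+OS=>CMMF}, which asserts that any $\Sigma_{1}$-theory lacking the finite model property with respect to its only sort automatically possesses a computable minimal model function with respect to its only sort. Since $\Tninfty$ is a $\Sigma_{1}$-theory, the only thing that needs to be recalled is that $\Tninfty$ fails the finite model property; this was shown in \cite{FroCoS} and is recorded in \Cref{tab-summary-app}. Concretely, the formula $\NNNEQ{x}{n+1}$ is $\Tninfty$-satisfiable, yet any interpretation satisfying it has at least $n+1$ elements, and among $\Tninfty$-interpretations only the infinite ones are that large, so this formula witnesses the failure of the finite model property. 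Invoking \Cref{-FMP+ES+OS=>CMMF} then finishes the argument.

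For concreteness one can also spell out the minimal model function that the proof of \Cref{-FMP+ES+OS=>CMMF} produces in this case, just as \Cref{Teven} does for $\Teven$. Let $m_{1} < \cdots < m_{k}$ enumerate the cardinalities (finite or $\aleph_{0}$) realized by $\Tninfty$-interpretations. Given a quantifier-free formula $\phi$ with variable set $V$, let $\eq{\phi}$ be the computable set of equivalence relations $E$ on $V$ for which $\phi$ and $\delta_{V}^{E}$ are equivalent, and put $M(\phi)=\min\{\,|V/E| : E\in\eq{\phi}\,\}$, which is $\aleph_{0}$ exactly when $\phi$ is $\Tninfty$-unsatisfiable. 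Setting $\minmod_{\Tninfty,\{\s\}}(\phi)=\{\,\min\{m_{i}: m_{i}\geq M(\phi)\}\,\}$, with the usual convention that the inner minimum is $\aleph_{0}$ if no $m_{i}$ is large enough, gives a computable function (as $\eq{\phi}$ is decidable via equality logic) whose output is always a singleton. The verification that it meets the definition of a minimal model function is routine: for $\Tninfty$-satisfiable $\phi$, a model of a witnessing $\delta_{V}^{E}$ extends to a $\Tninfty$-interpretation of the selected cardinality, while every $\Tninfty$-interpretation satisfying $\phi$ has cardinality at least $M(\phi)$, hence at least the selected $m_{i}$, which forces minimality.

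There is no real obstacle here: the content is entirely localized in the already-proven \Cref{-FMP+ES+OS=>CMMF} together with the previously established fact that $\Tninfty$ has no finite model property. The only point that would require the actual axiomatization of $\Tninfty$ from \cite{FroCoS} is the identification of which finite cardinalities its interpretations can take, which is needed for the explicit formula of the second paragraph but not at all for the corollary route; so the cleanest proof to record is simply the one-line appeal to \Cref{-FMP+ES+OS=>CMMF}.
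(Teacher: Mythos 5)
Your proposal is correct, but it takes a genuinely different route from the paper's. The paper proves the lemma directly, in the style of its proof for $\Teven$: it exhibits the function $\minmod_{\Tninfty,\{\s\}}(\phi)=\{n\}$ if $m\leq n$ and $\{\aleph_{0}\}$ otherwise, where $m=\min\{|V/E| : E\in\eq{\phi}\}$, and then verifies minimality by hand. You instead observe that $\Tninfty$ lacks the finite model property (correctly witnessed by $\NNNEQ{x}{n+1}$, which is satisfied only by the infinite models) and invoke \Cref{-FMP+ES+OS=>CMMF}, whose hypotheses --- a $\Sigma_{1}$-theory without the finite model property with respect to its only sort --- are exactly met. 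This is a legitimate and arguably cleaner derivation; indeed the paper uses precisely this corollary style for other theories, deducing the result for $\Tleqone$, $\Tleqn$, $\Tmn$ from \Cref{-SI+ES=>CMMF} and for $\Tgeqn$, $\Ttwothree$, $\Tinfty$ from \Cref{SM+ES=>CMMF}, and it even cites \Cref{-FMP+ES+OS=>CMMF} in \Cref{tab-summary-app} to rule out the neighbouring combination differing from $\Tninfty$ only in the last property. Your second paragraph, which instantiates the construction from the proof of \Cref{-FMP+ES+OS=>CMMF} (the realized cardinalities being $n$ and $\aleph_{0}$), recovers the paper's explicit function, so the two arguments bottom out in the same computation; what the corollary route buys is that the minimality verification is done once, in the general theorem, rather than repeated for this particular theory.
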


\begin{proof}
    This proof is similar to that of \Cref{Teven}. Take again: a quantifier-free $\Sigma_{1}$-formula $\phi$; its variables $V$; the set $\eq{\phi}$ of equivalences $E$ on $V$ such that $\phi$ and $\delta_{V}^{E}$ are equivalent (since this can be reduced to the satisfiability problem of equality logic, we have an algorithmic way to find $\eq{\phi}$); $m=\min\{|V/E| : E\in\eq{\phi}\}$, for which there is also an algorithm; and then we state that
    \[\minmod_{\Tninfty,\{\s\}}(\phi)=\begin{cases}
        \{n\} & \text{if $m\leq n$;}\\
        \{\aleph_{0}\} & \text{if $m>n$}
    \end{cases}\]
    is a computable minimal model function, being clearly computable. Now assume $\phi$ is satisfiable, so that $m$ is actually in $\mathbb{N}$, and to prove that we indeed have a minimal model function, take an interpretation $\A$ whose domains possesses $n$ elements if $m\leq n$, and $\aleph_{0}$ otherwise (clearly making of it a $\Tninfty$-interpretation), and where $x^{\A}=y^{\A}$ iff $xEy$ (for an $E\in\eq{\phi}$ such that $|V/E|=m$) for variables $x,y\in V$, and arbitrarily otherwise. Since $\A$ satisfies $\delta_{V}^{E}$, it also satisfies $\phi$.

    Now, suppose $\B$ is another $\Tninfty$-interpretation that satisfies $\phi$ with $|\s^{\B}|\neq|\s^{\A}|$: let $F$ be the equivalence induced by $\B$ on $V$, and we have that $|\s^{\B}|\geq |V/F|\geq |V/E|=m$. If $|\s^{\A}|=n$ we have that, since $\B$ is a $\Tninfty$, that $|\s^{\B}|\geq n$ and thus $|\s^{\B}|>|\s^{\A}|$; otherwise $|\s^{\A}|=\aleph_{0}$. But in the latter case, since $|\s^{\B}|\geq m>n$ and $\B$ is a $\Tninfty$-interpretation, we get $|\s^{\B}|\geq\aleph_{0}$ and then $|\s^{\B}|>|\s^{\A}|$, finishing the proof.

\end{proof}

\subsection{\tp{$\Tbb$}{Tbb}}

\begin{lemma}
    The $\Sigma_{1}$-theory $\Tbb$ does not have a computable minimal model function with respect to its only sort.
\end{lemma}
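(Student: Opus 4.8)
The plan is to follow the same strategy already used for $\Tbb$ in \Cref{ex:bbold}, namely a reduction showing that a computable minimal model function for $\Tbb$ would yield a computable function that eventually dominates $\bb$, contradicting Rad\'o's theorem (\cite{Rado}). First I would recall the characterization of the models of $\Tbb$: a finite interpretation $\A$ is a $\Tbb$-interpretation if and only if $|\s_1^{\A}| = \bb(n)$ for some $n \in \mathbb{N}\setminus\{0\}$, and every infinite interpretation is a $\Tbb$-interpretation. Since the signature is empty and $\Tbb$ has infinite models, every formula of the form $\neq(x_1,\ldots,x_k)$ (the existential-closure-free conjunction of disequalities among $k$ variables) is $\Tbb$-satisfiable.

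Assuming for contradiction that $\Tbb$ has a computable minimal model function $\minmod_{\Tbb,\{\s_1\}}$, I would define an auxiliary function $h:\mathbb{N}\rightarrow\mathbb{N}$ by hard-coding $h(0)=\bb(0)=0$ and $h(1)=\bb(1)=1$, and setting $h(n+1) = m$ where $m$ is the unique element of $\minmod_{\Tbb,\{\s_1\}}(\neq(x_1,\ldots,x_{h(n)+1}))$ (unique because the signature is one-sorted, so the minimal model function outputs a singleton on satisfiable inputs). This $h$ is computable: once $h(n)$ is known, the formula $\neq(x_1,\ldots,x_{h(n)+1})$ can be written down algorithmically, and $\minmod_{\Tbb,\{\s_1\}}$ is computable by hypothesis. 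The key step is the inductive claim that $h(n) \geq \bb(n)$ for all $n$: given $h(n)\geq\bb(n)$, any $\Tbb$-interpretation satisfying $\neq(x_1,\ldots,x_{h(n)+1})$ must have at least $h(n)+1 > \bb(n)$ elements, and since the only finite cardinalities available to $\Tbb$-interpretations are the values $\bb(m)$, a minimal such interpretation has cardinality equal to the least $\bb(m)$ exceeding $\bb(n)$, which is $\bb(n+1)$ (using that $\bb$ is strictly increasing); hence $h(n+1)=\bb(n+1)\geq\bb(n+1)$. Actually $h(n+1)$ equals $\bb(n+1)$ exactly, but $h(n)\geq\bb(n)$ is all that is needed. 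This contradicts the fact that $h$ is computable while no computable function eventually dominates $\bb$.

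The main obstacle, such as it is, is the careful verification that the minimal $\Tbb$-interpretation of $\neq(x_1,\ldots,x_{h(n)+1})$ indeed has cardinality $\bb(n+1)$ rather than some larger $\bb(m)$ or $\aleph_0$: this requires invoking both the strict monotonicity of $\bb$ and the description of $\Tbb$'s models (no finite cardinalities strictly between consecutive busy-beaver values). Once that is in place, the contradiction with \cite{Rado} is immediate. Everything else is bookkeeping analogous to \Cref{ex:bbold}, and I would keep the write-up correspondingly short, referring back to that example for the shared details.
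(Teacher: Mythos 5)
Your proof is correct and follows exactly the same strategy as the paper's (which itself just elaborates \Cref{ex:bbold}): define a computable auxiliary function recursively via $\minmod_{\Tbb,\{\s_1\}}$ applied to $\neq(x_1,\ldots,x_{h(n)+1})$, show by induction using the description of $\Tbb$'s models that it is eventually bounded below by $\bb$, and conclude by Rad\'o's theorem. The only cosmetic difference is the initial values (the appendix uses $f(0)=f(1)=1$ rather than $\bb(0),\bb(1)$), which does not affect the argument.
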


\begin{proof}
    Suppose $\Tbb$ has a computable minimal model function. We construct a sequence $f:\mathbb{N}\rightarrow\mathbb{N}$ by making $f(0)=f(1)=1$ and, for $n\geq1$, 
    \[f(n+1)\in\minmod_{\Tbb,\{\s\}}(\NNNEQ{x}{f(n)+1}):\]
    of course $f$ is recursively computable since $\minmod_{\Tbb,\{\s\}}$ is itself computable, and building $\NNNEQ{x}{f(n)+1}$ can be done algorithmically once $f(n)$ is known. However, we state that $f(n)\geq \bb(n)$ for all $n\in\mathbb{N}$, that being obvious for $n=0$ and $n=1$; for $n\geq 1$, since $f(n)\geq \bb(n)$, $\NNNEQ{x}{f(n)+1}$ can only be satisfied by a $\Tbb$-interpretation with more than $\bb(n)+1$ elements, and therefore at least $\bb(n+1)$ elements, meaning that $f(n+1)\geq\bb(n+1)$. This is, of course, absurd, given that $\bb$ grows eventually faster than any computable function.
\end{proof}

\subsection{\tp{$\TM$, $\TsM$}{TM, TsM}}

\begin{lemma}\label{TM does not have CMMF}
    The $\Sigma_{s}$-theories $\TM$ and $\TsM$ do not have a computable minimal model function with respect to its only sort.
\end{lemma}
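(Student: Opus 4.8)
The plan is to reuse the diagonalization‑against‑the‑Busy‑Beaver‑function argument that proves \Cref{cmmf:tfour} (and the corresponding statement for $\Tbb$). First I would extract from the axiomatization of $\TM$ in \cite{FroCoS} the quantitative fact that drives everything: a computable family of quantifier‑free $\Sigma_{s}$‑formulas $\phi_{k}$ (each $\phi_{k}$ of the shape used to axiomatize $\TM$ — roughly $\NNNEQ{x}{k}$, possibly conjoined with an $s$‑chain constraint) together with a lower bound stating that, whenever $\bb(n)<k$, every $\TM$‑interpretation satisfying $\phi_{k}$ has at least $\bb(n+1)$ elements. The same holds for $\TsM$: it is obtained from $\TM$ solely by adding the constraint on $s$ that destroys convexity, which leaves the admissible domain sizes — hence this Busy‑Beaver jump — untouched, so it suffices to treat $\TM$.

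Granting this, I would assume for contradiction that $\minmod_{\TM,S}$ is computable, where $S=\{\s\}$, and define $h:\mathbb{N}\to\mathbb{N}$ by fixing $h(0)=0$, $h(1)=1$ and, for $n\ge 1$, letting $h(n+1)$ be the unique element of $\minmod_{\TM,S}(\phi_{h(n)+1})$ — uniqueness because the signature is one‑sorted and $\TM$ is stably finite, so the output is a singleton in $\mathbb{N}$. This $h$ is recursively computable: $\phi_{h(n)+1}$ is $\TM$‑satisfiable (as $\TM$ has arbitrarily large finite models), it is effectively constructible once $h(n)$ is known, and $\minmod_{\TM,S}$ is computable with finite output. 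A short induction using the lower bound gives $h(n)\ge\bb(n)$ for all $n$: if $h(n)\ge\bb(n)$, then a $\TM$‑interpretation realizing $\phi_{h(n)+1}$ has more than $\bb(n)$ elements, hence at least $\bb(n+1)$, so the least such cardinality $h(n+1)$ is at least $\bb(n+1)$. Thus $h$ is computable and eventually dominates $\bb$, contradicting \cite{Rado}. The identical $h$ and induction then handle $\TsM$.

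The genuinely delicate step — the only one where the constructions of \cite{FroCoS} must be invoked carefully — is identifying the exact formulas $\phi_{k}$ and verifying the Busy‑Beaver jump they force, while keeping this consistent with the fact that $\TM$ and $\TsM$ are finitely witnessable. In particular, the argument cannot proceed as if the finite $\TM$‑models occurred only at the sparse cardinalities $\{\bb(m) : m\in\mathbb{N}\}$, since that would already contradict finite witnessability; the jump must instead be triggered by the particular shape of $\phi_{k}$, so that ordinary padding interpretations of intermediate size remain available to a witness function yet fail to satisfy $\phi_{k}$. Once that bookkeeping is settled, the rest is the standard diagonalization and the auxiliary model constructions are routine.
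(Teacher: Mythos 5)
Your proposal rests on a misidentification of the theory. $\TM$ is not a Busy--Beaver theory: as the macro name $\T_{f}$ suggests, it is built from the non-computable \emph{bit-valued} function $f:\mathbb{N}\setminus\{0\}\rightarrow\{0,1\}$ (the one with $f(1)=1$ and equally many $0$'s and $1$'s among the first $2^{k}$ values, also used to construct $g$ in \Cref{lem:g-exists}), not from $\bb$. Indeed, $\TM$ is smooth, finitely witnessable and stably finite, so its models exhibit no cardinality gaps whatsoever; the ``quantitative fact'' you propose to extract --- that some computable family $\phi_{k}$ forces a jump from $\bb(n)<k$ up to $\bb(n+1)$ --- is simply false for this theory, and with it the entire induction $h(n)\geq\bb(n)$ and the appeal to Rad{\'o}'s theorem collapse. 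You correctly sense the tension with finite witnessability in your last paragraph, but the resolution is not a more carefully shaped $\phi_{k}$ triggering a Busy--Beaver jump; it is that no growth-rate argument applies here at all. The minimal cardinalities relevant to $\TM$ grow only linearly.

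The actual proof decodes $f$ bit by bit rather than diagonalizing against a fast-growing function. The data of $f$ is encoded in how many elements of an $n$-element model are fixed points of $s$ versus non-fixed points (namely $f_{1}(n)=\sum_{i=1}^{n}f(i)$ and $f_{0}(n)=n-f_{1}(n)$). One therefore queries the alleged minimal model function on formulas of the form $\NNNEQ{x}{k}\wedge\bigwedge_{i=1}^{k}s(x_{i})=x_{i}$ and $\NNNEQ{x}{k}\wedge\bigwedge_{i=1}^{k}\neg(s(x_{i})=x_{i})$, and checks whether the returned minimal cardinality is $n+1$; this determines whether $f(n+1)=1$ or $f(n+1)=0$, so one recovers $f$ exactly (the induction proves $h=f$, not $h\geq\bb$), contradicting the non-computability of $f$. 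Your treatment of $\TsM$ as a minor variant of $\TM$ is fine in spirit, but to repair the proof you would need to replace the Busy--Beaver scaffolding with this bit-recovery argument.
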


\begin{proof}
    We prove this for $\TM$, but for $\TsM$ the proof is identical. Suppose $\TM$ has a computable minimal model function. We construct functions $h:\mathbb{N}\setminus\{0\}\rightarrow\{0,1\}$ and $h_{0},h_{1}:\mathbb{N}\setminus\{0\}\rightarrow\mathbb{N}$ by making $h(1)=1$, $h_{1}(n)=\sum_{i=1}^{n}h(i)$, $h_{0}(n)=n-h_{1}(n)$ and, assuming $h(n)$ defined,
    \[h(n+1)=\begin{cases}
        0 & \text{if $n+1\in\minmod_{\TM,S}\big(\NNNEQ{x}{h_{0}(n)+1}\wedge\bigwedge_{i=1}^{h_{0}(n)+1}\neg(s(x_{i})=x_{i})\big)$}\\
        1 & \text{if $n+1\in\minmod_{\TM,S}\big(\NNNEQ{x}{h_{1}(n)+1}\wedge\bigwedge_{i=1}^{h_{1}(n)+1}s(x_{i})=x_{i}\big)$}
    \end{cases},\]
    where $S=\{\s\}$ (notice that, from the axiomatization of $\TM$, $n+1$ cannot belong to both sets in this definition). It is clear that $h$ is recursively computable, given that $\minmod_{\TM,S}$ is computable, and as we shall prove $h=f$, what leads to a contradiction. Of course $h(1)=f(1)$, so assume $f$ and $h$ coincide on all values up to and including $n$, while $h_{1}$ coincides with $f_{1}$, and $h_{0}$ with $f_{0}$; if $f(n+1)=1$, $f_{1}(n+1)=f_{1}(n)+1$ and thus the smallest $\TM$-interpretation satisfying $\NNNEQ{x}{h_{1}(n)+1}\wedge\bigwedge_{i=1}^{h_{1}(n)+1}s(x_{i})=x_{i}$ has $n+1$ elements, meaning $h(n+1)=1$. If, instead, $f(n+1)=0$, $f_{0}(n+1)=f_{0}(n)+1$ and again the smallest $\TM$-interpretation that satisfies 
    \[\NNNEQ{x}{h_{0}(n)+1}\wedge\bigwedge_{i=1}^{h_{0}(n)+1}\neg(s(x_{i})=x_{i})\]
    has $n+1$ elements, leading to $h(n+1)=0$, which again coincides with $f$.
\end{proof}

\subsection{\tp{$\Tgeqn$,
$\Ttwothree$,
$\Tinfty$}{Tgeqn, Ttwothree, Tinfty}}

\begin{lemma}
$\Tgeqn$,
$\Ttwothree$,
$\Tinfty$ have computable minimal model functions w.r.t. their set of sorts.
\end{lemma}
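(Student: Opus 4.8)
The plan is to reduce all three cases to \Cref{SM+ES=>CMMF}. Each of $\Tgeqn$, $\Tinfty$, and $\Ttwothree$ is a theory over an \emph{empty} signature ($\Sigma_{1}$, $\Sigma_{1}$, and $\Sigma_{2}$ respectively), and each is smooth with respect to its \emph{entire} set of sorts; hence \Cref{SM+ES=>CMMF} applies and yields a computable minimal model function with respect to any subset of sorts, in particular the set named in the statement.

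So the first step is to collect the smoothness facts: $\Tgeqn$ and $\Tinfty$ are smooth with respect to $\{\s\}$, and $\Ttwothree$ is smooth with respect to $\{\s_{1},\s_{2}\}$ --- all established in \cite{CADE,FroCoS} and recorded in \Cref{tab-summary-app}. The second step is simply to invoke \Cref{SM+ES=>CMMF} on each of the three. For a self-contained presentation I would additionally spell out the functions in the two one-sorted cases, following the examples of \Cref{sec:old}. For $\Tinfty$, every $\Tinfty$-satisfiable quantifier-free formula is satisfied only in infinite (hence, by \Cref{LowenheimSkolem}, countably infinite) interpretations, so the constant map $\phi\mapsto\{\aleph_{0}\}$ on satisfiable inputs is a minimal model function and is trivially computable. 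For $\Tgeqn$, writing $V=\vars(\phi)$ and $m(\phi)=\min\{\,|V/E| : \delta_{V}^{E}\text{ is equivalent to }\phi\,\}$ --- computable, since the equivalence test reduces to satisfiability in equality logic --- the map $\phi\mapsto\{\max\{n,m(\phi)\}\}$ is a minimal model function; the correctness check is that, by \Cref{alternative definition}, $\max\{n,m(\phi)\}$ is the unique $\leq$-minimal element of $\textbf{Card}_{\Tgeqn,\{\s\}}(\phi)$, obtained by collapsing the variables of $\phi$ as much as $\phi$ permits and then padding up to $n$ elements if necessary. For $\Ttwothree$ I would point to the construction built in the proof of \Cref{SM+ES=>CMMF} itself.

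There is no genuine obstacle here; the single point that needs care --- rather than difficulty --- is that \Cref{SM+ES=>CMMF} requires smoothness with respect to \emph{all} sorts of the signature, not merely those in the target set (\Cref{counterexample 1} shows this hypothesis cannot be weakened), so one must cite the smoothness results for the full sort sets, which is exactly what \cite{CADE,FroCoS} provide. The only mechanical part is the routine verification of the displayed $\Tgeqn$ formula against the definition of a minimal model function, which amounts to the padding/collapsing argument just indicated.
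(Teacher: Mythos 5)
Your proposal is correct and takes exactly the same route as the paper, whose entire proof is a one-line appeal to \Cref{SM+ES=>CMMF}: all three theories live over empty signatures and are smooth with respect to their full sort sets, so the theorem applies directly. The explicit formulas you give for $\Tinfty$ and $\Tgeqn$ are a harmless (and correct) elaboration matching the worked examples in \Cref{sec:old}, but they are not needed for the argument.
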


\begin{proof}
    Follows from \Cref{SM+ES=>CMMF}.
\end{proof}

\subsection{\tp{$\Tleqone$,
$\Tleqn$, $\Tmn$}{Tleqone, Tleqn, Tmn}}
\begin{lemma}
The $\Sigma_{1}$-theories $\Tleqone$,
$\Tleqn$ and $\Tmn$ have computable minimal model functions w.r.t. $\{\s\}$.
\end{lemma}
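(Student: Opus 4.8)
# Proof Proposal

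The plan is to handle all three theories $\Tleqone$, $\Tleqn$, and $\Tmn$ uniformly by exhibiting an explicit computable minimal model function, following closely the pattern already used in the proofs of \Cref{Teven} and the $\Tninfty$ case. The key observation is that each of these is a $\Sigma_{1}$-theory over the empty signature whose models are characterized purely by a cardinality constraint: $\Tleqone$-interpretations are exactly those with at most one element, $\Tleqn$-interpretations those with at most $n$ elements, and $\Tmn$-interpretations those whose domain has cardinality $m$ or cardinality $n$ (for the fixed parameters $m<n$). In particular, the set $\{|\s^{\A}| : \A \text{ is a }\T\text{-interpretation}\}$ is finite in each case, so these theories are not stably infinite with respect to $\{\s\}$; hence \Cref{-SI+ES=>CMMF} applies directly and already yields a computable minimal model function. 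So the cleanest proof is a one-line appeal to \Cref{-SI+ES=>CMMF}, exactly as was done for $\Tgeqn$, $\Ttwothree$, $\Tinfty$ via \Cref{SM+ES=>CMMF}.

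If instead an explicit construction is preferred (to keep the appendix self-contained and parallel to the earlier lemmas), I would proceed as follows. Given a quantifier-free $\Sigma_{1}$-formula $\phi$, let $V=\vars(\phi)$, let $\eq{\phi}$ be the (computable, via the decidability of equality logic) set of equivalence relations $E$ on $V$ with $\phi$ and $\delta_{V}^{E}$ equivalent, and set $M(\phi)=\min\{|V/E| : E\in\eq{\phi}\}$, which equals $\aleph_{0}$ precisely when $\phi$ is unsatisfiable. Then define, for $\Tleqn$ say,
\[
\minmod_{\Tleqn,\{\s\}}(\phi)=
\begin{cases}
\{M(\phi)\} & \text{if } M(\phi)\leq n,\\
\emptyset & \text{otherwise},
\end{cases}
\]
with the analogous definition for $\Tleqone$ (replace $n$ by $1$) and for $\Tmn$ one sets the output to $\{M(\phi)\}$ if $M(\phi)\leq m$, to $\{n\}$ if $m<M(\phi)\leq n$, and to $\emptyset$ if $M(\phi)>n$. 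Each of these functions is manifestly computable with finite output. The verification that it is a minimal model function is routine: when $\phi$ is $\T$-satisfiable, take $E$ achieving the minimum $M(\phi)$ and build an interpretation $\A$ of the prescribed cardinality with $x^{\A}=y^{\A}$ iff $xEy$; then $\A\vDash\delta_{V}^{E}$, hence $\A\vDash\phi$, and $\A$ is a $\T$-interpretation by the cardinality bound. For minimality, any $\T$-interpretation $\B$ satisfying $\phi$ induces some $F\in\eq{\phi}$, so $|\s^{\B}|\geq|V/F|\geq M(\phi)$, and a short case split on the value of $|\s^{\A}|$ versus $|\s^{\B}|$ (using that $|\s^{\B}|$ lies in the finite set of admissible cardinalities) shows $|\s^{\A}|\leq|\s^{\B}|$, so the singleton output is exactly the minimal element.

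I do not expect any serious obstacle here; the only mild subtlety is the bookkeeping in the $\Tmn$ case, where there are two admissible finite cardinalities $m$ and $n$ and one must check that $\{n\}$ (rather than $\{m\}$) is the right answer when $m<M(\phi)\leq n$ — this is immediate since no $\T$-interpretation has cardinality strictly between $m$ and $n$, and none below $M(\phi)$. Given that \Cref{-SI+ES=>CMMF} is already available in the paper, the most economical presentation is simply to note that none of $\Tleqone$, $\Tleqn$, $\Tmn$ is stably infinite with respect to $\{\s\}$ and invoke that theorem, which is the approach I would adopt in the final text.
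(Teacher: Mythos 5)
Your primary approach — observing that all three theories have models of bounded finite cardinality only, hence are not stably infinite w.r.t. $\{\s\}$, and invoking \Cref{-SI+ES=>CMMF} — is exactly the paper's proof, which consists of that one-line appeal. The explicit construction you sketch as a backup is a correct but unnecessary elaboration.
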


\begin{proof}
    Follows from \Cref{-SI+ES=>CMMF}.
\end{proof}

\subsection{\tp{$\Toddneq$, $\Tnequpinfty$}{Toddneq, Tnequpinfty}}

\begin{lemma}\label{Toddneq has CMMF}
    The $\Sigma_{s}$-theory $\Toddneq$ and the $\Sigma_{s}^{2}$-theory $\Tnequpinfty$ have a computable minimal model function with respect to, respectively, $\{\s\}$ and $\{\s,\s_{2}\}$.
\end{lemma}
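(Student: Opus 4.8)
The plan is to reuse the same structural idea that proved computability for theories like $\Teven$, $\Tninfty$ and the empty-signature theories in \Cref{SM+ES=>CMMF}, adapted to the specific axiomatizations of $\Toddneq$ and $\Tnequpinfty$. Recall that $\Toddneq$ is (essentially) the theory over $\Sigma_{s}$ whose models have $s^{\A}$ with no fixed points (satisfy $\psi_{\neq}$) and whose domain has an odd cardinality or is infinite; $\Tnequpinfty$ is its two-sorted analogue (via the generalized $\adds{\cdot}$ operator) together with $\psi_{\neq}$, so its models have the first sort odd-or-infinite, $s^{\A}$ fixed-point-free, and the second sort of any cardinality. First I would isolate, as is done for $\Teven$, the combinatorial fact that membership of an arrangement in a witnessing-compatible class is decidable: given a quantifier-free $\Sigma_{s}$-formula $\phi$ with variable set $V$, one can algorithmically enumerate the arrangements $\delta_{V}^{E}$ and decide which ones are $\Toddneq$-equivalent to $\phi$, since this reduces to satisfiability in the theory of a single unary function (equivalently, finite satisfiability over $\Sigma_{s}$), which is decidable. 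From these one extracts the minimal induced cardinality $m$ (taking the quotient by $E$ and accounting for the terms $s^{j}(x)$ that occur, exactly as in the finite-model-property lemmas for $\Tnine$, $\Ttwelve$, $\Tfourteen$).

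Then I would define the candidate minimal model function by rounding $m$ up to the least admissible cardinality: for $\Toddneq$,
\[
\minmod_{\Toddneq,\{\s\}}(\phi)=
\begin{cases}
\{m\} & \text{if $m$ is odd;}\\
\{m+1\} & \text{if $m$ is even and finite;}\\
\emptyset & \text{if $m=\aleph_{0}$.}
\end{cases}
\]
Wait --- one must be careful: since $m+1$ might still fail to admit a fixed-point-free $s$ if it is, say, forced too small; but for odd-or-infinite cardinalities $\geq 3$ a fixed-point-free permutation always exists, and the corner case $m\in\{1,2\}$ is handled by bumping to $3$. So the correct rounding target is ``the least odd number that is both $\geq m$ and $\geq 3$, or $\aleph_{0}$ if no finite such value is compatible,'' and I would phrase it that way. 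For $\Tnequpinfty$ the output is a singleton $\{(N(\phi),k(\phi))\}$ where $N(\phi)$ is the analogous odd-rounding of the first-sort requirement and $k(\phi)$ is the minimal induced cardinality for the second sort (whose admissible cardinalities are unrestricted, so no rounding is needed there); this mirrors the computations for $\Tten$, $\Televen$, $\Tthirteen$. Computability is then immediate since every ingredient --- $V$, the set of compatible arrangements, $m$, the rounding --- is obtained by a terminating procedure, and the output is always finite.

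Next I would verify the defining bi-implication of $\minmod$ for $\T$-satisfiable $\phi$. For the ``there exists a model of this cardinality'' half: take a compatible $E$ realizing $m$, build an interpretation on a domain of the rounded cardinality whose elements from $V$ respect $E$, define $s$ on the $V$-part to match the terms occurring in $\phi$ and extend it on the fresh elements by a fixed-point-free permutation (pairing up and $3$-cycling leftover elements as in the $\Tthirteen$/$\Tfourteen$ constructions to stay fixed-point-free); this satisfies $\delta_{V}^{E}$ hence $\phi$, and lands in $\Toddneq$. For the minimality half: any $\Toddneq$-interpretation $\B$ satisfying $\phi$ induces some compatible arrangement, so $|\s^{\B}|\geq m$; since $|\s^{\B}|$ must itself be odd-or-infinite, if it differs from the rounded value $N$ it must exceed it --- exactly the even-to-odd jump argument used for $\Teven$ and $\Tninfty$. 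The two-sorted case follows by doing this argument coordinatewise, noting the second sort imposes no parity obstruction. The main obstacle I anticipate is not conceptual but bookkeeping: correctly pinning down the admissible finite cardinalities of $\Toddneq$ (ruling out $1$ and $2$ because $\psi_{\neq}$ forces at least a $2$-cycle, and ensuring the odd-or-infinite restriction interacts cleanly with the fixed-point-free requirement when we enlarge a model), and making sure the model-enlargement step in the ``existence'' half genuinely produces a $\Toddneq$-interpretation rather than merely a $\Sigma_{s}$-interpretation of the right size. Once the admissible-cardinality set is correctly described, everything else is a routine adaptation of \Cref{Teven} and \Cref{SM+ES=>CMMF}.
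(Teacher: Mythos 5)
Your general strategy---enumerate compatible arrangements on a set of term-representing variables, extract a minimal induced cardinality, and round up to the least admissible cardinality for the theory---is the right one, and it is essentially what the paper does with its $\phi^\star$ / $U$ / $Eq^\star(\phi)$ machinery. But you second-guessed yourself into an error that is rooted in a misreading of $\Toddneq$. You describe $\Toddneq$ as having exactly those models where $s$ is fixed-point-free and the domain is odd-or-infinite, and you accordingly ``correct'' your initial formula by bumping $m\in\{1,2\}$ up to $3$. However, $\Toddneq$ also contains the trivial one-element model where $s$ is the identity (exactly as $\Tfourteen$ and $\Tthirteen$ do, which you yourself cite); this trivial model is indispensable, since it is what makes $\Toddneq$ fail stable infiniteness (the formula $s(x)=x$ is satisfied only there). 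The paper's proof explicitly handles $m=1$ by outputting $\{1\}$. So your initial, un-``corrected'' case split ($\{m\}$ if $m$ odd, $\{m+1\}$ if $m$ even and finite, $\emptyset$ if $m=\aleph_{0}$) was in fact the right one, and the refinement to ``least odd $\geq\max\{m,3\}$'' breaks the case $m=1$.

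Two further points. First, your claim that $\minmod_{\Tnequpinfty,\{\s,\s_{2}\}}(\phi)$ is always a singleton $\{(N(\phi),k(\phi))\}$ cannot be right: $\Tnequpinfty$ is built on $\Tupinfty$, whose models either have $|\s^{\A}|=1$ with arbitrary $\s_{2}$-cardinality, or equal cardinalities in both sorts (or infinite), and the paper's minimal model function for $\Tupinfty$ already returns a set of up to two $\leq$-incomparable pairs such as $(1,q)$ and $(p,p)$; the $\neq$-variant inherits this shape, so you would need a $\minimal$ over a pair of candidates, not a single rounded tuple. Second, your reduction to ``which arrangements on $V$ are $\Toddneq$-equivalent to $\phi$'' is materially looser than what is needed: an arrangement on the original variables $V$ does not carry the cardinality information encoded by the terms $s^{j}(z_{i})$, and more subtly the paper's $Eq^{\star}(\phi)$ is not an arbitrary collection of ``compatible'' equivalences but specifically those on the augmented set $U$ satisfying the functionality closure condition $y_{i,j}Ey_{p,q}\Rightarrow y_{i,j+1}Ey_{p,q+1}$ and the no-fixed-point condition $y_{i,j}\overline{E}y_{i,j+1}$, \emph{together with} the single trivial equivalence $|U/E|=1$ as an explicit exceptional case. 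Leaving that exceptional case out of the compatible set is exactly what leads you to the wrong rounding.
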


\begin{proof}
    We use a construction similar to the one found in \Cref{adding non-convexity}: let $\phi$ be a quantifier-free $\Sigma_{s}$-formula, and $\A$ a $\Sigma_{s}$-interpretation. Let $V=\{z_{1},\ldots,z_{n}\}$ be the set of variables in $\phi$, let $M_{i}$ be the maximum of $j$ such that $s^{j}(z_{i})$ appears in $\phi$, and let $U=\{y_{i,j} : 1\leq i\leq n, 0\leq j\leq M_{i}+1\}$ be fresh variables. Then, replace each atomic subformula $s^{j}(z_{i})=s^{q}(z_{p})$ on $\phi$ by $y_{i,j}=y_{p,q}$, so that we obtain a $\Sigma_{1}$-formula we denote by $\phi^{\star}$. We then define the $\Sigma_{s}$-interpretation $\A^{\star}$ by changing the values assigned by $\A$ to the variables in $U$ so that $y_{i,0}^{\A^{\star}}=z_{i}^{\A}$, and $y_{i,j+1}^{\A^{\star}}=s^{\A}(y_{i,j}^{\A^{\star}})$, so that we can finally define the $\Sigma_{1}$-formula $\phi^{\star}_{\A}=\phi^{\star}\wedge\delta_{U}$, where, $\delta_{U}$ is the arrangement on $U$ induced by $\A^{\star}$.

    Now, let $Eq^{\star}(\phi)$ be the set of equivalences on $U$ such that $\phi^{\star}$ and $\delta_{U}^{E}$ are equivalent, and: either $|U/E|=1$; or $(i)$ $y_{i,j}\overline{E}y_{i,j+1}$ for all $1\leq i\leq n$ and $0\leq j\leq M_{i}$, and $(ii)$ $y_{i,j}Ey_{p,q}$ implies $y_{i,j+1}Ey_{p,q+1}$ for all $1\leq i,p\leq n$, and $0\leq j\leq M_{i}$ and $0\leq q\leq M_{p}$. Finally, we state that
    \[\minmod_{\Toddneq,S}(\phi)=\begin{cases}
        \{m\} & \text{if $m=\min\{|U/E| : E\in Eq^{\star}(\phi)\}$ is odd;}\\
        \{m+1\} & \text{if $m$ is even;}\\
        \emptyset & \text{if $m=\aleph_{0}$,}
    \end{cases}\]
    is a computable minimal model function, where $S=\{\s\}$, being certainly at least a computable function.

    We analyse three cases, starting from the assumption that $Eq^{\star}(\phi)$ is not empty (what will imply that $\phi$ is $\Toddneq$-satisfiable): first, if $m=1$, take the $\Sigma_{s}$-interpretation $\A$ with exactly one element in its domain and $s^{\A}$ the identity, what implies $\A$ is also a $\Toddneq$-interpretation; since $\A$ satisfies $\delta_{U}^{E}$, for $E$ the equivalence on $U$ such that $|U/E|=1$, we have that it also satisfies $\phi^{\star}$. Then, by using that $s^{\A}$ is the identity, $\A$ also satisfies $\phi$; since no interpretation can have fewer elements than $\A$, we have that a minimal model function at $\phi$ is indeed $1$.

    Now, suppose $m$ is an odd number greater than $1$, take an equivalence $E\in Eq^{\star}(\phi)$ such that $|U/E|=m$, and define the function $s^{\A}$ on the domain $\s^{\A}=U/E$ such that $s^{\A}([y_{i,j}])=[y_{i,j+1}]$ if $1\leq j\leq M_{i}$, and $s^{\A}([y_{i,M_{i}+1}])$ as any element of $\s^{\A}$ but $[y_{i,M_{i}+1}]$ itself if that was not previously defined, where $[y_{i,j}]$ is the equivalence class with representative $y_{i,j}$. Now, this is a well-defined function: if $[y_{i,j}]=[y_{p,q}]$, for $1\leq i,p\leq n$, $0\leq j\leq M_{i}$ and $0\leq q\leq M_{p}$, then $y_{i,j}Ey_{p,q}$ and, since $|E/U|>1$, $y_{i,j+1}Ey_{p,q+1}$, implying that $s^{\A}([y_{i,j}])=[y_{i,j+1}]=[y_{p,q+1}]=s^{\A}([y_{p,q}])$. Furthermore, $s^{\A}$ is never the identity, that being clear if we look at an equivalence class $[y_{i,M_{i}}]$ not equal to an $[y_{p,q}]$ with $0\leq q\leq M_{p}$; otherwise, $S^{\A}([y_{i,j}])=[y_{i,j+1}]$, and $y_{i,j+1}\overline{E}y_{i,j}$ by hypothesis, meaning that $[y_{i,j+1}]\neq[y_{i,j}]$. Since the interpretation $\A$ has $m$ elements in its domain, it is a $\Toddneq$-interpretation, and as it satisfies $E$, it also satisfies $\delta_{U}^{E}$ and thus $\phi^{\star}$; we then change the value assigned by $\A$ to the variables in $V$ so to obtain an interpretation $\A^{\prime}$ where $z_{i}^{\A^{\prime}}=y_{i,0}^{\A}$, and then $\A^{\prime}$ is a $\Toddneq$-interpretation that satisfies $\phi$ with $|\s^{\A^{\prime}}|=m$. So, let $\B$ be a $\Toddneq$-interpretation that satisfies $\phi$ with $|\s^{\B}|<m$: we change the value assigned by $\B$ to $U$, and obtain the interpretation $\B^{\prime}$ where $y_{i,0}^{\B^{\prime}}=z_{i}^{\B}$ and $y_{i,j+1}^{\B^{\prime}}=s^{\B}(y_{i,j}^{\B^{\prime}})$, so that $\B^{\prime}$ satisfies $\phi^{\star}$. Letting $F$ be the equivalence on $U$ induced by $\B^{\prime}$, we have that $F\in Eq^{\star}(\phi)$ and so $|U/F|<|\s^{\B^{\prime}}|=|\s^{\B}|<m$, leading to a contradiction, $|U/F|<\min\{|U/E| : E\in Eq^{\star}(\phi)\}$.

    Finally, let us consider the case where $m$ is even, and let again $E$ be an equivalence in $Eq^{\star}(\phi)$ such that $|U/E|=m$: we then define an interpretation $\A$ much like in the previous paragraph, but now $\s^{\A}=U/E\cup\{a\}$, where $a\notin U/E$, and $s^{\A}(a)$ can be any element of $U/E$. After changing the value assigned by such an interpretation to the variables in $U$, we obtain a $\Toddneq$-interpretation $\A^{\prime}$ that satisfies $\phi$ with $|\s^{\A^{\prime}}|=m+1$. Finally, assume that there is a $\Toddneq$-interpretation $\B$ that satisfies $\phi$ with $|\s^{\B}|<m+1$: by changing the values assigned by $\B$ to $U$, we obtain the interpretation $\B^{\prime}$ with $y_{i,0}^{\B^{\prime}}=z_{i}^{\B}$ and $y_{i,j+1}^{\B^{\prime}}=s^{\B}(y_{i,j}^{\B^{\prime}})$, meaning that $\B^{\prime}$ satisfies $\phi^{\star}$. Letting, again, $F$ be the equivalence on $U$ induced by $\B^{\prime}$, one has $F\in Eq^{\star}(\phi)$, and thus $|U/F|<|\s^{\B^{\prime}}|=|\s^{\B}|<m+1$: since $\B$ is a $\Toddneq$-interpretation, $|\s^{\B}|$ is odd, and thus $|\s^{\B}|\leq m-1$, we derive that $|U/F|\leq m-1$, and therefore $|U/F|<m$, leading to a contradiction and proving that $\minmod_{\Toddneq,S}(\phi)$ is indeed a minimal model function for all possible cases.

    The proof for $\Tnequpinfty$ is incredibly long, but more or less copies the one for $\Toddneq$, combining to it some aspects of \Cref{Tupinfty has CMMF}.
    
\end{proof}

\subsection{\tp{$\Ttwothreethree$}{Ttwothreethree}}

\begin{lemma}
    The $\Sigma_{3}$-theory $\Ttwothreethree$ has a computable minimal model function with respect to $\{\s,\s_{2},\s_{3}\}$.
\end{lemma}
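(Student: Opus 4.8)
The plan is to follow the template used for \Cref{Teven} (and, more remotely, for \Cref{SM+ES=>CMMF,-SI+ES=>CMMF}), exploiting two features of $\Ttwothreethree$: that it lives over the empty signature $\Sigma_{3}$, so that a $\Sigma_{3}$-interpretation $\A$ is determined up to isomorphism by the triple $(|\s^{\A}|,|\s_{2}^{\A}|,|\s_{3}^{\A}|)\in\N^{3}$; and that all the cardinality constraints in its axiomatization involve only the fixed constants $2$ and $3$ together with the formulas $\psi_{\geq n}$ used to force infiniteness --- in contrast with $\Tthreetwo$, whose finite cardinalities range over the non-computable image of $g$, which is exactly what obstructs computability there. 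First I would read off, directly from the axiomatization, the set $\textbf{Card}_{\Ttwothreethree}$ of cardinality triples realized by $\Ttwothreethree$-interpretations, presenting it as a finite union of ``boxes'', each box fixing some coordinates to $2$ or $3$ and letting the others range over the finite and infinite cardinals subject to the correlations the axioms impose between sorts (much as the three families of models of $\Tthreetwo$ are described in the excerpt). The point is that this subset of $\N^{3}$ is computable and, by \Cref{technical lemma}, has only finitely many $\leq$-minimal elements.

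Next, given a quantifier-free $\phi$, I would compute $\eq{\phi}$, the set of sort-respecting equivalence relations $E$ on $V=\vars(\phi)$ for which $\phi$ and $\delta_{V}^{E}$ are equivalent; this is effective, reducing to the decidable satisfiability problem of equality logic exactly as in the proof of \Cref{Teven}. Each $E\in\eq{\phi}$ yields a ``minimal shape'' $(|V_{\s}/E|,|V_{\s_{2}}/E|,|V_{\s_{3}}/E|)$, which I would replace by the finite, computable set of $\leq$-least members of $\textbf{Card}_{\Ttwothreethree}$ that dominate it coordinatewise. Defining $\minmod_{\Ttwothreethree,S}(\phi)$ to be $\minimal$ of the union of these sets over all $E\in\eq{\phi}$ gives a computable function whose output is always finite, and is empty exactly when $\phi$ is $\Ttwothreethree$-unsatisfiable.

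For correctness (taking $\phi$ to be $\Ttwothreethree$-satisfiable), each triple in the output is realized by a $\Ttwothreethree$-interpretation satisfying $\phi$: take an interpretation with that triple and assign the variables of $\phi$ according to a witnessing $E$, padding the remaining variables arbitrarily --- legitimate because the signature is empty and the triple dominates the $E$-shape. Conversely, any $\Ttwothreethree$-interpretation $\B$ satisfying $\phi$ induces some $E\in\eq{\phi}$, so its triple lies in $\textbf{Card}_{\Ttwothreethree}$ and dominates the $E$-shape, hence dominates some output triple and is therefore equal to it or strictly larger in some coordinate; thus $\minmod_{\Ttwothreethree,S}(\phi)$ coincides with the $\leq$-minimal elements of $\textbf{Card}_{\Ttwothreethree,S}(\phi)$, which suffices by \Cref{alternative definition}. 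I expect the main obstacle to be purely the bookkeeping of the first step --- pinning down $\textbf{Card}_{\Ttwothreethree}$ exactly and, in the ``round up'' operation, correctly handling the cases where a required coordinate exceeds $3$ and so must jump, jointly with the coordinates correlated to it, to an infinite cardinality; once the finitely many configurations are enumerated this is routine, but that is where all the content sits.
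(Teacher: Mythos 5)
Your plan is correct and would, if carried through, give a valid proof, but it travels a substantially heavier road than the paper's actual argument, which is a one-line reduction. The paper defines
\[\minmod_{\Ttwothreethree,\{\s,\s_{2},\s_{3}\}}(\phi)=\{(m,n,1) : (m,n)\in\minmod_{\Ttwothree,\{\s,\s_{2}\}}(\phi^{\prime})\},\]
where $\phi^{\prime}$ replaces every $\s_3$-variable of $\phi$ by a single fresh variable of sort $\s$; since $\Ttwothree$'s minimal model function is already known to be computable (via \Cref{SM+ES=>CMMF}, because $\Ttwothree$ is smooth), this immediately gives computability, with the pad by $1$ covering the trivial third sort and the variable-collapse correctly turning any $\s_3$-disequality into an unsatisfiable literal. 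Your approach instead re-derives the answer from first principles by enumerating $\textbf{Card}_{\Ttwothreethree}$ and building a round-up operator against it, which amounts to re-running the machinery of \Cref{SM+ES=>CMMF} inside $\Ttwothreethree$. That would go through, but note that the correct justification for your ``finite union of boxes'' description is not a guess about the axioms of $\Ttwothreethree$ (which you do not actually have in front of you) but the structural fact that $\Ttwothree$ is smooth over an empty signature, so $\textbf{Card}_{\Ttwothree}$ is upward-closed and hence, by \Cref{Dickson}, the union of finitely many cones; crossing with $\{1\}$ for the third sort gives the box structure. You also flag, candidly, that pinning down $\textbf{Card}_{\Ttwothreethree}$ is ``where all the content sits'' without doing it — the paper's reduction avoids that bookkeeping entirely. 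Both routes are sound; the paper's is shorter, more modular, and needs no re-examination of the axioms.
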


\begin{proof}
    Follows easily from the fact that the $\Sigma_{2}$-theory $\Ttwothree$ has a computable minimal model function, so that 
    \[\minmod_{\Ttwothreethree,\{\s,\s_{2},\s_{3}\}}(\phi)=\{(m,n,1) : (m,n)\in\minmod_{\Ttwothree,\{\s,\s_{2}\}}(\phi^{\prime})\}\]
    is a minimal model function, where $\phi^{\prime}$ is obtained by replacing all variables of sort $\s_{3}$ in $\phi$ by the same one fresh variable of sort $\s$.
\end{proof}

\subsection{\tp{$\Tmninfty$}{Tmninfty}}

\begin{lemma}
    The $\Sigma_{2}$-theory $\Tmninfty$ has a computable minimal model function with respect to $\{\s,\s_{2}\}$.
\end{lemma}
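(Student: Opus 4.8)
Since $\Tmninfty$ admits infinite interpretations (of at least one of its sorts) and is not smooth, neither \Cref{-SI+ES=>CMMF} nor \Cref{SM+ES=>CMMF} applies directly, so the argument must be carried out by hand, in the same spirit as the proofs for $\Teven$, $\Tninfty$ and $\Toddneq$ above. By \Cref{alternative definition} it suffices, given a quantifier-free $\phi$, to produce computably the set of $\leq$-minimal elements of $\textbf{Card}_{\Tmninfty,\{\s,\s_{2}\}}(\phi)$. Because the signature is empty, whether $\A\vDash\phi$ depends only on the pair $(|\s^{\A}|,|\s_{2}^{\A}|)$, and it holds iff $\A$ satisfies some arrangement $\delta_{V}^{E}$ equivalent to $\phi$ over $V=\vars_{\s}(\phi)\cup\vars_{\s_{2}}(\phi)$; hence $\textbf{Card}_{\Tmninfty,\{\s,\s_{2}\}}(\phi)$ is exactly the set of pairs $(a,b)\in\textbf{Card}_{\Tmninfty}$ with $a\geq|V_{\s}/E|$ and $b\geq|V_{\s_{2}}/E|$ for some $E$ in the finite set $\eq{\phi}$ of sort-respecting equivalences with $\delta_{V}^{E}\equiv\phi$.

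Two computability facts are needed. First, $\eq{\phi}$ is finite and computable, since checking $\delta_{V}^{E}\equiv\phi$ reduces to satisfiability in equality logic; so the finite list of pairs $(|V_{\s}/E|,|V_{\s_{2}}/E|)$ is computable. Second, a computable description of $\textbf{Card}_{\Tmninfty}$: from the axiomatization, one of the two sorts — say $\s$ — has only finitely many possible cardinalities in $\Tmninfty$-interpretations, and for each admissible value $c$ of $|\s^{\A}|$ the set of compatible values of $|\s_{2}^{\A}|$ is a computably described subset of $\N$ carrying a computable ``round-up'' operation that sends a natural $k$ to the least compatible cardinality $\geq k$ (possibly $\aleph_{0}$). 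That this column-wise description of $\textbf{Card}_{\Tmninfty}$ is correct, and in particular that $\s$ really is bounded, is established by a compactness argument (\Cref{compactness}), exactly as in the proof of \Cref{-SI+ES=>CMMF}.

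Combining these, I would set $\minmod_{\Tmninfty,\{\s,\s_{2}\}}(\phi)$ to be the $\leq$-minimal elements of the finite set obtained by, for each admissible $\s$-cardinality $c$, computing $K(c)=\min\{|V_{\s_{2}}/E| : E\in\eq{\phi},\ |V_{\s}/E|\leq c\}$ (the column being discarded when no such $E$ exists) and adding the pair $(c,b)$ where $b$ is the round-up of $K(c)$ in the column of $\textbf{Card}_{\Tmninfty}$ above $c$. This is manifestly computable, with finite output (directly, as there are finitely many columns, or abstractly via \Cref{Dickson}). Verifying the two clauses of a minimal model function is then routine: each output pair $(c,b)$ is realised by padding a model of the chosen arrangement up to cardinalities $(c,b)$, as in \Cref{-SI+ES=>CMMF}; and any $\Tmninfty$-interpretation $\B$ satisfying $\phi$ with $(|\s^{\B}|,|\s_{2}^{\B}|)$ distinct from a given output pair lies in $\textbf{Card}_{\Tmninfty,\{\s,\s_{2}\}}(\phi)$, hence dominates some output pair and so exceeds the given one in a coordinate.

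The main obstacle, relative to the fully bounded empty-signature theories covered by \Cref{-SI+ES=>CMMF}, is that $\textbf{Card}_{\Tmninfty,\{\s,\s_{2}\}}(\phi)$ is infinite and cannot simply be enumerated; the remedy is the bounded-$\s$, column-by-column reduction above, which isolates a single candidate per column. The delicate point is the interaction between columns mediated by $\aleph_{0}$: for each column $c$ one must decide correctly whether the round-up of $K(c)$ is a finite number or $\aleph_{0}$, and then whether the resulting pair survives minimality once the (possibly finite) pairs coming from other, larger-$\s$ columns are taken into account.
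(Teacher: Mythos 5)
Your proposal is correct and takes essentially the same approach as the paper: both arguments exploit that $\s$ ranges over the two cardinalities $n$ and $m$ only, compute per column the minimum $\s_{2}$-cardinality forced by the arrangements in $\eq{\phi}$, round up within the column, and filter to minimal elements. The paper instantiates your general recipe to the concrete formula $\{(n,\aleph_{0}),(m,p)\}$, $\{(m,p)\}$, or $\emptyset$ (according to which columns survive) with $p=\min\{|V_{\s_{2}}/E| : E\in\eq{\phi},\ |V_{\s}/E|\leq m\}$, and it obtains boundedness of $\s$ directly from the axiomatization rather than via your compactness appeal.
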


\begin{proof}
    Take a quantifier-free $\Sigma_{2}$-formula $\phi$, and let $V_{\s}$ and $V_{\s_{2}}$ be, respectively, the variables of sort $\s$ and $\s_{2}$ of $\phi$; let $\eq{\phi}$ be the set of equivalence relations $E$ on $V=V_{\s}\cup V_{\s_{2}}$ which respect sorts such that $\phi$ and $\delta_{V}^{E}$ are equivalent. Finding $\eq{\phi}$ can be reduced to the satisfiability problem in equality logic, and is therefore decidable. We state that
    \[\minmod_{\Tmninfty,S}(\phi)=\begin{cases}
        \{(n,\aleph_{0}),(m,p)\} & \text{if there is $E\in\eq{\phi}$ with $|V_{\s}/E|\leq n$;}\\
        \{(m,p)\} & \text{if there is no such $E$, but $p$ is finite;}\\
        \emptyset & \text{if $p=\aleph_{0}$}
    \end{cases}\]
is a minimal modal function, where $S=\{\s,\s_{2}\}$, we assume without loss of generality $m>n$, and 
\[p=\min\{|V_{\s_{2}}/E| : E\in\eq{\phi}, |V_{\s}/E|\leq m\}.\]
Of course, this is a computable function that always produces a finite output; notice as well that $p=\aleph_{0}$ iff $\eq{\phi}=\emptyset$, what happens in turn iff $\phi$ is not satisfiable.
\begin{enumerate}
    \item Suppose that $\A$ is a minimal $\Tmninfty$-interpretation that satisfies $\phi$, with $(|\s^{\A}|,|\s_{2}^{\A}|)=(s,t)$. If $s=n$, we state that $t=\aleph_{0}$: indeed, if that were not the case, one would have $t>\aleph_{0}$ and, by the many-sorted L{\"o}wenheim-Skolem theorem, we could obtain a $\Tmninfty$-interpretation $\A^{\prime}$ that satisfies $\phi$ with $(|\s^{\A^{\prime}}|,|\s_{2}^{\A^{\prime}}|)=(n,\aleph_{0})<(n,t)$, contradicting the fact that $(s,t)$ is minimal.

    So, suppose that $s=m$: we first show that there is a $\Tmninfty$-interpretation $\B$ that satisfies $\phi$ with $(|\s^{\B}|,|\s_{2}^{\B}|)=(m,p)$, meaning $t\leq p$. Indeed, take an equivalence $E\in\eq{\phi}$ with $|V_{\s}/E|\leq m$ and $|V_{\s_{2}}/E|=p$, and a $\Sigma_{2}$-interpretation $\B$ with: $|\s^{\B}|=m$ (meaning $\B$ is also a $\Tmninfty$-interpretation); $|\s_{2}^{\B}|=p$; $x^{\B}=y^{\B}$ iff $xEy$, for variables $x,y\in V_{\s}$, and arbitrarily for other variables of sort $\s$; and $u^{\B}=v^{\B}$ iff $uEv$, for variables $u,v\in V_{\s_{2}}$, and arbitrarily for other variables of sort $\s_{2}$. Then $\B$ satisfies $\phi$, as we wanted to show, since it satisfies $\delta_{V}^{E}$. Now, assume that $t<p$: let $F$ be the equivalence induced by $\A$, and we have that $|V_{\s}/F|\leq m$ while $|V_{\s}/F|<p$, contradicting the choice of $p$.

    \item For the reciprocal, we first prove that there always is a minimal $\Tmninfty$-interpretation $\A$ satisfying $\phi$ with $(|\s^{\A}|,|\s_{2}^{\A}|)=(m,p)$, and second that if there is $E\in\eq{\phi}$ with $|V_{\s}/E|\leq n$, then there is a minimal $\Tmninfty$-interpretation $\A$ satisfying $\phi$ with $(|\s^{\A}|,|\s_{2}^{\A}|)=(n,\aleph_{0})$, what will finish the proof, always assuming $\phi$ is satisfiable. So, again take an $E\in\eq{\phi}$ with $|V_{\s}/E|\leq m$ and $|V_{\s_{2}}/E|=p$, and let $\A$ be a $\Sigma_{2}$-interpretation with: $|\s^{\A}|=m$ (making of $\A$ a $\Tmninfty$-interpretation); $|\s_{2}^{\A}|=p$; $x^{\A}=y^{\A}$ if and only if $xEy$, for variables $x$ and $y$ in $\phi$ of sort $\s$, and arbitrarily for other variables of sort $\s$; and $u^{\A}=v^{\A}$ if and only if $uEv$, for variables $u$ and $v$ in $\phi$ or sort $\s_{2}$, and arbitrarily for other variables of sort $\s_{2}$. We henceforth have that $\A$ satisfies $\delta_{V}^{E}$ and thus $\phi$, and that $(|\s^{\A}|,|\s_{2}^{\A}|)=(m,p)$. So, assume $\B$ is a $\Tmninfty$-interpretation that satisfies $\phi$ with $(|\s^{\B}|,|\s_{2}^{\B}|)<(m,p)$: if $|\s^{\B}|<m$, we must have $|\s^{\B}|=n$ and $|\s_{2}^{\B}|\geq\aleph_{0}$, leading to a contradiction; if $|\s^{\B}|=m$ and $|\s_{2}^{\B}|<p$, taking the equivalence $F$ induced by $\B$, we have $|V_{\s}/F|\leq m$ and $|V_{\s_{2}}/F|<p$, contradicting our choice of $p$. 

    So, suppose that there is $E\in\eq{\phi}$ with $|V_{\s}/E|\leq E$, and define the $\Sigma_{2}$-interpretation $\A$ with: $|\s^{\A}|=n$; $|\s_{2}^{\A}|=\aleph_{0}$ (so $\A$ is also a $\Tmninfty$-interpretation); $x^{\A}=y^{\A}$ iff $xEy$, for variables $x,y\in V_{\s}$, and arbitrarily for other variables of sort $\s$; and $u^{\A}=v^{\A}$ iff $uEv$, for variables $u,v\in V_{\s_{2}}$, and arbitrarily for other variables of sort $\s_{2}$. We have that $\A$ satisfies $\delta_{V}^{E}$, and thus $\phi$, and that $(|\s^{\A}|,|\s_{2}^{\A}|)=(n,\aleph_{0})$. So, suppose that $\B$ is a $\Tmninfty$-interpretation that satisfies $\phi$ with $(|\s^{\B}|,|\s_{2}^{\B}|)<(n,\aleph_{0})$: then either $|\s^{\B}|<n$, or $|\s^{\B}|=n$ and $|\s_{2}^{\B}|<\aleph_{0}$, both contradicting the fact that $\B$ is a $\Tmninfty$-interpretation and thus finishing the proof.
    
\end{enumerate}

\end{proof}

\subsection{\tp{$\Tupinfty$}{Tupinfty}}

\begin{lemma}\label{Tupinfty has CMMF}
    The $\Sigma_{2}$-theory $\Tupinfty$ has a computable minimal model function with respect to $\{\s,\s_{2}\}$.
\end{lemma}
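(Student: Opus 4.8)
The plan is to follow the template already used in this appendix for empty‑signature theories (in particular the proofs for $\Teven$, $\Tninfty$ and, most closely, $\Tmninfty$), reducing the construction of a minimal model function to a finite, decidable computation over the set of cardinality pairs realized by $\Tupinfty$‑interpretations. First I would record the shape of $\textbf{Card}_{\Tupinfty}$ from the axiomatization of $\Tupinfty$: its models are the $\Sigma_{2}$‑interpretations whose second sort is infinite together with the finite "diagonal" interpretations (those with $|\s^{\A}| = |\s_{2}^{\A}| < \aleph_{0}$), so that $\textbf{Card}_{\Tupinfty} = \{(p,\aleph_{0}) : 1 \le p \le \aleph_{0}\} \cup \{(k,k) : 1 \le k < \aleph_{0}\}$. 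This is exactly the feature responsible for $\Tupinfty$ being stably infinite, finitely witnessable (hence having the finite model property), and convex, while being neither smooth nor stably finite, as in its profile in \Cref{tab-summary-app}.

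Next, given a quantifier‑free $\Sigma_{2}$‑formula $\phi$, I would compute the finite sets $V_{\s} = \vars_{\s}(\phi)$ and $V_{\s_{2}} = \vars_{\s_{2}}(\phi)$, put $V = V_{\s}\cup V_{\s_{2}}$, and form $\eq{\phi}$, the set of sort‑respecting equivalence relations $E$ on $V$ such that $\delta_{V}^{E}$ and $\phi$ are equivalent (in pure equality logic). As in the neighbouring proofs, $\eq{\phi}$ is finite (at most $2^{2^{|V|}}$ candidates) and computable, since equivalence of $\delta_{V}^{E}$ and $\phi$ reduces to the decidable satisfiability problem of equality logic; moreover $\eq{\phi}=\emptyset$ exactly when $\phi$ is $\Tupinfty$‑unsatisfiable, because any arrangement is realized by some $\Tupinfty$‑interpretation with infinite second sort. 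Setting $a_{E} = \max\{1,|V_{\s}/E|\}$ and $b_{E} = \max\{1,|V_{\s_{2}}/E|\}$ for $E\in\eq{\phi}$, I would define
\[ \minmod_{\Tupinfty,S}(\phi) = \minimal\Big( \bigcup_{E\in\eq{\phi}} \big\{ (\max\{a_{E},b_{E}\},\max\{a_{E},b_{E}\}),\ (a_{E},\aleph_{0}) \big\} \Big), \]
with output $\emptyset$ when $\eq{\phi}=\emptyset$. This is computable (a finite union of computably obtained pairs, followed by the $\minimal$ operator on a finite set), and its output is finite.

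To prove correctness I would invoke \Cref{alternative definition}, so it suffices to show that for $\Tupinfty$‑satisfiable $\phi$ the displayed set equals $\minimal(\textbf{Card}_{\Tupinfty}(\phi))$. For the characterization $\textbf{Card}_{\Tupinfty}(\phi) = \{(p,q)\in\textbf{Card}_{\Tupinfty} : (p,q)\ge(a_{E},b_{E})\text{ for some }E\in\eq{\phi}\}$: any $\Tupinfty$‑interpretation satisfying $\phi$ realizes some $\delta_{V}^{E}$ with $E\in\eq{\phi}$, forcing $|\s^{\A}|\ge a_{E}$ and $|\s_{2}^{\A}|\ge b_{E}$; conversely, given $(p,q)\in\textbf{Card}_{\Tupinfty}$ above some $(a_{E},b_{E})$, one builds a $\Tupinfty$‑interpretation of cardinalities $(p,q)$ realizing $\delta_{V}^{E}$ — there is enough room for the witness variables of each sort, exactly as in the constructions for $\Tmninfty$ and $\Tninfty$ — which then satisfies $\phi$. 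Taking $\le$‑minimal elements, a short case split shows that over a fixed $E$ the minimal pairs are $(\max\{a_{E},b_{E}\},\max\{a_{E},b_{E}\})$, the least diagonal pair above $(a_{E},b_{E})$, and, when $b_{E} > a_{E}$, also $(a_{E},\aleph_{0})$ (otherwise this second pair is dominated by the first); taking global minima over all $E\in\eq{\phi}$ matches the displayed formula.

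The main obstacle is not any single step but the cardinality bookkeeping: pinning down $\textbf{Card}_{\Tupinfty}$ from the axiomatization, and the case analysis of which pairs of $\textbf{Card}_{\Tupinfty}$ lying above a given lower bound are $\le$‑minimal — in particular the fact that a finite diagonal pair and an infinite‑second‑sort pair are incomparable when $b_{E} > a_{E}$, so that a single arrangement can contribute two distinct minimal models, whereas for $a_{E}\ge b_{E}$ it contributes only one. One must also check that the output is non‑empty precisely for $\Tupinfty$‑satisfiable inputs and that the reduction of arrangement‑equivalence to equality‑logic satisfiability is legitimate; both are routine given the analogous arguments already carried out in this appendix.
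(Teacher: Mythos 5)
Your overall strategy is exactly the paper's: compute $\eq{\phi}$, extract from each arrangement $E$ the lower bound $(|V_{\s}/E|,|V_{\s_{2}}/E|)$ it imposes, return the $\leq$-minimal realizable cardinality pairs lying above these bounds, and justify correctness via \Cref{alternative definition}. The gap is in the premise that everything else rests on: your guessed description of $\textbf{Card}_{\Tupinfty}$ does not match the model class the paper's proof works with. In the paper, the $\Tupinfty$-interpretations are exactly those $\A$ with $|\s^{\A}|=1$ and $|\s_{2}^{\A}|$ arbitrary (\emph{including every finite value}), together with those with $|\s^{\A}|=|\s_{2}^{\A}|$; in particular there are no models $(p,\aleph_{0})$ with $1<p<\aleph_{0}$, while $(1,q)$ is a model for every finite $q$. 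The paper's minimal model function is accordingly
\[\minimal\big[\{(1,|V_{\s_{2}}/E|): |V_{\s}/E|=1\}\cup\{(\mu_{E},\mu_{E}): |V_{\s}/E|>1\}\big],\qquad \mu_{E}=\max\{|V_{\s}/E|,|V_{\s_{2}}/E|\},\]
with $E$ ranging over $\eq{\phi}$. This genuinely disagrees with yours: for $\phi=(x=x)\wedge\neg(u_{1}=u_{2})$, so that $a_{E}=1$ and $b_{E}=2$, the paper outputs $\{(1,2)\}$ whereas you output $\{(2,2),(1,\aleph_{0})\}$. These cannot both be correct minimal model functions for the same theory, so at most one of the two proofs establishes the lemma.

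The underlying problem is that $\Tupinfty$ is defined only in \cite{FroCoS} and its axiomatization is not reproduced here, so you cannot reconstruct its model class from its property profile in \Cref{tab-summary-app}: your class, the paper's class, and the exact analogue of $\Ttwo$ with $g$ replaced by the identity (models $(1,\aleph_{0})$, finite diagonal, and $(\aleph_{0},\aleph_{0})$ only) are all plausible candidates, yet they yield pairwise different outputs on the example above. (There is in fact some tension here even internally: the class used in the paper's proof would appear to make $\Tupinfty$ stably finite, against its table entry, which is all the more reason the argument must start from the actual axioms rather than from a reverse-engineered $\textbf{Card}_{\Tupinfty}$.) Everything downstream of your first paragraph is coherent relative to your assumed model class, but until that class is verified against the definition in \cite{FroCoS}, the proof cannot be accepted.
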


\begin{proof}
    Take a quantifier-free $\Sigma_{2}$-formula $\phi$, let $V_{\s}$ be its variables of sort $\s$, and $V_{\s_{2}}$ be its variables of sort $\s_{2}$; let $\eq{\phi}$ be the set of equivalence relations $E$ on $V=V_{\s}\cup V_{\s_{2}}$ which respect sorts (meaning that if $x$ is of sort $\s$, and $u$ of sort $\s_{2}$, then $x\overline{E}u$) and such that a $\phi$ and $\delta_{V}^{E}$ are equivalent. Finding such equivalence boils down to the satisfiability problem of equality logic, and thus producing the set $\eq{\phi}$ is computable. We can then state that the following is a minimal model funtion:
    \begin{multline*}
        \minmod_{\Tupinfty,S}(\phi)=\minimal\big[\{(1,|V_{\s_{2}}/E|) : E\in\eq{\phi}, |V_{\s}/E|=1\}\cup\\\{(\max\{|V_{\s}/E|, |V_{\s_{2}}/E|\},\max\{|V_{\s}/E|, |V_{\s_{2}}/E|\}) : E\in\eq{\phi}, |V_{\s}/E|>1\}\big],
    \end{multline*}
    where $S=\{\s,\s_{2}\}$; notice that, if $\eq{\phi}$ is empty (when $\phi$ is not satisfiable), $\minmod_{\Tupinfty,S}(\phi)$ equals $\{(\aleph_{0},\aleph_{0})\}$, the maximum of $\N\times\N$. Since this is computable and the output is finite, we just need to prove that it is indeed a minimal model function; for simplicity, let the right side of the above equation be $\minimal A(\phi)$, and assume going forward that $\phi$ is satisfiable (or, equivalently, that $\phi$ is $\Tupinfty$-satisfiable, since the theory has models infinite in all cardinalities, and is defined over an empty signature).
    \begin{enumerate}
        \item Suppose first that $\A$ is a minimal $\Tupinfty$-interpretation that satisfies $\phi$, with $(|\s^{\A}|,|\s_{2}^{\A}|)=(m,n)$: if $m=1$, and $E$ is the equivalence induced by $\A$, we get $(m,n)\in A(\phi)$; otherwise, $m=n$, and again letting $E$ be the equivalence induced by $\A$ we get $\max\{|V_{\s}/E|,|V_{\s_{2}}/E|\}\leq m$. In this last inequality we state that one actually has equality: indeed, if that were not true, define the $\Tupinfty$-interpretation $\A^{\prime}$ by making $\s^{\A^{\prime}}=V_{\s}^{\A}$, and $\s_{2}^{\A^{\prime}}$ equal to any subset of $\s_{2}^{\A}$ containing $V_{\s_{2}}^{\A}$ of cardinality $|V_{\s}/E|$ (this under the assumption that $|V_{\s}/E|=\max\{|V_{\s}/E|,|V_{\s_{2}}/E|\}$, otherwise we make $\s_{2}^{\A^{\prime}}=V_{\s_{2}}^{\A}$, and $\s^{\A^{\prime}}$ equal to any subset of $\s^{\A}$ that contains $V_{\s}^{\A}$ with cardinality $|V_{\s_{2}}/E|$); we then make $x^{\A^{\prime}}=x^{\A}$ for any variable in $V$, and arbitrary otherwise, and then $\A^{\prime}$ is a $\Tupinfty$-interpretation that satisfies $\phi$ with $(|\s^{\A^{\prime}}|,|\s_{2}^{\A^{\prime}}|)<(m,m)$, contradicting our choice of $(m,m)$. To summarize, again we obtain $(m,n)\in A(\phi)$.

        So, suppose $(m,n)$ is not minimal in $A(\phi)$, meaning there is a $(p,q)$ in $A(\phi)$ such that $(p,q)<(m,n)$. There are then two cases to consider: if $p=1$, let $F\in\eq{\phi}$ be the equivalence such that $(1,|V_{\s_{2}}/F|)=(1,q)$, and we define $\B$ to be a $\Sigma_{2}$-interpretation with $(|\s^{\B}|,|\s_{2}^{\B}|)=(1,q)$ (what will make of $\B$ a $\Tupinfty$-interpretation), and such that $u^{\B}=v^{\B}$ iff $uFv$ for $u,v\in V_{\s_{2}}$; $\B$ then satisfies $\phi$, and the fact that $(|\s^{\B}|,|\s_{2}^{\B}|)<(m,n)$ contradicts our choice of $(m,n)$. If, otherwise, $p>1$, then $p=q$, and let $F$ be an equivalence on $\eq{\phi}$ such that $(\max\{|V_{\s}/F|, |V_{\s_{2}}/F|\},\max\{|V_{\s}/F|, |V_{\s_{2}}/F|\})=(p,q)$: we define a $\Sigma_{2}$-interpretation $\B$ in this case by making $\s^{\B}=\s_{2}^{\B}$ equal any set with $\max\{|V_{\s}/F|, |V_{\s_{2}}/F|\}$, and $x^{\B}=y^{\B}$ iff $xFy$ (and the same for variables of sort $\s_{2}$), so that $\B$ is a $\Tupinfty$-interpretation that satisfies $\phi$ with $(|\s^{\B}|,|\s_{2}^{\B}|)<(m,n)$, again contradicting our choice of $(m,n)$. So in both possible cases we reach a contradiction, meaning $(m,n)$ is indeed minimal and thus lies in $\minimal A(\phi)$.

        \item Reciprocally, suppose that $(m,n)$ is in $\minimal A(\phi)$: we have that wither $m=1$ or $m>1$, and then $m=n$. In the first case, let $E\in\eq{\phi}$ be an equivalence such that $(|V_{\s}/E|,|V_{\s_{2}}/E|)=(1,n)$, take any $\Sigma_{2}$-interpretation $\A$ with $(|\s^{\A}|,|\s_{2}^{\A}|)=(1,n)$ (necessarily a $\Tupinfty$-interpretation as well), and make $u^{\A}=v^{\A}$ iff $uEv$: as we saw before, $\A$ then satisfies $\phi$. In the second case, let $E\in\eq{\phi}$ be an equivalence such that $(\max\{|V_{\s}/E|, |V_{\s_{2}}/E|\},\max\{|V_{\s}/E|, |V_{\s_{2}}/E|\})=(m,m)$, and then define $\B$ to be any $\Sigma_{2}$-interpretation with $(|\s^{\B}|,|\s_{2}^{\B}|)=(m,m)$ (and $\B$ ends up being a $\Tupinfty$-interpretation), such that $x^{\B}=y^{\B}$ (respectively $u^{\B}=v^{\B}$) if and only if $xEy$ ($uEv$); again we obtain $\B$ satisfies $\phi$.

        Finally, suppose that $\B$ is a $\Tupinfty$-interpretation such that $(|\s^{\B}|,|\s_{2}^{\B}|)<(m,n)$: again we proceed by cases. If $|\s^{\B}|=1$, let $\B^{\prime}$ be the $\Tupinfty$-interpretation with $\s^{\B^{\prime}}=\s^{\B}$, $\s_{2}^{\B^{\prime}}=V_{\s_{2}}^{\B}$, $u^{\B^{\prime}}=u^{\B}$ for all $u\in V_{\s_{2}}$, and arbitrarily otherwise; $(|\s^{\B^{\prime}}|,|\s_{2}^{\B^{\prime}}|)$ then equals $(1,|V_{\s_{2}}/F|)$, for $F$ the equivalence induced by $\B$, and thus lies in $A(\phi)$, contradicting the minimality of $(m,n)$. If, instead, $|\s_{2}^{\B}|>1$, and then $|\s^{\B}|=|\s_{2}^{\B}|$, we take the $\Tupinfty$-interpretation $\B^{\prime}$ such that:
        \begin{enumerate}
            \item if $|V_{\s}^{\B}|\geq |V_{\s_{2}}^{\B}|$, $\s^{\B^{\prime}}=V_{\s}^{\B}$, and $\s_{2}^{\B^{\prime}}$ is any subset of $\s_{2}^{\B}$ containing $V_{\s_{2}}^{\B}$ with $|V_{\s}^{\B}|$ elements;
            \item if $|V_{\s_{2}}^{\B}|\geq |V_{\s}^{\B}|$, $\s_{2}^{\B^{\prime}}=V_{\s_{2}}^{\B}$, and $\s^{\B^{\prime}}$ is any subset of $\s^{\B}$ containing $V_{\s}^{\B}$ with $|V_{\s_{2}}^{\B}|$ elements,
        \end{enumerate}
        and this way $(|\s^{\B^{\prime}}|,|\s_{2}^{\B^{\prime}}|)$ equals $(\max\{|V_{\s}/F|,|V_{\s_{2}}/F|\}, \max\{|V_{\s}/F|,|V_{\s_{2}}/F|\})$, for $F$ the equivalence induced by $\B$, and again we contradict the minimality of $(m,n)$, thus finishing the proof.

    \end{enumerate}
\end{proof}

\subsection{\tp{$\Tbbs$,
$\Tbbeq$,
$\Tbbtwo$,
$\Tbbvee$,
$\Tbbveeeq$,
$\Tbbinfty$,
$\Tonebb$,
$\Tbboneneq$,
$\Tbbinftythree$,
$\Tbbneqinfty$,
$\Tnbb$,
$\Tbbneq$,
$\Tmnbb$
}{Tbbs, Tbbeq, Tbbtwo, Tbbvee, Tbbveeeq, Tbbinfty, Tonebb, Tbboneneq, Tbbinftythree, Tbbneqinfty, Tnbb, Tbbneq, Tmnbb}}

\begin{lemma}
$\Tbbs$,
$\Tbbeq$,
$\Tbbtwo$,
$\Tbbvee$,
$\Tbbveeeq$,
$\Tbbinfty$,
$\Tonebb$,
$\Tbboneneq$,
$\Tbbinftythree$,
$\Tbbneqinfty$,
$\Tnbb$,
$\Tbbneq$,
$\Tmnbb$
do not have a computable minimal model function w.r.t. their set of sorts.
\end{lemma}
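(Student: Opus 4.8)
The proof is a uniform diagonalization against the Busy Beaver function $\bb$, mirroring the arguments already carried out in the excerpt for $\Tbb$ (\Cref{ex:bbold}), $\Tfive$, $\Tfour$ (\Cref{cmmf:tfour}), $\Tnine$, $\Tseven$, and $\Teight$. Each theory $\T$ in the list has the feature that, for every $\T$-satisfiable quantifier-free $\phi$ and every minimal $\T$-interpretation $\A$ satisfying it, the cardinality of a fixed distinguished sort $\s$ of $\A$ is either infinite or of the form $c\cdot\bb(n)$ for some $n$, where $c\in\{1,2\}$ depends only on $\T$ (it is $2$ exactly when the axiomatization forces $s^{\A}$ to act without fixed points on $\s^{\A}$, so that domains must be even). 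Hence a computable minimal model function for $\T$ could be used to compute a total function that eventually dominates $\bb$, which is impossible by Rad{\'o}'s theorem (\cite{Rado}).

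Concretely, for each such $\T$ I would argue by contradiction: assume $\minmod_{\T,S}$ is computable, and define $h\colon\mathbb{N}\rightarrow\mathbb{N}$ by hard-coding $h(0)$ and $h(1)$ (equal to $\bb(0),\bb(1)$ or their doubles), and for $n\geq 1$ setting $h(n+1)$ to the distinguished coordinate of some tuple in $\minmod_{\T,S}(\chi_n)$, where $\chi_n$ is a quantifier-free formula forcing at least $h(n)+1$ elements of sort $\s$ — namely $\NNNEQ{x}{h(n)+1}$ for the empty-signature theories, and $\NNNEQ{x}{h(n)+1}$ conjoined with the relevant constraint on $s$ (e.g. $\bigwedge_i\neg(s(x_i)=x_i)$ or $\bigwedge_i s(x_i)=x_i$) for the $\Sigma_s$- and $\Sigma_s^2$-theories, so that the forced model lands in the ``$\bb$-branch'' of the axiomatization rather than a trivial branch of size $1$. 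Since $\T$ has infinite models, $\chi_n$ is $\T$-satisfiable, so $\minmod_{\T,S}(\chi_n)$ is nonempty and finite (\Cref{unique and finite}); $h$ is thus recursive, being assembled from $\minmod_{\T,S}$, the algorithmic construction of $\chi_n$, and a minimum over a finite set. A short induction using the axioms of $\T$ (any finite model of $\chi_n$ has more than $h(n)\geq c\cdot\bb(n)$ elements of sort $\s$, hence at least $c\cdot\bb(n+1)$) yields $h(n)\geq c\cdot\bb(n)$ for all $n$, contradicting the computability of $h$.

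The variations among the thirteen theories are routine but must be matched carefully. For the multisorted theories in the list one fixes $S$ to be the full set of sorts and reads off the coordinate of the sort whose finite cardinalities track $\bb$, the remaining coordinates being pinned to $1$ or irrelevant. For the theories whose signature contains the unary symbol $s$, the conjunct added to $\chi_n$ must match the axiomatization's cycle shape (a disequality with $s$, an equality with $s$, or a cycle condition built from $\psi_{\neq}$ and $\psi_{=}^{2}$ or $\psi_{\vee}^{k}$), so that the minimal model realizing $\chi_n$ really lies in the $\bb$-branch and not in a spurious trivial branch; the ``$_{\vee}$'' variants behave exactly as their counterparts since $\psi_{\vee}$ holds in all models of the theory and does not restrict the attainable finite cardinalities of the distinguished sort.

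The main obstacle is the inductive step together with ensuring $\chi_n$ does not admit a small finite model: one must verify, theory by theory, that a minimal model of $\chi_n$ realizes the next value $c\cdot\bb(n+1)$ rather than some intermediate cardinality, which requires the explicit description of the finite models of each axiomatization (as sketched when each theory is introduced in \Cref{sec:newbb} and \Cref{sec:oldtheoriesapp}). Once those structural descriptions are in hand, the diagonalization is essentially copy-paste from \Cref{ex:bbold}, and the remaining effort is pure bookkeeping.
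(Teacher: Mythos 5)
Your diagonalization is workable in principle, but it is a genuinely different — and far heavier — route than the one the paper takes. The paper's entire proof is a one-line appeal to \Cref{CMMF+FM=>FW}: every theory in the list is already known (from the earlier work and as recorded in \Cref{tab-summary-app}) to have the finite model property while failing finite witnessability, and \Cref{CMMF+FM=>FW} says that a computable minimal model function together with the finite model property would force finite witnessability; the contrapositive kills all thirteen theories at once, with no need to inspect their axiomatizations. Your approach instead re-runs the Busy Beaver diagonalization of \Cref{ex:bbold} separately for each theory, which requires exactly the theory-by-theory structural analysis (choice of the distinguished sort, the correct conjunct on $s$ so that $\chi_n$ avoids the trivial branch, the verification that minimal models of $\chi_n$ jump to $c\cdot\bb(n+1)$, and — for the theories that are not stably finite — the selection of a tuple with finite distinguished coordinate, which \Cref{CMMF+FMP} guarantees exists) that the general theorem was designed to make unnecessary. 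In effect you are re-proving, thirteen times over, a special case of the argument that already underlies \Cref{CMMF+FM=>FW} combined with the known non-finite-witnessability proofs; your version buys self-containedness at the cost of substantial bookkeeping that you yourself flag as the main obstacle, whereas the paper's version buys a uniform, axiomatization-free proof.
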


\begin{proof}
    Follows from \Cref{CMMF+FM=>FW}.
\end{proof}

\subsection{\tp{$\Toneinfty$, $\Ttwoinfty$}{Toneinfty, Ttwoinfty}}

\begin{lemma}\label{Toneinfty has CMMF}
    The $\Sigma_{2}$-theories $\Toneinfty$ and $\Ttwoinfty$ have a computable minimal model function with respect to $\{\s,\s_{2}\}$.
\end{lemma}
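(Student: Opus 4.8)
The plan is to reuse the template of the earlier ``easy'' old theories --- compare \Cref{Teven} and \Cref{Tupinfty has CMMF} --- exploiting \Cref{alternative definition}: it suffices to produce, effectively from a quantifier-free $\phi$, the set of $\leq$-minimal elements of $\textbf{Card}_{\T,S}(\phi)$, where $S=\{\s,\s_{2}\}$ and $\T$ is either of the two theories. The key observation is that the cardinality vectors realizable by $\T$-interpretations are severely restricted: recalling the axiomatizations of $\Toneinfty$ and $\Ttwoinfty$ from \cite{FroCoS}, over the empty signature $\Sigma_{2}$ one sort, say $\s_{2}$, is forced to be infinite in every model, while the other sort $\s$ is forced into a fixed finite set of cardinalities (essentially $\{1\}$ for $\Toneinfty$ and $\{2\}$, or at most $\{1,2\}$, for $\Ttwoinfty$). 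This is exactly what makes both theories non-stably-finite and deprives them of the finite model property, and what costs $\Ttwoinfty$ its convexity by pigeonhole on three variables of sort $\s$, just as for $\Tfive$. Consequently $\textbf{Card}_{\T,S}(\phi)$, when non-empty, is a finite set of pairs $(k,\aleph_{0})$ with $k$ in that fixed set; that the second coordinate may always be taken to be exactly $\aleph_{0}$ is \Cref{LowenheimSkolem}.

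The computable ingredient is the same reduction used for $\Teven$. Given $\phi$, put $V=\vars_{\s}(\phi)\cup\vars_{\s_{2}}(\phi)$ and let $\eq{\phi}$ be the finite, algorithmically obtainable set of sort-respecting equivalence relations $E$ on $V$ with $\phi$ and $\delta_{V}^{E}$ equivalent; constructing $\eq{\phi}$ reduces to the satisfiability problem of equality logic, hence is decidable. For $E\in\eq{\phi}$, a $\T$-interpretation satisfying $\delta_{V}^{E}$ (and thus $\phi$) exists exactly when $|V_{\s}/E|$ does not exceed the bound the theory places on sort $\s$; in that case the minimal such interpretation has cardinality vector $(b(|V_{\s}/E|),\aleph_{0})$, where $b(\cdot)$ rounds its argument up to the least admissible finite cardinality of sort $\s$. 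One produces it by interpreting the variables according to $E$ and padding $\s$ to $b(|V_{\s}/E|)$ elements and $\s_{2}$ up to $\aleph_{0}$. I would then set $\minmod_{\T,S}(\phi)=\minimal\{(b(|V_{\s}/E|),\aleph_{0}) : E\in\eq{\phi}\ \text{and}\ |V_{\s}/E|\ \text{is admissible}\}$, which is $\emptyset$ precisely when $\phi$ is not $\T$-satisfiable; it is computable and always outputs a finite set.

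Correctness is then the usual back-and-forth. If $\phi$ is $\T$-satisfiable, any $\T$-interpretation $\A$ of $\phi$ induces some $E\in\eq{\phi}$ with $|V_{\s}/E|\leq|\s^{\A}|$, so $(|\s^{\A}|,|\s_{2}^{\A}|)$ dominates $(b(|V_{\s}/E|),\aleph_{0})$, while conversely each $(b(|V_{\s}/E|),\aleph_{0})$ listed above is realized by the interpretation just built; hence the listed set and $\textbf{Card}_{\T,S}(\phi)$ have the same $\leq$-minimal elements, and \Cref{alternative definition} identifies $\minimal$ of them with $\minmod_{\T,S}(\phi)$. The only genuinely case-dependent point --- and the one I expect to need care --- is reading off from the definitions of $\Toneinfty$ and $\Ttwoinfty$ in \cite{FroCoS} exactly which finite cardinalities of sort $\s$ are admissible and confirming that sort $\s_{2}$ is unconditionally infinite; should the two sorts turn out to be coupled more intricately (as for $\Tmninfty$ or $\Tupinfty$), the same scheme still works, one merely runs the small case analysis on the pair $(|\vars_{\s}(\phi)/E|,|\vars_{\s_{2}}(\phi)/E|)$ over $E\in\eq{\phi}$ and takes $\minimal$ of the resulting finite set, exactly as in those proofs.
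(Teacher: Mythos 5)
Your proposal is correct, but it takes a noticeably heavier route than the paper does. The paper's proof exploits the fact that for these two theories the answer is \emph{constant}: every $\Toneinfty$-interpretation has $|\s^{\A}|=1$ and $|\s_{2}^{\A}|\geq\aleph_{0}$, so by \Cref{LowenheimSkolem} the set of minimal cardinality vectors of any satisfiable $\phi$ is exactly $\{(1,\aleph_{0})\}$ (and $\{(2,\aleph_{0})\}$ for $\Ttwoinfty$); since the definition of a minimal model function places no constraint on its value at $\T$-unsatisfiable inputs, the constant function $\phi\mapsto\{(1,\aleph_{0})\}$ already works, with no satisfiability test and no arrangement enumeration at all. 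You instead run the generic machinery of \Cref{Teven} and \Cref{Tupinfty has CMMF}: compute $\eq{\phi}$, filter the admissible arrangements, round up with $b(\cdot)$, and take $\minimal$, then invoke \Cref{alternative definition}. That is sound --- your back-and-forth does establish that the listed set and $\textbf{Card}_{\T,S}(\phi)$ have the same minimal elements, and the scheme degrades gracefully if the admissible $\s$-cardinalities turned out to be, say, $\{1,2\}$ rather than a singleton --- but for these particular theories the enumeration collapses to a constant anyway (your $b$ has a one-element codomain), so the extra generality buys robustness against not remembering the exact axiomatization at the price of a longer argument and an unnecessary reduction to equality-logic satisfiability. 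Either proof is acceptable; the paper's is the minimal one.
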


\begin{proof}
    We prove the result for $\Toneinfty$, while for $\Ttwoinfty$ the difference lies in the fact that the minimal model function outputting $\{(2,\aleph_{0})\}$ instead is a computable one. We state that
    \[\minmod_{\Toneinfty,S}(\phi)=\{(1,\aleph_{0})\},\]
    where $S=\{\s,\s_{2}\}$, is a computable minimal model function, being obviously computable and always outputting a finite set; so we have only left to prove that this is indeed a minimal model function.
    \begin{enumerate}
        \item Take a $\Toneinfty$-interpretation $\A$ that satisfies $\phi$: by the many-sorted L{\"o}wenheim-Skolem theorem, we know that there is a $\Toneinfty$-interpretation $\A^{\prime}$ that satisfies $\phi$ whose domains are at most countable, and so $(|\s^{\A^{\prime}}|,|\s_{2}^{\A^{\prime}}|)=(1,\aleph_{0})$. Now, suppose $\B$ is a $\Toneinfty$-interpretation that satisfies $\phi$ with $(|\s^{\B}|,|\s_{2}^{\B}|)<(1,\aleph_{0})$: we must of course have $|\s^{\B}|=1$ and $|\s_{2}^{\B}|<\aleph_{0}$, what contradicts the fact that $\B$ is a $\Toneinfty$-interpretation.

        \item Suppose now that $\A$ is a minimal $\Toneinfty$-interpretation $\A$ satisfying $\phi$, with $(|\s^{\A}|,|\s_{2}^{\A}|)=(m,n)$ that satisfies $\phi$; and since $\A$ is a $\Toneinfty$-interpretation, $m=|\s^{\A}|=1$ and $n=|\s_{2}^{\A}|\geq\aleph_{0}$. We state that one actually has $n=\aleph_{0}$, so suppose this is not actually the case: then, again by the many-sorted L{\"o}wenheim-Skolem theorem, there is a $\Toneinfty$-interpretation $\A^{\prime}$ that satisfies $\phi$ whose domains are at most countable, and so $(|\s^{\A^{\prime}}|,|\s_{2}^{\A^{\prime}}|)<(1,n)$, contradicting the fact that $(m,n)$ is minimal, and thus finisihing the proof.
        
    \end{enumerate}
\end{proof}

\subsection{\tp{$\Toneinftyneq$, $\Ttwoinftyneq$}{Toneinftyneq, Ttwoinftyneq}}

\begin{lemma}\label{Toneinftyneq has CMMF}
    The $\Sigma_{s}$-theories $\Toneinftyneq$ and $\Ttwoinftyneq$ have a computable minimal model function with respect to its only sort.
\end{lemma}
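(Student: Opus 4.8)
The plan is to mimic \Cref{Toneinfty has CMMF}, only replacing its constant minimal model function by a two-valued one, and to import the term-abstraction machinery from the proof of \Cref{Toddneq has CMMF} in order to decide, algorithmically, which of the two values is correct. Recall that, just as $\Toneinfty$ (resp.\ $\Ttwoinfty$) has only ``small'' models, with exactly $1$ (resp.\ $2$) elements in its first sort, and ``large'' models, with that sort of the same small size but the second sort infinite, the $\Sigma_s$-theory $\Toneinftyneq$ (resp.\ $\Ttwoinftyneq$) has exactly two cardinality classes of interpretations: a single small model whose domain has a fixed finite size $n_0=1$ (resp.\ $n_0=2$) and on which $s$ is interpreted in a fixed prescribed way (for $\Toneinftyneq$ necessarily the identity), and the infinite ones, on which $s$ is fixed-point-free (and, for $\Ttwoinftyneq$, further constrained by the non-convexity axiom, which however does not affect cardinalities). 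Since no interpretation of the theory has cardinality strictly between $n_0$ and $\aleph_0$, \Cref{alternative definition} forces, for every $\T$-satisfiable quantifier-free $\phi$, the set $\minmod_{\T,\{\s\}}(\phi)$ to equal $\{n_0\}$ when the small model can be made to satisfy $\phi$, and $\{\aleph_0\}$ otherwise; it is $\emptyset$ when $\phi$ is $\T$-unsatisfiable. It therefore suffices to show that these alternatives are decidable from $\phi$.

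Deciding whether the small model satisfies $\phi$ is easy: since $s$ is interpreted as the identity there, every term $s^j(z)$ equals $z$, so the question reduces to satisfiability of an equality-logic formula on a carrier of at most $n_0$ elements, i.e.\ to inspecting the finitely many arrangements of $\vars(\phi)$ with at most $n_0$ classes. For the infinite branch I would reuse the construction from the proof of \Cref{Toddneq has CMMF}: writing $z_1,\dots,z_k$ for the variables of $\phi$ and $M_i$ for the largest $j$ with $s^j(z_i)$ occurring in $\phi$, introduce fresh variables $y_{i,j}$ for $0\le j\le M_i+1$, replace every occurrence of $s^j(z_i)$ in $\phi$ by $y_{i,j}$ to obtain an equality-logic formula $\phi^{\star}$, and observe that $\phi$ is satisfied by some infinite fixed-point-free $\Sigma_s$-interpretation (meeting the extra axioms, if any) iff there is an equivalence $E$ on the $y_{i,j}$ such that: $\phi^{\star}\wedge\delta^{E}$ is satisfiable in equality logic; $y_{i,j}\mathrel{E}y_{p,q}$ implies $y_{i,j+1}\mathrel{E}y_{p,q+1}$ (functional consistency of $s$); $y_{i,j}\,\overline{E}\,y_{i,j+1}$ for $j\le M_i$ (no fixed point); and, for $\Ttwoinftyneq$, the analogous closure condition encoding $s^{2}(x)\in\{x,s(x)\}$. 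The ``if'' direction is witnessed by freely completing the finite quotient $\{y_{i,j}\}/E$ to a countable model of the theory, exactly as in \Cref{Toddneq has CMMF}; the ``only if'' direction follows by reading $E$ off a satisfying interpretation. Since there are only finitely many candidate relations $E$ and each side condition is decidable, the whole test is computable, and combining it with the small-model test yields the required computable $\minmod_{\T,\{\s\}}$.

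The main obstacle will be — as the authors themselves flag for the closely related $\Tnequpinfty$ — the bookkeeping in the infinite branch: verifying precisely that the pair (abstraction $\phi\mapsto\phi^{\star}$, side conditions on $E$) captures satisfiability by infinite fixed-point-free interpretations, and in particular that any $E$ meeting the side conditions can genuinely be completed to a $\Sigma_s$-model of the theory, so that the non-trivial direction of the minimality argument goes through; for $\Ttwoinftyneq$ one must also check that the completion respects the non-convexity axiom. Each individual step is routine, but the case analysis on how the abstracted variables $y_{i,j}$ may be identified is what makes the full argument long.
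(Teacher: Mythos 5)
Your core strategy coincides with the paper's: observe that every model of $\Toneinftyneq$ (resp.\ $\Ttwoinftyneq$) has cardinality either $n_{0}=1$ (resp.\ $2$) or at least $\aleph_{0}$, so by \Cref{alternative definition} the function need only decide which of $\{n_{0}\}$ and $\{\aleph_{0}\}$ to return, and your test for the small-model branch (collapse all terms $s^{j}(z)$ and check arrangements with at most $n_{0}$ classes) is equivalent to the paper's formulation, which puts $\phi$ in disjunctive normal form and asks whether some cube contains no negated literal. Where you diverge is the second branch: you insist on returning $\emptyset$ exactly when $\phi$ is $\T$-unsatisfiable, and this forces you to build a full decision procedure for satisfiability of $\phi$ in the infinite, fixed-point-free models via the $y_{i,j}$ term-abstraction and the side conditions on equivalence relations borrowed from \Cref{Toddneq has CMMF}. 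That entire component is unnecessary: the definition of a minimal model function constrains the output only on $\T$-satisfiable inputs (a point the paper stresses at the start of the section on minimal models), so one may simply return $\{\aleph_{0}\}$ whenever the small model fails, regardless of whether $\phi$ is satisfiable at all. This is exactly what the paper does, and it eliminates precisely the step you flag as the main obstacle — including the delicate verification that every admissible $E$ completes to a genuine model of the theory, which is also the step where your argument is most exposed (your correctness hinges on the side conditions not being too restrictive, since an over-restrictive test would wrongly return $\emptyset$ on a satisfiable formula). If you drop the third branch, your proof reduces to the paper's, with only the routine adaptation for $n_{0}=2$ (enumerating the finitely many interpretations of $s$ on a two-element domain) remaining for $\Ttwoinftyneq$.
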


\begin{proof}
    We will prove the result for $\Toneinftyneq$, being the one for $\Ttwoinftyneq$ proved in a very similar way. Take a quantifier-free formula $\phi$ and rewrite it in disjunctive normal form, what can be done algorithmically; we then state that 
    \[\minmod_{\Toneinftyneq,S}(\phi)=\begin{cases}
        \{1\} & \text{if there is a cube in $\phi$ with no negated literals;}\\
        \{\aleph_{0}\} & \text{otherwise,}
    \end{cases}\]
    where $S=\{\s\}$, is a minimal model function. Of course this is computable, so we only need to prove that it is a minimal model function.

    Let us write $\phi$ as $\bigvee_{i=1}^{n}\phi_{i}$, where $\phi_{i}=\bigwedge_{j=1}^{m_{i}}\phi_{i}^{j}$ are cubes, and thus $\phi_{i}^{j}$ are literals; starting from the case where there is a cube in $\phi$ with no negated literals, there is an $1\leq i\leq n$ such that $\phi_{i}^{j}$ equals $s^{p_{j}}(x_{j})=s^{q_{j}}(y_{j})$ for all $1\leq j\leq m_{i}$, where $p_{j},q_{j}\in\mathbb{N}$, and $x_{j}$ and $y_{j}$ are variables. It is then easy to see that the trivial $\Toneinftyneq$-interpretation $\A$ satisfies $\phi$: indeed, it satisfies $s^{p_{j}}(x_{j})=s^{q_{j}}(y_{j})$ for all $1\leq j\leq m_{i}$, and thus $\phi_{i}$ and $\phi$. Of course, there cannot be a $\Toneinftyneq$-interpretation $\B$ that satisfies $\phi$ with $|\s^{\B}|<|\s^{\A}|=1$ simply because no interpretation can have empty domains, so we have proved that a minimal $\Toneinftyneq$-interpretation satisfying $\phi$ has cardinality $1$.

    Now, suppose that all cubes of $\phi$ have a negated literal: without loss of generality, for each $1\leq i\leq n$, let $\phi_{i}^{1}$ be $\neg (s^{p_{i}}(x_{i})=s^{q_{i}}(y_{i}))$, for $p_{i},q_{i}\in\mathbb{N}$, and $x_{i}$ and $y_{i}$ variables. Now, let $\A$ be a $\Toneinftyneq$-interpretation that satisfies $\phi$, and we state that we cannot have $|\s^{\A}|=1$: indeed, if that were true there would be a cube $\phi_{i}$ satisfied by $\A$, and thus each $\phi_{i}^{j}$ would be satisfied by $\A$ for $1\leq j\leq m_{i}$; thus $\neg (s^{p_{i}}(x_{i})=s^{q_{i}}(y_{i}))$, equal to $\phi_{i}^{1}$, would be satisfied by $\A$, what is not possible given the domain of $\A$ is assumed to only have one element. Because of the definition of $\Toneinftyneq$, we then must have $|\s^{\A}|\geq \aleph_{0}$, and by the many-sorted L{\"o}wenheim-Skolem theorem there is a $\Toneinftyneq$-interpretation $\A^{\prime}$ that satisfies $\phi$ with at most countable domain. Of course, by what has been just explained, there cannot be a $\Toneinftyneq$-interpretation $\B$ with $|\s^{\B}|<|\s^{\A^{\prime}}|=\aleph_{0}$ that satisfies $\phi$, and so we have finished the proof.
    
    \end{proof}

\section{\tp{Proof of \Cref{thm:addsresult}}{Proof of thm:addsresult}}

    For a $\Sigma_{2}$-interpretation $\C$, define the $\Sigma_{1}$-interpretation $\subs{\C}$ with $\s^{\subs{\C}}=\s^{\C}$, and $x^{\subs{\C}}=x^{\C}$ for all variables $x$ of sort $\s$: as proven in Lemma 12 of \cite{arxivCADE}, $\C$ is a $\adds{\T}$-interpretation iff $\subs{\C}$ is a $\T$-interpretation, and, more generally, $\subs{\C}$ satisfies a $\Sigma_{1}$-formula $\varphi$ iff $\C$ satisfies $\varphi$. In much the same way, we define, given a $\Sigma_{s}^{2}$-interpretation $\C$, the $\Sigma_{s}$-interpretation $\subs{\C}$ by making: $\s^{\subs{\C}}=\s^{\C}$; $s^{\subs{\C}}=s^{\C}$; and $x^{\subs{\C}}=x^{\C}$, for all variables $x$ of sort $\s$.
    \begin{lemma}\label{tech. lemma adds}
    Let $\C$ be a $\Sigma_{s}^{2}$-interpretation: then, for any $\Sigma_{s}$-formula $\varphi$, $\subs{\C}$ satisfies $\varphi$ iff $\C$ satisfies $\varphi$.
\end{lemma}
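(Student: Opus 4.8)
The plan is to prove this by a standard structural-induction reduct argument, closely following Lemma 12 of \cite{arxivCADE}; the only new ingredient beyond that empty-signature version is the unary function symbol $s$, whose interpretation is preserved by the very definition of $\subs{\C}$ (recall $\s^{\subs{\C}}=\s^{\C}$, $s^{\subs{\C}}=s^{\C}$, and $x^{\subs{\C}}=x^{\C}$ for variables $x$ of sort $\s$, and nothing else is recorded).

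First I would prove the term-level analogue: for every $\Sigma_{s}$-term $\alpha$, $\alpha^{\subs{\C}}=\alpha^{\C}$. This is an easy induction on the structure of $\alpha$. If $\alpha$ is a variable, it is necessarily of sort $\s$ (the only sort of $\Sigma_{s}$), and $\alpha^{\subs{\C}}=\alpha^{\C}$ holds by the definition of $\subs{\C}$. If $\alpha=s(\beta)$, then $\alpha^{\subs{\C}}=s^{\subs{\C}}(\beta^{\subs{\C}})=s^{\C}(\beta^{\C})=\alpha^{\C}$, using $s^{\subs{\C}}=s^{\C}$ together with the induction hypothesis for $\beta$.

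Then I would prove the statement for formulas by induction on the structure of $\varphi$. The atomic case, where $\varphi$ has the form $\alpha=\beta$, is immediate from the term-level statement, since $\s^{\subs{\C}}=\s^{\C}$ and equality is interpreted as identity in both interpretations; the Boolean connectives follow directly from the induction hypothesis. For the quantifier case, say $\varphi=\Exists{x}\psi$ with $x$ of sort $\s$ (the universal case being symmetric, or following by duality), I would observe that, because $\s^{\subs{\C}}=\s^{\C}$ and $\subs{(\cdot)}$ keeps exactly the sort-$\s$ part of an interpretation, the assignment $\C'\mapsto\subs{\C'}$ restricts to a bijection between the $x$-variants of $\C$ and the $x$-variants of $\subs{\C}$. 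Applying the induction hypothesis to $\psi$ across this bijection then yields $\subs{\C}\vDash\varphi$ iff $\C\vDash\varphi$.

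The main point to get right — rather than a genuine difficulty — is the quantifier clause, where one must make precise that passing to the reduct commutes with forming variable variants; everything else is a mechanical consequence of the fact that a $\Sigma_{s}$-formula mentions neither the sort $\s_{2}$ nor any symbol whose arity involves it, so its truth value cannot depend on the part of $\C$ that $\subs{\C}$ discards. An entirely analogous (in fact simpler, with no function symbol) argument already underlies the corresponding claim for $\Sigma_{1}$ versus $\Sigma_{2}$ cited just above from \cite{arxivCADE}.
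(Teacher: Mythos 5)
Your proof is correct and takes essentially the same route as the paper, which also argues by induction on the complexity of $\varphi$ following Lemma 12 of \cite{arxivCADE}; the paper merely leaves the details (the term-level step for $s$ and the quantifier clause) implicit, whereas you spell them out.
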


\begin{proof}
    The proof is by induction on the complexity of $\varphi$, and follows the step of the proof of Lemma 12 in \cite{arxivCADE}.
    
\end{proof}
    
    Now, take a quantifier-free $\Sigma_{2}$-formula $\phi$, the following definitions working just as fine for $\Sigma_{s}^{2}$: let $V_{\s}$ and $V_{\s_{2}}$ be the sets of variables in $\phi$ of sorts, respectively, $\s$ and $\s_{2}$, and $\eq{\phi}$ the set of equivalence relations $E$ on the variables $V$ of $\phi$ which respect sorts, that is, if $xEy$ then $x$ and $y$ are necessarily of the same sort, and such that a $\phi$ and $\delta^{E}_{V}$ are equivalent; of course $\eq{\phi}$ is computable, given that $\phi$ is quantifier-free. For each $E$ in $\eq{\phi}$ let $E_{\s_{2}}$ be the restriction of $E$ to $V_{\s_{2}}$; and, for each $E\in \eq{\phi}$ and $F=E_{\s_{2}}$, we define the formula $\phi_{E}$ by replacing any equality $u=v$ in $\phi$, for $u$ and $v$ of sort $\s_{2}$, by a $\Sigma_{1}$-tautology if $uFv$ (say $x=x$, for a fixed variable $x$ of sort $\s$ not in $\phi$), and by a $\Sigma_{1}$-contradiction if $u\overline{F}v$ (say $\neg(x=x)$); again, producing the formulas $\phi_{E}$ is computable, the problem being essentially one of satisfiability in equality logic. The following is an interesting technical lemma that allows us to "cut and paste" interpretations. 

    \begin{lemma}\label{gluing lemma}
        If the $\Sigma_{2}$-interpretation ($\Sigma_{s}^{2}$-interpretation) $\C$ satisfies the quantifier-free $\Sigma_{2}$-formula ($\Sigma_{s}^{2}$-formula) $\phi$ and has an induced equivalence relation $E$ and $F=E_{\s_{2}}$, a $\Sigma_{1}$-interpretation ($\Sigma_{s}$-interpretation) $\A$ satisfies $\phi_{E}$ iff the $\Sigma_{2}$-interpretation ($\Sigma_{s}^{2}$-interpretation) $\D$ satisfies $\phi$, where: $\subs{\D}=\A$,$\s_{2}^{\D}=\s_{2}^{\C}$, and $u^{\D}=u^{\C}$ for all variables $u$ of sort $\s_{2}$.
    \end{lemma}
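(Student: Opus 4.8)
The statement is a ``gluing'' lemma: it says that satisfaction of the modified $\Sigma_1$- (or $\Sigma_s$-)formula $\phi_E$ by $\A$ is equivalent to satisfaction of the original $\phi$ by the amalgamated interpretation $\D$ obtained by keeping the sort-$\s_2$ part of $\C$ and replacing the sort-$\s$ part (including, in the $\Sigma_s^2$ case, the function $s$) by $\A$. My plan is to prove this by structural induction on $\phi$, first handling atoms and then propagating through the Boolean connectives. The key observation that makes the whole thing work is already built into the definition of $\phi_E$: every equality between variables of sort $\s_2$ in $\phi$ has been replaced, using the fixed induced equivalence $F=E_{\s_2}$, by either a trivially-true or a trivially-false $\Sigma_1$-sentence, precisely so that the truth value of that atom is hard-wired to match what $\C$ (and hence $\D$, which shares the $\s_2$-domain and $\s_2$-assignments with $\C$) makes it.

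\textbf{Key steps.} First I would record the base cases. There are three kinds of atomic formulas in $\phi$: (i) equalities between terms of sort $\s$ — in the $\Sigma_2$ case these are just $x=y$ for variables $x,y$ of sort $\s$, and in the $\Sigma_s^2$ case they are $s^p(x)=s^q(y)$; since $\subs{\D}=\A$ by construction, $x^{\D}=x^{\A}$ and $s^{\D}=s^{\A}$, so $\D$ evaluates such an atom exactly as $\A$ does, and this atom is left untouched in passing from $\phi$ to $\phi_E$; (ii) equalities $u=v$ between variables of sort $\s_2$ — here $\phi_E$ carries the $\Sigma_1$-sentence $x=x$ if $uFv$ and $\neg(x=x)$ otherwise, so $\A\vDash(\phi_E\text{'s atom})$ iff $uFv$; on the other side, $u^{\D}=u^{\C}$ and $v^{\D}=v^{\C}$, and since $E$ (hence $F$) is the equivalence \emph{induced} by $\C$, we have $u^{\C}=v^{\C}$ iff $uFv$, so $\D\vDash(u=v)$ iff $uFv$ as well, matching; (iii) no predicate atoms beyond equality occur, since both signatures are empty apart from equality (and, for $\Sigma_s^2$, the function symbol $s$, already covered in (i)). For the $\Sigma_s^2$ case I would also invoke \Cref{tech. lemma adds}: it guarantees that any $\Sigma_s$-formula is satisfied by $\subs{\D}$ iff by $\D$, which is exactly the fact needed to justify that the sort-$\s$ atoms (which form a $\Sigma_s$-subformula) transfer correctly; alternatively one just unfolds the definitions of $\subs{\cdot}$ directly. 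The inductive step for $\neg$, $\wedge$, $\vee$ is immediate because replacement $\phi\mapsto\phi_E$ commutes with the connectives, and satisfaction commutes with connectives on both sides.

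\textbf{Main obstacle.} The only genuinely delicate point is bookkeeping in the base case (ii): one must be careful that the equivalence relation used to build $\phi_E$ is the \emph{same} one — namely $F=E_{\s_2}$ where $E$ is the equivalence induced by $\C$ — so that ``$\phi_E$'s $\s_2$-atom is true'' and ``$\D$'s $\s_2$-atom is true'' are both literally ``$uFv$'', with no room for mismatch. In the $\Sigma_s^2$ case there is the additional subtlety that a term of sort $\s_2$ could in principle have the form $s^k(u)$; but $s$ has arity $\s_1\to\s_1$, so it cannot be applied to $\s_2$-variables, and every $\s_2$-term is just a variable — this needs to be noted but is not hard. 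I expect the proof to be short once these points are isolated; the real content is entirely in the definitions of $\phi_E$, $\subs{\D}$, and the induced equivalence, and the induction merely propagates the atom-level match.
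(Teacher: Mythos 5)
Your proof is correct and follows essentially the same route as the paper's: a structural induction on $\phi$ in which sort-$\s$ atoms transfer because $\subs{\D}=\A$, sort-$\s_{2}$ atoms match because $\phi_{E}$ hard-wires their truth value to $uFv$ while $\D$ inherits the $\s_{2}$-assignments of $\C$ (whose induced equivalence is exactly $F$), and the connectives propagate trivially. Your phrasing of the $\s_{2}$-atom case via the induced equivalence is in fact the cleaner way to state what the paper's proof does there.
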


    \begin{proof}
        We prove the result by induction on the complexity of $\phi$, and for only the case of $\Sigma_{1}$ and $\Sigma_{2}$, the case of $\Sigma_{s}$ and $\Sigma_{s}^{2}$ being completely analogous.
        \begin{enumerate}
            \item Suppose $\phi$ is atomic, meaning it equals either $x=y$ or $u=v$, for $x$ and $y$ of sort $\s$, and $u$ and $v$ of sort $\s_{2}$; also assume that $\A$ satisfies $\phi_{E}$. In the former case, $\phi_{E}$ equals $x=y$, and from the fact $\A$ satisfies $\phi_{E}$ we get $x^{\A}=y^{\A}$, meaning $x^{\D}=y^{\D}$ and therefore that $\D$ satisfies $\phi$. If it is the latter, since $\C$ satisfies $\phi$ we get $u^{\C}=v^{\C}$, and thus $u^{\D}=v^{\D}$, meaning $\D$ again satisfies $\phi$.

            Reciprocally, assume $\D$ satisfies $\phi$. If $\phi$ is $x=y$, $x^{\D}=y^{\D}$, meaning that $x^{\A}=y^{\A}$ and thus $\A$ satisfies $\phi_{E}$. If $\phi$ is $u=v$, $u^{\C}=v^{\C}$, thus $uEv$, $uFv$ and therefore $\phi_{E}$ is a tautology, meaning $\A$ satisfies that formula.
            
            \item Suppose $\phi=\neg\phi^{1}$, that the result is true for $\phi^{1}$, and that $\A$ satisfies $\phi_{E}$. Since $\phi_{E}=\neg\phi^{1}_{F}$, we have that $\A$ does not satisfy $\phi^{1}_{F}$, and thus $\D$ does not satisfy $\phi^{1}$, meaning it satisfies $\neg\phi^{1}=\phi$. Reciprocally, if $\D$ satisfies $\phi$, it does not satisfy $\phi^{1}$, and thus $\A$ does not satisfy $\phi^{1}_{E}$, meaning it satisfies $\phi$ as we wanted to prove.

            \item Suppose $\phi=\phi^{1}\vee\phi^{2}$, that the result is true for $\phi^{1}$ and $\phi^{2}$, and that $\A$ satisfies $\phi_{E}$. Since $\phi_{E}=\phi^{1}_{F}\vee\phi^{2}_{F}$, $\A$ satisfies either $\phi^{1}_{F}$ or $\phi^{2}_{F}$, thus $\D$ satisfies either $\phi^{1}$ or $\phi^{2}$, and therefore $\phi^{1}\vee\phi^{2}=\phi$. Reciprocally, if $\D$ satisfies $\phi$, it satisfies either $\phi^{1}$ or $\phi^{2}$, and thus $\A$ satisfies either $\phi^{1}_{F}$ or $\phi^{2}_{F}$, meaning it satisfies $\phi^{1}_{F}\vee\phi^{2}_{F}=\phi_{E}$
        \end{enumerate}
    \end{proof}

\addsresult*

\begin{proof}
We will work the case of $\Sigma_{1}$ and $\Sigma_{2}$, the case of $\Sigma_{s}$ and $\Sigma_{s}^{2}$ being essentially the same. Suppose that $\T$ has a computable minimal model function $\minmod_{\T,S_{1}}$, for $S_{1}=\{\s\}$: we state that
\[\minmod_{\adds{\T},S_{2}}(\phi)=\begin{cases}
\minimal\{(m, |V_{\s_{2}}/E|) : E\in\eq{\phi}, m\in \minmod_{\T,S_{1}}(\phi_{E})\} & \text{if $\eq{\phi}\neq\emptyset$;}\\
\emptyset & \text{otherwise}
\end{cases}\]
is a minimal model function where, for simplicity, $S_{2}=\{\s, \s_{2}\}$. Notice that $\phi$ is unsatisfiable iff $\eq{\phi}=\emptyset$, implying $\minmod_{\adds{\T},S_{2}}(\phi)$ is empty; and if $\phi$ is satisfiable but not $\T$-satisfiable (equivalent to $\phi_{E}$ also being non $\T$-satisfiable), $\minmod_{\T,S_{1}}(\phi)$ is empty, making $\minmod_{\adds{\T},S_{2}}(\phi)$ again empty. Since $\minmod_{\T,S_{1}}(\phi_{E})$ is always finite, so is $\minmod_{\adds{\T},S_{2}}(\phi)$, and since finding $\eq{\phi}$ and $\minmod_{\T,S_{1}}(\phi_{E})$ can be done algorithmically, we have $\minmod_{\adds{\T},S_{2}}$ is computable, so we only have left to prove that this is indeed a minimal model function.

\begin{enumerate}
    \item Suppose that $\C$ is a minimal $\adds{\T}$-interpretation that satisfies $\phi$, with $(|\s^{\C}|, |\s_{2}^{\C}|)=(m,n)$; take the equivalence $E$ induced by $\C$ (that is, where $xEy$ iff $x^{\C}=y^{\C}$, and the same for variables of sort $\s_{2}$), and of course $E\in\eq{\phi}$. 
    
    If $|\s_{2}^{\C}|>|V_{\s_{2}}/E|$ we can define an interpretation $\C^{\prime}$ such that $\s^{\C^{\prime}}=\s^{\C}$, $\s_{2}^{\C^{\prime}}=\s_{2}^{\C}$, $x^{\C^{\prime}}=x^{\prime}$ for all variables $x$ of sort $\s$, $u^{\C^{\prime}}=u^{\C}$ for all variables $u\in V_{\s_{2}}$, and $v^{\C^{\prime}}$ equals any value in $\s_{2}^{\C^{\prime}}$ for all other variables $v$ of sort $\s_{2}$. $\C^{\prime}$ is then a $\adds{\T}$-interpretation given that it agrees with $\C$ on the interpretation of the domain and variables of sort $\s$, which additionally satisfies $\phi$ and $(|\s^{\C^{\prime}}|, |\s_{2}^{\C^{\prime}}|)=(m, |V_{\s_{2}}/E|)<(m,n)$, contradicting our choice of $(m,n)$. So $n=|V_{\s_{2}}/E|$.

    If, on the other hand, $|\s^{\C}|>m_{0}$, for $\minmod_{\T,S_{1}}(\phi_{E})=\{m_{0}\}$ (we know this is a singleton since it is not empty, as $\phi_{E}$ is $\T$-satisfiable), take the $\Sigma_{1}$-interpretation $\subs{\C}$: it satisfies $\phi_{E}$, as it is easy to prove, and we have $|\s^{\subs{\C}}|>m_{0}$. Of course, there is a $\T$-interpretation $\A$ that satisfies $\phi_{E}$ with $|\s^{\A}|=m_{0}$; define then the $\adds{\T}$-interpretation $\D$ according to \Cref{gluing lemma}, meaning that $\s^{\D}=\s^{\A}$, $\s_{2}^{\D}=\s_{2}^{\C}$, $x^{\D}=x^{\A}$ for all variables $x$ of sort $\s$, and $u^{\D}=u^{\C}$ for all variables $u$ of sort $\s_{2}$. $\D$ is a $\adds{\T}$-interpretation since $\D_{1}=\A$ is a $\T$-interpretation, and it satisfies $\phi$ since $\C$ satisfies $\phi$ and $\A$ satisfies $\phi_{E}$, but $(|\s^{\D}|,|\s_{2}^{\D}|)=(|\s^{\A}|, n)<(m,n)$, contradicting our choice of $(m,n)$. So $m=m_{0}$, and thus $(m,n)$ is in $\{(m, |V_{\s_{2}}/E|) : E\in\eq{\phi}, m\in \minmod_{\T,S_{1}}(\phi_{E})\}$, meaning we only have to prove now that it is minimal in this set.

    Take a $(p,q)\in\{(m, |V_{\s_{2}}/E|) : E\in\eq{\phi}, m\in \minmod_{\T,S_{1}}(\phi_{E})\}$, say $q=|V_{\s_{2}}|/F|$ and $p\in\minmod_{\T,S_{1}}(\phi_{F})$, and suppose that $(p,q)<(m,n)$: since $p\in\minmod_{\T,S_{1}}(\phi_{F})$, there is a $\T$-interpretation $\B$ that satisfies $\phi_{F}$ with $|\s^{\B}|=p$. We extend $\B$ into a $\adds{\T}$-interpretation $\E$ by adding $q$ elements of sort $\s_{2}$ so that $\E$ satisfies $\phi$, and we reach a contradiction: $(|\s^{\E}|,|\s_{2}^{\E})=(p,q)<(m,n)$, although $\C$ was chosen to be minimal.
    
    \item Reciprocally, suppose that $(m,n)$ is in $\minimal\{(m,|V_{\s_{2}}/E|) : E\in\eq{\phi}, m\in\minmod_{\T,S_{1}}(\phi_{E})\}$. Take a $\T$-interpretation $\A$ such that $|\s^{\A}|=m$ and define the $\Sigma_{2}$-interpretation $\C$ such that: $\s^{\C}=\s^{\A}$, $\s^{\C}$ is any set of cardinality $|V_{\s_{2}}/E|=n$, $x^{\D}=x^{\A}$ for all variables $x$ of sort $\s$, $u^{\C}=v^{\C}$ iff $uEv$ for variables $u$ and $v$ of sort $\s_{2}$ in $\phi$, and arbitrarily otherwise. It is clear that $\C$ is a $\adds{\T}$-interpretation because $\subs{\C}=\A$; and furthermore, it satisfies $\phi$, given that $\A$ satisfies $\phi_{E}$, $\C$ induces the equivalence $E$ on $V_{\s_{2}}$, and \Cref{gluing lemma}. So $(\s^{\C}, \s_{2}^{\C})=(m,n)$.

    Now, assume that $\D$ is another $\adds{\T}$-interpretation that satisfies $\phi$ with $(m,n)\neq (|\s^{\D}|, |\s_{2}^{\D}|)$, but suppose that $(|\s^{\D}|, |\s_{2}^{\D}|)<(m,n)$. Consider the equivalence $F$ induced by $\D$ on the variables of $\phi$, and since $\subs{\D}$ is a $\T$-interpretation that satisfies $\phi_{F}$, we have $p_{0}\leq |\s^{\D}|$, for $p_{0}\in \minmod_{\T,S_{1}}(\phi_{F})$; furthermore, $|V_{\s_{2}}/F|\leq |\s_{2}^{\D}|$, so $(p_{0}, |V_{\s_{2}}/F|)$ is an element of $\{(m,|V_{\s_{2}}/E|) : E\in\eq{\phi}, m\in \minmod_{\T,S_{1}}(\phi_{E})\}$ strictly smaller than $(m,n)$, contradicting the minimality of this element and finishing this part of the proof.
\end{enumerate}

Now suppose that $\adds{\T}$ has a computable minimal model function $\minmod_{\adds{\T},S_{2}}$, and we state that
\[\minmod_{\T,S_{1}}(\phi)=\{\min\{ m : (m,n)\in\minmod_{\adds{\T},S_{2}}(\phi)\}\}\]
is a minimal model function for $\T$: it is certainly computable, given $\minmod_{\adds{\T},S_{2}}(\phi)$ is computable and finite; and, again because $\minmod_{\adds{\T},S_{2}}(\phi)$ is finite, of course $\minmod_{\T,S_{1}}(\phi)$ must be finite as well. Notice that $\minmod_{\T,S_{1}}(\phi)$ will equal $\{\aleph_{0}\}$ if $\minmod_{\adds{\T},S_{2}}(\phi)$ is empty, which in turn happens only if $\phi$ is not $\adds{\T}$-satisfiable (and thus not $\T$-satisfiable); but, for what follows, we assume $\phi$ is $\T$-satisfiable.
\begin{enumerate}
    \item Suppose that $\A$ is a minimal $\T$-interpretation that satisfies $\phi$, with $|\s^{\A}|=m$, and let $\C$ be a $\adds{\T}$-interpretation with $\subs{\C}=\A$ and $|\s_{2}^{\C}|=1$. We state that $(m,1)\in \minmod_{\adds{\T},S_{2}}(\phi)$: indeed, suppose there is a $\adds{\T}$-interpretation $\D$ with $(|\s^{\D}|, |\s_{2}^{\D}|)<(|\s^{\C}|, |\s_{2}^{\D}|)$, and since $1\leq |\s_{2}^{\D}|\leq |\s_{2}^{\C}|=1$, we must have $|\s^{\D}|<|\s^{\C}|$; but then $\subs{\D}$ is a $\T$-interpretation that satisfies $\phi$ with $|\s^{\subs{D}}|=|\s^{\D}|<|\s^{\C}|=m$, what contradicts our choice of $\A$. We have only left to prove that $m$ is a minimum of the tuples $(m,n)$ in $\minmod_{\adds{\T},S_{2}}(\phi)$, but the argument is the same: if $(p,q)$ lies in $\minmod_{\adds{\T},S_{2}}(\phi)$ and has $p<m$, we know first of all that there is a $\adds{\T}$-interpretation $\E$ with $(|\s^{\E}|, |\s_{2}^{\E}|)=(p,q)$; and then, since $\subs{\E}$ is a $\T$-interpretation that satisfies $\phi$, the fact that $|\s^{\E}|<m$ is absurd, and we are done.
    
    \item Reciprocally, suppose that $m$ is the minimum of $\{ m : (m,n)\in\minmod_{\adds{\T},S_{2}}(\phi)\}$ and let $\C$ be a $\adds{\T}$-interpretation with $(|\s^{\C}|,|\s_{2}^{\C}|)$ in $\minmod_{\adds{\T},S_{2}}(\phi)$ and $m=|\s^{\C}|$: of course, we then have that $|\s^{\subs{C}}|=m$. So, suppose $\A$ is a $\T$-interpretation that satisfies $\phi$ with $|\s^{\A}|<m$: construct then the $\adds{\T}$-interpretation $\D$ with $\subs{\D}=\A$, and $|\s_{2}^{\D}|=1$; $\D$ is then a $\adds{\T}$-interpretation that satisfies $\phi$ with $(|\s^{\D}|,|\s_{2}^{\D}|)<(|\s^{\C}|,|\s_{2}^{\C}|)$ (since $|\s_{2}^{\D}|=1\leq |\s_{2}^{\C}|$), contradicting our choice of $\C$ and finishing the proof.
    
\end{enumerate}

\end{proof}

\section{\tp{Proof of \Cref{thm:addfresult}}{Proof of thm:addfresult}}

Given a $\Sigma^{n}_{s}$-formula $\varphi$, $\subff{\varphi}$ is the $\Sigma_{n}$-formula obtained by replacing any term $s(x)$ from $\varphi$ by $x$ until the result is indeed a $\Sigma_{n}$-formula. Analogously, given a $\Sigma_{s}^{n}$-interpretation $\C$, the $\Sigma_{n}$-interpretation $\subf{\C}$ is the interpretation with $\s_{i}^{\subf{\C}}=\s_{i}^{\C}$ for each $1\leq i\leq n$, and $x^{\subf{\C}}=x^{\C}$ for each variable $x$ of sort $\s_{i}$ (again for any $1\leq i\leq n$). And reciprocally, given a $\Sigma_{n}$-interpretation $\A$, we define a $\Sigma_{s}^{n}$-interpretation $\addfs{\A}$ by making $\s_{i}^{\addfs{\A}}=\s_{i}^{\A}$ and $x^{\addfs{\A}}=x^{\A}$, for each variable $x$ of sort $\s_{i}$ and $1\leq i\leq n$; and $s^{\addfs{\A}}(a)=a$ for each $a\in \s_{1}^{\A}$.

In Lemma 13 of \cite{arxivCADE} we have proved that for any $\Sigma_{s}^{n}$-interpretation $\C$ that satisfies $\Forall{x}s(x)=x$, for $x$ of sort $\s_{1}$, $\C$ satisfies a $\Sigma_{s}^{n}$ formula $\varphi$ iff $\subf{\C}$ satisfies $\varphi$. It is furthermore clear that $\subf{\addfs{\A}}=\A$ for all $\Sigma_{n}$-interpretations $\A$, and $\addfs{\subf{\C}}=\C$ for all $\addf{\T}$-interpretations $\C$.

\addfresult*

\begin{proof}
    Suppose first that $\T$ has a computable minimal model function $\minmod_{\T,S}$: we state that
    \[\minmod_{\addf{\T},S}(\phi)=\minmod_{\T,S}(\subff{\phi}),\]
    for $S=\{\s_{1},\cdots,\s_{n}\}$, is a computable minimal model function for $\addf{\T}$, being obviously computable and always outputting a finite set. For what follows, assume $\phi$ is $\addf{\T}$-satisfiable.

    \begin{enumerate}
        \item Take a minimal $\addf{\T}$-interpretation $\C$ satisfying $\phi$, with $(|\s_{1}^{\C}|,\ldots,|\s_{n}^{\C}|)=(n_{1}, \ldots, n_{n})$. Of course, $\subf{\C}$ is a $\T$-interpretation that satisfies $\subff{\varphi}$ with $(|\s_{1}^{\subf{\C}}|,\ldots,|\s_{n}^{\subf{\C}}|)=(n_{1}, \ldots, n_{n})$.

        Now, suppose that $\A$ is a $\T$-interpretation that satisfies $\subff{\phi}$ with $(|\s_{1}^{\A}|,\ldots,|\s_{n}^{\A}|)\neq (n_{1},\ldots,n_{n})$: we then have that $\addfs{\A}$ is a $\addf{\T}$-interpretation that satisfies $\phi$ with $(|\s_{1}^{\addfs{\A}}|,\ldots,|\s_{n}^{\addfs{\A}}|)=(|\s_{1}^{\A}|,\ldots,|\s_{n}^{\A}|)\neq (n_{1},\ldots,n_{n})$, meaning there must exist a $1\leq i\leq n$ such that $n_{i}<|\s_{i}^{\addfs{\A}}|$, and so $n_{i}<|\s_{i}^{\A}|$, proving $(n_{1},\ldots,n_{n})$ lies in $\minmod_{\T,S}(\subff{\phi})$.

        \item Reciprocally, suppose that $(n_{1},\ldots,n_{n})$ lies in $\minmod_{\T,S}(\subff{\phi})$, and take a $\T$-interpretation $\A$ that satisfies $\subff{\phi}$ with $(|\s_{1}^{\A}|,\ldots,|\s_{n}^{\A}|)=(n_{1},\ldots,n_{n})$. Then it is clear that $\addfs{\A}$ is a $\addf{\T}$-interpretation that satisfies $\phi$ with $(|\s_{1}^{\addfs{\A}}|,\ldots,|\s_{n}^{\addfs{\A}}|)=(n_{1},\ldots,n_{n})$.

        So, suppose that $\C$ is a $\addf{\T}$-interpretation that satisfies $\phi$ with $(|\s_{1}^{\C}|,\ldots,|\s_{n}^{\C}|)\neq (n_{1},\ldots,n_{n})$: in that case $\subf{\C}$ is a $\T$-interpretation that satisfies $\subff{\phi}$ with $(|\s_{1}^{\subf{\C}}|,\ldots,|\s_{n}^{\subf{\C}}|)=(|\s_{1}^{\C}|,\ldots,|\s_{n}^{\C}|)\neq (n_{1},\ldots,n_{n})$, and so there must exist $1\leq i\leq n$ such that $n_{i}<|\s_{i}^{\subf{\C}}|$, meaning that $n_{i}<|\s_{i}^{\C}|$, as we wished to show.
        
    \end{enumerate}

Now, for the other direction of the theorem, suppose that $\addf{\T}$ has a computable minimal model function $\minmod_{\addf{\T},S}$, and we state that
\[\minmod_{\T,S}(\phi)=\minmod_{\addf{\T},S}(\phi)\]
is a minimal model function for $\T$, being obviously computable and always having a finite output. For what follows, assume $\phi$ is $\T$-satisfiable.

\begin{enumerate}
    \item Suppose first that $\A$ is a minimal $\T$-interpretation $\A$ that satisfies $\phi$, with $(|\s_{1}^{\A}|,\ldots,|\s_{n}^{\A}|)=(n_{1},\ldots,n_{n})$, and it is obvious that $\addfs{\A}$ is a $\addf{\T}$-interpretation that satisfies $\phi$ with $(|\s_{1}^{\addfs{\A}}|,\ldots,|\s_{n}^{\addfs{\A}}|)=(|\s_{1}^{\A}|,\ldots,|\s_{n}^{\A}|)=(n_{1},\ldots,n_{n})$.

    Take a $\addf{\T}$-interpretation $\C$ which satisfies $\phi$ and that $(|\s_{1}^{\C}|,\ldots,|\s_{n}^{\C}|)\neq (n_{1},\ldots,n_{n})$: given that $\subf{\C}$ is a $\T$-interpretation that satisfies $\phi$ with $(|\s_{1}^{\subf{\C}}|,\ldots,|\s_{n}^{\subf{\C}}|)=(|\s_{1}^{\C}|,\ldots,|\s_{n}^{\C}|)\neq (n_{1},\ldots,n_{n})$, there must exist $1\leq i\leq n$ such that $n_{i}<|\s_{i}^{\subf{\C}}|$, and thus $n_{i}<|\s_{i}^{\C}|$, proving that $(n_{1},\ldots,n_{n})$ is in $\minmod_{\addf{\T},S}(\phi)$.

    \item Reciprocally, take $(n_{1},\ldots,n_{n})$ in $\minmod_{\addf{\T},S}(\phi)$. There is a $\addf{\T}$-interpretation $\C$ that satisfies $\phi$ and $(|\s_{1}^{\C}|,\ldots,|\s_{n}^{\C}|)=(n_{1},\ldots,n_{n})$, and therefore $\subf{\C}$ is a $\T$-interpretation that satisfies $\subff{\phi}=\phi$ and has 
    \[(|\s_{1}^{\subf{\C}}|,\ldots,|\s_{n}^{\subf{\C}}|)=(|\s_{1}^{\C}|,\ldots,|\s_{n}^{\C}|)=(n_{1},\ldots,n_{n}).\]

    Finally, if $\A$ is a $\T$-interpretation that satisfies $\phi$ with $(|\s_{1}^{\A}|,\ldots,|\s_{n}^{\A}|)\neq (n_{1},\ldots,n_{n})$, $\addfs{\A}$ is a $\addf{\T}$-interpretation that satisfies $\phi$ with $(|\s_{1}^{\addfs{\A}}|,\ldots,|\s_{n}^{\addfs{\A}}|=(|\s_{1}^{\A}|,\ldots,|\s_{n}^{\A}|)\neq (n_{1},\ldots,n_{n})$, meaning there exists $1\leq i\leq n$ where $n_{i}<|\s_{i}^{\addfs{\A}}|$. Of course, then $n_{i}<|\s_{i}^{\A}|$, and thus we are done.
    
\end{enumerate}
    
\end{proof}

\section{\tp{Proof of \Cref{thm:addncresult}}{Proof of thm:addncresult}}\label{adding non-convexity}

Given a quantifier-free $\Sigma_{s}^{n}$-formula $\phi$, and a $\Sigma_{s}^{n}$-interpretation $\C$ that satisfies $\phi$ and $\psi_{\vee}$, we wish to define a formula we will represent by $\phi^{\dag}_{\C}$. First, let $\vars_{\s_{1}}(\phi)=\{z_{1},\ldots,z_{n}\}$, let $M_{i}$ be the maximum of $j$ such that $s^{j}(z_{i})$ appears in $\phi$, and let $U=\{y_{i,j} : 1\leq i\leq n, 0\leq j\leq M_{i}+1\}$ be fresh variables. We define $\phi^{\dag}$ to be the $\Sigma_{n}$-formula $\phi$ with every atomic subformula $s^{j}(z_{i})=s^{q}(z_{p})$ replaced by $y_{i,j}=y_{p,q}$. We also define a $\Sigma_{s}^{n}$-interpretation $\C^{\dag}$, which still validates $\psi_{\vee}$, by changing the values assigned to the variables $y_{i,j}$ so that $y_{i,0}^{\C^{\dag}}=z_{i}^{\C}$, and $y_{i,j+1}^{\C^{\dag}}=s^{\C}(y_{i,j}^{\C^{\dag}})$, and it is easy to prove that $\C^{\dag}$ satisfies both $\phi$ and $\phi^{\dag}$. Finally, let $\delta_{U}$ be the arrangement on $U$ induced by $\C^{\dag}$, and we define $\phi^{\dag}_{\C}$ to be $\phi^{\dag}\wedge\delta_{U}$, also satisfied by $\C^{\dag}$.

\addncresult*

\begin{proof}
Let as usual $S$ denote $\{\s_{1},\ldots,\s_{n}\}$. Assume first that $\addnc{\T}$ has a computable minimal model function $\minmod_{\addnc{\T},S}$: we start by stating that
\[\minmod_{\T,S}(\phi)=\minmod_{\addnc{\T},S}(\phi)\]
is a minimal model function for $\addnc{\T}$; of course, it is computable, and has an always finite output.

\begin{enumerate}
    \item First suppose $\A$ is a minimal $\T$-interpretation that satisfies $\phi$, with $(|\s_{1}^{\A}|,\ldots,|\s_{n}^{\A}|)=(n_{1},\ldots,n_{n})$, and then $\addfs{\A}$ is an interpretation that satisfies $\phi$, $\psi_{\vee}$ (thus making of $\A$ a $\addnc{\T}$-interpretation), and with $(|\s_{1}^{\addfs{\A}}|,\ldots,|\s_{n}^{\addfs{\A}}|)=(|\s_{1}^{\A}|,\ldots,|\s_{n}^{\A}|)=(n_{1},\ldots,n_{n})$.

    Take a $\addnc{\T}$-interpretation $\C$ that satisfies $\phi$ with $(|\s_{1}^{\C}|,\ldots,|\s_{n}^{\C}|)\neq (n_{1},\ldots,n_{n})$: since $\subf{\C}$ is a $\T$-interpretation that satisfies $\phi$ with $(|\s_{1}^{\subf{\C}}|,\ldots,|\s_{n}^{\subf{\C}}|)=(|\s_{1}^{\C}|,\ldots,|\s_{n}^{\C}|)\neq (n_{1},\ldots,n_{n})$, there exists $1\leq i\leq n$ such that $n_{i}<|\s_{i}^{\subf{\C}}|$, and therefore $n_{i}<|\s_{i}^{\C}|$, proving $(n_{1},\ldots,n_{n})\in\minmod_{\addnc{\T},S}(\phi)$.

    \item For the reciprocal, take $(n_{1},\ldots,n_{n})\in\minmod_{\addnc{\T},S}(\phi)$. There is a $\addnc{\T}$-interpretation $\C$ that satisfies $\phi$ with $(|\s_{1}^{\C}|,\ldots,|\s_{n}^{\C}|)=(n_{1},\ldots,n_{n})$, and thus $\subf{\C}$ is a $\T$-interpretation that satisfies $\subff{\phi}=\phi$ with $(|\s_{1}^{\subf{\C}}|,\ldots,|\s_{n}^{\subf{\C}}|)=(|\s_{1}^{\C}|,\ldots,|\s_{n}^{\C}|)=(n_{1},\ldots,n_{n})$.

    Then, if $\A$ is a $\T$-interpretation that satisfies $\phi$ with $(|\s_{1}^{\A}|,\ldots,|\s_{n}^{\A}|)\neq (n_{1},\ldots,n_{n})$, $\addfs{\A}$ is a $\addnc{\T}$-interpretation (as it satisfies $\Forall{x}(s(x)=x)$ and thus $\psi_{\vee}$) that satisfies $\phi$ with $(|\s_{1}^{\addfs{\A}}|,\ldots,|\s_{n}^{\addfs{\A}}|=(|\s_{1}^{\A}|,\ldots,|\s_{n}^{\A}|)\neq (n_{1},\ldots,n_{n})$, meaning there is an $1\leq i\leq n$ with $n_{i}<|\s_{i}^{\addfs{\A}}|$. Then $n_{i}<|\s_{i}^{\A}|$, and thus we are done.
    
\end{enumerate}

For the reciprocal, suppose that $\T$ has a computable minimal model function $\minmod_{\T,S}$. Take a quantifier-free formula $\phi$ with variables of sort $\s_{1}$ equal to $\{z_{1},\ldots,z_{n}\}$, and define $M_{i}$ and $U$ as before. Take $Eq^{\dag}(\phi)$ to be the set of equivalence relations on the set $U$ such that:
\begin{enumerate}
    \item[I.] for all $1\leq i\leq n$ and $0\leq j\leq M_{i}$, $y_{i,j+2}Ey_{i,j}$ or $y_{i,j+2}Ey_{i,j+1}$;
    \item[II.] if $y_{i,j}Ey_{p,q}$, and $j<M_{i}+2$ and $q<M_{p}+2$, then $y_{i,j+1}Ey_{p,q+1}$.
\end{enumerate}
Finding $Eq^{\dag}(\phi)$ can be done algorithmically: there is only a finite number of elements in $U$, and thus the number of equivalence relations in that set is finite; we must then test the former condition on the equivalence, what requires only finitely many steps.

Finally, we state that 
\[\minmod_{\addnc{T},S}(\phi)=\minimal\bigcup_{E\in Eq^{\dag}(\phi)}\minmod_{\T,S}(\phi^{\dag}\wedge\delta_{U}^{E})\]
is a minimal model function for $\addnc{\T}$, which is finite and computable given that $Eq^{\dag}(\phi)$ and $\minmod_{\T,S}(\phi^{\dag}\wedge\delta_{U}^{E})$ are both finite and computable. Notice that if $Eq^{\dag}(\phi)$ is empty (meaning $\phi$ is not satisfiable), or if all $\minmod_{\T,S}(\phi^{\dag}\wedge\delta_{U}^{E})$ are empty for $E\in Eq^{\dag}(\phi)$ (meaning $\phi$ is not $\addnc{\T}$-satisfiable), then $\minmod_{\addnc{\T},S}(\phi)$ will equal $\{(\aleph_{0},\ldots,\aleph_{0})\}$, the maximum element of $(\N)^{n}$.

\begin{enumerate}
    \item Start by assuming that $\C$ is a minimal $\addnc{\T}$-interpretation $\C$ that satisfies $\phi$, with $(|\s_{1}^{\C}|,\ldots,|\s_{n}^{\C}|)=(n_{1},\ldots,n_{n})$. Let then $E$ be the equivalence induced by $\C^{\dag}$, and we have that: $E$ satisfies property $I$, since $y_{i,j}^{\C^{\dag}}=s^{\C}(z_{i}^{\C})$ and $\C$ satisfies $\psi_{\vee}$; and $E$ also satisfies property $II$, since $s^{\C}$ is a function. This means $E\in Eq^{\dag}(\phi)$; additionally, of course $\C^{\dag}$ is a $\addnc{\T}$-interpretation that satisfies the $\Sigma_{n}$-formula $\phi^{\dag}_{\C}=\phi^{\dag}\wedge\delta_{U}^{E}$, meaning $\subf{\C^{\dag}}$ is a $\T$-interpretation that satisfies $\phi^{\dag}\wedge\delta_{U}^{E}$. Since $(|\s_{1}^{\C}|,\ldots,|\s_{n}^{\C}|)=(|\s_{1}^{\subf{\C^{\dag}}}|,\ldots,|\s_{n}^{\subf{\C^{\dag}}}|)$, this means $(n_{1},\ldots, n_{n})$ satisfies the first condition to be in $\minmod_{\T,S}(\phi^{\dag}\wedge\delta_{U}^{E})$.

    Now, suppose that $(n_{1},\ldots,n_{n})$ is not a minimal element of the sets $\minmod_{\T,S}(\phi^{\dag}\wedge\delta^{F}_{U})$ for $F\in Eq^{\dag}(\phi)$, meaning there must exist an equivalence $F$ in $Eq^{\dag}(\phi)$ and a tuple $(m_{1},\ldots,m_{n})\in\minmod_{\T,S}(\phi^{\dag}\wedge\delta_{U}^{F})$ such that $(m_{1},\ldots,m_{n})<(n_{1},\ldots,n_{n})$: let $\A$ be a $\T$-interpretation that satisfies $\phi^{\dag}\wedge\delta^{F}_{U}$ with $(|\s_{1}^{\A}|,\ldots,|\s_{n}^{\A}|)=(m_{1},\ldots,m_{n})$. We now define a $\Sigma_{s}^{n}$-interpretation $\A_{F}$: 
    \begin{enumerate}
        \item for each sort $\s_{i}$, we define $\s_{i}^{\A_{F}}=\s_{i}^{\A}$; 
        \item for all variables $x$, we define $x^{\A_{F}}=x^{\A}$; 
        \item for all elements of $\s_{1}^{\A}$ of the form $y_{i,j}^{\A}$, with $j<M_{i}+2$, we define $s^{\A_{F}}(y_{i,j}^{\A})=y_{i,j+1}^{\A}$; 
        \item and for all other elements $a$ of $\s_{1}^{\A}$, we make $s^{\A_{F}}(a)=a$.
    \end{enumerate}
    Now, it is clear that $\s^{\A_{F}}$ restricted to $\s_{1}^{\A}\setminus U^{\A}$ is a function, indeed the identity, so we must analyse it on $U^{\A}$. To start, it is defined on the whole set since $y_{i,M_{i}+2}^{\A}$ equals either $y_{i,M_{i}}$ or $y_{i,M_{i}+1}$, and for these elements $s^{\A_{F}}$ is defined by the above clauses; $s^{\A_{F}}$ is furthermore single-valued, for if $y_{i,j}^{\A}=y_{p,q}^{\A}$, we have $y_{i,j}Fy_{p,q}$ and thus $y_{i,j+1}Fy_{p,q+1}$, meaning that $y_{i,j+1}^{\A}=y_{p,q+1}^{\A}$ and thus that $s^{\A_{F}}(y_{i,j}^{\A})=s^{\A_{F}}(y_{p,q}^{\A})$. 
    
    Furthermore, we have that either $y_{i,j+2}Fy_{i,j}$ or $y_{i,j+2}Fy_{i,j+1}$, and thus either $y_{i,j+2}^{\A}=y_{i,j}^{\A}$ or $y_{i,j+2}^{\A}=y_{i,j+1}^{\A}$, and therefore $s^{\A_{F}}(s^{\A_{F}}(y_{i,j}^{\A}))$ equals either $y_{i,j}^{\A}$ or $s^{\A_{F}}(y_{i,j}^{\A})$; since $s^{\A_{F}}$ is the identity on the elements of $\s_{1}^{\A}\setminus U^{\A}$, we obtain that $\A_{F}$ satisfies $\psi_{\vee}$, and since $(|\s_{1}^{\A_{F}}|,\ldots,|\s_{n}^{\A_{F}}|)=(|\s_{1}^{\A}|,\ldots,|\s_{n}^{\A}|)$, we get $\A_{F}$ is a $\addnc{\T}$-interpretation. Even more, since $\subf{\A_{F}}=\A$, and $\A$ satisfies $\phi^{\dag}\wedge\delta_{U}^{F}$, we get $\A_{F}$ satisfies $\phi^{\dag}\wedge\delta_{U}^{F}$. Now, change the value assigned to the variables $z_{i}$ so that we obtain a $\addnc{\T}$-interpretation $\A_{F}^{\prime}$ where $z_{i}^{\A_{F}^{\prime}}=y_{i,0}^{\A_{F}}$, and we get that $\A_{F}^{\prime}$ satisfies $\phi$: since we still have $(|\s_{1}^{\A_{F}^{\prime}}|,\ldots,|\s_{n}^{\A_{F}^{\prime}}|)=(m_{1},\ldots,m_{n})$, that contradicts the minimality of $(n_{1},\ldots,n_{n})$, and thus we have proved $(n_{1},\ldots,n_{n})$ is a minimal element of $\bigcup_{E\in Eq^{\dag}(\phi)}\minmod_{\T,S}(\phi^{\dag}\wedge\delta^{E}_{U})$.

    \item Reciprocally, suppose 
    \[(n_{1},\ldots,n_{n})\in\minimal\bigcup_{E\in E^{\dag}(\phi)}\minmod_{\T,S}(\phi^{\dag}\wedge\delta^{E}_{U}),\]
    meaning there is an $E\in Eq^{\dag}(\phi)$ and a $\T$-interpretation $\A$ that satisfies $\phi^{\dag}\wedge\delta^{E}_{U}$ with $(|\s_{1}^{\A}|,\ldots,|\s_{n}^{\A}|)=(n_{1},\ldots,n_{n})$. Using the construction we outlined just above, we can obtain a $\addnc{\T}$-interpretation $\A_{E}$ that satisfies $\phi^{\dag}\wedge\delta^{E}_{U}$ with $(|\s_{1}^{\A_{E}}|,\ldots,|\s_{n}^{\A_{E}}|)=(|\s_{1}^{\A}|,\ldots,|\s_{n}^{\A}|)$; changing the value assigned to $z_{i}$ by $\A_{E}$ so that $z_{i}^{\A_{E}^{\prime}}=y_{i,0}^{\A_{E}}$, $\A_{E}^{\prime}$ satisfies $\phi$, and of course $(|\s_{1}^{\A_{E}^{\prime}}|, \ldots , |\s_{n}^{\A_{E}^{\prime}}|)=(n_{1},\ldots,n_{n})$.

    So, suppose $\C$ is a $\addnc{\T}$-interpretation that satisfies $\phi$ such that $(\s_{1}^{\C},\ldots,\s_{n}^{\C})<(n_{1},\ldots,n_{n})$, call this tuple $(m_{1},\ldots,m_{n})$: we define the $\addnc{\T}$-interpretation $\C^{\dag}$ and the equivalence $F$ it induces on $U$, and this way $\C^{\dag}$ satisfies $\phi^{\dag}\wedge\delta^{F}_{U}$, meaning $\subf{\C^{\dag}}$ is a $\T$-interpretation that satisfies $\phi^{\dag}\wedge\delta^{F}_{U}$, and of course $(|\s_{1}^{\subf{\C^{\dag}}}|, \ldots , |\s_{n}^{\subf{\C^{\dag}}}|)=(\s_{1}^{\C},\ldots,\s_{n}^{\C})$. There are then two cases to consider.
    \begin{enumerate}
        \item If there is $(p_{1},\ldots,p_{n})$ in $\minmod_{\T,S}(\phi^{\dag}\wedge\delta^{F}_{U})$ such that $(p_{1},\ldots,p_{n})\leq (m_{1},\ldots,m_{n})$ we are done, since then $(p_{1},\ldots,p_{n})<(n_{1},\ldots,n_{n})$, what contradicts the fact that $(n_{1},\ldots,n_{n})$ is a minimal element.

        \item Otherwise, by \Cref{alternative definition} we have $(m_{1},\ldots,m_{n})$ is already an element of $\minmod_{\T,S}(\phi^{\dag}\wedge\delta^{F}_{U})$, and again we reach a contradiction.
    \end{enumerate}

\end{enumerate}
\end{proof}

\section{\tp{Adding a Sort to $\Sigma_{s}$}{Adding a Sort to Sigmas}}


In \cite{CADE,FroCoS} it was shown that the 
$\adds{\cdot}$
operator defined in \Cref{def:operators}
preserves all properties (disregarding the computability of a minimal model function, that is addressed in \Cref{thm:addsresult}).
However, this was only proved for $\Sigma_{1}$.
Here extend this result to $\Sigma_{1}^{s}$.


\begin{theorem}
    Let $\T$ be a theory over the signature $\Sigma_{s}$. Then $\adds{\T}$ is stably infinite, smooth, finitely witnessable, strongly finitely witnessable, convex, has the finite model property or is stably finite with respect to $\{\s,\s_{2}\}$ if, and only if, $\T$, respectively, is stably infinite, smooth, finitely witnessable, strongly finitely witnessable, convex, has the finite model property or is stably finite with respect to $\{\s\}$.
\end{theorem}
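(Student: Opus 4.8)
The plan is to reduce everything to the already-established case where $\T$ is a $\Sigma_1$-theory, by using the map $\C \mapsto \subf{\C}$ (stripping the function symbol, as recorded in \Cref{tech. lemma adds} and the preceding discussion) together with the map $\C \mapsto \subs{\C}$ (restricting a two-sorted interpretation to its first sort). Concretely, given a $\Sigma_s$-theory $\T$, let $\T_0$ be the $\Sigma_1$-theory whose axiomatization is the set of $\Sigma_1$-sentences obtained by applying $\subff{\cdot}$ to $\ax{\T}$ — or more cleanly, note that by Lemma 13 of \cite{arxivCADE} there is a bijective correspondence between $\T$-interpretations and their $\subf{\cdot}$-reducts, and likewise between $\adds{\T}$-interpretations and their $\subs{\cdot}$-reducts via Lemma 12 of \cite{arxivCADE}, and these correspondences commute with the sort-adding operator in the evident way. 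The first step is therefore to record the commuting square: for a $\Sigma_s^2$-interpretation $\C$, the $\Sigma_2$-interpretation obtained by forgetting $s$ equals the $\adds{\cdot}$-construction applied to $\subs{\C}$ (a $\Sigma_1$-interpretation) when one also forgets $s$ there; symbolically the two ways of going around the square (forget-$s$ then restrict-to-$\s$, versus restrict-to-$\s$ then forget-$s$) agree, and each preserves satisfaction of the relevant fragment of formulas by the cited lemmas.

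With that square in hand, the second step is to run through the seven properties one at a time, but each argument is now a two-line translation. For stable infiniteness, smoothness, the finite model property, and stable finiteness, all of which are stated purely in terms of cardinalities $|\s^{\A}|$, $|\s_2^{\A}|$ of interpretations satisfying a quantifier-free formula, I would observe that: (a) a quantifier-free $\Sigma_s^2$-formula $\phi$ and its $\Sigma_s$-reduct behave compatibly under $\subf{\cdot}$ and $\addfs{\cdot}$ as in the proof of \Cref{thm:addfresult}, and (b) the $\adds{\cdot}$ operator for $\Sigma_1$/$\Sigma_2$ already preserves these properties by \cite{CADE,FroCoS}. Composing the two translations gives the claim. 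For convexity, I would use that convexity only concerns entailments of disjunctions of equalities between variables of sorts in $S$, and that adding the $s$ symbol does not change which such entailments hold among the reducts — exactly the style of argument used for $\Tthreeone$ and $\Tseventeen$ in the appendix, where one removes literals involving a "harmless" sort and appeals to convexity of the smaller theory. For finite witnessability and strong finite witnessability, I would lift a witness for $\T$ (w.r.t. $\{\s\}$) to a witness for $\adds{\T}$ (w.r.t. $\{\s,\s_2\}$) by composing with the analogous lifting already done for $\Sigma_1 \to \Sigma_2$, taking care that the fresh variables of sort $\s_2$ introduced by the $\adds{\cdot}$ construction are witnessed, while the $s$-structure is irrelevant to the witnessing conditions (they only mention which domain elements are values of variables).

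The main obstacle, and the step that needs genuine care rather than pure bookkeeping, is the interaction between the function symbol $s$ and the terms appearing in a quantifier-free formula: a $\Sigma_s^2$-formula can mention $s^j(x)$ for $x$ of sort $\s$, so when witnessing or when shrinking/growing an interpretation one cannot freely add or delete elements of $\s^{\A}$ without respecting the graph of $s^{\A}$. This is precisely the subtlety handled in the appendix via the $\Gamma = \{s^j(x) : x \in V,\ 0 \le j \le M_x+1\}$ device (see the finite-model-property proofs for $\Tthirteen$, $\Tfourteen$, $\Tseventeen$). So for the witnessability and finite-model-property directions I would explicitly invoke that device: replace $\phi$ by a formula over the closure $\Gamma^{\A}$, argue within the "$\Sigma_1$-shadow," then transport back. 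The other direction (properties of $\adds{\T}$ imply properties of $\T$) is the easy half: a $\Sigma_s$-interpretation embeds into a $\adds{\T}$-interpretation just by adjoining a one-element (or arbitrary) second sort, and $s$ is carried along untouched, so every witnessing cardinality or entailment for $\adds{\T}$ restricts back. I expect the whole proof to be a few paragraphs per property, each reduced to "apply the $\Sigma_1$-to-$\Sigma_2$ result of \cite{CADE,FroCoS}, then apply the $\subf{\cdot}$/$\addfs{\cdot}$ correspondence of \cite{arxivCADE} and \Cref{thm:addfresult}-style reasoning," with the $\Gamma$-closure argument being the only place where real work is repeated.
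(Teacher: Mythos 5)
Your overall strategy --- strip the function symbol $s$ via $\subff{\cdot}$ and $\subf{\cdot}$, reduce to the empty-signature operator $\adds{\cdot}$ on $\Sigma_{1}$-theories, then transport back --- does not work, and this is the central gap. The correspondence of Lemma 13 of \cite{arxivCADE} (and the whole $\addf{\cdot}$ machinery behind \Cref{thm:addfresult}) is only valid for interpretations satisfying $\Forall{x}s(x)=x$; for a general $\Sigma_{s}$-theory $\T$ the map $\C\mapsto\subf{\C}$ does not preserve satisfaction, and $\subff{\phi}$ is not equivalent to $\phi$ (e.g.\ $\neg(s(x)=x)$ becomes a contradiction under $\subff{\cdot}$ while being satisfiable in many $\Sigma_{s}$-theories). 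There is no $\Sigma_{1}$-theory $\T_{0}$ whose quantifier-free satisfiability and model cardinalities reproduce those of $\T$, so your ``commuting square'' has no sound bottom edge. The fact the paper actually exploits is different: the \emph{new} sort $\s_{2}$ is inert (no symbol mentions it), so the $\Sigma_{1}\to\Sigma_{2}$ proofs of Lemma 3 of \cite{arxivCADE} for stable infiniteness, smoothness, convexity, the finite model property and stable finiteness go through almost verbatim with the $\s$-part --- including its $s$-structure --- carried along untouched. One adapts those proofs directly; one does not erase $s$.

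For (strong) finite witnessability your sketch misses the one construction that carries the proof. A witness $\wit$ for $\T$ is a map on $\qf{\Sigma_{s}}$, but a quantifier-free $\Sigma_{s}^{2}$-formula $\phi$ contains variables and equalities of sort $\s_{2}$ and so is not in its domain. The paper resolves this by guessing an arrangement $E$ of $U=\vars_{\s_{2}}(\phi)$, replacing each $\s_{2}$-equality by a tautology or contradiction according to $E$ to obtain a genuine $\Sigma_{s}$-formula $\phi_{E}$, and setting $\wit_{2}(\phi)=\phi\wedge(w=w)\wedge\bigvee_{E}[\wit(\phi_{E})\wedge\delta_{U}^{E}]$, with \Cref{gluing lemma} reassembling the two halves of the interpretation. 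The same projection is needed in the direction you call easy: a witness $\wit$ for $\adds{\T}$ applied to a $\Sigma_{s}$-formula may output a formula containing $\s_{2}$-variables, which must be eliminated (again by a disjunction over arrangements) before it can serve as a $\Sigma_{s}$-witness; adjoining a singleton second sort does not by itself yield a computable witness function on $\qf{\Sigma_{s}}$. Finally, the $\Gamma=\{s^{j}(x)\}$ closure device you plan to lean on addresses a different difficulty (modifying the $\s$-domain of a fixed interpretation while respecting the graph of $s$) and plays no role in the paper's proof of this theorem, precisely because the witness of $\T$ already internalizes the $s$-structure.
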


\begin{proof}

The proofs for stable infiniteness, smoothness, convexity, the finite model property and stable finiteness are almost identical to the corresponding proofs in Lemma 3 of \cite{arxivCADE}. 
We now consider (strong) finite witnessability.

    \begin{enumerate}

        \item 
        \begin{enumerate}

        \item Suppose $\T$ is finitely witnessable, with witness $\wit$. 
        We define a function $\wit_{2}:\qf{\Sigma_{2}}\rightarrow\qf{\Sigma_{2}}$ by
        \[\wit_{2}(\phi)=\phi\wedge(w=w)\wedge\bigvee_{E\in Eq(U)}[\wit(\phi_{E})\wedge\delta_{U}^{E}],\]
        where: $w$ is a fresh variable of sort $\s_{2}$; $U=\vars_{\s_{2}}(\phi)$; $Eq(U)$ is the (computable) set of equivalence relations on $U$; $\delta_{U}^{E}$ is the arrangement on $U$ induced by $E$; and $\phi_{E}$ is obtained from $\phi$ by replacing an equality $u=v$, for $u$ and $v$ of sort $\s_{2}$, by a tautology (whose variables are already in $\phi$, we assume for simplicity, or if there are none we use the fresh variable $z$) in $\Sigma_{s}$ if $uEv$, and a contradiction (again in $\Sigma_{s}$, with variables in $\phi$ or equal to $z$) otherwise. It is easy to see that $\wit_{2}$ maps quantifier-free formulas into themselves, and is computable.

        Suppose now that $\C$ is a $\adds{\T}$-interpretation that satisfies $\phi$: let $E$ be the equivalence induced by $\C$ on $U$ (so $\C$ satisfies $\delta_{U}^{E}$), so that $\subs{\C}$ satisfies $\phi_{E}$ (from \Cref{gluing lemma}), and thus $\Exists{\overarrow{y}}\wit(\phi_{E})$ for $\overarrow{y}=\vars(\wit(\phi_{E}))\setminus\vars(\phi_{E})$, meaning $\C$ satisfies $\Exists{\overarrow{y}}\wit(\phi_{E})$. Since $\overarrow{y}$ is contained in $\overarrow{x}=\vars(\wit(\phi))\setminus\vars(\phi)$, $\C$ satisfies $\Exists{\overarrow{x}}\wit(\phi_{E})$, and thus $\Exists{\overarrow{x}}\wit_{2}(\phi)$ (notice neither $\phi$ nor $\delta_{U}^{E}$ contain the variables on $\overarrow{x}$). The reciprocal is obvious.

        Suppose now that $\C$ satisfies $\wit_{2}(\phi)$, and thus $\wit(\phi_{E})$ for some $E\in Eq(U)$: this means $\subs{\C}$ satisfies $\wit(\phi_{E})$ (due to \Cref{tech. lemma adds}), and so there is a $\T$-interpretation $\A$ that satisfies $\wit(\phi_{E})$ with $\s^{\A}=\vars_{\s}(\wit(\phi_{E}))^{\A}$. Let then $\D$ be a $\adds{\T}$-interpretation with $\subs{\D}=\A$, and $\s_{2}^{\D}=U/E$ (unless $U=\emptyset$, when we make $\s_{2}^{\D}$ a singleton), where $u^{\D}$ for $u\in U$ equals the equivalence class of representant $u$ according to $E$, and $w^{\D}$ is set arbitrarily (where we remind the reader that $w$ is fresh and thus not in $U$). $\D$ satisfies $\wit(\phi_{E})$ and, since $\subs{\D}$ satisfies $\phi_{E}$ and $\D$ induces the equivalence $E$ on $U$, also $\phi$ and $\delta_{U}^{E}$; furthermore, $\s^{\D}$ equals $\vars_{\s}(\wit_{2}(\phi))^{\D}$, as the former set equals $\s^{\subs{\D}}=\s^{\A}$ and the latter contains $\vars_{\s}(\wit(\phi_{E}))^{\A}$, and obviously $\s_{2}^{\D}=\vars_{\s_{2}}(\wit_{2}(\phi))^{\D}$.

         \item Suppose $\adds{\T}$ is finitely witnessable, with witness $\wit$, and given a quantifier-free $\Sigma_{s}$-formula $\phi$, we define a function $\wit_{1}$ by making
        \[\wit_{1}(\phi)=\phi\wedge\bigvee_{E\in Eq(U)}\wit(\phi)_{E},\]
         where: $U=\vars_{\s_{2}}(\wit(\phi))$; $Eq(U)$ is the set of equivalence relations on $U$ (which can be found algorithmically); and $\wit(\phi)_{E}$ is obtained from $\wit(\phi)$ by replacing an equality $u=v$, for $u$ and $v$ of sort $\s_{2}$, by a tautology (whose variables are already in $\phi$, for simplicity) in $\Sigma_{s}$ if $uEv$, and a contradiction (again in $\Sigma_{s}$, with variables in $\phi$) otherwise. Of course $\wit_{1}$ maps quantifier-free formulas into themselves, and is computable.

         First, suppose a $\T$-interpretation $\A$ satisfies $\phi$, and take a $\adds{\T}$-interpretation $\C$ with $\subs{\C}=\A$: $\C$ then satisfies $\phi$ (by \Cref{tech. lemma adds}), and thus $\Exists{\overarrow{y}}\wit(\phi)$ for $\overarrow{y}=\vars(\wit(\phi))\setminus\vars(\phi)$; this way, some interpretation $\C^{\prime}$, differing from $\C$ at most on the value assigned to $\overarrow{y}$, satisfies $\wit(\phi)$ (and thus $\Exists{\overarrow{x}}\wit(\phi)$ and $\phi$). If $E$ is the equivalence induced by $\C^{\prime}$ on $U$, it is clear that $\subs{\C^{\prime}}$ satisfies $\wit(\phi)_{E}$ (again by \Cref{tech. lemma adds}), as well as $\phi$; since $\subs{\C^{\prime}}$ differs from $\A$ at most on the value assigned to the variables in $\overarrow{y}$ of sort $\s$, what equals $\overarrow{x}=\vars(\wit_{1}(\phi))\setminus\vars(\phi)$, $\A$ satisfies $\Exists{\overarrow{x}}\wit(\phi)_{E}$ and $\phi$, and thus $\Exists{\overarrow{x}}\wit_{1}(\phi)$. The reciprocal is obvious, as $\phi$ has none of the variables in $\overarrow{x}$.

         Now, suppose that $\T$-interpretation $\A$ satisfies $\wit_{1}(\phi)$, and thus $\wit(\phi)_{E}$ for some $E\in Eq(U)$: take then a $\adds{\T}$-interpretation $\C$ with $\subs{\C}=\A$ and that induces the equivalence $E$ on $U$, so $\C$ satisfies $\wit(\phi)$. There is then a $\adds{\T}$-interpretation $\D$ that satisfies $\wit(\phi)$ (and thus $\Exists{\overarrow{x}}\wit(\phi)$ and $\phi$) with $\s^{\D}=\vars_{\s}(\wit(\phi))^{\D}$ and $\s_{2}^{\D}=\vars_{\s_{2}}(\wit(\phi))^{\D}$: if $F$ is the equivalence induced by $\D$ on $U$, we have that $\subs{\D}$ satisfies $\phi$ and $\wit(\phi)_{F}$, and therefore $\wit_{1}(\phi)$, and since $\vars_{\s}(\wit_{1}(\phi))\supseteq\vars_{\s}(\wit(\phi))$ we get $\s^{\subs{\D}}=\vars_{\s}(\wit_{1}(\phi))^{\subs{\D}}$, proving $\wit_{1}$ is a witness.

         \end{enumerate}

        \item \begin{enumerate}

        \item Suppose $\T$ is now strongly finitely witnessable, with strong witness $\wit$, and given a quantifier-free $\Sigma_{s}^{2}$-formula $\phi$, and we consider the same function $\wit_{2}$ from item $1(a)$ above. We do not need to prove that $\wit_{2}$ is computable, and that $\phi$ and $\Exists{\overarrow{x}}\wit_{2}(\phi)$ are $\adds{\T}$-equivalent as this is done on item $1(a)$. 
        
        So assume $V$ is a set of variables, $\delta_{V}$ an arrangement on $V$, and $\C$ a $\adds{\T}$-interpretation that satisfies $\wit_{2}(\phi)\wedge\delta_{V}$: let $V_{1}$ be the variables of sort $\s$ in $V$, $V_{2}$ the ones with sort $\s_{2}$, and write $\delta_{V}=\delta_{V_{1}}\wedge\delta_{V_{2}}$ in the obvious way. $\C$ satisfies $\wit(\phi_{E})\wedge\delta_{U}^{E}$ for some $E\in Eq(U)$, and thus $\subs{\C}$ satisfies $\wit(\phi_{E})$ and $\delta_{V_{1}}$, so there is a $\T$-interpretation $\A$ that satisfies $\wit(\phi_{E})\wedge\delta_{V_{1}}$ with $\s^{\A}=\vars_{\s}(\wit(\phi_{E})\wedge\delta_{V_{1}})^{\A}$. We define a $\adds{\T}$-interpretation $\D$ by making: $\subs{\D}=\A$; $\s_{2}^{\D}$ equal to $U\cup V_{2}$ modulo the equivalence relative to $\delta_{V_{2}}\wedge\delta_{U}^{E}$ (this is possible as the formula $\delta_{V_{2}}\wedge\delta_{U}^{E}$ is not contradictory, being satisfied by $\C$), unless $U\cup V_{2}$ is empty when we make $\s_{2}^{\D}$ a singleton; $u^{\D}$ equal to the representant class of $u\in U\cup V_{2}$, and $w^{\D}$ arbitrary. Since $\A$ satisfies $\wit(\phi_{E})\wedge\delta_{V_{1}}$, so does $\D$, and $\D$ satisfies $\delta_{V_{2}}\wedge\delta_{U}^{E}$ by design, meaning it satisfies $\wit_{2}(\phi)$. That $\s^{\D}=\vars_{\s}(\wit_{2}(\phi)\wedge\delta_{V})^{\D}$ and $\s_{2}^{\D}=\vars_{\s_{2}}(\wit_{2}(\phi)\wedge\delta_{V})^{\D}$ follows from the definition of $\D$ and choice of $\A$.

        \item Suppose $\adds{\T}$ is now strongly finitely witnessable, with strong witness $\wit$, and given a quantifier-free $\Sigma_{s}$-formula $\phi$, and we consider the same function $\wit_{1}$ from item $1(b)$ above. We already know that $\wit_{1}$ is computable and that, for $\overarrow{x}=\vars(\wit_{1}(\phi))\setminus\vars(\phi)$, $\phi$ and $\Exists{\overarrow{x}}\wit_{1}(\phi)$ are $\T$-equivalent. 
        
        So, take a set $V$ of variables of sort $\s$, let $\delta_{V}$ be an arrangement on $V$, and $\A$ a $\T$-interpretation that satisfies $\wit_{1}(\phi)\wedge\delta_{V}$: there is an equivalence $E$ on $U$ such that $\A$ satisfies $\wit(\phi)_{E}\wedge\delta_{V}$, so take a $\adds{\T}$-interpretation $\C$ with $\subs{\C}=\A$ and that induces the equivalence $E$ on $U$, so that $\C$ satisfies $\wit(\phi)\wedge\delta_{V}$. There is then a $\adds{\T}$-interpretation $\D$ that satisfies $\wit(\phi)\wedge\delta_{V}$ (and thus $\Exists{\overarrow{x}}\wit(\phi)$ and $\phi$) with $\s^{\D}=\vars_{\s}(\wit(\phi)\wedge\delta_{V})^{\D}$ and $\s_{2}^{\D}=\vars_{\s_{2}}(\wit(\phi)\wedge\delta_{V})^{\D}$: $\subs{\D}$ is then a $\T$-interpretation that satisfies $\phi$, $\wit(\phi)_{F}\wedge\delta_{V}$ (and thus $\wit_{1}(\phi)\wedge\delta_{V}$), for $F$ the equivalence induced by $\D$ on $U$, with $\s^{\subs{\D}}=\vars_{\s}(\wit_{1}(\phi)\wedge\delta_{V})^{\subs{\D}}$, proving $\wit_{1}$ is a strong witness.

        \end{enumerate}

    \end{enumerate}
\end{proof}

\end{document}